\title{Designing Asynchronous Multiparty Protocols with Crash-Stop Failures}
\author{Adam D. Barwell}{%
University of St. Andrews and University of Oxford, UK
}{%
adb23@st-andrews.ac.uk
}{%
https://orcid.org/0000-0003-1236-7160
}{}
\author{Ping Hou}{University of Oxford, UK}{%
ping.hou@cs.ox.ac.uk
}{%
https://orcid.org/0000-0001-6899-9971
}{}
\author{Nobuko Yoshida}{University of Oxford, UK}{%
nobuko.yoshida@cs.ox.ac.uk
}{%
https://orcid.org/0000-0002-3925-8557
}{}
\author{Fangyi Zhou}
{Imperial College London and University of Oxford, UK}
{%
fangyi.zhou15@imperial.ac.uk
}{%
https://orcid.org/0000-0002-8973-0821
}{}
\authorrunning{A.D. Barwell, P. Hou, N. Yoshida, F. Zhou}
\keywords{Session Types, Concurrency, Failure Handling, Code Generation, Scala} %
\DeclareFontFamily{OMX}{MnSymbolE}{}
\DeclareSymbolFont{MnLargeSymbols}{OMX}{MnSymbolE}{m}{n}
\DeclareFontShape{OMX}{MnSymbolE}{m}{n}{
    <-6>  MnSymbolE5
   <6-7>  MnSymbolE6
   <7-8>  MnSymbolE7
   <8-9>  MnSymbolE8
   <9-10> MnSymbolE9
  <10-12> MnSymbolE10
  <12->   MnSymbolE12
}{}
\DeclareFontShape{OMX}{MnSymbolE}{b}{n}{
    <-6>  MnSymbolE-Bold5
   <6-7>  MnSymbolE-Bold6
   <7-8>  MnSymbolE-Bold7
   <8-9>  MnSymbolE-Bold8
   <9-10> MnSymbolE-Bold9
  <10-12> MnSymbolE-Bold10
  <12->   MnSymbolE-Bold12
}{}
\let\llangle\@undefined
\let\rrangle\@undefined
\DeclareMathDelimiter{\llangle}{\mathopen}%
                     {MnLargeSymbols}{'164}{MnLargeSymbols}{'164}
\DeclareMathDelimiter{\rrangle}{\mathclose}%
                     {MnLargeSymbols}{'171}{MnLargeSymbols}{'171}
\newtcolorbox{myframe}[1][]{
  enhanced,
  arc=0pt,
  outer arc=0pt,
  colback=white,
  boxrule=0.5pt,
  boxsep=0mm,
  left=1mm,
  right=1mm,
  top=0.5mm,
  bottom=0.5mm,
  #1
}
\lstdefinelanguage{Scribble}{%
  basicstyle=\footnotesize\ttfamily,
  stringstyle=\color{Blue},
  showstringspaces=false,
  keywords={nested,new,calls,and,as,at,by,catches,choice,continue,do,from,global,import,instantiates,interruptible,local,module,or,par,protocol,rec,role,sig,throws,to,type,with,int,aux,reliable,crash},
  morestring=[b]",
  morestring=[b]',
  morecomment=[l][\color{greencomments}]{//},
}
\lstdefinelanguage{nuScr}{%
  basicstyle=\footnotesize\ttfamily,
  stringstyle=\color{Blue},
  showstringspaces=false,
  keywords={
    nested,new,calls,and,as,at,by,catches,choice,continue,do,from,global,import,instantiates,interruptible,local,module,or,par,protocol,rec,role,sig,throws,to,type,with,int,aux,
    safe
  },
  morestring=[b]",
  morestring=[b]',
  morecomment=[l][\color{greencomments}]{//},
  morecomment=[s][\color{magenta}]{(*}{*)},
}
\lstdefinelanguage{effpi}{
  keywords=[1]{
    case,class,sealed,abstract,object,extends,type,def,val,if,else,new,var,match
  },
  keywords=[2]{
    InChan,OutChan,RecVar,Rec,Out,In,InErr,Loop,
  },
  keywords=[3]{
    rec,send,receive,receiveErr,eval,par,Channel
  },
  keywordstyle=[1]{\color{blue}},
  keywordstyle=[2]{\color{ImperialIris}}, %
  keywordstyle=[3]{\color{OliveGreen}},
  otherkeywords={=>,.type,<:,>>:},
  morecomment=[l][\color{darkgray}]{//},
}
\definecolor{ImperialBlue}{HTML}{003E74}
\definecolor{ImperialDarkGreen}{HTML}{02893B}
\definecolor{ImperialTangerine}{HTML}{EC7300}
\definecolor{ImperialIris}{HTML}{751E66}
\definecolor{RYB1}{RGB}{141, 211, 199}
\definecolor{RYB2}{RGB}{255, 255, 179}
\definecolor{RYB3}{RGB}{190, 186, 218}
\definecolor{RYB4}{RGB}{251, 128, 114}
\definecolor{RYB5}{RGB}{128, 177, 211}
\definecolor{RYB6}{RGB}{253, 180, 98}
\definecolor{RYB7}{RGB}{179, 222, 105}
\tikzset{
  >=stealth,
  node distance=2cm,
  every state/.style={thick, fill=gray!10},
  initial text=$ $,
}
\pgfplotsset{
  compat=1.8,
  /pgfplots/bar cycle list/.style={/pgfplots/cycle list={%
    {brown!60!black,fill=brown!30!white,mark=none},
    {red,fill=red!30!white,mark=none},
    {blue,fill=blue!30!white,mark=none},
    {black,fill=gray,mark=none},
    }
  },
}
\newcolumntype{L}{>{$}l<{$}}
\newcolumntype{C}{>{$}c<{$}}
\newcolumntype{P}[1]{>{\centering\arraybackslash$}p{#1}<{$}}
\Crefname{section}{\S\!}{\S\!}%
\Crefname{subsection}{\S\!}{\S\!}%
\Crefname{subsubsection}{\S\!}{\S\!}%
\Crefname{appendix}{Appendix \S\!}{Appendix \S\!}
\Crefname{definition}{Def.\@}{Defs.\@}%
\Crefname{figure}{Fig.\@}{Figs.\@}%
\Crefname{example}{Ex.\@}{Exs.\@}%
\Crefname{corollary}{Cor.\@}{Cors.\@}%
\Crefname{theorem}{Thm.\@}{Thms.\@}%
\Crefname{proposition}{Prop.\@}{Props.\@}%
\Crefname{lemma}{Lem.\@}{Lems.\@}
\Crefname{equation}{Eq.\@}{Eqs.\@}
\crefname{section}{\S\!}{\S\!}%
\crefname{subsection}{\S\!}{\S\!}%
\crefname{subsubsection}{\S\!}{\S\!}%
\crefname{appendix}{Appendix \S\!}{Appendix \S\!}
\crefname{definition}{Def.\@}{Defs.\@}%
\crefname{figure}{Fig.\@}{Figs.\@}%
\crefname{example}{Ex.\@}{Exs.\@}%
\crefname{corollary}{Cor.\@}{Cors.\@}%
\crefname{theorem}{Thm.\@}{Thms.\@}%
\crefname{proposition}{Prop.\@}{Props.\@}%
\crefname{lemma}{Lem.\@}{Lems.\@}
\crefname{equation}{Eq.\@}{Eqs.\@}
\newif\ifdraft%
\newcommand{\ifempty}[3]{%
  \ifthenelse{\isempty{#1}}{#2}{#3}%
}%
\newcommand{\effpi}{\texttt{Effpi}\xspace}%
\newcommand{\makeFunc}[2]{%
  \expandafter\newcommand\csname #1\endcsname[1]{%
    \operatorname{#2}\!\left({##1}\right)%
  }%
}%
\newcommand{\dom}[1]{{\color{black}\operatorname{dom}\!\left({#1}\right)}}%
\newcommand{\fv}[1]{\operatorname{fv}\!\left({#1}\right)}%
\newcommand{\unfoldOne}[1]{%
  {\color{black}\operatorname{unf}\!\left({#1}\right)}}%
\newcommand{\notImpliedBy}{\mathrel{{\kern 1em}{\not{\kern -1em}\impliedby}}}%
\newcommand{\coloncolonequals}{\Coloneqq}%
\newcommand{\bnfdef}{\coloncolonequals}%
\newcommand{\bnfsep}{\mathbin{\;\big|\;}}%
\newcommand{\Scribble}[0]{\textsc{Scribble}\xspace}
\newcommand{\Effpi}[0]{\textsc{Effpi}\xspace}
\newcommand{\theTool}[0]{\textsc{Teatrino}\xspace}
\newcommand{\Scala}[0]{\textsc{Scala}\xspace}
\newcommand{\Rust}[0]{\textsc{Rust}\xspace}
\newcommand{\Go}[0]{\textsc{Go}\xspace}
\newcommand{\Choral}[0]{\textsc{Choral}\xspace}
\def\aka{a.k.a.\@\xspace}%
\def\cf{cf.\@\xspace}%
\def\etc{\emph{etc.}\@\xspace}%
\def\eg{e.g.\@\xspace}%
\def\ie{i.e.\@\xspace}%
\def\wrt{w.r.t.\@\xspace}%
\def\etal{\emph{et al.}\@\xspace}%
\newcommand{\exampleName}[1]{\ensuremath{\mathsf{#1}}\xspace}
\definecolor{ruleColor}{rgb}{0.1, 0.3, 0.1}%
\newcommand{\inferrule}[1]{{\color{ruleColor}\text{\upshape\textsc{\scriptsize[#1]}}}}%
\newcommand{\inference}[3][]{\infer[\ifempty{#1}{}{\inferrule{#1}}]{#3}{#2}}%
\newcommand{\cinference}[3][]{\infer=[\ifempty{#1}{}{\inferrule{#1}}]{#3}{#2}}%
\newcommand{\setenum}[1]{\mathord{{\color{black}\left\{#1\right\}}}}%
\newcommand{\setcomp}[2]{\mathord{%
  {\color{black}\left\{{#1} \,\middle|\, {#2}\right\}}}}%
\newcommand{\setNat}{\mathord{\mathbb{N}}}%
\newcommand{\predP}[1][]{\ifempty{#1}{\varphi}{\varphi_{#1}}}%
\newcommand{\predPApp}[2][]{\ifempty{#1}{\predP}{\predP[{#1}]}\!\left({#2}\right)}%
\newcommand{\bind}[2]{\nicefrac{#2}{#1}}%
\newcommand{\substenum}[1]{\mathord{\left\{{#1}\right\}}}%
\newcommand{\subst}[2]{\substenum{\bind{#1}{#2}}}%
\definecolor{hlColor}{rgb}{0.65, 1.0, 0.65}%
\newcommand{\highlight}[2][\highlightColour]{\mathchoice%
  {\setlength{\fboxsep}{0pt}\colorbox{#1}{$\displaystyle#2$}}%
  {\setlength{\fboxsep}{0pt}\colorbox{#1}{$\textstyle#2$}}%
  {\setlength{\fboxsep}{0pt}\colorbox{#1}{$\scriptstyle#2$}}%
  {\setlength{\fboxsep}{0pt}\colorbox{#1}{$\scriptscriptstyle#2$}}}%
\newcommand{\runtime}[2][\runtimeColour]{\mathchoice%
  {\setlength{\fboxsep}{0pt}\colorbox{#1}{$\displaystyle#2$}}%
  {\setlength{\fboxsep}{0pt}\colorbox{#1}{$\textstyle#2$}}%
  {\setlength{\fboxsep}{0pt}\colorbox{#1}{$\scriptstyle#2$}}%
  {\setlength{\fboxsep}{0pt}\colorbox{#1}{$\scriptscriptstyle#2$}}}%
\newcommand{\lbbar}{\{\kern-0.2em|}
\newcommand{\rbbar}{|\kern-0.2em\}}
\newcommand{\tyGroundSet}{\stFmt{\mathcal{B}}}
\newcommand{\tyGround}[1][]{\stFmt{\ifempty{#1}{B}{B_{#1}}}}%
\newcommand{\tyGroundi}[1][]{\stFmt{\ifempty{#1}{B'}{B'_{#1}}}}%
\newcommand{\tyGroundii}[1][]{\stFmt{\ifempty{#1}{B''}{B''_{#1}}}}%
\newcommand{\tyGroundiii}[1][]{\stFmt{\ifempty{#1}{B'''}{B'''_{#1}}}}%
\newcommand{\tyBool}{\stFmtC{bool}}%
\newcommand{\tyUnit}{\stFmtC{unit}}%
\newcommand{\tyInt}{\stFmtC{int}}%
\newcommand{\tyReal}{\stFmtC{real}}%
\newcommand{\tyString}{\stFmtC{str}}%
{\centerline{\bf --- Begin Copied From Previous Paper ---} \hrule}%
{\hrule \centerline{\bf --- End Copied From Previous Paper ---}}%
  \hrule\vspace{1mm}}%
\hrule\vspace{1mm}\centerline{\bf --- End Discussion ---}}%
\definecolor{roleColor}{rgb}{0.5, 0.0, 0.0}%
\newcommand{\roleCol}[1]{{\color{roleColor}#1}}%
\newcommand{\roleSet}{\roleCol{\mathcal{R}}}%
\newcommand{\roleFmt}[1]{\ensuremath{{\boldsymbol{\roleCol{\mathtt{#1}}}}}\xspace}%
\newcommand{\roleSetOf}[1]{\roleFmt{\{{#1}\}}}%
\newcommand{\roleCrashedSym}{\roleCol{\lightning}}
\newcommand{\roleMaybeCrashedSym}{\roleCol{\dagger}}
\newcommand{\roleP}[1][]{%
  \ifempty{#1}{{\color{roleColor}\roleFmt{p}}}{{\color{roleColor}\roleFmt{p}_{#1}}}%
}%
\newcommand{\rolePCrashed}[1][]{%
  \ifempty{#1}{{\color{roleColor}\roleFmt{p^{\roleCrashedSym}}}}{{\color{roleColor}\roleFmt{p^{\roleCrashedSym}}_{#1}}}%
}%
\newcommand{\rolePMaybeCrashed}[1][]{%
  \ifempty{#1}{{\color{roleColor}\roleFmt{p^{\roleMaybeCrashedSym}}}}{{\color{roleColor}\roleFmt{p^{\roleMaybeCrashedSym}}_{#1}}}%
}%
\newcommand{\roleQ}[1][]{%
  \ifempty{#1}{{\color{roleColor}\roleFmt{q}}}{{\color{roleColor}\roleFmt{q}_{#1}}}%
}%
\newcommand{\roleQCrashed}[1][]{%
  \ifempty{#1}{{\color{roleColor}\roleFmt{q^{\roleCrashedSym}}}}{{\color{roleColor}\roleFmt{q^{\roleCrashedSym}}_{#1}}}%
}%
\newcommand{\roleQMaybeCrashed}[1][]{%
  \ifempty{#1}{{\color{roleColor}\roleFmt{q^{\roleMaybeCrashedSym}}}}{{\color{roleColor}\roleFmt{q^{\roleMaybeCrashedSym}}_{#1}}}%
}%
\newcommand{\roleR}[1][]{%
  \ifempty{#1}{{\color{roleColor}\roleFmt{r}}}{{\color{roleColor}\roleFmt{r}_{\!#1}}}%
}%
\newcommand{\roleRMaybeCrashed}[1][]{%
  \ifempty{#1}{{\color{roleColor}\roleFmt{r^{\roleMaybeCrashedSym}}}}{{\color{roleColor}\roleFmt{r^{\roleMaybeCrashedSym}}_{#1}}}%
}%
\newcommand{\roleS}[1][]{%
  \ifempty{#1}{{\color{roleColor}\roleFmt{s}}}{{\color{roleColor}\roleFmt{s}_{\!#1}}}%
}%
\newcommand{\roleSMaybeCrashed}[1][]{%
  \ifempty{#1}{{\color{roleColor}\roleFmt{s^{\roleMaybeCrashedSym}}}}{{\color{roleColor}\roleFmt{s^{\roleMaybeCrashedSym}}_{#1}}}%
}%
\newcommand{\roleT}[1][]{%
  \ifempty{#1}{{\color{roleColor}\roleFmt{t}}}{{\color{roleColor}\roleFmt{t}_{\!#1}}}%
}%
\newcommand{\roleTCrashed}[1][]{%
  \ifempty{#1}{{\color{roleColor}\roleFmt{t^{\roleCrashedSym}}}}{{\color{roleColor}\roleFmt{t^{\roleCrashedSym}}_{#1}}}%
}%
\newcommand{\roleTMaybeCrashed}[1][]{%
  \ifempty{#1}{{\color{roleColor}\roleFmt{t^{\roleMaybeCrashedSym}}}}{{\color{roleColor}\roleFmt{t^{\roleMaybeCrashedSym}}_{#1}}}%
}%
\newcommand{\roleU}[1][]{%
  \ifempty{#1}{{\color{roleColor}\roleFmt{u}}}{{\color{roleColor}\roleFmt{u}_{\!#1}}}%
}%
\newcommand{\roleUMaybeCrashed}[1][]{%
  \ifempty{#1}{{\color{roleColor}\roleFmt{u^{\roleMaybeCrashedSym}}}}{{\color{roleColor}\roleFmt{u^{\roleMaybeCrashedSym}}_{#1}}}%
}%
\newcommand{\rolesEmpty}[0]{\gtFmt{\emptyset}}
\definecolor{gtColor}{rgb}{0.43, 0.21, 0.1}%
\newcommand{\gtFmt}[1]{\ensuremath{{\color{gtColor}#1}}\xspace}%
\newcommand{\gtMsgFmt}[1]{\gtFmt{\labFmt{#1}}}%
\newcommand{\gtLabFmt}[1]{\gtFmt{\labFmt{#1}}}%
\newcommand{\gtLab}[1][]{%
  \ifempty{#1}{\gtMsgFmt{m}}{{\color{gtColor}\gtMsgFmt{m}_{#1}}}%
}%
\newcommand{\gtLabi}[1][]{%
  \ifempty{#1}{\gtMsgFmt{m}'}{{\color{gtColor}\gtMsgFmt{m}'_{#1}}}%
}%
\newcommand{\gtSeq}{\mathbin{\gtFmt{.}}}%
\newcommand{\gtCommRaw}[3]{%
  \gtFmt{%
    {#1} {\to} {#2}{:}%
    \left\{%
      {#3}%
    \right\}%
  }%
}%
\newcommand{\gtCommRawAnn}[4]{%
  \gtFmt{%
    {#1} {\xrightarrow{#3}} {#2}{:}%
    \left\{%
      {#4}%
    \right\}%
  }%
}%
\newcommand{\gtComm}[6]{%
  \gtFmt{%
    \gtCommRaw{#1}{#2}{%
      \gtCommChoice{#4}{#5}{#6}%
    }_{#3}%
  }%
}%
\newcommand{\gtCommAnn}[7]{%
  \gtFmt{%
    \gtCommRawAnn{#1}{#2}{#3}{%
      \gtCommChoice{#5}{#6}{#7}%
    }_{#4}%
  }%
}%
\newcommand{\gtCommSmall}[6]{%
  \gtFmt{%
    \gtCommRaw{#1}{#2}{%
      \gtCommChoiceSmall{#4}{#5}{#6}%
    }_{#3}%
  }%
}%
\newcommand{\gtCommSingle}[5]{%
  \gtFmt{%
    {#1} {\to} {#2}{:}%
    \gtCommChoice{#3}{#4}{#5}%
  }%
}%
\newcommand{\gtCommSingleAnn}[6]{%
  \gtFmt{%
    {#1} {\xrightarrow{#3}} {#2}{:}%
    \gtCommChoice{#4}{#5}{#6}%
  }%
}%
\newcommand{\gtCommTransitSingle}[5]{%
  \gtFmt{%
    {#1} {\rightsquigarrow} {#2}{:}%
    \gtCommChoice{#3}{#4}{#5}%
  }%
}%
\newcommand{\gtCommTransitRaw}[4]{%
  \gtFmt{%
    {#1} {\rightsquigarrow} {#2}{:}{#4}%
    \left\{%
      {#3}%
    \right\}%
  }%
}%
\newcommand{\gtCommTransit}[7]{%
  \gtFmt{%
    \gtCommTransitRaw{#1}{#2}{%
      \gtCommChoice{#4}{#5}{#6}%
    }{#7}_{#3}%
  }%
}%
\newcommand{\gtCommSingleErr}[6]{%
  \gtFmt{%
    {#1} {\to} {#2}{:}%
    \{\gtCommChoice{#3}{#4}{#5},\quad \gtErrKFmt{#6}\}%
  }%
}%
\newcommand{\gtCommChoice}[3]{%
  \gtFmt{%
    \gtMsgFmt{#1}\ifempty{#2}{}{({#2})}%
    \ifempty{#3}{}{\vphantom{x} \!\gtSeq\! {#3}}%
  }%
}%
\newcommand{\gtCommChoiceSmall}[3]{%
  \gtFmt{%
    \gtMsgFmt{#1}\ifempty{#2}{}{({#2})}%
    \ifempty{#3}{}{\vphantom{x} \!\gtSeq\! {#3}}%
  }%
}%
\newcommand{\gtEnd}{\gtFmt{\mathsf{end}}}%
\newcommand{\gtRec}[2]{\gtFmt{\mu{#1}.{#2}}}%
\newcommand{\gtRecVarBase}{\gtFmt{\mathbf{t}}}%
\newcommand{\gtRoles}[1]{{\color{roleColor} \operatorname{roles}(\gtFmt{#1})}}%
\newcommand{\gtRolesCrashed}[1]{{\color{roleColor}\operatorname{roles}^{\roleCrashedSym}(\gtFmt{#1})}}%
\newcommand{\gtProj}[3][]{%
  {\color{stColor}\gtFmt{#2} \ifempty{#1}{\upharpoonright}{\upharpoonright_{#1}} \roleFmt{#3}}%
}%
\newcommand{\gtStopLab}[0]{\gtMsgFmt{\mathsf{crash}}}
\newcommand{\gtCrashLab}[0]{\gtMsgFmt{\mathsf{crash}}}
\newcommand{\gtWithCrashedRoles}[2]{\langle{#1};{#2}\rangle}
\newcommand{\gtCrashRole}[2]{\gtFmt{#1}\gtFmt{\lightning}\roleFmt{#2}}
\newcommand{\gtMove}[2][\phantom{\stEnvAnnotGenericSym}]{\xrightarrow{#1}_{#2}} %
\newcommand{\gtMoveStar}[1][]{\ifempty{#1}{\gtMove[]{}^{\!\!\!*}}{\gtMove[]{#1}^{\!*}}} %
\newcommand{\iruleGtMove}[1]{GR-{#1}}
\newcommand{\iruleGtMoveCrash}[0]{\iruleGtMove{$\lightning$}}
\newcommand{\iruleGtMoveOut}[0]{\iruleGtMove{$\oplus$}}
\newcommand{\iruleGtMoveIn}[0]{\iruleGtMove{$\&$}}
\newcommand{\iruleGtMoveRec}[0]{\iruleGtMove{$\mu$}}
\newcommand{\iruleGtMoveCrDe}[0]{\iruleGtMove{$\odot$}}
\newcommand{\iruleGtMoveOrph}[0]{\iruleGtMove{$\lightning\gtLab$}}
\newcommand{\iruleGtMoveCtx}[0]{\iruleGtMove{Ctx-i}}
\newcommand{\iruleGtMoveCtxi}[0]{\iruleGtMove{Ctx-ii}}
\newcommand{\labFmt}[2][]{\ensuremath{\ifempty{#1}{\mathtt{#2}}{\mathtt{#2}\textsubscript{#1}}}\xspace}%
\newcommand{\labSet}{\mathcal{M}}
\newcommand{\gtErrKFmt}[1]{\gtStopLab.{#1}}
\definecolor{stColor}{rgb}{0, 0, 0.9}%
\newcommand{\stFmt}[1]{\ensuremath{{\color{stColor}#1}}\xspace}%
\newcommand{\stFmtC}[1]{\stFmt{\operatorname{#1}}}
\newcommand{\stMsgFmt}[1]{\stFmt{\labFmt{#1}}}%
\newcommand{\stInNB}[4]{\ifempty{#1}{}{\roleFmt{#1}}\stFmt{\&{#2}\ifempty{#3}{}{({#3})} \ifempty{#4}{}{\stSeq #4}}}%
\newcommand{\stOut}[3]{\ifempty{#1}{}{\roleFmt{#1}}\stFmt{\oplus{#2}\ifempty{#3}{}{({#3})}}}%
\newcommand{\stOutAnn}[4]{\ifempty{#1}{}{\roleFmt{#1}}\stFmt{\oplus^{#2}{#3}\ifempty{#4}{}{({#4})}}}%
\newcommand{\stChoice}[2]{\stLabFmt{#1}\ifempty{#2}{}{\stFmt{({#2})}}}%
\newcommand{\stSeq}{\mathbin{\!\stFmt{.}\!}}%
\newcommand{\stIntC}{\mathbin{\stFmt{\oplus}}}%
\newcommand{\stIntSum}[3]{\roleFmt{#1}\stFmt{\oplus\!\left\{#3\right\}_{#2}}}%
\newcommand{\stIntSumAnn}[4]{\roleFmt{#1}\stFmt{\oplus^{#2}\!\left\{#4\right\}_{#3}}}%
\newcommand{\stExtC}{\mathbin{\stFmt{\&}}}%
\newcommand{\stExtSum}[3]{\roleFmt{#1}\stFmt{\&\!\left\{#3\right\}_{#2}}}%
\newcommand{\stExtSumAnn}[4]{\roleFmt{#1}\stFmt{\&^{#2}\!\left\{#4\right\}_{#3}}}%
\newcommand{\stRec}[2]{\stFmt{\mu{#1}.{#2}}}%
\newcommand{\stEnd}{\stFmt{\mathsf{end}}}%
\newcommand{\stStop}{\stFmt{\mathsf{stop}}}%
\newcommand{\stLabFmt}[1]{\stFmt{\labFmt{#1}}}%
\newcommand{\stLab}[1][]{%
  \ifempty{#1}{\stLabFmt{m}}{\stLabFmt{m}_{{\color{stColor}#1}}}
}%
\newcommand{\stLabi}[1][]{%
  \ifempty{#1}{\stLabFmt{m'}}{\stLabFmt{m'}_{{\color{stColor}#1}}}
}%
\newcommand{\stLabii}[1][]{%
  \ifempty{#1}{\stLabFmt{m''}}{\stLabFmt{m''}_{{\color{stColor}#1}}}
}%
\newcommand{\stCrashLab}[0]{\stLabFmt{\mathsf{crash}}}
\newcommand{\stRecVarBase}{\stFmt{\mathbf{t}}}%
\newcommand{\stMerge}[2]{\stFmt{\bigsqcap_{#1}{#2}}}%
\newcommand{\stBinMerge}{\mathbin{\stFmt{\sqcap}}}%
\newcommand{\stSub}{\mathrel{\stFmt{\leqslant}}}%
\newcommand{\ltsCrDe}[3]{{#2}\stFmt{\mathord{\odot}}\roleFmt{#3}}
\newcommand{\ltsCrash}[2]{{#2}\stFmt{\lightning}}
\newcommand{\ltsCrashSmall}[2]{{#2}\stFmt{\lightning}}
\newcommand{\ltsSubject}[1]{{\color{roleColor} \operatorname{subj}({#1})}}%
\definecolor{mpColor}{rgb}{0, 0, 0}%
\newcommand{\mpFmt}[1]{{\color{mpColor}#1}}%
\newcommand{\mpLab}[1][]{%
  \mpFmt{\ifempty{#1}{\labFmt{m}}{{\labFmt{m}}_{\mathnormal #1}}}%
}%
\newcommand{\mpLabi}[1][]{%
  \mpFmt{\ifempty{#1}{\labFmt{m}'}{\labFmt{m}'_{\mathnormal #1}}}%
}%
\newcommand{\mpCrashLab}[0]{\mpFmt{\labFmt{\mathsf{crash}}}}
\newcommand{\mpTrue}{\mpFmt{\text{\textit{\texttt{true}}}}}%
\newcommand{\mpFalse}{\mpFmt{\text{\textit{\texttt{false}}}}}%
\newcommand{\mpChanRole}[2]{\mpFmt{{#1}[{#2}]}}%
\newcommand{\mpNil}{\mpFmt{\mathbf{0}}}%
\newcommand{\mpIf}[3]{%
  \mpFmt{\mathsf{if}\,{#1}\,\mathsf{then}}\,{#2}\,\mathsf{else}\,{#3}%
}%
\newcommand{\mpPar}{\mathbin{\mpFmt{\mid}}}%
\newcommand{\mpPart}[2]{#1 \triangleleft  #2}
\newcommand{\mpCrash}[0]{\mpFmt{\ensuremath{\lightning}}}
\newcommand{\mpV}[1][]{\mpFmt{\ifempty{#1}{v}{v_{#1}}}}
\newcommand{\mpW}[1][]{\mpFmt{\ifempty{#1}{w}{w_{#1}}}}
\newcommand{\mpQEmpty}{\mpFmt{\epsilon}}%
\newcommand{\mpQUnavail}{\mpFmt{\oslash}}
\newcommand{\mpS}[1][]{\mpFmt{\ifempty{#1}{s}{s_{#1}}}}%
\newcommand{\mpX}[1][]{\mpFmt{\ifempty{#1}{X}{X_{#1}}}}%
\newcommand{\mpP}[1][]{\mpFmt{\ifempty{#1}{P}{P_{#1}}}}%
\newcommand{\mpPi}[1][]{\mpFmt{\ifempty{#1}{P'}{P'_{#1}}}}%
\newcommand{\mpQ}[1][]{\mpFmt{\ifempty{#1}{Q}{Q_{#1}}}}%
\newcommand{\mpMove}[1][]{\ifempty{#1}{\to}{\to_{#1}}}%
\newcommand{\mpMoveStar}[1][]{\mathrel{\mpMove[#1]{}^{\!\!\!*}}}%
\newcommand{\mpNotMove}[1][]{\mathrel{\not{\!\!\mpMove[#1]}}}
\newcommand{\iruleSafeComm}{S-${\stIntC}{\stExtC}$}%
\newcommand{\iruleSafeCrash}{S-${\stFmt{\lightning}}{\stExtC}$}%
\newcommand{\iruleSafeRec}{S-$\stFmt{\mu}$}%
\newcommand{\iruleSafeMove}{S-$\stEnvMoveMaybeCrash$}%
\newcommand{\iruleStSubEnd}{Sub-$\stEnd$}
\newcommand{\iruleStSubStop}{Sub-$\stStop$}
\newcommand{\iruleStSubRecL}{Sub-$\stFmt{\mu}$L}
\newcommand{\iruleStSubRecR}{Sub-$\stFmt{\mu}$R}
\newcommand{\iruleStSubOut}{Sub-$\stFmt{\oplus}$}
\newcommand{\iruleStSubIn}{Sub-$\stFmt{\&}$}
\newcommand{\stStopSym}{\stFmt{\ensuremath{\lightning}}}
\newcommand{\iruleTCtxOut}{$\stEnv$-$\stFmt{\oplus}$}%
\newcommand{\iruleTCtxIn}{$\stEnv$-$\stFmt{\&}$}%
\newcommand{\iruleTCtxRec}{$\stEnv$-$\mu$}%
\newcommand{\iruleTCtxCrash}{$\stEnv$-$\lightning$}
\newcommand{\iruleTCtxCrashDetect}{$\stEnv$-$\odot$}
\newcommand{\stEnvEmpty}{\stFmt{\emptyset}}%
\newcommand{\stEnvMap}[2]{\stFmt{\mpFmt{#1}\mathbin{\!\triangleright\!}{#2}}}%
\newcommand{\stEnvComp}{\mathpunct{\stFmt{,}}}%
\newcommand{\stEnvApp}[2]{\stFmt{#1\!\left(\mpFmt{#2}\right)}}%
\newcommand{\stEnvUpd}[3]{\stFmt{#1 {} [#2 \mapsto #3]}}
\newcommand{\stEnvAssoc}[3]{\stFmt{{#2} \mathrel{\stFmt{\sqsubseteq}_{#3}} {#1}}}
\newcommand{\stEnvMove}{\mathrel{\stFmt{\to}}}%
\newcommand{\stEnvMoveMaybeCrash}[1][]{%
  \ifempty{#1}{%
    \mathrel{\stFmt{\to_{\!\lightning}}}%
  }{%
    \mathrel{\stFmt{\to_{\!{#1}}}}%
  }
}%
\newcommand{\stEnvAnnotOutSym}{\stFmt{\oplus}}%
\newcommand{\stEnvAnnotInSym}{\stFmt{\&}}%
\newcommand{\stEnvAnnotGenericSym}[1][]{\stFmt{\ifempty{#1}{\alpha}{\alpha_{#1}}}}%
\newcommand{\stEnvMoveAnnot}[1]{\mathrel{\stFmt{\xrightarrow{#1}}}}
\newcommand{\stEnvMoveGenAnnot}{\stEnvMoveAnnot{\stEnvAnnotGenericSym}}%
\newcommand{\stEnvMoveInAnnot}[3]{%
  \stEnvMoveAnnot{\stEnvInAnnot{#1}{#2}{#3}}%
}%
\newcommand{\stEnvMoveOutAnnot}[3]{%
  \stEnvMoveAnnot{\stEnvOutAnnot{#1}{#2}{#3}}%
}%
\newcommand{\stEnvInAnnot}[3]{{#1}{\stEnvAnnotInSym}{#2}:{#3}}%
\newcommand{\stEnvOutAnnot}[3]{{#1}{\stEnvAnnotOutSym}{#2}:{#3}}%
\newcommand{\stEnvInAnnotSmall}[3]{{#1}{\stEnvAnnotInSym}{#2}\!:\!{#3}}%
\newcommand{\stEnvOutAnnotSmall}[3]{{#1}{\stEnvAnnotOutSym}{#2}\!:\!{#3}}%
\newcommand{\stEnvMoveP}[1]{{#1}\!\stEnvMove}%
\newcommand{\stEnvMoveMaybeCrashP}[2][]{{#2}\!\!\stEnvMoveMaybeCrash[#1]}%
\newcommand{\stEnvNotMoveP}[1]{{#1}\!\not\stEnvMove}%
\newcommand{\stEnvNotMoveMaybeCrashP}[2][]{{#2}\!\not\stEnvMoveMaybeCrash[#1]}%
\newcommand{\stEnvMoveStar}{\mathrel{\stFmt{\stEnvMove{}^{\!\!\!*}}}}%
\newcommand{\stEnvMoveMaybeCrashStar}[1][]{%
  \ifempty{#1}{%
    \mathrel{\stFmt{\to^*_{\!\lightning}}}%
  }{%
    \mathrel{\stFmt{\to^*_{\!{#1}}}}%
  }
}%
\newcommand{\stEnvMoveAnnotP}[2]{{#1}\!\stEnvMoveAnnot{#2}}%
\newcommand{\stEnvMoveGenAnnotP}[1]{{#1}\!\stEnvMoveGenAnnot}%
\newcommand{\stIdxRemoveCrash}[2]{{#1}^{{#2} \setminus \stCrashLab}}
\newcommand{\qApp}[3]{{#1}({#2}, {#3})}
\newcommand{\stQEmpty}{\stFmt{\epsilon}}%
\newcommand{\stQMsg}[2]{\stFmt{{#1}\ifempty{#2}{}{({#2})}}}%
\newcommand{\stQCons}[2]{\stFmt{{#1}\mathbin{\!\cdot\!}{#2}}}%
\newcommand{\stQUnavail}{\stFmt{\oslash}}
\definecolor{tyColorCustom}{rgb}{0.0, 0.0, 0.85}
\definecolor{purpleish}{rgb}{0.41, 0.16, 0.38}
\newcommand{\procin}[3]{#1 ? #2.#3}%
\newcommand{\procout}[4]{#1 ! #2\langle #3 \rangle.#4}%
\newcommand{\procoutNoVal}[3]{#1 ! #2.#3}%
\newcommand{\eval}[2]{#1 \downarrow #2}
\newcommand{\redLabel}[1]{\mpMove}
\newcommand{\redSend}[3]{\mpMove}
\newcommand{\redRecv}[3]{\mpMove}
\newcommand{\redCrash}[2]{\mpMove[#2]}
\newcommand{\redIf}[1]{\mpMove}
\newcommand{\msg}[3]{(#1,#2(#3))}
\newcommand{\annSet}[0]{\Omega}
\newcommand{\annSetVar}[0]{\annSet_{\text{v}}}
\newcommand{\annSetSet}[0]{\annSet_{\text{s}}}
\newcommand{\annSubst}[0]{\sigma}
\newcommand{\annArg}[0]{\rho}
\newcommand{\iftoggleverb}[1]{%
  \ifcsdef{etb@tgl@#1}
    {\csname etb@tgl@#1\endcsname\iftrue\iffalse}
    {\etb@noglobal\etb@err@notoggle{#1}\iffalse}%
}
\begin{document}
\maketitle

\begin{abstract}
  Session types provide a typing discipline for message-passing systems.
  However,  most session type approaches assume an ideal world: one in which
  everything is reliable and without failures. Yet this is in stark contrast with
  distributed systems in the real world.
 To address this limitation, we introduce \theTool, a
  code generation toolchain that utilises asynchronous \emph{multiparty session
  types} (MPST) with \emph{crash-stop} semantics to support failure handling protocols.

 We augment asynchronous MPST and processes with \emph{crash handling} branches.
 Our approach requires no user-level syntax extensions for global types
 and features a formalisation of global semantics, which captures
 complex behaviours induced by crashed/crash handling processes.
 The sound and complete correspondence between global and local type semantics
 guarantees deadlock-freedom, protocol conformance, and liveness of typed processes
 in the presence of crashes.

Our theory is implemented in the toolchain \theTool,
which provides \emph{correctness by construction}.
\theTool extends the \Scribble multiparty protocol language
to generate protocol-conforming \Scala code, using the \Effpi concurrent programming
library.
We extend both \Scribble and \Effpi
to support \emph{crash-stop} behaviour.
We demonstrate the feasibility of our methodology and evaluate
\theTool with examples extended from both
session type and distributed systems literature.

\end{abstract}

\section{Introduction}
\label{sec:introduction}

\subparagraph*{Background}
As distributed programming grows increasingly prevalent,
significant research effort has been devoted to improve the reliability of
distributed systems.
A key aspect of this research focuses on studying
\emph{un}reliability (or, more specifically, failures).
Modelling unreliability and failures
enables a distributed system to be designed to be more tolerant of failures,
and thus more resilient.

In pursuit of methods to achieve safety in distributed communication systems,
\emph{session types}~\cite{ESOP98Session} provide a lightweight,
type system--based approach to message-passing concurrency.
In particular,
\emph{Multiparty Session Types} (MPST)~\cite{HYC08} facilitate the
specification and verification of communication between message-passing
processes in concurrent and distributed systems.
The typing discipline
prevents common communication-based errors,
\eg deadlocks and communication mismatches~\cite{POPL19LessIsMore,HYC16}.
On the practical side,
MPST have been implemented in various mainstream  programming
languages~\cite{
DBLP:journals/pacmpl/CastroHJNY19,
DBLP:conf/ppdp/KouzapasDPG16,FASE16EndpointAPI,
DBLP:conf/coordination/LagaillardieNY20,CYV2022,
MFYZ2021,
NY2017Actor,DBLP:journals/fmsd/DemangeonHHNY15
},
which facilitates their applications
in real-world programs.

Nevertheless, the challenge to account for unreliability and failures persists
for session types:
most session type systems assume that both participants and message transmissions
are \emph{reliable} without failures.
In a real-world setting, however, participants may crash, communications
channels may fail, and messages may be lost.
The lack of failure modelling in session type theories creates a barrier to
their applications to large-scale distributed systems.

Recent works~\cite{OOPSLA21FaultTolerantMPST, FORTE22FaultTolerant,
ECOOP22AffineMPST, ESOP23MAGPi, CONCUR22MPSTCrash}
close the gap of failure modelling in session
types with various techniques.
\cite{OOPSLA21FaultTolerantMPST} introduces \emph{failure
suspicion},
where a participant may suspect their communication partner has failed,
and act accordingly.
\cite{FORTE22FaultTolerant} introduces \emph{reliability
annotations} at type level, and fall back to a given \emph{default} value in
case of failures.
\cite{ECOOP22AffineMPST} proposes a framework of \emph{affine}
multiparty session types, where a session can terminate prematurely, \eg
in case of failures.
\cite{CONCUR22MPSTCrash} integrates \emph{crash-stop failures},
where a generalised type system validates safety and liveness
properties with model checking;
\cite{ESOP23MAGPi} takes a similar approach, modelling
more kinds of failures in a session type system, \eg message losses, reordering, and
delays.

While steady advancements are made on the theoretical side, the
implementations of those enhanced session type theories seem to lag behind.
Barring the approaches in~\cite{OOPSLA21FaultTolerantMPST,ECOOP22AffineMPST}, %
the aforementioned approaches~\cite{FORTE22FaultTolerant, ESOP23MAGPi,
CONCUR22MPSTCrash} do not
provide session type API support for programming languages.\footnotemark\
To bring the benefits of the theoretical developments into real-world distributed programming,
a gap remains to be filled on the implementation side.

\footnotetext{\cite{CONCUR22MPSTCrash} provides a prototype implementation,
  utilising the mCRL2 model checker~\cite{TACAS19mCRL2}, for verifying
  type-level properties,
instead of a library for general use.}

\subparagraph*{This Paper}
We introduce a \emph{top-down}  %
methodology for designing asynchronous multiparty protocols
with crash-stop failures:
\begin{enumerate*}[label=\emph{(\arabic*)}]
  \item
  We use an extended asynchronous MPST theory, which models \emph{crash-stop}
  failures, and show that the usual session type
  guarantees remain valid, \ie communication safety, deadlock-freedom, and liveness;
  \item
  We present a toolchain for implementing asynchronous multiparty protocols,
  under our new asynchronous MPST theory, in \Scala, using the
  \Effpi concurrency library~\cite{PLDI19Effpi}.
\end{enumerate*}

The top-down design methodology comes from the original MPST
theory~\cite{HYC08},
where the design of multiparty protocols begins with a given \emph{global} type
(top),
and implementations rely on \emph{local} types (bottom) obtained from the
global type.
The global and local types reflect the global and local communication
behaviours respectively.
Well-typed implementations that conform to a global type are
guaranteed to be \emph{correct by construction},
enjoying full guarantees (safety, deadlock-freedom, liveness) from the theory.
This remains the predominant approach for implementing MPST theories,
and is also followed by some aforementioned
systems~\cite{OOPSLA21FaultTolerantMPST,ECOOP22AffineMPST}.

We model \emph{crash-stop} failures~\cite[\S 2.2]{DBLP:books/daglib/0025983},
\ie a process may fail arbitrarily and cease to interact with others.
This model is simple and expressive,
and has been adopted by other approaches~\cite{CONCUR22MPSTCrash, ESOP23MAGPi}.
Using global types in our design for handling failures in multiparty protocols
presents two distinct advantages:
\begin{enumerate*}[label=\emph{(\arabic*)}]
  \item global types provide a simple, high-level means to both specify a protocol
    abstractly and automatically derive local types; and,
  \item desirable behavioural properties such as communication safety,
    deadlock-freedom, and liveness are guaranteed by construction.
  \end{enumerate*}
In contrast to the \emph{synchronous} semantics in~\cite{CONCUR22MPSTCrash},
we model an \emph{asynchronous} semantics,
where messages are buffered whilst in transit.
We focus on asynchronous systems
since most communication in the real distributed world is asynchronous.
In \cite{ESOP23MAGPi}, %
although the authors develop a generic typing system incorporating asynchronous semantics,
their approach results in the type-level properties
becoming undecidable~\cite[\S 4.4]{ESOP23MAGPi}.
With global types,
we restore the decidability %
at a minor cost to
expressivity.

To address the gap on the practical side, %
we present a code generator toolchain, \theTool,
to implement our MPST theory.
Our toolchain takes an asynchronous multiparty protocol as input,
using the protocol description language \Scribble~\cite{YHNN2013},
and generates \Scala code using the \Effpi~\cite{PLDI19Effpi} concurrency
library as output.

The \Scribble Language~\cite{YHNN2013} is %
designed for describing multiparty communication protocols,
and is closely connected to MPST theory (\cf \cite{FeatherweightScribble}).
This language enables a programmatic approach for expressing global types and
designing multiparty protocols.
The \Effpi concurrency library \cite{PLDI19Effpi} offers an embedded Domain
Specific Language (DSL) that provides a simple actor-based API\@.
The library
offers both type-level and value-level
constructs for processes and channels.
Notably,
the type-level constructs reflect the behaviour of programs (\ie processes)
and can be used as specifications.
Our code generation technique, as well as the \Effpi library itself,
utilises the type system features introduced in \Scala~3,
including match types and dependent function types,
to encode local types in \Effpi.
This approach enables us to specify and verify program behaviour at the type level,
resulting in a more powerful and flexible method for handling concurrency.

By extending
\Scribble and \Effpi to support \emph{crash detection and handling},
our toolchain \theTool provides a lightweight way for developers to take
advantage of our theory, bridging the gap on the practical side.
We evaluate the expressivity and feasibility of \theTool with examples incorporating
crash handling behaviour, extended from session type literature.

\subparagraph*{Outline}
We begin with an overview of our methodology in
\cref{sec:overview}.
We introduce an asynchronous multiparty session
calculus %
in \cref{sec:process} with semantics of crashing and crash handling.
    We introduce an extended theory of asynchronous multiparty session
    types with semantic modelling of crash-stop failures in \cref{sec:gtype}.
    We present a typing system for the multiparty session calculus in
    \cref{sec:typing_system}.
    We introduce \theTool, a code generation toolchain that implements our
    theory in \cref{sec:impl},
    demonstrating how our approach is applied in the  \Scala
    programming language.
   We evaluate our toolchain with examples from both session type and
   distributed systems literature in \cref{sec:eval}.
We discuss related work in \cref{sec:related} and conclude in
\cref{sec:conclusion}.
Full proofs, further definitions, examples, and more technical details of 
\theTool can be found in Appendix.  
Additionally, our toolchain and examples used in our evaluation 
are available in an artifact, 
which can be 
\href{https://doi.org/10.5281/zenodo.7714133}{\color{blue}{accessed}} on Zenodo. 
For those interested in the source files, they 
are available on \href{https://github.com/adbarwell/ECOOP23-Artefact}{\color{blue}{GitHub}}.

\section{Overview}
\label{sec:overview}
In this section, we give an overview of our methodology for
designing asynchronous multiparty protocols with crash-stop failures, and demonstrate our code generation toolchain, \theTool.

\subparagraph*{Asynchronous Multiparty Protocols with Crash-Stop Failures}
We follow a standard top-down design approach enabling \emph{correctness by construction}, 
but enrich asynchronous MPST with crash-stop semantics.  As depicted in~\Cref{fig:overview-of-topdown}, we formalise (asynchronous) multiparty protocols with crash-stop failures
as  global types with \emph{crash handling branches} ($\gtCrashLab$). 
These are projected into local types, which may similarly contain crash handling branches ($\stCrashLab$).
The projected local types are then used to type-check
processes (also with crash handling branches ($\mpCrashLab$)) that are written in a session calculus.
As an example, we consider a simple
\emph{distributed logging} scenario, which is inspired by
the logging-management protocol~\cite{ECOOP22AffineMPST}, 
but extended with a third 
participant. \iftoggle{full}{The full distributed logging protocol can be found in~\Cref{sec:appendix:eval}.}{}

\begin{figure}[t]
\centering
{\footnotesize 
  \begin{tikzpicture}
  \node (Gtext) {A Global Type $\gtG$ $\highlight{\text{with }\gtCrashLab}$};
  \node[below= 0.5mm of Gtext, xshift=-37mm, align=center] (proj) {
    \small {\bf{projection}}\,($\upharpoonright$)};
  \node[below=6mm of Gtext, xshift=-40mm] (LA) {\footnotesize Local Type for $\roleFmt{L}$
  \small \boxed{\stT_{\roleFmt{L}}}};
  \node[below=6mm of Gtext] (LB) {\footnotesize Local Type for $\roleFmt{I}$ $\highlight{\text{with }\stCrashLab}$ \small \boxed{\stT_{\roleFmt{I}}}};
  \node[below=6mm of Gtext, xshift=40mm] (LC) {\footnotesize Local Type for $\roleFmt{C}$
   \small  \boxed{\stT_{\roleFmt{C}}}};
  \draw[->] (Gtext) -- (LA);
  \draw[->] (Gtext) -- (LB);
  \draw[->] (Gtext) -- (LC);
   \node[below= 1mm of LA, xshift=-8.4mm, align=center] (typ) {
    \small {\bf{typing}}\,($\vdash$)};
   \node[below=6mm of LA, xshift=0mm] (PA) {\, \, \, \footnotesize Process for $\roleFmt{L}$ \small
    \boxed{P_{\roleFmt{L}}}};
   \node[below=6mm of LB, xshift=0mm] (PB) {\, \, \, \footnotesize Process for $\roleFmt{I}$ $\highlight{\text{with }\mpCrashLab}$ \small
    \boxed{P_{\roleFmt{I}}}};
     \node[below=6mm of LC, xshift=0mm] (PC) {\, \, \, \footnotesize Process for $\roleFmt{C}$ \small
    \boxed{P_{\roleFmt{C}}}};
   \draw[->] (LA) -- (PA);
   \draw[->] (LB) -- (PB);
   \draw[->] (LC) -- (PC);
\end{tikzpicture}
}
\caption{Top-down View of MPST with Crash}
\label{fig:overview-of-topdown}
\end{figure}

The Simpler Logging protocol consists of a \emph{logger} (\roleFmt{L}) that controls the
logging services, an \emph{interface} ($\roleFmt{I}$) that provides communications between
logger and client, and a \emph{client} ($\roleFmt{C}$) that requires logging services via interface.
Initially, $\roleFmt{L}$ sends a heartbeat message $\gtMsgFmt{trigger}$ to $\roleFmt{I}$. 
Then $\roleFmt{C}$ sends a command to
$\roleFmt{L}$ 
to read the logs (\gtMsgFmt{read}).
When a \gtMsgFmt{read} request is
sent, it is forwarded to $\roleFmt{L}$, and $\roleFmt{L}$ responds
with a \gtMsgFmt{report}, which is then forwarded onto $\roleFmt{C}$.  
Assuming all participants (logger, interface, and client) are reliable, \ie without any failures or crashes, 
this logging behaviour can be represented by the \emph{global type}
$\gtG[0]$:
\begin{equation}
\label{ex:overview-global-without-crash}
{\small{
\gtG[0] =
    \gtCommSingle{\roleFmt{L}}{\roleFmt{I}}{
  \gtMsgFmt{trigger}}{}{
    \gtCommSingle{\roleFmt{C}}{\roleFmt{I}}{
   \gtMsgFmt{read}}{}{
    \gtCommSingle{\roleFmt{I}}{\roleFmt{L}}
    {\gtMsgFmt{read}}{}{
    \gtCommSingle{\roleFmt{L}}{\roleFmt{I}}
    {\gtMsgFmt{report}}{\stFmtC{log}}{
    \gtCommSingle{\roleFmt{I}}{\roleFmt{C}}{\gtMsgFmt{report}}{\stFmtC{log}}{
    \gtEnd
    }}}}
       }
}}
\end{equation}
\noindent
Here, $\gtG[0]$ is a specification of the Simpler Logging protocol between multiple roles from
a global perspective.

In the real distributed world, all participants in the Simpler Logging system
may fail.
Ergo, we need to model protocols with failures or
crashes and handling behaviour,
\eg should the client fail after the logging has started, the interface
will inform the logger to stop and exit.
We follow~\cite[\S 2.2]{DBLP:books/daglib/0025983} to model
a \emph{crash-stop} semantics, where we assume that roles can crash \emph{at any time} unless assumed
\emph{reliable} (never crash).
For simplicity, we assume $\roleFmt{I}$ and $\roleFmt{L}$ to be reliable.
The above logging behaviour, incorporating crash-stop failures, can be represented by extending
$\gtG[0]$ with a branch handling a crash of $\roleFmt{C}$:
\begin{equation}
\label{ex:overview-global-with-crash}
{\small{
\gtG =
    \gtCommSingle{\roleFmt{L}}{\roleFmt{I}}{
  \gtMsgFmt{trigger}}{}{
    \gtCommRaw{\roleFmt{C}}{\roleFmt{I}}{
    \begin{array}{@{}l@{}}
    \gtCommChoice{\gtMsgFmt{read}}{}{
    \gtCommSingle{\roleFmt{I}}{\roleFmt{L}}
    {\gtMsgFmt{read}}{}{
    \gtCommSingle{\roleFmt{L}}{\roleFmt{I}}
    {\gtMsgFmt{report}}{\stFmtC{log}}{
    \gtCommSingle{\roleFmt{I}}{\roleFmt{C}}{\gtMsgFmt{report}}{\stFmtC{log}}{
    \gtEnd
    }}}}\\
     \gtCommChoice{\gtMsgFmt{crash}}{}{
    \gtCommSingle{\roleFmt{I}}{\roleFmt{L}}{\gtMsgFmt{fatal}}{}{
    \gtEnd
    }}
    \end{array}
   }}
   }
   }
\end{equation}
We model crash detection on receiving roles: when $\roleFmt{I}$ is
waiting to receive a message from $\roleFmt{C}$, the receiving role $\roleFmt{I}$ is
able to detect whether $\roleFmt{C}$ has $\gtCrashLab$ed.
Since crashes are detected only by the receiving role, we do not
require a crash handling branch on the communication step between $\roleFmt{I}$ and
$\roleFmt{C}$ -- nor do we require them on any interaction between $\roleFmt{L}$ and $\roleFmt{I}$ (since we
are assuming that $\roleFmt{L}$ and $\roleFmt{I}$ are reliable).

Following the MPST top-down methodology,
a global type is then \emph{projected} onto \emph{local types},
which describe communications from the perspective of a single role.
In our unreliable Simpler Logging example,
$\gtG$ is projected onto three local types (one for each role
$\roleFmt{C}$, $\roleFmt{L}$, $\roleFmt{I}$):

\smallskip
\centerline{\(
{\small{
 \begin{array}{c}
  \stT[\roleFmt{C}] =
 \roleFmt{I} \stFmt{\oplus}
   \stLabFmt{read}
 \stSeq
  \roleFmt{I}
   \stFmt{\&}
    \stLabFmt{report(\stFmtC{log})}
 \stSeq
\stEnd
  \quad
   \stT[\roleFmt{L}] =
   \roleFmt{I} \stFmt{\oplus}
   \stLabFmt{trigger}
 \stSeq
  \stExtSum{\roleFmt{I}}{}{
  \begin{array}{@{}l@{}}
\stLabFmt{read}
  \stSeq
   \roleFmt{I} \stFmt{\oplus}
    \stLabFmt{report(\stFmtC{log})}
    \stSeq
    \stEnd
    \\
    \stLabFmt{fatal}
  \stSeq
   \stEnd
  \end{array}
  }
\\[3mm]
  \stT[\roleFmt{I}]  =
    \roleFmt{L}
  \stFmt{\&}
  \stLabFmt{trigger}
  \stSeq
  \stExtSum{\roleFmt{C}}{}{
  \begin{array}{@{}l@{}}
 \stLabFmt{read}
 \stSeq
 \roleFmt{L}
  \stFmt{\oplus}
  \stLabFmt{read}
  \stSeq
 \roleFmt{L}
\stFmt{\&}
\stLabFmt{report(\stFmtC{log})}
\stSeq
\roleFmt{C}
\stFmt{\oplus}
\stLabFmt{report(\stFmtC{log})}
\stSeq
\stEnd
 \\
  \stCrashLab
  \stSeq
  \roleFmt{L}
  \stFmt{\oplus}
  \stLabFmt{fatal}
 \stSeq
\stEnd
 \end{array}
 }
\end{array}
 }
 }
\)}
\smallskip

\noindent
Here,
$\stT[\roleFmt{I}]$ states that 
 $\roleFmt{I}$ first receives a trigger message from  
 $\roleFmt{L}$; then $\roleFmt{I}$ either expects a  $\stLabFmt{read}$ request from 
 $\roleFmt{C}$, or detects the crash of $\roleFmt{C}$ and handles it (in $\stCrashLab$)
 by sending the $\stLabFmt{fatal}$ message to notify $\roleFmt{L}$.
We add
additional crash modelling and introduce a $\stStop$ type for crashed
endpoints. %
We show the operational correspondence between global and local type semantics,
and demonstrate that a
projectable global type always produces a safe, deadlock-free, and live typing
context. %

The next step in this top-down methodology is to use
local types to type-check processes $\mpP[i]$ executed by role $\roleP[i]$
in our session calculus.
For example, $\stT[\roleFmt{I}]$ can be used to type check  %
$\roleFmt{I}$ that executes the process:

\smallskip
\centerline{\(
{\small{
\procin{\roleFmt{L}}{\labFmt{trigger}}{\sum \setenum{
  \begin{array}{@{}l@{}}
 \procin{\roleFmt{C}}{\labFmt{read}}{
 \procoutNoVal{\roleFmt{L}}{\labFmt{read}}
 {\procin{\roleFmt{L}}{\labFmt{report}(\mpx)}
 {\procout{\roleFmt{C}}{\labFmt{report}}{\mpx}{\mpNil}}}
 }
\\
  \procin{\roleFmt{C}}{\mpCrashLab}{ \procoutNoVal{\roleFmt{L}}{\labFmt{fatal}}{\mpNil}}
  \end{array}
  }}
}}
\)} %
\smallskip

\noindent
 In our operational semantics (\cref{sec:process}),
we allow active processes executed by unreliable roles
to crash \emph{arbitrarily}.
Therefore, the role executing the crashed process
also crashes, and is assigned the local type $\stStop$.
To ensure that a communicating process is type-safe even in presence of crashes,
we require that its typing context %
satisfies a \emph{safety property} accounting for possible crashes (\Cref{def:mpst-env-safe}),
which is preserved by
projection. %
Additional semantics surrounding crashes adds subtleties even in standard
results. We prove subject reduction and session fidelity 
results accounting for crashes and
sets of reliable roles. %

\subparagraph*{Code Generation Toolchain: \theTool}

\begin{figure}
  \begin{center}
    \includegraphics[width=0.8\textwidth]{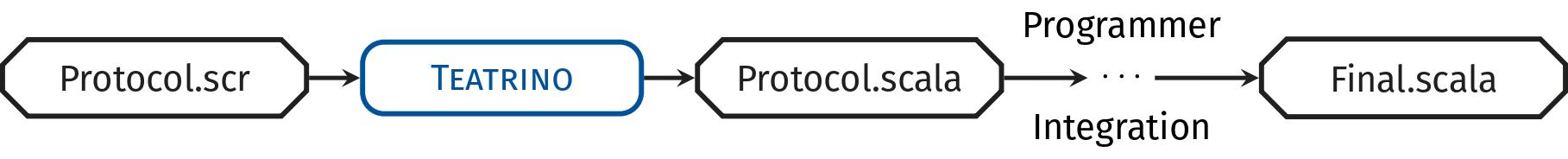}
  \end{center}
  \caption{Workflow of \theTool}
   \label{fig:overview:diagram}
\end{figure}

To complement the theory,
we present a code generation toolchain, \theTool, that generates
protocol-conforming \Scala code from a multiparty protocol.
We show the workflow diagram of our toolchain in~\Cref{fig:overview:diagram}.
\theTool takes a \Scribble protocol (\textsf{Protocol.scr})
and generates executable code (\textsf{Protocol.scala})
conforming to that protocol,
which the programmer can integrate with existing code (\textsf{Final.scala}).

\theTool implements our session type theory to handle global types expressed
using the \Scribble protocol description language~\cite{YHNN2013},
a programmer-friendly way for describing multiparty protocols.
We extend the syntax of \Scribble slightly to include constructs for \gtCrashLab
recovery branches and reliable roles.

The generated \Scala code utilises the
\Effpi concurrency
library~\cite{PLDI19Effpi}. %
\Effpi is an embedded domain specific language in \Scala~3 that offers a simple
Actor-based API for concurrency.
Our code generation technique, as well the \Effpi library itself,
leverages the type system features introduced in \Scala~3,
\eg match types and dependent function types,
to encode local types in \Effpi.
We extend \Effpi to support crash detection and handling.

As a brief introduction to \Effpi,
the concurrency library provides types for processes and channels.
For processes, an output process type \lstinline[language=effpi]+Out[A, B]+
describes a process that uses a channel of type \texttt{A} to send a value of
type \texttt{B},
and an input process type \lstinline[language=effpi]+In[A, B, C]+ describes a
process that uses a channel of type \texttt{A} to receive a value of type
$\texttt{B}$, and pass it to a continuation type \texttt{C}.
Process types can be sequentially composed by the
\lstinline[language=effpi]+>>:+ operator.
For channels,
\lstinline[language=effpi]+Chan[X]+ describes a channel that can be used
communicate values of type \texttt{X}.
More specifically, the usage of a channel can be reflected at the type level,
using the types
\lstinline[language=effpi]+InChan[X]+\slash\lstinline[language=effpi]+OutChan[X]+
for input\slash output channels.

\begin{figure}[ht]
\footnotesize
  \begin{lstlisting}[language=effpi,gobble=4]
    type I[C0 <: InChan[Trigger], C1 <: OutChan[Fatal],
           C2 <: InChan[Read], C3 <: InChan[Report], C4 <: OutChan[Report]]
    = InErr[C0, Trigger, [*\label{line:inerr}*]
            (X <: Read) =>
              Out[C3,Read] >>: In[C4, Report, (Y <: Log) => Out[C5, Report]],
            (Err <: Throwable) => Out[C2,Fatal]
      ]
  \end{lstlisting}
  \caption{\Effpi Type for $\stT[\roleFmt{I}]$}
  \label{fig:over:typeEg}
\end{figure}

As a sneak peek of the code we generate, in \cref{fig:over:typeEg},
we show the generated \Effpi
representation for the projected local type \stT[\roleFmt{I}]
from the Simpler Logging example. %
Readers may be surprised by the difference between %
$\stT[\roleFmt{I}]$ and the generated \Effpi type $\texttt{I}$.
This is because the process types need their respective channel types,
namely the type variables \texttt{C0}, \texttt{C1}, \etc bounded by
\lstinline[language=Effpi]+InChan[...]+ and
\lstinline[language=Effpi]+OutChan[...]+.
We explain the details of code generation in \cref{sec:impl:overview}, and
describe an interesting challenge posed by the
channel generation procedure in~\cref{sec:impl:boringdetail}.

For crash handling behaviour, we introduce a new type
\lstinline[language=effpi]+InErr+,
whose last argument specifies a continuation type to follow in case of a crash.
\Cref{line:inerr} in \Cref{fig:over:typeEg} shows the crash
handling behaviour: sending a message of type \texttt{Fatal},
which reflects the \stCrashLab branch in the local type $\stT[\roleFmt{I}]$.
We give more details of the generated code in \cref{sec:impl:overview}.

Code generated by \theTool is executable, protocol-conforming, and can be
specialised by the programmer to integrate with existing code.
We evaluate our toolchain on examples taken from both MPST and
distributed programming literature in \cref{sec:eval}.
Moreover, we extend each example with crash handling
behaviour to define %
unreliable variants.
We demonstrate that, with \theTool, code generation takes negligible time,
and all potential crashes are accompanied with crash handlers.

\section{Crash-Stop Asynchronous Multiparty Session Calculus} %
\label{sec:process}
In this section, we formalise the syntax %
and operational semantics %
of our asynchronous multiparty session calculus with process failures and crash detection.

\subparagraph*{Syntax}
Our asynchronous multiparty session calculus models processes that may crash arbitrarily.
Our formalisation is based on~\cite{POPL21AsyncMPSTSubtyping} -- but
in addition, follows the \emph{fail-stop} model in \cite[\S 2.7]{DBLP:books/daglib/0025983},
where processes may crash and never recover, and process failures can be detected by failure detectors~\cite[\S 2.6.2]{DBLP:books/daglib/0025983} \cite{JACM96FailureDetector} when attempting to receive messages.

We give the syntax of processes in \cref{fig:processes:syntax}.  %
In our calculus, we assume that there are basic expressions $\mpE$ (\eg
$\mpTrue, \mpFalse, 7 + 11$) that are
assigned basic types $\tyGround$ (\eg $\tyInt, \tyBool$).
We write $\eval{\mpE}{\mpV}$ to denote an expression $\mpE$ evaluates to
a value $\mpV$ (\eg $\eval{(7 < 11)}{\mpTrue}, \eval{(1 + 1)}{2}$).

\begin{figure}
{\small
\[
  \begin{array}{@{}r@{\;}c@{\;}ll@{}}
    \mpP,\mpQ & \bnfdef & & \text{\em\bf Processes} \\
            & & \sum_{i\in I}\procin{\roleP}{\mpLab_i(\mpx_i)}{\mpP_i} & \text{\em external choice}\\
            & \bnfsep & \procout{\roleP}{\mpLab}{\mpE}{\mpP} \quad
            \highlight{\text{(\em where } \mpLab \neq \mpCrashLab\text{)}}
            & \text{\em output} \\
            & \bnfsep & \mpIf{\mpE} \mpP \mpQ & \text{\em conditional} \\
            & \bnfsep & \mpX & \text{\em variable} \\
            & \bnfsep & \mu \mpX.\mpP & \text{\em recursion} \\
            & \bnfsep &  \mpNil  & \text{\em inaction} \\
            & \bnfsep & \highlight{\mpCrash} & \highlight{\text{\em crashed}}
  \end{array}%
  \quad%
  \begin{array}{@{}r@{\;}c@{\;}ll@{}}
    \mpM & \bnfdef &  & \text{\em\bf Sessions} \\
       &         & \mpPart\roleP\mpP \mpPar \mpPart\roleP \mpH & \text{\em
       participant} \\
       & \bnfsep & \mpM\mpPar\mpM  & \text{\em parallel} \\
    \mpH & \bnfdef & & \text{\em\bf %
    Queues} \\
       &         & \mpQEmpty & \text{\em empty %
       } \\
       & \bnfsep & \highlight{\mpQUnavail} & \highlight{\text{\em unavailable %
       }} \\
       & \bnfsep & \left(\roleP,\mpLab(\mpV)\right) & \text{\em message} \\
       & \bnfsep & \mpH\cdot \mpH  &\text{\em concatenation}
  \end{array}
  \]
  }
\caption{Syntax of sessions, processes, and queues.
Noticeable changes w.r.t. standard session calculus~\cite{POPL21AsyncMPSTSubtyping} are $\highlight{\text{highlighted}}$.}
\label{fig:processes:syntax}
\end{figure}

A process, ranged over by $\mpP, \mpQ$, is a communication agent within a
session.
An \emph{output} process
$\procout{\roleP}{\mpLab}{\mpE}{\mpP}$
sends a message to another role $\roleP$ in the session,
where the message is labelled $\mpLab$, and carries a payload expresion $\mpE$,
then the process continues as $\mpP$.
An \emph{external choice} (\emph{input}) process
$\sum_{i\in I}\procin{\roleP}{\mpLab_i(\mpx_i)}{\mpP_i}$
receives a message from another role $\roleP$ in the session,
among a finite set of indexes $I$,
if the message is labelled $\mpLab[i]$, then the payload would be received as
$\mpx[i]$, and process continues as $\mpP[i]$.
Note that our calculus uses $\mpCrashLab$ as a special message label
denoting that a participant (role) has crashed.
Such a label cannot be sent by any process, but a process can implement
crash detection and handling by receiving it.
Consequently, an output process \emph{cannot} send
 a $\mpCrashLab$ message (side condition $\mpLab \neq \mpCrashLab$),
 whereas an input process
may include a \emph{crash handling branch} of the form $\mpCrashLab.\mpPi$
meaning that $\mpPi$ is executed when the sending role has crashed.
A \emph{conditional} process
$\mpIf{\mpE} \mpP \mpQ$
continues as either $\mpP$ or $\mpQ$ depending on the evaluation of $\mpE$.
We allow \emph{recursion} at the process level using
$\mu \mpX.\mpP$ and $\mpX$,
and we require process recursion variables to be guarded by an input or an
output action;
we consider a recursion process structurally congruent to its unfolding
$\mu X.\mpP \equiv \mpP\subst{X}{\mu X.\mpP}$.
Finally, we write $\mpNil$ for an \emph{inactive} process, representing
a successful termination; and $\mpCrash$ for a \emph{crashed} process, representing a
termination due to failure.

An \emph{incoming queue}\footnotemark, ranged over by $\mpH, \mpHi$, is a sequence of messages
tagged with their origin.
We write $\mpQEmpty$ for an \emph{empty} queue;
$\mpQUnavail$ for an \emph{unavailable} queue;
and $(\roleP,\mpLab(\val))$
for a message sent from $\roleP$, labelled $\mpLab$, and containing a payload
value $\mpV$.
We write $\mpH[1] \cdot \mpH[2]$ to denote the concatenation of two queues
$\mpH[1]$ and $\mpH[2]$.
When describing incoming queues, we consider two messages from different
origins as swappable:
\(
  \mpH_1
  \cdot
  \msg{\roleQ_1}{\mpLab_1}{\val_1}
  \cdot
  \msg{\roleQ_2}{\mpLab_2}{\val_2}
  \cdot
  \mpH_2
  \equiv
  \mpH_1
  \cdot
  \msg{\roleQ_2}{\mpLab_2}{\val_2}
  \cdot
  \msg{\roleQ_1}{\mpLab_1}{\val_1}
  \cdot
  \mpH_2
\)
whenever $\roleQ[1] \neq \roleQ[2]$.
Moreover, we consider concatenation $(\cdot)$ as associative, and the empty
queue $\mpQEmpty$ as the identity element for concatenation.

\footnotetext{In~\cite{POPL21AsyncMPSTSubtyping}, the queues are outgoing
instead of incoming.
We use incoming queues to model our crashing semantics more easily.}

A session, ranged over by $\mpM, \mpMi$, consists of processes and their
respective incoming queue, indexed by their roles.
A single entry for a role $\roleP$ is denoted $\mpPart{\roleP}{\mpP} \mpPar
\mpPart{\roleP}{\mpH}$, where $\mpP$ is the process for $\roleP$ and $\mpH$ is
the incoming queue.
Entries are composed together in parallel as $\mpM \mpPar \mpMi$, where the
roles in $\mpM$ and $\mpMi$ are disjoint.
We consider parallel composition as commutative and associative, with
$\mpPart{\roleP}{\mpNil} \mpPar \mpPart{\roleP}{\mpQEmpty}$ as a neutral
element of the operator.
We write
$\prod_{i\in I} (\mpPart{\roleP[i]}{\mpP[i]} \mpPar \mpPart{\roleP[i]}{\mpH[i]})$
for the parallel composition of multiple entries in a set.

\subparagraph*{Operational Semantics}
of our session calculus is given in \Cref{def:session:red}, using
a standard \emph{structural congruence} $\equiv$ defined in~\cite{POPL21AsyncMPSTSubtyping}. %
\iftoggle{full}{Standard congruence rules can be found in~\cref{sec:app_congruence}. }{}%
Our semantics parameterises on a (possibly empty) set of \emph{reliable} roles $\rolesR$,
which are assumed to \emph{never crash}. %

\begin{definition}[Session Reductions]
\label{def:session:red}
Session reduction $\mpMove[\rolesR]$ is inductively
defined by the rules in \cref{fig:processes:reduction},
parameterised by a fixed set $\rolesR$ of reliable roles.
We write $\mpMove$ when $\rolesR$ is insignificant.
We write $\mpMoveStar[\rolesR]$ (resp.  $\mpMoveStar$)
for the reflexive and transitive closure of $\mpMove[\rolesR]$ (resp. $\mpMove$).
\end{definition}

Our operational semantics retains the basic rules
in~\cite{POPL21AsyncMPSTSubtyping}, but also includes
($\highlight{\text{highlighted}}$) rules for crash-stop failures and crash
handling, adapted from~\cite{CONCUR22MPSTCrash}.
Rules \inferrule{r-send} and \inferrule{r-rcv} model ordinary message delivery
and reception:
an output process located at $\roleP$ sending to $\roleQ$ appends a
message to the incoming queue of $\roleQ$; and an input process located at
$\roleP$ receiving from $\roleQ$ consumes the first message from the
incoming queue.
Rules \inferrule{r-cond-T} and \inferrule{r-cond-F} model conditionals;
and rule \inferrule{r-struct} permits reductions up to structural
congruence.

With regard to crashes and related behaviour, rule \inferrule{r-$\lightning$}
models process crashes: an active ($\mpP \neq \mpNil$) process located at an
unreliable role ($\roleP \notin \rolesR$) may reduce to a crashed process
$\mpPart{\roleP}{\mpCrash}$, with its incoming queue becoming unavailable
$\mpPart{\roleP}{\mpQUnavail}$.
Rule \inferrule{r-send-$\lightning$} models a
message delivery to a crashed role (and thus an unavailable queue),
and the message becomes lost and would not be added to the queue.
Rule \inferrule{r-rcv-$\odot$} models crash detection, which activates as a
``last resort'':
an input process at $\roleP$ receiving from $\roleQ$ would first attempt find a
message from $\roleQ$ in the incoming queue, which engages the usual rule
\inferrule{r-recv};
if none exists and $\roleQ$ has crashed
($\mpPart{\roleQ}{\mpCrash}$), then the crash handling branch in the input
process at $\roleP$ can activate.
We draw attention to the interesting fact that \inferrule{r-recv} may engage
even if $\roleQ$ has crashed, in cases where a message from $\roleQ$ in the
incoming queue may be consumed.
We now illustrate our operational semantics of sessions with an example.  %
\iftoggle{full}{More examples can be found in~\cref{sec:app_examples}.}

\begin{figure}
{\small
\[
\begin{array}{@{}llr@{}}
\highlight{\inferrule{r-$\lightning$}} &
\highlight{{
\mpPart\roleP{\mpP}
\mpPar
\mpPart\roleP{\mpH[\roleP]}
\mpPar
\mpM
\;\redCrash{\roleP}{\rolesR}\;
\mpPart\roleP{\mpCrash}
\mpPar
\mpPart\roleP{\mpQUnavail}
\mpPar
\mpM
}}
&
\hspace{-11em}
\highlight{(\mpP \neq \mpNil, \roleP \notin \rolesR)}
\\
\inferrule{r-send} &
{
\mpPart\roleP{\procout{\roleQ}{\mpLab}{\mpE}{\mpP}}
\mpPar
\mpPart\roleP{\mpH[\roleP]}
\mpPar
\mpPart\roleQ{\mpQ}
\mpPar
\mpPart\roleQ{\mpH[\roleQ]}
\mpPar
\mpM
}
\\
&
{%
\redSend{\roleP}{\roleQ}{\mpLab}\;
\mpPart\roleP{\mpP}
\mpPar
\mpPart\roleP{\mpH[\roleP]}
\mpPar
\mpPart\roleQ{\mpQ}
\mpPar
\mpPart{\roleQ}{\mpH[\roleQ]}\cdot(\roleP,\mpLab(\val))
\mpPar
\mpM
}
&
\hspace{-11em}
(\eval{\mpE} \val, \mpH[\roleQ] \neq \mpQUnavail)
\\[1mm]
\highlight{\inferrule{r-send-\mpCrash}} &
{\highlight{
\mpPart\roleP{\procout{\roleQ}{\mpLab}{\mpE}{\mpP}}
\mpPar
\mpPart\roleP{\mpH[\roleP]}
\mpPar
\mpPart\roleQ{\mpCrash}
\mpPar
\mpPart\roleQ{\mpQUnavail}
\mpPar
\mpM
\;\redSend{\roleP}{\roleQ}{\mpLab}\;
\mpPart\roleP{\mpP}
\mpPar
\mpPart\roleP{\mpH[\roleP]}
\mpPar
\mpPart\roleQ{\mpCrash}
\mpPar
\mpPart{\roleQ}{\mpQUnavail}
\mpPar
\mpM}
\hspace{-16em}
}
&
\\[1mm]
\inferrule{r-rcv}
&
\mpPart\roleP{\sum_{i\in I} \procin\roleQ{\mpLab_i(\mpx_i)}\mpP_i}
\mpPar
\mpPart\roleP{(\roleQ,\mpLab_k(\val)) \cdot \mpH[\roleP]}
\mpPar
\mpM
\redRecv{\roleP}{\roleQ}{\mpLab_k}\;
\mpPart\roleP \mpP_k\subst{\mpx_k}{\val}
\mpPar
\mpPart\roleP{\mpH_{\roleP}}
\mpPar
\mpM
\hspace{-21em}
&
(k \in I)
\\[1mm]
\highlight{\inferrule{r-rcv-$\odot$}}
&
\highlight{\mpPart\roleP{\sum_{i\in I} \procin\roleQ{\mpLab_i(\mpx_i)}\mpP_i}
\mpPar
\mpPart\roleP{\mpH[\roleP]}
\mpPar
\mpPart\roleQ\mpCrash
\mpPar
\mpPart\roleQ\mpQUnavail
\mpPar
\mpM}
\\
&
\highlight{\redRecv{\roleP}{\roleQ}{\mpLab_k}\;
\mpPart\roleP \mpP_k
\mpPar
\mpPart\roleP{\mpH_{\roleP}}
\mpPar
\mpPart\roleQ\mpCrash
\mpPar
\mpPart\roleQ\mpQUnavail
\mpPar
\mpM}
&
\hspace{-28em}
\highlight{(k \in I, \mpLab[k] = \mpCrashLab, \nexists \mpLab, \mpV: (\roleQ,
\mpLab(\mpV)) \in \mpH[\roleP])}
\\[1mm]
\inferrule{r-cond-T} &
\mpPart\roleP{\mpIf{\mpE}{\mpP}{\mpQ}}
\mpPar
\mpPart\roleP\mpH
\mpPar
\mpM
\;\redIf{\roleP}\;
\mpPart\roleP\mpP
\mpPar
\mpPart\roleP\mpH
\mpPar
\mpM
&
\hspace{-28em}
(\eval{\mpE}{\mpTrue})
\\[1mm]
\inferrule{r-cond-F} &
\mpPart\roleP{\mpIf{\mpE}{\mpP}{\mpQ}}
\mpPar
\mpPart\roleP\mpH
\mpPar
\mpM
\;\redIf{\roleP}\;
\mpPart\roleP\mpQ
\mpPar
\mpPart\roleP\mpH
\mpPar
\mpM
& 
\hspace{-28em}
(\eval{\mpE}{\mpFalse})
\\[1mm]
\inferrule{r-struct} &
\mpM_1\equiv \mpM_1'
\;\;\;\text{and}\;\;\;
\mpM_1'\mpMove \mpM_2'
\;\;\;\text{and}\;\;\;
\mpM_2'\equiv\mpM_2
\quad\implies\quad
\mpM_1\mpMove\mpM_2
\end{array}
\]
}
\caption{Reduction relation on sessions with crash-stop failures.}%
\label{fig:processes:reduction}
\end{figure}

\begin{example}
\label{ex:process_syntax_semantics}
Consider the session
{\small{$\mpM = \mpPart\roleP \mpP \mpPar  \mpPart\roleP \mpQEmpty
     \mpPar \mpPart \roleQ \mpQ \mpPar  \mpPart \roleQ \mpQEmpty$}},
where
{\small{
$ \mpP = \procout{\roleQ}{\mpLab}{\text{``\texttt{abc}''}}{\sum
\setenum{
  \begin{array}{@{}l@{}}
\procin{\roleQ}{\mpLabi(\mpx)}{\mpNil}
\\
\procin{\roleQ}{\mpCrashLab}{\mpNil}
\end{array}
}
}
$
}}
 and
{\small{
$
 \mpQ =
  \sum \setenum{
    \begin{array}{@{}l@{}}
 \procin{\roleP}{\mpLab(\mpx)}{
 \procout{\roleP}{\mpLabi}{42}{\mpNil}}
 \\
  \procin{\roleP}{\mpCrashLab}{\mpNil}
  \end{array}
  }.
  $
}}
In this session,  the process $\mpQ$ for $\roleQ$ receives a message sent from $\roleP$ to $\roleQ$;
the process $\mpP$ for $\roleP$ sends a message from $\roleP$ to $\roleQ$, and then receives a message
sent from $\roleQ$ to  $\roleP$.  %
Let each role be unreliable,
\ie $\rolesR = \emptyset$,  %
and
$\mpP$ crash before sending. %
We have
 {\small{
  $\mpM
    \redCrash{\roleP}{\emptyset}
     \mpPart\roleP{\mpCrash}
\mpPar
\mpPart\roleP{\mpQUnavail}
\mpPar
\mpPart \roleQ \mpQ \mpPar  \mpPart \roleQ \mpQEmpty
 \redSend{\roleP}{\roleQ}{\mpLab}
\mpPart\roleP{\mpCrash}
\mpPar
\mpPart\roleP{\mpQUnavail}
\mpPar
\mpPart \roleQ \mpNil
\mpPar
\mpPart\roleQ \mpQEmpty$
}}.
We observe that when the output process $\mpP$ located at an unreliable role
$\roleP$ crashes (by \inferrule{r-$\lightning$}),
the resulting entry for $\roleP$ is a crashed process ($\mpPart\roleP{\mpCrash}$) with an unavailable
queue ($\mpPart\roleP{\mpQUnavail}$).
Subsequently, the input process $\mpQ$ located at $\roleQ$ can detect and
handle the crash by \inferrule{r-rcv-$\odot$} via its $\mpCrashLab$ handling branch.
\end{example}

\section{Asynchronous Multiparty Session Types with Crash-Stop Semantics}\label{sec:gtype}\label{SEC:GTYPE}

In this section, we present our asynchronous multiparty session types
with crash-stop semantics.
We give an overview of global and local types with crashes in \Cref{sec:gtype:syntax},
including syntax, projection, subtyping, \etc;
our key additions to the classic theory are \emph{crash handling branches} in
both global and local types,
and a special local type $\stStop$ to denote crashed processes.
We give a Labelled Transition System (LTS) semantics to both global types
(\Cref{sec:gtype:lts-gt}) and configurations (\ie a collection of local types
and point-to-point communication queues, %
\Cref{sec:gtype:lts-context}%
).
We discuss alternative design options of modelling crash-stop
failures in \Cref{sec:gtype:alternative}.
We relate the two semantics in \Cref{sec:gtype:relating}, and show that
a configuration obtained via projection is safe, deadlock-free, and live in
\Cref{sec:gtype:pbp}.

\subsection{Global and Local Types with Crash-Stop Failures}\label{sec:gtype:syntax}

The top-down methodology begins with \emph{global types} to provide an
overview of
the communication between a number of \emph{roles}
($\roleP, \roleQ, \roleS, \roleT, \ldots$),
belonging to a (fixed) set $\roleSet$.
At the other end, we use \emph{local types} to describe how a \emph{single}
role communicates with other roles from a local perspective, and they are
obtained via \emph{projection} from a global type.
We give the syntax of both global and local types in \cref{fig:syntax-mpst},
which are similar to syntax used in~\cite{POPL19LessIsMore,CONCUR22MPSTCrash}.

\begin{figure}[t]%
{\small
  \[
    \begin{array}{r@{\quad}c@{\quad}l@{\quad}l}
      \tyGround & \bnfdef & \tyInt \bnfsep \tyBool \bnfsep \tyReal \bnfsep \tyUnit \bnfsep \ldots
        & \text{\footnotesize Basic types} \\
      \gtG & \bnfdef &
        \gtComm{\roleP}{\roleFmt{q^{\runtime{\roleMaybeCrashedSym}}}}{i \in
        I}{\gtLab[i]}{\tyGround[i]}{\gtG[i]}
        &
        {\footnotesize\text{Transmission}} \\
        & \bnfsep &
        \runtime{
          \gtCommTransit{\rolePMaybeCrashed}{\roleQ}{i \in
          I}{\gtLab[i]}{\tyGround[i]}{\gtG[i]}{j}
        }~(j \in I)
        &
        \runtime{\footnotesize\text{Transmission en route}} \\
        & \bnfsep & \gtRec{\gtRecVar}{\gtG} \quad \bnfsep \quad \gtRecVar \quad
        \bnfsep \quad \gtEnd &
        \text{\footnotesize Recursion, Type variable, Termination} \\
      \runtime{\roleMaybeCrashedSym} & \bnfdef & \cdot \quad
      \bnfsep \quad \runtime{\roleCrashedSym} & \runtime{\text{\footnotesize Crash
      annotation}}
        \\[1ex]
      \stS, \stT
        & \bnfdef & \stExtSum{\roleP}{i \in I}{\stChoice{\stLab[i]}{\tyGround[i]} \stSeq \stT[i]} 
        \ \bnfsep \stIntSum{\roleP}{i \in I}{\stChoice{\stLab[i]}{\tyGround[i]} \stSeq \stT[i]} 
          & \text{\footnotesize External choice, Internal choice} \\
          & \bnfsep & \stRec{\stRecVar}{\stT} \ \bnfsep \ \stRecVar \ 
           \bnfsep  \stEnd \ \bnfsep \ \runtime{\stStop} 
          &\text{\footnotesize Recursion, Type variable, Termination, $\runtime{\text{Crash}}$}
    \end{array}
  \]
  }
  \caption{Syntax of global types and local types. Runtime types are
  $\runtime{\text{shaded}}$.}
  \label{fig:syntax-global-type}%
  \label{fig:syntax-local-type}%
  \label{fig:syntax-mpst}
\end{figure}

\subparagraph*{Global Types} %
are ranged over $\gtG, \gtGi, \gtG[i], \ldots$,
and describe the behaviour for all roles from a bird's eye view.
The syntax shown in  $\runtime{\text{shade}}$ are \emph{runtime} syntax, which are not used for
describing a system at design-time, but for describing the state of a system
during execution.
The labels $\gtLab$ are taken from a fixed set of all labels $\labSet$, and
basic types $\tyGround$ (types for payloads) from a fixed set of all basic
types $\tyGroundSet$.

We explain each construct in the syntax of global types:
a transmission, denoted 
$\gtComm{\roleP}{\roleQMaybeCrashed}{i \in I}{\gtLab[i]}{\tyGround[i]}{\gtG[i]}$, 
represents a message from role $\roleP$ to role $\roleQ$ (with possible crash
annotations), with labels $\gtLab[i]$, payload types $\tyGround[i]$,
and continuations $\gtG[i]$, where $i$ is taken from an index set $I$.
We require that the index set be non-empty ($I \neq \emptyset$), labels
$\gtLab[i]$ be pair-wise distinct, and self receptions be excluded (\ie
$\roleP \neq \roleQ$), as standard in session type works.
Additionally, we require that the special $\gtCrashLab$ label (explained later)
not be the only label in a transmission, \ie $\setcomp{\gtLab[i]}{i \in I} \neq
\setenum{\gtCrashLab}$.
A transmission en route
$\gtCommTransit{\rolePMaybeCrashed}{\roleQ}{i \in
I}{\gtLab[i]}{\tyGround[i]}{\gtG[i]}{j}$
is a runtime construct representing a message $\gtLab[j]$ (index $j$) sent by
$\roleP$, and yet to be received by $\roleQ$.
Recursive types are represented via $\gtRec{\gtRecVar}{G}$ and $\gtRecVar$,
where contractive requirements apply~\cite[\S 21.8]{PierceTAPL}.
The type $\gtEnd$ describes a terminated type (omitted where
unambiguous).

To model crashes and crash handling, we use crash annotations $\roleCrashedSym$
and crash handling branches:
a \emph{crash annotation} $\roleCrashedSym$, a new addition in this work, marks
a \emph{crashed} role (only used in the \emph{runtime syntax}), and we omit
annotations for live roles, \ie $\roleP$ is a live role, $\rolePCrashed$ is a
crashed role, and $\rolePMaybeCrashed$ represents a possibly crashed role,
namely either $\roleP$ or $\rolePCrashed$.
We use a special label $\gtCrashLab$ for handling crashes: this
continuation denotes the protocol to follow when the sender of a
message is detected to have crashed by the receiver.
The special label acts as a `pseudo-message': when a sender role crashes, the
receiver can select the pseudo-message to enter crash handling.
We write $\gtRoles{\gtG}$ (resp.~$\gtRolesCrashed{\gtG}$) for the set of
\emph{active} (resp.~\emph{crashed}) roles in a global
type $\gtG$, \emph{excluding} (resp.~consisting \emph{only} of) those with a crash
annotation $\roleCrashedSym$.

\subparagraph*{Local Types}%
are ranged over $\stS, \stT, \stU,
\ldots$, and describe the behaviour of a single role.
An internal choice (selection) (resp.~an external choice (branching)), denoted 
$\stIntSum{\roleP}{i \in I}{\stChoice{\stLab[i]}{\tyGround[i]} \stSeq \stT[i]}$ 
(resp.~$\stExtSum{\roleP}{i \in I}{\stChoice{\stLab[i]}{\tyGround[i]} \stSeq \stT[i]}$%
)
indicates that the \emph{current} role is to \emph{send} to (resp.~%
\emph{receive} from) the role $\roleP$.
Similarly to global types, we require pairwise-distinct,
non-empty labels. %
Moreover, we require that the $\stCrashLab$ label not appear in \emph{internal}
choices, reflecting that a $\stCrashLab$ pseudo-message can never be
sent; and that singleton $\stCrashLab$ labels not permitted in external choices.
The type $\stEnd$ indicates a \emph{successful} termination (omitted where
unambiguous), and recursive types follow a similar fashion to global types.
We use a new \emph{runtime} type $\stStop$ to denote crashes.
\subparagraph*{Subtyping}
relation $\stSub$ on local types will be used in~\cref{sec:gtype:relating} 
to relate global and local type semantics.  
Our subtyping relation is mostly standard~\cite[Def.\@ 2.5]{POPL19LessIsMore},
except for an extra rule for $\stStop$ and additional requirements to support 
crash handling branch in external choices. 
\iftoggle{full}{The definition of $\stSub$ can be found in~\cref{sec:app:definitions}.}{} 
\subparagraph*{Projection}%
gives the local type of a participating role in a global
type,
defined as a \emph{partial} function that takes a global type
$\gtG$ and a
role $\roleP$, %
and returns a local type, given by
\cref{def:global-proj}.

\begin{definition}[Global Type Projection]%
  \label{def:global-proj}%
  \label{def:local-type-merge}%
  \label{def:removing-crash-label}%
  The \emph{projection of a global type $\gtG$ onto a role $\roleP$}, %
  with respect to a set of \emph{reliable} roles $\rolesR$,
  written \;$\gtProj[\rolesR]{\gtG}{\roleP}$,\; %
  is:

  \smallskip%
  \centerline{\(%
  \small%
  \begin{array}{c}
    \gtProj[\rolesR]{\left(%
      \gtCommSmall{\roleQ}{\roleRMaybeCrashed}
      {i \in I}{\gtLab[i]}{\tyGround[i]}{\gtG[i]}%
      \right)}{\roleP}%
    =\!%
    \left\{%
    \begin{array}{@{}l@{\hskip 2mm}r@{}}
      \stIntSum{\roleR}{\highlight{i \in \setcomp{j \in I}{\stFmt{\stLab[j]} \neq \stCrashLab}}}{ %
        \stChoice{\stLab[i]}{\tyGround[i]} \stSeq (\gtProj[\rolesR]{\gtG[i]}{\roleP})%
      }%
      \hspace{-1.5cm}
      &\text{\footnotesize%
        if\, $\roleP = \roleQ$%
      }%
      \\[1mm]%
      \stExtSum{\roleQ}{i \in I}{%
        \stChoice{\stLab[i]}{\tyGround[i]} \stSeq (\gtProj[\rolesR]{\gtG[i]}{\roleP})%
      }%
      &
      \begin{array}{@{}r@{}}
        \text{\footnotesize if \,} \roleP = \roleR,\text{\footnotesize
        \,and \,}
       \highlight{\roleQ \notin \rolesR \text{\footnotesize \, implies}}
        \\
        \highlight{\exists k \in I : \gtLab[k] = \gtCrashLab}
      \end{array}
      \\[2mm]%
      \stMerge{i \in I}{\gtProj[\rolesR]{\gtG[i]}{\roleP}}%
      &
      \text{\footnotesize%
        if \,} \roleP \neq \roleQ,\text{\footnotesize \,and \,} \roleP \neq \roleR%
    \end{array}
    \right.
    \\[8mm]%
     \gtProj[\rolesR]{\left(%
      \gtCommTransit{\roleQMaybeCrashed}{\roleR}
      {i \in I}{\gtLab[i]}{\tyGround[i]}{\gtG[i]}{j}%
      \right)}{\roleP}%
    =\!%
    \left\{%
    \begin{array}{@{}l@{\hskip 2mm}r@{}}
      \gtProj[\rolesR]{\gtG[j]}{\roleP}%
      &\text{\footnotesize%
        if\, $\roleP = \roleQ$%
      }%
      \\[1mm]%
      \stExtSum{\roleQ}{i \in I}{%
        \stChoice{\stLab[i]}{\tyGround[i]} \stSeq (\gtProj[\rolesR]{\gtG[i]}{\roleP})%
      }%
      &
      \begin{array}{@{}r@{}}
        \text{\footnotesize if \,} \roleP = \roleR,\text{\footnotesize
        \,and \,}
       \highlight{\roleQ \notin \rolesR \text{\footnotesize \, implies}}
        \\
       \highlight{\exists k \in I : \gtLab[k] = \gtCrashLab}
      \end{array}
      \\[2mm]%
     \stMerge{i \in I}{\gtProj[\rolesR]{\gtG[i]}{\roleP}}%
      &
      \text{\footnotesize%
        if \,} \roleP \neq \roleQ, \text{\footnotesize \,and \,} \roleP \neq \roleR%
    \end{array}
    \right.
    \\[8mm]%
    \gtProj[\rolesR]{(\gtRec{\gtRecVar}{\gtG})}{\roleP}%
    \;=\;%
    \left\{%
    \begin{array}{@{\hskip 0.5mm}l@{\hskip 5mm}l@{}}
      \stRec{\stRecVar}{(\gtProj[\rolesR]{\gtG}{\roleP})}%
      &%
      \text{\footnotesize%
        if\,
        $\roleP \in \gtG$ \,or\,
        $\fv{\gtRec{\gtRecVar}{\gtG}} \neq \emptyset$%
      }%
      \\%
      \stEnd%
      &%
      \text{\footnotesize%
        otherwise}
    \end{array}
    \right.%
    \quad\qquad%
    \begin{array}{@{}r@{\hskip 1mm}c@{\hskip 1mm}l@{}}
      \gtProj[\rolesR]{\gtRecVar}{\roleP}%
      &=&%
      \stRecVar%
      \\%
      \gtProj[\rolesR]{\gtEnd}{\roleP}%
      &=&%
      \stEnd%
    \end{array}
  \end{array}
  \)}%
  \smallskip%

  \noindent%
  where
  $\stMerge{}{}$ is %
  the \emph{merge operator for local types} %
  (\emph{full merging}):

  \smallskip%
  \centerline{\(%
  \small%
  \begin{array}{c}%
    \textstyle%
    \stExtSum{\roleP}{i \in I}{\stChoice{\stLab[i]}{\tyGround[i]} \stSeq \stSi[i]}%
    \!\stBinMerge\!%
    \stExtSum{\roleP}{\!j \in J}{\stChoice{\stLab[j]}{\tyGround[j]} \stSeq \stTi[j]}%
   \\
    \;=\;%
    \stExtSum{\roleP}{k \in I \cap J}{\stChoice{\stLab[k]}{\tyGround[k]} \stSeq%
      (\stSi[k] \!\stBinMerge\! \stTi[k])%
    }%
    \stExtC%
    \stExtSum{\roleP}{i \in I \setminus J}{\stChoice{\stLab[i]}{\tyGround[i]} \stSeq \stSi[i]}%
    \stExtC%
    \stExtSum{\roleP}{\!j \in J \setminus I}{\stChoice{\stLab[j]}{\tyGround[j]} \stSeq \stTi[j]}%
    \\[3mm]%
    \stIntSum{\roleP}{i \in I}{\stChoice{\stLab[i]}{\tyGround[i]} \stSeq \stSi[i]}%
    \,\stBinMerge\,%
    \stIntSum{\roleP}{i \in I}{\stChoice{\stLab[i]}{\tyGround[i]} \stSeq \stTi[i]}%
    \;=\;%
    \stIntSum{\roleP}{i \in I}{\stChoice{\stLab[i]}{\tyGround[i]} \stSeq (\stSi[i]
    \stBinMerge \stTi[i])}%
    \\[1mm]%
    \stRec{\stRecVar}{\stS} \,\stBinMerge\, \stRec{\stRecVar}{\stT}%
    \,=\,%
    \stRec{\stRecVar}{(\stS \stBinMerge \stT)}%
    \qquad%
    \stRecVar \,\stBinMerge\, \stRecVar%
    \,=\,%
    \stRecVar%
    \qquad%
    \stEnd \,\stBinMerge\, \stEnd%
    \,=\,%
    \stEnd%
  \end{array}
  \)}%
\end{definition}

We parameterise our theory on a (fixed) set of \emph{reliable} roles
$\rolesR$, \ie roles assumed to \emph{never crash}:
if $\rolesR = \emptyset$, every role is unreliable and
susceptible to crash;
if $\gtRoles{\gtG} \subseteq \rolesR$, every role in $\gtG$ is
reliable, and we simulate the results from the original MPST theory without
crashes.
We base our definition of projection on~\cite{POPL19LessIsMore},
but include more ($\highlight{\text{highlighted}}$) cases to account for reliable
roles, $\stCrashLab$ branches, and runtime global types.

When projecting a transmission from $\roleQ$ to $\roleR$, we remove the
$\stCrashLab$ label from the internal choice at $\roleQ$, reflecting our
model that a $\stCrashLab$ pseudo-message cannot be sent.
Dually, we require a $\stCrashLab$ label to be present in the external choice
at $\roleR$ -- unless the sender role $\roleQ$ is assumed to be reliable.
Our definition of projection enforces that transmissions, whenever
an unreliable role is the sender ($\roleQ \notin \rolesR$),
\emph{must include} a crash handling branch
($\exists k \in I: \gtLab[k] = \gtCrashLab$).
This requirement ensures that the receiving role $\roleR$ can \emph{always} handle
crashes whenever it happens, so that processes are not stuck when crashes occur.
We explain how these requirements help us achieve various properties by
projection
in \cref{sec:gtype:pbp}.
The rest of the rules %
are taken from the literature~\cite{POPL19LessIsMore, VanGlabbeekLICS2021},
without much modification.

\subsection{Crash-Stop Semantics of Global Types}
\label{sec:gtype:lts-gt}
We now give a Labelled Transition System (LTS) semantics to global types,
with crash-stop semantics.
To this end, we first introduce some
auxiliary definitions.
We define the transition labels in \Cref{def:mpst-env-reduction-label},
which are also used in the LTS semantics of configurations
(later in \Cref{sec:gtype:lts-context}).

\begin{definition}[Transition Labels]
  \label{def:mpst-env-reduction-label}%
  \label{def:mpst-label-subject}%
  Let $\stEnvAnnotGenericSym$ %
  be a transition label of the form:

\smallskip
\centerline{\(
  {\small{
  \begin{array}{rclllll}
  \stEnvAnnotGenericSym &\bnfdef&
    \stEnvInAnnotSmall{\roleP}{\roleQ}{\stChoice{\stLab}{\tyGround}}&
    (\text{$\roleP$ receives $\stChoice{\stLab}{\tyGround}$ from $\roleQ$})
    &\bnfsep&\stEnvOutAnnotSmall{\roleP}{\roleQ}{\stChoice{\stLab}{\tyGround}}&
    (\text{$\roleP$ sends $\stChoice{\stLab}{\tyGround}$ to $\roleQ$})
    \\
    &\bnfsep&\ltsCrash{\mpS}{\roleP}&%
    (\text{$\roleP$ crashes}) %
    &\bnfsep&\ltsCrDe{\mpS}{\roleP}{\roleQ}&%
    (\text{$\roleP$ detects the crash of $\roleQ$})
  \end{array}
  }}
\)}

\smallskip
\noindent
The subject of a transition label, written
\;$\ltsSubject{\stEnvAnnotGenericSym}$,\; is defined as:

\smallskip
\centerline{\(
{\small{
    \ltsSubject{\stEnvInAnnotSmall{\roleP}{\roleQ}{\stChoice{\stLab}{\tyGround}}}
    =
    \ltsSubject{\stEnvOutAnnotSmall{\roleP}{\roleQ}{\stChoice{\stLab}{\tyGround}}}
    =
    \ltsSubject{\ltsCrash{\mpS}{\roleP}}
    =
    \ltsSubject{\ltsCrDe{\mpS}{\roleP}{\roleQ}}
    =
    \roleP}}. 
\)}
\end{definition}

The labels
$\stEnvOutAnnotSmall{\roleP}{\roleQ}{\stChoice{\stLab}{\tyGround}}$
and
$\stEnvInAnnotSmall{\roleP}{\roleQ}{\stChoice{\stLab}{\tyGround}}$
describe sending and receiving actions respectively.
The crash of $\roleP$ is denoted by the %
label
$\ltsCrashSmall{\mpS}{\roleP}$, and
the detection of a crash by %
label $\ltsCrDe{\mpS}{\roleP}{\roleQ}$:
we model crash detection at \emph{reception}, the label
contains a \emph{detecting} role $\roleP$ and a \emph{crashed} role $\roleQ$.

We define an operator to \emph{remove} a role from a global type in
\cref{def:gtype:remove-role}:
the intuition is to remove any interaction of a
crashed role from the given global type.
When a role has crashed, we attach a \emph{crashed annotation},
and remove infeasible actions, \eg when the sender and receiver of a
transmission have both crashed.
The removal operator is a partial function that takes a global type $\gtG$ and a
live role $\roleR$ ($\roleR \in \gtRoles{\gtG}$) and gives a global type
$\gtCrashRole{\gtG}{\roleR}$.

\begin{definition}[Role Removal]%
  \label{def:gtype:remove-role}
  The removal of a live role $\roleP$ in a global type
  $\gtG$, written \;$\gtCrashRole{\gtG}{\roleP}$,\; is defined as follows:
\newlength{\gtRemovalLHSWidth}
\settowidth{\gtRemovalLHSWidth}{$
\gtCommTransit{\rolePCrashed}{\roleQ}{i \in I}{\gtLab[i]}{\tyGround[i]}{
(\gtCrashRole{\gtG[i]}{\roleR})}{j}
$}
\[\small
  \begin{array}{@{}r@{~}c@{~}l@{}}
    \gtCrashRole{
      (\gtCommSmall{\roleP}{\roleQ}{i \in I}{\gtLab[i]}{\tyGround[i]}{\gtG[i]})
    }{
      \roleR
    }
    &=&
    \left\{
      \begin{array}{@{}ll@{}}
        \gtCommTransit{\rolePCrashed}{\roleQ}{i \in I}{\gtLab[i]}{\tyGround[i]}{
          (\gtCrashRole{\gtG[i]}{\roleR})}{j}
          &
        \text{if~} \roleP = \roleR \text{~and~} \exists j \in I: \gtLab[j] =
        \gtCrashLab \\
        \gtCommSmall{\roleP}{\roleQCrashed}{i \in I}{\gtLab[i]}{\tyGround[i]}{
          (\gtCrashRole{\gtG[i]}{\roleR})
        } &
        \text{if~} \roleQ = \roleR \\
        \makebox[\gtRemovalLHSWidth][l]{%
          \gtCommSmall{\roleP}{\roleQ}{i \in I}{\gtLab[i]}{\tyGround[i]}{
            (\gtCrashRole{\gtG[i]}{\roleR})
          }
        }
        &
        \text{if~} \roleP \neq \roleR \text{~and~} \roleQ \neq \roleR
      \end{array}
    \right.
    \\[5mm]
    \gtCrashRole{
      (\gtCommTransit{\roleP}{\roleQ}{i \in
      I}{\gtLab[i]}{\tyGround[i]}{\gtG[i]}{j})
    }{
      \roleR
    }
    &=&
    \left\{
      \begin{array}{@{}ll@{}}
        \gtCommTransit{\rolePCrashed}{\roleQ}{i \in I}{\gtLab[i]}{\tyGround[i]}{
          (\gtCrashRole{\gtG[i]}{\roleR})}{j}
          &
        \text{if~} \roleP = \roleR
        \\
        \makebox[\gtRemovalLHSWidth][l]{\gtCrashRole{\gtG[j]}{\roleR}}
          &
        \text{if~} \roleQ = \roleR
        \\
        \gtCommTransit{\roleP}{\roleQ}{i \in I}{\gtLab[i]}{\tyGround[i]}{
          (\gtCrashRole{\gtG[i]}{\roleR})}{j}
          &
        \text{if~} \roleP \neq \roleR \text{~and~} \roleQ \neq \roleR
      \end{array}
    \right.
    \\[5mm]
    \gtCrashRole{
      (\gtCommSmall{\roleP}{\roleQCrashed}{i \in I}{\gtLab[i]}{\tyGround[i]}{\gtG[i]})
    }{
      \roleR
    }
    &=&
    \left\{
      \begin{array}{@{}ll@{}}
        \makebox[\gtRemovalLHSWidth][l]{\gtCrashRole{\gtG[j]}{\roleR}}
        &
        \text{if~} \roleP = \roleR \text{~and~} \exists j \in I: \gtLab[j] = \gtCrashLab \\
        \gtCommSmall{\roleP}{\roleQCrashed}{i \in I}{\gtLab[i]}{\tyGround[i]}{
          (\gtCrashRole{\gtG[i]}{\roleR})
        } &
        \text{if~} \roleP \neq \roleR \text{~and~} \roleQ \neq \roleR
      \end{array}
    \right.
    \\[3mm]
    \gtCrashRole{
      (\gtCommTransit{\rolePCrashed}{\roleQ}{i \in
      I}{\gtLab[i]}{\tyGround[i]}{\gtG[i]}{j})
    }{
      \roleR
    }
    &=&
    \left\{
      \begin{array}{@{}ll@{}}
        \makebox[\gtRemovalLHSWidth][l]{\gtCrashRole{\gtG[j]}{\roleR}} &
        \text{if~} \roleQ = \roleR
        \\
        \gtCommTransit{\rolePCrashed}{\roleQ}{i \in I}{\gtLab[i]}{\tyGround[i]}{
          (\gtCrashRole{\gtG[i]}{\roleR})}{j}
          &
        \text{if~} \roleP \neq \roleR \text{~and~} \roleQ \neq \roleR
      \end{array}
    \right.
    \\[3mm]
    \gtCrashRole{
      (\gtRec{\gtRecVar}{\gtG})
    }{
      \roleR
    }
    &=&
    \left\{
      \begin{array}{@{}ll@{}}
        \makebox[\gtRemovalLHSWidth][l]{\gtRec{\gtRecVar}{(\gtCrashRole{\gtG}{\roleR})}} & \text{if~}
        \fv{\gtRec{\gtRecVar}{\gtG}} \neq \emptyset \text{~or~}
        \gtRoles{\gtCrashRole{\gtG}{\roleR}}
        \neq \emptyset \\
        \gtEnd &
        \text{otherwise}
      \end{array}
    \right.
    \\[3mm]
    \gtCrashRole{
      \gtRecVar
    }{
      \roleR
    }
    &=&
    \gtRecVar
    \hspace{2cm}
    \gtCrashRole{\gtEnd}{\roleR} = \gtEnd
  \end{array}
\]
\end{definition}

For simple cases, the removal of a role $\gtCrashRole{\gtG}{\roleR}$ attaches
crash annotations $\roleCrashedSym$ on all occurrences of the removed role
$\roleR$ throughout global type $\gtG$ inductively.

We draw attention to some interesting cases:
when we remove the sender role $\roleP$ from a transmission prefix
$\gtFmt{\roleP \!\to\! \roleQ}$,
the result is a `pseudo-transmission' en route prefix
$\gtFmt{\rolePCrashed \!\rightsquigarrow\! \roleQ : j}$
where $\gtLab[j] = \gtCrashLab$.
This enables the receiver $\roleQ$ to `receive'  the special $\gtCrashLab$
after the crash of $\roleP$, hence triggering the crash handling branch.
Recall that our definition of projection requires that a crash handling branch
be present whenever a crash may occur ($\roleQ \notin \rolesR$).

When we remove the sender role $\roleP$ from a transmission en route prefix
$\gtFmt{\roleP \!\rightsquigarrow\! \roleQ : j}$,
the result \emph{retains} the index $j$ that was selected by $\roleP$,
instead of the index associated with $\gtCrashLab$ handling.
This is crucial to our crash modelling: when a role crashes, the messages that
the role \emph{has sent} to other roles are still available.
We discuss alternative models later in \cref{sec:gtype:alternative}.

In other cases, where removing the role $\roleR$ would render a transmission
(regardless of being en route or not) meaningless, \eg both sender and receiver
have crashed, we simply remove the prefix entirely. %
\iftoggle{full}{An example of role removal in a global type 
can be found in~\cref{sec:app_examples}.}{}

We now give an LTS semantics to a
global type $\gtG$,
by defining the semantics with a tuple $\gtWithCrashedRoles{\rolesC}{\gtG}$,
where
$\rolesC$ is a set of \emph{crashed} roles. %
The transition system is parameterised by reliability assumptions, in the form
of a fixed set of reliable roles $\rolesR$.
When unambiguous, we write
$\gtG$ as an abbreviation of $\gtWithCrashedRoles{\rolesEmpty}{\gtG}$. 
We define the reduction rules of global types in \cref{def:gtype:lts-gt}.

\begin{figure}[t]
\[
{\small 
\begin{array}{c}
 \highlight{\inference[\iruleGtMoveCrash]{
    \roleP \notin \rolesR
      &
    \roleP \in \gtRoles{\gtG}
    &
    \gtG \neq \gtRec{\gtRecVar}{\gtGi}
  }{
    \gtWithCrashedRoles{\rolesC}{\gtG}
    \gtMove[\ltsCrashSmall{\mpS}{\roleP}]{\rolesR}
    \gtWithCrashedRoles{\rolesC \cup \setenum{\roleP}}{\gtCrashRole{\gtG}{\roleP}}
  }}
  \qquad

  \inference[\iruleGtMoveRec]{
    \gtWithCrashedRoles{\rolesC}{\gtG{}\subst{\gtRecVar}{\gtRec{\gtRecVar}{\gtG}}}
    \gtMove[\stEnvAnnotGenericSym]{\rolesR}
    \gtWithCrashedRoles{\rolesCi}{\gtGi}}
  {
    \gtWithCrashedRoles{\rolesC}{\gtRec{\gtRecVar}{\gtG}}
    \gtMove[\stEnvAnnotGenericSym]{\rolesR}
    \gtWithCrashedRoles{\rolesCi}{\gtGi}
  }
  \\[1mm]
\inference[\iruleGtMoveOut]{
    j \in I
    &
    \highlight{\gtLab[j] \neq \gtCrashLab}
  }{
    \gtWithCrashedRoles{\rolesC}{
      \gtCommSmall{\roleP}{\roleQ}{i \in I}{\gtLab[i]}{\tyGround[i]}{\gtGi[i]}
    }
    \gtMove[
      \stEnvOutAnnotSmall{\roleP}{\roleQ}{\stChoice{\gtLab[j]}{\tyGround[j]}}
    ]{
      \rolesR
    }
    \gtWithCrashedRoles{\rolesC}{
      \gtCommTransit{\roleP}{\roleQ}{i \in I}{\gtLab[i]}{\tyGround[i]}{\gtGi[i]}{j}
    }
  }
  \\[1mm]
 \inference[\iruleGtMoveIn]{
    j \in I
    &
    \highlight{\gtLab[j] \neq \gtCrashLab}
  }{
    \gtWithCrashedRoles{\rolesC}{
      \gtCommTransit{\rolePMaybeCrashed}{\roleQ}{i \in I}{\gtLab[i]}{\tyGround[i]}{\gtGi[i]}{j}
    }
    \gtMove[
      \stEnvInAnnotSmall{\roleQ}{\roleP}{\stChoice{\gtLab[j]}{\tyGround[j]}}
    ]{
      \rolesR
    }
    \gtWithCrashedRoles{\rolesC}{\gtGi[j]}
  }
  \\[1mm]
\highlight{\inference[\iruleGtMoveCrDe]{
    j \in I
    &
    \gtLab[j] = \gtCrashLab
  }{
    \gtWithCrashedRoles{\rolesC}{
      \gtCommTransit{\rolePCrashed}{\roleQ}{i \in I}{\gtLab[i]}{\tyGround[i]}{\gtGi[i]}{j}
    }
    \gtMove[\ltsCrDe{\mpS}{\roleQ}{\roleP}]{\rolesR}
    \gtWithCrashedRoles{\rolesC}{\gtGi[j]}
  }}
  \\[1mm]
\highlight{\inference[\iruleGtMoveOrph]{
    j \in I
    &
    \gtLab[j] \neq \gtCrashLab
  }{
    \gtWithCrashedRoles{\rolesC}{\gtCommSmall{\roleP}{\roleQCrashed}{i \in
    I}{\gtLab[i]}{\tyGround[i]}{\gtGi[i]}}
    \gtMove[\stEnvOutAnnotSmall{\roleP}{\roleQ}{\stChoice{\gtLab[j]}{\tyGround[j]}}]{
      \rolesR
    }
    \gtWithCrashedRoles{\rolesC}{\gtGi[j]}
  }}
  \\[1mm]
 \inference[\iruleGtMoveCtx]{
    \forall i \in I :
    \gtWithCrashedRoles{\rolesC}{\gtGi[i]}
    \gtMove[\stEnvAnnotGenericSym]{\rolesR}
    \gtWithCrashedRoles{\rolesCi}{\gtGii[i]}
    &
    \ltsSubject{\stEnvAnnotGenericSym} \notin \setenum{\roleP, \roleQ}
  }{
    \gtWithCrashedRoles{\rolesC}{
      \gtCommSmall{\roleP}{\roleQMaybeCrashed}{i \in
      I}{\gtLab[i]}{\tyGround[i]}{\gtGi[i]}
    }
    \gtMove[\stEnvAnnotGenericSym]{\rolesR}
    \gtWithCrashedRoles{\rolesCi}{
      \gtCommSmall{\roleP}{\roleQMaybeCrashed}{i \in
      I}{\gtLab[i]}{\tyGround[i]}{\gtGii[i]}
    }
  }\\[1mm]
\inference[\iruleGtMoveCtxi]{
    \forall i \in I :
    \gtWithCrashedRoles{\rolesC}{\gtGi[i]}
    \gtMove[\stEnvAnnotGenericSym]{\rolesR}
    \gtWithCrashedRoles{\rolesCi}{\gtGii[i]}
    &
    \ltsSubject{\stEnvAnnotGenericSym} \neq \roleQ
  }{
    \gtWithCrashedRoles{\rolesC}{
      \gtCommTransit{\rolePMaybeCrashed}{\roleQ}{i \in
      I}{\gtLab[i]}{\tyGround[i]}{\gtGi[i]}{j}
    }
    \gtMove[\stEnvAnnotGenericSym]{\rolesR}
    \gtWithCrashedRoles{\rolesCi}{
      \gtCommTransit{\rolePMaybeCrashed}{\roleQ}{i \in
      I}{\gtLab[i]}{\tyGround[i]}{\gtGii[i]}{j}
    }
  }
\end{array}
}
\]
\caption{Global Type Reduction Rules}
\label{fig:gtype:red-rules}
\end{figure}

\begin{definition}[Global Type Reductions]
  \label{def:gtype:lts-gt}
  The global type (annotated with a set of crashed roles $\rolesC$)
  transition relation
  $\gtMove[\stEnvAnnotGenericSym]{\rolesR}$
  is inductively
  defined by the rules in \cref{fig:gtype:red-rules},
  parameterised by a fixed set $\rolesR$ of reliable roles.
  We write
  $\gtWithCrashedRoles{\rolesC}{\gtG} \gtMove{\rolesR}
  \gtWithCrashedRoles{\rolesCi}{\gtGi}$
  if there
  exists $\stEnvAnnotGenericSym$ such that
  $\gtWithCrashedRoles{\rolesC}{\gtG}
  \gtMove[\stEnvAnnotGenericSym]{\rolesR}
  \gtWithCrashedRoles{\rolesCi}{\gtGi}$;
  we write
  $\gtWithCrashedRoles{\rolesC}{\gtG} \gtMove{\rolesR}$
  if there
  exists $\rolesCi$, $\gtGi$, and $\stEnvAnnotGenericSym$ such that
  $\gtWithCrashedRoles{\rolesC}{\gtG}
  \gtMove[\stEnvAnnotGenericSym]{\rolesR}
  \gtWithCrashedRoles{\rolesCi}{\gtGi}$,
  and $\gtMoveStar[\rolesR]$ for the transitive and reflexive closure of
  $\gtMove{\rolesR}$.
\end{definition}

Rules \inferrule{\iruleGtMoveOut} and \inferrule{\iruleGtMoveIn} model sending
and receiving messages respectively, as are standard in existing
works~\cite{ICALP13CFSM}.
We add an ($\highlight{\text{highlighted}}$) extra condition that the message exchanged not be a
pseudo-message carrying the $\gtCrashLab$ label.
$\inferrule{\iruleGtMoveRec}$ is a standard rule handling recursion. %

We introduce ($\highlight{\text{highlighted}}$) rules to account for crash and consequential behaviour:
\begin{itemize}[leftmargin=*, nosep]
\item
Rule $\inferrule{\iruleGtMoveCrash}$ models crashes, where a live ($\roleP \in
\gtRoles{\gtG}$), but unreliable ($\roleP \notin \rolesR$) role $\roleP$ may crash.
The crashed role $\roleP$ is added into the set of crashed roles ($\rolesC \cup
\setenum{\roleP}$), and removed
from the global type, resulting in a global type $\gtCrashRole{\gtG}{\roleP}$.
\item
Rule $\inferrule{\iruleGtMoveCrDe}$ is for \emph{crash detection}, where a live
role $\roleQ$ may detect that $\roleP$ has crashed at reception,
and then continues with the crash handling continuation labelled $\gtCrashLab$.
This rule only applies when the message en route is a pseudo-message, since
otherwise a message rests in the queue %
of the receiver and can be received
despite the crash of the sender (\cf~\inferrule{\iruleGtMoveIn}).
\item
Rule $\inferrule{\iruleGtMoveOrph}$ models the orphaning of a message sent from a
live role $\roleP$ to a crashed role $\roleQ$.
Similar to the requirement in \inferrule{\iruleGtMoveOut}, we add the side
condition that the message sent is not a pseudo-message.
\end{itemize}

Finally, rules $\inferrule{\iruleGtMoveCtx}$ and $\inferrule{\iruleGtMoveCtxi}$
allow non-interfering reductions of (intermediate) global types
under prefix, provided that all of the continuations can be reduced by
that label.

\begin{remark}[Necessity of $\rolesC$ in Semantics]
  While we can obtain the set of crashed roles in any global type $\gtG$ via
  $\gtRolesCrashed{\gtG}$, we need a separate $\rolesC$ for bookkeeping
  purposes. To illustrate, let 
  $\gtG = \gtCommSingleErr{\roleP}{\roleQ}{\gtLab}{}{\gtEnd}{\gtEnd}$,
  we can have the following reductions:

\vspace{-0.2em}
\smallskip
 \centerline{\(
 {\small{
    \gtWithCrashedRoles{\rolesEmpty}{\gtG}
    \gtMove[\ltsCrash{\mpS}{\roleQ}]{\rolesEmpty}
    \gtWithCrashedRoles{\setenum{\roleQ}}{
      \gtCommSingleErr{\roleP}{\roleQCrashed}{\gtLab}{}{\gtEnd}{\gtEnd}
    }
    \gtMove[\stEnvOutAnnot{\roleP}{\roleQ}{\gtLab}{}]{\rolesEmpty}
    \gtWithCrashedRoles{\setenum{\roleQ}}{\gtEnd}
    }}
\)}

\vspace{-0.2em}
\smallskip
\noindent
  While we can deduce $\roleQ$ is a crashed role in the interim global type,
  the same information cannot be recovered from the final global type $\gtEnd$.
\end{remark}

\subsection{Crash-Stop Semantics of Configurations}\label{sec:gtype:lts-context}
After giving semantics to global types, %
we now give an
LTS semantics to \emph{configurations},
\ie a collection of local types and communication queues %
across roles.
We first give a definition of configurations
in \cref{def:mpst-env}, followed by their reduction rules in
\cref{def:mpst-env-reduction}.

\begin{definition}[Configurations]%
  \label{def:mpst-env}%
  \label{def:mpst-env-closed}%
  \label{def:mpst-env-comp}%
  \label{def:mpst-env-subtype}%
  A configuration is a tuple \;$\stEnv; \qEnv$,\; where
  $\stEnv$ is a \emph{typing context}, denoting a partial mapping
  from roles to local types,
  defined as:
    \(%
  \stEnv
  \,\coloncolonequals\,
  \stEnvEmpty
  \bnfsep
  \stEnv \stEnvComp \stEnvMap{\roleP}{\stT}
  \)%
  . %
  We write $\stEnvUpd{\stEnv}{\roleP}{\stT}$ for updates:
  $\stEnvApp{\stEnvUpd{\stEnv}{\roleP}{\stT}}{\roleP} = \stT$ and %
  $\stEnvApp{\stEnvUpd{\stEnv}{\roleP}{\stT}}{\roleQ} =
  \stEnvApp{\stEnv}{\roleQ}$ (where $\roleP \neq \roleQ$).

  A \emph{queue}, denoted $\stQ$, is either a (possibly empty) sequence of
  messages
  $\stM[1] \stFmt{\cdot} \stM[2] \stFmt{\cdot} \cdots \stFmt{\cdot}
  \stM[n]$,
  or unavailable $\stQUnavail$.
  We write $\stQEmpty$ for an empty queue, and
  $\stQCons{\stM}{\stQi}$
  for a non-empty queue with message $\stM$ at the beginning.
  A \emph{queue message} $\stM$ is of form
  $\stQMsg{\stLab}{\tyGround}$,
  denoting a message with label $\stLab$ and payload $\tyGround$.
  We sometimes omit $\tyGround$ when the payload is not of specific
  interest.

  We write $\qEnv$ to denote a \emph{queue environment}, a collection of
  peer-to-peer queues.
  A queue from $\roleP$ to $\roleQ$ at $\qEnv$ is denoted
  $\qApp{\qEnv}{\roleP}{\roleQ}$.
  We define updates $\stEnvUpd{\qEnv}{\roleP,
  \roleQ}{\stQ}$
  similarly.
  We write $\qEnv[\emptyset]$ for an \emph{empty} queue environment,
  where $\qApp{\qEnv[\emptyset]}{\roleP}{\roleQ} = \stQEmpty$ for any
  $\roleP$ and $\roleQ$ in the domain.

  We write $\stQCons{\stQi}{\stM}$ to append a message $\stM$ at the end
  of a queue $\stQi$:
  the message is appended to the sequence when $\stQi$ is available, or
  discarded when $\stQi$ is unavailable
  (\ie $\stQCons{\stQUnavail}{\stM} = \stQUnavail$).
  Additionally, we write
  $\stEnvUpd{\qEnv}{\cdot, \roleQ}{\stQUnavail}$
  for
  making all the queues to $\roleQ$ unavailable: \ie
   $\stEnvUpd{
    \stEnvUpd{
      \stEnvUpd{\qEnv}{\roleP[1], \roleQ}{\stQUnavail}
    }{\roleP[2], \roleQ}{\stQUnavail} \cdots
  }{\roleP[n], \roleQ}{\stQUnavail}$.
\end{definition}

We give an LTS semantics of configurations in \cref{def:mpst-env-reduction}.
Similar to that of global types, we model the semantics of configurations
in an asynchronous (\aka message passing) fashion, using a queue environment to
represent the communication queues %
among all roles.

\begin{figure}[t]
  \noindent
  \centerline{\(
  \small
  \begin{array}{@{}c@{}}
    \inference[\iruleTCtxOut]{%
      \stEnvApp{\stEnv}{%
        \roleP%
      } =
      \stIntSum{\roleQ}{i \in I}{\stChoice{\stLab[i]}{\tyGround[i]} \stSeq \stT[i]}%
      &
      k \in I%
    }{%
      \stEnv; \qEnv
      \,\stEnvMoveOutAnnot{\roleP}{\roleQ}{\stChoice{\stLab[k]}{\tyGround[k]}}\,%
      \stEnvUpd{\stEnv}{\roleP}{\stT[k]};
      \stEnvUpd{\qEnv}{\roleP, \roleQ}{
        \stQCons{
          \stEnvApp{\qEnv}{\roleP, \roleQ}
        }{
          \stQMsg{\stLab[k]}{\tyGround[k]}
        }
      }%
    }%
    \\[2mm]%
    \inference[\iruleTCtxIn]{%
      \stEnvApp{\stEnv}{%
        \roleP%
      } =
      \stExtSum{\roleQ}{i \in I}{\stChoice{\stLab[i]}{\tyGround[i]} \stSeq \stT[i]}%
      &
      k \in I%
      &
      \stLabi = \stLab[k]
      &
      \tyGroundi = \tyGround[k]
      &
      \stEnvApp{\qEnv}{\roleQ, \roleP}
      =
      \stQCons{\stQMsg{\stLabi}{\tyGroundi}}{\stQi}
      \neq
      \stQUnavail
    }{%
      \stEnv; \qEnv
      \,\stEnvMoveInAnnot{\roleP}{\roleQ}{\stChoice{\stLab[k]}{\tyGround[k]}}\,%
      \stEnvUpd{\stEnv}{\roleP}{\stT[k]};
      \stEnvUpd{\qEnv}{\roleQ, \roleP}{\stQi}
    }%
    \\[2mm]%
    \inference[\iruleTCtxRec]{%
      \stEnvApp{\stEnv}{\roleP} = \stRec{\stRecVar}{\stT}%
      &
      \stEnvUpd{\stEnv}{\roleP}{
        \stT\subst{\stRecVar}{\stRec{\stRecVar}{\stT}}%
      }; \qEnv%
      \stEnvMoveGenAnnot
      \stEnvi; \qEnvi%
    }{%
      \stEnv; \qEnv
      \stEnvMoveGenAnnot
      \stEnvi; \qEnvi%
    }%
    \qquad
    \highlight{%
    \inference[\iruleTCtxCrash]{%
      \stEnvApp{\stEnv}{\roleP} \neq \stEnd
      &
      \stEnvApp{\stEnv}{\roleP} \neq \stStop
    }{
      \stEnv; \qEnv
      \stEnvMoveAnnot{\ltsCrash{\mpS}{\roleP}}
      \stEnvUpd{\stEnv}{\roleP}{\stStop};
      \stEnvUpd{\qEnv}{\cdot, \roleP}{\stQUnavail}
    }
    }%
    \\
    \highlight{%
    \inference[\iruleTCtxCrashDetect]{%
      \stEnvApp{\stEnv}{\roleQ} =
      \stExtSum{\roleP}{i \in I}{\stChoice{\stLab[i]}{\tyGround[i]} \stSeq \stT[i]}%
      &
      \stEnvApp{\stEnv}{\roleP} = \stStop
      &
      k \in I
      &
      \stLab[k] = \stCrashLab
      &
      \stEnvApp{\qEnv}{\roleP, \roleQ} = \stQEmpty
    }{%
      \stEnv; \qEnv
      \,\stEnvMoveAnnot{\ltsCrDe{\mpS}{\roleQ}{\roleP}}\,%
      \stEnvUpd{\stEnv}{\roleQ}{\stT[k]}; \qEnv
    }%
    }%
  \end{array}
  \)}%
  \caption{Configuration Semantics}
  \label{fig:gtype:tc-red-rules}
\end{figure}

\begin{definition}[Configuration Semantics]%
  \label{def:mpst-env-reduction}%
  The \emph{configuration transition relation $\stEnvMoveGenAnnot$} %
  is defined in \cref{fig:gtype:tc-red-rules}.
  We write $\stEnvMoveGenAnnotP{\stEnv; \qEnv}$ %
  iff  $\stEnv; \qEnv \!\stEnvMoveGenAnnot\! \stEnvi; \qEnvi$ for some $\stEnvi$
  and $\qEnvi$. %
  We define two \emph{reductions} $\stEnvMove$ and
  $\stEnvMoveMaybeCrash[\rolesR]$ (where $\rolesR$ is a fixed set of
  reliable roles)
  as follows:
   \begin{itemize}[left=0pt, topsep=0pt]
   \item We write $\stEnv; \qEnv \!\stEnvMove\! \stEnvi; \qEnvi$
     for
     $\stEnv; \qEnv \stEnvMoveGenAnnot \stEnvi; \qEnvi$
     with
     \(\stEnvAnnotGenericSym \!\in\! \setenum{\stEnvInAnnotSmall{\roleP}{\roleQ}{\stChoice{\stLab}{\tyGround}},
     \stEnvOutAnnotSmall{\roleP}{\roleQ}{\stChoice{\stLab}{\tyGround}},
     \ltsCrDe{\mpS}{\roleP}{\roleQ}}\).
     We write
     $\stEnvMoveP{\stEnv; \qEnv}$
     iff
     $\stEnv; \qEnv \!\stEnvMove\! \stEnvi; \qEnvi$
     for some
     $\stEnvi; \qEnvi$,  %
     and
     $\stEnvNotMoveP{\stEnv; \qEnv}$ for its negation,  %
     and $\stEnvMoveStar$ %
     for the reflexive and transitive closure of $\stEnvMove$;

   \item We write
    $\stEnv; \qEnv \!\stEnvMoveMaybeCrash[\rolesR]\! \stEnvi; \qEnvi$
    for
    $\stEnv; \qEnv \stEnvMoveGenAnnot \stEnvi; \qEnvi$
    with
    \(
    \stEnvAnnotGenericSym \!\notin\! \setcomp{ \ltsCrash{\mpS}{\roleR} }{
      \roleR \!\in\!  \rolesR}
    \).
    We write \;$\stEnvMoveMaybeCrashP[\rolesR]{\stEnv; \qEnv}$ %
    iff
    $\stEnv; \qEnv \!\stEnvMoveMaybeCrash[\rolesR]\! \stEnvi; \qEnvi$
    for some
    $\stEnvi; \qEnvi$,
    and
    $\stEnvNotMoveMaybeCrashP[\rolesR]{\stEnv; \qEnvi}$
    for its negation.   %
    We define
    $\stEnvMoveMaybeCrashStar[\rolesR]$
    as the reflexive and transitive closure of $\stEnvMoveMaybeCrash[\rolesR]$.
 \end{itemize}
\end{definition}

We first explain the standard rules:
rule $\inferrule{\iruleTCtxOut}$ (resp.~$\inferrule{\iruleTCtxIn}$)
says that a role can perform an output (resp.~input) transition by appending
(resp.~consuming) a message at the corresponding queue.
Recall that whenever a queue is unavailable, the resulting queue remains
unavailable after appending ($\stQCons{\stQUnavail}{\stM} = \stQUnavail$).
Therefore, the rule $\inferrule{\iruleTCtxOut}$ covers delivery to both crashed
and live roles, whereas two separate rules are used in modelling global type
semantics (\inferrule{\iruleGtMoveOut} and \inferrule{\iruleGtMoveOrph}).
We also include a standard rule $\inferrule{\iruleTCtxRec}$ for recursive types.

The key innovations are the ($\highlight{\text{highlighted}}$) rules modelling crashes and crash
detection:
by rule~\inferrule{\iruleTCtxCrash}, a role $\roleP$ may crash
and become $\stStop$ at any time
(unless it is already $\stEnd$ed or $\stStop$ped).
All of $\roleP$'s receiving queues become unavailable $\stQUnavail$, so that
future messages to $\roleP$ would be discarded. %
Rule \inferrule{\iruleTCtxCrashDetect} models crash detection and handling:
if ${\roleP}$ is crashed and stopped,
another role ${\roleQ}$ attempting to receive from $\roleP$
can then take its $\stCrashLab$ handling branch. %
However, this rule only applies when the corresponding queue is empty: it is
still possible to receive messages sent before crashing via
\inferrule{\iruleTCtxIn}.

\subsection{Alternative Modellings for Crash-Stop Failures}\label{sec:gtype:alternative}
Before we dive into the relation between two semantics, let us have a short
digression to discuss our modelling choices and alternatives.
In this work, we mostly follow the assumptions laid out in
\cite{CONCUR22MPSTCrash},
where a crash is detected at reception. However, they opt to use a synchronous
(rendez-vous) semantics, whereas we give an asynchronous (message passing)
semantics, which entails interesting scenarios that would not arise in
a synchronous semantics.

Specifically,
consider the case where a role $\roleP$ sends a message to $\roleQ$, and
then $\roleP$ crashes after sending, but before $\roleQ$ receives the message.
The situation does not arise under a synchronous semantics, since sending and
receiving actions are combined into a single transmission action.

Intuitively, there are two possibilities to handle this scenario.
The questions are
whether the message sent immediately before crashing is deliverable to $\roleQ$,
and consequentially,
at what time does $\roleQ$ detect the crash of $\roleP$.

In our semantics (\cref{fig:gtype:red-rules,fig:gtype:tc-red-rules}), we
opt to answer the first question in positive:
we argue that this model is more consistent with our `passive' crash detection
design.
For example, if a role $\roleP$ never receives from another role $\roleQ$, then
$\roleP$ does not need to react in the event of $\roleQ$'s crash.
Following a similar line of reasoning,
if the message sent by $\roleP$ arrives in the receiving queue of
$\roleQ$,
then $\roleQ$ should be able to receive the message, without triggering
a crash detection (although it may be triggered later).
As a consequence, we require in \inferrule{\iruleTCtxCrashDetect} that the
queue $\stEnvApp{\qEnv}{\roleP, \roleQ}$ be empty, to reflect the idea that
crash detection should be a `last resort'.

For an alternative model,
we can opt to detect the crash after crash has occurred.
This is possibly better modelled with using outgoing queues
(\cf\cite{FSTTCS15MPSTExpressiveness}),
instead of incoming queues in the semantics presented.
Practically, this may be the scenario that a TCP connection is closed (or
reset) when a peer has crashed,
and the content in the queue %
is lost.
 It is worth noting that this kind of alternative model
will not affect our main theoretical results: the operational correspondence
between global and local type semantics, and furthermore, global type properties guaranteed by projection.

\subsection{Relating Global Type and Configuration Semantics}%
\label{sec:gtype:relating}
We have given LTS semantics for both global types
(\cref{def:gtype:lts-gt}) and configurations (\cref{def:mpst-env-reduction}),
we now relate these two semantics with the help of the projection operator
$\gtProj[]{}{}$
(\cref{def:global-proj}).

We associate configurations $\stEnv; \qEnv$ with global types $\gtG$ (as
annotated with a set of crashed roles $\rolesC$) by projection, written
$\stEnvAssoc{\gtWithCrashedRoles{\rolesC}{\gtG}}{\stEnv; \qEnv}{\rolesR}$.
Naturally, there are two components of the association: \emph{(1)} the local
types in $\stEnv$ need to correspond to the projections of the global type
$\gtG$ and the set of crashed roles $\rolesC$; and \emph{(2)} the queues in
$\qEnv$ corresponds to the transmissions en route in the global type $\gtG$ and
also the set of crashed roles $\rolesC$.

\begin{definition}[Association of Global Types and Configurations]%
  \label{def:assoc}
  \label{def:assoc-queue}
  A configuration $\stEnv; \qEnv$ is associated with a
  (well-annotated \wrt $\rolesR$) global type
  $\gtWithCrashedRoles{\rolesC}{\gtG}$,
  written
  $\stEnvAssoc{\gtWithCrashedRoles{\rolesC}{\gtG}}{\stEnv; \qEnv}{\rolesR}$,
  iff
  \begin{enumerate}[leftmargin=*, nosep]
    \item
      \label{item:assoc:stenv}
      $\stEnv$ can be split into disjoint (possibly empty) sub-contexts
      $\stEnv = \stEnv[\gtG] \stEnvComp \stEnv[\stStopSym] \stEnvComp
      \stEnv[\stEnd]$ where:
      \begin{enumerate}[label={(A\arabic*)}, leftmargin=*, ref={(A\arabic*)},
        nosep]
        \item\label{item:assoc:alive-sub}
          $\stEnv[\gtG]$\; contains projections of $\gtG$:
          $ \dom{\stEnv[\gtG]}
            =
            \gtRoles{\gtG}
          $, and
          $ \forall \roleP \in \dom{\stEnv[\gtG]}:
            \stEnvApp{\stEnv}{{\roleP}}
            \stSub
            \gtProj[\rolesR]{\gtG}{\roleP}
          $;
        \item\label{item:assoc:crash-stop}
          $\stEnv[\stStopSym]$ contains crashed roles:
          $ \dom{\stEnv[\stStopSym]} =
            \rolesC$,
            and
            $ \forall \roleP \in \dom{\stEnv[\stStopSym]}:
            \stEnvApp{\stEnv}{{\roleP}} = \stStop$;
        \item\label{item:assoc:end}
          $\stEnv[\stEnd]$ contains only $\stEnd$ endpoints:
          $\forall \roleP \in \stEnv[\stEnd]: \stEnvApp{\stEnv}{\roleP}
          = \stEnd$.
        \end{enumerate}
    \item
        \label{item:assoc:qenv}
        \begin{enumerate}[label={(A\arabic*)}, leftmargin=*, ref={(A\arabic*)},
          resume, nosep]
        \item\label{item:assoc:queue}
          $\qEnv$ is associated with global type
          $\gtWithCrashedRoles{\rolesC}{\gtG}$, given as follows:
          \begin{enumerate}[leftmargin=0pt, nosep]
            \item
              Receiving queues for a role is unavailable if and only if
              it has crashed:
              $\forall \roleQ : \roleQ \in \rolesC \iff
              \stEnvApp{\qEnv}{\cdot, \roleQ} = \stQUnavail$;
            \item
              If \;$\gtG = \gtEnd$\; or \;$\gtG = \gtRec{\gtRecVar}{\gtGi}$,\;
              then queues between all roles are empty (except receiving queue 
              for crashed roles):
              \;$\forall \roleP, \roleQ: \roleQ \notin \rolesC \implies
              \stEnvApp{\qEnv}{\roleP, \roleQ} = \stQEmpty$;
            \item
              If
              $\gtG = \gtComm{\roleP}{\roleQMaybeCrashed}{i \in I}{\gtLab[i]}{\tyGround[i]}{\gtGi[i]}$,
              or
              $\gtG =
              \gtCommTransit{\rolePMaybeCrashed}{\roleQ}{i \in I}{\gtLab[i]}{\tyGround[i]}{\gtGi[i]}{j}
              $  with $\gtLab[j] = \gtCrashLab$
              (\ie a pseudo-message is en route),
              then
              \begin{enumerate*}[label=\emph{(\roman*)}]
                \item
                  if $\roleQ$ is live, then the queue from $\roleP$ to $\roleQ$ is
                  empty:
                  $\roleQMaybeCrashed \neq \roleQCrashed \implies
                  \stEnvApp{\qEnv}{\roleP, \roleQ} = \stQEmpty$, and
                \item $\forall i \in I: \qEnv$  is associated with
                  $\gtWithCrashedRoles{\rolesC}{\gtGi[i]}$;
              \end{enumerate*}
            \item
              If
               $\gtG =
              \gtCommTransit{\rolePMaybeCrashed}{\roleQ}{i \in I}{\gtLab[i]}{\tyGround[i]}{\gtGi[i]}{j}
              $  with  $\gtLab[j] \neq \gtCrashLab$,
              then
              \begin{enumerate*}[label=\emph{(\roman*)}]
                \item
                  the queue from $\roleP$ to $\roleQ$ begins with the message
                  $\stQMsg{\gtLab[j]}{\tyGround[j]}$:
                  $\stEnvApp{\qEnv}{\roleP, \roleQ} =
                  \stQCons{\stQMsg{\gtLab[j]}{\tyGround[j]}}{\stQ}$;
                \item
                  $\forall i \in I:$ removing the message from the head of the
                  queue,
                  $\stEnvUpd{\qEnv}{\roleP, \roleQ}{\stQ}$ is associated with
                  $\gtWithCrashedRoles{\rolesC}{\gtGi[i]}$.
              \end{enumerate*}
        \end{enumerate}
      \end{enumerate}
  \end{enumerate}
  We write $\stEnvAssoc{\gtG}{\stEnv}{\rolesR}$ as an abbreviation of
  $\stEnvAssoc{\gtWithCrashedRoles{\emptyset}{\gtG}}{\stEnv; \qEnv[\emptyset]}{\rolesR}$.
  We sometimes say
  $\stEnv$ (resp.~$\qEnv$) is associated with
  $\gtWithCrashedRoles{\rolesC}{\gtG}$ for stating \cref{item:assoc:stenv}
  (resp.~\cref{item:assoc:qenv}) is
  satisfied.
\end{definition}

We demonstrate %
the relation between the two semantics via association, by showing two
main theorems:
all possible reductions of a configuration have a corresponding action in
reductions of the associated global type (\cref{thm:gtype:proj-comp});
and the reducibility of a global type is the same as its associated
configuration (\cref{thm:gtype:proj-sound}). %

\begin{restatable}[Completeness of Association]{theorem}{thmProjCompleteness}%
  \label{thm:gtype:proj-comp}
  Given associated global type $\gtG$ and configuration $\stEnv; \qEnv$:
  $\stEnvAssoc{\gtWithCrashedRoles{\rolesC}{\gtG}}{\stEnv; \qEnv}{\rolesR}$.
  If $\stEnv; \qEnv \stEnvMoveGenAnnot \stEnvi; \qEnvi$,
  where $\stEnvAnnotGenericSym \neq \ltsCrash{\mpS}{\roleP}$
  for all $\roleP \in \rolesR$,
  then there exists $\gtWithCrashedRoles{\rolesCi}{\gtGi}$ such that
  $\stEnvAssoc{\gtWithCrashedRoles{\rolesCi}{\gtGi}}{\stEnvi;
  \qEnvi}{\rolesR}$
  and
  $\gtWithCrashedRoles{\rolesC}{\gtG} \gtMove[\stEnvAnnotGenericSym]{\rolesR}
    \gtWithCrashedRoles{\rolesCi}{\gtGi}$.
\end{restatable}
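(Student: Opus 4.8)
The plan is to proceed by a case analysis on the configuration transition $\stEnv; \qEnv \stEnvMoveGenAnnot \stEnvi; \qEnvi$, according to which of the rules of \Cref{fig:gtype:tc-red-rules} was applied (equivalently, according to the shape of the label $\stEnvAnnotGenericSym$), combined with an induction on the structure of $\gtG$. Recursion is dispatched uniformly at the head of each case: if $\gtG = \gtRec{\gtRecVar}{\gtGi}$, the configuration step uses \inferrule{\iruleTCtxRec}, so I would unfold on the global side and close the case with \inferrule{\iruleGtMoveRec}, appealing to a standard lemma that association is stable under unfolding. In every other case I would first use the association hypothesis $\stEnvAssoc{\gtWithCrashedRoles{\rolesC}{\gtG}}{\stEnv; \qEnv}{\rolesR}$ --- in particular clause \ref{item:assoc:alive-sub}, which relates $\stEnvApp{\stEnv}{\roleP}$ to $\gtProj[\rolesR]{\gtG}{\roleP}$ up to $\stSub$, and the queue clauses under \ref{item:assoc:queue} --- to constrain the shape of $\gtG$, then exhibit the matching global transition $\gtWithCrashedRoles{\rolesC}{\gtG} \gtMove[\stEnvAnnotGenericSym]{\rolesR} \gtWithCrashedRoles{\rolesCi}{\gtGi}$, and finally re-establish the association for $\stEnvi; \qEnvi$.

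For an output label $\stEnvOutAnnotSmall{\roleP}{\roleQ}{\stChoice{\stLab}{\tyGround}}$ (rule \inferrule{\iruleTCtxOut}), $\stEnvApp{\stEnv}{\roleP}$ is a selection towards $\roleQ$; since it is a $\stSub$-subtype of the projected selection, its chosen branch is also offered by $\gtProj[\rolesR]{\gtG}{\roleP}$. I would then descend through $\gtG$: if $\roleP$ is the top-level sender, the send is justified by \inferrule{\iruleGtMoveOut} (if the receiver is live) or \inferrule{\iruleGtMoveOrph} (if, as dictated by the queue clauses, it has crashed); otherwise $\roleP$'s projection is obtained by merging the projections of all continuations, so the induction hypothesis applies to each continuation and the reduction is lifted under the prefix by \inferrule{\iruleGtMoveCtx}/\inferrule{\iruleGtMoveCtxi}. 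The input case $\stEnvInAnnotSmall{\roleP}{\roleQ}{\stChoice{\stLab}{\tyGround}}$ (rule \inferrule{\iruleTCtxIn}) is dual: the head message of $\qApp{\qEnv}{\roleQ}{\roleP}$ is, by the queue association, exactly the message en route in $\gtG$, so \inferrule{\iruleGtMoveIn} (again lifted by the context rules where necessary) provides the match.

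The two crash-related labels are where the genuinely new reasoning lies. For crash detection $\ltsCrDe{\mpS}{\roleQ}{\roleP}$ (rule \inferrule{\iruleTCtxCrashDetect}), the premises force $\stEnvApp{\stEnv}{\roleP} = \stStop$ and $\qApp{\qEnv}{\roleP}{\roleQ}$ empty, which by association means a pseudo-message carrying $\gtCrashLab$ is en route from $\rolePCrashed$ to $\roleQ$ in $\gtG$; \inferrule{\iruleGtMoveCrDe} then supplies the matching move. For the crash label $\ltsCrash{\mpS}{\roleP}$ (rule \inferrule{\iruleTCtxCrash}), the premises $\stEnvApp{\stEnv}{\roleP} \neq \stEnd, \stStop$ together with the theorem's side condition give $\roleP \in \gtRoles{\gtG}$ and $\roleP \notin \rolesR$, so \inferrule{\iruleGtMoveCrash} applies with $\gtGi = \gtCrashRole{\gtG}{\roleP}$ and $\rolesCi = \rolesC \cup \setenum{\roleP}$. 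I expect the main obstacle to be precisely re-establishing the association in this case: it requires a dedicated lemma showing that role removal commutes appropriately with projection (so that \ref{item:assoc:alive-sub}--\ref{item:assoc:end} still hold after setting $\roleP \mapsto \stStop$) and with the queue clauses (so that making every queue into $\roleP$ unavailable matches the crash annotations introduced by $\gtCrashRole{\cdot}{\roleP}$, including the conversion of a pending transmission from $\roleP$ into a $\gtCrashLab$ pseudo-message). Closely related, throughout the communication cases the interaction between the merge operator $\stMerge{}{}$ in projection and the universally-quantified premises of \inferrule{\iruleGtMoveCtx}/\inferrule{\iruleGtMoveCtxi} must be handled with care, since a single action enabled by a merged (and subtyped) local type has to be matched by reducing every continuation of the prefix uniformly under the same label.
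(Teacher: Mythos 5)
Your proposal matches the paper's proof in both structure and key ingredients: a case analysis on the configuration reduction rule (with recursion dispatched by unfolding and stability of association), inversion of projection via the association and subtyping to pin down the shape of $\gtG$, the split between top-level sender/receiver cases (\inferrule{\iruleGtMoveOut}/\inferrule{\iruleGtMoveOrph}/\inferrule{\iruleGtMoveIn}/\inferrule{\iruleGtMoveCrDe}) and the merged-projection case handled by the inductive hypothesis on continuations lifted through \inferrule{\iruleGtMoveCtx}/\inferrule{\iruleGtMoveCtxi}, and, for \inferrule{\iruleTCtxCrash}, dedicated lemmas showing role removal commutes with projection up to subtyping and that making $\roleP$'s queues unavailable preserves queue association. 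The two obstacles you flag (the crash case's interaction of $\gtCrashRole{\cdot}{\roleP}$ with projection and queues, and the uniform reduction of all continuations under a prefix given a merged, subtyped local type) are exactly where the paper invests its auxiliary lemmas, so your plan is essentially the paper's proof.
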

\iftoggle{full}{
\begin{proof}
  By induction on configuration reductions (\cref{def:mpst-env-reduction}).
See \cref{sec:proof:relating} for detailed proof.     \qedhere       
\end{proof}
}{}
\begin{restatable}[Soundness of Association]{theorem}{thmProjSoundness}%
  \label{thm:gtype:proj-sound}
  Given associated global type $\gtG$ and configuration $\stEnv; \qEnv$:
  $\stEnvAssoc{\gtWithCrashedRoles{\rolesC}{\gtG}}{\stEnv; \qEnv}{\rolesR}$.
  If
   $\gtWithCrashedRoles{\rolesC}{\gtG} \gtMove[]{\rolesR}$,
  then there exists  $\stEnvi; \qEnvi$,  $\stEnvAnnotGenericSym$  and
  $\gtWithCrashedRoles{\rolesCi}{\gtGi}$,  such that
  $\gtWithCrashedRoles{\rolesC}{\gtG} \gtMove[\stEnvAnnotGenericSym]{\rolesR}
   \gtWithCrashedRoles{\rolesCi}{\gtGi}$,
  $\stEnvAssoc{\gtWithCrashedRoles{\rolesCi}{\gtGi}}{\stEnvi;
  \qEnvi}{\rolesR}$,
  and
  $\stEnv; \qEnv \stEnvMoveGenAnnot \stEnvi; \qEnvi$.
\end{restatable}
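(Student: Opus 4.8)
The plan is to turn the hypothesis $\gtWithCrashedRoles{\rolesC}{\gtG} \gtMove{\rolesR}$ into a concrete \emph{matching} pair rather than to follow an arbitrary global transition: the freedom granted by soundness (we need only exhibit \emph{one} compatible step, unlike the completeness direction of \cref{thm:gtype:proj-comp}) is exactly what lets the subtyping slack in association be absorbed. First I would inspect the top-level shape of $\gtG$, which the premise guarantees is not a terminal form, and, guided by \cref{def:gtype:lts-gt}, choose a transition $\gtWithCrashedRoles{\rolesC}{\gtG} \gtMove[\stEnvAnnotGenericSym]{\rolesR} \gtWithCrashedRoles{\rolesCi}{\gtGi}$ that the configuration can follow. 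I would then construct the matching step $\stEnv; \qEnv \stEnvMoveGenAnnot \stEnvi; \qEnvi$ under the \emph{same} label $\stEnvAnnotGenericSym$ and re-verify each clause of \cref{def:assoc}. The whole argument proceeds by induction on the derivation of the chosen transition, with a case analysis on its last rule (\cref{fig:gtype:red-rules}).

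For the communication cases the correspondence is direct. In rule \inferrule{\iruleGtMoveOut} the sender $\roleP$ is live, so by clause~\ref{item:assoc:alive-sub} its local type is a subtype of the internal choice $\gtProj[\rolesR]{\gtG}{\roleP}$, hence itself an internal choice toward $\roleQ$ over a non-empty index set $J$ that, by selection subtyping, indexes a subset of the crash-free branches of $\gtG$; I then pick some $j \in J$, let the global type fire \inferrule{\iruleGtMoveOut} on $j$, and fire \inferrule{\iruleTCtxOut} on the same $j$, appending $\stQMsg{\gtLab[j]}{\tyGround[j]}$ to $\stEnvApp{\qEnv}{\roleP, \roleQ}$. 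Dually, in \inferrule{\iruleGtMoveIn} the en-route index $j$ is fixed, and the receiver's local type, being a subtype of the external choice $\gtProj[\rolesR]{\gtG}{\roleQ}$, offers \emph{at least} all of its branches and thus offers $\gtLab[j]$, while clause~\ref{item:assoc:queue} places $\stQMsg{\gtLab[j]}{\tyGround[j]}$ at the head of $\stEnvApp{\qEnv}{\roleP, \roleQ}$, so \inferrule{\iruleTCtxIn} applies. The orphan rule \inferrule{\iruleGtMoveOrph} (receiver already crashed) maps to the \emph{same} configuration rule \inferrule{\iruleTCtxOut}, since a crashed role has unavailable incoming queues by clause~\ref{item:assoc:queue} and appending silently discards the message ($\stQCons{\stQUnavail}{\stM} = \stQUnavail$), mirroring the global type's dropping of the prefix. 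Finally, \inferrule{\iruleGtMoveCrDe} maps to \inferrule{\iruleTCtxCrashDetect}: association gives $\stEnvApp{\stEnv}{\roleP} = \stStop$, and because the en-route message is the $\gtCrashLab$ pseudo-message, the queue $\stEnvApp{\qEnv}{\roleP, \roleQ}$ is empty, matching the side condition of the rule.

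The remaining cases are crash, recursion, and the two context rules. For \inferrule{\iruleGtMoveCrash} I would fire \inferrule{\iruleTCtxCrash} on the same crashing role $\roleP$: since $\roleP$ is live in $\gtG$, clause~\ref{item:assoc:alive-sub} yields $\stEnvApp{\stEnv}{\roleP} \neq \stEnd$ and $\stEnvApp{\stEnv}{\roleP} \neq \stStop$, so the rule applies, sending $\stEnvApp{\stEnv}{\roleP}$ to $\stStop$ and all incoming queues of $\roleP$ to $\stQUnavail$; re-establishing association then reduces to showing that the configuration is associated with $\gtWithCrashedRoles{\rolesC \cup \setenum{\roleP}}{\gtCrashRole{\gtG}{\roleP}}$, i.e.\ that the role-removal operator of \cref{def:gtype:remove-role} tracks the effect of \inferrule{\iruleTCtxCrash} clause by clause. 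For \inferrule{\iruleGtMoveRec} I would unfold through \inferrule{\iruleTCtxRec} and invoke the induction hypothesis on the unfolded transition, relying on an auxiliary lemma that association is stable under unfolding (projection commutes with recursion unfolding up to $\stSub$, and all queues are empty at a recursion by clause~\ref{item:assoc:queue}). For the context rules \inferrule{\iruleGtMoveCtx} and \inferrule{\iruleGtMoveCtxi}, the acting role $\roleR = \ltsSubject{\stEnvAnnotGenericSym}$ is distinct from the prefix roles, so its local type is the full merge $\stMerge{i \in I}{\gtProj[\rolesR]{\gtG[i]}{\roleR}}$; the premise that every continuation $\gtG[i]$ reduces by $\stEnvAnnotGenericSym$ lets the induction hypothesis apply uniformly, and I would show that the merged type still admits the step while the prefix's type and queue data are left untouched, so the configuration mimics the reduction and re-associates with the reduced prefix.

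I expect the main obstacle to be the crash case together with the context rules. Matching \inferrule{\iruleGtMoveCrash} to \inferrule{\iruleTCtxCrash} requires a meticulous re-verification that $\gtCrashRole{\gtG}{\roleP}$ preserves \emph{all} of the intricate queue-association conditions of \cref{def:assoc}: in particular that converting a live prefix $\gtFmt{\roleP \to \roleQ}$ into the pseudo-transmission $\gtFmt{\rolePCrashed \rightsquigarrow \roleQ : j}$ with $\gtLab[j] = \gtCrashLab$ lines up with $\roleP$'s incoming queues becoming $\stQUnavail$, and that already-sent messages (the retained-index case of role removal) remain correctly reflected at the head of the relevant queues. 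The context cases are delicate because they force the interplay between full merging and subtyping: I must ensure the merged-and-subtyped local type at $\roleR$ admits precisely the step that each branch's projection admits, and that the reduced type still subtypes $\stMerge{i \in I}{\gtProj[\rolesR]{\gtG[i]}{\roleR}}$ after reduction, which is where I would lean most heavily on properties of $\stBinMerge$ and $\stSub$.
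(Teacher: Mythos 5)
Your proposal is correct and follows essentially the same route as the paper's proof: induction on the global reduction derivation, exploiting the existential freedom of soundness to let the configuration's (possibly subtyping-narrowed) internal choice dictate the chosen label in the output cases, matching each global rule to its configuration counterpart (\inferrule{\iruleGtMoveOrph} to \inferrule{\iruleTCtxOut} via queue discarding, \inferrule{\iruleGtMoveCrDe} to \inferrule{\iruleTCtxCrashDetect}, \inferrule{\iruleGtMoveCrash} to \inferrule{\iruleTCtxCrash}), and handling the context rules by applying the induction hypothesis to the continuations and reassembling at the acting role via properties of $\stBinMerge$ and $\stSub$. The obstacles you flag are exactly where the paper's proof invests its effort, namely the queue-preservation lemma for role removal (\cref{lem:queue-assoc-crash}), projection preservation under crash (\cref{lem:proj-non-crashing-role-preserve}), and the merge/determinism bookkeeping in the \inferrule{\iruleGtMoveCtx} cases.
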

\iftoggle{full}{
\begin{proof}
 By induction on global type reductions (\cref{def:gtype:lts-gt}). 
 See \cref{sec:proof:relating} for detailed proof.    \qedhere         
 \end{proof}
}{}

By~\Cref{thm:gtype:proj-comp,thm:gtype:proj-sound}, we obtain, as a corollary, that
a global type $\gtG$ is in operational correspondence with the typing context
$\stEnv = \setenum{\stEnvMap{\roleP}{\gtProj[\rolesR]{\gtG}{\roleP}}}_{\roleP \in \gtRoles{\gtG}}$,
which contains the projections of all roles in $\gtG$.

\iftoggle{full}{
\begin{remark}[`Weakness' and `Sufficiency' of Soundness Theorem]
  Curious readers may wonder why we proved a `\emph{weak}' soundness theorem
  instead of one that is the dual of the completeness theorem,
  \eg as seen in the literature~\cite{ICALP13CFSM}.
  The reason is that we use the `full' subtyping (\cref{def:subtyping}, notably
  \inferrule{\iruleStSubOut}).
  A local type in the typing context may have fewer branches to choose from
  than the projected local type,
  resulting in uninhabited sending actions in the global type.

  For example, let $\gtG = \gtCommRaw{\roleP}{\roleQ}{
    \gtCommChoice{\gtLab[1]}{}{\gtEnd}; \; \;
    \gtCommChoice{\gtLab[2]}{}{\gtEnd}
  }$.
  An associated typing context $\stEnv$ (assuming $\roleP$ reliable) may have
  $\stEnvApp{\stEnv}{\roleP} =
  \stIntSum{\roleQ}{}{\stChoice{\stLab[1]}{}
  \stSeq {\stEnd}}
  \stSub
  \stIntSum{\roleQ}{}{
    \stChoice{\stLab[1]}{} \stSeq {\stEnd}; \; \;
    \stChoice{\stLab[2]}{} \stSeq {\stEnd}
  }$ (via \inferrule{\iruleStSubOut}).
  The global type $\gtG$ may make a transition
  $\stEnvOutAnnot{\roleP}{\roleQ}{\gtLab[2]}$, where an associated
  configuration $\stEnv; \qEnv[\emptyset]$ cannot.

  However, our soundness theorem is \emph{sufficient} for concluding
  desired properties guaranteed via association, \eg safety, deadlock-freedom, and liveness,
  as illustrated in~\cref{sec:gtype:pbp}.
\end{remark}
}{}

\subsection{Properties Guaranteed by Projection}\label{sec:gtype:pbp}

A key benefit of our top-down approach of multiparty protocol design is that
desirable properties are guaranteed by the methodology.
As a consequence, processes
following the local types obtained from projections are correct \emph{by
construction}.
In this subsection, we focus on three properties: \emph{communication safety},
\emph{deadlock-freedom},  and \emph{liveness}, and show that the
three properties are guaranteed from
configurations associated with global types.
\subparagraph*{Communication Safety}
\label{sec:type-system-safety}

We begin by defining communication safety for configurations
(\cref{def:mpst-env-safe}).
We focus on two safety requirements:
\begin{enumerate*}[label=(\roman*)]
  \item each role must be able to handle any message that may end up in their
    receiving queue (so that there are no label mismatches); and
 \item each receiver must be able to handle the potential crash of the
    sender, unless the sender is reliable.
\end{enumerate*}

\begin{definition}[Configuration Safety]\label{def:mpst-env-safe}%
  Given a fixed set of reliable roles $\rolesR$, we say that
  $\predP$ is an \emph{$\rolesR$-safety property} of configurations %
  iff, whenever $\predPApp{\stEnv; \qEnv}$, we have:

  \noindent%
  \begin{tabular}{@{}r@{\hskip 2mm}l}
    \inferrule{\iruleSafeComm}%
    &%
    $ \stEnvApp{\stEnv}{\roleQ} =
        \stExtSum{\roleP}{i \in I}{\stChoice{\stLab[i]}{\tyGround[i]} \stSeq \stSi[i]}%
    $
     and
    $\stEnvApp{\qEnv}{\roleP, \roleQ} \neq \stQUnavail
    $
     and
    $
      \stEnvApp{\qEnv}{\roleP, \roleQ} \neq \stQEmpty
    $
     implies  %
    $\stEnvMoveAnnotP{\stEnv;
    \qEnv}{\stEnvInAnnot{\roleQ}{\roleP}{\stChoice{\stLabi}{\tyGroundi}}}$;
    \\%
    \inferrule{\iruleSafeCrash}%
    &%
    $\stEnvApp{\stEnv}{\roleP} = \stStop$
     and
    $ \stEnvApp{\stEnv}{\roleQ} =
        \stExtSum{\roleP}{i \in I}{\stChoice{\stLab[i]}{\stS[i]} \stSeq \stSi[i]}%
    $
     and
    $
      \stEnvApp{\qEnv}{\roleP, \roleQ} = \stQEmpty
    $
     implies  %
    $\stEnvMoveAnnotP{\stEnv; \qEnv}{\ltsCrDe{\mpS}{\roleQ}{\roleP}}$;
    \\[1mm]
    \inferrule{\iruleSafeRec}%
    &%
    $
      \stEnvApp{\stEnv}{%
        \roleP%
      } =
      \stRec{\stRecVar}{\stS}%
    $ %
     implies  %
    $\predPApp{%
      \stEnvUpd{\stEnv}{%
        \roleP%
      }{%
        \stS\subst{\stRecVar}{\stRec{\stRecVar}{\stS}}%
      }; \qEnv%
    }$;
    \\[1mm]
    \inferrule{\iruleSafeMove}%
    &%
    $\stEnv; \qEnv \stEnvMoveMaybeCrash[\rolesR] \stEnvi; \qEnvi$
     implies  %
    $\predPApp{\stEnvi; \qEnvi}$.
  \end{tabular}

\smallskip
\noindent%
  We say \emph{\;$\stEnv; \qEnv$\; is $\rolesR$-safe}, %
  if \;$\predPApp{\stEnv; \qEnv}$\; holds %
  for some $\rolesR$-safety property $\predP$. %
\end{definition}

We use a coinductive view of the safety property~\cite{SangiorgiBiSimCoInd},
where the predicate of $\rolesR$-safe configurations is the %
largest $\rolesR$-safety property, by taking the union of all safety properties
$\predP$.
For a configuration $\stEnv; \qEnv$ to be $\rolesR$-safe, it has to satisfy
all clauses defined in \cref{def:mpst-env-safe}.

By clause~\inferrule{\iruleSafeComm}, whenever a role $\roleQ$
receives from another role $\roleP$,
and a message is present in the queue,
the receiving action must be possible for some label $\gtLabi$.
Clause~\inferrule{\iruleSafeCrash} states that if a role $\roleQ$ receives
from a crashed role $\roleP$, and there is nothing in the queue,
then $\roleQ$ must have a $\stCrashLab$ branch, and a crash detection action
can be fired.
(Note that $\inferrule{\iruleSafeComm}$ applies when the queue is non-empty,
despite the crash of sender $\roleP$.)
Finally,
clause~\inferrule{\iruleSafeRec} extends the previous clauses
by unfolding any recursive entries; and
clause \inferrule{\iruleSafeMove}
states that any configuration $\stEnvi; \qEnvi$ which $\stEnv; \qEnv$ transitions to
must also be $\rolesR$-safe.
By using transition $\stEnvMoveMaybeCrash[\rolesR]$, we
ignore crash transitions $\ltsCrash{\mpS}{\roleP}$ for
any reliable role $\roleP \in \rolesR$.%

\begin{example}
\label{ex:configuration_safety}
Recall the local types $\stT[\roleFmt{C}]$, $\stT[\roleFmt{L}]$, and $\stT[\roleFmt{I}]$
of the Simpler Logging example in~\cref{sec:overview}.
The configuration $\stEnv; \qEnv$, where
{\small $\stEnv =  \stEnvMap{\roleFmt{C}}{\stT[\roleFmt{C}]}  \stEnvComp
\stEnvMap{\roleFmt{L}}{\stT[\roleFmt{L}]}   \stEnvComp
\stEnvMap{\roleFmt{I}}{\stT[\roleFmt{I}]}$} and
{\small $\qEnv =  \qEnv[\emptyset]$}, is
{\small $\setenum{\roleFmt{L}, \roleFmt{I}}$}-safe.
This can be verified by checking its
reductions.  For example,
in the case where $\roleFmt{C}$ crashes immediately, we have:
{\small $\stEnv; \qEnv
\stEnvMoveAnnot{\ltsCrash{\mpS}{\roleFmt{C}}}
 \stEnvUpd{\stEnv}{\roleFmt{C}}{
\stStop};
       \stEnvUpd{\qEnv}{\cdot, \roleFmt{C}}{\stQUnavail}
       \stEnvMoveStar
    \stEnvUpd{\stEnvUpd{\stEnvUpd{\stEnv}{\roleFmt{C}}{\stStop}}{
 \roleFmt{L}}{
  \stEnd
  }}{\roleFmt{I}}{\stEnd};
  \stEnvUpd{\qEnv}{\cdot, \roleFmt{C}}{\stQUnavail}
  $}
and each reductum satisfies all clauses of~\Cref{def:mpst-env-safe}.
\iftoggle{full}{Full reductions are available in~\cref{sec:app_examples},~\cref{ex:configuration_safety_full}.}{}
\end{example}

\subparagraph*{Deadlock-Freedom}
\label{sec:type-system-deadlock-free}

The property of deadlock-freedom, sometimes also known as progress, describes
whether a configuration can keep reducing unless it is a terminal configuration.
We give its formal definition in \cref{def:mpst-env-deadlock-free}.

\begin{definition}[Configuration Deadlock-Freedom]\label{def:mpst-env-deadlock-free}%
Given a set of reliable roles $\rolesR$, we say that a configuration $\stEnv;
\qEnv$ is
\emph{$\rolesR$-deadlock-free} iff:
\begin{enumerate*}
  \item $\stEnv; \qEnv$ is $\rolesR$-safe; and,
  \item
    \label{item:df:reduces}
    If $\stEnv; \qEnv$ can reduce to a configuration $\stEnvi; \qEnvi$
    without further reductions:
    $\stEnv; \qEnv \!\stEnvMoveMaybeCrashStar[\rolesR]\! \stEnvi; \qEnvi
    \!\not\stEnvMoveMaybeCrash[\rolesR]$,  then:
    \begin{enumerate*}
    \item
      \label{item:df:context}
      $\stEnvi$ can be split into two disjoint contexts,
      one with only $\stEnd$
      entries, and one with only $\stStop$ entries:
      $\stEnvi =  \stEnvi[\stEnd] \stEnvComp \stEnvi[\stStopSym]$,  where
      $\dom{\stEnvi[\stEnd]} =
      \setcomp{\roleP}{\stEnvApp{\stEnvi}{\roleP} = \stEnd}$  and
      $\dom{\stEnvi[\stStopSym]} =
      \setcomp{\roleP}{\stEnvApp{\stEnvi}{\roleP} = \stStop}$; and,
    \item
      \label{item:df:queues}
      $\qEnvi$ is empty for all pairs of roles, except for the receiving queues
      of crashed roles, which are unavailable:
      $\forall \roleP, \roleQ:  \stEnvApp{\qEnvi}{\cdot, \roleQ} =
      \stQUnavail$
      if
       $\stEnvApp{\stEnvi}{\roleQ} = \stStop$, and
       $\stEnvApp{\qEnvi}{\roleP, \roleQ} = \stQEmpty$, otherwise.
    \end{enumerate*}
\end{enumerate*}
\end{definition}

It is worth noting that a (safe) configuration that reduces infinitely
satisfies deadlock-freedom, as \cref{item:df:reduces} in the premise does not hold.
Otherwise, whenever a terminal configuration is reached, it must satisfy
\cref{item:df:context} that all local types in the typing context be
terminated (either successfully $\stEnd$, or crashed $\stStop$), and
\cref{item:df:queues} that all queues be empty (unless unavailable due to
crash).
As a consequence,
a deadlock-free configuration $\stEnv; \qEnv$ either does not stop reducing, or
terminates in a stable configuration.

\subparagraph*{Liveness}
\label{sec:type-system-live}
The property of liveness describes that every pending output/external choice
is eventually triggered by means of a message transmission or crash detection.
Our liveness property is based on \emph{fairness},
which guarantees that every enabled message transmission, including crash detection,
is performed successfully.
We give the definitions of non-crashing, fair,  and live paths of configurations respectively in
\Cref{def:non-crash-fair-live-path}, and use these paths to
formalise the liveness for configurations in~\Cref{def:mpst-env-live}.

\begin{definition}[Non-crashing, Fair, Live Paths]
\label{def:non-crash-fair-live-path}
 A \emph{non-crashing path} is a possibly infinite
sequence of configurations $(\stEnv[n]; \qEnv[n])_{n \in N}$, where
$N = \setenum{0, 1, 2, \ldots}$
is a set of consecutive natural numbers, and
$\forall n \in N$, $\stEnv[n]; \qEnv[n] \!\stEnvMove\!  \stEnv[n+1]; \qEnv[n+1]$.
We say that a non-crashing path $(\stEnv[n]; \qEnv[n])_{n \in N}$ is
\emph{fair} iff, $\forall n \in N$:
 \begin{enumerate}[label={(F\arabic*)}, leftmargin=*, ref={(F\arabic*)},nosep]
 \item
 \label{item:fairness_send}
 $\stEnv[n]; \qEnv[n] \stEnvMoveOutAnnot{\roleP}{\roleQ}{\stChoice{\stLab}{\tyGround}}$
 implies  $\exists k, \stLabi, \tyGroundi$ such that $n \leq k \in N$
 and
  $\stEnv[k]; \qEnv[k] \stEnvMoveOutAnnot{\roleP}{\roleQ}{\stChoice{\stLabi}{\tyGroundi}} \stEnv[k+1]; \qEnv[k+1]$;
 \item
 \label{item:fairness_receive}
 $\stEnv[n]; \qEnv[n] \stEnvMoveInAnnot{\roleP}{\roleQ}{\stChoice{\stLab}{\tyGround}}$
 implies $\exists k$ such that $n \leq k \in N$
 and
 $\stEnv[k]; \qEnv[k] \stEnvMoveInAnnot{\roleP}{\roleQ}{\stChoice{\stLab}{\tyGround}} \stEnv[k+1]; \qEnv[k+1]$;
 \item
 \label{item:fairness_crash_detection}
 $\stEnv[n]; \qEnv[n] \stEnvMoveAnnot{\ltsCrDe{\mpS}{\roleP}{\roleQ}}$
 implies $\exists k$ such that $n \leq k \in N$
 and
  $\stEnv[k]; \qEnv[k] \stEnvMoveAnnot{\ltsCrDe{\mpS}{\roleP}{\roleQ}}\stEnv[k+1]; \qEnv[k+1]$.
 \end{enumerate}

We say that a non-crashing path $(\stEnv[n]; \qEnv[n])_{n \in N}$
is \emph{live} iff, $\forall n \in N$:
 \begin{enumerate}[label={(L\arabic*)}, leftmargin=*, ref={(L\arabic*)},nosep]
 \item
 \label{item:liveness_consume}
 $\stEnvApp{\qEnv[n]}{\roleP, \roleQ}
      =
      \stQCons{\stQMsg{\stLab}{\tyGround}}{\stQ}
      \neq
      \stQUnavail$\; and \;$\stLab \neq \stCrashLab$
 implies $\exists k$ such that $n \leq k \in N$
 and
 $\stEnv[k]; \qEnv[k] \stEnvMoveInAnnot{\roleQ}{\roleP}{\stChoice{\stLab}{\tyGround}} \stEnv[k+1]; \qEnv[k+1]$;
\item
\label{item:liveness_receiving_cd}
$\stEnvApp{\stEnv[n]}{%
        \roleP%
      } =
      \stExtSum{\roleQ}{i \in I}{\stChoice{\stLab[i]}{\tyGround[i]} \stSeq \stT[i]}$     %
 implies  $\exists k, \stLabi, \tyGroundi$ such that $n \leq k \in N$  and\\
 $\stEnv[k]; \qEnv[k]
 \stEnvMoveInAnnot{\roleP}{\roleQ}{\stChoice{\stLabi}{\tyGroundi}} \stEnv[k+1];
 \qEnv[k+1]$
 or
 $\stEnv[k]; \qEnv[k]
      \stEnvMoveAnnot{\ltsCrDe{\mpS}{\roleP}{\roleQ}}%
      \stEnv[k+1]; \qEnv[k+1]$.
 \end{enumerate}
 \end{definition}
A non-crash path is a (possibly infinite) sequence of reductions of
a configuration without crashes. A non-crash path is fair if along the path, every internal choice eventually
sends a message~\ref{item:fairness_send}, every external choice eventually receives a message~\ref{item:fairness_receive},
and every crash detection is eventually performed~\ref{item:fairness_crash_detection}.  A non-crashing path is live if
along the path, every non-crash message in the queue is eventually consumed~\ref{item:liveness_consume}, and every
hanging external choice eventually consumes a message or performs a crash detection~\ref{item:liveness_receiving_cd}.

 \begin{definition}[Configuration Liveness]
 \label{def:mpst-env-live}%
Given a set of reliable roles $\rolesR$, we say that a configuration $\stEnv;
\qEnv$ is
\emph{$\rolesR$-live} iff:
\begin{enumerate*}[leftmargin=*, nosep]
  \item $\stEnv; \qEnv$ is $\rolesR$-safe; and,
  \item
    \label{item:live:reduces}
  $\stEnv; \qEnv \stEnvMoveMaybeCrashStar[\rolesR]
  \stEnvi; \qEnvi$  implies all non-crashing paths starting with
  $\stEnvi; \qEnvi$ that are fair are also live.
   \end{enumerate*}
\end{definition}

A configuration $\stEnv; \qEnv$ is $\rolesR$-live when it is $\rolesR$-safe and
any reductum of $\stEnv; \qEnv$ (via transition $\stEnvMoveMaybeCrashStar[\rolesR]$)
consistently leads to a live path if it is fair. 

\subparagraph*{Properties by Projection}
We conclude by showing the guarantee of safety, 
deadlock-freedom, and liveness in configurations 
associated with global types in~\cref{lem:ext-proj}. 
Furthermore, as a corollary, \Cref{cor:allproperties} demonstrates that
a typing context projected from a global type (without runtime
constructs) is inherently safe, deadlock-free, and live by construction.  %
\iftoggle{full}{The detailed proofs for~\Cref{lem:ext-proj,cor:allproperties}
are available in~\cref{sec:proof:propbyproj}.}{}

\begin{restatable}{lemma}{lemProj}
 \label{lem:ext-proj}
  If $\stEnvAssoc{\gtWithCrashedRoles{\rolesC}{\gtG}}{\stEnv; \qEnv}{\rolesR}$,
  then $\stEnv; \qEnv$ is $\rolesR$-safe, $\rolesR$-deadlock-free, and $\rolesR$-live.%
\end{restatable}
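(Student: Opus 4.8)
The plan is to establish the three properties in turn, in each case transferring information between the configuration and its associated global type through the operational correspondence of \Cref{thm:gtype:proj-comp,thm:gtype:proj-sound}. The common backbone is the observation that \emph{association is preserved under the reliability-respecting reduction} $\stEnvMoveMaybeCrash[\rolesR]$: if $\stEnvAssoc{\gtWithCrashedRoles{\rolesC}{\gtG}}{\stEnv; \qEnv}{\rolesR}$ and $\stEnv; \qEnv \stEnvMoveMaybeCrash[\rolesR] \stEnvi; \qEnvi$, then, since $\stEnvMoveMaybeCrash[\rolesR]$ is exactly the class of moves allowed by \Cref{thm:gtype:proj-comp} (crashes of unreliable roles are matched by \inferrule{\iruleGtMoveCrash}, while crashes of reliable roles are excluded), the move is matched by $\gtWithCrashedRoles{\rolesC}{\gtG} \gtMove[\stEnvAnnotGenericSym]{\rolesR} \gtWithCrashedRoles{\rolesCi}{\gtGi}$ with $\stEnvAssoc{\gtWithCrashedRoles{\rolesCi}{\gtGi}}{\stEnvi; \qEnvi}{\rolesR}$.

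For \textbf{safety}, I would show that the relation $\predP \eqdef \setcomp{(\stEnv; \qEnv)}{\exists \rolesC, \gtG :\ \stEnvAssoc{\gtWithCrashedRoles{\rolesC}{\gtG}}{\stEnv; \qEnv}{\rolesR}}$ is an $\rolesR$-safety property in the sense of \Cref{def:mpst-env-safe}, whence the largest (coinductive) safety predicate contains every associated configuration. Clause \inferrule{\iruleSafeMove} is precisely the preservation statement above. Clauses \inferrule{\iruleSafeComm} and \inferrule{\iruleSafeCrash} are read off the structure of association (\Cref{def:assoc}, item \ref{item:assoc:queue}): a non-empty available queue from $\roleP$ to $\roleQ$ forces, via case (d), an en-route transmission whose head label matches a branch of $\roleQ$'s external choice up to subtyping, enabling \inferrule{\iruleTCtxIn}; and an empty queue from a stopped $\roleP$ corresponds, via the crash annotation together with the projection requirement that every reception from an unreliable sender carries a $\stCrashLab$ branch, to an enabled \inferrule{\iruleTCtxCrashDetect}. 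Clause \inferrule{\iruleSafeRec} holds because association is stable under unfolding recursive entries.

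For \textbf{deadlock-freedom}, safety is already in hand, so it remains to analyse a terminal reductum $\stEnv; \qEnv \stEnvMoveMaybeCrashStar[\rolesR] \stEnvi; \qEnvi \not\stEnvMoveMaybeCrash[\rolesR]$. By iterated preservation, $\stEnvi; \qEnvi$ is associated with some $\gtWithCrashedRoles{\rolesCi}{\gtGi}$, and I would use \Cref{thm:gtype:proj-sound} contrapositively. Since every non-terminated runtime global type admits a move under $\gtMove{\rolesR}$ --- a recursion unfolds via \inferrule{\iruleGtMoveRec}, a transmission can always output a non-$\gtCrashLab$ label, an en-route transmission can receive or detect a crash, and any live unreliable role can crash by \inferrule{\iruleGtMoveCrash} --- a $\stEnvMoveMaybeCrash[\rolesR]$-stuck configuration forces $\gtRoles{\gtGi} = \emptyset$, i.e. $\gtGi$ is terminated. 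Reading \Cref{def:assoc} at such a $\gtGi$ then yields exactly the shape required by \Cref{def:mpst-env-deadlock-free}: $\stEnvi$ splits into $\stEnd$- and $\stStop$-entries, and every queue is empty except the unavailable receiving queues of crashed roles.

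The hard part is \textbf{liveness}. After safety, I must show that every fair non-crashing path from any reductum $\stEnvi; \qEnvi$ is live (\Cref{def:mpst-env-live,def:non-crash-fair-live-path}). The plan is again to lift the path to the global type via \Cref{thm:gtype:proj-comp} (note $\stEnvMove$ excludes crashes, so the lifted path uses only $\inferrule{\iruleGtMoveOut}$, $\inferrule{\iruleGtMoveIn}$, $\inferrule{\iruleGtMoveCrDe}$), exploiting that the lifted global type is contractive and that its pending communications occur at bounded depth. For \ref{item:liveness_consume}, a head message in $\qApp{\qEnvi}{\roleP}{\roleQ}$ corresponds to an en-route transmission destined for the live role $\roleQ$; safety keeps the matching input of $\roleQ$ continuously enabled, so fairness clause \ref{item:fairness_receive} forces it to fire, and a well-founded induction on the message's position in the queue discharges all messages in turn. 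For \ref{item:liveness_receiving_cd}, a hanging external choice at $\roleP$ corresponds to a point where $\gtGi$ expects $\roleP$ to receive; safety guarantees that either an input or a crash-detection action is enabled, and fairness (\ref{item:fairness_receive} or \ref{item:fairness_crash_detection}) forces one of them. The principal obstacle is organising these arguments under a single well-founded measure --- combining the recursion-unfolded syntactic depth of the pending prefix in $\gtGi$ with the queue lengths --- so that fairness is guaranteed to discharge each obligation within finitely many steps despite arbitrary interleaving with unrelated reductions elsewhere in the configuration.
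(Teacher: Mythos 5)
Your overall strategy coincides with the paper's own: the same decomposition into the three properties, the same backbone of association-preservation under $\stEnvMoveMaybeCrash[\rolesR]$ via \Cref{thm:gtype:proj-comp}, a coinductive candidate safety predicate for \Cref{def:mpst-env-safe}, and, for deadlock-freedom, the contrapositive of \Cref{thm:gtype:proj-sound} combined with progress of global types. Your safety and deadlock-freedom arguments are sound: taking $\predP$ to be the set of \emph{all} associated configurations works just as well as the paper's reduct-indexed predicate, and the only detail you gloss --- locating the pending prefix when the receiver's external choice arises by merging, i.e.\ when the top of the global type involves other roles --- is exactly what the paper's \cref{lem:eventual-global-form} supplies, compatibly with your plan.

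The liveness argument, however, has a genuine gap in clause \ref{item:liveness_receiving_cd}. You claim that whenever $\stEnvApp{\stEnv[n]}{\roleP}$ is an external choice from $\roleQ$, ``safety guarantees that either an input or a crash-detection action is enabled.'' This is false precisely when the associated global type at step $n$ still has the transmission \emph{pending}, i.e.\ $\unfoldOne{\gtGi[n]}$ has prefix $\gtFmt{\roleQ \to \roleP}$ rather than $\gtFmt{\roleQ \rightsquigarrow \roleP}$: then $\qApp{\qEnv[n]}{\roleQ}{\roleP} = \stQEmpty$ and, on a non-crashing path, $\roleQ$ is alive, so neither \inferrule{\iruleSafeComm} nor \inferrule{\iruleSafeCrash} applies and no input or crash-detection transition of $\roleP$ is enabled at step $n$; your appeal to fairness \ref{item:fairness_receive}/\ref{item:fairness_crash_detection} therefore has nothing to fire on. The enabled action at that point is the \emph{sender's} output: association plus inversion of subtyping shows $\stEnvApp{\stEnv[n]}{\roleQ}$ is an internal choice towards $\roleP$, so \inferrule{\iruleTCtxOut} applies, fairness clause \ref{item:fairness_send} yields a later step at which $\roleQ$ sends, the queue becomes non-empty, and only then does your clause-\ref{item:liveness_consume} argument (fairness \ref{item:fairness_receive}) finish the job. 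This three-way case analysis --- output still pending, en-route with non-$\stCrashLab$ label, en-route $\stCrashLab$ pseudo-message --- is exactly how the paper closes this case. Separately, your ``principal obstacle'' is illusory: fairness as defined in \Cref{def:non-crash-fair-live-path} promises that any transition enabled \emph{at} step $n$ fires at some $k \geq n$, so each liveness obligation is discharged by one (or, in the pending-output case, two successive) direct appeals to the fairness hypothesis; no combined well-founded measure on syntactic depth and queue length is needed, and \ref{item:liveness_consume} concerns only the head of each queue in any case.
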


\begin{restatable}[Safety, Deadlock-Freedom, and Liveness by Projection]{theorem}{colProjAll}%
  \label{cor:allproperties}
  Let $\gtG$ be a global type without runtime constructs,
  and $\rolesR$ be a set of reliable roles.
  If $\stEnv$ is a typing context
  associated with the global type $\gtG$:
  $\stEnvAssoc{\gtG}{\stEnv}{\rolesR}$,
  then $\stEnv; \qEnv[\emptyset]$ is $\rolesR$-safe, $\rolesR$-deadlock-free, and $\rolesR$-live.
\end{restatable}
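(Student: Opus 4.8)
The plan is to obtain \Cref{cor:allproperties} as an immediate specialisation of \Cref{lem:ext-proj}, so that essentially no work is done at the level of the corollary. First I would unfold the hypothesis: by the closing convention of \Cref{def:assoc}, the assertion $\stEnvAssoc{\gtG}{\stEnv}{\rolesR}$ is by definition shorthand for $\stEnvAssoc{\gtWithCrashedRoles{\emptyset}{\gtG}}{\stEnv; \qEnv[\emptyset]}{\rolesR}$ — that is, the association of $\stEnv$ together with the empty queue environment $\qEnv[\emptyset]$ against $\gtG$ annotated with the empty set of crashed roles. The side condition that $\gtG$ contain no runtime constructs is exactly what makes the annotated type $\gtWithCrashedRoles{\emptyset}{\gtG}$ well-annotated \wrt $\rolesR$: there are no transmissions en route and no crash annotations $\roleCrashedSym$, so taking $\rolesC = \emptyset$ is consistent with \Cref{def:assoc}. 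Hence the theorem's premise is literally an instance of the premise of \Cref{lem:ext-proj} with $\rolesC := \emptyset$ and $\qEnv := \qEnv[\emptyset]$.

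Instantiating \Cref{lem:ext-proj} at these values then yields directly that $\stEnv; \qEnv[\emptyset]$ is $\rolesR$-safe, $\rolesR$-deadlock-free, and $\rolesR$-live, which is precisely the conclusion sought. All three properties are simply inherited, and the corollary itself is routine once the lemma is in hand.

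Consequently the real content lives in \Cref{lem:ext-proj}, and that is where I expect the difficulty to concentrate. For safety, I would show that the family of all configurations associated (\wrt $\rolesR$) with some $\gtWithCrashedRoles{\rolesC}{\gtG}$ forms an $\rolesR$-safety property in the sense of \Cref{def:mpst-env-safe}, verified coinductively: clauses \inferrule{\iruleSafeComm} and \inferrule{\iruleSafeCrash} follow from the structure imposed by \Cref{def:assoc} together with the projection requirement that a $\stCrashLab$ branch be present whenever the sender is unreliable, while clause \inferrule{\iruleSafeMove} is discharged using \Cref{thm:gtype:proj-comp}, since every $\stEnvMoveMaybeCrash[\rolesR]$ step of the configuration is matched by a global-type move and the reducts remain associated. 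For deadlock-freedom, I would analyse a terminal configuration $\stEnvi; \qEnvi \not\stEnvMoveMaybeCrash[\rolesR]$: absence of any $\stEnvMoveMaybeCrash[\rolesR]$ step forces every unreliable role to be $\stEnd$ or $\stStop$, and by \Cref{thm:gtype:proj-sound} the associated global type cannot move either, so it must be $\gtEnd$ with every role ended or crashed, whence the context split and the empty/unavailable queue conditions follow from the association.

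The hardest part, as usual, is liveness. From a fair non-crashing path I would track the associated global types and argue, using the operational correspondence in both directions and contractiveness of recursive types, that every queued non-crash message is eventually consumed and every pending external choice is eventually resolved by a reception or a crash detection, matching clauses \ref{item:liveness_consume} and \ref{item:liveness_receiving_cd} of \Cref{def:non-crash-fair-live-path}. I expect this fairness argument — reconciling the scheduling freedom of the path with the progress guaranteed by the associated global type — to be the principal obstacle in the lemma, whereas the corollary stated here is obtained by the single instantiation described above.
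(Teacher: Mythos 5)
Your proposal takes essentially the same route as the paper: the paper's proof likewise just unfolds $\stEnvAssoc{\gtG}{\stEnv}{\rolesR}$ as the abbreviation for $\stEnvAssoc{\gtWithCrashedRoles{\emptyset}{\gtG}}{\stEnv; \qEnv[\emptyset]}{\rolesR}$ and then invokes the projection lemmas (\Cref{lem:safety-by-proj,lem:proj:df,lem:ext-proj-live}, i.e.\ the content of \Cref{lem:ext-proj}) to conclude all three properties. Your sketch of how those lemmas themselves are established --- coinductive safety via \Cref{thm:gtype:proj-comp}, terminal-configuration analysis via \Cref{thm:gtype:proj-sound} for deadlock-freedom, and the fair-path argument for liveness --- also matches the paper's appendix proofs.
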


\section{Typing System with Crash-Stop Semantics}
\label{sec:typing_system}
In this section, we present a type system for our asynchronous multiparty session
calculus.
Our typing system is extended from the one in~\cite{POPL21AsyncMPSTSubtyping}
with
crash-stop failures. %
We introduce the typing rules in \cref{sec:typingrules}, and show various
properties of typed sessions: subject reduction, session
fidelity, deadlock-freedom,
and liveness in \Cref{sec:type:system:results}.

\subsection{Typing Rules}
\label{sec:typingrules}

\begin{figure}
\[
{\small
\begin{array}{@{}c@{}}
  \inference[{t-$\mpQEmpty$}]{}{\vdash \mpQEmpty : \stQEmpty}
  \qquad
  \highlight{\inference[{t-$\mpQUnavail$}]{ }{\vdash \mpQUnavail : \stQUnavail}}
\qquad
  \inference[{t-$\cdot$}]{
    \vdash \mpH_1:\qEnvPartial[1] \quad \vdash \mpH_2:\qEnvPartial[2]
  }{
    \vdash \mpH_1 \cdot \mpH_2: \qEnvPartial[1] \cdot \qEnvPartial[2]
  }
\\[1mm]
    \inference[{t-msg}]{
    \vdash \mpV:\tyGround
    &
    \stEnvApp{\qEnvPartial}{\roleQ} = \stQMsg\stLab\tyGround
    &
    \forall \roleR \neq \roleQ:
    \stEnvApp{\qEnvPartial}{\roleR} = \stQEmpty
  }{
    \vdash (\roleQ , \mpLab(\mpV)) : \qEnvPartial
  }
\\[1mm]
\highlight{\inference[{t-$\mpCrash$}]{$\;$}{\Theta \vdash \mpCrash : \stStop}}
  \qquad
\inference[{t-$\mpNil$}]{$\;$}{\Theta \vdash \mpNil : \stEnd}
  \qquad
\inference[{t-out}]{
  \Theta \vdash \mpE:\tyGround
  \quad
  \Theta \vdash \mpP:\stT
}{
  \Theta  \vdash \procout\roleQ{\mpLab}{\mpE}{\mpP}:
  \stOut{\roleQ}{\stLab}{\tyGround}\stSeq{\stT}
}
  \\[1mm]%
\inference[{t-ext}]{
  \forall i\in I\;\;\; \Theta, x_i:\tyGround_i \vdash \mpP_i:\stT_i
}{
  \Theta \vdash  \sum_{i\in I}\procin{\roleQ}{\mpLab_i(\mpx_i)}{\mpP_i}:
  \stExtSum{\roleQ}{i\in I}{\stChoice{\stLab[i]}{\tyGround_i}\stSeq{\stT_i}}
}
  \qquad
\inference[{t-cond}]{\Theta \vdash \mpE:\tyBool
\quad \Theta  \vdash \mpP_i:\stT \ \text{\tiny $(i=1,2)$}
}{\Theta \vdash \mpIf\mpE{\mpP_1}{\mpP_2}:\stT}
\\[1mm]
\inference[{t-rec}]{\Theta, X:\stT  \vdash \mpP:\stT}
{\Theta  \vdash  \mu X.\mpP: \stT}
\qquad
\inference[{t-var}]{$\;$}{\Theta, X:\stT  \vdash X:\stT}
\qquad
\inference[{t-sub}]{\Theta \vdash \mpP:\stT \quad \stT\stSub \stT' }{\Theta \vdash \mpP:\stT'}
\\[1mm]
\inference[{t-sess}]{
   \highlight{\stEnvAssoc{\gtWithCrashedRoles{\rolesC}{\gtG}}{\stEnv; \qEnv}{\rolesR}}
  \qquad
  \forall i\in I %
  \quad \highlight{\vdash \mpP_i:\stEnvApp{\stEnv}{\roleP[i]}} \qquad  %
  \highlight{\vdash \mpH[i]: \stEnvApp{\qEnv}{-, \roleP[i]}} %
  &
  \highlight{\dom{\stEnv} \subseteq
  \setcomp{\roleP[i]}{i \in I}}
}{
  \gtWithCrashedRoles{\rolesC}{\gtG}
  \vdash \prod_{i\in I} (\mpPart{\roleP[i]}{\mpP[i]} \mpPar
  \mpPart{\roleP[i]}{\mpH[i]})
}
\end{array}
}
\]
\caption{
  Typing rules for queues,
  processes, and sessions.%
}%
\label{fig:processes:typesystem}
\end{figure}

Our type system uses three kinds of typing judgements:  \emph{(1)} for processes;
\emph{(2)} for queues;
and \emph{(3)} for sessions,
and is defined inductively by the typing rules in~\cref{fig:processes:typesystem}.
Typing judgments for processes are of form $\Theta \vdash \mpP : \stT$, where
$\Theta$ is a typing context for variables, defined as $\Theta \bnfdef
\emptyset \bnfsep \Theta , \mpx : \tyGround \bnfsep \Theta , \mpX : \stT$.

With regard to queues, we use judgments of the form $\vdash \mpH : \qEnvPartial$,
where we use $\qEnvPartial$ to denote a partially applied queue lookup function.
We write $\qEnvPartial = \stEnvApp{\qEnv}{-, \roleP}$ to describe the incoming
queue for a role $\roleP$, as a partially
applied function $\qEnvPartial = \stEnvApp{\qEnv}{-, \roleP}$ such that
$\stEnvApp{\qEnvPartial}{\roleQ} = \stEnvApp{\qEnv}{\roleQ,\roleP}$.
We write $\qEnvPartial[1] \cdot \qEnvPartial[2]$ to denote the point-wise
application of concatenation.
For empty queues ($\mpQEmpty$), unavailable queues ($\mpQUnavail$), and queue
concatenations ($\cdot$), we simply lift the process-level queue constructs
to type-level counterparts.
For a singleton message $(\roleQ, \mpLab(\mpV))$, the appropriate partial queue
$\qEnvPartial$ would be a singleton of $\stQMsg{\stLab}{\tyGround}$ (where
$\tyGround$ is the type of $\mpV$) for $\roleQ$, and an empty queue
($\stQEmpty$) for any other role.

Finally, we use judgments of the form $\gtWithCrashedRoles{\rolesC}{\gtG} \vdash
\mpM$ for sessions.
We use a global type-guided judgment, effectively asserting that all
participants in the session respect the prescribed global type, as is the case
in~\cite{JLAMP19SyncSubtyping}.
As $\highlight{$highlighted$}$, the global type with crashed roles 
$\gtWithCrashedRoles{\rolesC}{\gtG}$ must
have some associated configuration $\stEnv; \qEnv$, used to type the
processes and the queues respectively.
Moreover, all the entries in the configuration must be present in the session.

Most rules in~\cref{fig:processes:typesystem}
assign the corresponding session type according to the behaviour of the
process.
For example, ($\highlight{\text{highlighted}}$)
rule \inferrule{t-$\mpQUnavail$}
assigns the unavailable queue type $\stQUnavail$
 to a unavailable queue
$\mpQUnavail$;
rules $\inferrule{t-out}$ and $\inferrule{t-ext}$ assign internal
and external choice types to input and output processes;
($\highlight{\text{highlighted}}$) rule
\inferrule{t-$\mpCrash$} (resp.~\inferrule{t-$\mpNil$})
assigns the crash termination
$\stStop$ (resp.~successful termination $\stEnd$)
to a crashed process $\mpCrash$ (resp.~inactive process $\mpNil$).

\begin{example}
\label{ex:typing_system}
Consider the process that acts as the role $\roleFmt{C}$ in
our Simpler Logging example (\Cref{sec:overview} and \Cref{ex:configuration_safety}):
{\small $\mpP[\roleFmt{C}]  =
\procoutNoVal{\roleFmt{I}}{\labFmt{read}}{\procin{\roleFmt{I}}{\labFmt{report}(\mpx)}{\mpNil}}$},
and a message queue {\small $\mpH[\roleFmt{C}] = \mpQEmpty$}.
Process $\mpP[\roleFmt{C}]$
has the type $\stT[\roleFmt{C}]$,
and
queue $\mpH[\roleFmt{C}]$
has the type $\stQEmpty$,
which can be verified in the standard way.
If we follow a crash reduction, \eg by the rule \inferrule{r-$\lightning$},
  the session evolves as {\small $\mpPart{\roleFmt{C}}{\mpP[\roleFmt{C}]} \mpPar
  \mpPart{\roleFmt{C}}{\mpH[\roleFmt{C}]}
  \;\redCrash{\roleP}{\rolesR}\;
\mpPart{\roleFmt{C}}{\mpCrash} \mpPar
  \mpPart{\roleFmt{C}}{\mpQUnavail}$}, where, by
 \inferrule{t-$\mpCrash$}, $\mpP[\roleFmt{C}]$ is typed by $\stStop$,  and
 $\mpH[\roleFmt{C}]$ is typed by $\stQUnavail$.  %
\iftoggle{full}{Full example is available in~\cref{sec:app_examples},~\cref{ex:typing_system_full}.}{}
  \end{example}

\subsection{Properties of Typed Sessions}
\label{sec:type:system:results}
We present the main properties of typed sessions: \emph{subject reduction} (\cref{lem:sr}),
\emph{session fidelity} (\cref{lem:sf}), \emph{deadlock-freedom} (\cref{lem:session_deadlock_free}), and
\emph{liveness} (\cref{lem:session_live}).

\emph{Subject reduction} states that well-typedness of sessions are preserved
by reduction.
In other words, a session governed by a global type continues to be
governed by a global type.

\begin{restatable}[Subject Reduction]{theorem}{lemSubjectReduction}%
  \label{lem:sr}
  If
  \;$\gtWithCrashedRoles{\rolesC}{\gtG} \vdash \mpM$\;
  and
  \;$\mpM \mpMove[\rolesR] \mpMi$,\;
  then either
  \;$\gtWithCrashedRoles{\rolesC}{\gtG} \vdash \mpMi$,\;
  or there exists \;$\gtWithCrashedRoles{\rolesCi}{\gtGi}$\; such that
  \;$\gtWithCrashedRoles{\rolesC}{\gtG} \gtMove{\rolesR}
  \gtWithCrashedRoles{\rolesCi}{\gtGi}$\;
  and
  \;$\gtWithCrashedRoles{\rolesCi}{\gtGi} \vdash \mpMi$.
\end{restatable}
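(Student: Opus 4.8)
The plan is to invert the session typing rule \inferrule{t-sess} and then proceed by case analysis on the last rule used in the reduction $\mpM \mpMove[\rolesR] \mpMi$ (\cref{def:session:red}). Inversion of \inferrule{t-sess} supplies a configuration $\stEnv; \qEnv$ with $\stEnvAssoc{\gtWithCrashedRoles{\rolesC}{\gtG}}{\stEnv; \qEnv}{\rolesR}$, together with $\vdash \mpP[i] : \stEnvApp{\stEnv}{\roleP[i]}$ and $\vdash \mpH[i] : \stEnvApp{\qEnv}{-, \roleP[i]}$ for every $i$. The overarching strategy is that each process reduction either leaves the governing configuration (hence the global type) unchanged, giving the first disjunct, or induces a \emph{single} configuration transition $\stEnv; \qEnv \stEnvMoveGenAnnot \stEnvi; \qEnvi$ (\cref{def:mpst-env-reduction}). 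In the latter case I apply Completeness of Association (\cref{thm:gtype:proj-comp}) to obtain a matching move $\gtWithCrashedRoles{\rolesC}{\gtG} \gtMove[\stEnvAnnotGenericSym]{\rolesR} \gtWithCrashedRoles{\rolesCi}{\gtGi}$ with $\stEnvAssoc{\gtWithCrashedRoles{\rolesCi}{\gtGi}}{\stEnvi; \qEnvi}{\rolesR}$, and then re-apply \inferrule{t-sess} to the concrete $\stEnvi; \qEnvi$ to conclude $\gtWithCrashedRoles{\rolesCi}{\gtGi} \vdash \mpMi$.

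The prerequisite is a set of routine inversion lemmas that absorb \inferrule{t-sub}. An output process $\procout{\roleQ}{\mpLab}{\mpE}{\mpP}$ of type $\stT$ forces $\stT$ to unfold to an internal choice towards $\roleQ$ whose label set contains $\mpLab$, with $\vdash \mpP : \stTi$ and $\stTi \stSub \stT[k]$ for the matching continuation $\stT[k]$; dually, an external choice absorbs into a branching type over the same sender, where the process handles a \emph{superset} of the type's labels; and $\mpNil$ (resp.\ $\mpCrash$) forces $\stEnd$ (resp.\ $\stStop$). I also need a queue-typing inversion lemma relating $\vdash \mpH[i] : \stEnvApp{\qEnv}{-, \roleP[i]}$ to the contents of each $\stEnvApp{\qEnv}{\roleQ, \roleP[i]}$: a leading message forces a leading typed message of the same label, and absence of any message from $\roleQ$ forces $\stEnvApp{\qEnv}{\roleQ, \roleP[i]} = \stQEmpty$.

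With these, the cases proceed uniformly. \inferrule{r-send} and \inferrule{r-send-\mpCrash} both match \inferrule{\iruleTCtxOut} — since $\stQCons{\stQUnavail}{\stM} = \stQUnavail$, the one configuration rule covers delivery to live and to crashed receivers; inversion gives an internal choice containing $\mpLab$, so the rule fires, and $\stTi \stSub \stT[k]$ with \inferrule{t-sub} re-types the residual process by the continuation. \inferrule{r-rcv} matches \inferrule{\iruleTCtxIn} via the queue-typing lemma. \inferrule{r-rcv-$\odot$} matches \inferrule{\iruleTCtxCrashDetect}: the process-level side condition $\nexists \mpLab, \mpV : (\roleQ, \mpLab(\mpV)) \in \mpH[\roleP]$ yields $\stEnvApp{\qEnv}{\roleQ, \roleP} = \stQEmpty$, and — crucially — I invoke configuration safety (\cref{lem:ext-proj}, clause \inferrule{\iruleSafeCrash}) to guarantee that the local type $\stEnvApp{\stEnv}{\roleP}$ itself carries the $\stCrashLab$ branch required by the rule (external-choice subtyping alone does not transport it from the process to the type). \inferrule{r-$\lightning$}, when the crashing role is active in $\gtG$, matches \inferrule{\iruleTCtxCrash}, with $\roleP \notin \rolesR$ discharging the side condition of \cref{thm:gtype:proj-comp}. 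Finally, \inferrule{r-cond-T}/\inferrule{r-cond-F} give the first disjunct directly, since inversion of \inferrule{t-cond} assigns both branches the same type so that $\stEnv; \qEnv$ is untouched; and \inferrule{r-struct} is handled by first proving a subsidiary lemma that $\equiv$ preserves session typing — the non-trivial clauses being recursion unfolding $\mu \mpX.\mpP \equiv \mpP\subst{\mpX}{\mu \mpX.\mpP}$ (type-level equi-recursion, mirrored by \inferrule{\iruleTCtxRec}) and the swap of independent queued messages (respected by queue typing) — and then applying it on both sides of the wrapped core reduction.

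The main obstacle is twofold. First, the interaction of \emph{full} subtyping (\inferrule{\iruleStSubOut}) with the reconstruction step: a sender's governing local type may offer strictly more branches than the process selects, so I must carefully confirm both that \inferrule{\iruleTCtxOut} fires on the chosen label and that the residual is typed by the possibly larger continuation — the bookkeeping around $\stTi \stSub \stT[k]$ is delicate but routine. The harder point I expect is reconciling the crash rule \inferrule{r-$\lightning$} (whose syntactic premise $\mpP \neq \mpNil$ is \emph{weaker} than the semantic premise $\stEnvApp{\stEnv}{\roleP} \notin \setenum{\stEnd, \stStop}$ of \inferrule{\iruleTCtxCrash}) with the type-level semantics. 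This forces a split on $\stEnvApp{\stEnv}{\roleP}$: a genuine choice type goes through completeness cleanly; type $\stStop$ is a no-op, as the incoming queue is already $\stQUnavail$ by \cref{def:assoc}, yielding the first disjunct; but a role whose type is already $\stEnd$ while its process is not yet $\mpNil$ is the subtle subcase, since crashing it would move it to $\stStop$ and thus into $\rolesC$, which is not reachable by any global-type move (\inferrule{\iruleGtMoveCrash} requires the role to be in $\gtRoles{\gtG}$). Threading the association invariant and queue typing through this $\stEnd$/$\stStop$ boundary — most likely by an auxiliary invariant ensuring that $\stEnd$-typed processes reduce as $\mpNil$ before crashing becomes observable at the type level — is where I expect the real work to lie.
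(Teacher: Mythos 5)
Your overall skeleton coincides with the paper's proof: invert \inferrule{t-sess} to extract an associated configuration, induct on the derivation of $\mpM \mpMove[\rolesR] \mpMi$, mirror each process reduction by a single configuration transition, invoke \cref{thm:gtype:proj-comp} to obtain the matching global-type move, and re-apply \inferrule{t-sess} (the \inferrule{r-struct} case is likewise discharged in the paper by a typing-congruence lemma). One deviation is your treatment of \inferrule{r-rcv-$\odot$}: your premise that ``external-choice subtyping alone does not transport'' the $\stCrashLab$ branch from the process type to the context entry is false for this paper's subtyping. Rule \inferrule{\iruleStSubIn} (\cref{def:subtyping}) carries the side condition $\nexists j \in J : \stLab[j] = \stCrashLab$, which forbids dropping a crash label when passing to the supertype, and the paper's proof uses exactly this to conclude that the crash index lies in the context's label set. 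Your detour through \cref{lem:ext-proj} and clause \inferrule{\iruleSafeCrash} is not circular (that lemma is proved independently of subject reduction) and reaches the same conclusion, but it is unnecessary machinery.

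The genuine gap is your \inferrule{r-$\lightning$} case when $\stEnvApp{\stEnv}{\roleP} = \stEnd$. You are right that the subcase exists: $\mpP \neq \mpNil$ does not force $\stEnvApp{\stEnv}{\roleP} \neq \stEnd$, since for instance $\mpIf{\mpTrue}{\mpNil}{\mpNil}$ is typed $\stEnd$ by \inferrule{t-cond}. But your proposed repair --- an ``auxiliary invariant ensuring that $\stEnd$-typed processes reduce as $\mpNil$ before crashing becomes observable'' --- cannot exist, because \inferrule{r-$\lightning$} fires on such a process in a single step, leaving no intermediate reduction for any invariant to exploit. After that step, the entry $\mpPart{\roleP}{\mpCrash} \mpPar \mpPart{\roleP}{\mpQUnavail}$ can only be typed by a configuration with $\stEnvApp{\stEnvi}{\roleP} = \stStop$ and an unavailable receiving queue, which by \cref{def:assoc} forces $\roleP \in \rolesCi$; this is unreachable both under the same $\gtWithCrashedRoles{\rolesC}{\gtG}$ (where $\roleP \notin \rolesC$) and under any single global step, since \inferrule{\iruleGtMoveCrash} is the only rule that enlarges the crashed set and it requires $\roleP \in \gtRoles{\gtG}$, impossible when $\roleP$'s entry is $\stEnd$. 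So this case remains open in your proof. For what it is worth, the paper's own proof is no better here: it simply asserts that $\mpP \neq \mpNil$ and $\vdash \mpP : \stEnvApp{\stEnv}{\roleP}$ yield $\stEnvApp{\stEnv}{\roleP} \neq \stEnd$ (and $\neq \stStop$), which the conditional counterexample refutes --- so you have located a real crack, but closing it requires a change outside this induction (e.g.\ restricting \inferrule{t-cond} so that non-$\mpNil$ processes cannot be typed $\stEnd$, or a congruence collapsing such processes to $\mpNil$), not an invariant threaded through it. Your $\stStop$ subcase, by contrast, is handled correctly: there the reductum is the same crashed entry with an already-unavailable queue, and the first disjunct applies.
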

\iftoggle{full}{
\begin{proof}
By induction on the derivation of $\mpM \mpMove[\rolesR] \mpMi$. 
 See \cref{sec:proof:typesystem} for detailed proof.    \qedhere 
\end{proof}}{}

\emph{Session fidelity} states the opposite implication with regard to subject reduction:
sessions respect the progress of the governing global type .

\begin{restatable}[Session Fidelity]{theorem}{lemSessionFidelity}
  \label{lem:sf}
  If
  \;$\gtWithCrashedRoles{\rolesC}{\gtG} \vdash \mpM$\; and
  \;$\gtWithCrashedRoles{\rolesC}{\gtG} \gtMove{\rolesR}$,\;
  then there exists $\mpMi$ and \;$\gtWithCrashedRoles{\rolesCi}{\gtGi}$\; such that
  \;$\gtWithCrashedRoles{\rolesC}{\gtG} \gtMove{\rolesR}
  \gtWithCrashedRoles{\rolesCi}{\gtGi}$,
  \;$\mpM \mpMoveStar[\rolesR] \mpMi$\; and
  \;$\gtWithCrashedRoles{\rolesCi}{\gtGi} \vdash \mpMi$.
\end{restatable}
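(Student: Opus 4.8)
The plan is to reduce session fidelity to the operational correspondence already established between global types and configurations (\cref{thm:gtype:proj-sound,thm:gtype:proj-comp}), together with an inversion of the session typing rule \inferrule{t-sess}. First I would invert \inferrule{t-sess} on $\gtWithCrashedRoles{\rolesC}{\gtG} \vdash \mpM$: this exposes an associated configuration $\stEnv; \qEnv$ with $\stEnvAssoc{\gtWithCrashedRoles{\rolesC}{\gtG}}{\stEnv; \qEnv}{\rolesR}$, a typing $\vdash \mpP[i] : \stEnvApp{\stEnv}{\roleP[i]}$ for each participant process, and a typing $\vdash \mpH[i] : \stEnvApp{\qEnv}{-, \roleP[i]}$ for each queue. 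Since $\gtWithCrashedRoles{\rolesC}{\gtG} \gtMove{\rolesR}$, \cref{thm:gtype:proj-sound} yields a matching configuration transition $\stEnv; \qEnv \stEnvMoveGenAnnot \stEnvi; \qEnvi$ together with a target $\gtWithCrashedRoles{\rolesCi}{\gtGi}$ satisfying $\gtWithCrashedRoles{\rolesC}{\gtG} \gtMove[\stEnvAnnotGenericSym]{\rolesR} \gtWithCrashedRoles{\rolesCi}{\gtGi}$ and $\stEnvAssoc{\gtWithCrashedRoles{\rolesCi}{\gtGi}}{\stEnvi; \qEnvi}{\rolesR}$. The remaining work is to realise this single configuration step by a sequence of session reductions $\mpM \mpMoveStar[\rolesR] \mpMi$ and to retype $\mpMi$ by $\gtWithCrashedRoles{\rolesCi}{\gtGi}$.

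The core of the argument is a case analysis on the transition label $\stEnvAnnotGenericSym$, i.e.\ on the configuration rule that fires. Before a communication redex can be reached, the relevant process may first have to be normalised by \emph{silent} reductions: a recursion is unfolded up to $\equiv$ (via \inferrule{r-struct}) and conditionals are resolved by \inferrule{r-cond-T}/\inferrule{r-cond-F}; since recursion variables are guarded and processes are finite, this terminates at a guarded input/output, and I would package it as a small process-inversion lemma (``a process typed by an internal choice reduces silently to an output, one typed by an external choice to a branching, and one typed by $\stStop$ is $\mpCrash$''). Then, for an input label I fire \inferrule{r-rcv}, using queue-typing inversion to locate the matching head message $(\roleQ,\mpLab(\mpV))$ and a substitution lemma to type the continuation after the payload is substituted (here there is no subtyping obstruction, since an external-choice subtype only adds branches); for a crash-detection label $\ltsCrDe{\mpS}{\roleQ}{\roleP}$ I use $\stEnvApp{\stEnv}{\roleP} = \stStop$ (so $\mpP[\roleP] = \mpCrash$) together with the empty queue required by \inferrule{\iruleTCtxCrashDetect} to fire \inferrule{r-rcv-$\odot$}; for a crash label $\ltsCrash{\mpS}{\roleP}$ the global rule \inferrule{\iruleGtMoveCrash} ensures $\roleP \notin \rolesR$ while the premises $\stEnvApp{\stEnv}{\roleP} \neq \stEnd,\stStop$ of \inferrule{\iruleTCtxCrash} ensure $\mpP[\roleP]$ is active, so \inferrule{r-$\lightning$} applies. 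In each of these cases I reuse the target association supplied by \cref{thm:gtype:proj-sound}.

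The delicate case is an output/selection label. Here I would \emph{not} follow the particular branch chosen by \cref{thm:gtype:proj-sound}, since under full subtyping the sender process may offer strictly fewer branches than $\stEnvApp{\stEnv}{\roleP}$. Instead I normalise $\mpP[\roleP]$ to an output $\procout{\roleQ}{\mpLab}{\mpE}{\mpPi}$ and fire \inferrule{r-send} (or \inferrule{r-send-$\lightning$} when $\roleQ$ has already crashed), appending the message to $\roleQ$'s incoming queue. Because the emitted label is necessarily one of the internal-choice branches of $\stEnvApp{\stEnv}{\roleP}$, and hence one of the branches of $\gtG$, the induced configuration step is a legitimate \inferrule{\iruleTCtxOut} move; \cref{thm:gtype:proj-comp} then lifts it to a matching global-type move and a target association, which I use for retyping. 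In every case I close by reassembling the session with \inferrule{t-sess} against the target configuration: the acting role is retyped from the continuation obtained by inverting \inferrule{t-out}/\inferrule{t-ext} (resp.\ \inferrule{t-$\mpCrash$}), the queues are retyped by append/consume bookkeeping, and all non-participating entries are carried over unchanged. I expect the main obstacle to be exactly this reconciliation of the sender's internal choice with the global branching under subtyping, combined with showing that the silent normalisation steps both terminate at the right redex and preserve the association invariant of \cref{def:assoc}, so that the concluding \inferrule{t-sess} application goes through.
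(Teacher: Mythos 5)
Your proposal is correct and shares the paper's backbone---invert \inferrule{t-sess} to expose an associated configuration, use the operational correspondence between global types and configurations, fire matching session reductions, and re-assemble with \inferrule{t-sess}---but its organisation genuinely differs in three respects. First, the paper proceeds by induction on the derivation of the global reduction, handling \inferrule{\iruleGtMoveCrash}, \inferrule{\iruleGtMoveOut}, \inferrule{\iruleGtMoveIn}, \inferrule{\iruleGtMoveOrph}, \inferrule{\iruleGtMoveCrDe} directly (each case invoking \cref{thm:gtype:proj-sound} to obtain the configuration step and target association) and discharging \inferrule{\iruleGtMoveRec}, \inferrule{\iruleGtMoveCtx}, \inferrule{\iruleGtMoveCtxi} by the inductive hypothesis; you instead apply \cref{thm:gtype:proj-sound} once and case-split on the resulting transition label, which needs no induction at all because the configuration semantics has no context rules---soundness already flattens reductions under prefixes. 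Second, in the output case the paper reuses the soundness-supplied step and writes the context entry as if it were exactly the process's singleton output type, although typing inversion only yields $\stOut{\roleQ}{\mpLab}{\tyGround}\stSeq{\stT} \stSub \stEnvApp{\stEnv}{\roleP}$; you instead fire \inferrule{r-send} (or \inferrule{r-send-\mpCrash}) with the label the process actually emits and lift the induced \inferrule{\iruleTCtxOut} step through \cref{thm:gtype:proj-comp}. Both are legitimate, since the target $\gtWithCrashedRoles{\rolesCi}{\gtGi}$ is existentially quantified and even the paper ends up with the correspondence-provided reduction rather than the one in the premise, but your detour via completeness is robust precisely where the paper is imprecise. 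Third, your silent-normalisation lemma (resolving conditionals by \inferrule{r-cond-T}/\inferrule{r-cond-F} and unfolding recursion up to $\equiv$ before reaching the redex) is what actually justifies the $\mpMoveStar[\rolesR]$ in the statement; the paper elides this by simply assuming the session already exposes an output, input, or crashed process at the acting role. In short: same machinery, a leaner top-level structure, and two hand-waved spots of the paper's argument made precise.
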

\iftoggle{full}{
\begin{proof}
By induction on the derivation of $\gtWithCrashedRoles{\rolesC}{\gtG} \gtMove{\rolesR}$.
See \cref{sec:proof:typesystem} for detailed proof.      \qedhere  
\end{proof}}{}

Session \emph{deadlock-freedom} means that
the `successful' termination of a session may include crashed processes and their respective unavailable
incoming queues -- but reliable roles (which cannot crash) can only successfully terminate by reaching
inactive processes with empty incoming queues.
We formalise the definition of deadlock-free sessions in~\Cref{def:session_df} and show that a well-typed session is deadlock-free
in~\cref{lem:session_deadlock_free}. %
\iftoggle{full}{The proof of~\cref{lem:session_deadlock_free} is available in~\cref{sec:proof:typesystem}.}{}

\begin{definition}[Deadlock-Free Sessions]
\label{def:session_df}
A session $\mpM$ is \emph{deadlock-free} iff
\;$\mpM \mpMoveStar[\rolesR] \mpMi \mpNotMove[\rolesR]$\;
 implies
either \;$\mpMi \equiv \mpPart\roleP\mpNil \mpPar \mpPart\roleP \mpQEmpty$,\;
or
\;$\mpMi \equiv \mpPart\roleP\mpCrash \mpPar \mpPart\roleP \mpQUnavail$.
\end{definition}

\begin{restatable}[Session Deadlock-Freedom]{theorem}{lemSessionDF}%
\label{lem:session_deadlock_free}
If
 \;$\gtWithCrashedRoles{\rolesC}{\gtG} \vdash \mpM$,\;
 then $\mpM$ is deadlock-free.
\end{restatable}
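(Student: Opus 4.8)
The plan is to reduce the claim to a statement about well-typed sessions \emph{in normal form}, and then transfer configuration-level deadlock-freedom (\Cref{lem:ext-proj}) through the typing relation. First I would observe that, by iterating subject reduction (\Cref{lem:sr}) along the reduction sequence $\mpM \mpMoveStar[\rolesR] \mpMi$, the residual $\mpMi$ remains well-typed: there is a (possibly advanced) global type $\gtWithCrashedRoles{\rolesCi}{\gtGi}$ with $\gtWithCrashedRoles{\rolesCi}{\gtGi} \vdash \mpMi$. Hence it suffices to prove that whenever $\gtWithCrashedRoles{\rolesCi}{\gtGi} \vdash \mpMi$ and $\mpMi \mpNotMove[\rolesR]$, the session $\mpMi$ is a parallel composition of participants each of the form $\mpPart\roleP\mpNil \mpPar \mpPart\roleP\mpQEmpty$ or $\mpPart\roleP\mpCrash \mpPar \mpPart\roleP\mpQUnavail$, as demanded by \Cref{def:session_df}. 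By inversion on rule \inferrule{t-sess}, there is a configuration $\stEnvi; \qEnvi$ with $\stEnvAssoc{\gtWithCrashedRoles{\rolesCi}{\gtGi}}{\stEnvi; \qEnvi}{\rolesR}$ such that each participant process is typed by $\stEnvApp{\stEnvi}{\roleP}$ and each incoming queue by $\stEnvApp{\qEnvi}{-, \roleP}$; by \Cref{lem:ext-proj} this configuration is $\rolesR$-deadlock-free.

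The crux, and the step I expect to be the main obstacle, is an \emph{operational correspondence} lemma: if $\mpMi$ is well-typed and stuck ($\mpMi \mpNotMove[\rolesR]$), then its associated configuration is also stuck, i.e.\ $\stEnvNotMoveMaybeCrashP[\rolesR]{\stEnvi; \qEnvi}$. I would prove the contrapositive by case analysis on the enabling configuration transition, using inversion of the process and queue typing rules together with the association invariant (\Cref{def:assoc}):
\begin{itemize}
\item an output move \inferrule{\iruleTCtxOut} forces $\stEnvApp{\stEnvi}{\roleP}$ to be an internal choice, so the process at $\roleP$ is (up to \inferrule{t-sub}, unfolding of recursion, and resolution of conditionals) an output process, which always reduces by \inferrule{r-send} or \inferrule{r-send-\mpCrash};
\item an input move \inferrule{\iruleTCtxIn} places a matching message at the head of the receiver's incoming queue, enabling \inferrule{r-rcv};
\item a crash-detection move \inferrule{\iruleTCtxCrashDetect} requires the sender's type to be $\stStop$ (hence, by typing, a $\mpCrash$ process) with empty queue to the receiver, while the receiver's type carries a $\stCrashLab$ branch (hence its process offers a $\mpCrashLab$ branch), so \inferrule{r-rcv-$\odot$} applies;
\item a crash of an unreliable role \inferrule{\iruleTCtxCrash} requires its type to be neither $\stEnd$ nor $\stStop$, so the process is active and \inferrule{r-$\lightning$} applies.
\end{itemize}
In each case a session reduction exists, contradicting $\mpMi \mpNotMove[\rolesR]$. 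Crucially, subtyping (\inferrule{\iruleStSubOut}/\inferrule{\iruleStSubIn}) does not obstruct this argument, since it only shrinks the set of selectable output labels and enlarges the set of offered input branches, leaving at least one reduction enabled.

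Finally, from $\stEnvNotMoveMaybeCrashP[\rolesR]{\stEnvi; \qEnvi}$ together with $\rolesR$-deadlock-freedom, applying \Cref{def:mpst-env-deadlock-free} (via the empty reduction $\stEnvi; \qEnvi \stEnvMoveMaybeCrashStar[\rolesR] \stEnvi; \qEnvi$, so that \cref{item:df:reduces} applies) yields that $\stEnvi$ splits into $\stEnd$-entries and $\stStop$-entries, with all queues empty except the unavailable incoming queues of crashed roles. I would then close the argument by typing inversion: a process typed $\stEnd$ that cannot reduce must be $\mpNil$ (outputs and inputs carry choice types rather than $\stEnd$, conditionals reduce, and guarded recursions unfold to a choice-typed redex), a process typed $\stStop$ must be $\mpCrash$ (the only process admitting type $\stStop$, by \inferrule{t-$\mpCrash$}), a queue typed $\stQEmpty$ is $\mpQEmpty$, and a queue typed $\stQUnavail$ is $\mpQUnavail$ (by inversion on \inferrule{t-$\mpQEmpty$}, \inferrule{t-$\mpQUnavail$}, and \inferrule{t-$\cdot$}). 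Assembling these role by role gives exactly the shape required by \Cref{def:session_df}. Besides the operational-correspondence case analysis, the delicate point I expect to manage carefully is the queue-level inversion matching the aggregated $\stEnvApp{\qEnvi}{-, \roleP}$ to the concrete incoming queue $\mpH[\roleP]$.
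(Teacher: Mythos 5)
Your proof is correct, but it takes a genuinely different route from the paper's at the key step. The paper obtains the theorem as a near-immediate corollary of \Cref{def:mpst-env-deadlock-free}, \Cref{def:session_df}, \Cref{lem:ext-proj}, subject reduction (\Cref{lem:sr}) and, crucially, session fidelity (\Cref{lem:sf}): stuckness is transferred from the session to the configuration by going \emph{through the global type} --- if the governing global type could still move (equivalently, by completeness of association, \Cref{thm:gtype:proj-comp}, if the associated configuration could still move), session fidelity would yield a session reduction, contradicting $\mpMi \mpNotMove[\rolesR]$. You never invoke \Cref{lem:sf}; instead you prove the required correspondence ``configuration move implies session move'' directly, by case analysis on the configuration transitions together with process and queue typing inversion. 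This is sound: your four cases are exactly the moves permitted by $\stEnvMoveMaybeCrash[\rolesR]$, and in each a session reduction is indeed enabled (up to structural congruence, unfolding of recursion, and firing of conditionals). But it amounts to re-proving inline, at the configuration level, essentially the same mirroring argument that the paper's session-fidelity proof already packages against global-type reductions. What each approach buys: the paper's route makes the theorem a one-line corollary of established machinery; yours is more self-contained (it needs neither \Cref{lem:sf} nor progress of global types) at the cost of duplicating that case analysis. The endgame coincides in both: configuration deadlock-freedom from \Cref{lem:ext-proj} applied via the zero-step reduction, then inversion to recover $\mpNil$\slash$\mpCrash$ processes and $\mpQEmpty$\slash$\mpQUnavail$ queues.

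One small imprecision is worth fixing: $\mpCrash$ is not literally ``the only process admitting type $\stStop$'' --- for instance $\mpIf{\mpE}{\mpCrash}{\mpCrash}$ is also typed $\stStop$ by \inferrule{t-cond}, and likewise a recursion whose unfolding is such a term. This does not break your argument, since any such non-$\mpCrash$ process reduces (so it cannot appear in a stuck session, and in your contrapositive case analysis the session still moves), but the inversion step should be phrased as ``typed $\stStop$ \emph{and irreducible} implies $\mpCrash$'', exactly as you phrased the corresponding step for $\stEnd$ and $\mpNil$; the same caveat applies where you claim the sender in the crash-detection case is ``by typing, a $\mpCrash$ process''.
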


Finally, we show that well-typed sessions guarantee the property of \emph{liveness}:
a session is \emph{live} when all its input processes will be performed eventually,
and all its queued messages will be consumed eventually.
We formalise the definition of live sessions in~\Cref{def:session_live} and conclude by showing
that a well-typed session is live
in~\cref{lem:session_live}. %
\iftoggle{full}{The proof of~\cref{lem:session_live} is available in~\cref{sec:proof:typesystem}.}{}

\begin{definition}[Live Sessions]
\label{def:session_live}
A session $\mpM$ is \emph{live} iff
\;$\mpM \mpMoveStar[\rolesR] \mpMi
\equiv \mpPart\roleP\mpP \mpPar \mpPart\roleP \mpH[\roleP] \mpPar \mpMii$\;
 implies:
 \begin{enumerate*}
 \item if \;$\mpH[\roleP] = \left(\roleQ,\mpLab(\mpV)\right)\cdot\mpHi[\roleP]$, \;then $\exists \mpPi, \mpMiii:
 \mpMi \mpMoveStar[\rolesR] \mpPart\roleP\mpPi \mpPar \mpPart\roleP \mpHi[\roleP] \mpPar \mpMiii$; and
 \item if \;$\mpP =  \sum_{i\in I}\procin{\roleQ}{\mpLab_i(\mpx_i)}{\mpP_i}$,  %
 \;then $\exists k \in I, \mpW, \mpHi[\roleP], \mpMiii:
 \mpMi \mpMoveStar[\rolesR] \mpPart\roleP \mpP_k\subst{\mpx_k}{\mpW} \mpPar \mpPart\roleP \mpHi[\roleP]
 \mpPar \mpMiii$.
 \end{enumerate*}
\end{definition}

\begin{restatable}[Session Liveness]{theorem}{lemSessionLive}%
\label{lem:session_live}
If
 \;$\gtWithCrashedRoles{\rolesC}{\gtG} \vdash \mpM$,\;
 then $\mpM$ is live.
\end{restatable}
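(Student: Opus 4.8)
The plan is to transfer the configuration-level liveness established in \Cref{lem:ext-proj} to the process level, using subject reduction (\Cref{lem:sr}) to track the associated configuration along reductions, and the operational correspondence of \Cref{thm:gtype:proj-comp,thm:gtype:proj-sound} together with session fidelity (\Cref{lem:sf}) to reflect the configuration's live reductions back as session reductions. First I would fix an arbitrary reductum and recast each of the two liveness obligations as a fact about an associated live configuration.

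Concretely, assume $\gtWithCrashedRoles{\rolesC}{\gtG} \vdash \mpM$ and $\mpM \mpMoveStar[\rolesR] \mpMi \equiv \mpPart\roleP\mpP \mpPar \mpPart\roleP \mpH[\roleP] \mpPar \mpMii$. Applying \Cref{lem:sr} repeatedly along the prefix reduction yields $\gtWithCrashedRoles{\rolesCi}{\gtGi} \vdash \mpMi$ for some runtime global type reachable via $\gtMoveStar[\rolesR]$; inverting \inferrule{t-sess} supplies an associated configuration $\stEnvi; \qEnvi$ with $\stEnvAssoc{\gtWithCrashedRoles{\rolesCi}{\gtGi}}{\stEnvi; \qEnvi}{\rolesR}$ in which $\stEnvApp{\stEnvi}{\roleP}$ types $\mpP$ and $\stEnvApp{\qEnvi}{-, \roleP}$ types $\mpH[\roleP]$, and by \Cref{lem:ext-proj} this configuration is $\rolesR$-live. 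For obligation~(1), $\mpH[\roleP] = (\roleQ, \mpLab(\mpV))\cdot\mpHi[\roleP]$; since $\mpCrashLab$ is never enqueued, inverting the queue typing rules (\inferrule{t-msg}, \inferrule{t-$\cdot$}) and reading off the association give $\stEnvApp{\qEnvi}{\roleQ, \roleP} = \stQCons{\stQMsg{\stLab}{\tyGround}}{\stQ}$ with $\stLab \neq \stCrashLab$, so the liveness of $\stEnvi; \qEnvi$ (condition~\ref{item:liveness_consume}) guarantees this message is eventually consumed by a reachable input transition $\stEnvMoveInAnnot{\roleP}{\roleQ}{\stChoice{\stLab}{\tyGround}}$. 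For obligation~(2), inverting \inferrule{t-ext} (possibly through \inferrule{t-sub}) shows $\stEnvApp{\stEnvi}{\roleP}$ is an external choice towards $\roleQ$, so condition~\ref{item:liveness_receiving_cd} guarantees a reachable input or crash-detection transition at $\roleP$.

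It then remains to reflect each witnessing configuration transition as a genuine session reduction. Following the guaranteed configuration reduction step by step, completeness of association (\Cref{thm:gtype:proj-comp}) lifts every such step to a move of $\gtWithCrashedRoles{\rolesCi}{\gtGi}$, and session fidelity (\Cref{lem:sf}) supplies session reductions realising that move, with the intervening \inferrule{r-cond-T}/\inferrule{r-cond-F} and recursion-unfolding steps (which have no configuration counterpart) interleaved freely. Choosing at each stage the branch that declines the optional crash of $\roleP$ — permitted because crash transitions are optional under $\mpMove[\rolesR]$ — I obtain $\mpMi \mpMoveStar[\rolesR] \mpPart\roleP\mpPi \mpPar \mpPart\roleP \mpHi[\roleP] \mpPar \mpMiii$ in case~(1), and $\mpMi \mpMoveStar[\rolesR] \mpPart\roleP \mpP_k\subst{\mpx_k}{\mpW} \mpPar \mpPart\roleP \mpHi[\roleP] \mpPar \mpMiii$ for some $k \in I$ in case~(2), exactly as \Cref{def:session_live} demands.

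The main obstacle is precisely this transport step. Configuration liveness is phrased coinductively over fair non-crashing paths (\Cref{def:mpst-env-live,def:non-crash-fair-live-path}), whereas the session obligation is the \emph{existence} of a single finite reduction achieving a \emph{specific} action; extracting a finite witness and matching the designated input is delicate, since \Cref{lem:sf} only guarantees the session can follow \emph{some} global-type move rather than a chosen one. Discharging this requires iterating fidelity along the configuration's live path while exploiting the tight labelled correspondence of \Cref{thm:gtype:proj-comp,thm:gtype:proj-sound} to keep the session in lock-step with $\stEnvi; \qEnvi$, and simultaneously scheduling the silent conditional and recursion steps and refusing the optional crashes of $\roleP$, so that the designated head message is genuinely consumed — rather than orphaned by a premature crash — at the process level.
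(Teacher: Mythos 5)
Your proposal is correct and takes essentially the same route as the paper, whose entire proof of this theorem is a one-line appeal to \Cref{def:mpst-env-live}, \Cref{def:session_live}, \Cref{lem:ext-proj}, \Cref{lem:sr}, and \Cref{lem:sf} --- exactly the ingredients you assemble (configuration liveness via association, subject reduction to track the associated configuration, fidelity to reflect configuration moves back as session reductions). Your elaboration, including the honestly flagged need for the labelled, lock-step form of session fidelity (which its case-by-case proof in fact establishes), is a faithful unfolding of the paper's terse corollary rather than a different argument.
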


\section{\theTool: Generating Scala Programs from Protocols}
\label{sec:impl}

In this section, we present our toolchain \theTool that implements our
extended MPST theory with crash-stop failures.
\theTool processes protocols represented in the \Scribble protocol
description language,
and generates protocol-conforming Scala code
that uses the \Effpi concurrency library.
A user specifies a multiparty protocol in \Scribble
as input, introduced in \cref{sec:impl:prelims}.
We show the style of our generated code in \cref{sec:impl:overview}, and how a
developer can use the generated code to implement multiparty protocols.
As mentioned in \cref{sec:overview},
generating channels for each process and type poses an interesting challenge,
explained in \cref{sec:impl:boringdetail}.

\subsection{Specifying a Multiparty Protocol in \Scribble}
\label{sec:impl:prelims}

The \Scribble Language~\cite{YHNN2013} is a multiparty protocol
description language that relates closely to MPST theory
(\cf \cite{FeatherweightScribble}),
and provides a programmatic way to express global types.
As an example, \cref{fig:impl:egProtocol} describes the following global type
of a simple distributed logging protocol:

\smallskip
  {\centerline{
  \( {\small{
    \gtG =
    \gtRec{\gtFmt{\mathbf{t_{0}}}}{
      \gtCommRaw{\roleFmt{u}}{\roleFmt{l}}{
        \begin{array}{@{}l@{}}
          \gtCommChoice{\gtMsgFmt{write}}{\tyString}{ \gtFmt{\mathbf{t_{0}}} }
          \gtFmt{,}\quad
          \gtCommChoice{\gtMsgFmt{read}}{}{
            \gtCommSingle{\roleFmt{l}}{\roleFmt{u}}{\gtMsgFmt{report}}
                         {\stFmtC{Log}}{ \gtFmt{\mathbf{t_{0}}} }
          }
          \gtFmt{,}\quad
          \gtCommChoice{\gtCrashLab}{}{ \gtEnd }
        \end{array}
      }
    }
    }}
    .
  \)
  }}

{\small
\begin{figure}[t]
  \begin{lstlisting}[language=Scribble,gobble=4]
    global protocol SimpleLogger(role U, reliable role L) [*\label{line:scribble:reliable}*]
    { rec t0 { choice at U { write(String) from U to L; [*\label{line:scribble:choice}*]
                             continue t0;               }
                        or { read from U to L;
                             report(Log) from L to U;
                             continue t0;               }
                        or { crash from U to L;         } } } [*\label{line:scribble:crash}*]
  \end{lstlisting}
  \caption{A Simple Logger protocol in \Scribble.}
  \label{fig:impl:egProtocol}
\end{figure}
}

\smallskip
The global type is described by a \Scribble
\lstinline[language=Scribble]+global protocol+, with roles declared on
\cref{line:scribble:reliable}.
A transmission in the global type
(\eg $\gtCommRaw{\roleFmt{u}}{\roleFmt{l}}{\cdots}$)
is in the form of an interaction statement
(\eg \lstinline[language=Scribble]+... from U to L;+),
except that choice
(\ie with an index set $|I| > 1$) must be marked explicitly by a
\lstinline[language=Scribble]+choice+ construct (\cref{line:scribble:choice}).
Recursions and type variables in the global types are in the forms of
\lstinline[language=Scribble]+rec+ and \lstinline[language=Scribble]+continue+
statements, respectively.

In order to express our new theory,
we need two extensions to the language:
\begin{enumerate*}[label=\emph{(\arabic*)}]
  \item a reserved label \lstinline[language=Scribble]+crash+ to mark
    crash handling branches (\cf the special label $\gtCrashLab$ in the
    theory), \eg on \cref{line:scribble:crash}; and
  \item a \lstinline[language=Scribble]+reliable+ keyword to mark the reliable
    roles in the protocol (\cf the reliable role set $\rolesR$ in the theory).
    Roles are assumed unreliable unless declared using the
    \lstinline[language=Scribble]+reliable+ keyword, \eg \texttt{L} on
    \cref{line:scribble:reliable}.
\end{enumerate*}

\subsection{Generating \Scala Code from \Scribble Protocols}
\label{sec:impl:overview}

\subparagraph*{The \Effpi Concurrency Library}\!\!\!\!\cite{PLDI19Effpi}
provides an embedded Domain Specific Language (DSL) offering a simple
actor-based API\@.
The library utilises advanced type system features in Scala~3,
and provides both type-level and value-level constructs for processes and
channels.
In particular,
the type-level constructs reflect the behaviour of programs
(\ie processes), and thus can be used as specifications.
Following this intuition,
we generate process types that reflect local types from our theory,
as well as a tentative process implementing that type (by providing some
default values where necessary).

\begin{figure}
  {\small
  \begin{lstlisting}[language=effpi, gobble=4]
    // (i) label and payload declarations
    case class Log() // payload type
    case class Read() // label types
    case class Report(x : Log)
    case class Write(x : String) [*\hrule*]
    // (ii) recursion variable declarations
    sealed abstract class RecT0[A]() extends RecVar[A]("RecT0")
    case object RecT0 extends RecT0[Unit] [*\hrule*]
    // (iii) local type declarations
    type U[C0 <: OutChan[Read | Write], C1 <: InChan[Report]] = [*\label{line:effpi:U}*]
      Rec[RecT0,
          ( (Out[C0, Read] >>: In[C1, Report, (x0 : Report) => Loop[RecT0]]) [*\label{line:effpi:out_c0}*]
          | (Out[C0, Write] >>: Loop[RecT0]) )]

    type L[C0 <: InChan[Read | Write], C1 <: OutChan[Report]] = [*\label{line:effpi:L}*]
      Rec[RecT0,
          InErr[C0, Read | Write, (x0 : Read | Write) => L0[x0.type, C1], [*\label{line:effpi:recv-union}*]
                                  (err : Throwable) => PNil]]

    type L0[X0 <: Read | Write, C1 <: OutChan[Report]] <: Process =
      X0 match { case Read => Out[C1, Report] >>: Loop[RecT0] [*\label{line:effpi:recv-match}*]
                 case Write => Loop[RecT0] } [*\hrule*]
    // (iv) role-implementing functions
    def u(c0 : OutChan[Read | Write],
          c1 : InChan[Report]) : U[c0.type, c1.type] = {
      rec(RecT0) {
        val x0 = 0
        if (x0 == 0) {
          send(c0, new Read()) >> receive(c1) {(x1 : Report) => loop(RecT0) }
        } else {
          send(c0, new Write("")) >> loop(RecT0)
      } } }

    def l(c0 : InChan[Read | Write],
          c1 : OutChan[Report]) : L[c0.type, c1.type] =
      rec(RecT0) {
        receiveErr(c0)((x0 : Read | Write) => l0(x0, c1),
                       (err : Throwable) => nil) }

    def l0(x : Read | Write, c1 : OutChan[Report]) : L0[x.type, c1.type] =
      x match { case y : Read => send(c1, new Report(new Log())) >> loop(RecT0)
                case y : Write => loop(RecT0) } [*\hrule*]
    // (v) an entry point (main object)
    object Main {
      def main() : Unit = {
        var c0 = Channel[Read | Write]() [*\label{line:effpi:channel-create}*]
        var c1 = Channel[Report]()
        eval(par(u(c0, c1), l(c0, c1))) [*\label{line:effpi:channel-use}*]
    } }
  \end{lstlisting}
  }
  \caption{Generated \Scala code for the Simple Logger protocol in \Cref{fig:impl:egProtocol}}
  \label{fig:impl:egEffpi}
\end{figure}

\subparagraph*{Generated Code}
To illustrate our approach,
we continue with the Simple Logger example from
\cref{sec:impl:prelims},
and show the generated code in \cref{fig:impl:egEffpi}.
The generated code can be divided into five sections:
\begin{enumerate*}[label=\emph{(\roman*)}]
\item label and payload declarations,
\item recursion variable declarations,
\item local type declarations,
\item role-implementing functions, and
\item an entry point.
\end{enumerate*}

Sections \emph{(i)} and \emph{(ii)} contain boilerplate code,
where we generate type declarations for various constructs needed for
expressing local types and processes.
We draw attention to the \emph{key} sections \emph{(iii)} and \emph{(iv)},
where we generate a representation of local types for each role,
as well as a tentative process inhabiting that type.

\subparagraph*{Local Types and \Effpi Types}
We postpone the discussion about channels in \Effpi to
\cref{sec:impl:boringdetail}.
For now, we
compare the generated \Effpi type and the projected local type,
and also give a quick primer\footnotemark\ on \Effpi constructs.
The projected local types of the roles $\roleFmt{u}$ and $\roleFmt{l}$ are
shown as follows:

\footnotetext{A more detailed description of constructs can be found
in~\cite{SCALA19Effpi}.}

\smallskip
\centerline{
\(
{\small{
\begin{array}{rcl}
\gtProj[\setenum{\roleFmt{l}}]{\gtG}{\roleFmt{u}}
& = &
\stRec{\stFmt{\mathbf{t_{0}}}}{
  \stIntSum{\roleFmt{l}}{}{
    \begin{array}{@{}l@{}}
      \stChoice{\stMsgFmt{write}}{\tyString} \stSeq \stFmt{\mathbf{t_{0}}}
      \stFmt{,}\quad
      \stChoice{\stMsgFmt{read}}{}
      \stSeq
      \stInNB{\roleFmt{l}}{\stMsgFmt{report}}{\stFmtC{Log}}{
        \stFmt{\mathbf{t_{0}}}
      }

    \end{array}
  }
} \\
\gtProj[\setenum{\roleFmt{l}}]{\gtG}{\roleFmt{l}}
& = &
\stRec{\stFmt{\mathbf{t_{0}}}}{
  \stExtSum{\roleFmt{u}}{}{
    \begin{array}{@{}l@{}}
      \stChoice{\stMsgFmt{write}}{\tyString} \stSeq \stFmt{\mathbf{t_{0}}}
      \stFmt{,}\quad
      \stChoice{\stMsgFmt{read}}{}
      \stSeq
      \stOut{\roleFmt{u}}{\stMsgFmt{report}}{\stFmtC{Log}}
      \stSeq
      \stFmt{\mathbf{t_{0}}}
      \stFmt{,}\quad
      \stChoice{\stCrashLab}{} \stSeq {\stEnd}
    \end{array}
  }
}
\end{array}
}
}
\)
} %

\smallskip
\noindent
The local types are recursive,
and the \Effpi type implements recursion with
\lstinline[language=Effpi]+Rec[RecT0, ...]+
and
\lstinline[language=Effpi]+Loop[RecT0]+,
using the recursion variable \texttt{RecT0} declared in section \emph{(ii)}.

For role \roleFmt{u},
The inner local type is a sending type towards role $\roleFmt{l}$,
and we use an \Effpi process output type
\lstinline[language=Effpi]+Out[A, B]+,
which describes a process that uses a channel of type \texttt{A} to send a value
of type \texttt{B}.
For each branch, we use a separate output type, and connect it to
the type of the continuation using a
sequential composition operator (\lstinline[language=Effpi]+>>:+).
The different branches are then composed together
using a union type (\texttt{|}) from the \Scala~3 type system.

Recall that the role $\roleFmt{l}$ is declared
\lstinline[language=Scribble]+reliable+,
and thus the reception labelled \texttt{report} from \roleFmt{l} at \roleFmt{u}
does not need to contain a crash handler.
We use an \Effpi process input type
\lstinline[language=Effpi]+In[A, B, C]+,
which describes a process that uses a channel of type \texttt{A} to receive a
value of type \texttt{B}, and uses the received value in a continuation of type
\texttt{C}.

For role \roleFmt{l}, the reception type is more complex for two reasons:
\begin{enumerate*}[label=\emph{(\arabic*)}]
  \item role $\roleFmt{u}$ is unreliable, necessitating crash handling; and
  \item the reception contains branching behaviour (\cf the reception
    \roleFmt{u} being a singleton), with labels \texttt{write} and
    \texttt{read}.
\end{enumerate*}
For \emph{(1)}, we extend \Effpi with a variant of the input process type
\lstinline[language=Effpi]+InErr[A, B, C, D]+,
where \texttt{D} is the type of continuation in case of a crash.
For \emph{(2)}, the payload type is first received as an union
(\cref{line:effpi:recv-union}), and then \lstinline[language=Effpi]+match+ed
to select the
correct continuation according to the type (\cref{line:effpi:recv-match}).

\subparagraph*{From Types To Implementations}
Since \Effpi type-level and value-level constructs are closely related,
we can easily generate the processes from the processes types.
Namely, by matching
the type \lstinline[language=effpi]+Out[..., ...]+
with the process \lstinline[language=effpi]+send(..., ...)+;
the type \lstinline[language=effpi]+In[..., ...]+
with the process \lstinline[language=effpi]+receive(...) {... => ...}+;
and similarly for other constructs.
Whilst executable, the generated code represents a skeleton implementation,
and the programmer is expected to alter the code according to their
requirements.

We also introduce a new crash handling receive process
\lstinline[language=effpi]+receiveErr+,
to match the new \lstinline[language=Effpi]+InErr+ type.
Process crashes are modelled by (caught) exceptions and errors in
role-implementing functions, %
and crash detection is achieved via timeouts.
Timeouts are set by the programmer in an
(implicit) argument to each \lstinline[language=effpi]+receiveErr+ call.

Finally,
the entry point (main object) in section \emph{(v)}
composes the role-implementing functions together with
\lstinline[language=effpi]+par+ construct in \Effpi,
and connects the processes with channels.

\subsection{Generating \Effpi Channels from Scribble Protocols}
\label{sec:impl:boringdetail}

As previously mentioned,
\Effpi processes use \emph{channels} to communicate,
and the type of the channel is reflected in the type of the process.
However, our local types do not have any channels;
instead, they contain a partner role with which to communicate.
This poses an interesting challenge,
and we explain the channel generation procedure in this section.

We draw attention to the generated code in \cref{fig:impl:egEffpi} again,
where we now focus on the parameters \texttt{C0} in the generated types
\texttt{U} and \texttt{L}.
In the type \texttt{U}, the channel type \texttt{C0} needs to be a subtype of
\lstinline[language=effpi]+OutChan[Read | Write]+ (\cref{line:effpi:U}),
and we see the channel is used in the output processes types, \eg
\lstinline[language=effpi]+Out[C0, Read]+
(\cref{line:effpi:out_c0},
note that output channels subtyping is covariant on the payload type).
Dually, in the type \texttt{L}, the channel type \texttt{C0} needs to be a
subtype of
\lstinline[language=effpi]+InChan[Read | Write]+ (\cref{line:effpi:L}),
and we see the channel is used in the input process type, \ie
\lstinline[language=effpi]+InErr[C0, Read | Write, ..., ...]+
(\cref{line:effpi:recv-union}).

Similarly,
a channel \texttt{c0} is needed in the role-implementing functions
\texttt{u} and \texttt{l} as arguments,
and the channel is used in processes
\lstinline[language=effpi]+send(c0, ...)+ and
\lstinline[language=effpi]+receiveErr(c0) ...+.
Finally, in the entry point, we create a bidirectional channel
\lstinline[language=effpi]+c0 = Channel[Read | Write]()+
(\cref{line:effpi:channel-create}),
and pass it as an argument to the role-implementing functions \texttt{u} and \texttt{l}
(\cref{line:effpi:channel-use}),
so that the channel can be used to link two role-implementing processes
together for communication.

Generating the channels correctly is crucial to the correctness of our
approach,
but non-trivial since channels are implicit in the protocols.
In order to do so, a simple approach is to
traverse each interaction in the global protocol,
and assign a channel to each accordingly.

This simple approach would work for the example we show in
\cref{fig:impl:egProtocol};
however, it would not yield the correct result when \emph{merging} occurs
during projection, which we explain using an example.
For clarity and convenience, we use \emph{annotated} global and local types,
where we assign an identifier for each interaction to signify the channel to
use,
and consider the following global type:
\(
{\small
  \gtG = \gtCommRawAnn{\roleP}{\roleQ}{0}{
    \gtCommChoice{\gtLabFmt{left}}{}{
      \gtCommSingleAnn{\roleP}{\roleR}{1}{\gtLabFmt{left}}{}{\gtEnd}
    }
    \gtFmt{,} \quad
    \gtCommChoice{\gtLabFmt{right}}{}{
      \gtCommSingleAnn{\roleP}{\roleR}{2}{\gtLabFmt{right}}{}{\gtEnd}
    }
  }
  }
\).

The global type describes a simple protocol, where role $\roleP$ selects a
label \gtLabFmt{left} or \gtLabFmt{right} to $\roleQ$, and $\roleQ$ passes on
the same label to $\roleR$.
As a result, the projection on $\roleR$ (assuming all roles reliable) should
be a reception from $\roleQ$ with branches labelled $\gtLabFmt{left}$ or
$\gtLabFmt{right}$, \ie
\(
  \stExtSumAnn{\roleP}{1,2}{}{
      \stChoice{\stLabFmt{left}}{} \stSeq \stEnd
      \stFmt{,}\quad
      \stChoice{\stLabFmt{right}}{} \stSeq \stEnd
  }
\).
Here, we notice that the interaction between $\roleQ$ and $\roleR$ should take
place on a single channel, instead of two separate channels annotated $1$ and
$2$.

When \emph{merging} behaviour occurs during projection,
we need to use the same channel in those interactions
to achieve the correct behaviour.
After traversing the global type to annotate each interaction,
we %
merge annotations involved in merges
during projection. \iftoggle{full}{Additional technical details are included in \cref{sec:appendix:channel-gen}.}{}

\section{Evaluation}\label{sec:eval}

We evaluate our toolchain \theTool from two perspectives: \emph{expressivity}
and \emph{feasibility}.
For expressivity, we use examples from %
 session type literature,
and extend them to include crash handling behaviour using two
patterns: failover and graceful failure.
For feasibility, we show that our tool
generates \Scala code within negligible time.

We note that we do \emph{not} evaluate the performance of the generated code.
The generated code uses the \Effpi concurrency library to
implement protocols,
and any performance indication would depend and reflect on the performance of
\Effpi,
instead of \theTool.

\subparagraph*{Expressivity}\label{sec:eval:expressivity}

\begin{table}[t]
  \caption{%
    Overview of All Variants for Each Example.
  }
  \vspace{-1em}
  \begin{center}
  \footnotesize
  \begin{tabular}{lCCccccc}
    \toprule
    Name & \text{Var.} & \rolesR & Comms.\@ & Crash Branches & Max Cont.\@
    Len.
    \\
    \midrule
    \multirow{2}{*}{%
      \begin{tabular}{@{}l@{}}
      \exampleName{PingPong} \\
      $\roleSet = \roleSetOf{p,q}$
      \end{tabular}
    } & (a) & \roleSet &	2 &	0 &	4 \\
     & (b) & \rolesEmpty &	2 &	2 &	4 \\\midrule
    \multirow{2}{*}{%
      \begin{tabular}{@{}l@{}}
      \exampleName{Adder} \\
      $\roleSet = \roleSetOf{p,q}$
      \end{tabular}
    } & (c) & \roleSet &	5 &	0 &	6 \\
     & (d) & \rolesEmpty &	5 & 5 &	6 \\\midrule
    \multirow{2}{*}{%
      \begin{tabular}{@{}l@{}}
      \exampleName{TwoBuyer} \\
      $\roleSet = \roleSetOf{p,q,r}$
      \end{tabular}
    } & (e) & \roleSet &	7 &	0 &	8 \\
     & (f) & \roleFmt{\{\roleR\}} &	18 & 6 &	12 \\\midrule
    \multirow{4}{*}{%
      \begin{tabular}{@{}l@{}}
      \exampleName{OAuth}\\
      $\roleSet = \roleSetOf{c,a,s}$
      \end{tabular}
    } & (g) & \roleSet &	12 &  0 &	11 \\
     & (h) & \roleFmt{\{s,a\}} &	21 & 8 &	11 \\
     & (i) & \roleFmt{\{s\}} &	26 & 13 &	11 \\
     & (j) & \rolesEmpty &	30 &	28 &	11 \\\midrule
    \multirow{3}{*}{%
      \begin{tabular}{@{}l@{}}
      \exampleName{TravelAgency}\\
      $\roleSet = \roleSetOf{c,a,s}$
      \end{tabular}
    } & (k) & \roleSet &	8 &	0 &	6 \\
     & (l) & \roleFmt{\{a,s\}} &	9 &	3 &	6 \\
     & (m) & \roleFmt{\{a\}} &	9 &	4 &	6 \\\midrule
    \multirow{3}{*}{%
      \begin{tabular}{@{}l@{}}
      \exampleName{DistLogger} \\
      $\roleSet = \roleSetOf{l,c,i}$
      \end{tabular}
    } & (n) & \roleSet &	10 &	0 &	7 \\
     & (o) & \roleFmt{\{i,c\}} &	15 &	2 &	7 \\
     & (p) & \roleFmt{\{i\}} &	16 &	4 &	7 \\\midrule
    \multirow{3}{*}{%
      \begin{tabular}{@{}l@{}}
      \exampleName{CircBreaker}\\
      $\roleSet = \roleSetOf{s,a,r}$
      \end{tabular}
    } & (q) & \roleSet &	18 &	0 &	10 \\
     & (r) & \roleFmt{\{a,s\}} &	24 &	3 &	10 \\
     & (s) & \roleFmt{\{a,s\}} &	23 &	3 &	11 \\
    \bottomrule
  \end{tabular}
  \end{center}
  \label{tab:eval:table}
\end{table}

We evaluate our approach with examples in session type literature:
\exampleName{PingPong}, \exampleName{Adder},
\exampleName{TwoBuyer}~\cite{HYC16},
\exampleName{OAuth}~\cite{RV13SPy},
\exampleName{TravelAgency}~\cite{ECOOP08SessionJava},
\exampleName{DistLogger}~\cite{ECOOP22AffineMPST}, and
\exampleName{CircBreaker}~\cite{ECOOP22AffineMPST}.
Notably,
the last two are
inspired by real-world patterns in distributed computing.
\iftoggle{full}{We give details of the examples in \cref{sec:eval:examples}.}{}

We begin with the \emph{fully reliable} version of the examples,
and extend them to include crash handling behaviour.
Recall that our extended theory subsumes the original theory,
when all roles are assumed \emph{reliable}.
Therefore,
the fully reliable versions can act both as a sanity check,
to ensure the code generation does not exclude good protocols in the original
theory,
and as a baseline to compare against.

To add crash handling behaviour, we employ two patterns: \emph{failover} and
\emph{graceful failure}.
In the former scenario,
a crashed role has its functions taken over by another role,
acting as a substitute to the crashed role~\cite{CONCUR22MPSTCrash}.
In the latter scenario,
the protocol is terminated peacefully,
possibly involving additional messages for notification purposes.
Using the example from \cref{sec:overview},
the fully reliable protocol in \cref{ex:overview-global-without-crash}
is extended to one with graceful failure in
\cref{ex:overview-global-with-crash}.

We show a summary of the examples in \cref{tab:eval:table}.
For each example, we give the set of all roles $\roleSet$
and vary the set of reliable roles ($\rolesR$).
Each variant is given an identifier (Var.),
and each example always has a fully
reliable variant where $\rolesR = \roleSet$.
We give the number of communication interactions (Comms.),
the number of \gtCrashLab branches added (Crash Branches),
and the length of the longest continuation (Max Cont.\@ Len.)
in the given global type.

The largest of our examples in terms of concrete interactions is
\exampleName{OAuth}, with Variant $(i)$ having 26 interactions and $(j)$ having
30 interactions. This represents a $2.17\times$ and $2.5\times$ increase over
the size of the original protocol, and is a consequence of the confluence of two
factors: the graceful failure pattern, and low degree of branching in the
protocol itself. The \exampleName{TwoBuyer} Variant $(f)$ represents the
greatest increase ($2.57\times$) in interactions, a result of implementing the
failover pattern. The \exampleName{CircBreaker} variants are also notable in
that they are large in terms of both interactions and branching degree -- both
affect generation times.

\subparagraph*{Feasibility}\label{sec:eval:feasibility}

In order to demonstrate the feasibility of our tool \theTool, we give
generation times using our prototype for all protocol variants and examples,
plotted in \cref{fig:eval:stackedRuntimes}.
We show that \theTool is able to complete the code generation within
milliseconds,
which does not pose any major overhead for a developer.

\begin{figure}[t]
{\small
  \begin{center}
  \begin{tikzpicture}
    \begin{axis}[
        width=0.8\textwidth,
        height=2.5cm,
        enlargelimits=0.05,
        ybar stacked,
        bar width=0.7,
        ymajorgrids,
        scale only axis,
        xticklabels={$(a)$,$(b)$,$(c)$,$(d)$,$(e)$,$(k)$,$(l)$,$(m)$,$(n)$,$(g)$,$(o)$,$(p)$,$(f)$,$(q)$,$(h)$,$(s)$,$(r)$,$(i)$,$(j)$},
        xtick=data,
        ytick={0,0.5,...,2},
        xticklabel style={text height=2ex},
        minor y tick num=1,
        ylabel={Time (ms)},
        legend style={
          at={(0.3,0.95)},
          anchor=north,
          legend columns=-1,
          /tikz/every even column/.append style={column sep=1mm},
        },
      ]
      \pgfplotstableread{results/milliseconds.txt}\loadedtable
      \addplot table[x=Order, y=ReadFile] {\loadedtable};
      \addplot table[x=Order, y=EffpiGen] {\loadedtable};
      \addplot table[x=Order, y=ScalaGen] {\loadedtable};
      \legend{Parsing,EffpiIR,CodeGen};
    \end{axis}
  \end{tikzpicture}
  \end{center}
  }
  \vspace{-2em}
  \caption{Average Generation Times for All Variants in
  \Cref{tab:eval:table}.}
 \vspace{-1em}
  \label{fig:eval:stackedRuntimes}
\end{figure}

In addition to total generation times,
we report measurements for three main
constituent phases of \theTool: parsing, EffpiIR
generation, and code generation.
EffpiIR generation projects and transforms a parsed global type into an
intermediate representation,
which is then used to generate concrete \Scala code.

For all variants, the code generation phase is the most expensive
phase.
This is likely a consequence of traversing the given EffpiIR representation of
a protocol twice -- once for local type declarations and once for role-implementing
functions.
\iftoggle{full}{More details on generation time, including how they are measured, can be found in
\cref{sec:eval:times}.}{}

\section{Related Work}\label{sec:related}

We summarise related work on both theory and implementations of session types
with failure handling,
as well as other MPST implementations targeting \Scala without failures.

We first discuss closest related
work~\cite{OOPSLA21FaultTolerantMPST,FORTE22FaultTolerant,CONCUR22MPSTCrash,ESOP23MAGPi},
where multiparty session types are extended to model crashes or failures.
Both~\cite{FORTE22FaultTolerant} and~\cite{ESOP23MAGPi} are exclusively theoretical.

\cite{FORTE22FaultTolerant}~proposes an MPST framework to
model fine-grained unreliability:
each transmission in a global type is parameterised by a
reliability annotation,
which can be one of unreliable (sender/receiver can crash, and messages can be
lost),
weakly reliable (sender/receiver can crash, messages are not lost), or
reliable (no crashes or message losses).
\cite{OOPSLA21FaultTolerantMPST}~utilises MPST as a guidance for
fault-tolerant distributed system with recovery mechanisms.
Their framework includes various features, such as sub-sessions,
event-driven programming, dynamic role assignments, and, most importantly, 
failure handling.
\cite{CONCUR22MPSTCrash}~develops a theory of multiparty session types with crash-stop
failures:
they model crash-stop failures in the semantics of processes
and session
types, where the type system uses a model checker to validate type safety.
\cite{ESOP23MAGPi}~follow a similar framework
to~\cite{CONCUR22MPSTCrash}:
they model an asynchronous semantics, and support more patterns of failure,
including message losses, delays, reordering, as well as link failures and
network partitioning.
However, their typing system suffers from its genericity, when type-level
properties become undecidable~\cite[\S 4.4]{ESOP23MAGPi}.

Other session type works on modelling failures can be briefly
categorised into
two: using affine types or exceptions~\cite{ECOOP22AffineMPST, LMCS18Affine,
DBLP:journals/pacmpl/FowlerLMD19}, %
and using coordinators or supervision%
~\cite{DBLP:conf/forte/AdameitPN17,ESOP18CrashHandling}.
The former adapts session types to an \emph{affine}
representation, in which endpoints may cease prematurely;
the latter, instead, are usually reliant on one or more \emph{reliable} processes that \emph{coordinate}
in the event of failure.
The works~\cite{LMCS18Affine,DBLP:conf/forte/AdameitPN17,ESOP18CrashHandling}
are limited to theory.

\cite{LMCS18Affine}~first proposes the affine approach
to failure handling.
Their extension is primarily comprised of a \emph{cancel operator}, which is
semantically similar to our crash construct: it represents a process
that has terminated early.
\cite{DBLP:journals/pacmpl/FowlerLMD19}~presents a concurrent
$\lambda$-calculus based on~\cite{LMCS18Affine}, with asynchronous session-typed communication
and exception handling, and implements their approach as parts of the
\textsc{Links} language.
\cite{ECOOP22AffineMPST}~proposes a framework of
\emph{affine} multiparty session types, and provides an implementation of affine
MPST in the \Rust programming language.
They utilise the affine type
system and \texttt{Result} types of \Rust,
so that the type system enforces that failures are handled.

Coordinator model approaches~\cite{DBLP:conf/forte/AdameitPN17,ESOP18CrashHandling}
often incorporate \emph{interrupt blocks} (or similar constructs)
to model crashes and failure handling.
\cite{DBLP:conf/forte/AdameitPN17}~extends the standard MPST syntax with
\emph{optional blocks},
representing regions of a protocol that are susceptible
to communication failures.
In their approach, if a process $\mpP$ expects a value from an optional block
which fails, then a
\emph{default value} is provided to $\mpP$, so $\mpP$ can continue running.
This ensures termination and deadlock-freedom.
Although this approach does not feature an explicit reliable coordinator
process, we describe it here due to the inherent coordination required for
multiple processes to start and end an optional block. %
\cite{ESOP18CrashHandling}~similarly extends the standard
global type syntax with a \emph{try-handle} construct, which is facilitated by the
presence of a reliable coordinator process, and via a construct to specify
reliable processes.
When the coordinator detects a failure, it broadcasts notifications to all
remaining live processes;
then, the protocol proceeds according to the failure handling continuation
specified as part of the try-handle construct.

Other related MPST implementations
include~\cite{ECOOP21MPSTActor,ECOOP22MPSTScala,DBLP:journals/corr/abs-2005-09520}.
\cite{ECOOP21MPSTActor}~designs a framework for
MPST-guided, safe actor programming.
Whilst the MPST protocol does not include any failure handling, the actors may
fail or raise exceptions, which are handled in a similar way to what we
summarise as the affine technique.
\cite{ECOOP22MPSTScala} revisits API generation techniques in
\Scala for MPST\@.
In addition to the traditional local type/automata-based code
generation~\cite{ECOOP17MPST, FASE16EndpointAPI}, they propose a
new technique based on sets of pomsets, utilising \Scala 3 match
types~\cite{POPL22TypeLevelMatch}.
\cite{DBLP:journals/corr/abs-2005-09520}~presents \Choral,
a programming language for choreographies (multiparty protocols).
\Choral supports the handling of \emph{local exceptions} in choreographies,
which can be used to program reliable channels over unreliable networks,
supervision mechanisms, \etc for fallible communication.
They utilise automatic retries to implement channel APIs.

\section{Conclusion and Future Work}
\label{sec:conclusion}
To overcome the challenge of accounting for failure handling in
distributed systems using session types,
we propose \theTool, a code generation toolchain.
It is built on asynchronous MPST with crash-stop semantics,
enabling the implementation of multiparty protocols that are resilient to failures.
Desirable global type properties such as deadlock-freedom, protocol conformance,
and liveness are preserved by construction in typed processes,
even in the presence of crashes.
Our toolchain \theTool, extends \Scribble and \Effpi to support crash detection
and handling,
providing developers with a lightweight way to leverage our theory.
The evaluation of \theTool demonstrates that it can generate \Scala code with
minimal overhead, which is made possible by the guarantees provided by our
theory.

This work is a new step towards modelling and handling real-world failures using
session types,
bridging the gap between their theory and applications.
As future work, we plan to studys different crash models (\eg crash-recover) 
and failures of other components (\eg link failures). 
These further steps will contribute to our long-term objective 
of modelling and type-checking well-known consensus algorithms used in large-scale distributed systems.

\bibliography{references}

\begin{thebibliography}{10}

\bibitem{DBLP:conf/forte/AdameitPN17}
Manuel Adameit, Kirstin Peters, and Uwe Nestmann.
\newblock Session types for link failures.
\newblock In Ahmed Bouajjani and Alexandra Silva, editors, {\em Formal
  Techniques for Distributed Objects, Components, and Systems - 37th {IFIP}
  {WG} 6.1 International Conference, {FORTE} 2017, Held as Part of the 12th
  International Federated Conference on Distributed Computing Techniques,
  DisCoTec 2017, Neuch{\^{a}}tel, Switzerland, June 19-22, 2017, Proceedings},
  volume 10321 of {\em Lecture Notes in Computer Science}, pages 1--16.
  Springer, 2017.
\newblock \href {https://doi.org/10.1007/978-3-319-60225-7\_1}
  {\path{doi:10.1007/978-3-319-60225-7\_1}}.

\bibitem{CONCUR22MPSTCrash}
Adam~D. Barwell, Alceste Scalas, Nobuko Yoshida, and Fangyi Zhou.
\newblock {Generalised Multiparty Session Types with Crash-Stop Failures}.
\newblock In Bartek Klin, S{\l}awomir Lasota, and Anca Muscholl, editors, {\em
  33rd International Conference on Concurrency Theory (CONCUR 2022)}, volume
  243 of {\em Leibniz International Proceedings in Informatics (LIPIcs)}, pages
  35:1--35:25, Dagstuhl, Germany, 2022. Schloss Dagstuhl -- Leibniz-Zentrum
  f{\"u}r Informatik.
\newblock URL: \url{https://drops.dagstuhl.de/opus/volltexte/2022/17098}, \href
  {https://doi.org/10.4230/LIPIcs.CONCUR.2022.35}
  {\path{doi:10.4230/LIPIcs.CONCUR.2022.35}}.

\bibitem{POPL22TypeLevelMatch}
Olivier Blanvillain, Jonathan~Immanuel Brachth{\"{a}}user, Maxime Kjaer, and
  Martin Odersky.
\newblock Type-level programming with match types.
\newblock {\em Proc. {ACM} Program. Lang.}, 6({POPL}):1--24, 2022.
\newblock \href {https://doi.org/10.1145/3498698} {\path{doi:10.1145/3498698}}.

\bibitem{TACAS19mCRL2}
Olav Bunte, Jan~Friso Groote, Jeroen J.~A. Keiren, Maurice Laveaux, Thomas
  Neele, Erik~P. de~Vink, Wieger Wesselink, Anton Wijs, and Tim A.~C. Willemse.
\newblock The {mCRL2} {Toolset} for {Analysing} {Concurrent} {Systems}.
\newblock In Tom{\'a}{\v{s}} Vojnar and Lijun Zhang, editors, {\em Tools and
  Algorithms for the Construction and Analysis of Systems}, pages 21--39, Cham,
  2019. Springer International Publishing.

\bibitem{DBLP:books/daglib/0025983}
Christian Cachin, Rachid Guerraoui, and Lu{\'{\i}}s E.~T. Rodrigues.
\newblock {\em Introduction to Reliable and Secure Distributed Programming
  {(2.} ed.)}.
\newblock Springer, 2011.
\newblock \href {https://doi.org/10.1007/978-3-642-15260-3}
  {\path{doi:10.1007/978-3-642-15260-3}}.

\bibitem{DBLP:journals/pacmpl/CastroHJNY19}
David Castro{-}Perez, Raymond Hu, Sung{-}Shik Jongmans, Nicholas Ng, and Nobuko
  Yoshida.
\newblock {Distributed programming using role-parametric session types in go:
  statically-typed endpoint APIs for dynamically-instantiated communication
  structures}.
\newblock {\em Proc. {ACM} Program. Lang.}, 3({POPL}):29:1--29:30, 2019.
\newblock \href {https://doi.org/10.1145/3290342} {\path{doi:10.1145/3290342}}.

\bibitem{JACM96FailureDetector}
Tushar~Deepak Chandra and Sam Toueg.
\newblock {Unreliable Failure Detectors for Reliable Distributed Systems}.
\newblock {\em J. ACM}, 43(2):225--267, March 1996.
\newblock \href {https://doi.org/10.1145/226643.226647}
  {\path{doi:10.1145/226643.226647}}.

\bibitem{ECOOP22MPSTScala}
Guillermina Cledou, Luc Edixhoven, Sung-Shik Jongmans, and Jos\'{e}
  Proen\c{c}a.
\newblock {API Generation for Multiparty Session Types, Revisited and Revised
  Using Scala 3}.
\newblock In Karim Ali and Jan Vitek, editors, {\em 36th European Conference on
  Object-Oriented Programming (ECOOP 2022)}, volume 222 of {\em Leibniz
  International Proceedings in Informatics (LIPIcs)}, pages 27:1--27:28,
  Dagstuhl, Germany, 2022. Schloss Dagstuhl -- Leibniz-Zentrum f{\"u}r
  Informatik.
\newblock URL: \url{https://drops.dagstuhl.de/opus/volltexte/2022/16255}.

\bibitem{CYV2022}
Zak Cutner, Nobuko Yoshida, and Martin Vassor.
\newblock {Deadlock-Free Asynchronous Message Reordering in Rust with
  Multiparty Session Types}.
\newblock In {\em 27th ACM SIGPLAN Symposium on Principles and Practice of
  Parallel Programming}, volume abs/2112.12693 of {\em PPoPP '22}, pages
  261--246. ACM, 2022.
\newblock \href {https://doi.org/10.1145/3503221.3508404}
  {\path{doi:10.1145/3503221.3508404}}.

\bibitem{DBLP:journals/fmsd/DemangeonHHNY15}
Romain Demangeon, Kohei Honda, Raymond Hu, Rumyana Neykova, and Nobuko Yoshida.
\newblock {Practical interruptible conversations: distributed dynamic
  verification with multiparty session types and Python}.
\newblock {\em Formal Methods Syst. Des.}, 46(3):197--225, 2015.
\newblock \href {https://doi.org/10.1007/s10703-014-0218-8}
  {\path{doi:10.1007/s10703-014-0218-8}}.

\bibitem{FSTTCS15MPSTExpressiveness}
Romain Demangeon and Nobuko Yoshida.
\newblock {On the Expressiveness of Multiparty Sessions}.
\newblock In Prahladh Harsha and G.~Ramalingam, editors, {\em 35th IARCS Annual
  Conference on Foundations of Software Technology and Theoretical Computer
  Science (FSTTCS 2015)}, volume~45 of {\em Leibniz International Proceedings
  in Informatics (LIPIcs)}, pages 560--574, Dagstuhl, Germany, 2015. Schloss
  Dagstuhl--Leibniz-Zentrum fuer Informatik.
\newblock URL: \url{http://drops.dagstuhl.de/opus/volltexte/2015/5621}, \href
  {https://doi.org/10.4230/LIPIcs.FSTTCS.2015.560}
  {\path{doi:10.4230/LIPIcs.FSTTCS.2015.560}}.

\bibitem{ICALP13CFSM}
Pierre-Malo Deniélou and Nobuko Yoshida.
\newblock {Multiparty Compatibility in Communicating Automata: Characterisation
  and Synthesis of Global Session Types}.
\newblock In {\em 40th International Colloquium on Automata, Languages and
  Programming}, volume 7966 of {\em LNCS}, pages 174--186, Berlin, Heidelberg,
  2013. Springer.
\newblock \href {https://doi.org/10.1007/978-3-642-39212-2\_18}
  {\path{doi:10.1007/978-3-642-39212-2\_18}}.

\bibitem{DBLP:journals/pacmpl/FowlerLMD19}
Simon Fowler, Sam Lindley, J.~Garrett Morris, and S{\'{a}}ra Decova.
\newblock {Exceptional Asynchronous Session Types: Session Types without
  Tiers}.
\newblock {\em Proc. {ACM} Program. Lang.}, 3({POPL}):28:1--28:29, 2019.
\newblock \href {https://doi.org/10.1145/3290341} {\path{doi:10.1145/3290341}}.

\bibitem{JLAMP19SyncSubtyping}
Silvia Ghilezan, Svetlana Jaksic, Jovanka Pantovic, Alceste Scalas, and Nobuko
  Yoshida.
\newblock Precise subtyping for synchronous multiparty sessions.
\newblock {\em J. Log. Algebraic Methods Program.}, 104:127--173, 2019.
\newblock \href {https://doi.org/10.1016/j.jlamp.2018.12.002}
  {\path{doi:10.1016/j.jlamp.2018.12.002}}.

\bibitem{POPL21AsyncMPSTSubtyping}
Silvia Ghilezan, Jovanka Pantovi\'{c}, Ivan Proki\'{c}, Alceste Scalas, and
  Nobuko Yoshida.
\newblock {Precise Subtyping for Asynchronous Multiparty Sessions}.
\newblock {\em Proc. ACM Program. Lang.}, 5(POPL), January 2021.
\newblock \href {https://doi.org/10.1145/3434297} {\path{doi:10.1145/3434297}}.

\bibitem{DBLP:journals/corr/abs-2005-09520}
Saverio Giallorenzo, Fabrizio Montesi, and Marco Peressotti.
\newblock Choreographies as objects.
\newblock {\em CoRR}, abs/2005.09520, 2020.
\newblock URL: \url{https://arxiv.org/abs/2005.09520}, \href
  {http://arxiv.org/abs/2005.09520} {\path{arXiv:2005.09520}}.

\bibitem{ECOOP21MPSTActor}
Paul Harvey, Simon Fowler, Ornela Dardha, and Simon~J. Gay.
\newblock {Multiparty Session Types for Safe Runtime Adaptation in an Actor
  Language}.
\newblock In Anders M{\o}ller and Manu Sridharan, editors, {\em 35th European
  Conference on Object-Oriented Programming (ECOOP 2021)}, volume 194 of {\em
  Leibniz International Proceedings in Informatics (LIPIcs)}, pages
  10:1--10:30, Dagstuhl, Germany, 2021. Schloss Dagstuhl -- Leibniz-Zentrum
  f{\"u}r Informatik.
\newblock URL: \url{https://drops.dagstuhl.de/opus/volltexte/2021/14053}, \href
  {https://doi.org/10.4230/LIPIcs.ECOOP.2021.10}
  {\path{doi:10.4230/LIPIcs.ECOOP.2021.10}}.

\bibitem{ESOP98Session}
Kohei Honda, Vasco~T. Vasconcelos, and Makoto Kubo.
\newblock Language primitives and type discipline for structured
  communication-based programming.
\newblock In Chris Hankin, editor, {\em Programming Languages and Systems},
  pages 122--138, Berlin, Heidelberg, 1998. Springer Berlin Heidelberg.
\newblock \href {https://doi.org/10.1007/BFb0053567}
  {\path{doi:10.1007/BFb0053567}}.

\bibitem{HYC08}
Kohei Honda, Nobuko Yoshida, and Marco Carbone.
\newblock {Multiparty Asynchronous Session Types}.
\newblock In {\em 35th annual ACM SIGPLAN-SIGACT symposium on Principles of
  programming languages}, pages 273--284. ACM, 2008.
\newblock \href {https://doi.org/10.1145/1328897.1328472}
  {\path{doi:10.1145/1328897.1328472}}.

\bibitem{HYC16}
Kohei Honda, Nobuko Yoshida, and Marco Carbone.
\newblock {Multiparty Asynchronous Session Types}.
\newblock {\em Journal of the ACM}, 63:1--67, 2016.
\newblock \href {https://doi.org/10.1145/2827695} {\path{doi:10.1145/2827695}}.

\bibitem{FASE16EndpointAPI}
Raymond Hu and Nobuko Yoshida.
\newblock {Hybrid Session Verification through Endpoint API Generation}.
\newblock In {\em 19th International Conference on Fundamental Approaches to
  Software Engineering}, volume 9633 of {\em LNCS}, pages 401--418, Berlin,
  Heidelberg, 2016. Springer.
\newblock \href {https://doi.org/10.1007/978-3-662-49665-7\_24}
  {\path{doi:10.1007/978-3-662-49665-7\_24}}.

\bibitem{ECOOP08SessionJava}
Raymond Hu, Nobuko Yoshida, and Kohei Honda.
\newblock {Session-Based} {Distributed} {Programming} in {Java}.
\newblock In Jan Vitek, editor, {\em {ECOOP} 2008 - Object-Oriented
  Programming, 22nd European Conference, Paphos, Cyprus, July 7-11, 2008,
  Proceedings}, volume 5142 of {\em Lecture Notes in Computer Science}, pages
  516--541. Springer, 2008.
\newblock \href {https://doi.org/10.1007/978-3-540-70592-5_22}
  {\path{doi:10.1007/978-3-540-70592-5_22}}.

\bibitem{INYY2020}
Keigo Imai, Rumyana Neykova, Nobuko Yoshida, and Shoji Yuen.
\newblock {Multiparty Session Programming with Global Protocol Combinators}.
\newblock In {\em 34th European Conference on Object-Oriented Programming},
  volume 166 of {\em LIPIcs}, pages 9:1--9:30. Schloss
  Dagstuhl--Leibniz-Zentrum f{"u}r Informatik, 2020.
\newblock \href {https://doi.org/10.4230/LIPIcs.ECOOP.2020.9}
  {\path{doi:10.4230/LIPIcs.ECOOP.2020.9}}.

\bibitem{DBLP:conf/ppdp/KouzapasDPG16}
Dimitrios Kouzapas, Ornela Dardha, Roly Perera, and Simon~J. Gay.
\newblock Typechecking protocols with mungo and stmungo.
\newblock In James Cheney and Germ{\'{a}}n Vidal, editors, {\em Proceedings of
  the 18th International Symposium on Principles and Practice of Declarative
  Programming, Edinburgh, United Kingdom, September 5-7, 2016}, pages 146--159.
  {ACM}, 2016.
\newblock \href {https://doi.org/10.1145/2967973.2968595}
  {\path{doi:10.1145/2967973.2968595}}.

\bibitem{DBLP:conf/coordination/LagaillardieNY20}
Nicolas Lagaillardie, Rumyana Neykova, and Nobuko Yoshida.
\newblock {Implementing Multiparty Session Types in Rust}.
\newblock In Simon Bliudze and Laura Bocchi, editors, {\em Coordination Models
  and Languages - 22nd {IFIP} {WG} 6.1 International Conference, {COORDINATION}
  2020, Held as Part of the 15th International Federated Conference on
  Distributed Computing Techniques, DisCoTec 2020, Valletta, Malta, June 15-19,
  2020, Proceedings}, volume 12134 of {\em Lecture Notes in Computer Science},
  pages 127--136. Springer, 2020.
\newblock \href {https://doi.org/10.1007/978-3-030-50029-0\_8}
  {\path{doi:10.1007/978-3-030-50029-0\_8}}.

\bibitem{ECOOP22AffineMPST}
Nicolas Lagaillardie, Rumyana Neykova, and Nobuko Yoshida.
\newblock {Stay Safe Under Panic: Affine Rust Programming with Multiparty
  Session Types}.
\newblock In Karim Ali and Jan Vitek, editors, {\em 36th European Conference on
  Object-Oriented Programming (ECOOP 2022)}, volume 222 of {\em Leibniz
  International Proceedings in Informatics (LIPIcs)}, pages 4:1--4:29,
  Dagstuhl, Germany, 2022. Schloss Dagstuhl -- Leibniz-Zentrum f{\"u}r
  Informatik.
\newblock URL: \url{https://drops.dagstuhl.de/opus/volltexte/2022/16232}.

\bibitem{ESOP23MAGPi}
Matthew~Alan Le~Brun and Ornela Dardha.
\newblock {MAG$\pi$:} {Types} for {Failure-Prone} {Communication}.
\newblock 2023.
\newblock To appear in ESOP '23.
\newblock URL: \url{https://arxiv.org/abs/2301.10827}.

\bibitem{MFYZ2021}
Anson Miu, Francisco Ferreira, Nobuko Yoshida, and Fangyi Zhou.
\newblock {Communication-Safe Web Programming in TypeScript with Routed
  Multiparty Session Types}.
\newblock In {\em International Conference on Compiler Construction}, CC, pages
  94--106, 2021.
\newblock \href {https://doi.org/10.1145/3446804.3446854}
  {\path{doi:10.1145/3446804.3446854}}.

\bibitem{LMCS18Affine}
Dimitris Mostrous and Vasco~T. Vasconcelos.
\newblock {Affine Sessions}.
\newblock {\em {Logical Methods in Computer Science}}, {Volume 14, Issue 4},
  November 2018.
\newblock URL: \url{https://lmcs.episciences.org/4973}, \href
  {https://doi.org/10.23638/LMCS-14(4:14)2018}
  {\path{doi:10.23638/LMCS-14(4:14)2018}}.

\bibitem{NY2017Actor}
Rumyana Neykova and Nobuko Yoshida.
\newblock {Multiparty Session Actors}.
\newblock {\em Logical Methods in Computer Science}, 13:1--30, 2017.
\newblock \href {https://doi.org/10.23638/LMCS-13(1:17)2017}
  {\path{doi:10.23638/LMCS-13(1:17)2017}}.

\bibitem{FeatherweightScribble}
Rumyana Neykova and Nobuko Yoshida.
\newblock {\em {Featherweight} {Scribble}}, pages 236--259.
\newblock Springer International Publishing, Cham, 2019.
\newblock \href {https://doi.org/10.1007/978-3-030-21485-2_14}
  {\path{doi:10.1007/978-3-030-21485-2_14}}.

\bibitem{RV13SPy}
Rumyana Neykova, Nobuko Yoshida, and Raymond Hu.
\newblock {SPY:} {Local} {Verification} of {Global} {Protocols}.
\newblock In Axel Legay and Saddek Bensalem, editors, {\em Runtime
  Verification}, pages 358--363, Berlin, Heidelberg, 2013. Springer Berlin
  Heidelberg.
\newblock \href {https://doi.org/10.1007/978-3-642-40787-1\_25}
  {\path{doi:10.1007/978-3-642-40787-1\_25}}.

\bibitem{FORTE22FaultTolerant}
Kirstin Peters, Uwe Nestmann, and Christoph Wagner.
\newblock Fault-tolerant multiparty session types.
\newblock In Mohammad~Reza Mousavi and Anna Philippou, editors, {\em Formal
  Techniques for Distributed Objects, Components, and Systems}, pages 93--113,
  Cham, 2022. Springer International Publishing.

\bibitem{PierceTAPL}
Benjamin~C. Pierce.
\newblock {\em Types and Programming Languages}.
\newblock The MIT Press, 1st edition, 2002.

\bibitem{SangiorgiBiSimCoInd}
Davide Sangiorgi.
\newblock {\em Introduction to Bisimulation and Coinduction}.
\newblock Cambridge University Press, 2011.
\newblock \href {https://doi.org/10.1017/CBO9780511777110}
  {\path{doi:10.1017/CBO9780511777110}}.

\bibitem{SDHY2017}
Alceste Scalas, Ornela Dardha, Raymond Hu, and Nobuko Yoshida.
\newblock {A Linear Decomposition of Multiparty Sessions for Safe Distributed
  Programming}.
\newblock In {\em 31st European Conference on Object-Oriented Programming},
  volume~74 of {\em LIPIcs}, pages 24:1--24:31. Schloss Dagstuhl, 2017.
\newblock \href {https://doi.org/10.4230/LIPIcs.ECOOP.2017.24}
  {\path{doi:10.4230/LIPIcs.ECOOP.2017.24}}.

\bibitem{ECOOP17MPST}
Alceste Scalas, Ornela Dardha, Raymond Hu, and Nobuko Yoshida.
\newblock {A Linear Decomposition of Multiparty Sessions for Safe Distributed
  Programming}.
\newblock In Peter M{\"u}ller, editor, {\em 31st European Conference on
  Object-Oriented Programming (ECOOP 2017)}, volume~74 of {\em Leibniz
  International Proceedings in Informatics (LIPIcs)}, pages 24:1--24:31,
  Dagstuhl, Germany, 2017. Schloss Dagstuhl--Leibniz-Zentrum fuer Informatik.
\newblock URL: \url{http://drops.dagstuhl.de/opus/volltexte/2017/7263}, \href
  {https://doi.org/10.4230/LIPIcs.ECOOP.2017.24}
  {\path{doi:10.4230/LIPIcs.ECOOP.2017.24}}.

\bibitem{POPL19LessIsMore}
Alceste Scalas and Nobuko Yoshida.
\newblock {Less is More: Multiparty Session Types Revisited}.
\newblock {\em Proc. ACM Program. Lang.}, 3(POPL):30:1--30:29, January 2019.
\newblock \href {https://doi.org/10.1145/3290343} {\path{doi:10.1145/3290343}}.

\bibitem{SCALA19Effpi}
Alceste Scalas, Nobuko Yoshida, and Elias Benussi.
\newblock Effpi: verified message-passing programs in {Dotty}.
\newblock In Jonathan~Immanuel Brachth{\"{a}}user, Sukyoung Ryu, and Nathaniel
  Nystrom, editors, {\em Proceedings of the Tenth {ACM} {SIGPLAN} Symposium on
  Scala, Scala@ECOOP 2019, London, UK, July 17, 2019}, pages 27--31. {ACM},
  2019.
\newblock \href {https://doi.org/10.1145/3337932.3338812}
  {\path{doi:10.1145/3337932.3338812}}.

\bibitem{PLDI19Effpi}
Alceste Scalas, Nobuko Yoshida, and Elias Benussi.
\newblock {Verifying Message-Passing Programs with Dependent Behavioural
  Types}.
\newblock In {\em Proceedings of the 40th ACM SIGPLAN Conference on Programming
  Language Design and Implementation}, PLDI 2019, pages 502--516, New York, NY,
  USA, 2019. Association for Computing Machinery.
\newblock \href {https://doi.org/10.1145/3314221.3322484}
  {\path{doi:10.1145/3314221.3322484}}.

\bibitem{VanGlabbeekLICS2021}
Rob van Glabbeek, Peter H{\"{o}}fner, and Ross Horne.
\newblock {Assuming Just Enough Fairness to make Session Types Complete for
  Lock-freedom}.
\newblock In {\em 36th Annual {ACM/IEEE} Symposium on Logic in Computer
  Science, {LICS} 2021, Rome, Italy, June 29 - July 2, 2021}, pages 1--13.
  {IEEE}, 2021.
\newblock \href {https://doi.org/10.1109/LICS52264.2021.9470531}
  {\path{doi:10.1109/LICS52264.2021.9470531}}.

\bibitem{ESOP18CrashHandling}
Malte Viering, Tzu{-}Chun Chen, Patrick Eugster, Raymond Hu, and Lukasz Ziarek.
\newblock {A Typing Discipline for Statically Verified Crash Failure Handling
  in Distributed Systems}.
\newblock In Amal Ahmed, editor, {\em Programming Languages and Systems}, pages
  799--826, Cham, 2018. Springer International Publishing.

\bibitem{OOPSLA21FaultTolerantMPST}
Malte Viering, Raymond Hu, Patrick Eugster, and Lukasz Ziarek.
\newblock A {Multiparty} {Session} {Typing} {Discipline} for {Fault-Tolerant}
  {Event-Driven} {Distributed} {Programming}.
\newblock {\em Proc. ACM Program. Lang.}, 5(OOPSLA), October 2021.
\newblock \href {https://doi.org/10.1145/3485501} {\path{doi:10.1145/3485501}}.

\bibitem{YHNN2013}
Nobuko Yoshida, Raymond Hu, Rumyana Neykova, and Nicholas Ng.
\newblock The scribble protocol language.
\newblock In {\em 8th International Symposium on Trustworthy Global Computing -
  Volume 8358}, TGC 2013, pages 22--41, Berlin, Heidelberg, 2014.
  Springer-Verlag.
\newblock \href {https://doi.org/10.1007/978-3-319-05119-2_3}
  {\path{doi:10.1007/978-3-319-05119-2_3}}.

\bibitem{nuScrPaper}
Nobuko Yoshida, Fangyi Zhou, and Francisco Ferreira.
\newblock {Communicating Finite State Machines and an Extensible Toolchain
  for Multiparty Session Types}.
\newblock In Evripidis Bampis and Aris Pagourtzis, editors, {\em Fundamentals
  of Computation Theory}, pages 18--35, Cham, 2021. Springer International
  Publishing.

\bibitem{OOPSLA20VerifiedRefinements}
Fangyi Zhou, Francisco Ferreira, Raymond Hu, Rumyana Neykova, and Nobuko
  Yoshida.
\newblock {Statically Verified Refinements for Multiparty Protocols}.
\newblock {\em Proc. {ACM} Program. Lang.}, 4({OOPSLA}):148:1--148:30, 2020.
\newblock \href {https://doi.org/10.1145/3428216} {\path{doi:10.1145/3428216}}.

\end{thebibliography}

\iftoggle{full}
{
\newpage
\appendix
\section{Structural Congruence}
\label{sec:app_congruence}
The congruence relation of our session calculus is formalised in~\cref{tab:congruence}. 
\begin{figure}
{\small
$\begin{array}{c}
\mpH_1
\cdot
\msg{\roleQ_1}{\mpLab_1}{\val_1}
\cdot
\msg{\roleQ_2}{\mpLab_2}{\val_2}
\cdot
\mpH_2
\equiv
\mpH_1
\cdot
\msg{\roleQ_2}{\mpLab_2}{\val_2}
\cdot
\msg{\roleQ_1}{\mpLab_1}{\val_1}
\cdot
\mpH_2
\quad\text{(if $\roleQ_1\neq \roleQ_2$)}\\[1mm]
\mpQEmpty \cdot \mpH \equiv  \mpH \qquad\qquad
\mpH \cdot \mpQEmpty \equiv  \mpH \qquad\qquad
\mpH[1] \cdot (\mpH[2] \cdot \mpH[3])
\equiv
(\mpH[1] \cdot \mpH[2]) \cdot \mpH[3]
\qquad\qquad
\mu X.\mpP \equiv \mpP\subst{X}{\mu X.\mpP}
\\[1mm]
\mpPart\roleP\mpNil
\mpPar
\mpPart\roleP\mpQEmpty
\mpPar
\mpM
\equiv
\mpM
\qquad
\mpM_1 \mpPar \mpM_2
\equiv
\mpM_2 \mpPar \mpM_1
\qquad
(\mpM_1 \mpPar \mpM_2) \mpPar \mpM_3
\,\equiv\,
\mpM_1 \mpPar (\mpM_2 \mpPar \mpM_3) \qquad
\\[1mm]
\mpP \equiv \mpQ
\;\text{ and }\;
\mpH_1 \equiv \mpH_2
\;\;\implies\;\;
\mpPart\roleP\mpP
\mpPar
\mpPart\roleP\mpH_1
\mpPar
\mpM
\,\equiv\,
\mpPart\roleP\mpQ
\mpPar
\mpPart\roleP\mpH_2
\mpPar
\mpM
\end{array}
$
}
\caption{Structural congruence rules for queues, processes, and sessions.}
\label{tab:congruence}
\end{figure}

\section{Auxiliary Definitions}
\label{sec:app:definitions}

\subsection{Well-Annotated Global Types}
We introduce an auxiliary concept of \emph{well-annotated} global
types in \cref{def:globaltypes:well-anno}, as a consistency requirement for crash
annotations $\roleCrashedSym$ in a global type $\gtG$, and the set of crashed
roles $\rolesC$, and a fixed set of reliable roles $\rolesR$.
We show that well-annotatedness \wrt \rolesR  is preserved by global type
reductions in \cref{lem:well-annotated-preserve}.
It follows that, a global type $\gtG$ without runtime construct is trivially
well-annotated, and all reduct global types $\gtG \gtMoveStar[\rolesR]{}
\gtGi$ are also well-annotated.
\iftoggle{full}{The proof of \cref{lem:well-annotated-preserve} is available in~\Cref{sec:proof:semantics:gty}.
}{}

\begin{definition}[Well-Annotated Global Types]\label{def:globaltypes:well-anno}
  A global type $\gtG$ with crashed roles $\rolesC$ is \emph{well-annotated}
  \wrt %
  a (fixed) set of reliable roles $\rolesR$, iff:
  \begin{enumerate}[label=(WA\arabic*), leftmargin=12mm, nosep]
    \item No reliable roles are crashed, \;$\gtRolesCrashed{\gtG} \cap \rolesR =
      \emptyset$;\; and,
      \label{item:wa:reliable-no-crash}
    \item All roles with crash annotations are in the crashed set,
      \;$\gtRolesCrashed{\gtG} \subseteq \rolesC$;\; and,
      \label{item:wa:crash-annot-crash}
    \item A role cannot be live and crashed simultaneously, \;$\gtRoles{\gtG}
      \cap \gtRolesCrashed{\gtG} = \emptyset$.
      \label{item:wa:live-no-crash}
  \end{enumerate}
\end{definition}
\begin{restatable}[Preservation of Well-Annotated Global Types]{lemma}{lemWellAnnoPreserve}%
  \label{lem:well-annotated-preserve}
  If
  \;$\gtWithCrashedRoles{\rolesC}{\gtG}
  \gtMove[\stEnvAnnotGenericSym]{\rolesR}
  \gtWithCrashedRoles{\rolesCi}{\gtGi}
  $,\; and
  \;$\gtWithCrashedRoles{\rolesC}{\gtG}$\; is well-annotated \wrt $\rolesR$, then
  \;$\gtWithCrashedRoles{\rolesCi}{\gtGi}$\; is also well-annotated \wrt
  $\rolesR$.
\end{restatable}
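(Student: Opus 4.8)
The plan is to proceed by induction on the derivation of the reduction $\gtWithCrashedRoles{\rolesC}{\gtG} \gtMove[\stEnvAnnotGenericSym]{\rolesR} \gtWithCrashedRoles{\rolesCi}{\gtGi}$ (\cref{def:gtype:lts-gt}), checking that each of the three well-annotatedness conditions \ref{item:wa:reliable-no-crash}, \ref{item:wa:crash-annot-crash}, and \ref{item:wa:live-no-crash} of \cref{def:globaltypes:well-anno} is preserved in every case. Before the induction I would isolate two monotonicity facts that I expect to carry most of the cases. First, a \emph{role-removal lemma}: if $\roleP$ is live in $\gtG$, then $\gtRolesCrashed{\gtCrashRole{\gtG}{\roleP}} \subseteq \gtRolesCrashed{\gtG} \cup \setenum{\roleP}$ and $\gtRoles{\gtCrashRole{\gtG}{\roleP}} \subseteq \gtRoles{\gtG} \setminus \setenum{\roleP}$, i.e. removal moves $\roleP$ from the live set into the crashed set and leaves every other role's classification unchanged. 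Second, a \emph{reduction-monotonicity} observation: any reduction satisfies $\gtRoles{\gtGi} \subseteq \gtRoles{\gtG}$ and $\gtRolesCrashed{\gtGi} \subseteq \gtRolesCrashed{\gtG} \cup \setenum{\ltsSubject{\stEnvAnnotGenericSym}}$, where the subject is added to the crashed set only when $\stEnvAnnotGenericSym$ is a crash label $\ltsCrash{\mpS}{\roleP}$; in particular, no role other than the subject changes its live/crashed classification.

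For the base cases, the crash rule \inferrule{\iruleGtMoveCrash} is the interesting one: here $\rolesCi = \rolesC \cup \setenum{\roleP}$ with $\roleP \notin \rolesR$ and $\roleP$ live, and $\gtGi = \gtCrashRole{\gtG}{\roleP}$. Applying the role-removal lemma, \ref{item:wa:reliable-no-crash} holds because $\gtRolesCrashed{\gtGi} \subseteq \gtRolesCrashed{\gtG} \cup \setenum{\roleP}$ is disjoint from $\rolesR$ (using $\roleP \notin \rolesR$ and \ref{item:wa:reliable-no-crash} for $\gtG$); \ref{item:wa:crash-annot-crash} holds since the same set is contained in $\rolesC \cup \setenum{\roleP} = \rolesCi$; and \ref{item:wa:live-no-crash} holds because $\roleP$ has been removed from the live set while no other role's classification changed. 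The remaining axioms --- \inferrule{\iruleGtMoveOut}, \inferrule{\iruleGtMoveIn}, \inferrule{\iruleGtMoveCrDe}, and \inferrule{\iruleGtMoveOrph} --- all keep $\rolesCi = \rolesC$ and produce a reduct whose body is either the en-route variant of the redex with identical role sets (\inferrule{\iruleGtMoveOut}) or a continuation $\gtGi[j]$; since a continuation's live and crashed roles are subsets of those of the redex, all three conditions are inherited directly.

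For the inductive cases, \inferrule{\iruleGtMoveRec} follows once we observe that unfolding preserves both role sets, $\gtRoles{\gtG{}\subst{\gtRecVar}{\gtRec{\gtRecVar}{\gtG}}} = \gtRoles{\gtRec{\gtRecVar}{\gtG}}$ and likewise for $\gtRolesCrashed{\cdot}$, so well-annotatedness transfers to the unfolding and the induction hypothesis applies. For the context rules \inferrule{\iruleGtMoveCtx} and \inferrule{\iruleGtMoveCtxi}, I would first note that each continuation $\gtGi[i]$ of the redex is itself well-annotated (its role sets are subsets of the redex's), apply the induction hypothesis to obtain well-annotatedness of each reduct continuation $\gtGii[i]$ under $\rolesCi$, and then reassemble. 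Reassembly is where the side condition $\ltsSubject{\stEnvAnnotGenericSym} \notin \setenum{\roleP, \roleQ}$ (resp. $\neq \roleQ$) is used together with the reduction-monotonicity observation: since the subject is not a prefix role, the prefix roles $\roleP$ and $\roleQMaybeCrashed$ retain their classification, so recombining the prefix with the reduct continuations does not create a role that is simultaneously live and crashed, and the crashed roles of the whole stay within $\rolesCi$ and disjoint from $\rolesR$.

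The main obstacle I anticipate is the role-removal lemma, since $\gtCrashRole{\cdot}{\cdot}$ (\cref{def:gtype:remove-role}) is defined by a sizeable case analysis that includes clauses deleting an entire (possibly en-route) transmission when removal renders it meaningless. Establishing the two inclusions therefore requires its own structural induction on $\gtG$, carefully tracking, in each clause, that $\roleP$ acquires a crash annotation exactly where it occurred and that no new crash annotations are introduced on other roles nor any live occurrence of another role turned into a crashed one. Once this lemma and the reduction-monotonicity observation are in place, the remaining cases are routine bookkeeping over the reduction rules.
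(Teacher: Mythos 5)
Your proposal is correct and follows essentially the same route as the paper: induction on the reduction derivation, with the crash case discharged by exactly the role-removal inclusions you isolate (the paper states them as \cref{lem:gtype:crash-remove-role,lem:gtype:crashed-crash-remove-role}), the recursion case by unfolding-invariance of the role sets, and the remaining cases by subset reasoning on continuations. Your auxiliary ``reduction-monotonicity'' observation is also already present in the paper as \cref{lem:no-revival-roles}, so your plan does not genuinely depart from the paper's argument.
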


\subsection{Local Type Subtyping}
We define a \emph{subtyping} relation $\stSub$ on local types in
\cref{def:subtyping}. %
Our subtyping relation is mostly standard~\cite[Def.\@ 2.5]{POPL19LessIsMore},
except for the ($\highlight{\text{highlighted}}$) addition of the rule $\inferrule{\iruleStSubStop}$ and
extra requirements in \inferrule{\iruleStSubIn}.
In \inferrule{\iruleStSubIn}, we add two additional requirements:
\emph{(1)} the supertype cannot be a ``pure'' crash handling branch;
and \emph{(2)} if the subtype has a crash handling branch, then the supertype
must also have one.
For simplicity, we do not consider subtyping on basic types $\tyGround$.

\begin{definition}[Subtyping]\label{def:subtyping}
The subtyping relation $\stSub$ is coinductively defined:
{\small{
\[
\begin{array}{c}
\cinference[\iruleStSubEnd]{}{
  \stEnd \stSub \stEnd
}
\quad

\cinference[\iruleStSubIn]{
  \forall i \in I
  &
  \stT[i] \stSub \stTi[i]
  &
  \highlight{\setcomp{\stLab[k]}{k \in I} \neq \setenum{\stCrashLab}}
  &
 \highlight{\nexists j \in J: \stLab[j] = \stCrashLab}
}{
  \stExtSum{\roleP}{i \in I \cup J}{\stChoice{\stLab[i]}{\tyGround[i]} \stSeq \stT[i]}%
  \stSub
  \stExtSum{\roleP}{i \in I}{\stChoice{\stLab[i]}{\tyGround[i]} \stSeq \stTi[i]}%
}

\\[2ex]
\highlight{
\cinference[\iruleStSubStop]{}{
  \stStop \stSub \stStop
}}
\quad

\cinference[\iruleStSubOut]{
  \forall i \in I
  &
  \stT[i] \stSub \stTi[i]
}{
  \stIntSum{\roleP}{i \in I}{\stChoice{\stLab[i]}{\tyGround[i]} \stSeq \stT[i]}
  \stSub
  \stIntSum{\roleP}{i \in I \cup J}{\stChoice{\stLab[i]}{\tyGround[i]} \stSeq \stTi[i]}
}

\\[2ex]
\cinference[\iruleStSubRecL]{
  \stT{}\subst{\stRecVar}{\stRec{\stRecVar}{\stT}} \stSub \stTi
}
{
  \stRec{\stRecVar}{\stT} \stSub \stTi
}
\quad

\cinference[\iruleStSubRecR]{
  \stT \stSub \stTi{}\subst{\stRecVar}{\stRec{\stRecVar}{\stTi}}
}{
  \stT \stSub \stRec{\stRecVar}{\stTi}
}
\end{array}
\]
}}
\end{definition}

\section{Additional Examples}
\label{sec:app_examples}
\begin{example}
\label{ex:process_syntax_semantics_full}
We now illustrate our operational semantics of sessions with an example. 
Consider the session:

\smallskip
\centerline{\(  
{\small{\mpPart\roleP\mpP \mpPar  \mpPart\roleP \mpQEmpty
     \mpPar \mpPart \roleQ \mpQ \mpPar  \mpPart \roleQ \mpQEmpty}}
\)}

\smallskip 
\noindent
where 
{\small{
$ \mpP = \procout{\roleQ}{\mpLab}{\text{``\texttt{abc}''}}{\sum
\setenum{
\begin{array}{l}
\procin{\roleQ}{\mpLabi(\mpx)}{\mpNil}
\\
\procin{\roleQ}{\mpCrashLab}{\mpNil}
\end{array}
}
}
$
}}
 and 
{\small{
$
 \mpQ = 
  \sum \setenum{
 \begin{array}{l}
 \procin{\roleP}{\mpLab(\mpx)}{
 \procout{\roleP}{\mpLabi}{42}{\mpNil}}
 \\
  \procin{\roleP}{\mpCrashLab}{\mpNil}
  \end{array}
  }
  $
}}. 

In this session,  the process $\mpQ$ for $\roleQ$ receives a message sent from $\roleP$ to $\roleQ$; 
the process $\mpP$ for $\roleP$ sends a message from $\roleP$ to $\roleQ$, and then receives a message 
sent from $\roleQ$ to  $\roleP$. 

 On a successful reduction (without crashes),  %
     we have:

 \smallskip
 \centerline{\(
 {\small{
 \begin{array}{rll}
  \mpPart\roleP\mpP \mpPar  \mpPart\roleP \mpQEmpty
     \mpPar \mpPart \roleQ \mpQ \mpPar  \mpPart \roleQ \mpQEmpty
     & \redSend{\roleP}{\roleQ}{\mpLab} &
    \mpPart \roleP \sum
\setenum{
\begin{array}{l}
\procin{\roleQ}{\mpLabi(\mpx)}{\mpNil}
\\
\procin{\roleQ}{\mpCrashLab}{\mpNil}
\end{array}
}
\mpPar
\mpPart\roleP \mpQEmpty  \mpPar
\mpPart \roleQ \mpQ \mpPar  \mpPart \roleQ (\roleP,\mpLab(\text{``\texttt{abc}''}))
\\
& \redSend{\roleP}{\roleQ}{\mpLab} &
  \mpPart \roleP \sum
\setenum{
\begin{array}{l}
\procin{\roleQ}{\mpLabi(\mpx)}{\mpNil}
\\
\procin{\roleQ}{\mpCrashLab}{\mpNil}
\end{array}
}
\mpPar
\mpPart\roleP \mpQEmpty
\mpPar
\mpPart \roleQ \procout{\roleP}{\mpLabi}{42}{\mpNil} \mpPar
\mpPart\roleQ \mpQEmpty
\\
& \redSend{\roleP}{\roleQ}{\mpLab} &
  \mpPart \roleP \sum
\setenum{
\begin{array}{l}
\procin{\roleQ}{\mpLabi(\mpx)}{\mpNil}
\\
\procin{\roleQ}{\mpCrashLab}{\mpNil}
\end{array}
}
\mpPar
\mpPart\roleP (\roleQ, \mpLabi(42))
\mpPar
\mpPart \roleQ \mpNil
\mpPar
\mpPart\roleQ \mpQEmpty
\\
& \redSend{\roleP}{\roleQ}{\mpLab} &
\mpPart\roleP \mpNil
\mpPar
\mpPart\roleP \mpQEmpty
\mpPar
\mpPart \roleQ \mpNil
\mpPar
\mpPart\roleQ \mpQEmpty
\end{array}
}}
\)}

\smallskip
\noindent
Let $\rolesR = \emptyset$ (\ie each role is unreliable).  
Suppose that
$\mpP$ crashes before sending, which leads to the reduction:

\smallskip
 \centerline{\(
 {\small{
  \mpPart\roleP\mpP \mpPar  \mpPart\roleP \mpQEmpty
     \mpPar \mpPart \roleQ \mpQ \mpPar  \mpPart \roleQ \mpQEmpty
    \redCrash{\roleP}{\rolesR}
     \mpPart\roleP{\mpCrash}
\mpPar
\mpPart\roleP{\mpQUnavail}
\mpPar
\mpPart \roleQ \mpQ \mpPar  \mpPart \roleQ \mpQEmpty
 \redSend{\roleP}{\roleQ}{\mpLab}
\mpPart\roleP{\mpCrash}
\mpPar
\mpPart\roleP{\mpQUnavail}
\mpPar
\mpPart \roleQ \mpNil
\mpPar
\mpPart\roleQ \mpQEmpty
}}
\)}

\smallskip
\noindent
We can observe that when the output (sending) process $\mpP$ located at an unreliable role $\roleP$ crashes (by
\inferrule{r-$\lightning$}), $\roleP$ also crashes ($\mpPart\roleP{\mpCrash}$),  with an unavailable incoming message
queue ($\mpPart\roleP{\mpQUnavail}$). Subsequently, the input (receiving) process $\mpQ$ located at $\roleQ$ can detect and
handle the crash by \inferrule{r-rcv-$\odot$} via its handling branch.
\end{example}

\begin{example}
\label{ex:role_removal}
We remove role $\roleFmt{C}$ in the global type $\gtG$ in~\eqref{ex:overview-global-with-crash} (defined in \Cref{sec:overview}).

\smallskip
\centerline{\(
{\small{
\gtCrashRole{\gtG}{\roleFmt{C}} =
    \gtCommSingle{\roleFmt{L}}{\roleFmt{I}}
    {\gtMsgFmt{trigger}}{}{\gtCommTransitSingle{\roleFmt{C^\lightning}}{\roleFmt{I}}
     {\gtCrashLab}{}
     {\gtCommSingle{\roleFmt{I}}{\roleFmt{L}}{\gtMsgFmt{fatal}}{}{
    \gtEnd
    }
}}}}
\)}

\smallskip
\noindent
Role $\roleFmt{C}$ now carries a crash annotation $\roleCrashedSym$ in the
 resulting global type, denoting it has crashed.
 Crash annotations change the reductions available for global types.
  \end{example}

\begin{example}
\label{ex:configuration_safety_full}
Recall the local types of the Simpler Logging example in~\cref{sec:overview}:

\smallskip
\centerline{\(
{\small{
 \begin{array}{c}
  \stT[\roleFmt{C}]  =
 \roleFmt{I} \stFmt{\oplus}
   \stLabFmt{read}
 \stSeq
  \roleFmt{I}
   \stFmt{\&}
    \stLabFmt{report(\stFmtC{log})}
 \stSeq
\stEnd
  \quad
   \stT[\roleFmt{L}] =
   \roleFmt{I} \stFmt{\oplus}
   \stLabFmt{trigger}
 \stSeq
  \stExtSum{\roleFmt{I}}{}{
  \begin{array}{@{}l@{}}
\stLabFmt{fatal}
  \stSeq
   \stEnd
\\
\stLabFmt{read}
  \stSeq
   \roleFmt{I} \stFmt{\oplus}
    \stLabFmt{report(\stFmtC{log})}
    \stSeq
    \stEnd
  \end{array}
  }
\\[3mm]
  \stT[\roleFmt{I}]  =
    \roleFmt{L}
  \stFmt{\&}
  \stLabFmt{trigger}
  \stSeq
  \stExtSum{\roleFmt{C}}{}{
  \begin{array}{@{}l@{}}
 \stLabFmt{read}
 \stSeq
 \roleFmt{L}
  \stFmt{\oplus}
  \stLabFmt{read}
  \stSeq
 \roleFmt{L}
\stFmt{\&}
\stLabFmt{report(\stFmtC{log})}
\stSeq
\roleFmt{C}
\stFmt{\oplus}
\stLabFmt{report(\stFmtC{log})}
\stSeq
\stEnd
 \\
  \stCrashLab
  \stSeq
  \roleFmt{L}
  \stFmt{\oplus}
  \stLabFmt{fatal}
 \stSeq
\stEnd
 \end{array}
 }
\end{array}
 }
 }
\)}

\smallskip
\noindent
The configuration $\stEnv; \qEnv$, where
$\stEnv =  \stEnvMap{\roleFmt{C}}{\stT[\roleFmt{C}]}  \stEnvComp
\stEnvMap{\roleFmt{L}}{\stT[\roleFmt{L}]}   \stEnvComp
\stEnvMap{\roleFmt{I}}{\stT[\roleFmt{I}]}$ and
$\qEnv =  \qEnv[\emptyset]$, is
$\setenum{\roleFmt{L}, \roleFmt{I}}$-safe.
This can be verified by checking its possible
reductions.  For example,
in the case where $\roleFmt{C}$ crashes immediately, we have:

\smallskip
{\small{
\centerline{\(
\begin{array}{@{}r@{}cl}
\stEnv; \qEnv
&
\stEnvMoveAnnot{\ltsCrash{\mpS}{\roleFmt{C}}}
&
 \stEnvUpd{\stEnv}{\roleFmt{C}}{
\stStop};
       \stEnvUpd{\qEnv}{\cdot, \roleFmt{C}}{\stQUnavail}
       \\
&
\stEnvMoveOutAnnot{\roleFmt{L}}{\roleFmt{I}}{\stChoice{\stLabFmt{trigger}}{}}
&
\stEnvUpd{\stEnvUpd{\stEnv}{\roleFmt{C}}{\stStop}}{
 \roleFmt{L}}{
   \stExtSum{\roleFmt{I}}{}{
  \begin{array}{@{}l@{}}
\stLabFmt{fatal}
  \stSeq
   \stEnd
\\
\stLabFmt{read}
  \stSeq
   \roleFmt{I} \stFmt{\oplus}
    \stLabFmt{report(\stFmtC{log})}
    \stSeq
    \stEnd
  \end{array}
  }
};  \stEnvUpd{\stEnvUpd{\qEnv}{\cdot, \roleFmt{C}}{\stQUnavail}}{\roleFmt{L}, \roleFmt{I}}{\stLabFmt{trigger}}
\\
    &
     \stEnvMoveInAnnot{\roleFmt{I}}{\roleFmt{L}}{\stChoice{\stLabFmt{trigger}}{}}
     &
 \stEnvUpd{\stEnvUpd{\stEnv}{\roleFmt{C}}{\stStop}}{
 \roleFmt{L}}{
   \stExtSum{\roleFmt{I}}{}{
  \begin{array}{@{}l@{}}
\stLabFmt{fatal}
  \stSeq
   \stEnd
\\
\stLabFmt{read}
  \stSeq
   \roleFmt{I} \stFmt{\oplus}
    \stLabFmt{report(\stFmtC{log})}
    \stSeq
    \stEnd
  \end{array}
  }}
  \\
  &
  &
 \, \, \, \,\stFmt{[} \roleFmt{I}
 \stFmt{\mapsto} \stExtSum{\roleFmt{C}}{}{
  \begin{array}{@{}l@{}}
 \stLabFmt{read}
 \stSeq
 \roleFmt{L}
  \stFmt{\oplus}
  \stLabFmt{read}
  \stSeq
 \roleFmt{L}
\stFmt{\&}
\stLabFmt{report(\stFmtC{log})}
\stSeq
\roleFmt{C}
\stFmt{\oplus}
\stLabFmt{report(\stFmtC{log})}
\stSeq
\stEnd
 \\
  \stCrashLab
  \stSeq
  \roleFmt{L}
  \stFmt{\oplus}
  \stLabFmt{fatal}
 \stSeq
\stEnd
 \end{array}
 }\stFmt{]}
; \stEnvUpd{\qEnv}{\cdot, \roleFmt{C}}{\stQUnavail}
 \\
 &
 \stEnvMoveAnnot{\ltsCrDe{\mpS}{\roleFmt{I}}{\roleFmt{C}}}
 &
  \stEnvUpd{\stEnvUpd{\stEnvUpd{\stEnv}{\roleFmt{C}}{\stStop}}{
 \roleFmt{L}}{
   \stExtSum{\roleFmt{I}}{}{
  \begin{array}{@{}l@{}}
\stLabFmt{fatal}
  \stSeq
   \stEnd
\\
\stLabFmt{read}
  \stSeq
   \roleFmt{I} \stFmt{\oplus}
    \stLabFmt{report(\stFmtC{log})}
    \stSeq
    \stEnd
  \end{array}
  }}}{\roleFmt{I}}{\roleFmt{L} \stFmt{\oplus} \stLabFmt{fatal}
  \stSeq \stEnd}; \stEnvUpd{\qEnv}{\cdot, \roleFmt{C}}{\stQUnavail}
 \\
  &
     \stEnvMoveOutAnnot{\roleFmt{I}}{\roleFmt{L}}{\stChoice{\stLabFmt{fatal}}{}}
     &
    \stEnvUpd{\stEnvUpd{\stEnvUpd{\stEnv}{\roleFmt{C}}{\stStop}}{
 \roleFmt{L}}{
   \stExtSum{\roleFmt{I}}{}{
  \begin{array}{@{}l@{}}
\stLabFmt{fatal}
  \stSeq
   \stEnd
\\
\stLabFmt{read}
  \stSeq
   \roleFmt{I} \stFmt{\oplus}
    \stLabFmt{report(\stFmtC{log})}
    \stSeq
    \stEnd
  \end{array}
  }}}{\roleFmt{I}}{\stEnd};\\
  & &
   \stEnvUpd{\stEnvUpd{\qEnv}{\cdot, \roleFmt{C}}{\stQUnavail}}{\roleFmt{I}, \roleFmt{L}}{\stLabFmt{fatal}}
   \\
   &
    \stEnvMoveInAnnot{\roleFmt{L}}{\roleFmt{I}}{\stChoice{\stLabFmt{fatal}}{}}
     &
    \stEnvUpd{\stEnvUpd{\stEnvUpd{\stEnv}{\roleFmt{C}}{\stStop}}{
 \roleFmt{L}}{
  \stEnd
  }}{\roleFmt{I}}{\stEnd};
  \stEnvUpd{\qEnv}{\cdot, \roleFmt{C}}{\stQUnavail}
 \end{array}
\)}
}}

\smallskip
\noindent
and each reductum satisfies all clauses of~\Cref{def:mpst-env-safe}.
The cases where $\roleFmt{C}$ crashes after sending the $\stLabFmt{read}\text{ing}$ message
to $\roleFmt{I}$ are similar.  There are no other crash reductions to consider, since both
$\roleFmt{L}$ and $\roleFmt{I}$ are assumed to be reliable.
The cases where no crashes occur are similar as well,
except that \inferrule{\iruleTCtxCrashDetect} and \inferrule{\iruleTCtxCrash} are not applied in the non-crash reductions.
\end{example}

\begin{example}
\label{ex:typing_system_full}
Consider our Simpler Logging example (\Cref{sec:overview} and \Cref{ex:configuration_safety}).  
Specifically, consider the processes that act as the roles $\roleFmt{C}$, $\roleFmt{L}$, and 
$\roleFmt{I}$ respectively: 

\smallskip
 \centerline{\(
 {\small{
 \begin{array}{c}
\mpP[\roleFmt{C}]  =
\procoutNoVal{\roleFmt{I}}{\labFmt{read}}{\procin{\roleFmt{I}}{\labFmt{report}(\mpx)}{\mpNil}}
\quad
\mpP[\roleFmt{L}] =
\procoutNoVal{\roleFmt{I}}{\labFmt{trigger}}{\sum \setenum{
 \begin{array}{l}
 \procin{\roleFmt{I}}{\labFmt{fatal}}{\mpNil}
 \\
 \procin{\roleFmt{I}}{\labFmt{read}}{
 \procout{\roleFmt{I}}{\labFmt{report}}{\mpFmt{log}}{\mpNil}}
  \end{array}
  }
  }
  \\[3mm]
 \mpP[\roleFmt{I}] =
\procin{\roleFmt{L}}{\labFmt{trigger}}{\sum \setenum{
 \begin{array}{l}
 \procin{\roleFmt{C}}{\labFmt{read}}{
 \procoutNoVal{\roleFmt{L}}{\labFmt{read}}
 {\procin{\roleFmt{L}}{\labFmt{report}(\mpx)}
 {\procout{\roleFmt{C}}{\labFmt{report}}{\mpFmt{log}}{\mpNil}}}
 }
 \\
  \procin{\roleFmt{C}}{\mpCrashLab}{ \procoutNoVal{\roleFmt{L}}{\labFmt{fatal}}{\mpNil}}
  \end{array}
  }
  } 
\end{array}
}}
\)}

\smallskip
\noindent
and message queues $\mpH[\roleFmt{C}] = \mpH[\roleFmt{L}] = \mpH[\roleFmt{I}] = \mpQEmpty$,
and %
 the configuration $\stEnv; \qEnv$ as in \Cref{ex:configuration_safety}. 

Process $\mpP[\roleFmt{C}]$ (resp. $\mpP[\roleFmt{L}]$, $\mpP[\roleFmt{I}]$)
has the type $\stEnvApp{\stEnv}{\roleFmt{C}}$ (resp. $\stEnvApp{\stEnv}{\roleFmt{L}}$, 
$\stEnvApp{\stEnv}{\roleFmt{I}}$),
and
queue $\mpH[\roleFmt{C}]$ (resp. $\mpH[\roleFmt{L}]$, $\mpH[\roleFmt{I}]$) %
has the type
$\stEnvApp{\qEnv}{-, \roleFmt{C}}$ (resp. $\stEnvApp{\qEnv}{-, \roleFmt{L}}$, $\stEnvApp{\qEnv}{-, \roleFmt{I}}$),
which can be verified in the standard way. Then, together with the association 
 $\stEnvAssoc{\gtWithCrashedRoles{\emptyset}{\gtG[s]}}{\stEnv; \qEnv}{\setenum{\roleFmt{L}, \roleFmt{I}}}$, 
we can use \inferrule{{t-sess}} to assert that the session
$\mpPart{\roleFmt{C}}{\mpP[\roleFmt{C}]} \mpPar
  \mpPart{\roleFmt{C}}{\mpH[\roleFmt{C}]} \mpPar
  \mpPart{\roleFmt{L}}{\mpP[\roleFmt{L}]} \mpPar
  \mpPart{\roleFmt{L}}{\mpH[\roleFmt{L}]} \mpPar
  \mpPart{\roleFmt{I}}{\mpP[\roleFmt{I}]} \mpPar
 \mpPart{\roleFmt{I}}{\mpH[\roleFmt{I}]}$
 is governed by the global type
  $\gtWithCrashedRoles{\emptyset}{\gtG[s]}$.
  If we follow a crash reduction, \eg by the rule \inferrule{r-$\lightning$}, 
  the session evolves as $\mpPart{\roleFmt{C}}{\mpP[\roleFmt{C}]} \mpPar
  \mpPart{\roleFmt{C}}{\mpH[\roleFmt{C}]} \mpPar
  \mpPart{\roleFmt{L}}{\mpP[\roleFmt{L}]} \mpPar
  \mpPart{\roleFmt{L}}{\mpH[\roleFmt{L}]} \mpPar
  \mpPart{\roleFmt{I}}{\mpP[\roleFmt{I}]} \mpPar
 \mpPart{\roleFmt{I}}{\mpH[\roleFmt{I}]}
\;\redCrash{\roleP}{\rolesR}\;
\mpPart{\roleFmt{C}}{\mpCrash} \mpPar
  \mpPart{\roleFmt{C}}{\mpQUnavail} \mpPar
  \mpPart{\roleFmt{L}}{\mpP[\roleFmt{L}]} \mpPar
  \mpPart{\roleFmt{L}}{\mpH[\roleFmt{L}]} \mpPar
  \mpPart{\roleFmt{I}}{\mpP[\roleFmt{I}]} \mpPar
 \mpPart{\roleFmt{I}}{\mpH[\roleFmt{I}]}$, where, by 
 \inferrule{t-$\mpCrash$}, $\mpP[\roleFmt{C}]$ is typed by $\stStop$,  and 
 $\mpH[\roleFmt{C}]$ is typed by $\stQUnavail$. 
  \end{example}

\section{Additional Related Work}

\textbf{Peters \etal~\cite{FORTE22FaultTolerant}} propose an MPST framework to
model fine-grained unreliability.
In their work, each transmission in a global type is parameterised by a
reliability annotation,
which can be one of unreliable (sender/receiver can crash, and messages can be
lost),
weakly reliable (sender/receiver can crash, messages are not lost), or
reliable (no crashes or message losses).
The design choice taken in our work roughly falls under weakly
reliable in their work, where a $\gtCrashLab$ handling branch (in their work, a
default branch) needs to be present to handle failures.
In our work, the reliability assumptions operate on a coarse level, but
nonetheless are \emph{consistent} within a given global type -- if a role
$\roleP$ is assumed reliable, $\roleP \in \rolesR$, then it does not crash for
the duration of the protocol, and vice versa.
Therefore, in a transmission $\roleP \to \roleQ$, our model allows \emph{one}
of the two roles to be unreliable, whereas their work does not permit the
`mixing' of reliability of sending and receiving roles.

\textbf{Viering \etal~\cite{OOPSLA21FaultTolerantMPST}} utilise MPST as a guidance for
fault-tolerant distributed system with recovery mechanisms.
Their framework includes various features, such as sub-sessions,
event-driven programming, dynamic role assignments, and most importantly
failure handling.
Our work handles unreliability in distributed programming, with the
following differences:
\begin{enumerate}[nosep, leftmargin=*]
  \item Failure detection \emph{assumptions} and \emph{models} are different:
    in our work, we assume a perfect failure detector, where all detected
    crashes are genuine.
    Their work uses a less strict assumption to allow \emph{false suspicions}.
    This difference subsequently gives rise to how failures are handled in both
    approaches:
    in our work, we use a special $\gtCrashLab$ handling branch in %
    global types to specify how the global protocol should progress after a
    crash has been detected.
    In contrast, they use a \emph{try-catch} construct in global types.
    In such a try-catch construct, crash detection within a sub-session
    $\gtG[1]$ is specified
    $\gtG[1]~\mathsf{with}~\roleP @ \roleQ~.~\gtG[2]$, where $\gtG[2]$ is a
    global type for failure handling (where $\roleP$ cannot occur), and
    $\roleQ$ is a \emph{monitor} that monitors $\roleP$ for possible failures.
    Moreover, the well-formedness condition (2) in~\cite[\S
    4.1]{OOPSLA21FaultTolerantMPST} requires the first message in $\gtG[2]$ to
    be a message broadcast of failure notification from the monitor $\roleQ$
    to all roles participating in the sub-session (except the crashed role
    $\roleP$).

    On this matter, we consider our framework more \emph{flexible} when
    detecting and handling crashes: every communication construct can have a
    crash handling branch (when the receiver is not assumed reliable), and the
    failure broadcast is not necessary (failure detection only occurs when
    receiving from a crashed role).

  \item The \emph{merge} operators, used when projecting global types to obtain
    local types, are different: we use a more expressive \emph{full} merge
    operator (\cref{def:local-type-merge}), whereas they use a \emph{plain}
    merge operator, \ie requiring all continuations to project to the
    \emph{same} local type.

  \item  \emph{Reliability assumptions} are different: in our work, we
    support a range of assumptions from every role being unreliable, to totally
    reliable (as in the literature).
    In their work, they require \emph{at least one} reliable role, because
    they use a \emph{monitoring tree} for detecting crashes.
    Our work allows a role to detect the crash of its communication partner
    during reception, thus requiring neither such trees nor reliable roles.

 \item They provide a toolchain for
implement fault-tolerant protocols targeting \Scala.
Specifically, they use an \emph{event-driven} style of API, in a similar style
to~\cite{OOPSLA20VerifiedRefinements}, in contrast to previous work in
\Scala~\cite{ECOOP17MPST}.
We instead use the \effpi library, which fits our needs to encode
local types, and it can be easily adapted to implement crash detection.
\end{enumerate}

\textbf{Barwell \etal~\cite{CONCUR22MPSTCrash}}
develop a theory of multiparty session types with crash-stop
failures. Their theory models crash-stop failures in the semantics of processes
and session
types, where the type system uses a model checker to validate type safety.
Our theory follows a similar model of crash-stop failures, but differs in the
following:
\begin{enumerate}[nosep, leftmargin=*]
  \item We model an asynchronous (message-passing) semantics, whereas they
    model a synchronous (rendezvous) semantics.
    We focus on the asynchronous systems, where a message can be buffered while in
    transit, as most of the interactions in the real distributed world are asynchronous;
  \item We follow a top-down methodology, beginning with protocol specification
    using global types, whereas they follow~\cite{POPL19LessIsMore} to analyse
    only local types.
    Our method dispenses with the need to use a model checker.
    More specifically, it is not feasible to model check asynchronous systems
    with buffers, since the model may be infinite~\cite[Appendix \S
    G]{POPL19LessIsMore}.
  \item We present an toolkit in our work to implement multiparty session type APIs
that can handle crash-stop failures in \Scala, whereas they do not
produce any APIs for end-users to program with, as their toolchain mainly
concerns with validating types for safety and liveness properties via model-checking.
Their framework cannot be scaled to real distributed systems
since model-checking does not allow verification of infinite asynchronous queues.
\end{enumerate}

\textbf{Le Brun and Dardha~\cite{ESOP23MAGPi}} follow a similar framework
to~\cite{CONCUR22MPSTCrash}.
They model an asynchronous semantics, and support more patterns of failure,
including message losses, delays, reordering, as well as link failures and
network partitioning.
However, their typing system suffers from its genericity, when type-level
properties become undecidable~\cite[\S 4.4]{ESOP23MAGPi}.
Our work uses global types for guidance, and recovers decidability of
properties at a small expense of expressivity.

\textbf{Mostrous and Vasconcelos~\cite{LMCS18Affine}}  first proposed the affine approach
to failure handling.
They present a $\pi$-calculus with affine binary sessions (\ie limited to two participants)
and concomitant reduction rules that represent a
minimal extension to standard (binary) session types.
Their extension is primarily comprised of a \emph{cancel operator}, which is
semantically similar to our crash construct: it represents a process
that has terminated early.
Besides these similarities, our work differs from \cite{LMCS18Affine} in several ways:
\begin{enumerate}[nosep, leftmargin=*]
  \item We address multiparty protocols and sessions rather than binary session types;
  \item In \cite{LMCS18Affine}, a cancel operator can have an arbitrary session type;
  consequently, crashes are not visible at the type level.
  Instead, we type crashed session endpoints with the special type $\stStop$,
  which lets us model crashes in the type semantics, and
  helps us in ensuring that a process implements its failure handling as expected in its (global or local) type;
  \item The reduction rules of \cite{LMCS18Affine}  do not permit a process to
    terminate early arbitrarily: cancellations must be raised explicitly by the
    programmer (or automatically by attempting to receive messages from crashed
    endpoints);
  \item Finally, cancellations in \cite{LMCS18Affine} may be caught and handled via a \emph{do-catch}
construct. This construct catches only the first cancellation and cannot be nested, thus providing little help in handling failure across
multiple roles. Our global and local types seamlessly support protocols where the failure of a role is detected (and handled) while handling the failure of another role.
\end{enumerate}

\vspace{0.5em}
\noindent
\textbf{Fowler \etal~\cite{DBLP:journals/pacmpl/FowlerLMD19}} present a concurrent
$\lambda$-calculus, \emph{EGV}, with asynchronous session-typed communication
and exception handling, and implement their approach as parts of the
\textsc{Links} language.
Their approach is based on~\cite{LMCS18Affine}, and
therefore shares many of the same differences to our approach:
the use of the cancel operator and binary
session types, and the lack of a reduction rule enabling a process to crash
arbitrarily;
the cancel operator used in EGV takes an arbitrary session type
-- whereas we reflect the crashed status with the dedicated $\stStop$ type.
Similar to  our work, \cite{DBLP:journals/pacmpl/FowlerLMD19}
has asynchronous communication channels:
messages are queued when sent, and delivered at a later stage.

\vspace{0.5em}
\noindent
\textbf{Lagaillardie \etal~\cite{ECOOP22AffineMPST}} propose a framework of
\emph{affine} multiparty session types, and provide an implementation of affine
MPST in the \Rust programming language.
They utilise the affine type
system and \texttt{Result} types of \Rust, so that the type system enforces
that failures are handled.
In their system, a multiparty session can terminate prematurely.
While their theory can be used to model crash-stop failures, such failures
are not built in the semantics, so manual encoding of failures is necessary.
Moreover, there is no way to recover from a cancellation (\ie failure) besides
propagating the cancellation.
In our work, we provide the ability to follow a \emph{different} protocol
when a crash is detected, which gives rise to more flexibility and
expressivity.

\vspace{0.5em}
\noindent
\textbf{Coordinator model} approaches~\cite{DBLP:conf/forte/AdameitPN17,ESOP18CrashHandling}
often incorporate \emph{interrupt blocks} (or similar constructs)
to model crashes and failure handling.

\vspace{0.5em}
\noindent
\textbf{Adameit \etal~\cite{DBLP:conf/forte/AdameitPN17}} extend the standard MPST syntax with  \emph{optional blocks}, representing regions of a protocol that are susceptible to communication failures.
In their approach, if a process $\mpP$ expects a value from an optional block which fails, then a
\emph{default value} is provided to $\mpP$, so $\mpP$ can continue running.
This ensures termination and deadlock-freedom.
Although this approach does not feature an explicit reliable coordinator
process, we describe it here due to the inherent coordination required for multiple processes to start and end an optional block. %
Our approach differs in three key ways:
\begin{enumerate}[leftmargin=*, nosep]
  \item we model crash-stop process failures instead of impermanent link failures;
  \item we extend the semantics of communications in lieu of introducing a
    new syntactic construct to enclose the potentially crashing regions of a protocol.
    Our global type projections and typing context safety ensure that crash detection is performed at every pertinent communication point;
  \item we allow crashes to significantly affect the evolution of protocols:
    our global and local types can have crash detection branches specifying
    significantly different behaviours \wrt non-crashing executions.
    Conversely, the approach in~\cite{DBLP:conf/forte/AdameitPN17} does not
    discriminate between the presence and absence of failures: both have the
    same protocol in the optional block's continuation.
\end{enumerate}

\vspace{0.5em}
\noindent
\textbf{Viering \etal~\cite{ESOP18CrashHandling}} similarly extend the standard
global type syntax with a \emph{try-handle} construct, which is facilitated by the
presence of a reliable coordinator process, and via a construct to specify
reliable processes.
When the coordinator detects a failure, it broadcasts notifications to all remaining live processes;
then, the protocol proceeds according to the failure handling continuation
specified as part of the try-handle construct.
Our approach and \cite{ESOP18CrashHandling} share several modelling choices: crash-stop
semantics, perfect links, and the possibility of specifying reliable processes.
However, unlike \cite{ESOP18CrashHandling}, our approach does \emph{not} depend on a reliable coordinator that broadcasts failure notifications: all roles in a protocol can be unreliable, all processes may crash.

Besides the differences discussed above, we decided not to adopt coordinator processes nor failure broadcasts in order to avoid their inherent drawbacks.
The use of a coordinator requires additional run-time resources and increases the overall complexity of a distributed system.  Furthermore, the broadcasting of failure notifications introduces an effective synchronisation point for all roles, with additional overheads. Such synchronisation points may also make it harder to extend the theory to support scenarios
with unreliable communication.

\vspace{0.5em}
\noindent
\textbf{Other related MPST implementations} include~\cite{ECOOP21MPSTActor,ECOOP22MPSTScala}.
\textbf{Harvey \etal~\cite{ECOOP21MPSTActor}} design a framework for
MPST-guided, safe actor programming.
Whilst the MPST protocol does not include any failure handling, the actors may
fail or raise exceptions, which are handled in a similar way to what we
summarise as the affine technique.
\textbf{Cledou \etal~\cite{ECOOP22MPSTScala}} revisit API generation techniques in
\Scala for MPST\@.
In addition to the traditional local type/automata-based code
generation~\cite{ECOOP17MPST, FASE16EndpointAPI}, they propose a
technique based on sets of pomsets, utilising match
types~\cite{POPL22TypeLevelMatch} in \Scala 3.
Our approach sticks to the traditional approach guided by local types, but also
utilises match types.
Whereas the global and local types in their work allow for sequential
composition (\ie semi-colon operator $;$), we opt to stick to traditional MPST
for simplicity, since the main goal of our work is to model crash-stop
failures.

\vspace{0.5em}
\noindent
\textbf{Giallorenzo \etal~\cite{DBLP:journals/corr/abs-2005-09520}} present \Choral,
a programming language for choreographies (multiparty protocols).
\Choral supports the handling of \emph{local exceptions} in choreographies, which can be used to program reliable channels over unreliable networks, supervision mechanisms, and other methods for fallible communication.
They utilise automatic retries in the implementation of channel APIs . Additionally,
these APIs have equivalent versions that wrap results in \texttt{Result} objects.
These objects are essentially sum types that combine the transmitted value type with an error type, similar to the approach used in \Go and \Rust. As future work, we plan to integrate our theory in \Choral, enabling us to develop more reliable distributed applications.

\section{Proofs for \cref{sec:gtype}}

With regard to recursive global types, we define their \emph{unfolding} as
$\unfoldOne{\gtRec{\gtRecVar}{\gtG}} =
\unfoldOne{\gtG\subst{\gtRecVar}{\gtRec{\gtRecVar}{\gtG}}}$, and
$\unfoldOne{\gtG} = \gtG$ otherwise.
A recursive type $\gtRec{\gtRecVar}{\gtG}$ must be guarded (or
contractive), \ie the unfolding leads to a progressive prefix, \eg a
transmission.
Unguarded types, such as $\gtRec{\gtRecVar}{\gtRecVar}$ and
$\gtRec{\gtRecVar}{\gtRec{\gtRecVari}{\gtRecVar}}$, are excluded.
Similar definitions and requirements apply for local types.

\subsection{Active and Crashed Roles}

\begin{definition}[Active and Crashed Roles]
\label{def:active_crashed_roles}
We define $\gtRoles{\gtG}$ for the set of \emph{active} roles in a global
type $\gtG$, and $\gtRolesCrashed{\gtG}$ for the set of \emph{crashed} roles,
as the smallest set such that:
\[
{\small{
  \begin{array}{r@{\;}c@{\;}lr@{\;}c@{\;}l}
    \gtRoles{\gtComm{\roleP}{\roleQ}{i \in I}{\gtLab[i]}{\tyGround[i]}{\gtG[i]}
    }
    & =
    & \setenum{\roleP, \roleQ} \cup \bigcup\limits_{i \in I}{\gtRoles{\gtG[i]}}
    & \gtRolesCrashed{\gtComm{\roleP}{\roleQ}{i \in I}{\gtLab[i]}{\tyGround[i]}{\gtG[i]}
    }
    & =
    & \bigcup\limits_{i \in I}{\gtRolesCrashed{\gtG[i]}}
    \\
    \gtRoles{
      \gtComm{\roleP}{\roleQCrashed}{i \in I}{\gtLab[i]}{\tyGround[i]}{\gtG[i]}
    }
    & =
    & \setenum{\roleP} \cup \bigcup\limits_{i \in I}{\gtRoles{\gtG[i]}}
    & \gtRolesCrashed{
      \gtComm{\roleP}{\roleQCrashed}{i \in I}{\gtLab[i]}{\tyGround[i]}{\gtG[i]}
    }
    & =
    & \setenum{\roleQ} \cup \bigcup\limits_{i \in I}{\gtRolesCrashed{\gtG[i]}}
    \\
    \gtRoles{
      \gtCommTransit{\rolePMaybeCrashed}{\roleQ}{i \in I}{\gtLab[i]}{\tyGround[i]}{\gtG[i]}{j}
    }
    & =
    & \setenum{\roleQ} \cup \bigcup\limits_{i \in I}{\gtRoles{\gtG[i]}}
    & \gtRolesCrashed{
      \gtCommTransit{\rolePMaybeCrashed}{\roleQ}{i \in I}{\gtLab[i]}{\tyGround[i]}{\gtG[i]}{j}
    }
    & =
    & \bigcup\limits_{i \in I}{\gtRolesCrashed{\gtG[i]}}
    \\
    \gtRoles{\gtEnd} = \gtRoles{\gtRecVar}
    & =
    & \emptyset
    & \gtRolesCrashed{\gtEnd} = \gtRolesCrashed{\gtRecVar}
    & =
    & \emptyset
    \\
    \gtRoles{\gtRec{\gtRecVar}{\gtG}}
    & =
    & \gtRoles{\gtG\subst{\gtRecVar}{\gtRec{\gtRecVar}{\gtG}}}
    & \gtRolesCrashed{\gtRec{\gtRecVar}{\gtG}}
    & =
    & \gtRolesCrashed{\gtG\subst{\gtRecVar}{\gtRec{\gtRecVar}{\gtG}}}
  \end{array}
  }}
\]
\end{definition}
\begin{remark}
  Even if $\roleP$ occurs crashed in a communication in transit, we do not
  consider $\roleP$ as crashed unless it appears crashed in a continuation:
  \(
  \gtRolesCrashed{
    \gtCommTransit{\rolePCrashed}{\roleQ}{i \in I}{\gtLab[i]}{\tyGround[i]}{\gtG[i]}{j}
  }
  = \bigcup\limits_{i \in I}{\gtRolesCrashed{\gtG[i]}}.
  \)
\end{remark}

\begin{lemma}\label{lem:gtype:crash-remove-role}
  If \;$\roleP \in \gtRoles{\gtG}$, \;then \;$\roleP \notin
  \gtRoles{\gtCrashRole{\gtG}{\roleP}}$ and
  $\gtRoles{\gtCrashRole{\gtG}{\roleP}} \subseteq \gtRoles{\gtG}$.
\end{lemma}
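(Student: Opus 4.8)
The plan is to prove the two conjuncts separately but by a single simultaneous induction on the structure of $\gtG$, matching the case analysis of role removal in \Cref{def:gtype:remove-role} against the clauses defining $\gtRoles{\cdot}$ in \Cref{def:active_crashed_roles}. Call the two goals (i) $\roleP \notin \gtRoles{\gtCrashRole{\gtG}{\roleP}}$ and (ii) $\gtRoles{\gtCrashRole{\gtG}{\roleP}} \subseteq \gtRoles{\gtG}$. It is cleanest to prove the monotonicity goal (ii) in the stronger form that \emph{drops} the premise $\roleP \in \gtRoles{\gtG}$: removal never introduces a fresh active role, regardless of whether $\roleP$ is active. This strengthened (ii) is exactly what I will need to feed the inductive step of (i).

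The base cases $\gtG = \gtEnd$ and $\gtG = \gtRecVar$ are immediate, since removal acts as the identity and $\gtRoles{\gtEnd} = \gtRoles{\gtRecVar} = \emptyset$. For the transmission and transmission-en-route prefixes I would run through the subcases of \Cref{def:gtype:remove-role}. In each subcase the result is either (a) the same prefix with $\roleP$ carrying the crash annotation $\rolePCrashed$ (or with $\roleP$ simply no longer counted as active, as in the en-route clauses), or (b) a single continuation $\gtCrashRole{\gtG[j]}{\roleP}$. In case (a), reading off $\gtRoles{\cdot}$ leaves only the \emph{other} participant of the prefix together with the roles of the rewritten continuations; because the sender and receiver of a transmission are always distinct, $\roleP$ is not that other participant, and the continuations are handled by the induction hypothesis. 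In case (b) the induction hypothesis applies directly. The only wrinkle for (i) is that a continuation $\gtG[j]$ need not contain $\roleP$ as an active role (e.g.\ once $\roleP$ has performed its last reception), so the hypothesis for (i) is unavailable there; but then $\roleP \notin \gtRoles{\gtG[j]}$, and invoking the strengthened monotonicity (ii) on $\gtG[j]$ gives $\roleP \notin \gtRoles{\gtCrashRole{\gtG[j]}{\roleP}}$, closing the gap. Goal (ii) in these cases is routine, as every clause keeps a subset of $\setenum{\roleQ,\roleR}$ and replaces continuations by their removals.

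The genuine obstacle is the recursion case $\gtG = \gtRec{\gtRecVar}{\gtGi}$, where removal is defined \emph{structurally} (it descends into $\gtGi$), whereas $\gtRoles{\cdot}$ on a recursion is defined by \emph{unfolding}, namely $\gtRoles{\gtRec{\gtRecVar}{\gtGi}} = \gtRoles{\gtGi\subst{\gtRecVar}{\gtRec{\gtRecVar}{\gtGi}}}$, so a plain structural hypothesis on $\gtGi$ cannot see through the substitution. To reconcile the two I would establish that role removal commutes with unfolding, $\gtCrashRole{(\gtGi\subst{\gtRecVar}{\gtRec{\gtRecVar}{\gtGi}})}{\roleP} = (\gtCrashRole{\gtGi}{\roleP})\subst{\gtRecVar}{\gtRec{\gtRecVar}{(\gtCrashRole{\gtGi}{\roleP})}}$ (using that removal sends $\gtRecVar$ to itself), whose right-hand side is exactly $\gtCrashRole{\gtG}{\roleP}$ unfolded once. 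Contractiveness of recursive types guarantees that $\gtRoles{\cdot}$ is finitely determined — a single guarded pass over the body suffices, with free occurrences of $\gtRecVar$ contributing no roles — so both conjuncts in the recursion case reduce, via this commutation identity, to the corresponding statements on the guarded body, where the structural hypothesis applies. I expect this commutation-with-unfolding lemma, together with pinning down the finite characterisation of $\gtRoles{\cdot}$ under guardedness, to be the main piece of work; the remaining cases are direct bookkeeping against the two definitions.
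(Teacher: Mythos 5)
Your proof is correct and follows essentially the same route as the paper's: a structural induction matching the clauses of \cref{def:gtype:remove-role} against \cref{def:active_crashed_roles}, closing each case by the inductive hypothesis on the rewritten continuations. Where you differ is in being more careful about two points the paper's proof passes over silently. First, the paper invokes the inductive hypothesis on a continuation $\gtG[i]$ without checking the premise $\roleP \in \gtRoles{\gtG[i]}$, which indeed may fail; your strengthening of the monotonicity claim (ii) so that it holds unconditionally, and your use of it to obtain (i) when $\roleP \notin \gtRoles{\gtG[j]}$, is exactly what is needed to make that step rigorous. Second, the paper dismisses the recursion case as ``similar or straightforward'', although, as you observe, removal is structural while $\gtRoles{\gtRec{\gtRecVar}{\gtGi}}$ is defined by unfolding; your commutation-with-unfolding identity, together with the observation that under contractiveness the least-fixpoint reading of \cref{def:active_crashed_roles} lets $\gtRoles{\gtRec{\gtRecVar}{\gtGi}}$ be computed by a single guarded pass over the body (free occurrences of $\gtRecVar$ contributing no roles), supplies the missing justification. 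In short, your proposal buys a fully rigorous proof at the cost of two auxiliary facts, whereas the paper's terser argument implicitly leans on precisely those facts.
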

\begin{proof}
  By induction on \cref{def:gtype:remove-role}. We detail interesting cases
  here:
\begin{enumerate}[leftmargin=*]
  \item
\[
  \textstyle
  \gtRoles{
    \gtCrashRole{
      (\gtCommSmall{\roleP}{\roleQ}{i \in I}{\gtLab[i]}{\tyGround[i]}{\gtG[i]})
    }{\roleP}
  }
  =
  \gtRoles{
    \gtCommTransit{\rolePCrashed}{\roleQ}{i \in I}{\gtLab[i]}{\tyGround[i]}{
      (\gtCrashRole{\gtG[i]}{\roleP})
    }{j}
  }
  =
  \setenum{\roleQ}
  \cup
  \bigcup\limits_{i \in I}{\gtRoles{\gtCrashRole{\gtG[i]}{\roleP}}}.
\]
\noindent
The required result follows by inductive hypothesis that $\roleP \notin
\gtRoles{\gtCrashRole{\gtG[i]}{\roleP}}$, and
$\gtRoles{\gtCrashRole{\gtG[i]}{\roleP}} \subseteq \gtRoles{\gtG[i]}$.

  \item
\[
  \gtRoles{
    \gtCrashRole{
      (\gtCommTransit{\rolePCrashed}{\roleQ}{i \in
      I}{\gtLab[i]}{\tyGround[i]}{\gtG[i]}{j})
    }{\roleQ}
  }
  =
  \gtRoles{
    \gtCrashRole{\gtG[j]}{\roleQ}
  }
\]
The required result follows by inductive hypothesis that $\roleQ \notin
\gtRoles{\gtCrashRole{\gtG[j]}{\roleQ}}$, and
$\gtRoles{\gtCrashRole{\gtG[j]}{\roleQ}} \subseteq \gtRoles{\gtG[j]}$.

\end{enumerate}
\noindent
The rest of the cases are similar or straightforward. \qedhere
\end{proof}

\begin{lemma}\label{lem:gtype:crashed-crash-remove-role}
  If \;$\roleP \in \gtRoles{\gtG}$, \;then\; %
  $\gtRolesCrashed{\gtCrashRole{\gtG}{\roleP}} \setminus \setenum{\roleP} \subseteq \gtRolesCrashed{\gtG}$.
\end{lemma}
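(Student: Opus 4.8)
The plan is to proceed by induction on the structure of $\gtG$, following the clauses of \cref{def:gtype:remove-role}. To avoid a clash with the sender/receiver names bound in that definition, I write the removed (live) role as $\roleR$, so the goal reads $\gtRolesCrashed{\gtCrashRole{\gtG}{\roleR}} \setminus \setenum{\roleR} \subseteq \gtRolesCrashed{\gtG}$ under the hypothesis $\roleR \in \gtRoles{\gtG}$. The guiding invariant is that role removal never attaches a fresh crash annotation to any role other than $\roleR$ itself: every occurrence that becomes $\rolePCrashed$ or $\roleQCrashed$ during removal is an occurrence of $\roleR$, and all other subterms are either left untouched or recursively transformed. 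Consequently the only crashed role that $\gtCrashRole{\gtG}{\roleR}$ can gain over $\gtG$ is $\roleR$, which is precisely the role excluded on the left-hand side.

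For the transmission clauses I distinguish the subcases of \cref{def:gtype:remove-role}. When the prefix is retained --- e.g. the receiver case $\gtCrashRole{(\gtCommSmall{\roleP}{\roleQ}{i \in I}{\gtLab[i]}{\tyGround[i]}{\gtG[i]})}{\roleR}$ with $\roleQ = \roleR$, which yields $\gtCommSmall{\roleP}{\roleQCrashed}{i \in I}{\gtLab[i]}{\tyGround[i]}{(\gtCrashRole{\gtG[i]}{\roleR})}$ --- by \cref{def:active_crashed_roles} the head now contributes $\setenum{\roleQ} = \setenum{\roleR}$, which is removed by the $\setminus \setenum{\roleR}$, and the continuations are covered by the induction hypothesis. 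When $\roleR = \roleP$ is the sender and the prefix becomes a transmission en route, the Remark following \cref{def:active_crashed_roles} tells us the crashed sender $\rolePCrashed$ does not count towards $\gtRolesCrashed{\cdot}$, so again no role other than $\roleR$ is added and the continuations follow by induction. When a prefix is dropped entirely --- e.g. the reductions to $\gtCrashRole{\gtG[j]}{\roleR}$ when the partner of $\roleR$ has already crashed --- the target's crashed roles are, by induction, contained in $\gtRolesCrashed{\gtG[j]} \subseteq \bigcup_{i \in I} \gtRolesCrashed{\gtG[i]} \subseteq \gtRolesCrashed{\gtG}$, with any dropped $\roleQCrashed$ equal to $\roleR$ and hence harmless.

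The recursion clause is handled by first recalling that $\gtRoles{\cdot}$ and $\gtRolesCrashed{\cdot}$ are invariant under unfolding (they are determined by the finitely many role names occurring syntactically, and \cref{def:active_crashed_roles} defines them on $\gtRec{\gtRecVar}{\gtG}$ via one unfolding). Thus in the non-degenerate case $\gtCrashRole{(\gtRec{\gtRecVar}{\gtG})}{\roleR} = \gtRec{\gtRecVar}{(\gtCrashRole{\gtG}{\roleR})}$ we have $\gtRolesCrashed{\gtRec{\gtRecVar}{(\gtCrashRole{\gtG}{\roleR})}} = \gtRolesCrashed{\gtCrashRole{\gtG}{\roleR}}$, and $\roleR \in \gtRoles{\gtRec{\gtRecVar}{\gtG}}$ gives $\roleR \in \gtRoles{\gtG}$, so the induction hypothesis applies to the body; the degenerate $\gtEnd$ case is trivial since $\gtRolesCrashed{\gtEnd} = \emptyset$. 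The guard $\gtRoles{\gtCrashRole{\gtG}{\roleR}} \neq \emptyset$ appearing in the clause is exactly the kind of side-condition controlled by the companion \cref{lem:gtype:crash-remove-role}.

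The main obstacle is bookkeeping around the partiality of role removal together with the fact that the hypothesis $\roleR \in \gtRoles{\gtG}$ does not descend uniformly into the continuations $\gtG[i]$: when $\roleR$ is the sender of a retained prefix it need not occur live in every continuation, so the induction hypothesis cannot be invoked verbatim on each $\gtG[i]$. I would discharge this by observing that when $\roleR \notin \gtRoles{\gtG[i]}$ the removal leaves the crashed-role set of $\gtG[i]$ unchanged (removal only rewrites live occurrences of $\roleR$, of which there are none), so $\gtRolesCrashed{\gtCrashRole{\gtG[i]}{\roleR}} = \gtRolesCrashed{\gtG[i]}$ and the inclusion is immediate; combined with \cref{lem:gtype:crash-remove-role} for the live subterms, this closes every transmission case uniformly. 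The interplay with the unfolding-based role definitions is the only other delicate point, and it is neutralised once invariance of the role sets under unfolding is recorded.
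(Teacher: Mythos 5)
Your proposal is correct and takes essentially the same route as the paper's own proof: induction on the clauses of the removal operator, with the same key observations (a retained prefix contributes at most the removed role itself, which the set difference strips; a crashed sender of an en-route transmission does not count toward the crashed-role set; dropped prefixes are closed by the inductive hypothesis). You are in fact more careful than the paper, which invokes the inductive hypothesis on all continuations without noting that the liveness hypothesis need not descend to them --- your auxiliary observation that removal leaves the crashed-role set unchanged on subterms where the removed role is not live fills exactly that gap.
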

\begin{proof}
By induction on \cref{def:gtype:remove-role}. We detail interesting cases here:
\begin{enumerate}[leftmargin=*]
 \item
\[
  \textstyle
  \gtRolesCrashed{
  \gtCrashRole{
      (\gtCommSmall{\roleP}{\roleQ}{i \in I}{\gtLab[i]}{\tyGround[i]}{\gtG[i]})
  }{\roleP}
  }
 =
 \gtRolesCrashed{
  \gtCommTransit{\rolePCrashed}{\roleQ}{i \in I}{\gtLab[i]}{\tyGround[i]}{
    (\gtCrashRole{\gtG[i]}{\roleP})
   }{j}
 }
  =
  \bigcup\limits_{i \in I}{\gtRolesCrashed{\gtCrashRole{\gtG[i]}{\roleP}}}.
\]
\noindent
The required result follows by inductive hypothesis that
$\gtRolesCrashed{\gtCrashRole{\gtG[i]}{\roleP}} \setminus \setenum{\roleP} \subseteq \gtRolesCrashed{\gtG[i]}$.
 \item
\[
  \gtRolesCrashed{
   \gtCrashRole{
         (\gtCommTransit{\rolePCrashed}{\roleQ}{i \in
      I}{\gtLab[i]}{\tyGround[i]}{\gtG[i]}{j})
    }{\roleQ}
 }
  =
  \gtRolesCrashed{
    \gtCrashRole{\gtG[j]}{\roleQ}
  }
\]
The required result follows by inductive hypothesis that
$\gtRolesCrashed{\gtCrashRole{\gtG[j]}{\roleQ}} \setminus \setenum{\roleQ} \subseteq \gtRolesCrashed{\gtG[j]}$.
\end{enumerate}
\noindent
The rest of the cases are similar or straightforward. \qedhere
\end{proof}

\begin{lemma}\label{lem:in-roles-not-end}
If \;$\gtG \neq \gtRec{\gtRecVar}{\gtGi}$ and\;
$\roleP \in  \gtRoles{\gtG}$, then
\;$\gtProj[\rolesR]{\gtG}{\roleP} \neq \stEnd$.
\end{lemma}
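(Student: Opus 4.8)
The plan is to prove the statement by structural induction on $\gtG$, using the definition of projection (\cref{def:global-proj}) and of active roles (\cref{def:active_crashed_roles}), and assuming throughout that $\gtProj[\rolesR]{\gtG}{\roleP}$ is defined, as is implicit in the claim. Since the hypothesis rules out $\gtG = \gtRec{\gtRecVar}{\gtGi}$ at the top level, the cases to consider are $\gtG = \gtEnd$, $\gtG = \gtRecVar$, and the two transmission forms. The base cases $\gtEnd$ and $\gtRecVar$ are immediate: both have $\gtRoles{\gtG} = \emptyset$, so the premise $\roleP \in \gtRoles{\gtG}$ is never met and the implication holds vacuously.

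For a transmission $\gtG = \gtCommSmall{\roleQ}{\roleRMaybeCrashed}{i \in I}{\gtLab[i]}{\tyGround[i]}{\gtGi[i]}$ (and symmetrically for the en-route form), I would split on how $\roleP$ relates to $\roleQ$ and $\roleR$. If $\roleP = \roleQ$, the projection is an internal choice whose index set is $I$ with the crash-labelled indices removed; this is non-empty because the well-formedness requirement $\setcomp{\gtLab[i]}{i \in I} \neq \setenum{\gtCrashLab}$ guarantees a non-crash label survives, so the result is a genuine $\oplus$ and not $\stEnd$. If $\roleP = \roleR$, the projection is an external choice over the non-empty index set $I$, again not $\stEnd$. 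The remaining case, $\roleP \notin \setenum{\roleQ, \roleR}$, is where the work lies: here the projection is the full merge $\stMerge{i \in I}{\gtProj[\rolesR]{\gtGi[i]}{\roleP}}$, and since $\roleP \in \gtRoles{\gtG}$ but $\roleP$ is neither sender nor receiver, $\roleP \in \gtRoles{\gtGi[k]}$ for some $k \in I$.

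The crux is therefore a subsidiary claim about the merge operator: a defined full merge equals $\stEnd$ only when every one of its arguments is $\stEnd$. This follows by inspecting the clauses defining $\stBinMerge$ — the only clause producing $\stEnd$ is $\stEnd \stBinMerge \stEnd = \stEnd$, and $\stEnd$ is not mergeable with any internal choice, external choice, recursion, or variable — so a defined merge is $\stEnd$ iff all its components are. It thus suffices to exhibit one non-$\stEnd$ argument, namely $\gtProj[\rolesR]{\gtGi[k]}{\roleP}$. If $\gtGi[k]$ is not a recursion, the induction hypothesis applies directly. If $\gtGi[k] = \gtRec{\gtRecVar}{\gtGii}$, the induction hypothesis does not apply, so I would argue directly: from $\gtRoles{\gtRec{\gtRecVar}{\gtGii}} = \gtRoles{\gtGii\subst{\gtRecVar}{\gtRec{\gtRecVar}{\gtGii}}}$ together with contractivity (no unfolding introduces fresh prefix roles), $\roleP$ is active in the body $\gtGii$, so the first clause of the recursion case of projection fires and $\gtProj[\rolesR]{\gtGi[k]}{\roleP} = \stRec{\stRecVar}{(\gtProj[\rolesR]{\gtGii}{\roleP})}$, which is a $\mu$-type and hence not $\stEnd$.

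The main obstacle I anticipate is precisely this interaction between the merge case and recursive continuations: I must establish that ``active in a guarded recursion'' coincides with ``active in its body'', so that projecting such a continuation yields a $\mu$-type rather than $\stEnd$. I must also be careful with the en-route sender sub-case, where projection follows only the selected branch $\gtGi[j]$ rather than merging: there I need $\roleP$ to be active already in $\gtGi[j]$, which is where I would lean on the assumption that the projection is defined, since definedness of the merge for the other roles enforces the consistency of participants across the branches of a choice. The remaining work — checking the merge clauses and discharging the two base cases — is routine.
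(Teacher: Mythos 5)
Your proposal follows the same route as the paper's proof: structural induction on $\gtG$, a three-way case split on whether $\roleP$ is the sender, the receiver, or a third party, and, in the third-party case, the observation that a defined merge is $\stEnd$ only when every component is $\stEnd$, so one non-$\stEnd$ component (supplied by the inductive hypothesis) suffices -- the paper uses exactly this, writing the merge as $\gtProj[\rolesR]{\gtG[k]}{\roleP} \stBinMerge \stMerge{i \in I \setminus \setenum{k}}{\gtProj[\rolesR]{\gtG[i]}{\roleP}} \neq \stEnd$. One place where you are genuinely more careful than the paper is the recursive-continuation sub-case: the paper simply \emph{applies the inductive hypothesis} to the continuation $\gtG[k]$, which is not licensed when $\gtG[k] = \gtRec{\gtRecVar}{\gtGii}$, because the induction hypothesis inherits the lemma's side condition $\gtG \neq \gtRec{\gtRecVar}{\gtGi}$. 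Your direct argument -- $\roleP$ active in a guarded recursion precisely when $\roleP$ occurs in its body, so the first clause of the recursion case of projection fires and yields a $\mu$-type, syntactically distinct from $\stEnd$ -- closes this hole correctly.

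However, the en-route sender sub-case, which you yourself identify as the crux, is a genuine gap: your proposed repair does not work. Projecting a transmission en route onto its sender involves no merge at all -- it is just $\gtProj[\rolesR]{\gtG[j]}{\roleQ}$ for the committed index $j$ -- and definedness of the merges performed for \emph{other} roles constrains those roles' projections across branches, not whether $\roleQ$ occurs in branch $j$. Concretely, take $\roleQ, \roleR \in \rolesR$ and $\gtG = \gtCommTransitRaw{\roleQ}{\roleR}{\gtCommChoice{\gtLab[1]}{}{\gtEnd} \gtFmt{,}\; \gtCommChoice{\gtLab[2]}{}{\gtCommSingle{\roleQ}{\roleR}{\gtLabi}{}{\gtEnd}}}{1}$: this is reachable by \inferrule{\iruleGtMoveOut} from a projectable design-time type, every role's projection is defined, and $\roleQ \in \gtRoles{\gtG}$ (via the unselected second branch, since $\gtRoles{\cdot}$ on en-route types unions over \emph{all} continuations), yet $\gtProj[\rolesR]{\gtG}{\roleQ} = \gtProj[\rolesR]{\gtEnd}{\roleQ} = \stEnd$, so the stated claim fails on this $\gtG$ and no appeal to definedness can rescue it. You should also know that the paper's own proof is shaky at exactly this point: it discharges the sub-case ``by the fact that $\forall l \in I: \gtProj[\rolesR]{\gtG[l]}{\roleP} \neq \stEnd$'', but that fact was derived in the preceding sub-case under the hypothesis that $\roleP$ is neither sender nor receiver, where mergability of all branches is available; for the sender it is not. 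In short: your instinct about where the difficulty lies is exactly right, and your handling of recursion improves on the paper, but neither your definedness argument nor the paper's own reasoning actually closes the en-route sender case -- doing so would require either restricting $\gtRoles{\cdot}$ on en-route types (say, to the committed and crash branches) or imposing a further invariant on the runtime global types to which the lemma is applied.
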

\begin{proof}
We know that $\gtG \neq \gtEnd$; otherwise, we may have $\gtRoles{\gtG} = \emptyset$, a contradiction
to $\roleP \in  \gtRoles{\gtG}$. By induction on the structure of $\gtG$:
\begin{itemize}[leftmargin=*]
\item Case $\gtG = \gtComm{\roleQ}{\roleR}{i \in I}{\gtLab[i]}{\tyGround[i]}{\gtG[i]}$: we perform case analysis on $\roleP$:
\begin{itemize}[leftmargin=*]
\item $\roleP = \roleQ$: we have  $\gtProj[\rolesR]{\gtG}{\roleP} =
 \stIntSum{\roleR}{i \in \setcomp{j \in I}{\stFmt{\stLab[j]} \neq \stCrashLab}}{ %
        \stChoice{\stLab[i]}{\tyGround[i]} \stSeq (\gtProj[\rolesR]{\gtG[i]}{\roleP})%
      } \neq \stEnd$.
 \item $\roleP = \roleR$: we have $\gtProj[\rolesR]{\gtG}{\roleP} =
 \stExtSum{\roleQ}{i \in I}{%
        \stChoice{\stLab[i]}{\tyGround[i]} \stSeq (\gtProj[\rolesR]{\gtG[i]}{\roleP})%
      }
      \neq \stEnd$.
  \item $\roleP \neq \roleQ$ and $\roleP \neq \roleR$: we have
  $\gtProj[\rolesR]{\gtG}{\roleP} =
   \stMerge{i \in I}{\gtProj[\rolesR]{\gtG[i]}{\roleP}}$.
   Since $\roleP \in \gtRoles{\gtG}$, $\roleP \neq \roleQ$, and $\roleP \neq \roleR$,
   there exists $j \in I$ such that $\roleP \in \gtRoles{\gtG[j]}$. Then, by applying inductive hypothesis,
   $\gtProj[\rolesR]{\gtG[j]}{\roleP} \neq \gtEnd$, and therefore, we have
   $\gtProj[\rolesR]{\gtG}{\roleP} =
    \stMerge{i \in I}{\gtProj[\rolesR]{\gtG[i]}{\roleP}} =
    \gtProj[\rolesR]{\gtG[j]}{\roleP} \,\stBinMerge\,
    \stMerge{i \in I \setminus \{j\} }{\gtProj[\rolesR]{\gtG[i]}{\roleP}} \neq \stEnd$.
  \end{itemize}
      
\item Case $\gtG =  \gtCommTransit{\roleQ}{\roleR}{i \in
          I}{\gtLab[i]}{\tyGround[i]}{\gtG[i]}{j}$: we perform case analysis on $\roleP$:
\begin{itemize}[leftmargin=*]
  \item $\roleP \neq \roleQ$ and $\roleP \neq \roleR$: we have
  $\gtProj[\rolesR]{\gtG}{\roleP} =
   \stMerge{i \in I}{\gtProj[\rolesR]{\gtG[i]}{\roleP}}$.
   Since $\roleP \in \gtRoles{\gtG}$, $\roleP \neq \roleQ$, and $\roleP \neq \roleR$,
   there exists $k \in I$ such that $\roleP \in \gtRoles{\gtG[k]}$. Then, by applying inductive hypothesis,
   $\gtProj[\rolesR]{\gtG[k]}{\roleP} \neq \gtEnd$, and therefore, we have
   $\gtProj[\rolesR]{\gtG}{\roleP} =
    \stMerge{i \in I}{\gtProj[\rolesR]{\gtG[i]}{\roleP}} =
    \gtProj[\rolesR]{\gtG[k]}{\roleP} \,\stBinMerge\,
    \stMerge{i \in I \setminus \{k\} }{\gtProj[\rolesR]{\gtG[i]}{\roleP}} \neq \stEnd$, as desired. Meanwhile, we obtain that 
    $\forall l \in I: \gtProj[\rolesR]{\gtG[l]}{\roleP} \neq \stEnd$. 
    
\item $\roleP = \roleQ$: we have  $\gtProj[\rolesR]{\gtG}{\roleP} = \gtProj[\rolesR]{\gtG[j]}{\roleP}$, which follows that 
$\gtProj[\rolesR]{\gtG}{\roleP} \neq \stEnd$ by the fact that $\forall l \in I: \gtProj[\rolesR]{\gtG[l]}{\roleP} \neq \stEnd$. 

\item $\roleP = \roleR$: we have $\gtProj[\rolesR]{\gtG}{\roleP} =
 \stExtSum{\roleQ}{i \in I}{%
        \stChoice{\stLab[i]}{\tyGround[i]} \stSeq (\gtProj[\rolesR]{\gtG[i]}{\roleP})%
      }
      \neq \stEnd$.
  \end{itemize}
   
\item Other cases are similar. 
  \qedhere     
\end{itemize}
\end{proof}

\begin{lemma}\label{lem:not-in-roles-end}
If \;$\roleP \notin  \gtRoles{\gtG}$ and\; 
$\roleP \notin \gtRolesCrashed{\gtG}$, then
\;$\gtProj[\rolesR]{\gtG}{\roleP} = \stEnd$.
\end{lemma}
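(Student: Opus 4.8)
The plan is to prove the statement by structural induction on $\gtG$, as a companion to \Cref{lem:in-roles-not-end}. Throughout I rely on the standing assumption that the global types under consideration are closed (all recursion variables are bound), which is essential here: a bare type variable has $\gtProj[\rolesR]{\gtRecVar}{\roleP} = \stRecVar \neq \stEnd$, so the induction must be arranged so that it never reaches such an open subterm on its own. The base case $\gtG = \gtEnd$ is immediate, since $\gtProj[\rolesR]{\gtEnd}{\roleP} = \stEnd$ by \Cref{def:global-proj}.

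For a transmission $\gtG = \gtComm{\roleQ}{\roleRMaybeCrashed}{i \in I}{\gtLab[i]}{\tyGround[i]}{\gtGi[i]}$, the two hypotheses first rule out the sending and receiving sub-cases of projection: if $\roleP = \roleQ$ then $\roleP \in \gtRoles{\gtG}$, and if $\roleP = \roleR$ then $\roleP$ lies in $\gtRoles{\gtG}$ or in $\gtRolesCrashed{\gtG}$ (according to the crash annotation), either way contradicting an assumption. Hence only the merge sub-case $\roleP \neq \roleQ,\ \roleP \neq \roleR$ applies, and $\gtProj[\rolesR]{\gtG}{\roleP} = \stMerge{i \in I}{\gtProj[\rolesR]{\gtGi[i]}{\roleP}}$. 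By \Cref{def:active_crashed_roles} we get $\roleP \notin \gtRoles{\gtGi[i]}$ and $\roleP \notin \gtRolesCrashed{\gtGi[i]}$ for every $i$, so the induction hypothesis gives $\gtProj[\rolesR]{\gtGi[i]}{\roleP} = \stEnd$, and repeated use of $\stEnd \stBinMerge \stEnd = \stEnd$ collapses the merge to $\stEnd$. The transmission-en-route case is analogous but has one subtlety worth care: the (possibly crashed) sender $\roleQ$ is \emph{not} counted in $\gtRoles{\gtG}$, so the sub-case $\roleP = \roleQ$ is genuinely reachable under $\roleP \notin \gtRoles{\gtG}$; there projection returns $\gtProj[\rolesR]{\gtGi[j]}{\roleP}$, and since $\roleP \notin \gtRoles{\gtGi[j]}$ and $\roleP \notin \gtRolesCrashed{\gtGi[j]}$, the induction hypothesis again delivers $\stEnd$.

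The recursion case $\gtG = \gtRec{\gtRecVar}{\gtGi}$ is where the only real work lies, and I expect it to be the main obstacle. Projection returns $\stRec{\stRecVar}{(\gtProj[\rolesR]{\gtGi}{\roleP})}$ when $\roleP$ occurs in $\gtGi$ or $\fv{\gtRec{\gtRecVar}{\gtGi}} \neq \emptyset$, and $\stEnd$ otherwise; I must show the first branch cannot fire. For the free-variable disjunct I invoke closedness, which gives $\fv{\gtRec{\gtRecVar}{\gtGi}} = \emptyset$. For the role disjunct I use the monotonicity fact $\gtRoles{\gtGi} \subseteq \gtRoles{\gtGi\subst{\gtRecVar}{\gtRec{\gtRecVar}{\gtGi}}} = \gtRoles{\gtRec{\gtRecVar}{\gtGi}}$ (and the same for $\gtRolesCrashed{\gtGi}$), which holds because substitution never deletes a transmission prefix and the variable itself contributes no roles; together with $\roleP \notin \gtRoles{\gtG}$ this yields $\roleP \notin \gtRoles{\gtGi}$. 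The condition therefore fails and projection is $\stEnd$ directly, and crucially the induction never descends into the open body $\gtGi$. The delicate points are precisely establishing this roles-monotonicity through the unfolding-based \Cref{def:active_crashed_roles}, and keeping the closedness invariant sharp enough that the induction never lands on a bare recursion variable, which would project to $\stRecVar$ rather than $\stEnd$.
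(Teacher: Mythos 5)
Your proof is correct, and its skeleton --- structural induction on $\gtG$, with the transmission case forced into the merge sub-case of \Cref{def:global-proj} by the two hypotheses and closed off by $\stEnd \stBinMerge \stEnd = \stEnd$ --- is the same as the paper's. The genuine difference is the recursion case, and there your route is not only different but sounder. The paper splits on whether $\fv{\gtRec{\gtRecVar}{\gtGi}} \neq \emptyset$; in the non-empty sub-case it lets projection descend and applies the induction hypothesis to the \emph{open} body $\gtGi$, concluding $\stRec{\stRecVar}{(\gtProj[\rolesR]{\gtGi}{\roleP})} = \stRec{\stRecVar}{\stEnd} = \stEnd$. That step is exactly as fragile as you suspect: the statement fails on bare type variables, since $\gtProj[\rolesR]{\gtRecVar}{\roleP} = \stRecVar \neq \stEnd$, so the lemma cannot hold over open terms, and an inductive call on a body whose continuations contain a variable bound by an outer binder is not justifiable (the paper's closing ``other cases are similar or trivial'' silently absorbs this problematic variable case, and its identification of $\stRec{\stRecVar}{\stEnd}$ with $\stEnd$ is itself loose). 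Your version rules the whole branch out instead: closedness kills the free-variable disjunct, and roles-monotonicity under substitution (together with its $\gtRolesCrashed{\cdot}$ analogue) kills the occurrence disjunct, so the ``otherwise'' clause returns $\stEnd$ and the induction never visits an open subterm; the price is that the lemma becomes explicitly a statement about closed types, which is the only reading under which it is true anyway. Your explicit handling of the transmission-en-route case is also more careful than the paper's: you correctly note that the (possibly crashed) sender is counted in neither $\gtRoles{\gtG}$ nor $\gtRolesCrashed{\gtG}$, so $\roleP$ may be that sender, and the first projection clause then hands you $\gtProj[\rolesR]{\gtG[j]}{\roleP}$, discharged by the induction hypothesis --- a case the paper leaves implicit. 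In short: same induction, but where the paper reaches for spurious generality over open types at the cost of an inductive step that cannot close, you buy soundness by never leaving the closed fragment.
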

\begin{proof}
By induction on the structure of $\gtG$:
\begin{itemize}[leftmargin=*]
     \item Case $\gtG = \gtComm{\roleQ}{\roleR}{i \in I}{\gtLab[i]}{\tyGround[i]}{\gtG[i]}$:
     since $\roleP \notin \gtRoles{\gtG}$ and $\roleP \notin \gtRolesCrashed{\gtG}$, we have $\roleP \neq \roleQ$,
     $\roleP \neq \roleR$, and for all $i \in I$, $\roleP \notin \gtRoles{\gtG[i]}$ and
     $\roleP \notin \gtRolesCrashed{\gtG[i]}$ by \Cref{def:active_crashed_roles}.
     Thus, $\gtProj[\rolesR]{\gtG}{\roleP} = \stMerge{i \in I}{\gtProj[\rolesR]{\gtG[i]}{\roleP}} = \stEnd$ by applying inductive hypothesis
     and $ \stEnd \,\stBinMerge\, \stEnd%
    \,=\,%
    \stEnd$.
         \item Case $\gtG = \gtRec{\gtRecVar}{\gtGi}$: since $\roleP \notin \gtRoles{\gtG}$ and $\roleP \notin \gtRolesCrashed{\gtG}$, we have
      $\roleP \notin \gtRoles{\gtGi}$ and $\roleP \notin \gtRolesCrashed{\gtGi}$ by \Cref{def:active_crashed_roles}.
      We have two further subcases to consider:
       \begin{itemize}[leftmargin=*]
       \item If $\fv{\gtRec{\gtRecVar}{\gtGi}} \neq \emptyset$, we have $\gtProj[\rolesR]{\gtG}{\roleP} =
       \stRec{\stRecVar}{(\gtProj[\rolesR]{\gtGi}{\roleP})} =
       \stRec{\stRecVar}{\stEnd} = \stEnd$ by applying inductive hypothesis.
       \item Otherwise, we have $\gtProj[\rolesR]{\gtG}{\roleP} = \stEnd$ immediately.
       \end{itemize}
\item Other cases are similar or trivial. 
\qedhere
\end{itemize}
\end{proof}

\subsection{Subtyping}
\begin{lemma}[Subtyping is Reflexive]\label{lem:reflexive-subtyping}
  For any closed, well-guarded local type $\stT$, $\stT \stSub \stT$ holds.
\end{lemma}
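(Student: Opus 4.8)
The plan is to prove reflexivity of subtyping by coinduction, exhibiting the identity relation (restricted to closed, well-guarded local types) as a post-fixed point of the monotone operator whose greatest fixed point is $\stSub$. Since \Cref{def:subtyping} defines $\stSub$ coinductively via the rules \inferrule{\iruleStSubEnd}, \inferrule{\iruleStSubStop}, \inferrule{\iruleStSubIn}, \inferrule{\iruleStSubOut}, \inferrule{\iruleStSubRecL}, and \inferrule{\iruleStSubRecR}, it suffices to show that the relation $\mathcal{R} = \setcomp{(\stT, \stT)}{\stT \text{ closed and well-guarded}}$ is a subtyping relation, \ie that every pair in $\mathcal{R}$ is justified by one of the rules with premises again in $\mathcal{R}$ (or derivable). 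By the coinduction principle, $\mathcal{R} \subseteq {\stSub}$, which gives the result.

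First I would proceed by cases on the top-level structure of $\stT$ (using the fact that a well-guarded recursive type has a well-defined structure). For $\stT = \stEnd$, rule \inferrule{\iruleStSubEnd} applies directly; for $\stT = \stStop$, rule \inferrule{\iruleStSubStop} applies directly. For an internal choice $\stT = \stIntSum{\roleP}{i \in I}{\stChoice{\stLab[i]}{\tyGround[i]} \stSeq \stT[i]}$, I would apply \inferrule{\iruleStSubOut} with $I = I$ and $J = \emptyset$: the rule requires $\stT[i] \stSub \stT[i]$ for each $i \in I$, and each such pair $(\stT[i], \stT[i])$ lies in $\mathcal{R}$ since the continuations of a closed, well-guarded type are themselves closed and well-guarded. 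For an external choice $\stT = \stExtSum{\roleP}{i \in I}{\stChoice{\stLab[i]}{\tyGround[i]} \stSeq \stT[i]}$, I would apply \inferrule{\iruleStSubIn}, again taking $J = \emptyset$ on the supertype side so both side conditions are trivially met: $\nexists j \in \emptyset$ with $\stLab[j] = \stCrashLab$ holds vacuously, and $\setcomp{\stLab[k]}{k \in I} \neq \setenum{\stCrashLab}$ holds because the syntax of local types (\cref{fig:syntax-mpst}) already forbids singleton $\stCrashLab$ external choices. The premises $\stT[i] \stSub \stT[i]$ are again in $\mathcal{R}$.

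The recursion case is where I expect the only real subtlety. For $\stT = \stRec{\stRecVar}{\stT'}$, I would use \inferrule{\iruleStSubRecL} followed by \inferrule{\iruleStSubRecR} (or a single combined step), reducing the goal $\stRec{\stRecVar}{\stT'} \stSub \stRec{\stRecVar}{\stT'}$ to $\stT'\subst{\stRecVar}{\stRec{\stRecVar}{\stT'}} \stSub \stRec{\stRecVar}{\stT'}$ and thence to the unfolded type being a subtype of its unfolding. The potential pitfall is ensuring well-definedness of the coinductive argument in the presence of the unfolding rules: because $\stSub$ is defined coinductively (greatest fixed point), cyclic reasoning through \inferrule{\iruleStSubRecL}/\inferrule{\iruleStSubRecR} is sound precisely because we do not need a terminating derivation, only a consistent relation closed under the rules. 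To make this airtight, rather than reasoning syntactically through unfoldings, I would define $\mathcal{R}$ up to unfolding --- \ie $\mathcal{R} = \setcomp{(\stT, \stU)}{\unfold{\stT} = \unfold{\stU},\ \stT \text{ closed, well-guarded}}$, where $\unfold{\cdot}$ denotes repeated unfolding to expose the leading guarded prefix (as defined at the start of \cref{sec:gtype} for global types, analogously for local types). Well-guardedness guarantees $\unfold{\cdot}$ is well-defined and reaches a non-recursive head in finitely many steps, so the case analysis above applies to $\unfold{\stT}$, and each resulting continuation pair again satisfies the defining condition of $\mathcal{R}$. This reduces the whole proof to the straightforward structural cases on $\unfold{\stT}$, with the recursion handled uniformly by the unfolding, and I would verify closure of $\mathcal{R}$ under the subtyping rules to conclude $\mathcal{R} \subseteq {\stSub}$ and hence $\stT \stSub \stT$.
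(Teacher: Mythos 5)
Your proof is correct, but it takes a genuinely different route from the paper's. The paper disposes of this lemma in one line, ``by induction on the structure of the local type $\stT$'', whereas you give the full coinductive argument: exhibit a relation $\mathcal{R}$ that is backward-closed under the rules of \cref{def:subtyping} and invoke the greatest-fixed-point principle to get $\mathcal{R} \subseteq {\stSub}$. The comparison is instructive. Plain structural induction is clean for the $\stEnd$, $\stStop$, and choice cases (exactly as in your case analysis, including the correct observation that the side conditions of \inferrule{\iruleStSubIn} are vacuous with $J = \emptyset$ and that singleton-$\stCrashLab$ external choices are excluded by syntax), but it stalls at the recursion case: reducing $\stRec{\stRecVar}{\stT'} \stSub \stRec{\stRecVar}{\stT'}$ via \inferrule{\iruleStSubRecL}/\inferrule{\iruleStSubRecR} produces the pair of unfoldings $\stT'\subst{\stRecVar}{\stRec{\stRecVar}{\stT'}}$, which is \emph{not} structurally smaller, so the induction hypothesis does not apply; the paper's one-liner silently glosses over this. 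Your fix --- replacing the bare identity relation by identity \emph{up to unfolding} --- is exactly what is needed to make $\mathcal{R}$ backward-closed (note that with the plain identity relation the $\mu$-case already fails, since the premise of \inferrule{\iruleStSubRecL} pairs an unfolding with a $\mu$-type, which are syntactically distinct), and well-guardedness guarantees $\unfold{\cdot}$ reaches a non-$\mu$ head in finitely many steps so the structural cases apply. In short: the paper's proof buys brevity at the cost of rigor in the recursion case; yours is longer but is the argument that actually matches the coinductive definition of $\stSub$.
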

\begin{proof}
By induction on the structure of local type $\stT$. \qedhere
\end{proof}

\begin{lemma}[Subtyping is Transitive]\label{transitive-subtyping}
  For any closed, well-guarded local type $\stS$, $\stT$, $\stU$,
  if $\stS \stSub \stT$ and $\stT \stSub \stU$ hold, then $\stS \stSub \stU$
  holds.
\end{lemma}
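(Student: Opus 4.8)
The plan is to exploit the coinductive definition of $\stSub$ (\cref{def:subtyping}) and prove transitivity by the standard \emph{up-to} technique of showing that relational composition is itself a subtyping relation. Concretely, I would define
\[
  \relR \;=\; \setcomp{(\stS, \stU)}{\exists \stT:\; \stS \stSub \stT \text{ and } \stT \stSub \stU},
\]
and argue that $\relR \subseteq \stSub$. Since $\stSub$ is the greatest fixed point of the monotone operator generated by the rules of \cref{def:subtyping}, it suffices to prove that $\relR$ is \emph{backward closed}: for every $(\stS,\stU) \in \relR$, the pair $(\stS,\stU)$ can be derived by one of the rules whose premise pairs lie again in $\relR$. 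Coinduction then gives $\relR \subseteq \stSub$, which is exactly the claim that $\stS \stSub \stT$ and $\stT \stSub \stU$ imply $\stS \stSub \stU$.

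The rules \iruleStSubRecL and \iruleStSubRecR only replace a type by its unfolding, so before the main analysis I would dispatch recursion with an \emph{up-to-unfolding} lemma: because all types are closed and guarded, the full unfolding $\unfold{\stT}$ of leading recursions is well defined, and $\stS \stSub \stT$ holds iff $\unfold{\stS} \stSub \unfold{\stT}$. This lets me assume, without loss of generality, that the head constructors of $\stS$, $\stT$, and $\stU$ are non-recursive, so that each of $\stS \stSub \stT$ and $\stT \stSub \stU$ is concluded by one of the four remaining rules, determined by the shape of the common type $\stT$.

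I would then proceed by case analysis on the head of $\unfold{\stT}$. If $\unfold{\stT} = \stEnd$ (resp.\ $\stStop$), both subtypings must use \iruleStSubEnd (resp.\ \iruleStSubStop), forcing $\stS = \stU = \stEnd$ (resp.\ $\stStop$), and the result holds by the same rule. If $\unfold{\stT}$ is an internal choice toward some $\roleP$, both derivations end in \iruleStSubOut, so $\stS$ and $\stU$ are internal choices toward $\roleP$ with index sets $I_S \subseteq I_T \subseteq I_U$; transitivity of $\subseteq$ gives $I_S \subseteq I_U$, and for each $i \in I_S$ the continuations satisfy $\stS[i] \stSub \stT[i] \stSub \stU[i]$, whence $(\stS[i],\stU[i]) \in \relR$, justifying $\stS \stSub \stU$ by \iruleStSubOut with premises in $\relR$. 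The external-choice case is dual, via \iruleStSubIn, but with extra bookkeeping on the discarded branches.

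The main obstacle is checking that the crash-branch side conditions of \iruleStSubIn compose correctly. Writing the index sets as $\stS : I_T \cup J_1$, $\stT : I_T = I_U \cup J_2$, and $\stU : I_U$, the branches discarded when relating $\stS$ to $\stU$ are $J_1 \cup J_2$. From the two hypotheses I get that neither $J_1$ nor $J_2$ contains a $\stCrashLab$ label (the third premise of each \iruleStSubIn instance), so the discarded set is crash-free, establishing the composite's ``crash is never dropped'' condition; in particular, a $\stCrashLab$ branch present in $\stS$ survives into $\stU$. The remaining premise, that $\stU$ is not a singleton pure-crash external choice, is inherited directly from $\stT \stSub \stU$. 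Verifying these conditions under the up-to-unfolding reduction, and confirming the lemma that subtyping is invariant under unfolding, is where the care concentrates; the rest is a routine re-assembly of the coinductive premises.
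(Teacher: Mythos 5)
Your proposal is correct, but it takes a genuinely different route from the paper. The paper's entire proof is a one-line sketch --- induction on the structure of the local type $\stS$ --- whereas you prove that the relational composition $\relR$ is contained in $\stSub$ by coinduction (backward closure of $\relR$ under the rules of \cref{def:subtyping}), dispatching recursion by an up-to-unfolding reduction. Since $\stSub$ is coinductively defined, your approach is the more robust of the two: a literal structural induction is delicate precisely at \inferrule{\iruleStSubRecL} and \inferrule{\iruleStSubRecR}, because the unfolding $\stT{}\subst{\stRecVar}{\stRec{\stRecVar}{\stT}}$ is not structurally smaller than $\stRec{\stRecVar}{\stT}$; your backward-closure argument sidesteps this entirely, at the cost of the extra bookkeeping you describe. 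Your handling of the two crash side conditions of \inferrule{\iruleStSubIn} --- the discarded sets $J_1$ and $J_2$ are each $\stCrashLab$-free, hence so is $J_1 \cup J_2$, and the ``supertype is not a pure crash branch'' premise is inherited verbatim from $\stT \stSub \stU$ --- is exactly the content these rules add over classical session subtyping, and it composes as you claim; the \inferrule{\iruleStSubOut} case needs no side conditions and the index-set inclusions chain as you state. Two cautions on the part you defer: first, the unfolding-invariance lemma must be established by rule inversion (guardedness bounds the number of consecutive \inferrule{\iruleStSubRecL}/\inferrule{\iruleStSubRecR} applications in a derivation), not by chaining \cref{lem:unfold-subtyping} through transitivity, which would be circular; second, in the backward-closure check you cannot literally ``assume WLOG'' non-recursive heads --- for recursion-headed $\stS$ or $\stU$ you must exhibit a rec-rule instance whose premise pair lies again in $\relR$, and it is that same inversion lemma that supplies it.
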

\begin{proof}
By induction on the structure of local type $\stS$. \qedhere
\end{proof}

\begin{lemma}\label{lem:unfold-subtyping}
  For any closed, well-guarded local type $\stT$,
  \begin{enumerate*}
    \item $\unfoldOne{\stT} \stSub \stT$; and
    \item $\stT \stSub \unfoldOne{\stT}$.
  \end{enumerate*}
\end{lemma}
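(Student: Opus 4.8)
The plan is to reduce both claims to a single-step unfolding fact and then iterate it, relying on reflexivity (\cref{lem:reflexive-subtyping}) and transitivity (\cref{transitive-subtyping}) of $\stSub$, which are already available. Recall that $\unfoldOne{\cdot}$ eagerly unfolds \emph{all} leading $\mu$-binders, and that this process terminates after finitely many rewrites because $\stT$ is well-guarded (contractive); I write $n$ for the number of such rewrites used to compute $\unfoldOne{\stT}$. The proof proceeds by induction on $n$.

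First I would dispatch the base case $n = 0$: here $\stT$ is not of the form $\stRec{\stRecVar}{\stTi}$ (it is a choice, $\stEnd$, or $\stStop$), so $\unfoldOne{\stT} = \stT$, and both $\unfoldOne{\stT} \stSub \stT$ and $\stT \stSub \unfoldOne{\stT}$ follow at once from \cref{lem:reflexive-subtyping}. The heart of the argument is then a single-step claim: for every $\stRec{\stRecVar}{\stTi}$ one has both $\stTi\subst{\stRecVar}{\stRec{\stRecVar}{\stTi}} \stSub \stRec{\stRecVar}{\stTi}$ and $\stRec{\stRecVar}{\stTi} \stSub \stTi\subst{\stRecVar}{\stRec{\stRecVar}{\stTi}}$. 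The first is obtained by a single application of \inferrule{\iruleStSubRecR}, whose premise reduces to $\stTi\subst{\stRecVar}{\stRec{\stRecVar}{\stTi}} \stSub \stTi\subst{\stRecVar}{\stRec{\stRecVar}{\stTi}}$, i.e. reflexivity; the second likewise by \inferrule{\iruleStSubRecL} with the same reflexive premise. Since $\stSub$ is defined coinductively (\cref{def:subtyping}), this finite one-rule derivation on top of a reflexive judgement is immediately admissible.

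For the inductive step $n > 0$, necessarily $\stT = \stRec{\stRecVar}{\stTi}$, and by definition of $\unfoldOne{\cdot}$ we have $\unfoldOne{\stT} = \unfoldOne{\stTi\subst{\stRecVar}{\stRec{\stRecVar}{\stTi}}}$, where the inner type requires exactly $n-1$ rewrites. The induction hypothesis therefore gives $\unfoldOne{\stT} \stSub \stTi\subst{\stRecVar}{\stRec{\stRecVar}{\stTi}}$ and $\stTi\subst{\stRecVar}{\stRec{\stRecVar}{\stTi}} \stSub \unfoldOne{\stT}$. Chaining each of these with the matching single-step fact from the previous paragraph via \cref{transitive-subtyping} yields $\unfoldOne{\stT} \stSub \stTi\subst{\stRecVar}{\stRec{\stRecVar}{\stTi}} \stSub \stT$ and $\stT \stSub \stTi\subst{\stRecVar}{\stRec{\stRecVar}{\stTi}} \stSub \unfoldOne{\stT}$, which are the two desired conclusions.

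The main obstacle I anticipate is justifying well-foundedness of this induction, namely that $\stTi\subst{\stRecVar}{\stRec{\stRecVar}{\stTi}}$ is computed by exactly one fewer rewrite than $\stRec{\stRecVar}{\stTi}$; this rests on contractiveness, which guarantees the measure $n$ is finite and strictly decreases along the defining recursion of $\unfoldOne{\cdot}$, so that unguarded types such as $\stRec{\stRecVar}{\stRecVar}$ are excluded. A secondary technical point is checking that the reflexive premises feeding the applications of \inferrule{\iruleStSubRecL} and \inferrule{\iruleStSubRecR} are themselves closed and well-guarded, so that \cref{lem:reflexive-subtyping} applies; this holds because substituting the closed, guarded type $\stRec{\stRecVar}{\stTi}$ for $\stRecVar$ in the guarded body $\stTi$ preserves both closedness and guardedness.
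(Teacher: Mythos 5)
Your proof is correct and follows essentially the same route as the paper's: both reduce each claim to the rules \inferrule{\iruleStSubRecR} (resp.\ \inferrule{\iruleStSubRecL}) closed off by reflexivity (\cref{lem:reflexive-subtyping}), iterated over the finitely many unfolding steps that contractivity guarantees. The only difference is presentational: the paper nests the rule applications directly on top of a reflexive premise and thus never invokes transitivity, whereas you factor the iteration into a single-step unfolding fact chained via \cref{transitive-subtyping} under an explicit induction on the unfolding depth --- a harmless elaboration of the same argument.
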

 \begin{proof}
 \begin{enumerate*}
     \item If $\stT = \stRec{\stRecVar}{\stTi}$,  $\unfoldOne{\stT} \stSub \stT$ holds by $\inferrule{\iruleStSubRecR}$. Otherwise,
      by \Cref{lem:reflexive-subtyping}.
    \item If $\stT = \stRec{\stRecVar}{\stTi}$, $\stT \stSub \unfoldOne{\stT}$ holds by $\inferrule{\iruleStSubRecL}$.  Otherwise,
       by \Cref{lem:reflexive-subtyping}.
       \qedhere
  \end{enumerate*}
 \end{proof}

\begin{lemma}\label{lem:merge-subtyping}
  Given a collection of mergable local types $\stT[i]$ ($i \in I$).
  For all $j \in I$, $\stMerge{i \in I}{\stT[i]} \stSub \stT[j]$ holds.
\end{lemma}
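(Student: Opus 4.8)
The plan is to exploit the coinductive definition of $\stSub$ (\Cref{def:subtyping}) and to prove a slightly stronger statement by exhibiting a single witnessing relation, rather than arguing one index $j$ at a time. Concretely, I would work with closed, guarded local types and define
\[
  \mathcal{R} \;=\; \setcomp{\bigl(\stMerge{i \in I}{\stT[i]},\ \stT[j]\bigr)}{I \neq \emptyset,\ \{\stT[i]\}_{i \in I}\text{ mergeable},\ j \in I},
\]
and show $\mathcal{R} \subseteq {\stSub}$; the lemma is then read off directly. Since $\stSub$ is the greatest fixed point of the monotone operator induced by the rules of \Cref{def:subtyping}, by the coinduction principle it suffices to check that $\mathcal{R}$ is backward closed: every pair in $\mathcal{R}$ is the conclusion of some rule instance all of whose subtyping premises again lie in $\mathcal{R}$. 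Here I would use \Cref{lem:reflexive-subtyping} to discharge trivial premises, and observe that $\stMerge{i \in \setenum{j}}{\stT[i]} = \stT[j]$, so $\mathcal{R}$ already subsumes the identity on the relevant types.

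The core is a case analysis on the common top-level shape of the family $\{\stT[i]\}_{i \in I}$, which is forced to be uniform precisely because the merge is defined. If all $\stT[i] = \stEnd$, the merge is $\stEnd$ and \inferrule{\iruleStSubEnd} applies. If all $\stT[i]$ are internal choices towards the same $\roleP$, mergeability forces a common label set, the merge takes continuations pointwise, and \inferrule{\iruleStSubOut} (with empty extra index set) reduces the goal to pairs of merged continuations versus the $j$-th continuation, all in $\mathcal{R}$. If all $\stT[i]$ are external choices towards the same $\roleP$ with label sets $L_i$, the merge is the external choice over $\bigcup_i L_i$ whose continuation at each common label is the merge of the corresponding branch continuations; here I would apply \inferrule{\iruleStSubIn} with the rule's supertype index set instantiated to $L_j$ and its extra set to $\bigl(\bigcup_i L_i\bigr)\setminus L_j$, so that the merge (more branches) is the subtype and $\stT[j]$ (fewer branches) the supertype. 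Each premise then pairs the merge of the $k$-labelled continuations with the $k$-labelled continuation of $\stT[j]$ for $k \in L_j$, and lies in $\mathcal{R}$ because $j$ is among the merged indices. Bare type variables never arise at top level, as we unfold guarded recursions.

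The main obstacle lies in discharging the two highlighted side conditions of \inferrule{\iruleStSubIn} in the external-choice case: that $\stT[j]$ is not a lone $\stCrashLab$ branch, which is immediate since every well-formed local type forbids a singleton $\stCrashLab$ external choice; and, more delicately, that the \emph{extra} labels carry no $\stCrashLab$ — i.e.\ the merge must not introduce a crash-handling branch that $\stT[j]$ lacks. This amounts to \emph{crash uniformity}: if some member offers a $\stCrashLab$ branch, so must every $\stT[j]$. I would establish this from how mergeable external choices arise in \Cref{def:global-proj}: they all branch on reception from one and the same partner role, and whether projection inserts a $\stCrashLab$ branch is determined solely by whether that partner lies in the fixed reliable set $\rolesR$; hence the presence of $\stCrashLab$ is identical across all members, and no stray crash label appears among the extra labels. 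The remaining work is the recursion case: using that the merge commutes with unfolding, $\stMerge{i \in I}{\stRec{\stRecVar}{\stS[i]}} = \stRec{\stRecVar}{\stMerge{i \in I}{\stS[i]}}$, I would apply \inferrule{\iruleStSubRecL} and \inferrule{\iruleStSubRecR} to pass to one-step unfoldings (\Cref{lem:unfold-subtyping}), noting that an unfolded merge is again a merge of the unfolded members, so the resulting pairs stay within the unfolding-closure of $\mathcal{R}$ and the coinductive argument goes through on guarded types.
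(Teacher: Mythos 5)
Your overall strategy is essentially the paper's. The paper also exhibits a witnessing relation and shows it is contained in $\stSub$: it first reduces the $n$-ary merge to binary merges using transitivity of subtyping (\Cref{transitive-subtyping}), takes the relation $R = \setcomp{(\stT[1] \stBinMerge \stT[2],\, \stT[1]),\ (\stT[1] \stBinMerge \stT[2],\, \stT[2])}{\stT[1] \stBinMerge \stT[2] \text{ is defined}}$, and argues that $R \subseteq {\stSub}$ by induction on the structure of $\stT[1] \stBinMerge \stT[2]$. Your single $n$-ary relation and the explicit backward-closure (coinduction) phrasing are cosmetic improvements rather than a different route; your case analysis (internal choice via \inferrule{\iruleStSubOut}, external choice via \inferrule{\iruleStSubIn} with the merge on the subtype side, recursion via \inferrule{\iruleStSubRecL} and \inferrule{\iruleStSubRecR}) is what the paper's induction would do. Both proofs, incidentally, assert without proof that unfolding commutes with merging in the recursion case.

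The gap is in your crash-uniformity step, which is precisely the point on which the paper's one-line proof is silent. Two problems. First, that step is not available under the lemma's hypotheses: the lemma quantifies over \emph{arbitrary} mergeable collections, and mergeability does not imply crash uniformity. Concretely, $\stT[1] = \stExtSum{\roleP}{}{\stCrashLab \stSeq \stEnd,\ \stLab \stSeq \stEnd}$ and $\stT[2] = \stExtSum{\roleP}{}{\stLab \stSeq \stEnd}$ are mergeable (full merging takes the union of branches, so $\stT[1] \stBinMerge \stT[2] = \stT[1]$), yet $\stT[1] \stNotSub \stT[2]$: in \inferrule{\iruleStSubIn} the extra index set $J$ would have to be $\setenum{\stCrashLab}$, violating the side condition $\nexists j \in J: \stLab[j] = \stCrashLab$. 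So the lemma as stated is actually false, and no argument from mergeability alone can close your external-choice case; your appeal to projection is an additional hypothesis, not a derivable fact. Second, even granting projection provenance, your justification is incorrect: projection does not \emph{insert} crash branches as a function of $\rolesR$ — it propagates those written in the global type, and merely \emph{requires} one when the sender is unreliable ($\roleQ \notin \rolesR$ implies $\exists k : \gtLab[k] = \gtCrashLab$ in \Cref{def:global-proj}). A transmission from a \emph{reliable} sender may or may not carry a $\gtCrashLab$ branch, so two mergeable projections under the same fixed $\rolesR$ can still disagree on the presence of $\stCrashLab$, and the merge then fails to be a subtype of the member lacking it. Uniformity is automatic only when the common sender is unreliable; the problematic case is a reliable sender with optional crash branches. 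In short: you have surfaced a real hole that the paper's proof glosses over, but your patch as written does not close it — one must either add crash uniformity as an explicit hypothesis of the lemma (it holds in all of the paper's uses where the threatening case is an unreliable sender) or weaken the statement.
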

\begin{proof}
By \cref{transitive-subtyping}, we have that subtyping is transitive.
Hence, it is sufficient to show that a relation
$R = \setcomp{(\stT[1] \!\stBinMerge\! \stT[2], \stT[1]),
(\stT[1] \!\stBinMerge\! \stT[2], \stT[2])}
{\stT[1], \stT[2] \text{ are local types s.t. } \stT[1] \stBinMerge
\stT[2] \text{ is defined}}$
is a subset of the subtyping relation $\stSub$.  Then we apply induction on 
on the structure of $\stT[1] \,\stBinMerge\, \stT[2]$.  \qedhere
\end{proof}

\begin{lemma}\label{lem:merge-lower-bound}
  Given a collection of mergable local types $\stT[i]$ ($i \in I$).
  If for all $i \in I$, $\stT[i] \stSub \stS$ for some local type $\stS$,
  then $\stMerge{i \in I}{\stT[i]} \stSub \stS$.
\end{lemma}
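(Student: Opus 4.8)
The plan is to avoid reasoning directly about the coinductively defined subtyping relation and instead derive the result from its companion \Cref{lem:merge-subtyping} together with transitivity of subtyping (\Cref{transitive-subtyping}). Note first that the big merge $\stMerge{i \in I}{\stT[i]}$ is only defined when the family is mergeable, and, exactly as in every use of the merge in the projection clauses of \Cref{def:global-proj}, the index set $I$ is non-empty. I would therefore fix an arbitrary witness $j \in I$.

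The three key steps, in order, are then: \emph{(1)} apply \Cref{lem:merge-subtyping} at index $j$ to obtain $\stMerge{i \in I}{\stT[i]} \stSub \stT[j]$, i.e.\ the merge is always a subtype of each of its components; \emph{(2)} invoke the hypothesis, which supplies $\stT[j] \stSub \stS$; and \emph{(3)} compose the two by \Cref{transitive-subtyping} to conclude $\stMerge{i \in I}{\stT[i]} \stSub \stS$. The only side conditions to check are that $I$ is non-empty (so some $j$ exists and the merge is well-formed) and that $\stMerge{i \in I}{\stT[i]}$, $\stT[j]$, and $\stS$ are closed and well-guarded, as \Cref{transitive-subtyping} demands; both hold under the contractiveness conventions maintained throughout \cref{sec:gtype}.

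The main obstacle is, in fact, conceptual rather than technical: this route is deliberately lightweight, and uses only the single instance $\stT[j] \stSub \stS$ of the hypothesis, so the ``for all $i \in I$'' assumption of the statement is more than sufficient. Should a self-contained argument that sidesteps transitivity be preferred, the fallback is to exhibit a subtyping witness relation directly and proceed by coinduction, using the definition of $\stBinMerge$: for external choices the merged branch set is the union of the component branch sets, which contains each of them and hence contains the branch set of any common supertype $\stS$, so \inferrule{\iruleStSubIn} applies, while the internal-choice, recursion, and $\stEnd$ cases mirror the corresponding merge clauses and the matching subtyping rules. I expect the transitivity-based proof to be the cleaner of the two and would present it as the primary argument, relegating the coinductive version to a remark only if \Cref{transitive-subtyping} were unavailable.
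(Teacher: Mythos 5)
Your proof is correct, but it follows a genuinely different route from the paper's. The paper proves this lemma the same way it proves \Cref{lem:merge-subtyping}: by structural induction on the binary merge $\stT[1] \stBinMerge \stT[2]$, i.e.\ by exhibiting a candidate relation and checking, case by case on the merge clauses, that it is contained in $\stSub$. You instead obtain the result as a one-step corollary of previously established facts: pick any witness $j \in I$ (the index set is non-empty), use \Cref{lem:merge-subtyping} to get $\stMerge{i \in I}{\stT[i]} \stSub \stT[j]$, use the hypothesis to get $\stT[j] \stSub \stS$, and compose via \Cref{transitive-subtyping}. Since both of those lemmas precede this one in the paper, there is no circularity, and the side conditions you flag (non-emptiness of $I$, closedness and guardedness) are exactly the right ones. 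Your argument is more economical and exposes something the paper's proof obscures: the conclusion needs only a \emph{single} instance of the universal hypothesis, so this lemma is strictly weaker than it looks. What the paper's uniform inductive template buys, by contrast, is that the same proof skeleton also handles \Cref{lem:merge-upper-bound} (if $\stS \stSub \stT[i]$ for all $i$, then $\stS \stSub \stMerge{i \in I}{\stT[i]}$), where your transitivity shortcut is unavailable --- there one genuinely needs all instances of the hypothesis and a (co)inductive construction, which is presumably why the authors wrote all three merge lemmas in the same style. Your fallback coinductive sketch is roughly right but glosses over the crash-label side conditions of \inferrule{\iruleStSubIn}; since your primary argument is complete, that is immaterial.
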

\begin{proof}
The proof is similar to that of \cref{lem:merge-subtyping}. 
By induction on the structure of $\stT[1] \,\stBinMerge\, \stT[2]$.  \qedhere
 \end{proof}

\begin{lemma}\label{lem:merge-upper-bound}
  Given a collection of mergable local types $\stT[i]$ ($i \in I$).
  If for all $i \in I$, $\stS \stSub \stT[i]$ for some local type $\stS$,
  then $\stS \stSub \stMerge{i \in I}{\stT[i]}$.
\end{lemma}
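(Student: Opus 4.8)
The plan is to mirror the strategy used for the dual statements \cref{lem:merge-subtyping,lem:merge-lower-bound}: first reduce the $I$-indexed merge to the binary merge $\stBinMerge$, and then establish the binary claim coinductively by exhibiting a candidate subtyping relation and checking that it is closed under the rules of \cref{def:subtyping}.

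First I would reduce to the binary case. Since subtyping is reflexive (\cref{lem:reflexive-subtyping}) and transitive (\cref{transitive-subtyping}), I argue by induction on $\card{I}$. When $\card{I} = 1$ the merge is just $\stT[j]$ and the hypothesis gives the result directly. For the inductive step, writing $\stMerge{i \in I}{\stT[i]} = \stT[j] \stBinMerge \stMerge{i \in I \setminus \setenum{j}}{\stT[i]}$ for some $j \in I$, the inductive hypothesis yields $\stS \stSub \stMerge{i \in I \setminus \setenum{j}}{\stT[i]}$, and together with $\stS \stSub \stT[j]$ the binary case closes the step.

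For the binary case --- if $\stS \stSub \stT[1]$, $\stS \stSub \stT[2]$, and $\stT[1] \stBinMerge \stT[2]$ is defined, then $\stS \stSub \stT[1] \stBinMerge \stT[2]$ --- I would set up the candidate relation
\[
  R \;=\; \stSub \,\cup\, \setcomp{(\stS, \stT[1] \stBinMerge \stT[2])}{\stS \stSub \stT[1],\; \stS \stSub \stT[2],\; \stT[1] \stBinMerge \stT[2]\text{ defined}}
\]
and show that $R$ is a subtyping relation, whence $R \subseteq \stSub$ because $\stSub$ is the largest such relation. Checking closure proceeds by cases on the (necessarily matching) top-level shape of $\stT[1]$ and $\stT[2]$, after first using \cref{lem:unfold-subtyping} to unfold $\stS$ so that it exposes the same constructor. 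In the internal-choice case the two operands share an index set, and since the subtype of an internal choice has a subset of branches (rule \inferrule{\iruleStSubOut}), each continuation of $\stS$ subtypes the corresponding continuations of both $\stT[1]$ and $\stT[2]$, so the pairs land back in $R$. In the external-choice case the full merge keeps the union of branches, and since the subtype of an external choice offers a superset of branches (rule \inferrule{\iruleStSubIn}), $\stS$ covers all labels of $\stT[1] \stBinMerge \stT[2]$, its continuations relating via $R$ to the merged continuations on shared labels and directly to the single continuation on the others. The recursion, variable, and $\stEnd$ cases are immediate from the merge clauses together with the unfolding lemma.

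The hard part will be discharging the $\stCrashLab$ side conditions of \inferrule{\iruleStSubIn} when the supertype is a merged external choice. I would argue that, since \inferrule{\iruleStSubIn} forbids a $\stCrashLab$ label among the branches present only in the subtype, $\stS \stSub \stT[1]$ already guarantees that any $\stCrashLab$ branch of $\stS$ is also a branch of $\stT[1]$ (and likewise for $\stT[2]$); consequently a $\stCrashLab$ branch of $\stS$ is shared by both operands and therefore survives in $\stT[1] \stBinMerge \stT[2]$, so it never appears as a subtype-only label and the condition $\nexists j : \stLab[j] = \stCrashLab$ on the extra branches is met. Finally, because the merged label set contains the union of the operands' labels, it cannot collapse to the singleton $\setenum{\stCrashLab}$, validating the remaining side condition.
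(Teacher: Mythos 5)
Your proposal is correct and takes essentially the same route as the paper: the paper's own proof is a one-line remark that the argument is ``similar to \cref{lem:merge-subtyping}, by induction on the structure of $\stS$'', i.e.\ exhibit a candidate relation and show it is contained in $\stSub$, which is exactly your binary-case coinduction, with the $n$-ary merge reduced to the binary one via reflexivity and transitivity as you do. If anything, you supply more detail than the paper does, notably the discharge of the $\stCrashLab$ side conditions of \inferrule{\iruleStSubIn}, which the paper's proof leaves entirely implicit.
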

\begin{proof}
The proof is similar to that of \cref{lem:merge-subtyping}. 
By induction on the structure of $\stS$.  \qedhere%
\end{proof}

\begin{lemma}\label{lem:subtype:merge-subty}
  Given two collections of mergable local types $\stS[i],  \stT[i]$ ($i \in I$).
  If for all $i \in I$, $\stS[i] \stSub \stT[i]$, then
  $\stMerge{i \in I} {\stS[i]} \stSub \stMerge{i \in I}{\stT[i]}$.
\end{lemma}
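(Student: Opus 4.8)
The plan is to exploit the fact that, with respect to the subtyping preorder $\stSub$, the merge operator $\stMerge{}{}$ behaves as a greatest lower bound --- a characterisation that is already fully captured by the three preceding lemmas (\cref{lem:merge-subtyping,lem:merge-lower-bound,lem:merge-upper-bound}). Rather than redo a structural induction on $\stS[1] \stBinMerge \stS[2]$ in the style of those lemmas, I would simply chain these lattice-theoretic facts together, which keeps the argument short and avoids re-analysing the merge cases by hand.

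First I would fix an arbitrary $j \in I$ and establish that $\stMerge{i \in I}{\stS[i]}$ is a lower bound of $\stT[j]$. By \cref{lem:merge-subtyping} we have $\stMerge{i \in I}{\stS[i]} \stSub \stS[j]$; by the hypothesis of the lemma $\stS[j] \stSub \stT[j]$; and since subtyping is transitive (\cref{transitive-subtyping}), composing these gives $\stMerge{i \in I}{\stS[i]} \stSub \stT[j]$. As $j \in I$ was arbitrary, this shows that $\stMerge{i \in I}{\stS[i]}$ is a common lower bound of the whole family $\setcomp{\stT[i]}{i \in I}$.

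I would then invoke the greatest-lower-bound maximality property, \cref{lem:merge-upper-bound}, instantiating its generic lower bound $\stS$ with $\stMerge{i \in I}{\stS[i]}$ and its target family with $\setcomp{\stT[i]}{i \in I}$. Since I have just shown $\stMerge{i \in I}{\stS[i]} \stSub \stT[i]$ for every $i \in I$, that lemma yields exactly $\stMerge{i \in I}{\stS[i]} \stSub \stMerge{i \in I}{\stT[i]}$, as required.

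The only delicate points are the applicability side conditions of the invoked lemmas: every application presupposes that the relevant merges are defined, which is guaranteed by the standing assumption that both $\setcomp{\stS[i]}{i \in I}$ and $\setcomp{\stT[i]}{i \in I}$ are mergable collections, together with the usual closedness and well-guardedness assumptions needed for transitivity. Consequently the main (and essentially only) obstacle is the bookkeeping of these conditions; the mathematical content is entirely subsumed by the GLB characterisation already proved, so no new induction is required.
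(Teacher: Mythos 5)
Your proof is correct, but it takes a genuinely different route from the paper's. The paper proves \cref{lem:subtype:merge-subty} exactly the way it proves \cref{lem:merge-subtyping,lem:merge-lower-bound,lem:merge-upper-bound}: by a direct structural induction on $\stS[1] \stBinMerge \stS[2]$, re-examining the merge cases one by one. You instead treat the earlier lemmas as an order-theoretic interface---$\stMerge{}{}$ is a greatest lower bound with respect to $\stSub$---and derive monotonicity purely by chaining: for each $j \in I$, $\stMerge{i \in I}{\stS[i]} \stSub \stS[j]$ by \cref{lem:merge-subtyping}, $\stS[j] \stSub \stT[j]$ by hypothesis, hence $\stMerge{i \in I}{\stS[i]} \stSub \stT[j]$ by \cref{transitive-subtyping}; then \cref{lem:merge-upper-bound}, instantiated with $\stS = \stMerge{i \in I}{\stS[i]}$, yields $\stMerge{i \in I}{\stS[i]} \stSub \stMerge{i \in I}{\stT[i]}$. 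This is sound: the only facts you need are the \emph{statements} of those lemmas (whose proofs do not depend on \cref{lem:subtype:merge-subty}, so there is no circularity), plus definedness of both merges, which is precisely the mergability hypothesis you flag. What your approach buys is brevity and modularity---no new case analysis, and the argument survives any change to the merge definition that preserves the GLB characterisation. What the paper's approach buys is self-containedness: it reuses the same inductive template across all the merge lemmas and does not lean on \cref{lem:merge-upper-bound}. One small remark: despite mentioning all three preceding lemmas, you never actually need \cref{lem:merge-lower-bound}; your proof uses only \cref{lem:merge-subtyping}, transitivity, and \cref{lem:merge-upper-bound}.
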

\begin{proof}
The proof is similar to that of \cref{lem:merge-subtyping}. %
By induction on the structure of $\stS[1] \,\stBinMerge\, \stS[2]$.   \qedhere
\end{proof}

\begin{lemma}\label{lem:proj-non-crashing-role-preserve}
  If \; $\roleP, \roleQ \in \gtRoles{\gtG}$ \; with \; $\roleP \neq \roleQ$ \;
  and \; $\roleQ \notin \rolesR$, \;
  then \;
  $
    \gtProj[\rolesR]{\gtG}{\roleP}
    \stSub
    \gtProj[\rolesR]{(\gtCrashRole{\gtG}{\roleQ})}{\roleP}
  $.
\end{lemma}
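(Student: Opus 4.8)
The plan is to prove the statement by induction on the structure of $\gtG$ (treating a recursive type through its unfolding), performing a case analysis that follows the clauses of role removal in \cref{def:gtype:remove-role} and, within each, the clauses of projection in \cref{def:global-proj}. Reflexivity and transitivity of $\stSub$ (\cref{lem:reflexive-subtyping,transitive-subtyping}), together with the merge lemmas (\cref{lem:merge-subtyping,lem:subtype:merge-subty}), will be the algebraic workhorses; the role-counting lemmas (\cref{lem:gtype:crash-remove-role,lem:gtype:crashed-crash-remove-role,lem:not-in-roles-end,lem:in-roles-not-end}) will be used to discharge the continuations in which $\roleP$ or $\roleQ$ is no longer active. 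Because those continuations force reasoning when one of the two roles is absent, I would actually prove a mild strengthening covering all subterms: whenever $\roleP \neq \roleQ$ and $\roleQ \notin \rolesR$, either $\roleP$ is absent on both sides (so both projections are $\stEnd$ and \cref{lem:reflexive-subtyping} applies directly) or the subtyping holds as stated.

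First I would treat $\gtG = \gtComm{\roleS}{\roleT}{i\in I}{\gtLab[i]}{\tyGround[i]}{\gtG[i]}$. When $\roleS=\roleQ$ the removal yields a pseudo-message en route, using that $\roleQ\notin\rolesR$ forces a $\gtCrashLab$ branch to be present (so removal is defined); projecting onto the receiver $\roleP=\roleT$ then gives external choices over the same index set $I$ on both sides, and I close with \inferrule{\iruleStSubIn}, reducing to continuation-wise subtyping supplied by the induction hypothesis. When $\roleT=\roleQ$ and $\roleP=\roleS$, both sides project to the same internal choice (crash labels dropped identically), so \inferrule{\iruleStSubOut} applies; when $\roleP$ is a third role both sides are merges and \cref{lem:subtype:merge-subty} lifts the per-branch induction hypotheses. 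The genuinely interesting case is the en-route transmission $\gtCommTransit{\roleSMaybeCrashed}{\roleT}{i\in I}{\gtLab[i]}{\tyGround[i]}{\gtG[i]}{j}$ with $\roleT=\roleQ$: here removal collapses the whole prefix to $\gtCrashRole{\gtG[j]}{\roleQ}$, so for a third role $\roleP$ the left-hand side is the merge $\stMerge{i\in I}{\gtProj[\rolesR]{\gtG[i]}{\roleP}}$ while the right-hand side is only the $j$-th branch. This is handled by chaining \cref{lem:merge-subtyping} (the merge is a subtype of its $j$-th component) with the induction hypothesis on $\gtG[j]$ and transitivity (\cref{transitive-subtyping}); the direction of $\stSub$ is precisely the one these lemmas provide, which is the sanity check that makes the case go through.

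The remaining obstacle is the recursion case $\gtG=\gtRec{\gtRecVar}{\gtGi}$: I would first use \cref{lem:gtype:crash-remove-role} to argue that, since the active role $\roleP\in\gtRoles{\gtG}$ is preserved by removing $\roleQ$, both $\gtCrashRole{\gtG}{\roleQ}$ and its projection stay genuinely recursive (neither collapses to $\gtEnd$), so both sides are $\stRec{\stRecVar}{\cdot}$ with bodies related by the induction hypothesis on $\gtGi$; closing under $\mu$ then needs a coinductive argument, which I would discharge by exhibiting the evident candidate relation and checking it against the coinductive rules of \cref{def:subtyping}, using \cref{lem:unfold-subtyping} for the unfolding bookkeeping. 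The subtle, error-prone part — rather than any deep difficulty — lies in keeping the index sets and the presence of the $\gtCrashLab$ branch aligned on the two sides so that the side conditions of \inferrule{\iruleStSubIn} (supertype not a pure crash branch, and crash-branch preservation) are met, and in invoking \cref{lem:not-in-roles-end,lem:in-roles-not-end} together with \cref{lem:gtype:crashed-crash-remove-role} to confirm that whenever $\roleP$ drops out of a continuation it drops out of the removed type as well, so the $\stEnd$-versus-$\stEnd$ leaves remain consistent.
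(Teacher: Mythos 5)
Your proposal is correct and follows essentially the same route as the paper's proof: the identical case analysis over the projection and removal clauses, closing the sender/receiver cases with \inferrule{\iruleStSubOut}/\inferrule{\iruleStSubIn}, the third-role cases with \cref{lem:subtype:merge-subty} (plus \cref{lem:merge-subtyping} and transitivity for the collapsed en-route case, which the paper leaves implicit under ``similar or trivial''), and a coinductive treatment of recursion. The only difference is presentational: the paper packages the entire argument as a single candidate relation $R$ of projection pairs shown to be contained in $\stSub$, whereas you frame it as structural induction with a coinductive patch at $\mu$ --- which, once made rigorous (the $\mu$-bodies are open types, so the ``induction hypothesis on bodies'' must really be read as membership of the unfoldings' projection pairs in the candidate relation), collapses into exactly the paper's formulation.
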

\begin{proof}
  We construct a relation $
  R = \setcomp{
    (\gtProj[\rolesR]{\gtG}{\roleP}
    ,
    \gtProj[\rolesR]{(\gtCrashRole{\gtG}{\roleQ})}{\roleP})
  }{
    \roleP, \roleQ \in \roleSet, \roleP \neq \roleQ, \roleQ \in \rolesR
  }$, and show that $R \subseteq \;\stSub$. By induction on the structure of $\gtG$:
  \begin{itemize}[leftmargin=*]
      \item Case $\gtG = \gtComm{\roleP}{\roleQ}{i \in I}{\gtLab[i]}{\tyGround[i]}{\gtG[i]}$: we perform case analysis on the role being projected upon:
      \begin{itemize}[leftmargin=*]
        \item
          On LHS,
          we have $\gtProj[\rolesR]{\gtG}{\roleP} =
          \stIntSum{\roleQ}{i \in \setcomp{j \in I}{\stFmt{\stLab[j]} \neq \stCrashLab}}{ %
            \stChoice{\stLab[i]}{\tyGround[i]} \stSeq (\gtProj[\rolesR]{\gtG[i]}{\roleP})%
          }%
          $.

          On RHS, we perform case analysis on the role being removed:

          \begin{enumerate}[leftmargin=*]
            \item we have
          $\gtCrashRole{\gtG}{\roleQ} =
          \gtComm{\roleP}{\roleQCrashed}{i \in I}{\gtLab[i]}{\tyGround[i]}{
            (\gtCrashRole{\gtG[i]}{\roleQ})
          }
          $, and thus\\ 
          $
          \gtProj[\rolesR]{(\gtCrashRole{\gtG}{\roleQ})}{\roleP} =
          \stIntSum{\roleQ}{i \in \setcomp{j \in I}{\stFmt{\stLab[j]} \neq \stCrashLab}}{ %
            \stChoice{\stLab[i]}{\tyGround[i]} \stSeq
            (\gtProj[\rolesR]{(\gtCrashRole{\gtG[i]}{\roleQ})}{\roleP})%
          }%
          $,
          apply \inferrule{\iruleStSubOut} and coinductive
          hypothesis.

            \item ($\roleR \neq \roleQ$) we have
          $\gtCrashRole{\gtG}{\roleR} =
          \gtComm{\roleP}{\roleQ}{i \in I}{\gtLab[i]}{\tyGround[i]}{
            (\gtCrashRole{\gtG[i]}{\roleR})
          }
          $, and thus\\
          $
          \gtProj[\rolesR]{(\gtCrashRole{\gtG}{\roleR})}{\roleP} =
          \stIntSum{\roleQ}{i \in \setcomp{j \in I}{\stFmt{\stLab[j]} \neq \stCrashLab}}{ %
            \stChoice{\stLab[i]}{\tyGround[i]} \stSeq
            (\gtProj[\rolesR]{(\gtCrashRole{\gtG[i]}{\roleR})}{\roleP})%
          }
          $, 
          apply $\inferrule{\iruleStSubOut}$ and coinductive
          hypothesis.
          \end{enumerate}
        \item
          On LHS,
          we have $\gtProj[\rolesR]{\gtG}{\roleQ} =
          \stExtSum{\roleP}{i \in I}{%
            \stChoice{\stLab[i]}{\tyGround[i]} \stSeq (\gtProj[\rolesR]{\gtG[i]}{\roleQ})%
          }%
          $.

          On RHS, we perform case analysis on the role being removed:
          \begin{enumerate}[leftmargin=*]
          \item we have $
          \gtCrashRole{\gtG}{\roleP} =
          \gtCommTransit{\rolePCrashed}{\roleQ}{i \in I}{\gtLab[i]}{\tyGround[i]}{
            (\gtCrashRole{\gtG[i]}{\roleP})
          }{j}
          $, and thus\\
          $
          \gtProj[\rolesR]{(\gtCrashRole{\gtG}{\roleP})}{\roleQ} =
          \stExtSum{\roleP}{i \in I}{
            \stChoice{\stLab[i]}{\tyGround[i]} \stSeq
            (\gtProj[\rolesR]{(\gtCrashRole{\gtG[i]}{\roleP})}{\roleQ})%
          }
          $,
          apply \inferrule{\iruleStSubIn} and coinductive
          hypothesis.

          \item ($\roleR \neq \roleP$) we have $
          \gtCrashRole{\gtG}{\roleR} =
          \gtComm{\roleP}{\roleQ}{i \in I}{\gtLab[i]}{\tyGround[i]}{
            (\gtCrashRole{\gtG[i]}{\roleR})
          }
          $, and thus\\
          $
          \gtProj[\rolesR]{(\gtCrashRole{\gtG}{\roleR})}{\roleQ} =
          \stExtSum{\roleP}{i \in I}{
            \stChoice{\stLab[i]}{\tyGround[i]} \stSeq
            (\gtProj[\rolesR]{(\gtCrashRole{\gtG[i]}{\roleR})}{\roleQ})%
          }
          $,
          apply $\inferrule{\iruleStSubIn}$ and coinductive
          hypothesis.
          \end{enumerate}
        \item ($\roleR \notin \setenum{\roleP, \roleQ}$)
          On LHS, we have $
          \gtProj[\rolesR]{\gtG}{\roleR} =
          \stMerge{i \in I}{\gtProj[\rolesR]{\gtG[i]}{\roleR}}%
          $.

          On RHS, we perform case analysis on the role being removed:

          \begin{enumerate}[leftmargin=*]
          \item we have $
          \gtCrashRole{\gtG}{\roleP} =
          \gtCommTransit{\rolePCrashed}{\roleQ}{i \in I}{\gtLab[i]}{\tyGround[i]}{
            (\gtCrashRole{\gtG[i]}{\roleP})
          }{j}
          $, and thus $
          \gtProj[\rolesR]{(\gtCrashRole{\gtG}{\roleP})}{\roleR} =
          \stMerge{i \in I}{
            (\gtProj[\rolesR]{(\gtCrashRole{\gtG[i]}{\roleP})}{\roleR})%
          }
          $,
          apply \cref{lem:subtype:merge-subty} and coinductive
          hypothesis.

          \item we have
          $\gtCrashRole{\gtG}{\roleQ} =
          \gtComm{\roleP}{\roleQCrashed}{i \in I}{\gtLab[i]}{\tyGround[i]}{
            (\gtCrashRole{\gtG[i]}{\roleQ})
          }
          $, and thus $
          \gtProj[\rolesR]{(\gtCrashRole{\gtG}{\roleQ})}{\roleR} =
          \stMerge{i \in I}{
            (\gtProj[\rolesR]{(\gtCrashRole{\gtG[i]}{\roleQ})}{\roleR})%
          }
          $,
          apply \cref{lem:subtype:merge-subty} and coinductive
          hypothesis.

          \item ($\roleS \notin \setenum{\roleP, \roleQ, \roleR}$) we have $
          \gtCrashRole{\gtG}{\roleS} =
          \gtComm{\roleP}{\roleQ}{i \in I}{\gtLab[i]}{\tyGround[i]}{
            (\gtCrashRole{\gtG[i]}{\roleS})
          }
          $, and thus $
          \gtProj[\rolesR]{(\gtCrashRole{\gtG}{\roleS})}{\roleR} =
          \stMerge{i \in I}{
            (\gtProj[\rolesR]{(\gtCrashRole{\gtG[i]}{\roleS})}{\roleR})%
          }
          $,
          apply \cref{lem:subtype:merge-subty} and coinductive
          hypothesis.

          \end{enumerate}
      \end{itemize}
          \item Case $\gtG = \gtRec{\gtRecVar}{\gtGi}$: 
      by coinductive hypothesis.
   \item  Other cases are similar or trivial. 
    \qedhere
  \end{itemize}
\end{proof}

\begin{lemma}[Inversion of Subtyping]\label{lem:subtyping-invert}
  ~
  \begin{enumerate}
    \item If
      $\stS \stSub
       \stIntSum{\roleP}{i \in I}{\stChoice{\stLab[i]}{\tyGround[i]} \stSeq \stT[i]}
      $, then
      $\unfoldOne{\stS} =
       \stIntSum{\roleP}{j \in J}{\stChoice{\stLabi[j]}{\tyGroundi[j]} \stSeq
       \stTi[j]}
      $, and $J \subseteq I$,
      and $\forall i \in J: \stLab[i] = \stLabi[i], \tyGround[i] =
      \tyGroundi[i]$ and $\stTi[i] \stSub \stT[i]$.
    \item If
      $\stS \stSub
       \stExtSum{\roleP}{i \in I}{\stChoice{\stLab[i]}{\tyGround[i]} \stSeq \stT[i]}
      $, then
      $\unfoldOne{\stS} =
       \stExtSum{\roleP}{j \in J}{\stChoice{\stLabi[j]}{\tyGroundi[j]} \stSeq
       \stTi[j]}
      $, and $I \subseteq J$,
      and $\forall i \in I: \stLab[i] = \stLabi[i], \tyGround[i] =
      \tyGroundi[i]$ and $\stTi[i] \stSub \stT[i]$.
  \end{enumerate}
\end{lemma}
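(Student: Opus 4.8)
The plan is to reduce both statements to an inversion on the single subtyping rule that can conclude a judgement whose supertype is a choice, after first normalising the left-hand side through its leading recursion binders. Because the types here are closed and well-guarded, $\unfoldOne{\stS}$ is well-defined and its head is a non-recursive constructor, namely an internal choice, an external choice, $\stEnd$, or $\stStop$; this is exactly the guardedness/contractiveness assumption recalled at the start of this appendix.

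First I would replace $\stS$ by $\unfoldOne{\stS}$ using the auxiliary results already established. By \cref{lem:unfold-subtyping} we have $\unfoldOne{\stS} \stSub \stS$, so combining this with the hypothesis via transitivity (\cref{transitive-subtyping}) yields $\unfoldOne{\stS} \stSub \stIntSum{\roleP}{i \in I}{\stChoice{\stLab[i]}{\tyGround[i]} \stSeq \stT[i]}$ in part (1), and $\unfoldOne{\stS} \stSub \stExtSum{\roleP}{i \in I}{\stChoice{\stLab[i]}{\tyGround[i]} \stSeq \stT[i]}$ in part (2). The point of this step is that $\unfoldOne{\stS}$ has a non-$\mu$ head, so the recursion rules $\inferrule{\iruleStSubRecL}$ and $\inferrule{\iruleStSubRecR}$ can no longer be invoked to justify the judgement.

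Next I would invert the coinductive definition of $\stSub$ on $\unfoldOne{\stS}$. Since the right-hand side is a concrete internal (resp.\ external) choice and not a recursion, $\inferrule{\iruleStSubRecR}$ is inapplicable; since $\unfoldOne{\stS}$ is not a recursion, $\inferrule{\iruleStSubRecL}$ is inapplicable; and $\inferrule{\iruleStSubEnd}$ and $\inferrule{\iruleStSubStop}$ have supertypes $\stEnd$ and $\stStop$, which cannot match a choice. Hence the only rule whose conclusion matches is $\inferrule{\iruleStSubOut}$ in part (1) and $\inferrule{\iruleStSubIn}$ in part (2). Reading off the conclusion of $\inferrule{\iruleStSubOut}$ forces $\unfoldOne{\stS}$ to be an internal choice $\stIntSum{\roleP}{j \in J}{\stChoice{\stLabi[j]}{\tyGroundi[j]} \stSeq \stTi[j]}$ with $J \subseteq I$, with matching labels and payloads on $J$, while its premise gives $\stTi[i] \stSub \stT[i]$ for each $i \in J$; symmetrically, $\inferrule{\iruleStSubIn}$ forces $\unfoldOne{\stS}$ to be an external choice over an index set $J \supseteq I$, with matching labels and payloads on $I$ and $\stTi[i] \stSub \stT[i]$ for $i \in I$. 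The extra crash-branch side conditions attached to $\inferrule{\iruleStSubIn}$ are carried along automatically but are not needed for the stated conclusion.

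The main obstacle I expect is phrasing the inversion soundly in the coinductive setting: because $\stSub$ is a greatest fixed point, ``the last rule applied'' is not literally meaningful. I would therefore state the inversion as the standard fact that membership of a pair in the gfp entails that the pair is the conclusion of some generating rule whose premises again lie in the gfp, and then carry out the rule enumeration above. The only genuine content beyond this is that normalisation by $\unfoldOne{}$ terminates and strips every leading $\mu$-binder, which follows directly from guardedness; once $\unfoldOne{\stS}$ is in hand, the elimination of all rules except $\inferrule{\iruleStSubOut}$ and $\inferrule{\iruleStSubIn}$ is immediate.
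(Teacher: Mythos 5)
Your proposal is correct and follows essentially the same route as the paper, whose proof is a one-line appeal to \cref{lem:unfold-subtyping}, transitivity of subtyping, and inversion of the rules \inferrule{\iruleStSubOut} and \inferrule{\iruleStSubIn} from \cref{def:subtyping}. You simply spell out the details the paper leaves implicit (stripping leading $\mu$-binders via guardedness, ruling out the other rules, and justifying inversion for the greatest fixed point), all of which is sound.
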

\begin{proof}
By \cref{lem:unfold-subtyping}, the transitivity of subtyping, and \cref{def:subtyping} (\inferrule{\iruleStSubIn}, \inferrule{\iruleStSubOut}).   \qedhere
\end{proof}

\subsection{Semantics of Global Types}
\label{sec:proof:semantics:gty}

\begin{lemma}[No Revival Or Sudden Death]\label{lem:no-revival-roles}
  If
  \;$\gtWithCrashedRoles{\rolesC}{\gtG}
  \gtMove[\stEnvAnnotGenericSym]{\rolesR}
  \gtWithCrashedRoles{\rolesCi}{\gtGi}
  $,\;
  \begin{enumerate}
    \item If $\roleP \in \gtRolesCrashed{\gtGi}$ and $\stEnvAnnotGenericSym \neq
      \ltsCrash{}{\roleP}$, then $\roleP \in \gtRolesCrashed{\gtG}$;
      \label{item:crash-roles-remain-crash}
    \item If $\roleP \in \gtRoles{\gtGi}$ and $\stEnvAnnotGenericSym \neq
      \ltsCrash{}{\roleP}$, then $\roleP \in \gtRoles{\gtG}$;
      \label{item:live-roles-remain-live}
    \item If $\roleP \in \gtRolesCrashed{\gtGi}$ and $\stEnvAnnotGenericSym =
      \ltsCrash{}{\roleP}$, then $\roleP \in \gtRoles{\gtG}$.
      \label{item:crash-role-crash}
  \end{enumerate}
\end{lemma}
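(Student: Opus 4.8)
The plan is to proceed by induction on the derivation of the reduction $\gtWithCrashedRoles{\rolesC}{\gtG} \gtMove[\stEnvAnnotGenericSym]{\rolesR} \gtWithCrashedRoles{\rolesCi}{\gtGi}$, i.e.\ on the rules of \cref{fig:gtype:red-rules}, establishing the three parts simultaneously. Throughout I rely only on the syntactic role-set functions $\gtRoles{\cdot}$ and $\gtRolesCrashed{\cdot}$ of \cref{def:active_crashed_roles}, never on the bookkeeping sets $\rolesC, \rolesCi$. To avoid a clash with the communicating roles $\roleP, \roleQ$ named inside the rules, I write $\roleR[0]$ for the distinguished role of the statement.

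For the communication rules $\inferrule{\iruleGtMoveOut}$, $\inferrule{\iruleGtMoveIn}$, $\inferrule{\iruleGtMoveCrDe}$, and $\inferrule{\iruleGtMoveOrph}$, the labels carry no $\ltsCrash{}{\cdot}$ action, so part~3 is vacuous and the side conditions of parts~1--2 hold automatically; it then suffices to unfold \cref{def:active_crashed_roles} and observe that in each case $\gtRoles{\gtGi} \subseteq \gtRoles{\gtG}$ and $\gtRolesCrashed{\gtGi} \subseteq \gtRolesCrashed{\gtG}$. For instance, $\inferrule{\iruleGtMoveOut}$ turns a transmission into a transmission en route, which have identical crashed-role sets and whose active-role sets differ only by the now-committed sender; $\inferrule{\iruleGtMoveIn}$, $\inferrule{\iruleGtMoveCrDe}$, and $\inferrule{\iruleGtMoveOrph}$ each step into a single continuation $\gtGi[j]$, whose role sets are subsets of the prefix's by the union clauses of \cref{def:active_crashed_roles} (crucially using the Remark following \cref{def:active_crashed_roles}, so that a role appearing crashed only in a transmission-in-transit head is not counted). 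The recursion rule $\inferrule{\iruleGtMoveRec}$ is immediate, since folding and unfolding preserve both role sets, so the inductive hypothesis on the premise transfers verbatim.

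The genuinely interesting case is $\inferrule{\iruleGtMoveCrash}$, where a live unreliable role (call it $\roleP$, matching the rule) crashes: $\stEnvAnnotGenericSym = \ltsCrash{\mpS}{\roleP}$, $\gtGi = \gtCrashRole{\gtG}{\roleP}$, and the premise gives $\roleP \in \gtRoles{\gtG}$. Part~3 forces $\roleR[0] = \roleP$ (as $\roleR[0]$ must be the label's subject), and its conclusion $\roleR[0] \in \gtRoles{\gtG}$ is exactly the rule premise. For parts~1--2 the hypothesis $\stEnvAnnotGenericSym \neq \ltsCrash{}{\roleR[0]}$ forces $\roleR[0] \neq \roleP$. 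Part~2 then follows from $\gtRoles{\gtCrashRole{\gtG}{\roleP}} \subseteq \gtRoles{\gtG}$ (\cref{lem:gtype:crash-remove-role}), and part~1 from $\gtRolesCrashed{\gtCrashRole{\gtG}{\roleP}} \setminus \setenum{\roleP} \subseteq \gtRolesCrashed{\gtG}$ (\cref{lem:gtype:crashed-crash-remove-role}), where $\roleR[0] \neq \roleP$ discharges the set difference.

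Finally, the congruence rules $\inferrule{\iruleGtMoveCtx}$ and $\inferrule{\iruleGtMoveCtxi}$ reduce every continuation by the same label $\stEnvAnnotGenericSym$ while preserving the prefix, with side conditions ensuring $\ltsSubject{\stEnvAnnotGenericSym} \notin \setenum{\roleP, \roleQ}$ (resp.\ $\neq \roleQ$). If $\roleR[0]$ lies in a continuation's role set, the claim follows by the inductive hypothesis applied to that continuation together with the union clauses of \cref{def:active_crashed_roles}; if $\roleR[0]$ is a prefix role it is handled directly (a live receiver stays live, a crash-annotated receiver stays crash-annotated), and for part~3 the subject side condition excludes $\roleR[0]$ from the prefix altogether. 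I expect the main obstacle to be bookkeeping rather than depth: keeping the three parts, the renamed distinguished role $\roleR[0]$, and the several shapes of the (possibly in-transit, possibly crash-annotated) prefix aligned, and in particular remembering the Remark so that a crash annotation on a transmission-in-transit head is not miscounted in $\gtRolesCrashed{\cdot}$.
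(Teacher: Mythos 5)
Your proof is correct and takes essentially the same route as the paper's: induction on the derivation of $\gtWithCrashedRoles{\rolesC}{\gtG} \gtMove[\stEnvAnnotGenericSym]{\rolesR} \gtWithCrashedRoles{\rolesCi}{\gtGi}$, reading off the role-set inclusions from \cref{def:active_crashed_roles} in the communication, recursion, and context cases, and invoking \cref{lem:gtype:crash-remove-role,lem:gtype:crashed-crash-remove-role} for the \inferrule{\iruleGtMoveCrash} case. If anything, your write-up is more explicit than the paper's, which for parts 1--2 folds the \inferrule{\iruleGtMoveCrash} case (a role other than the distinguished one crashing) into ``other cases are similar'' and dismisses part 3 as immediate by inversion, whereas you spell out precisely these steps, including the propagation of crash labels through the context rules.
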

\begin{proof}
\begin{enumerate}[leftmargin=*]
  \item By induction on global type reductions: since $\stEnvAnnotGenericSym \neq
      \ltsCrash{}{\roleP}$, we start from \inferrule{\iruleGtMoveRec}.
      \begin{itemize}[leftmargin=*]
      \item Case \inferrule{\iruleGtMoveRec}: we have
      $\gtG = \gtRec{\gtRecVar}{\gtGii}$ and
      $\gtWithCrashedRoles{\rolesC}{\gtGii{}\subst{\gtRecVar}{\gtRec{\gtRecVar}{\gtGii}}}
      \gtMove[\stEnvAnnotGenericSym]{\rolesR} \gtWithCrashedRoles{\rolesCi}{\gtGi}$
    by \inferrule{\iruleGtMoveRec} and its inversion.
    Hence, by $\gtWithCrashedRoles{\rolesC}{\gtGii{}\subst{\gtRecVar}{\gtRec{\gtRecVar}{\gtGii}}}
    \gtMove[\stEnvAnnotGenericSym]{\rolesR} \gtWithCrashedRoles{\rolesCi}{\gtGi}$, $\roleP \in  \gtRolesCrashed{\gtGi}$, and inductive hypothesis,
    we have $\roleP \in \gtRolesCrashed{\gtGii{}\subst{\gtRecVar}{\gtRec{\gtRecVar}{\gtGii}}}$.
    Therefore,
    by $\gtRolesCrashed{\gtRec{\gtRecVar}{\gtGii}} =  \gtRolesCrashed{\gtGii{}\subst{\gtRecVar}{\gtRec{\gtRecVar}{\gtGii}}}$,
    we conclude with $\roleP \in \gtRolesCrashed{\gtG}$, as desired.
     \item Case \inferrule{\iruleGtMoveIn}: we have $\gtG =  \gtCommTransit{\rolePMaybeCrashed}{\roleQ}{i \in I}{\gtLab[i]}{\tyGround[i]}{\gtGi[i]}{j}$ and $\gtGi = \gtGi[j]$ by \inferrule{\iruleGtMoveIn}. It follows that $\gtRolesCrashed{\gtG} = \bigcup\limits_{i \in I}{\gtRolesCrashed{\gtGi[i]}}$
       and $\gtRolesCrashed{\gtGi} = \gtRolesCrashed{\gtGi[j]}$ with $j \in I$, and hence, $\gtRolesCrashed{\gtGi}  \subseteq \gtRolesCrashed{\gtG}$.
       Therefore, by $\roleP \in \gtRolesCrashed{\gtGi}$,
       we conclude with $\roleP \in \gtRolesCrashed{\gtG}$, as desired.
       
    \item Case \inferrule{\iruleGtMoveCtx}: we have $\gtG =  \gtCommSmall{\roleP}{\roleQMaybeCrashed}{i \in
      I}{\gtLab[i]}{\tyGround[i]}{\gtGi[i]}$, $\gtGi =  \gtCommSmall{\roleP}{\roleQMaybeCrashed}{i \in
      I}{\gtLab[i]}{\tyGround[i]}{\gtGii[i]}$,  $\forall i \in I :
    \gtWithCrashedRoles{\rolesC}{\gtGi[i]}
    \gtMove[\stEnvAnnotGenericSym]{\rolesR}
    \gtWithCrashedRoles{\rolesCi}{\gtGii[i]}$, and $\ltsSubject{\stEnvAnnotGenericSym} \notin \setenum{\roleP, \roleQ}$
    by \inferrule{\iruleGtMoveCtx} and its inversion. It follows that
    $\gtRolesCrashed{\gtG} = \bigcup\limits_{i \in I}{\gtRolesCrashed{\gtGi[i]}}$, $\gtRolesCrashed{\gtGi} = \bigcup\limits_{i \in I}{\gtRolesCrashed{\gtGii[i]}}$,
    and $\stEnvAnnotGenericSym \neq
      \ltsCrash{}{\roleP}$. Then by $\forall i \in I :
    \gtWithCrashedRoles{\rolesC}{\gtGi[i]}
    \gtMove[\stEnvAnnotGenericSym]{\rolesR}
    \gtWithCrashedRoles{\rolesCi}{\gtGii[i]}$, $\stEnvAnnotGenericSym \neq
      \ltsCrash{}{\roleP}$, and inductive hypothesis, we have $\forall i \in I : \text{if } \roleP \in \gtRolesCrashed{\gtGii[i]}, \text{then } \roleP \in \gtRolesCrashed{\gtGi[i]}$.
      Therefore, by $\roleP \in \bigcup\limits_{i \in I}{\gtRolesCrashed{\gtGii[i]}}$, we conclude with $\roleP \in \bigcup\limits_{i \in I}{\gtRolesCrashed{\gtGi[i]}} = \gtRolesCrashed{\gtG}$, as desired.
\item Other cases are similar.
      \end{itemize}
 \item Similar to the proof of (1).
 \item The proof is trivial by \inferrule{\iruleGtMoveCrash} and its inversion.
 \qedhere
\end{enumerate}
\end{proof}

\lemWellAnnoPreserve*
\begin{proof}
  By induction on global type reductions: 

  \begin{itemize}[leftmargin=*]
    \item Case \inferrule{\iruleGtMoveCrash}:  %
we have $\rolesCi = \rolesC \cup \setenum{\roleP}$, $\gtGi = \gtCrashRole{\gtG}{\roleP}$,  $\roleP \notin \rolesR$,
    $\roleP \in \gtRoles{\gtG}$, and $\gtG \neq \gtRec{\gtRecVar}{\gtGi}$ by \inferrule{\iruleGtMoveCrash} and its inversion.

    \cref{item:wa:reliable-no-crash}: %
from the premise, we have
    $\gtRolesCrashed{\gtG} \cap \rolesR = \emptyset$. Since $\roleP \in \gtRoles{\gtG}$, by \cref{lem:gtype:crashed-crash-remove-role},
    we have $\gtRolesCrashed{\gtCrashRole{\gtG}{\roleP}} \setminus \setenum{\roleP} \subseteq \gtRolesCrashed{\gtG}$. Then we consider two cases:
    \begin{itemize}[leftmargin=*]
    \item if $\roleP \in \gtRolesCrashed{\gtCrashRole{\gtG}{\roleP}}$, then
    $\gtRolesCrashed{\gtCrashRole{\gtG}{\roleP}} = \setenum{\roleP}
    \cup (\gtRolesCrashed{\gtCrashRole{\gtG}{\roleP}} \setminus \setenum{\roleP})$. Hence, by $\roleP \notin \rolesR$,
    $\gtRolesCrashed{\gtCrashRole{\gtG}{\roleP}} \setminus \setenum{\roleP} \subseteq \gtRolesCrashed{\gtG}$, and $\gtRolesCrashed{\gtG} \cap \rolesR = \emptyset$,
    we have $\gtRolesCrashed{\gtCrashRole{\gtG}{\roleP}} \cap \rolesR  = \emptyset$.
    \item  if $\roleP \notin \gtRolesCrashed{\gtCrashRole{\gtG}{\roleP}}$, then
    $\gtRolesCrashed{\gtCrashRole{\gtG}{\roleP}} = \gtRolesCrashed{\gtCrashRole{\gtG}{\roleP}} \setminus \setenum{\roleP}$. Hence, by
    $\gtRolesCrashed{\gtCrashRole{\gtG}{\roleP}} \setminus \setenum{\roleP} \subseteq \gtRolesCrashed{\gtG}$ and $\gtRolesCrashed{\gtG} \cap \rolesR = \emptyset$,
    we have $\gtRolesCrashed{\gtCrashRole{\gtG}{\roleP}} \cap \rolesR  = \emptyset$.
     \end{itemize}
    Therefore, by  $\gtGi = \gtCrashRole{\gtG}{\roleP}$, we conclude with $\gtRolesCrashed{\gtGi} \cap \rolesR = \emptyset$, as desired.

    \cref{item:wa:crash-annot-crash}:
   from the premise, we have
    $\gtRolesCrashed{\gtG} \subseteq \rolesC$. Since $\roleP \in \gtRoles{\gtG}$, by \cref{lem:gtype:crashed-crash-remove-role},
     we have $\gtRolesCrashed{\gtCrashRole{\gtG}{\roleP}} \setminus \setenum{\roleP} \subseteq \gtRolesCrashed{\gtG}$. Then we consider two cases:
    \begin{itemize}[leftmargin=*]
    \item if $\roleP \in \gtRolesCrashed{\gtCrashRole{\gtG}{\roleP}}$, then
    $\gtRolesCrashed{\gtCrashRole{\gtG}{\roleP}} = \setenum{\roleP}
    \cup (\gtRolesCrashed{\gtCrashRole{\gtG}{\roleP}} \setminus \setenum{\roleP})$. Hence, by
    $\gtRolesCrashed{\gtCrashRole{\gtG}{\roleP}} \setminus \setenum{\roleP} \subseteq \gtRolesCrashed{\gtG}$ and $\gtRolesCrashed{\gtG} \subseteq \rolesC$,
    we have $\gtRolesCrashed{\gtCrashRole{\gtG}{\roleP}} \subseteq \rolesC \cup \setenum{\roleP}$.
    \item  if $\roleP \notin \gtRolesCrashed{\gtCrashRole{\gtG}{\roleP}}$, then
    $\gtRolesCrashed{\gtCrashRole{\gtG}{\roleP}} = \gtRolesCrashed{\gtCrashRole{\gtG}{\roleP}} \setminus \setenum{\roleP}$. Hence, by
    $\gtRolesCrashed{\gtCrashRole{\gtG}{\roleP}} \setminus \setenum{\roleP} \subseteq \gtRolesCrashed{\gtG}$ and $\gtRolesCrashed{\gtG} \subseteq \rolesC$,
    we have $\gtRolesCrashed{\gtCrashRole{\gtG}{\roleP}} \subseteq \rolesC \cup \setenum{\roleP}$.
     \end{itemize}
   Therefore, by $\gtGi = \gtCrashRole{\gtG}{\roleP}$ and
    $\rolesCi = \rolesC \cup \{\roleP\}$, we conclude with $\gtRolesCrashed{\gtGi} \subseteq \rolesCi$, as desired.

    \cref{item:wa:live-no-crash}:
  from the premise, we have $\gtRoles{\gtG}
      \cap \gtRolesCrashed{\gtG} = \emptyset$. Since $\roleP \in \gtRoles{\gtG}$, by \cref{lem:gtype:crash-remove-role},  \cref{lem:gtype:crashed-crash-remove-role},
      we have $\gtRoles{\gtCrashRole{\gtG}{\roleP}} \subseteq \gtRoles{\gtG}$,
      $\roleP \notin \gtRoles{\gtCrashRole{\gtG}{\roleP}}$,
      and $\gtRolesCrashed{\gtCrashRole{\gtG}{\roleP}} \setminus \setenum{\roleP} \subseteq \gtRolesCrashed{\gtG}$. Then we consider two cases:
       \begin{itemize}[leftmargin=*]
    \item if $\roleP \in \gtRolesCrashed{\gtCrashRole{\gtG}{\roleP}}$, then
    $\gtRolesCrashed{\gtCrashRole{\gtG}{\roleP}} = \setenum{\roleP}
    \cup (\gtRolesCrashed{\gtCrashRole{\gtG}{\roleP}} \setminus \setenum{\roleP})$. Hence, by $\gtRoles{\gtG}
      \cap \gtRolesCrashed{\gtG} = \emptyset$, $\roleP \notin \gtRoles{\gtCrashRole{\gtG}{\roleP}}$, $\gtRolesCrashed{\gtCrashRole{\gtG}{\roleP}} \setminus \setenum{\roleP} \subseteq \gtRolesCrashed{\gtG}$,
      and $\gtRoles{\gtCrashRole{\gtG}{\roleP}} \subseteq \gtRoles{\gtG}$, we have $\gtRolesCrashed{\gtCrashRole{\gtG}{\roleP}}  \cap \gtRoles{\gtCrashRole{\gtG}{\roleP}} = \emptyset$.
     \item if $\roleP \notin \gtRolesCrashed{\gtCrashRole{\gtG}{\roleP}}$, then
    $\gtRolesCrashed{\gtCrashRole{\gtG}{\roleP}} = \gtRolesCrashed{\gtCrashRole{\gtG}{\roleP}} \setminus \setenum{\roleP}$. Hence, by $\gtRoles{\gtG}
      \cap \gtRolesCrashed{\gtG} = \emptyset$,
    $\gtRolesCrashed{\gtCrashRole{\gtG}{\roleP}} \setminus \setenum{\roleP} \subseteq \gtRolesCrashed{\gtG}$, and
     $\gtRoles{\gtCrashRole{\gtG}{\roleP}} \subseteq \gtRoles{\gtG}$,  we have $\gtRolesCrashed{\gtCrashRole{\gtG}{\roleP}}  \cap \gtRoles{\gtCrashRole{\gtG}{\roleP}} = \emptyset$.
    \end{itemize}
  Therefore, by $\gtGi = \gtCrashRole{\gtG}{\roleP}$, we conclude with
      $\gtRoles{\gtGi}
      \cap \gtRolesCrashed{\gtGi} = \emptyset$, as desired.

      \item Case \inferrule{\iruleGtMoveRec}:
     we have
      $\gtG = \gtRec{\gtRecVar}{\gtGii}$ and
      $\gtWithCrashedRoles{\rolesC}{\gtGii{}\subst{\gtRecVar}{\gtRec{\gtRecVar}{\gtGii}}}    
      \gtMove[\stEnvAnnotGenericSym]{\rolesR} \gtWithCrashedRoles{\rolesCi}{\gtGi}$
    by \inferrule{\iruleGtMoveRec} and its inversion.
     From the premise, we also have
    $\gtWithCrashedRoles{\rolesC}{\gtRec{\gtRecVar}{\gtGii}}$ is well-annotated.
    Hence, by $\gtRolesCrashed{\gtRec{\gtRecVar}{\gtGii}} =
    \gtRolesCrashed{\gtGii{}\subst{\gtRecVar}{\gtRec{\gtRecVar}{\gtGii}}}$,
    $\gtRoles{\gtRec{\gtRecVar}{\gtGii}} =
    \gtRoles{\gtGii{}\subst{\gtRecVar}{\gtRec{\gtRecVar}{\gtGii}}}$,
    and $\gtWithCrashedRoles{\rolesC}{\gtRec{\gtRecVar}{\gtGii}}$ is well-annotated,
    we have
    $\gtRolesCrashed{\gtGii{}\subst{\gtRecVar}{\gtRec{\gtRecVar}{\gtGii}}} \cap \rolesR = \emptyset$,
    $\gtRolesCrashed{\gtGii{}\subst{\gtRecVar}{\gtRec{\gtRecVar}{\gtGii}}} \subseteq \rolesC$,
    and $\gtRoles{\gtGii{}\subst{\gtRecVar}{\gtRec{\gtRecVar}{\gtGii}}} \cap
    \gtRolesCrashed{\gtGii{}\subst{\gtRecVar}{\gtRec{\gtRecVar}{\gtGii}}}= \emptyset$.
    It follows directly that
     $\gtWithCrashedRoles{\rolesC}{\gtGii{}\subst{\gtRecVar}{\gtRec{\gtRecVar}{\gtGii}}}$ is well-annotated.
     Therefore, by $\gtWithCrashedRoles{\rolesC}{\gtGii{}\subst{\gtRecVar}{\gtRec{\gtRecVar}{\gtGii}}}
    \gtMove[\stEnvAnnotGenericSym]{\rolesR} \gtWithCrashedRoles{\rolesCi}{\gtGi}$
    and inductive hypothesis, we conclude with $\gtWithCrashedRoles{\rolesCi}{\gtGi}$ is well-annotated, as desired.
     \item  Other cases are similar. 
  \qedhere
  \end{itemize}
\end{proof}

\begin{lemma}\label{lem:gt-lts-unfold}
  $\gtWithCrashedRoles{\rolesC}{\gtG}
  \gtMove[\stEnvAnnotGenericSym]{\rolesR}
  \gtWithCrashedRoles{\rolesCi}{\gtGi}$, \;iff
  \;$
  \gtWithCrashedRoles{\rolesC}{\unfoldOne{\gtG}}
  \gtMove[\stEnvAnnotGenericSym]{\rolesR}
  \gtWithCrashedRoles{\rolesCi}{\gtGi}$.
\end{lemma}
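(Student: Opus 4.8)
The plan is to proceed by induction on the number of leading recursion binders that $\unfoldOne{\cdot}$ strips from $\gtG$, a quantity that is finite precisely because recursive global types are required to be contractive (guarded). Since $\gtRecVar$ never occurs unguarded in $\gtRec{\gtRecVar}{\gtGii}$, repeatedly applying the recursive clause $\unfoldOne{\gtRec{\gtRecVar}{\gtGii}} = \unfoldOne{\gtGii\subst{\gtRecVar}{\gtRec{\gtRecVar}{\gtGii}}}$ must eventually reach a type whose head is not a $\mu$, and this induction measure decreases by exactly one at each unfolding. The base case is immediate: if $\gtG$ is not of the form $\gtRec{\gtRecVar}{\gtGii}$, then by definition $\unfoldOne{\gtG} = \gtG$, so the biconditional holds trivially.

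The crux of the argument is that rule $\inferrule{\iruleGtMoveRec}$ is the unique transition rule applicable when the subject global type has a $\mu$ at its head, and moreover that this rule is invertible. Inspecting the rules of \cref{fig:gtype:red-rules}, every rule other than $\inferrule{\iruleGtMoveRec}$ has a conclusion whose left-hand global type is a transmission or a transmission en route; in particular $\inferrule{\iruleGtMoveCrash}$ carries the explicit side condition $\gtG \neq \gtRec{\gtRecVar}{\gtGi}$, so a $\mu$-headed type cannot crash directly but only after being unfolded. Hence, for $\gtG = \gtRec{\gtRecVar}{\gtGii}$, we have $\gtWithCrashedRoles{\rolesC}{\gtRec{\gtRecVar}{\gtGii}} \gtMove[\stEnvAnnotGenericSym]{\rolesR} \gtWithCrashedRoles{\rolesCi}{\gtGi}$ if and only if its premise $\gtWithCrashedRoles{\rolesC}{\gtGii\subst{\gtRecVar}{\gtRec{\gtRecVar}{\gtGii}}} \gtMove[\stEnvAnnotGenericSym]{\rolesR} \gtWithCrashedRoles{\rolesCi}{\gtGi}$ holds, with the label $\stEnvAnnotGenericSym$, the target $\gtGi$, and the crashed-role set $\rolesCi$ all transmitted unchanged.

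For the inductive step, write $\gtG = \gtRec{\gtRecVar}{\gtGii}$ and let $\gtGii\subst{\gtRecVar}{\gtRec{\gtRecVar}{\gtGii}}$ be its one-step unfolding. The leading $\mu$-prefix of $\gtGii\subst{\gtRecVar}{\gtRec{\gtRecVar}{\gtGii}}$ is one shorter than that of $\gtG$, so the induction hypothesis applies to it and gives that $\gtWithCrashedRoles{\rolesC}{\gtGii\subst{\gtRecVar}{\gtRec{\gtRecVar}{\gtGii}}} \gtMove[\stEnvAnnotGenericSym]{\rolesR} \gtWithCrashedRoles{\rolesCi}{\gtGi}$ holds iff $\gtWithCrashedRoles{\rolesC}{\unfoldOne{\gtGii\subst{\gtRecVar}{\gtRec{\gtRecVar}{\gtGii}}}} \gtMove[\stEnvAnnotGenericSym]{\rolesR} \gtWithCrashedRoles{\rolesCi}{\gtGi}$ does. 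Chaining this with the invertibility of $\inferrule{\iruleGtMoveRec}$ established above, and observing that $\unfoldOne{\gtG} = \unfoldOne{\gtGii\subst{\gtRecVar}{\gtRec{\gtRecVar}{\gtGii}}}$ by the definition of $\unfoldOne{\cdot}$, yields exactly the desired equivalence between the transitions of $\gtWithCrashedRoles{\rolesC}{\gtG}$ and $\gtWithCrashedRoles{\rolesC}{\unfoldOne{\gtG}}$.

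I expect the only delicate point to be the justification that the induction measure is well founded and that $\inferrule{\iruleGtMoveRec}$ is genuinely invertible, i.e.\ that no competing rule fires on a $\mu$-headed type; both hinge on the contractiveness assumption and on the side condition of $\inferrule{\iruleGtMoveCrash}$, rather than on any substantial calculation.
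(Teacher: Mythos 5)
Your proof is correct and is essentially the paper's own argument: the paper proves this lemma with the single line ``by inverting or applying \inferrule{\iruleGtMoveRec} when necessary,'' and your proposal is exactly that argument made explicit -- induction on the leading $\mu$-spine (well-founded by contractiveness), with invertibility of \inferrule{\iruleGtMoveRec} secured by the side condition $\gtG \neq \gtRec{\gtRecVar}{\gtGi}$ on \inferrule{\iruleGtMoveCrash} and by the other rules requiring a transmission at the head.
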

\begin{proof}
  By inverting or applying $\inferrule{\iruleGtMoveRec}$ when necessary.  \qedhere
\end{proof}

\begin{lemma}[Progress of Global Types]\label{lem:gtype:progress}
  If $\gtWithCrashedRoles{\rolesC}{\gtG}$ (where $\gtG$ is a projectable
  global type) is well-annotated, and $\gtG \neq
  \gtEnd$, then there exists $\gtGi, \rolesCi$ such that
  $\gtWithCrashedRoles{\rolesC}{\gtG}
  \gtMove{\rolesR}
  \gtWithCrashedRoles{\rolesCi}{\gtGi}$.
\end{lemma}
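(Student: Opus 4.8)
The plan is to argue by a case analysis on the top-level shape of $\gtG$, after first collapsing any leading recursion. If $\gtG = \gtRec{\gtRecVar}{\gtGi}$, I would invoke \cref{lem:gt-lts-unfold}, which states that $\gtWithCrashedRoles{\rolesC}{\gtG}$ and $\gtWithCrashedRoles{\rolesC}{\unfoldOne{\gtG}}$ have exactly the same transitions, thereby reducing the goal to showing $\unfoldOne{\gtG}$ has a move. Since recursion is guarded (the appendix convention that unfolding leads to a progressive prefix), iterating $\unfoldOne{\cdot}$ exposes a genuine prefix, so it suffices to treat the cases where the head of $\gtG$ is a transmission $\gtComm{\roleP}{\roleQMaybeCrashed}{i\in I}{\gtLab[i]}{\tyGround[i]}{\gtGi[i]}$ or a transmission en route $\gtCommTransit{\rolePMaybeCrashed}{\roleQ}{i\in I}{\gtLab[i]}{\tyGround[i]}{\gtGi[i]}{j}$; the cases $\gtG = \gtEnd$ and $\gtG = \gtRecVar$ are ruled out by the hypothesis $\gtG \neq \gtEnd$ and by $\gtG$ being closed (projectability presupposes closedness).

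For a transmission the sender is always live, since the syntax admits a crash annotation only on the receiver. The constraints $I \neq \emptyset$ and $\setcomp{\gtLab[i]}{i\in I} \neq \setenum{\gtCrashLab}$ guarantee some $j \in I$ with $\gtLab[j] \neq \gtCrashLab$. If the receiver is live ($\roleQ$), rule \inferrule{\iruleGtMoveOut} fires with label $\stEnvOutAnnot{\roleP}{\roleQ}{\stChoice{\gtLab[j]}{\tyGround[j]}}$, producing the corresponding en-route type; if the receiver is crashed ($\roleQCrashed$), rule \inferrule{\iruleGtMoveOrph} fires instead, orphaning the message and stepping to $\gtGi[j]$. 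Either way a move exists, so this case is routine.

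For a transmission en route the receiver $\roleQ$ is live, and I would split on the selected label. When $\gtLab[j] \neq \gtCrashLab$, rule \inferrule{\iruleGtMoveIn} applies directly (its side condition is precisely $\gtLab[j] \neq \gtCrashLab$), consuming the message and stepping to $\gtGi[j]$ irrespective of whether the sender carries a crash annotation. When $\gtLab[j] = \gtCrashLab$ — i.e.\ a pseudo-message is in transit — I want to apply \inferrule{\iruleGtMoveCrDe}, whose premise demands a crashed sender $\rolePCrashed$.

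The hard part will be establishing exactly this last fact: that a pseudo-message ($\gtLab[j] = \gtCrashLab$) is en route only when its sender is crashed. This does not follow from well-annotatedness alone, since \ref{item:wa:crash-annot-crash} constrains $\gtRolesCrashed{\gtG}$, which by the convention on communications in transit does not even record the top-level sender of an en-route prefix. Instead I would establish it as a structural invariant of reachable runtime types: the only introduction forms for a crash-labelled en route are \inferrule{\iruleGtMoveOut}, which forbids a crash label for a live sender, and the sender-removal clause of role removal (\cref{def:gtype:remove-role}), which attaches $\roleCrashedSym$ to the sender exactly when it emits the $\gtCrashLab$ index; all remaining reduction and context rules preserve both the selected index $j$ and the sender's annotation. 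Packaging this as a predicate preserved under $\gtMove{\rolesR}$ — in the spirit of \cref{lem:no-revival-roles} and of the preservation of well-annotatedness — discharges the subcase via \inferrule{\iruleGtMoveCrDe} and completes the proof, the point being that without it a crash-labelled en route feeding a \emph{reliable} receiver would be genuinely stuck, so ruling out that shape is where the real content of the lemma lies.
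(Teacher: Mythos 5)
Your proposal follows the same decomposition as the paper's own proof: collapse leading recursion via \cref{lem:gt-lts-unfold}, then case-split on the head prefix, firing \inferrule{\iruleGtMoveOut} (live receiver) or \inferrule{\iruleGtMoveOrph} (crashed receiver) on some non-$\gtCrashLab$ label --- which exists because singleton $\gtCrashLab$ branches are syntactically excluded --- and \inferrule{\iruleGtMoveIn} or \inferrule{\iruleGtMoveCrDe} on en-route prefixes. The only divergence is in the final subcase, and it is to your credit: where you isolate the ``hard part'' (that a pseudo-message en route forces a crash-annotated sender), the paper's proof simply asserts ``if $\gtLab[j] = \gtCrashLab$, then $\roleP$ must have crashed'' and applies \inferrule{\iruleGtMoveCrDe}, with no justification. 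Your observation that well-annotatedness cannot supply this fact is accurate, since $\gtRolesCrashed{\cdot}$ deliberately excludes the top-level sender annotation of an en-route prefix, and your proposed remedy --- a reachability invariant preserved under $\gtMove{\rolesR}$, whose only introduction forms are \inferrule{\iruleGtMoveOut} (which forbids the crash label) and sender removal in \cref{def:gtype:remove-role} (which attaches $\roleCrashedSym$) --- is exactly the argument the paper leaves implicit. Note only that, once made explicit, this invariant is a genuine additional hypothesis: as literally stated the lemma admits pathological instances, e.g.\ a crash-labelled en-route prefix with an unannotated sender and a reliable receiver, which are projectable, well-annotated, and stuck; so your version --- progress for types satisfying the invariant, i.e.\ for reachable runtime types --- is the statement that is actually true and actually used, whereas the paper's proof quietly assumes it.
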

\begin{proof}
  By \cref{lem:gt-lts-unfold}, we only consider unfoldings.

  If $\unfoldOne{\gtG} = \gtEnd$, the premise does not hold.

  Case $
    \unfoldOne{\gtG} =
    \gtComm{\roleP}{\roleQ}{i \in I}{\gtLab[i]}{\tyGround[i]}{\gtG[i]}
  $, apply \inferrule{\iruleGtMoveOut}.
  We can
  pick any $\gtLab[i] \neq \gtCrashLab$ to reduce the global type (note that
  our syntax prohibits singleton $\gtCrashLab$ branches).

  Case $
    \unfoldOne{\gtG} =
    \gtComm{\roleP}{\roleQCrashed}{i \in I}{\gtLab[i]}{\tyGround[i]}{\gtG[i]}
  $, apply \inferrule{\iruleGtMoveOrph}.
  We can
  pick any $\gtLab[i] \neq \gtCrashLab$ to reduce the global type (note that
  our syntax prohibits singleton $\gtCrashLab$ branches).

  Case $
    \unfoldOne{\gtG} =
    \gtCommTransit{\rolePMaybeCrashed}{\roleQ}{i \in I}{\gtLab[i]}{\tyGround[i]}{\gtG[i]}{j}
  $.
  If $\gtLab[j] = \gtCrashLab$, then it $\roleP$ must have crashed, apply \inferrule{\iruleGtMoveCrDe}.
  Otherwise, apply \inferrule{\iruleGtMoveIn}.
\end{proof}

\subsection{Semantics of Configurations}

\begin{lemma}\label{lem:stenv-red:trivial-2}
  If \; $\stEnv; \qEnv \stEnvMoveMaybeCrash \stEnvi; \qEnvi$ \; with \;
  $\stEnv; \qEnv
  \stEnvMoveGenAnnot \stEnvi; \qEnvi$, \; then \;
  \begin{enumerate}
    \item $\dom{\stEnv} = \dom{\stEnvi}$; \; and \;
    \item for all \; $\roleP \in \dom{\stEnv}$ \; with \;
      $\roleP \neq \ltsSubject{\stEnvAnnotGenericSym}$, \; we have \;
      $\stEnvApp{\stEnv}{\roleP} =
      \stEnvApp{\stEnvi}{\roleP}$.
  \end{enumerate}
\end{lemma}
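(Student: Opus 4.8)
The plan is to prove both claims simultaneously by rule induction on the derivation of the labelled transition $\stEnv; \qEnv \stEnvMoveGenAnnot \stEnvi; \qEnvi$ that witnesses $\stEnv; \qEnv \stEnvMoveMaybeCrash \stEnvi; \qEnvi$, equivalently by case analysis on the last rule of \cref{def:mpst-env-reduction} applied from \cref{fig:gtype:tc-red-rules}. The guiding observation, which makes this essentially bookkeeping, is that every rule rewrites exactly one entry of the typing context, and that entry belongs precisely to the subject $\ltsSubject{\stEnvAnnotGenericSym}$ of the emitted label as fixed by \cref{def:mpst-label-subject}. So in each case I would exhibit $\stEnvi$ as a single point update $\stEnvUpd{\stEnv}{\roleR}{-}$ with $\roleR = \ltsSubject{\stEnvAnnotGenericSym} \in \dom{\stEnv}$, from which claim (1) follows because a point update at a role already in the domain preserves the domain, and claim (2) follows because such an update leaves every other entry intact.

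First I would dispatch the four non-recursive rules directly. In $\inferrule{\iruleTCtxOut}$ and $\inferrule{\iruleTCtxIn}$ the acting role $\roleP$ is the subject and $\stEnvi = \stEnvUpd{\stEnv}{\roleP}{\stT[k]}$, with $\roleP \in \dom{\stEnv}$ guaranteed by the premise $\stEnvApp{\stEnv}{\roleP} = \cdots$; the partner role $\roleQ$ and all other roles are untouched in $\stEnv$ (only the relevant queue in $\qEnv$ changes). In $\inferrule{\iruleTCtxCrash}$ the subject $\roleP$ is remapped to $\stStop$, and in $\inferrule{\iruleTCtxCrashDetect}$ the detecting role $\roleQ$—which is the subject of $\ltsCrDe{\mpS}{\roleQ}{\roleP}$—is remapped to $\stT[k]$ while the crashed role $\roleP$ keeps its $\stStop$ entry. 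In all four cases $\stEnvi = \stEnvUpd{\stEnv}{\roleR}{-}$ with $\roleR = \ltsSubject{\stEnvAnnotGenericSym} \in \dom{\stEnv}$, so (1) and (2) are immediate from the definition of context update.

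The remaining case $\inferrule{\iruleTCtxRec}$ is the only one needing care, and is where I expect the single genuine subtlety to lie. Here $\stEnvApp{\stEnv}{\roleR} = \stRec{\stRecVar}{\stT}$ and the premise is a transition out of $\stEnvUpd{\stEnv}{\roleR}{\stT\subst{\stRecVar}{\stRec{\stRecVar}{\stT}}}$ carrying the same label $\stEnvAnnotGenericSym$; I would apply the induction hypothesis to this premise. Claim (1) is then immediate, since the unfolding is itself realised as a point update at $\roleR \in \dom{\stEnv}$ and hence does not change the domain. For claim (2), the entries at roles distinct from both $\roleR$ and $\ltsSubject{\stEnvAnnotGenericSym}$ transfer verbatim from the induction hypothesis; the delicate point is the entry at $\roleR$ when $\roleR \neq \ltsSubject{\stEnvAnnotGenericSym}$ (the recursion is unfolded yet some other role acts), where the induction hypothesis yields $\stEnvApp{\stEnvi}{\roleR} = \unfoldOne{\stRec{\stRecVar}{\stT}}$ rather than $\stRec{\stRecVar}{\stT}$ itself. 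This is reconciled by the standard identification of a contractive recursive type with its unfolding, under which $\stEnvApp{\stEnv}{\roleR} = \stRec{\stRecVar}{\stT} = \unfoldOne{\stRec{\stRecVar}{\stT}} = \stEnvApp{\stEnvi}{\roleR}$, closing the case and completing the induction.
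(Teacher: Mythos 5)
Your proof is correct and takes essentially the same route as the paper: the paper's entire proof of this lemma reads ``Trivial by induction on reductions of configuration,'' and your case analysis is exactly that induction spelled out, with each non-recursive rule exhibited as a point update at the label's subject. Your handling of the \inferrule{\iruleTCtxRec} case is also the right (indeed necessary) reading: identifying a contractive recursive type with its unfolding is what makes claim (2) -- and, for the same reason, the paper's determinism lemma (\cref{lem:stenv-red:det}) -- hold when a recursion at a non-subject role is unfolded, a subtlety the paper's one-line proof silently glosses over.
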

\begin{proof}
Trivial by induction on reductions of configuration. \qedhere
\end{proof}

\begin{lemma}\label{lem:stenv-red:trivial-3}
 If \,$\stEnv; \qEnv \stEnvMoveGenAnnot \stEnvi; \qEnvi$, \,then\,
 for any $\roleP, \roleQ \in \dom{\stEnv}$ with
 $\ltsSubject{\stEnvAnnotGenericSym} \notin \setenum{\roleP, \roleQ}$, we have
 $\stEnvApp{\qEnv}{\roleP, \roleQ} = \stEnvApp{\qEnvi}{\roleP, \roleQ}$.
\end{lemma}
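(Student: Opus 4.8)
The plan is to proceed by induction on the derivation of $\stEnv; \qEnv \stEnvMoveGenAnnot \stEnvi; \qEnvi$ (\cref{def:mpst-env-reduction}), inspecting each rule of \cref{fig:gtype:tc-red-rules}. The single fact I would establish for every non-inductive rule is a \emph{locality} property: every peer-to-peer queue that the rule modifies carries the subject $\ltsSubject{\stEnvAnnotGenericSym}$ as one of its two endpoints. Granting this, any queue $\stEnvApp{\qEnv}{\roleP, \roleQ}$ with $\ltsSubject{\stEnvAnnotGenericSym} \notin \setenum{\roleP, \roleQ}$ is necessarily left untouched, which is exactly the claim.

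Checking the base rules is routine. In $\inferrule{\iruleTCtxOut}$ the subject is the sender, and the only updated queue is the one from the sender to the chosen recipient, whose first endpoint is the subject. In $\inferrule{\iruleTCtxIn}$ the subject is the receiver, and the only updated queue is the one from the chosen partner to the receiver, whose second endpoint is the subject. In $\inferrule{\iruleTCtxCrashDetect}$ the queue environment is preserved verbatim ($\qEnvi = \qEnv$), so there is nothing to check. The only rule needing a moment's attention is $\inferrule{\iruleTCtxCrash}$, which in a single step makes all incoming queues of the crashing role $\roleP$ unavailable through the bulk update $\stEnvUpd{\qEnv}{\cdot, \roleP}{\stQUnavail}$; nonetheless, each queue it rewrites has the crashing role — which is the subject — as its receiving endpoint, so a pair avoiding the subject on both coordinates is again spared. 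This is the one place where excluding the subject from \emph{both} $\roleP$ and $\roleQ$ (rather than from just one) is genuinely exploited.

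The inductive rule $\inferrule{\iruleTCtxRec}$ carries the same transition label, hence the same subject, as its premise, and its conclusion uses the very same $\qEnv$ and $\qEnvi$ as the premise transition; the result therefore follows immediately from the induction hypothesis. I do not expect any real obstacle here: the statement is a bookkeeping observation about the locality of queue updates, and the only mild subtlety is recognising that the crash rule alters many queues simultaneously, yet always along the subject role's reception side, so the two-sided exclusion in the hypothesis suffices to cover it.
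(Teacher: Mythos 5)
Your proof is correct and is exactly the argument the paper compresses into its one-line proof (``Trivial by induction on reductions of configuration''): a rule-by-rule check that every queue a transition modifies has the subject $\ltsSubject{\stEnvAnnotGenericSym}$ as one of its two endpoints, with \inferrule{\iruleTCtxRec} handled by the induction hypothesis since it preserves both the label and the queue environments. One small imprecision in your commentary: the two-sided exclusion is not ``genuinely exploited'' only in \inferrule{\iruleTCtxCrash} (there the modified queues all have the subject as their \emph{receiving} endpoint, so excluding the subject from the second coordinate alone would suffice); rather, it is the combination of \inferrule{\iruleTCtxOut} (subject as sender) with \inferrule{\iruleTCtxIn} and \inferrule{\iruleTCtxCrash} (subject as receiver) that forces the hypothesis to exclude the subject from both coordinates.
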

\begin{proof}
Trivial by induction on reductions of configuration.
\end{proof}

\begin{lemma}[Inversion of Typing Context Reduction]\label{lem:stenv-red:inversion-basic}
  ~
  \begin{enumerate}
    \item
      If \,$
        \stEnv; \qEnv
        \stEnvMoveOutAnnot{\roleP}{\roleQ}{\stChoice{\stLab[k]}{\tyGround[k]}}
        \stEnvi; \qEnvi
      $\,, then \,
      $
        \unfoldOne{\stEnvApp{\stEnv}{\roleP}} =
        \stIntSum{\roleQ}{i \in I}{\stChoice{\stLab[i]}{\tyGround[i]} \stSeq \stT[i]}%
      $\,, \,$k \in I$\,, and \,
      $
        \stEnvApp{\stEnvi}{\roleP} = \stT[k]
      $;
    \item
      If \,$
        \stEnv; \qEnv
        \stEnvMoveInAnnot{\roleQ}{\roleP}{\stChoice{\stLab[k]}{\tyGround[k]}}
        \stEnvi; \qEnvi
      $\,, then \,
      $
        \unfoldOne{\stEnvApp{\stEnv}{\roleQ}} =
        \stExtSum{\roleP}{i \in I}{\stChoice{\stLab[i]}{\tyGround[i]} \stSeq \stT[i]}%
      $\,, \,$k \in I$\,, and \,
      $
        \stEnvApp{\stEnvi}{\roleQ} = \stT[k]
      $.
  \end{enumerate}
\end{lemma}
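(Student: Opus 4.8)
The plan is to prove both parts simultaneously by induction on the derivation of the configuration transition $\stEnv; \qEnv \stEnvMoveGenAnnot \stEnvi; \qEnvi$ (\cref{def:mpst-env-reduction}, with rules in \cref{fig:gtype:tc-red-rules}), where the case analysis is driven by the \emph{shape of the transition label}. The first thing I would establish is that the label form pins down which rules can occur as the last step: an output label $\stEnvOutAnnotSmall{\roleP}{\roleQ}{\stChoice{\stLab[k]}{\tyGround[k]}}$ can only be concluded by \inferrule{\iruleTCtxOut} or \inferrule{\iruleTCtxRec}, because \inferrule{\iruleTCtxIn} emits an input label, while \inferrule{\iruleTCtxCrash} and \inferrule{\iruleTCtxCrashDetect} emit labels of the distinct forms $\ltsCrash{\mpS}{\roleP}$ and $\ltsCrDe{\mpS}{\roleP}{\roleQ}$ respectively. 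Dually, an input label $\stEnvInAnnotSmall{\roleQ}{\roleP}{\stChoice{\stLab[k]}{\tyGround[k]}}$ can only arise from \inferrule{\iruleTCtxIn} or \inferrule{\iruleTCtxRec}.

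For part (1), the base case would be \inferrule{\iruleTCtxOut}: reading off its premises gives directly $\stEnvApp{\stEnv}{\roleP} = \stIntSum{\roleQ}{i \in I}{\stChoice{\stLab[i]}{\tyGround[i]} \stSeq \stT[i]}$ with $k \in I$, and its conclusion sets $\stEnvApp{\stEnvi}{\roleP} = \stT[k]$; since an internal choice is not headed by $\mu$, we have $\unfoldOne{\stEnvApp{\stEnv}{\roleP}} = \stEnvApp{\stEnv}{\roleP}$, so all three conclusions hold. The inductive case would be \inferrule{\iruleTCtxRec}, where $\stEnvApp{\stEnv}{\roleP} = \stRec{\stRecVar}{\stT}$ and the premise is a strictly shorter derivation $\stEnvUpd{\stEnv}{\roleP}{\stT\subst{\stRecVar}{\stRec{\stRecVar}{\stT}}}; \qEnv \stEnvMoveOutAnnot{\roleP}{\roleQ}{\stChoice{\stLab[k]}{\tyGround[k]}} \stEnvi; \qEnvi$ carrying the same label and the same target $\stEnvi; \qEnvi$. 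Applying the induction hypothesis to this premise yields that $\unfoldOne{\stT\subst{\stRecVar}{\stRec{\stRecVar}{\stT}}}$ is the required internal choice, that $k \in I$, and that $\stEnvApp{\stEnvi}{\roleP} = \stT[k]$; I would then close the case with the identity $\unfoldOne{\stRec{\stRecVar}{\stT}} = \unfoldOne{\stT\subst{\stRecVar}{\stRec{\stRecVar}{\stT}}}$ coming from the definition of (full) unfolding. Part (2) is entirely symmetric, replacing internal choice by external choice and \inferrule{\iruleTCtxOut} by \inferrule{\iruleTCtxIn}.

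Since this is essentially routine rule inversion, I do not expect a deep obstacle; the one place requiring care is the \inferrule{\iruleTCtxRec} case, where I must relate $\unfoldOne{\stEnvApp{\stEnv}{\roleP}}$ to the once-unfolded body. This step leans on the fact that $\unfoldOne{\cdot}$ denotes the \emph{full} unfolding that strips all leading $\mu$-binders — well-defined because recursive local types are contractive — so that a single syntactic unfolding step commutes with it. A secondary bookkeeping observation worth spelling out is that in \inferrule{\iruleTCtxRec} the rule rewrites only the $\roleP$-entry of the source context, leaving the target $\stEnvi$ identical in premise and conclusion, so the induction-hypothesis statement about $\stEnvApp{\stEnvi}{\roleP}$ transfers verbatim.
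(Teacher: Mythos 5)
Your proof is correct and is essentially the paper's own argument made explicit: the paper disposes of this lemma in one line (``by applying and inverting \inferrule{\iruleTCtxOut} and \inferrule{\iruleTCtxIn}''), which is precisely your induction with rule inversion, the $\unfoldOne{\cdot}$ in the statement being there to absorb the \inferrule{\iruleTCtxRec} cases exactly as you describe. One small nit: \inferrule{\iruleTCtxRec} as written may unfold the recursive entry of a role \emph{other} than the subject of the label, but that subcase is immediate, since the subject's entry is then unchanged between premise and conclusion and your induction hypothesis transfers verbatim.
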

\begin{proof}
By applying and inverting \inferrule{\iruleTCtxOut} and \inferrule{\iruleTCtxIn}.
\end{proof}

\begin{lemma}[Determinism of Configuration Reduction]\label{lem:stenv-red:det}
  If \,$\stEnv; \qEnv \stEnvMoveGenAnnot \stEnvi; \qEnvi$ \,and\,
  $\stEnv; \qEnv \stEnvMoveGenAnnot
  \stEnvii; \qEnvii$, then $\stEnvi = \stEnvii$ and $\qEnvi = \qEnvii$.
\end{lemma}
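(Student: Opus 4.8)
The plan is to argue by case analysis on the shared transition label $\stEnvAnnotGenericSym$, whose shape is fixed by \cref{def:mpst-env-reduction-label} to be one of four forms, since both given reductions use the same relation $\stEnvMoveGenAnnot = \stEnvMoveAnnot{\stEnvAnnotGenericSym}$ and hence the same label. Write $\roleP = \ltsSubject{\stEnvAnnotGenericSym}$ for the common subject of the two transitions. Both \cref{lem:stenv-red:trivial-2,lem:stenv-red:trivial-3} apply to each reduction, so $\stEnvi$ and $\stEnvii$ already coincide with $\stEnv$ — and therefore with each other — on every role other than $\roleP$, and $\qEnvi$, $\qEnvii$ coincide with $\qEnv$ on every queue $\stEnvApp{\qEnv}{\roleS, \roleT}$ with $\roleP \notin \setenum{\roleS, \roleT}$. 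It thus suffices, in each case, to show that $\stEnvApp{\stEnvi}{\roleP} = \stEnvApp{\stEnvii}{\roleP}$ and that the queue(s) incident to $\roleP$ rewritten by the applicable rule agree in the two derivations. Here the crucial observation is that each label form is emitted by exactly one base rule in \cref{fig:gtype:tc-red-rules}, so there is no ambiguity about which rule fired.

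For $\stEnvAnnotGenericSym = \stEnvOutAnnotSmall{\roleP}{\roleQ}{\stChoice{\stLab}{\tyGround}}$ I would invoke the first clause of \cref{lem:stenv-red:inversion-basic} on both reductions: it yields $\unfoldOne{\stEnvApp{\stEnv}{\roleP}} = \stIntSum{\roleQ}{i \in I}{\stChoice{\stLab[i]}{\tyGround[i]} \stSeq \stT[i]}$ together with a branch $k \in I$ matching the label, and since labels within a choice are pairwise distinct this $k$ is unique, forcing $\stEnvApp{\stEnvi}{\roleP} = \stT[k] = \stEnvApp{\stEnvii}{\roleP}$. Inverting the unique applicable rule \inferrule{\iruleTCtxOut} pins down the whole queue update: only $\stEnvApp{\qEnv}{\roleP, \roleQ}$ changes, to the label-determined value $\stQCons{\stEnvApp{\qEnv}{\roleP, \roleQ}}{\stQMsg{\stLab[k]}{\tyGround[k]}}$. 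The case $\stEnvAnnotGenericSym = \stEnvInAnnotSmall{\roleP}{\roleQ}{\stChoice{\stLab}{\tyGround}}$ is symmetric via the second clause of \cref{lem:stenv-red:inversion-basic}: the continuation $\stT[k]$ is pinned down as before, and \inferrule{\iruleTCtxIn} consumes exactly the fixed head message of $\stEnvApp{\qEnv}{\roleQ, \roleP}$, so the rewritten queue is identical in both derivations.

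For the crash label $\ltsCrash{\mpS}{\roleP}$, only \inferrule{\iruleTCtxCrash} can emit it; inverting it in both derivations forces $\stEnvApp{\stEnvi}{\roleP} = \stStop = \stEnvApp{\stEnvii}{\roleP}$ and the label-determined update $\stEnvUpd{\qEnv}{\cdot, \roleP}{\stQUnavail}$ in each. For the detection label $\ltsCrDe{\mpS}{\roleP}{\roleQ}$, only \inferrule{\iruleTCtxCrashDetect} applies; inversion gives $\stEnvApp{\stEnv}{\roleP} = \stExtSum{\roleQ}{i \in I}{\stChoice{\stLab[i]}{\tyGround[i]} \stSeq \stT[i]}$, $\stEnvApp{\stEnv}{\roleQ} = \stStop$, and the branch $k$ with $\stLab[k] = \stCrashLab$, which is unique because the syntax admits at most one $\stCrashLab$ branch. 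Hence both derivations set $\stEnvApp{\stEnvi}{\roleP} = \stT[k]$ and leave the queue environment unchanged.

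The one genuinely delicate point is the unfolding rule \inferrule{\iruleTCtxRec}: a role with a recursive type can be unfolded before a base rule fires, so a priori the derivation — and thus the reduct — might not be unique. I would dissolve this exactly through the lemmas already cited. \cref{lem:stenv-red:trivial-2} guarantees that every reduction leaves all non-subject roles untouched, so no stray unfolding of another role can survive into $\stEnvi$ or $\stEnvii$; and \cref{lem:stenv-red:inversion-basic} is stated modulo $\unfoldOne{\cdot}$ (which, by the appendix's definition, fully unwinds leading recursions), so it reports the unique choice structure reached after unfolding $\roleP$, making $\stEnvApp{\stEnvi}{\roleP}$ independent of how the recursion was unwound. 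With the subject component and the rewritten queues fixed in every case, and all remaining components frozen by \cref{lem:stenv-red:trivial-2,lem:stenv-red:trivial-3}, we conclude $\stEnvi = \stEnvii$ and $\qEnvi = \qEnvii$.
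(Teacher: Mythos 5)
Your proof is correct, but it is organized quite differently from the paper's, whose entire argument is the single line ``Trivial by induction on reductions of configuration.'' The paper thus intends a direct induction on the reduction derivation, with \inferrule{\iruleTCtxRec} as the inductive case and the four base rules as base cases; you instead perform no induction of your own: you freeze every non-subject component via \cref{lem:stenv-red:trivial-2,lem:stenv-red:trivial-3}, and determine the subject's new entry and the rewritten queues via \cref{lem:stenv-red:inversion-basic}, whose formulation through $\unfoldOne{\cdot}$ absorbs the recursion bookkeeping. Your route makes explicit exactly what the paper's one-liner hides: determinism rests on labels within a choice being pairwise distinct (so the selected branch index is unique), on $\unfoldOne{\cdot}$ being a function, and on each label form being emitted by exactly one base rule. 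Since all three cited lemmas precede this one in the paper, your citations are non-circular. The one caveat worth recording is that your resolution of the \inferrule{\iruleTCtxRec} subtlety does not eliminate it but relocates it into \cref{lem:stenv-red:trivial-2}: read literally, that rule permits unfolding a role \emph{other} than the subject, which would leave that entry unfolded in one reduct and folded in the other, falsifying both that lemma and the present one; your proof, like the paper's, implicitly relies on the convention that unfolding is applied only to the transition's subject.
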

\begin{proof}
Trivial by induction on reductions of configuration.
\end{proof}

\begin{lemma}\label{lem:stenv-queue-red:det}
 If \,$\stEnv[1]; \qEnv \stEnvMoveGenAnnot \stEnvi[1]; \qEnvi[1]$ \,and\,
  $\stEnv[2]; \qEnv \stEnvMoveGenAnnot
  \stEnvi[2]; \qEnvi[2]$, then $\qEnvi[1]= \qEnvi[2]$.
\end{lemma}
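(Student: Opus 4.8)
The plan is to establish a mild strengthening by induction on the derivation of a configuration transition: for any reduction $\stEnv; \qEnv \stEnvMoveGenAnnot \stEnvi; \qEnvi$, the resulting queue environment $\qEnvi$ is determined entirely by the label $\stEnvAnnotGenericSym$ and the input queue environment $\qEnv$, with no dependence on the typing context $\stEnv$. Granting this, the lemma is immediate: $\stEnv[1]; \qEnv$ and $\stEnv[2]; \qEnv$ reduce under the same label $\stEnvAnnotGenericSym$ from the same $\qEnv$, so $\qEnvi[1] = \qEnvi[2]$.

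First I would dispatch rule \inferrule{\iruleTCtxRec}, whose only action is to unfold a recursive entry of the typing context while forwarding $\qEnv$ unchanged into its premise; the inductive hypothesis applied to that premise then yields the claim. This leaves the four base rules of \cref{fig:gtype:tc-red-rules}, and for each I would read the queue effect directly off the label. For \inferrule{\iruleTCtxOut} the label $\stEnvOutAnnotSmall{\roleP}{\roleQ}{\stChoice{\stLab[k]}{\tyGround[k]}}$ fixes the sender, receiver, and message, hence the append $\stEnvUpd{\qEnv}{\roleP, \roleQ}{\stQCons{\stEnvApp{\qEnv}{\roleP, \roleQ}}{\stQMsg{\stLab[k]}{\tyGround[k]}}}$ is fixed; for \inferrule{\iruleTCtxCrash} the label $\ltsCrash{\mpS}{\roleP}$ fixes the crashing role, hence the update $\stEnvUpd{\qEnv}{\cdot, \roleP}{\stQUnavail}$ is fixed; and for \inferrule{\iruleTCtxCrashDetect} the queue environment is simply left unchanged.

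The one case that warrants attention is \inferrule{\iruleTCtxIn}: although the continuation $\stT[k]$ chosen for $\roleP$---and therefore the resulting typing context---does depend on $\stEnvApp{\stEnv}{\roleP}$, the queue update does not. The label $\stEnvInAnnotSmall{\roleP}{\roleQ}{\stChoice{\stLab[k]}{\tyGround[k]}}$ pins down the head message $\stQMsg{\stLab[k]}{\tyGround[k]}$ that is consumed, and since the starting queue $\stEnvApp{\qEnv}{\roleQ, \roleP}$ is fixed, its tail $\stQi$ is uniquely determined; thus the update $\stEnvUpd{\qEnv}{\roleQ, \roleP}{\stQi}$ depends on $\stEnvAnnotGenericSym$ and $\qEnv$ only. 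This is precisely the queue-only specialisation of the determinism reasoning behind \cref{lem:stenv-red:det}, and I expect it to be the main (if modest) obstacle, since it is the sole point at which the context-independence of the queue component must be separated from the context-dependence of the type component.
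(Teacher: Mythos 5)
Your proposal is correct and follows essentially the same route as the paper, whose entire proof is ``trivial by induction on reductions of configuration'': you carry out exactly that induction, with the natural strengthening that the resulting queue environment is a function of the label $\stEnvAnnotGenericSym$ and the input $\qEnv$ alone. Your treatment of \inferrule{\iruleTCtxIn} (the tail $\stQi$ is determined by $\qEnv$ and the label, even though the continuation $\stT[k]$ depends on $\stEnv$) is the right observation and is precisely what makes the induction go through.
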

\begin{proof}
Trivial by induction on reductions of configuration.
\end{proof}

\subsection{Relating Semantics}%
\label{sec:proof:relating}
\begin{proposition}\label{prop:gt-lts-unfold-once}
  $\stEnvAssoc{
    \gtWithCrashedRoles{\rolesC}{\gtRec{\gtRecVar}{\gtG}}
  }{\stEnv; \qEnv}{\rolesR}$
  \,if and only if\,
 $\stEnvAssoc{
    \gtWithCrashedRoles{\rolesC}{\gtG{}\subst{\gtRecVar}{\gtRec{\gtRecVar}{\gtG}}}
    }{\stEnv; \qEnv}{\rolesR}$.
\end{proposition}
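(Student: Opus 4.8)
The plan is to prove both directions of the biconditional by carefully unfolding \Cref{def:assoc} (Association of Global Types and Configurations) and observing that the defining clauses of association are insensitive to the distinction between $\gtRec{\gtRecVar}{\gtG}$ and its one-step unfolding $\gtG\subst{\gtRecVar}{\gtRec{\gtRecVar}{\gtG}}$. The key observation driving the entire argument is that the projection operator and the relevant role-set operators commute with unfolding. Specifically, the definition of projection (\Cref{def:global-proj}) gives $\gtProj[\rolesR]{(\gtRec{\gtRecVar}{\gtG})}{\roleP} = \stRec{\stRecVar}{(\gtProj[\rolesR]{\gtG}{\roleP})}$ (when $\roleP \in \gtG$ or the free-variable condition holds), and by \Cref{lem:unfold-subtyping} we have $\gtProj[\rolesR]{(\gtRec{\gtRecVar}{\gtG})}{\roleP} \stSub \unfoldOne{\gtProj[\rolesR]{(\gtRec{\gtRecVar}{\gtG})}{\roleP}}$ and conversely, so a local type $\stEnvApp{\stEnv}{\roleP}$ subtypes the projection of $\gtRec{\gtRecVar}{\gtG}$ exactly when it subtypes the projection of the unfolding, using transitivity of subtyping (\Cref{transitive-subtyping}).

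First I would verify \cref{item:assoc:stenv}. The three sub-contexts $\stEnv[\gtG], \stEnv[\stStopSym], \stEnv[\stEnd]$ are constrained using $\gtRoles{\gtG}$, $\rolesC$, and the projections. For the crash-stop and end sub-contexts (\ref{item:assoc:crash-stop}, \ref{item:assoc:end}), the conditions mention neither the global type's syntactic form nor its projection, so they are trivially unchanged by unfolding. For the alive sub-context (\ref{item:assoc:alive-sub}), I would invoke \Cref{def:active_crashed_roles}, which defines $\gtRoles{\gtRec{\gtRecVar}{\gtG}} = \gtRoles{\gtG\subst{\gtRecVar}{\gtRec{\gtRecVar}{\gtG}}}$ directly, so the domain condition $\dom{\stEnv[\gtG]} = \gtRoles{\gtG}$ is preserved. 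For the subtyping condition $\stEnvApp{\stEnv}{\roleP} \stSub \gtProj[\rolesR]{\gtG}{\roleP}$, I would use the commutation-with-unfolding fact above together with transitivity of $\stSub$.

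Next I would handle \cref{item:assoc:qenv}. Here the relevant clause is the one distinguishing the cases $\gtG = \gtEnd$ or $\gtG = \gtRec{\gtRecVar}{\gtGi}$ (which forces all non-crashed-receiver queues to be empty). Both $\gtRec{\gtRecVar}{\gtG}$ and its unfolding fall into a case where queues are governed identically: I would observe that $\gtRec{\gtRecVar}{\gtG}$ itself matches the recursion case of the queue-association clause, while its unfolding is either again a recursion (matching the same clause) or a transmission/transmission-en-route prefix. The subtle point is ensuring the unfolding does not expose an en-route transmission with a non-pseudo message that would demand a nonempty queue; but since a well-guarded recursive global type only unfolds to a (non-en-route) transmission prefix or a further recursion at top level — en-route constructs being strictly runtime and never produced by substitution into a design-time recursion — the empty-queue requirement is consistent on both sides. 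I would make this precise by case analysis on $\unfoldOne{\gtG}$.

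The main obstacle I anticipate is the interaction between the subtyping-modulo-unfolding step and the side conditions on projection, particularly the free-variable/role-membership guard in the recursion case of \Cref{def:global-proj} that can collapse the projection to $\stEnd$. I would need to check that the guard $\roleP \in \gtG$ (or $\fv{\gtRec{\gtRecVar}{\gtG}} \neq \emptyset$) evaluates consistently for $\gtRec{\gtRecVar}{\gtG}$ and for its unfolding — which again follows from $\gtRoles{\gtRec{\gtRecVar}{\gtG}} = \gtRoles{\gtG\subst{\gtRecVar}{\gtRec{\gtRecVar}{\gtG}}}$ in \Cref{def:active_crashed_roles} — so that a role projecting to $\stEnd$ on one side also projects to $\stEnd$ on the other. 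Once this consistency is established, both directions of the biconditional follow symmetrically, since every defining clause of association has been shown invariant under the single unfolding step.
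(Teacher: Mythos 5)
Your proposal is correct and takes essentially the same route as the paper, whose entire proof of this proposition reads ``By \cref{lem:unfold-subtyping}'' --- i.e., precisely the subtyping-modulo-unfolding argument you give for clause (A1), with the invariance of the remaining association clauses left implicit. Your plan is that one-line proof written out in full; your explicit treatment of the queue clause and of the en-route-under-$\gtFmt{\mu}$ subtlety is, if anything, more careful than the paper itself.
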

\begin{proof}
  By \cref{lem:unfold-subtyping}.
\end{proof}

\begin{proposition}\label{prop:lt-lts-unfold-once}
  If\;
  $\stEnvAssoc{
    \gtWithCrashedRoles{\rolesC}{\gtG}
  }{\stEnv; \qEnv}{\rolesR}$
  and
  $\stEnvApp{\stEnv}{\roleP} = \stRec{\stRecVar}{\stT}$,
  \,then\,
  $\stEnvAssoc{
    \gtWithCrashedRoles{\rolesC}{\gtG}
  }{
      \stEnvUpd{\stEnv}{\roleP}{
        \stT\subst{\stRecVar}{\stRec{\stRecVar}{\stT}}%
      }
  ; \qEnv}{\rolesR}$.
\end{proposition}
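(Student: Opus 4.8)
The plan is to read this as a pure bookkeeping statement: the update $\stEnvUpd{\stEnv}{\roleP}{\stT\subst{\stRecVar}{\stRec{\stRecVar}{\stT}}}$ only rewrites a single recursive entry into its one-step unfolding, and everything else in \cref{def:assoc} is left untouched. First I would observe that the pair $\gtWithCrashedRoles{\rolesC}{\gtG}$ and the queue environment $\qEnv$ are both unchanged by the update. Since clause~\ref{item:assoc:qenv} of \cref{def:assoc} constrains only $\gtG$, $\rolesC$, and $\qEnv$, it holds for the updated configuration verbatim, so no work is needed for the queues.

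It then remains to re-establish clause~\ref{item:assoc:stenv}. I would reuse the same partition $\stEnv = \stEnv[\gtG] \stEnvComp \stEnv[\stStopSym] \stEnvComp \stEnv[\stEnd]$ furnished by the hypothesis, and show that $\roleP$ stays in the alive component. Because $\stEnvApp{\stEnv}{\roleP} = \stRec{\stRecVar}{\stT}$ is a contractive recursive type, it is neither $\stEnd$ nor $\stStop$; hence by~\ref{item:assoc:crash-stop} and~\ref{item:assoc:end} the role $\roleP$ cannot lie in $\dom{\stEnv[\stStopSym]}$ or $\dom{\stEnv[\stEnd]}$, so it must belong to $\dom{\stEnv[\gtG]} = \gtRoles{\gtG}$. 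I would therefore take the new partition to keep $\stEnv[\stStopSym]$ and $\stEnv[\stEnd]$ fixed and replace the alive component by $\stEnvUpd{\stEnv[\gtG]}{\roleP}{\stT\subst{\stRecVar}{\stRec{\stRecVar}{\stT}}}$. Since the update alters no domains and no other entry, conditions~\ref{item:assoc:crash-stop} and~\ref{item:assoc:end} carry over immediately, as does the domain requirement $\dom{\stEnv[\gtG]} = \gtRoles{\gtG}$ in~\ref{item:assoc:alive-sub}.

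The only genuine step is the subtyping bound in~\ref{item:assoc:alive-sub} for the rewritten entry. From the hypothesis I have $\stEnvApp{\stEnv}{\roleP} = \stRec{\stRecVar}{\stT} \stSub \gtProj[\rolesR]{\gtG}{\roleP}$, while the new entry is the one-step unfolding $\stT\subst{\stRecVar}{\stRec{\stRecVar}{\stT}} = \unfoldOne{\stRec{\stRecVar}{\stT}}$. By \cref{lem:unfold-subtyping} we have $\unfoldOne{\stRec{\stRecVar}{\stT}} \stSub \stRec{\stRecVar}{\stT}$, and chaining this with the hypothesis via transitivity of subtyping (\cref{transitive-subtyping}) gives $\unfoldOne{\stRec{\stRecVar}{\stT}} \stSub \gtProj[\rolesR]{\gtG}{\roleP}$, which is exactly the required bound. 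This discharges every clause of \cref{def:assoc} for the updated configuration.

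I do not expect a real obstacle: the statement is precisely the lemma whose purpose is to license replacing a recursive entry by its unfolding, and the subtyping machinery (\cref{lem:unfold-subtyping} together with \cref{transitive-subtyping}) is tailor-made for it. The one point needing a little care is justifying that $\roleP$ remains classified in the $\stEnv[\gtG]$ component rather than being confused with a terminated or crashed role; this is settled cleanly by the contractivity assumption on recursive types, which guarantees $\stRec{\stRecVar}{\stT} \neq \stEnd$ and $\stRec{\stRecVar}{\stT} \neq \stStop$.
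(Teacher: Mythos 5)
Your proposal is correct and follows essentially the same route as the paper: the paper's entire proof is the one-line citation ``By \cref{lem:unfold-subtyping}'', i.e.\ precisely your key step that the unfolded entry remains a subtype of the projection via \cref{lem:unfold-subtyping} and transitivity, with all other association clauses unaffected. Your additional bookkeeping (that $\roleP$ must lie in the $\stEnv[\gtG]$ component and that the queue clause is untouched) is exactly what the paper leaves implicit.
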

\begin{proof}
  By \cref{lem:unfold-subtyping}.
\end{proof}

\begin{proposition}\label{prop:gt-lts-unfold}
  $\stEnvAssoc{\gtWithCrashedRoles{\rolesC}{\gtG}}{\stEnv; \qEnv}{\rolesR}$
  \,if and only if\,
  $\stEnvAssoc{\gtWithCrashedRoles{\rolesC}{\unfoldOne{\gtG}}}{\stEnv; \qEnv}{\rolesR}$.
\end{proposition}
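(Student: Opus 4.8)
The plan is to reduce \cref{prop:gt-lts-unfold} to its single-step counterpart, \cref{prop:gt-lts-unfold-once}, by iterating it. Recall that $\unfoldOne{\cdot}$ repeatedly unfolds leading recursion binders until a non-recursive head is exposed: $\unfoldOne{\gtRec{\gtRecVar}{\gtG}} = \unfoldOne{\gtG\subst{\gtRecVar}{\gtRec{\gtRecVar}{\gtG}}}$, and $\unfoldOne{\gtG} = \gtG$ otherwise. Since recursive global types are required to be guarded (contractive), this computation terminates after finitely many unfolding steps; I would take that number of steps, written $n(\gtG)$, as a well-founded measure and proceed by induction on it.

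For the base case $n(\gtG) = 0$, the type $\gtG$ is not of the form $\gtRec{\gtRecVar}{\gtGi}$, so $\unfoldOne{\gtG} = \gtG$, and both sides of the biconditional are literally the same association; the statement then holds trivially.

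For the inductive case, suppose $\gtG = \gtRec{\gtRecVar}{\gtGi}$, so that $\unfoldOne{\gtG} = \unfoldOne{\gtGi\subst{\gtRecVar}{\gtRec{\gtRecVar}{\gtGi}}}$ and $n(\gtGi\subst{\gtRecVar}{\gtRec{\gtRecVar}{\gtGi}}) = n(\gtG) - 1$. By \cref{prop:gt-lts-unfold-once}, the association $\stEnvAssoc{\gtWithCrashedRoles{\rolesC}{\gtRec{\gtRecVar}{\gtGi}}}{\stEnv; \qEnv}{\rolesR}$ holds if and only if $\stEnvAssoc{\gtWithCrashedRoles{\rolesC}{\gtGi\subst{\gtRecVar}{\gtRec{\gtRecVar}{\gtGi}}}}{\stEnv; \qEnv}{\rolesR}$ holds. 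Applying the inductive hypothesis to $\gtGi\subst{\gtRecVar}{\gtRec{\gtRecVar}{\gtGi}}$, whose measure is strictly smaller, the latter is in turn equivalent to $\stEnvAssoc{\gtWithCrashedRoles{\rolesC}{\unfoldOne{\gtGi\subst{\gtRecVar}{\gtRec{\gtRecVar}{\gtGi}}}}}{\stEnv; \qEnv}{\rolesR}$, which is exactly $\stEnvAssoc{\gtWithCrashedRoles{\rolesC}{\unfoldOne{\gtG}}}{\stEnv; \qEnv}{\rolesR}$. Chaining the two equivalences closes the case.

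The only delicate point is justifying that the induction is well-founded, namely that $\unfoldOne{\gtG}$ is reached in finitely many steps so that $n(\gtG)$ is well-defined. This is precisely the guardedness/contractiveness assumption on recursive types recalled at the start of this appendix, which rules out degenerate types such as $\gtRec{\gtRecVar}{\gtRecVar}$. Beyond setting up this measure, the argument needs no further facts about the structure of the association relation (\cref{def:assoc}): it is a routine iteration of \cref{prop:gt-lts-unfold-once}, so I expect no substantive obstacle.
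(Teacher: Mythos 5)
Your proposal is correct and follows exactly the paper's own argument: the paper proves this proposition "by applying \cref{prop:gt-lts-unfold-once} as many times as necessary," which is precisely your iteration, made rigorous by an explicit induction on the number of unfolding steps with guardedness ensuring the measure is well-founded. Your writeup is simply a more detailed rendering of the same one-line proof.
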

\begin{proof}
  By applying \cref{prop:gt-lts-unfold-once} as many times as necessary.
\end{proof}

\begin{lemma}[Inversion of Projection]\label{lem:inv-proj}
  Given a local type $\stS$, which is a subtype of projection from a global type
  $\gtG$ on a role $\roleP$ with respect to a set of reliable roles $\rolesR$,
  \ie $\stS \stSub (\gtProj[\rolesR]{\gtG}{\roleP})$,
  then:
  \begin{enumerate}
    \item
      \label{item:proj-inv:send}
      If\;
      $\unfoldOne{\stS}
      =
      \stIntSum{\roleQ}{i \in I}{\stChoice{\stLab[i]}{\tyGround[i]} \stSeq
        \stSi[i]}$, then either
        \begin{enumerate}
        \item
          $\unfoldOne{\gtG} =
            \gtComm{\roleP}{\roleQMaybeCrashed}{i \in I'}{\gtLab[i]}{\tyGroundi[i]}{\gtG[i]}$,
          where
          $I \subseteq I'$, and
          for all $i \in I$:
          $\stLab[i] = \gtLab[i]$,
          $\stSi[i] \stSub (\gtProj[\rolesR]{\gtG[i]}{\roleP})$,
          and
          $\tyGround[i] = \tyGroundi[i]$;
          or,
        \item
          $\unfoldOne{\gtG} =
            \gtComm{\roleS}{\roleTMaybeCrashed}{j \in J}{\gtLab[j]}{\tyGroundi[j]}{\gtG[j]}$,
          or
          $\unfoldOne{\gtG} =
            \gtCommTransit{\roleSMaybeCrashed}{\roleT}{j \in
            J}{\gtLab[j]}{\tyGroundi[j]}{\gtG[j]}{k}$,
          where for all $j \in J$:
          $\stS \stSub (\gtProj[\rolesR]{\gtG[j]}{\roleP})$,
          with $\roleP \neq \roleS$ and $\roleP \neq \roleT$.
        \end{enumerate}
    \item
      \label{item:proj-inv:recv}
      If\;
      $\unfoldOne{\stS}
      =
      \stExtSum{\roleQ}{i \in I}{\stChoice{\stLab[i]}{\tyGround[i]} \stSeq
        \stSi[i]}$, then either
      \begin{enumerate}
        \item\label{item:proj-inv:recv-equal}
          $\unfoldOne{\gtG} =
            \gtComm{\roleQ}{\rolePMaybeCrashed}{i \in I'}{\gtLab[i]}{\tyGroundi[i]}{\gtG[i]}$,
          or
          $\unfoldOne{\gtG} =
            \gtCommTransit{\roleQMaybeCrashed}{\roleP}{i \in
            I'}{\gtLab[i]}{\tyGroundi[i]}{\gtG[i]}{j}$,
          where
          $I' \subseteq I$, and
          for all $i \in I'$:
          $\stLab[i] = \gtLab[i]$,
          $\stSi[i] \stSub (\gtProj[\rolesR]{\gtG[i]}{\roleP})$,
          $\tyGroundi[i] = \tyGround[i]$,
          and $\roleQ \notin \rolesR$ implies $\exists k \in I': \gtLab[k] =
          \gtCrashLab$;
          or,
        \item
          $\unfoldOne{\gtG} =
            \gtComm{\roleS}{\roleTMaybeCrashed}{j \in J}{\gtLab[j]}{\tyGroundi[j]}{\gtG[j]}$,
          or
          $\unfoldOne{\gtG} =
            \gtCommTransit{\roleSMaybeCrashed}{\roleT}{j \in
            J}{\gtLab[j]}{\tyGroundi[j]}{\gtG[j]}{k}$,
          where for all $j \in J$:
          $\stS \stSub (\gtProj[\rolesR]{\gtG[j]}{\roleP})$,
          with $\roleP \neq \roleS$ and $\roleP \neq \roleT$.
      \end{enumerate}
    \item
      \label{item:proj-inv:end}
      If\;
      $\stS = \stEnd$,
      then $\roleP \notin \gtRoles{\gtG}$.
  \end{enumerate}
\end{lemma}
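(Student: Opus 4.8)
The plan is to prove this Inversion of Projection lemma by structural induction on the global type $\gtG$, after first reducing to the case where $\gtG$ is unfolded. Since all three clauses are stated in terms of $\unfoldOne{\gtG}$, I would begin by invoking \cref{prop:gt-lts-unfold} (or the definition of unfolding directly) so that I may assume $\gtG$ is not itself a top-level recursion $\gtRec{\gtRecVar}{\gtGi}$; the recursion case is absorbed by unfolding. The core engine of the proof will be the \emph{Inversion of Subtyping} lemma (\cref{lem:subtyping-invert}), which tells me that whenever $\unfoldOne{\stS}$ is an internal (resp.\ external) choice with partner $\roleQ$, then $\stS \stSub \gtProj[\rolesR]{\gtG}{\roleP}$ forces $\unfoldOne{\gtProj[\rolesR]{\gtG}{\roleP}}$ to be a matching internal (resp.\ external) choice with the appropriate inclusion of index sets and pointwise subtyping of continuations.

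First I would handle clause~\ref{item:proj-inv:send}. Assuming $\unfoldOne{\stS} = \stIntSum{\roleQ}{i \in I}{\stChoice{\stLab[i]}{\tyGround[i]} \stSeq \stSi[i]}$, I apply \cref{lem:subtyping-invert}(1) to deduce that $\unfoldOne{\gtProj[\rolesR]{\gtG}{\roleP}}$ is itself an internal choice toward $\roleQ$, with index set containing $I$ and matching labels/payloads. Then I would inspect the projection definition (\cref{def:global-proj}) case by case on the shape of $\unfoldOne{\gtG}$: the only way projection yields an internal choice is the first clause of projection, namely when $\roleP$ is the \emph{sender} ($\roleP = \roleQ_{\text{gt}}$) of a transmission prefix $\gtComm{\roleP}{\roleQMaybeCrashed}{\ldots}$ — this gives subcase (a), noting that projection strips the $\gtCrashLab$ branch so the index sets line up as $I \subseteq I'$. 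Alternatively, if $\roleP$ is neither sender nor receiver of the prefix, projection produces a \emph{merge} $\stMerge{j \in J}{\gtProj[\rolesR]{\gtG[j]}{\roleP}}$; here I would use \cref{lem:merge-subtyping} (the merge is a subtype of each component) together with transitivity of subtyping (\cref{transitive-subtyping}) to push $\stS$ into each continuation, yielding subcase (b). For a transmission-en-route prefix where $\roleP$ is the sender, projection takes $\gtProj[\rolesR]{\gtG[j]}{\roleP}$ directly, which also falls under (b) after noting the en-route sender case. Clause~\ref{item:proj-inv:recv} is entirely dual: I use \cref{lem:subtyping-invert}(2) instead, and the critical new ingredient is that the \emph{receiver} projection clause of \cref{def:global-proj} carries the side condition ``$\roleQ \notin \rolesR$ implies $\exists k: \gtLab[k] = \gtCrashLab$'', which is precisely what subcase~\ref{item:proj-inv:recv-equal} must reproduce; I would read this condition straight off the projection definition. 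The external-choice inclusion flips to $I' \subseteq I$ as dictated by \inferrule{\iruleStSubIn}.

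Clause~\ref{item:proj-inv:end} is the cleanest: if $\stS = \stEnd$, then by contraposition of \cref{lem:in-roles-not-end} — which states that a live role in a non-recursive global type projects to something other than $\stEnd$ — I conclude $\roleP \notin \gtRoles{\unfoldOne{\gtG}} = \gtRoles{\gtG}$. I should be slightly careful that $\stS = \stEnd$ (as opposed to merely $\stS \stSub \stEnd$, which by \inferrule{\iruleStSubEnd} is the same thing), and that the subtyping hypothesis $\stEnd \stSub \gtProj[\rolesR]{\gtG}{\roleP}$ forces the projection to unfold to $\stEnd$, so the lemma applies.

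The main obstacle I anticipate is the \emph{merge} subcases (b) in clauses~\ref{item:proj-inv:send} and~\ref{item:proj-inv:recv}. Establishing ``for all $j \in J$: $\stS \stSub \gtProj[\rolesR]{\gtG[j]}{\roleP}$'' requires combining $\stS \stSub \stMerge{j \in J}{\gtProj[\rolesR]{\gtG[j]}{\roleP}}$ with \cref{lem:merge-subtyping} (each component dominates the merge) and transitivity — but I must verify that the merge is actually \emph{defined} on the relevant continuations (so that \cref{lem:merge-subtyping} is applicable), and that the subtyping transitivity bookkeeping across the arbitrary index set $J$ goes through uniformly. A secondary subtlety is keeping the reliability side condition intact through the receiver case: I need the projection's ``$\roleQ \notin \rolesR \Rightarrow \exists k: \gtLab[k] = \gtCrashLab$'' requirement to survive the passage through \cref{lem:subtyping-invert}, which it does because \inferrule{\iruleStSubIn} only \emph{adds} branches to the subtype and forbids dropping a $\gtCrashLab$ branch present in the supertype, but this interplay is where I would be most careful.
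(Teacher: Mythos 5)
Your overall plan---unfold, then case-split on the shape of $\unfoldOne{\gtG}$ following the clauses of \cref{def:global-proj}, using subtyping inversion in the sender/receiver cases, \cref{lem:merge-subtyping} plus transitivity (\cref{transitive-subtyping}) in the uninvolved-role case, and \cref{lem:in-roles-not-end} for clause~3---is exactly the content that the paper compresses into its one-line proof (``By the definition of global type projection''), so structurally you and the paper agree. However, two of your steps do not go through as literally written. First, you apply \cref{lem:subtyping-invert} in the reverse direction: that lemma assumes the \emph{supertype} is an internal (resp.\ external) choice and concludes the shape of the \emph{subtype's} unfolding, whereas you use the known shape of $\unfoldOne{\stS}$ (the subtype) to ``force'' the shape of $\unfoldOne{\gtProj[\rolesR]{\gtG}{\roleP}}$ (the supertype). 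That converse happens to be true for this subtyping relation, but it is not what \cref{lem:subtyping-invert} states, so as described the step is an invalid citation. The repair is already implicit in your own plan: do the case analysis on $\unfoldOne{\gtG}$ \emph{first}; in each projection case the supertype's shape is then known, \cref{lem:subtyping-invert} applies in its stated direction, and mismatched cases (\eg $\roleP$ is the receiver while $\unfoldOne{\stS}$ is an internal choice) are discharged by contradiction rather than by a dual inversion lemma the paper does not have.

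Second, your claim that the transmission-en-route prefix with $\roleP$ as sender ``falls under (b)'' is false. If $\unfoldOne{\gtG} = \gtCommTransit{\rolePMaybeCrashed}{\roleT}{j \in J}{\gtLab[j]}{\tyGround[j]}{\gtG[j]}{k}$, then by \cref{def:global-proj} we get $\gtProj[\rolesR]{\gtG}{\roleP} = \gtProj[\rolesR]{\gtG[k]}{\roleP}$ for the single en-route index $k$ only, whereas disjunct (b) demands $\stS \stSub \gtProj[\rolesR]{\gtG[j]}{\roleP}$ for \emph{all} $j \in J$ and, moreover, carries the explicit side condition $\roleP \neq \roleS$, which this case violates (here $\roleP$ \emph{is} the en-route sender). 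Disjunct (a) does not apply either, since it requires the non-en-route form $\gtComm{\roleP}{\roleQMaybeCrashed}{i \in I'}{\gtLab[i]}{\tyGroundi[i]}{\gtG[i]}$. So neither disjunct of clause~1 covers this case, and such global types are reachable at runtime (any $\gtFmt{\roleP \to \roleT}$ prefix whose continuation has $\roleP$ sending again produces one after a \inferrule{\iruleGtMoveOut} step). This is really an omission in the lemma statement itself, which the paper's stub proof never confronts; a faithful write-up must either add the missing disjunct (concluding $\stS \stSub \gtProj[\rolesR]{\gtG[k]}{\roleP}$ for the en-route index $k$) or argue explicitly why the case is excluded---it cannot simply be absorbed into (b) as you propose.
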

\begin{proof}
By the definition of global type projection (\cref{def:global-proj}). 
\end{proof}

\begin{lemma}[Existence of Crash Handling Branch Under Projection]\label{lem:crash-lab-exists}
  If role $\roleQ$ is unreliable, $\roleQ \notin \rolesR$, and a global type
  projects onto $\roleP$ with an external choice from $\roleQ$,
  $ \unfoldOne{\gtProj[\rolesR]{\gtG}{\roleP}} =
    \stExtSum{\roleQ}{i \in I}{\stChoice{\stLab[i]}{\stS[i]} \stSeq \stSi[i]}
  $, then there must be a crash handling branch,
  $\exists j \in I : \stLab[j] = \stCrashLab$.
\end{lemma}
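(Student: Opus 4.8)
The plan is to reduce the claim to the receiver clause of projection in \cref{def:global-proj}, where the crash-branch requirement is imposed \emph{by definition}, and then to show that every other way of producing an external choice from $\roleQ$ merely propagates that branch. Concretely, I would set $\stS := \gtProj[\rolesR]{\gtG}{\roleP}$; by reflexivity of subtyping (\cref{lem:reflexive-subtyping}) we have $\stS \stSub \gtProj[\rolesR]{\gtG}{\roleP}$, and by hypothesis $\unfoldOne{\stS} = \stExtSum{\roleQ}{i \in I}{\stChoice{\stLab[i]}{\stS[i]} \stSeq \stSi[i]}$. This puts us in position to invoke the inversion of projection, \cref{lem:inv-proj}\,(\ref{item:proj-inv:recv}), and proceed by induction on $\gtG$.

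First I would treat the base case, which is sub-case (\ref{item:proj-inv:recv-equal}) of that lemma: here $\unfoldOne{\gtG}$ is a transmission, or a transmission en route, with $\roleQ$ as sender and $\roleP$ as receiver, with $I' \subseteq I$, matching labels, and---crucially---the side condition that $\roleQ \notin \rolesR$ implies $\exists k \in I' : \gtLab[k] = \gtCrashLab$. Since $\roleQ$ is unreliable by assumption, this yields some $k \in I'$ with $\gtLab[k] = \gtCrashLab$; as $I' \subseteq I$ and the labels coincide, we obtain $k \in I$ with $\stLab[k] = \stCrashLab$, which is precisely the required crash handling branch. This is the only case that produces the branch from scratch, and it does so entirely through the projection definition's own requirement.

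The remaining case is sub-case (b), where $\roleP$ is neither sender nor receiver of the top prefix of $\unfoldOne{\gtG}$ and $\stS \stSub \gtProj[\rolesR]{\gtG[i]}{\roleP}$ for every continuation. Here I would recurse, re-applying \cref{lem:inv-proj}\,(\ref{item:proj-inv:recv}) to a continuation $\gtG[i]$ with the \emph{same} $\stS$ and the same unfolded external choice, so that the index set $I$ carrying the eventual crash branch is preserved throughout. Read through \cref{def:global-proj}, this is the merge clause $\stMerge{i \in I'}{\gtProj[\rolesR]{\gtG[i]}{\roleP}}$: it can only unfold to an external choice from $\roleQ$ if each operand does, and because the merge operator takes the \emph{union} of the operands' label sets, a crash branch furnished by the induction hypothesis on any operand survives into the merge.

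The main obstacle is ensuring this recursion is well founded in the presence of recursive global types, since $\unfoldOne{\gtG}$ may be larger than $\gtG$. I would discharge this using contractiveness: because recursion is guarded, every free occurrence of a recursion variable inside a projection $\gtProj[\rolesR]{\gtGi}{\roleP}$ lies under a prefix, so unfolding $\stRec{\stRecVar}{(\gtProj[\rolesR]{\gtGi}{\roleP})}$ preserves the outermost external-choice structure and, in particular, its top-level label set. Hence the recursion case collapses to the induction hypothesis on the structurally smaller body $\gtGi$ with an identical label set, and the bystander-prefix steps strictly decrease the finite distance to $\roleP$'s first reception---which must exist, since $\unfoldOne{\stS}$ being an external choice forces $\roleP$ to be active and to receive from $\roleQ$. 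The fiddliest point to verify is the interaction of merging with recursion, where several operands are themselves recursive: there one argues, by the same guardedness observation, that $\unfoldOne{}$ of the merged recursion has label set equal to the union of the operands' unfolded label sets, each of which already contains the crash branch by the induction hypothesis.
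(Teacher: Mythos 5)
Your proposal is correct and follows essentially the same route as the paper, whose entire proof is the single line ``by induction on \cref{item:proj-inv:recv} of \cref{lem:inv-proj}'': you instantiate that inversion lemma with $\stS = \gtProj[\rolesR]{\gtG}{\roleP}$ (via reflexivity of subtyping), extract the crash branch from the side condition of sub-case \ref{item:proj-inv:recv-equal}, and recurse through the bystander sub-case. The extra material you supply on well-foundedness under contractive recursion and on merging is exactly the detail the paper leaves implicit, and it is sound.
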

\begin{proof}
  By induction on \cref{item:proj-inv:recv} of \cref{lem:inv-proj}.
\end{proof}

\begin{lemma}
\label{lem:eventual-global-form}
If a local type $\stT$ is a subtype of an external choice with a matching role, obtained
via projection from a global type $\gtG$, i.e.,
$\stT = \stExtSum{\roleP}{i \in I}{\stChoice{\stLab[i]}{\tyGround[i]} \stSeq \stT[i]}
\stSub
\gtProj[\rolesR]{\gtG}{\roleQ}$, $\unfoldOne{\gtG}$ is of the form
$ \gtComm{\roleS}{\roleTMaybeCrashed}{j \in J}{\gtLab[j]}{\tyGroundi[j]}{\gtG[j]}$ or
$  \gtCommTransit{\roleSMaybeCrashed}{\roleT}{j \in
            J}{\gtLab[j]}{\tyGroundi[j]}{\gtG[j]}{k}$, and a queue environment $\qEnv$ is associated
 with $\gtWithCrashedRoles{\rolesC}{\gtG}$, then there exists a global type $\gtWithCrashedRoles{\rolesCi}{\gtGi}$ and a queue environment $\qEnvi$ such that
 $\gtProj[\rolesR]{\gtG}{\roleQ} \stSub
 \gtProj[\rolesR]{\gtGi}{\roleQ}$, $\unfoldOne{\gtGi}$ is of the form
 $ \gtCommTransit{\rolePMaybeCrashed}{\roleQ}{i \in
            I'}{\gtLab[i]}{\tyGroundi[i]}{\gtG[i]}{j}$ or
 $ \gtComm{\roleP}{\roleQMaybeCrashed}{i \in I'}{\gtLab[i]}{\tyGroundi[i]}{\gtG[i]}$, and $\qEnvi$ is
 associated with $\gtWithCrashedRoles{\rolesCi}{\gtGi}$ with $\stEnvApp{\qEnvi}{\roleP, \roleQ} = \stEnvApp{\qEnv}{\roleP, \roleQ}$.
\end{lemma}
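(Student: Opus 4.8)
The plan is to proceed by well-founded induction on $\gtG$, taking as measure the number of communication prefixes that separate the root of $\unfoldOne{\gtG}$ from the first communication in which $\roleQ$ participates. I would work throughout with $\unfoldOne{\gtG}$, so that recursion is absorbed by \cref{prop:gt-lts-unfold} and \cref{lem:unfold-subtyping} (which let me transport both association and the subtyping goal across an unfolding). The case analysis is driven by \cref{item:proj-inv:recv} of \cref{lem:inv-proj}, applied to $\stT \stSub \gtProj[\rolesR]{\gtG}{\roleQ}$: since $\stT$ is an external choice receiving from $\roleP$, its inversion tells me that $\unfoldOne{\gtG}$ is either (a) itself a communication from $\roleP$ to $\roleQ$, or (b) a communication between two roles $\roleS,\roleT$ with $\roleQ \notin \setenum{\roleS,\roleT}$, in which case $\stT \stSub \gtProj[\rolesR]{\gtG[j]}{\roleQ}$ for every continuation $\gtG[j]$. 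The hypothesis that $\unfoldOne{\gtG}$ is a communication rules out \cref{item:proj-inv:end} (the $\stEnd$ case), and the fact that $\stT$ is an external choice from $\roleP$ forces any involvement of $\roleQ$ in the head communication to be a reception from $\roleP$ (a sending role would give an internal choice, and a reception from a different sender an external choice from the wrong role, both incompatible with the subtyping).

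For the base case I am in sub-case (a), so $\unfoldOne{\gtG}$ is already a transmission or an en-route communication from $\roleP$ to $\roleQ$ --- exactly the required output shape. I would then set $\gtGi \eqdef \gtG$, $\rolesCi \eqdef \rolesC$, and $\qEnvi \eqdef \qEnv$: the subtyping $\gtProj[\rolesR]{\gtG}{\roleQ} \stSub \gtProj[\rolesR]{\gtGi}{\roleQ}$ is reflexivity (\cref{lem:reflexive-subtyping}), the association $\qEnvi$ with $\gtWithCrashedRoles{\rolesCi}{\gtGi}$ is the hypothesis, and $\stEnvApp{\qEnvi}{\roleP,\roleQ} = \stEnvApp{\qEnv}{\roleP,\roleQ}$ holds trivially.

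For the inductive step I am in sub-case (b), where $\gtProj[\rolesR]{\gtG}{\roleQ} = \stMerge{j \in J}{\gtProj[\rolesR]{\gtG[j]}{\roleQ}}$ and $\stT \stSub \gtProj[\rolesR]{\gtG[j]}{\roleQ}$ for all $j$. I pick the branch $j$ lying on the path toward the first $\roleP\to\roleQ$ communication and read off the queue environment from the queue clauses of \cref{def:assoc}: if $\unfoldOne{\gtG}$ is a transmission (or carries an en-route pseudo-message), the same $\qEnv$ is already associated with $\gtWithCrashedRoles{\rolesC}{\gtG[j]}$, so I keep $\qEnv$; if a genuine message is en route from $\roleS$ to $\roleT$, I strip the head of $\stEnvApp{\qEnv}{\roleS,\roleT}$, and because $\roleQ \neq \roleT$ this leaves $\stEnvApp{\qEnv}{\roleP,\roleQ}$ untouched. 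Either way the resulting queue environment is associated with $\gtWithCrashedRoles{\rolesC}{\gtG[j]}$ and agrees with $\qEnv$ on $(\roleP,\roleQ)$. The induction hypothesis applies to $\gtG[j]$ --- whose unfolding is again a communication, since $\gtProj[\rolesR]{\gtG[j]}{\roleQ} \neq \stEnd$ (otherwise \cref{lem:not-in-roles-end} gives $\roleQ \notin \gtRoles{\gtG[j]}$, contradicting $\stT \stSub \gtProj[\rolesR]{\gtG[j]}{\roleQ}$ with $\stT$ an external choice) --- and returns $\gtWithCrashedRoles{\rolesCi}{\gtGi}$ and $\qEnvi$. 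I then close the subtyping goal by chaining $\gtProj[\rolesR]{\gtG}{\roleQ} \stSub \gtProj[\rolesR]{\gtG[j]}{\roleQ}$ (\cref{lem:merge-subtyping}) with the hypothesis output through transitivity (\cref{transitive-subtyping}), and compose the two queue equalities to get $\stEnvApp{\qEnvi}{\roleP,\roleQ} = \stEnvApp{\qEnv}{\roleP,\roleQ}$.

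The main obstacle is justifying that this induction is well-founded, i.e.\ that descending through branch continuations reaches a $\roleP\to\roleQ$ communication in finitely many steps rather than looping forever through a recursion back-edge. The decisive ingredient is \emph{projectability}: at each branching communication between $\roleS$ and $\roleT$ with $\roleQ$ uninvolved, the full merge $\stMerge{j}{\gtProj[\rolesR]{\gtG[j]}{\roleQ}}$ must be \emph{defined}, and by \cref{def:local-type-merge} the merge of local types exists only on structurally compatible arguments --- in particular a recursion variable $\stRecVar$ merges solely with itself and $\stEnd$ solely with $\stEnd$. Hence, for the merge to be an external choice receiving from $\roleP$ (as forced by $\stT \stSub \gtProj[\rolesR]{\gtG}{\roleQ}$), every branch projection onto $\roleQ$ must already be such an external choice, so no branch can reach a back-edge (nor terminate) before $\roleQ$ receives from $\roleP$. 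This bounds, uniformly along all paths, the distance to the first $\roleP\to\roleQ$ communication, yielding the strictly decreasing measure that the induction requires.
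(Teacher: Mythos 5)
Your proposal is correct and takes essentially the same route as the paper's proof: both proceed by repeatedly applying \cref{item:proj-inv:recv} of \cref{lem:inv-proj}, chain the projections together with \cref{lem:merge-subtyping} and transitivity of subtyping, and construct the intermediate queue environment by the same case analysis (plain transmission, en-route $\gtCrashLab$ pseudo-message, or en-route real message whose head is stripped from the $(\roleS,\roleT)$ queue, which leaves the $(\roleP,\roleQ)$ queue untouched since $\roleQ \notin \setenum{\roleS,\roleT}$). The differences are presentational rather than substantive: the paper states the induction tersely as ``on \cref{item:proj-inv:recv} of \cref{lem:inv-proj}'' and leaves both the immediate base case and the well-foundedness of the descent implicit, whereas you spell these out explicitly (the latter via merge-definedness forcing every branch projection to reach a reception from $\roleP$ before any recursion back-edge or $\stEnd$), which tightens, but does not change, the argument.
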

\begin{proof}
 By induction on \cref{item:proj-inv:recv} of \cref{lem:inv-proj}.

 Apply \cref{item:proj-inv:recv} of \cref{lem:inv-proj} on the premise, we have $\forall j \in J: \stT =
 \stExtSum{\roleP}{i \in I}{\stChoice{\stLab[i]}{\tyGround[i]} \stSeq \stT[i]}
\stSub
\gtProj[\rolesR]{\gtG[j]}{\roleQ}$ with $\roleQ \neq \roleS$ and $\roleQ \neq \roleT$, which follows that
$\gtProj[\rolesR]{\gtG}{\roleQ} = \stMerge{j \in J}{\gtProj[\rolesR]{\gtG[j]}{\roleQ}}$.
Then, by \cref{lem:merge-subtyping}, we get $\forall j \in J: \gtProj[\rolesR]{\gtG}{\roleQ} \stSub
\gtProj[\rolesR]{\gtG[j]}{\roleQ}$.
 We take an arbitrary $\gtG[j]$ with $j \in J$. By applying \cref{item:proj-inv:recv} of \cref{lem:inv-proj} again,
we have two cases.
\begin{itemize}[leftmargin=*]
\item Case (1):
$\unfoldOne{\gtG[j]} = \gtCommTransit{\rolePMaybeCrashed}{\roleQ}{i \in
            I'}{\gtLab[i]}{\tyGroundi[i]}{\gtG[i]}{l}$ or
$\unfoldOne{\gtG[j]} =  \gtComm{\roleP}{\roleQMaybeCrashed}{i \in I'}{\gtLab[i]}{\tyGroundi[i]}{\gtG[i]}$.

Since $\gtProj[\rolesR]{\gtG}{\roleQ} \stSub \gtProj[\rolesR]{\gtG[j]}{\roleQ} $,
we only need to show that there exists $\qEnvi$ such that
$\qEnvi$ is associated with $\gtWithCrashedRoles{\rolesC}{\gtG[j]}$ and
$\stEnvApp{\qEnvi}{%
      \roleP,  \roleQ%
      }  = \stEnvApp{\qEnv}{%
     \roleP,   \roleQ%
      }$. We consider two subcases:
      \begin{itemize}[leftmargin=*]
      \item $\unfoldOne{\gtG} =
      \gtComm{\roleS}{\roleTMaybeCrashed}{j \in J}{\gtLab[j]}{\tyGroundi[j]}{\gtG[j]}$: %
      by \cref{def:assoc-queue}, we have that  $\qEnv$ is associated with $\gtG[j]$. We take $\qEnvi$ as $\qEnv$.
      \item $\unfoldOne{\gtG} =  \gtCommTransit{\roleSMaybeCrashed}{\roleT}{j \in
            J}{\gtLab[j]}{\tyGroundi[j]}{\gtG[j]}{k}$, we perform case analysis on $\gtLab[k] = \gtCrashLab$ and
            $\gtLab[k] \neq \gtCrashLab$:
            \begin{itemize}[leftmargin=*]
            \item $\gtLab[k] = \gtCrashLab$: by \cref{def:assoc-queue}, we have that  $\qEnv$ is associated with $\gtWithCrashedRoles{\rolesC}{\gtG[j]}$.
            We take $\qEnvi$ as $\qEnv$.
            \item $\gtLab[k] \neq \gtCrashLab$: by \cref{def:assoc-queue}, we have that  $\stEnvApp{\qEnv}{\roleS, \roleT} =
        \stQCons{\stQMsg{\gtLab[k]}{\tyGroundi[k]}}{\stQ}$ and
        $\stEnvUpd{\qEnv}{\roleS, \roleT}{\stQ}$ is associated with 
        $\gtWithCrashedRoles{\rolesC}{\gtG[j]}$. We take $\qEnvi = \stEnvUpd{\qEnv}{\roleS, \roleT}{\stQ}$. Since $\roleQ \neq \roleS$
        and $\roleQ \neq \roleT$, it is straightforward that $\stEnvApp{\qEnvi}{\roleP, \roleQ} = \stEnvApp{\qEnv}{\roleP, \roleQ}$, as required.
            \end{itemize}
       \end{itemize}
\item Case (2): $\unfoldOne{\gtG[j]} =  \gtComm{\roleR}{\roleUMaybeCrashed}{l \in L}{\gtLab[l]}{\tyGroundi[l]}{\gtG[l]}$ or
$  \gtCommTransit{\roleRMaybeCrashed}{\roleU}{l \in
            L}{\gtLab[l]}{\tyGroundi[l]}{\gtG[l]}{k}$.

            We can construct a queue environment $\qEnvi$ as in case (1), which is associated with $ \gtWithCrashedRoles{\rolesC}{\gtG[j]}$.
            The thesis is then proved by applying inductive hypothesis on $\gtWithCrashedRoles{\rolesC}{\gtG[j]}$ and $\qEnvi$.
\qedhere
\end{itemize}
\end{proof}

\begin{lemma}\label{lem:queue-assoc-crash}
  If $\qEnv$ is associated with $\gtWithCrashedRoles{\rolesC}{\gtG}$ and
  $\roleP \in \gtRoles{\gtG}$, then
  $\stEnvUpd{\qEnv}{\cdot, \roleP}{\stQUnavail}$ is associated with
  $\gtWithCrashedRoles{\rolesC \cup
  \setenum{\roleP}}{\gtCrashRole{\gtG}{\roleP}}$.
\end{lemma}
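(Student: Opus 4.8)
The plan is to prove the statement by induction on the structure of the global type $\gtG$, following the case analysis of the role-removal operator $\gtCrashRole{\gtG}{\roleP}$ in \cref{def:gtype:remove-role} and of the queue-association conditions (a)--(d) in \cref{def:assoc-queue}. To make the induction hypothesis usable on continuations, I would first generalise the statement to drop the requirement $\roleP \in \gtRoles{\gtG}$, adopting the convention $\gtCrashRole{\gtG}{\roleP} = \gtG$ whenever $\roleP \notin \gtRoles{\gtG}$; the stated lemma is then the instance with $\roleP \in \gtRoles{\gtG}$. Throughout I write $\qEnvi = \stEnvUpd{\qEnv}{\cdot, \roleP}{\stQUnavail}$ and $\rolesCi = \rolesC \cup \setenum{\roleP}$, and may assume $\roleP \notin \rolesC$ (a role active in $\gtG$ cannot already be recorded as crashed, by well-annotatedness together with clause (a) of \cref{def:assoc-queue}). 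To avoid clashing with the lemma's $\roleP$, I describe a generic head transmission as having sender $\roleQ$ and receiver $\roleR$ (possibly crashed), with continuations $\gtGi[i]$.

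The first step discharges clause (a) (unavailability iff crashed) uniformly, independently of the case: for $\roleP$ itself, $\stEnvApp{\qEnvi}{\cdot, \roleP} = \stQUnavail$ holds by construction and $\roleP \in \rolesCi$; for any $\roleQ \neq \roleP$, the update leaves $\roleQ$'s incoming queues untouched, so $\stEnvApp{\qEnvi}{\cdot, \roleQ} = \stEnvApp{\qEnv}{\cdot, \roleQ}$ and $\roleQ \in \rolesCi \iff \roleQ \in \rolesC$, whence the equivalence follows from clause (a) for $\qEnv$. The remaining clauses (b)--(d) are then checked by case analysis. The routine cases are those where removal preserves the head prefix (i.e.\ $\roleQ \neq \roleP \neq \roleR$): here $\gtGi$ has the same head as $\gtG$, the relevant head queue $\stEnvApp{\qEnv}{\roleQ, \roleR}$ is disjoint from the updated queues (as $\roleR \neq \roleP$), so $\qEnvi$ agrees with $\qEnv$ on it, and the continuations are discharged by the induction hypothesis applied to each $\gtGi[i]$. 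For recursion $\gtG = \gtRec{\gtRecVar}{\gtGi}$, both $\gtG$ and $\gtGi$ match clause (b), so the claim reduces directly to clause (b) for $\qEnv$, with no descent into the body for the queue part.

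The subtle cases are those in which $\roleP$ is the receiver of the head (en-route) transmission, so that removal discards the prefix and continues as $\gtCrashRole{\gtGi[j]}{\roleP}$, orphaning the message addressed to $\roleP$. Starting from clause (d) (a genuine message en route, $\gtLab[j] \neq \gtCrashLab$), the head-removed environment $\qEnvii = \stEnvUpd{\qEnv}{\roleQ, \roleP}{\stQ}$ is associated with $\gtWithCrashedRoles{\rolesC}{\gtGi[j]}$; applying the induction hypothesis to $\gtGi[j]$ and $\qEnvii$ yields $\stEnvUpd{\qEnvii}{\cdot, \roleP}{\stQUnavail}$ associated with $\gtWithCrashedRoles{\rolesCi}{\gtCrashRole{\gtGi[j]}{\roleP}}$. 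The key observation is the \emph{absorption} identity $\stEnvUpd{\qEnvii}{\cdot, \roleP}{\stQUnavail} = \stEnvUpd{\qEnv}{\cdot, \roleP}{\stQUnavail} = \qEnvi$, which holds because $\qEnvii$ and $\qEnv$ differ only on the queue $\stEnvApp{\qEnv}{\roleQ, \roleP}$, and that queue is overwritten to $\stQUnavail$ by $\stEnvUpd{\cdot}{\cdot, \roleP}{\stQUnavail}$; the pseudo-message subcase ($\gtLab[j] = \gtCrashLab$, clause (c)) is analogous but simpler, since the head queue is already empty. Symmetrically, in the prefix-preserving en-route cases one uses that the updates $\stEnvUpd{\cdot}{\roleQ, \roleR}{\stQ}$ and $\stEnvUpd{\cdot}{\cdot, \roleP}{\stQUnavail}$ commute whenever $\roleR \neq \roleP$.

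I expect the main obstacle to be the continuations in which $\roleP$ no longer occurs, which is precisely why the generalisation to $\roleP \notin \gtRoles{\gtG}$ is needed so that the induction hypothesis applies everywhere. For such a base case I would show, by a sub-induction on the association structure, that the incoming queues of an absent, non-crashed role are all empty: $\roleP \notin \gtRoles{\gtG}$ and $\roleP \notin \rolesC$ imply $\stEnvApp{\qEnv}{\roleS, \roleP} = \stQEmpty$ for every $\roleS$, since no en-route transmission can have $\roleP$ as an active receiver. Consequently $\stEnvUpd{\qEnv}{\cdot, \roleP}{\stQUnavail}$ merely replaces empty queues by $\stQUnavail$, affecting only clause (a) for $\roleP$ and leaving clauses (b)--(d) intact because $\roleP$ participates in no head prefix. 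Combining this base case with the absorption and commutation identities above closes every branch of the induction.
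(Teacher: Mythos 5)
Your proposal is correct and takes essentially the same route as the paper's proof: both discharge the unavailability-iff-crashed clause uniformly as a shape-independent step, then proceed by induction following the structure of the role-removal operator, with the same key absorption identity $\stEnvUpd{\stEnvUpd{\qEnv}{\roleQ, \roleP}{\stQ}}{\cdot, \roleP}{\stQUnavail} = \stEnvUpd{\qEnv}{\cdot, \roleP}{\stQUnavail}$ resolving the case where the removed role is the receiver of an en-route message. Your explicit generalisation to $\roleP \notin \gtRoles{\gtG}$ (so the inductive hypothesis applies to continuations not mentioning $\roleP$) makes rigorous a point the paper glosses over with ``other cases follow directly by inductive hypothesis or trivially,'' but it does not change the substance of the argument.
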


\begin{proof}
  We denote $\qEnvi = \stEnvUpd{\qEnv}{\cdot, \roleP}{\stQUnavail}$ in the
  subsequent proof.
  To show association, there are two parts: namely a shape-dependent
  part, and a shape-independent part.

  We first show shape-independent part, which shared for all cases:
  that a crashed role $\roleR$ is in $\rolesC \cup
  \setenum{\roleP}$ iff $\stEnvApp{\qEnvi}{\cdot, \roleR} = \stQUnavail$.
  It follows that the roles $\roleR \in \rolesC$ have the requirements
  satisfied from the premise, and we set $\qEnvi = \stEnvUpd{\qEnv}{\cdot,
  \roleP}{\stQUnavail}$, and that $\roleP$ is in the new set of crashed
  roles.

  Shape-dependent part are by induction on the definition of
  $\gtCrashRole{\gtG}{\roleP}$:
  \begin{itemize}[leftmargin=*]
    \item Case
    \(
    \gtCrashRole{
      (\gtCommSmall{\roleP}{\roleQ}{i \in I}{\gtLab[i]}{\tyGround[i]}{\gtG[i]})
    }{
      \roleP
    }
    =
    \gtCommTransit{\rolePCrashed}{\roleQ}{i \in I}{\gtLab[i]}{\tyGround[i]}{
      (\gtCrashRole{\gtG[i]}{\roleP})}{j}
    \) where $j \in I$ and $\gtLab[j] = \gtCrashLab$:

    Since $\qEnv$ is associated with $\gtWithCrashedRoles{\rolesC}{\gtG}$, we that
    $\forall i \in I:$ $\qEnv$ is associated with
    $\gtWithCrashedRoles{\rolesC}{\gtGi[i]}$, and $\stEnvApp{\qEnv}{\roleP, \roleQ} =
    \stQEmpty$.

    By inductive hypothesis, we have $\qEnvi$ is associated with
    $\gtWithCrashedRoles{\rolesC}{(\gtCrashRole{\gtG[i]}{\roleP})}$.
    Moreover, since $\gtLab[j] = \gtCrashLab$, we can show that
    $\stEnvApp{\qEnvi}{\roleP, \roleQ} = \stEnvApp{\qEnv}{\roleP, \roleQ} =
    \stQEmpty$.
       \item Case
    \(
    \gtCrashRole{
      (\gtCommTransit{\roleP}{\roleQ}{i \in
      I}{\gtLab[i]}{\tyGround[i]}{\gtG[i]}{j})
    }{
      \roleQ
    }
    =
    \gtCrashRole{\gtG[j]}{\roleQ}
    \): 

    Subcase $\gtLab[j] = \gtCrashLab$: by inductive hypothesis, we know $\qEnvi$ is associated with
    $\gtWithCrashedRoles{\rolesC \cup \setenum{\roleQ}}{\gtGi[j]}$, as
    required.

    Subcase $\gtLab[j] \neq \gtCrashLab$: from association, we have
    $\stEnvApp{\qEnv}{\roleP, \roleQ} =
    \stQCons{\stQMsg{\gtLab[j]}{\tyGround[j]}}{\stQ}$
    and
    $\stEnvUpd{\qEnv}{\roleP, \roleQ}{\stQ}$
    is associated with
    $\gtWithCrashedRoles{\rolesC}{\gtGi[j]}$.

    By inductive hypothesis, we know $
    \stEnvUpd{
      \stEnvUpd{\qEnv}{\roleP, \roleQ}{\stQ}
    }{\cdot, \roleQ}{\stQUnavail}$
    is associated with
    $\gtWithCrashedRoles{\rolesC \cup
    \setenum{\roleQ}}{\gtCrashRole{\gtGi[j]}{\roleQ}}$.

    Since
    $ \stEnvUpd{
      \stEnvUpd{\qEnv}{\roleP, \roleQ}{\stQ}
    }{\cdot, \roleQ}{\stQUnavail} =
    \stEnvUpd{\qEnv}{\cdot, \roleQ}{\stQUnavail} = \qEnvi$,
    so $\qEnvi$ is associated with $\gtWithCrashedRoles{\rolesC \cup
    \setenum{\roleQ}}{\gtCrashRole{\gtGi[j]}{\roleQ}}$.
    \item Case
    \(
    \gtCrashRole{
      (\gtCommSmall{\roleP}{\roleQCrashed}{i \in I}{\gtLab[i]}{\tyGround[i]}{\gtG[i]})
    }{
      \roleP
    }
    =
    \gtCrashRole{\gtG[j]}{\roleP}
    \), where $j \in I$ and $\gtLab[j] = \gtCrashLab$: 

    Since $\qEnv$ is associated with $\gtWithCrashedRoles{\rolesC}{\gtG}$, we
    have that $\qEnv$ is associated with
    $\gtWithCrashedRoles{\rolesC}{\gtG[j]}$.

    By inductive hypothesis, we have $\qEnvi$ is associated with
    $\gtWithCrashedRoles{\rolesC \cup
    \setenum{\roleP}}{\gtCrashRole{\gtG[j]}{\roleP}}$, as required.
 \item Other cases follows directly by inductive hypothesis or trivially.
    \qedhere
  \end{itemize}
\end{proof}

\begin{lemma}\label{lem:stenv-red:non-trivial}
  If\,
   $\stEnv; \qEnv
  \stEnvMoveGenAnnot \stEnvi; \qEnvi$,\,
  $\stEnvAssoc{\gtWithCrashedRoles{\rolesC}{\gtG}}{\stEnv; \qEnv}{\rolesR}$,\,
  $\stEnvAssoc{\gtWithCrashedRoles{\rolesC}{\gtG[1]}}{\stEnv[1]; \qEnv}{\rolesR}$,\,
  $\stEnvApp{\stEnv}{\ltsSubject{\stEnvAnnotGenericSym}} = \stEnvApp{\stEnv[1]}{\ltsSubject{\stEnvAnnotGenericSym}}$,\,
  and\,
  $\stEnvi[1] =  \stEnvUpd{\stEnv[1]}{\ltsSubject{\stEnvAnnotGenericSym}}{\stEnvApp{\stEnvi}{\ltsSubject{\stEnvAnnotGenericSym}}}$,\,
   then\,
$\stEnv[1]; \qEnv
  \stEnvMoveGenAnnot \stEnvi[1]; \qEnvi$.
 \end{lemma}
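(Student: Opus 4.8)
The plan is to proceed by induction on the derivation of $\stEnv; \qEnv \stEnvMoveGenAnnot \stEnvi; \qEnvi$, with a case analysis on the last rule of \cref{fig:gtype:tc-red-rules}. Write $\roleS = \ltsSubject{\stEnvAnnotGenericSym}$ for the subject of the label. In every case the goal is to re-derive a transition from $\stEnv[1]; \qEnv$ carrying the \emph{same} label: the premises constraining the local type at $\roleS$ carry over at once because $\stEnvApp{\stEnv[1]}{\roleS} = \stEnvApp{\stEnv}{\roleS}$, while the premises constraining communication queues carry over because the queue environment $\qEnv$ is shared between the two configurations. Since every rule alters only the local type of its subject (as confirmed by \cref{lem:stenv-red:trivial-2} together with inspection of the rules), the target context is forced to be $\stEnvUpd{\stEnv[1]}{\roleS}{\stEnvApp{\stEnvi}{\roleS}} = \stEnvi[1]$, and the resulting queue must equal $\qEnvi$ by the queue-determinism \cref{lem:stenv-queue-red:det}.

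First I would handle the communication rules \inferrule{\iruleTCtxOut} and \inferrule{\iruleTCtxIn}. For \inferrule{\iruleTCtxOut}, the sole premise restricts $\stEnvApp{\stEnv}{\roleS}$ to an internal choice containing the selected branch; this holds verbatim for $\stEnv[1]$, and \cref{lem:stenv-red:inversion-basic} guarantees the updated entry at $\roleS$ coincides with $\stEnvApp{\stEnvi}{\roleS}$. The rule \inferrule{\iruleTCtxIn} is analogous, its extra premise on the head of $\stEnvApp{\qEnv}{\roleQ, \roleS}$ being a statement purely about the shared $\qEnv$. Rule \inferrule{\iruleTCtxCrash} is immediate, since its premises $\stEnvApp{\stEnv}{\roleS} \neq \stEnd$ and $\stEnvApp{\stEnv}{\roleS} \neq \stStop$ speak only of $\roleS$.

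The crucial case, which I expect to be the main obstacle, is \inferrule{\iruleTCtxCrashDetect}: here the subject is the detecting role $\roleS = \roleQ$, but the rule additionally requires $\stEnvApp{\stEnv}{\roleP} = \stStop$ for the crashed \emph{sender} $\roleP \neq \roleQ$ — a role at which $\stEnv$ and $\stEnv[1]$ need not a priori agree. The resolution is to use both association hypotheses. Because $\stStop$ occurs in an associated context only inside the sub-context $\stEnv[\stStopSym]$ whose domain is exactly $\rolesC$ (projections and $\stEnd$ endpoints are never $\stStop$), clause~\ref{item:assoc:crash-stop} of \cref{def:assoc} makes $\stEnvApp{\stEnv}{\roleP} = \stStop$ equivalent to $\roleP \in \rolesC$; and since $\stEnv[1]; \qEnv$ is associated with a global type annotated by the \emph{same} crashed set $\rolesC$, the same clause yields $\stEnvApp{\stEnv[1]}{\roleP} = \stStop$. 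The remaining premises (the external-choice shape of $\stEnvApp{\stEnv}{\roleS}$ with a $\stCrashLab$ branch, and emptiness of $\stEnvApp{\qEnv}{\roleP, \roleQ}$) transfer as before, so the rule re-applies with target $\stEnvi[1]$.

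Finally, \inferrule{\iruleTCtxRec} is handled inductively, using \cref{prop:lt-lts-unfold-once} to keep the association intact under unfolding. Let $\roleR$ be the role whose recursive type is unfolded. If $\roleR = \roleS$, then $\stEnvApp{\stEnv[1]}{\roleR} = \stEnvApp{\stEnv}{\roleR}$ is the same recursive type, so I would unfold both contexts at $\roleR$, apply the induction hypothesis to the shorter sub-derivation (whose hypotheses hold by \cref{prop:lt-lts-unfold-once} and the agreement at $\roleS$), and re-apply \inferrule{\iruleTCtxRec}. If $\roleR \neq \roleS$, the sub-derivation's transition does not depend on $\roleR$ (its unfolded type is not $\stStop$, so it plays no role even in a crash-detection premise); then I would apply the induction hypothesis directly to the sub-derivation, comparing the unfolded $\stEnv$ with $\stEnv[1]$, which already delivers the prescribed target $\stEnvi[1] = \stEnvUpd{\stEnv[1]}{\roleS}{\stEnvApp{\stEnvi}{\roleS}}$.
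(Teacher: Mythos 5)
Your proposal is correct and takes the same route as the paper, which dispatches this lemma with the one-line proof ``trivial by induction on reductions of configuration'': you carry out precisely that induction, re-deriving each rule from $\stEnv[1]; \qEnv$ using agreement at the subject and the shared queue environment. Your elaboration — in particular isolating \inferrule{\iruleTCtxCrashDetect} as the one case where the transition depends on a non-subject role, and using the two association hypotheses with the \emph{same} crashed set $\rolesC$ (via clause~\ref{item:assoc:crash-stop} of \cref{def:assoc}, plus the observation that subtypes of projections and $\stEnd$ entries are never $\stStop$) to transfer the premise $\stEnvApp{\stEnv}{\roleP} = \stStop$ to $\stEnvApp{\stEnv[1]}{\roleP} = \stStop$ — supplies exactly the detail the paper omits.
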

  \begin{proof}
Trivial by induction on reductions of configuration. %
\end{proof}

We define
$\stIdxRemoveCrash{I}{\stLab[i]}$ to be an index set with the special label
$\stCrashLab$ removed, \ie
$\stIdxRemoveCrash{I}{\stLab[i]} = \setcomp{i \in I}{\stLab[i] \neq \stCrashLab}$.

\thmProjSoundness*
\begin{proof}
  By induction on reductions of global type
  \(
    \gtWithCrashedRoles{\rolesC}{\gtG}
    \gtMove[\stEnvAnnotGenericSym]{\rolesR}
    \gtWithCrashedRoles{\rolesCi}{\gtGi}
  \).
  \begin{itemize}[leftmargin=*]
    \item Case \inferrule{\iruleGtMoveCrash}:

      From the premise we have
      \begin{gather}
        \stEnvAssoc{\gtWithCrashedRoles{\rolesC}{\gtG}}{\stEnv; \qEnv}{\rolesR}
        \label{eq:soundness:movecrash_assoc_pre}
        \\
        \gtWithCrashedRoles{\rolesC}{\gtG}
        \gtMove{\rolesR}
        \\
        \roleP \notin \rolesR
        \\
        \roleP \in \gtRoles{\gtG}
        \\
        \gtG \neq \gtRec{\gtRecVar}{\gtGi}
        \\
        \stEnvAnnotGenericSym = \ltsCrashSmall{\mpS}{\roleP}
        \\
        \rolesCi = \rolesC \cup \setenum{\roleP}
        \label{eq:soundness:movecrash_rolesCi}
        \\
        \gtGi = \gtCrashRole{\gtG}{\roleP}
      \end{gather}

      Let $\stEnvi; \qEnvi = \stEnvUpd{\stEnv}{\roleP}{\stStop};
      \stEnvUpd{\qEnv}{\cdot, \roleP}{\stQUnavail}$.
      We show
        $\stEnvAssoc{\gtWithCrashedRoles{\rolesCii}{\gtGii}}{\stEnvi;
        \qEnvi}{\rolesR}$\;
      and
      \;$\stEnv; \qEnv \stEnvMoveGenAnnot \stEnvi; \qEnvi$.

      For the first part:
      By association~\eqref{eq:soundness:movecrash_assoc_pre}, we know that
      $\stEnvApp{\stEnv}{\roleP} \stSub \gtProj[\rolesR]{\gtG}{\roleP}$.
      By   $\gtG \neq \gtRec{\gtRecVar}{\gtGi}$, $\roleP \in \gtRoles{\gtG}$, and \cref{lem:in-roles-not-end}, we know that
      $\gtProj[\rolesR]{\gtG}{\roleP} \neq \stEnd$, which gives $\stEnvApp{\stEnv}{\roleP} \neq \stEnd$. 
      By $\roleP \in \gtRoles{\gtG}$, we also know that $\roleP \notin \rolesC$, which gives 
      $\stEnvApp{\stEnv}{\roleP} \neq \stStop$.
      Thus we apply \inferrule{\iruleTCtxCrash} to obtain the thesis.

      For the second part:
        \begin{enumerate}[label=(A\arabic*)]
          \item
            From association~\eqref{eq:soundness:movecrash_assoc_pre}, we know
            that
            $\forall \roleQ \in \gtRoles{\gtG} : \stEnvApp{\stEnv}{\roleQ}
            \stSub \gtProj[\rolesR]{\gtG}{\roleQ}$.

            By \cref{lem:gtype:crash-remove-role}, we know
            $\gtRoles{\gtCrashRole{\gtG}{\roleP}} \subseteq \gtRoles{\gtG}$.

            For any $\roleQ \in \gtRoles{\gtCrashRole{\gtG}{\roleP}}$,
            we apply \cref{lem:proj-non-crashing-role-preserve}, to obtain
            \(
            \gtProj[\rolesR]{\gtG}{\roleQ}
            \stSub
            \gtProj[\rolesR]{(\gtCrashRole{\gtG}{\roleP})}{\roleQ}
            \)

            Thus we have, by transitivity of subtyping,
            $\forall \roleQ \in \gtRoles{\gtCrashRole{\gtG}{\roleP}} :
            \stEnvApp{\stEnv}{\roleQ}
            \stSub \gtProj[\rolesR]{\gtG}{\roleQ}
            \stSub \gtProj[\rolesR]{(\gtCrashRole{\gtG}{\roleP})}{\roleQ}
            $, as required.

          \item
            From association~\eqref{eq:soundness:movecrash_assoc_pre}, we know
            that
            $\forall \roleQ \in \rolesC : \stEnvApp{\stEnv}{\roleQ} = \stStop$,
            and they are unchanged in $\stEnvi$.

            Moreover, we have updated $\stEnvApp{\stEnvi}{\roleP} = \stStop$.

            This completes the consideration of the set $\rolesCi$ \eqref{eq:soundness:movecrash_rolesCi}.
          \item
            No change here.
          \item
            By \cref{lem:queue-assoc-crash}.
        \end{enumerate}
    \item Case \inferrule{\iruleGtMoveRec}:

      From the premise we have
      \begin{gather}
        \stEnvAssoc{\gtWithCrashedRoles{\rolesC}{\gtG}}{\stEnv; \qEnv}{\rolesR}
        \\
        \gtG = \gtRec{\gtRecVar}{\gtG[0]}
        \\
        \gtWithCrashedRoles{\rolesC}{\gtG}
        \gtMove{\rolesR}
        \\
        \gtWithCrashedRoles{\rolesC}{\gtG[0]{}\subst{\gtRecVar}{\gtRec{\gtRecVar}{\gtG[0]}}}
        \gtMove[\stEnvAnnotGenericSym]{\rolesR}
        \gtWithCrashedRoles{\rolesCi}{\gtGi}
      \end{gather}

      By \cref{prop:gt-lts-unfold-once}, we have
      $\stEnvAssoc{
        \gtWithCrashedRoles{\rolesC}{\gtG[0]{}\subst{\gtRecVar}{\gtRec{\gtRecVar}{\gtG[0]}}}
      }{\stEnv; \qEnv}{\rolesR}
      $, and we can apply inductive hypothesis to obtain the desired result.

    \item Case \inferrule{\iruleGtMoveOut}:

      From the premise we have
      \begin{gather}
        \stEnvAssoc{\gtWithCrashedRoles{\rolesC}{\gtG}}{\stEnv; \qEnv}{\rolesR}
        \label{eq:soundness:moveout_assoc_pre}
        \\
        \gtG =
          \gtCommSmall{\roleP}{\roleQ}{i \in I}{\gtLab[i]}{\tyGround[i]}{\gtG[i]}
        \\
        \gtWithCrashedRoles{\rolesC}{\gtG}
        \gtMove{\rolesR}
        \\
        j \in I
        \\
        \gtLab[j] \neq \gtCrashLab
        \\
        \stEnvAnnotGenericSym =
          \stEnvOutAnnotSmall{\roleP}{\roleQ}{\stChoice{\gtLab[j]}{\tyGround[j]}}
        \\
        \rolesCi = \rolesC
        \\
        \gtGi =
          \gtCommTransit{\roleP}{\roleQ}{i \in I}{\gtLab[i]}{\tyGround[i]}{\gtG[i]}{j}
      \end{gather}
      By association \eqref{eq:soundness:moveout_assoc_pre} and $\roleP \in \gtRoles{\gtG}$,
      we know that $\stEnvApp{\stEnv}{\roleP} \stSub \gtProj[\rolesR]{\gtG}{\roleP} =
        \stIntSum{\roleQ}{i \in \stIdxRemoveCrash{I}{\stLab[i]}}{
        \stChoice{\stLab[i]}{\tyGround[i]} \stSeq (\gtProj[\rolesR]{\gtG[i]}{\roleP})%
      }$. Then by \cref{lem:subtyping-invert}, we obtain that 
      $\stEnvApp{\stEnv}{\roleP} =
      \stIntSum{\roleQ}{i \in I'}{
        \stChoice{\stLab[i]}{\tyGround[i]} \stSeq \stT[i]%
      }$, where $I' \subseteq \stIdxRemoveCrash{I}{\stLab[i]}$ and $\forall i \in I': \stT[i] \stSub \gtProj[\rolesR]{\gtG[i]}{\roleP}$.
      Note that here for any $i \in I': \stLab[i] = \gtLab[i]$. 
      
      Since the $\stCrashLab$ label cannot appear in the internal choices, it holds that for any $i \in I'$, $\stLab[i] \neq \stCrashLab$. 
      Therefore, with $\forall i \in I': \stLab[i] = \gtLab[i]$, we can set $j \in I'$ with $\stLab[j] = \gtLab[j] \neq \stCrashLab$ and 
      $  \stEnvAnnotGenericSym =
          \stEnvOutAnnotSmall{\roleP}{\roleQ}{\stChoice{\gtLab[j]}{\tyGround[j]}}
          = 
          \stEnvOutAnnotSmall{\roleP}{\roleQ}{\stChoice{\stLab[j]}{\tyGround[j]}}$. 
            
        Let $\stEnvi; \qEnvi = \stEnvUpd{\stEnv}{\roleP}{\stT[j]};  \stEnvUpd{\qEnv}{\roleP, \roleQ}{ \stQCons{
          \stEnvApp{\qEnv}{\roleP, \roleQ}
        }{
          \stQMsg{\stLab[j]}{\tyGround[j]}
        }}$.
      We show
      $\stEnv; \qEnv \stEnvMoveGenAnnot \stEnvi; \qEnvi$\;
      and
        \;$\stEnvAssoc{\gtWithCrashedRoles{\rolesCi}{\gtGi}}{\stEnvi;
        \qEnvi}{\rolesR}$.

    For the first part: we apply \inferrule{\iruleTCtxOut} on those we have: 
    $\stEnvApp{\stEnv}{\roleP} =
      \stIntSum{\roleQ}{i \in I'}{
        \stChoice{\stLab[i]}{\tyGround[i]} \stSeq \stT[i]%
      }$ and $j \in I'$, to obtain the thesis. 
      
   For the second part:
     \begin{enumerate}[label=(A\arabic*)]
     \item   We want to show $\forall \roleR \in \gtRoles{\gtGi}: \stEnvApp{\stEnvi}{\roleR}
           \stSub \gtProj[\rolesR]{\gtGi}{\roleR}$. We consider three subcases:
           \begin{itemize}[leftmargin=*]
           \item $\roleR = \roleP$: since
           $\stEnvApp{\stEnvi}{\roleP} = \stT[j]$, $j \in I'$, and $\forall i \in I': \stT[i] \stSub \gtProj[\rolesR]{\gtG[i]}{\roleP}$, 
           we have $\stEnvApp{\stEnvi}{\roleP} \stSub
           \gtProj[\rolesR]{\gtG[j]}{\roleP} =  \gtProj[\rolesR]{\gtGi}{\roleP}$, as desired.
           
           \item $\roleR = \roleQ$:
                     by association~\eqref{eq:soundness:moveout_assoc_pre}, 
                     we have 
                     $\stEnvApp{\stEnvi}{\roleQ} = \stEnvApp{\stEnv}{\roleQ} 
                     \stSub  \gtProj[\rolesR]{\gtG}{\roleQ} 
                     = \stExtSum{\roleP}{i \in I}{
        \stChoice{\stLab[i]}{\tyGround[i]} \stSeq (\gtProj[\rolesR]{\gtG[i]}{\roleQ})%
      } =  \gtProj[\rolesR]{\gtGi}{\roleQ} 
      $, as desired.
           \item $\roleR \neq \roleQ$ and $\roleR \neq \roleP$: 
           by association~\eqref{eq:soundness:moveout_assoc_pre}, 
            it holds that $\stEnvApp{\stEnvi}{\roleR} =  \stEnvApp{\stEnv}{\roleR} \stSub
           \gtProj[\rolesR]{\gtG}{\roleR} =  \stMerge{i \in I}{\gtProj[\rolesR]{\gtG[i]}{\roleR}} 
           = 
            \gtProj[\rolesR]{\gtGi}{\roleR}$, as desired. 
           \end{itemize}

     \item No change here. 
     \item No change here. 
     \item  We are left to show 
      $\qEnvi =  \stEnvUpd{\qEnv}{\roleP, \roleQ}{ \stQCons{
          \stEnvApp{\qEnv}{\roleP, \roleQ}
        }{
          \stQMsg{\stLab[j]}{\tyGround[j]}
        }}$ is associated with $\gtWithCrashedRoles{\rolesC} 
     {\gtCommTransit{\roleP}{\roleQ}{i \in I}{\gtLab[i]}{\tyGround[i]}{\gtG[i]}{j}}$. 
     
     Since $\qEnv$ is associated with $\gtWithCrashedRoles{\rolesC} 
     {\gtCommSmall{\roleP}{\roleQ}{i \in I}{\gtLab[i]}{\tyGround[i]}{\gtG[i]}}$, 
     by \cref{def:assoc-queue}, we have $\stEnvApp{\qEnv}{\roleP, \roleQ} = \stQEmpty$ and 
     $\forall i \in I: \qEnv \text{~is associated with~}
      \gtWithCrashedRoles{\rolesC}{\gtG[i]}$. 
     
     Since $\gtLab[j] \neq \gtCrashLab$, we just need to show that 
     $\stEnvApp{\qEnvi}{\roleP, \roleQ} = \stQMsg{\gtLab[j]}{\tyGround[j]}$, 
     which follows directly from 
     $\qEnvi =  \stEnvUpd{\qEnv}{\roleP, \roleQ}{ \stQCons{
          \stEnvApp{\qEnv}{\roleP, \roleQ}
        }{
          \stQMsg{\stLab[j]}{\tyGround[j]}
        }}$, $\gtLab[j] = \stLab[j]$, and $\stEnvApp{\qEnv}{\roleP, \roleQ} = \stQEmpty$,  
        and 
      $\forall i \in I: \stEnvUpd{\qEnvi}{\roleP, \roleQ}{\stQEmpty}$ is associated with 
      $\gtWithCrashedRoles{\rolesC}{\gtG[i]}$, which follows from 
      $\stEnvUpd{\qEnvi}{\roleP, \roleQ}{\stQEmpty}  = \qEnv$ and 
       $\forall i \in I: \qEnv \text{~is associated with~}
      \gtWithCrashedRoles{\rolesC}{\gtG[i]}$. 
     \end{enumerate}

 \item Case \inferrule{\iruleGtMoveIn}:

      From the premise we have
      \begin{gather}
        \stEnvAssoc{\gtWithCrashedRoles{\rolesC}{\gtG}}{\stEnv; \qEnv}{\rolesR}
         \label{eq:soundness:movein_assoc_pre}
        \\
        \gtG =
          \gtCommTransit{\rolePMaybeCrashed}{\roleQ}{i \in I}{\gtLab[i]}{\tyGround[i]}{\gtG[i]}{j}
        \\
        \gtWithCrashedRoles{\rolesC}{\gtG}
        \gtMove{\rolesR}
        \\
        j \in I
        \\
        \gtLab[j] \neq \gtCrashLab
        \\
        \stEnvAnnotGenericSym =
          \stEnvInAnnotSmall{\roleQ}{\roleP}{\stChoice{\gtLab[j]}{\tyGround[j]}}
        \\
        \rolesCi = \rolesC
        \\
        \gtGi = \gtG[j]
      \end{gather}
     By association \eqref{eq:soundness:movein_assoc_pre} and $\roleQ \in \gtRoles{\gtG}$,
      we know that $\stEnvApp{\stEnv}{\roleQ} \stSub \gtProj[\rolesR]{\gtG}{\roleQ} =
        \stExtSum{\roleP}{i \in I}{
        \stChoice{\stLab[i]}{\tyGround[i]} \stSeq (\gtProj[\rolesR]{\gtG[i]}{\roleQ})%
      }$. Note that here for any $i \in I: \stLab[i] = \gtLab[i]$. Furthermore, by \cref{lem:subtyping-invert},  
      we obtain that $\stEnvApp{\stEnv}{\roleQ} =
      \stExtSum{\roleP}{i \in I'}{
        \stChoice{\stLab[i]}{\tyGround[i]} \stSeq \stT[i]%
      }$, where $I \subseteq I'$ and $\forall i \in I: \stT[i] \stSub \gtProj[\rolesR]{\gtG[i]}{\roleQ}$.
      From association \eqref{eq:soundness:movein_assoc_pre}, we also get that 
      $\stEnvApp{\qEnv}{\roleP, \roleQ} = \stQCons{\stQMsg{\gtLab[j]}{\tyGround[j]}}{\stQ} 
      = \stQCons{\stQMsg{\stLab[j]}{\tyGround[j]}}{\stQ}$. 
      
        Let $\stEnvi; \qEnvi = \stEnvUpd{\stEnv}{\roleQ}{\stT[j]};  \stEnvUpd{\qEnv}{\roleP, \roleQ}{\stQ}$.
      We show
      $\stEnv; \qEnv \stEnvMoveGenAnnot \stEnvi; \qEnvi$\;
      and
        \;$\stEnvAssoc{\gtWithCrashedRoles{\rolesCi}{\gtGi}}{\stEnvi;
        \qEnvi}{\rolesR}$.
        
    For the first part: we apply \inferrule{\iruleTCtxIn} on those we have: 
    $\stEnvApp{\stEnv}{\roleQ} =
      \stExtSum{\roleP}{i \in I'}{
        \stChoice{\stLab[i]}{\tyGroundi[i]} \stSeq \stT[i]
      }$, $j \in I \subseteq I'$, $\stLab[j] = \gtLab[j]$, and $\stEnvApp{\qEnv}{\roleP, \roleQ} = 
      \stQCons{\stQMsg{\stLab[j]}{\tyGround[j]}}{\stQ}$,  
    to obtain the thesis.   
    
    For the second part:
     \begin{enumerate}[label=(A\arabic*)]
     \item 
       We want to show $\forall \roleR \in \gtRoles{\gtG[j]}: \stEnvApp{\stEnvi}{\roleR}
           \stSub \gtProj[\rolesR]{\gtG[j]}{\roleR}$. We consider three subcases:
           \begin{itemize}[leftmargin=*]
           \item $\roleR = \roleQ$ (meaning that $\roleQ \in \gtRoles{\gtG[j]}$): since
           $\stEnvApp{\stEnvi}{\roleQ} = \stT[j]$ and $\forall i \in I: \stT[i] \stSub \gtProj[\rolesR]{\gtG[i]}{\roleQ}$, 
           we have $\stEnvApp{\stEnvi}{\roleQ} \stSub
           \gtProj[\rolesR]{\gtG[j]}{\roleQ}$, as desired.
           \item $\roleR = \roleP$ (meaning that $\rolePMaybeCrashed = \roleP$ and $\roleP \in \gtRoles{\gtG[j]}$): 
                     by association~\eqref{eq:soundness:movein_assoc_pre}, 
                     we have 
                     $\stEnvApp{\stEnvi}{\roleP} = \stEnvApp{\stEnv}{\roleP} 
                     \stSub  \gtProj[\rolesR]{\gtG}{\roleP} =  \gtProj[\rolesR]{\gtG[j]}{\roleP}$, as desired.
           \item $\roleR \neq \roleQ$ and $\roleR \neq \roleP$: 
           by association~\eqref{eq:soundness:movein_assoc_pre}, 
            it holds that $\stEnvApp{\stEnvi}{\roleR} =  \stEnvApp{\stEnv}{\roleR} \stSub
           \gtProj[\rolesR]{\gtG}{\roleR} =  \stMerge{i \in I}{\gtProj[\rolesR]{\gtG[i]}{\roleR}}$.
           Then, by applying \Cref{lem:merge-subtyping} and transitivity of subtyping,  we conclude with $\stEnvApp{\stEnvi}{\roleR}
        \stSub \gtProj[\rolesR]{\gtG[j]}{\roleR}$, as desired.
           \end{itemize}
     \item No change here. 
     \item No change here if $\roleQ \in \gtRoles{\gtG[j]}$ and $\roleP \in \gtRoles{\gtG[j]}$. Otherwise, if $\roleQ \notin \gtRoles{\gtG[j]}$: with
            $\roleQ \notin \gtRolesCrashed{\gtG[j]}$, by \Cref{lem:not-in-roles-end}, we have $\gtProj[\rolesR]{\gtG[j]}{\roleQ} = \stEnd$.
            Furthermore, by the fact that $\forall i \in I: \stT[i] \stSub \gtProj[\rolesR]{\gtG[i]}{\roleQ} $,
            it holds that $\stT[j] = \stEnd$, and thus,
          $\stEnvApp{\stEnvi}{\roleQ} = \stEnd$, as desired. The argument for $\roleP \notin  \gtRoles{\gtG[j]}$ and 
          $\rolePMaybeCrashed = \roleP$ follows similarly. 
     \item  Since $\qEnv$ is associated with $\gtWithCrashedRoles{\rolesC} 
     {\gtCommTransit{\rolePMaybeCrashed}{\roleQ}{i \in I}{\gtLab[i]}{\tyGround[i]}{\gtG[i]}{j}}$ 
     and $\gtLab[j] \neq \gtCrashLab$, by \Cref{def:assoc-queue},  we have that
     $\stEnvApp{\qEnv}{\roleP, \roleQ} = \stQCons{\stQMsg{\gtLab[j]}{\tyGround[j]}}{\stQ}$ and 
            $\forall i \in I: \stEnvUpd{\qEnv}{\roleP, \roleQ}{\stQ} \text{~is associated with~}
      \gtWithCrashedRoles{\rolesC}{\gtG[i]}$, which follows that $\qEnvi = \stEnvUpd{\qEnv}{\roleP, \roleQ}{\stQ}$ 
      is associated with
       $\gtWithCrashedRoles{\rolesC}{\gtG[j]}$, as desired.
     \end{enumerate} 
    
    \item Case \inferrule{\iruleGtMoveCrDe}:

      From the premise we have
      \begin{gather}
        \stEnvAssoc{\gtWithCrashedRoles{\rolesC}{\gtG}}{\stEnv; \qEnv}{\rolesR}
        \label{eq:soundness:detectcrash_assoc_pre}
        \\
        \gtG =
          \gtCommTransit{\rolePCrashed}{\roleQ}{i \in I}{\gtLab[i]}{\tyGround[i]}{\gtG[i]}{j}
        \\
        \gtWithCrashedRoles{\rolesC}{\gtG}
        \gtMove{\rolesR}
        \\
        j \in I
        \label{eq:soundness:detectcrash_index}
        \\
        \gtLab[j] = \gtCrashLab
        \label{eq:soundness:detectcrash_crash_branch}
        \\
        \stEnvAnnotGenericSym =
          \ltsCrDe{\mpS}{\roleQ}{\roleP}
        \\
        \rolesCi = \rolesC
        \\
        \gtGi = \gtG[j]
      \end{gather}
      By association \eqref{eq:soundness:detectcrash_assoc_pre} and $\roleQ \in \gtRoles{\gtG}$,
      we know that $\stEnvApp{\stEnv}{\roleQ} \stSub \gtProj[\rolesR]{\gtG}{\roleQ} =
        \stExtSum{\roleP}{i \in I}{
        \stChoice{\stLab[i]}{\tyGround[i]} \stSeq (\gtProj[\rolesR]{\gtG[i]}{\roleQ})%
      }$. Note that here for any $i \in I: \stLab[i] = \gtLab[i]$. Furthermore, by \cref{lem:subtyping-invert}, 
      we obtain that $\stEnvApp{\stEnv}{\roleQ} =
      \stExtSum{\roleP}{i \in I'}{
        \stChoice{\stLab[i]}{\tyGroundi[i]} \stSeq \stT[i]%
      }$, where $I \subseteq I'$ and $\forall i \in I: \tyGround[i] =
      \tyGroundi[i]$ and $\stT[i] \stSub \gtProj[\rolesR]{\gtG[i]}{\roleQ}$.

       Let $\stEnvi; \qEnvi = \stEnvUpd{\stEnv}{\roleQ}{\stT[j]}; \qEnv$.
      We show
      $\stEnv; \qEnv \stEnvMoveGenAnnot \stEnvi; \qEnvi$\;
      and
        \;$\stEnvAssoc{\gtWithCrashedRoles{\rolesCi}{\gtGi}}{\stEnvi;
        \qEnvi}{\rolesR}$.

        For the first part: By association~\eqref{eq:soundness:detectcrash_assoc_pre},
        $\roleP \in \rolesC$, and
        $\gtLab[j] = \gtCrashLab$~\eqref{eq:soundness:detectcrash_crash_branch},
        we know that $\stEnvApp{\stEnv}{\roleP} = \stStop$ and
        $\stEnvApp{\qEnv}{\roleP, \roleQ} = \stQEmpty$.
        Since $j \in I$~\eqref{eq:soundness:detectcrash_index},
         $\gtLab[j] = \gtCrashLab$~\eqref{eq:soundness:detectcrash_crash_branch}, and
         $\forall i \in I: \gtLab[i] = \stLab[i]$, we have $\stLab[j] = \stCrashLab$.  We also get
         $j \in I'$ from $j \in I$ and $I \subseteq I'$. Thus, together with $\stEnvApp{\stEnv}{\roleQ} =
      \stExtSum{\roleP}{i \in I'}{
        \stChoice{\stLab[i]}{\tyGroundi[i]} \stSeq \stT[i]%
      }$, we apply \inferrule{\iruleTCtxCrashDetect} to obtain the thesis.

     For the second part:
        \begin{enumerate}[label=(A\arabic*)]
          \item
          We want to show $\forall \roleR \in \gtRoles{\gtG[j]}: \stEnvApp{\stEnvi}{\roleR}
           \stSub \gtProj[\rolesR]{\gtG[j]}{\roleR}$. We consider two subcases:
           \begin{itemize}[leftmargin=*]
           \item $\roleR = \roleQ$ (meaning that $\roleQ \in \gtRoles{\gtG[j]}$): since
           $\stEnvApp{\stEnvi}{\roleQ} = \stT[j]$, $\forall i \in I: \stT[i] \stSub \gtProj[\rolesR]{\gtG[i]}{\roleQ}$, and
           $j \in I$~\eqref{eq:soundness:detectcrash_index}, we have $\stEnvApp{\stEnvi}{\roleQ} \stSub
           \gtProj[\rolesR]{\gtG[j]}{\roleQ}$, as desired.
           \item $\roleR \neq \roleQ$: since $\roleP \in \rolesC$, we know $\roleP \notin \gtRoles{\gtG[j]}$, and thus, $\roleR \neq \roleP$.
           Furthermore, by association~\eqref{eq:soundness:detectcrash_assoc_pre}, it holds that 
           $\stEnvApp{\stEnvi}{\roleR} = \stEnvApp{\stEnv}{\roleR} \stSub
           \gtProj[\rolesR]{\gtG}{\roleR} =  \stMerge{i \in I}{\gtProj[\rolesR]{\gtG[i]}{\roleR}}$.
           Then, by applying \Cref{lem:merge-subtyping} and transitivity of subtyping,  we can conclude with $\stEnvApp{\stEnvi}{\roleR}
        \stSub \gtProj[\rolesR]{\gtG[j]}{\roleR}$, as desired.
           \end{itemize}

          \item
            No change here.
          \item
            No change here if $\roleQ \in \gtRoles{\gtG[j]}$. Otherwise, if $\roleQ \notin \gtRoles{\gtG[j]}$: with
            $\roleQ \notin \gtRolesCrashed{\gtG[j]}$, by \Cref{lem:not-in-roles-end}, we have $\gtProj[\rolesR]{\gtG[j]}{\roleQ} = \stEnd$.
            Furthermore, by the fact that $\forall i \in I: \stT[i] \stSub \gtProj[\rolesR]{\gtG[i]}{\roleQ} $,
            it holds that $\stT[j] = \stEnd$, and thus,
          $\stEnvApp{\stEnvi}{\roleQ} = \stEnd$, as desired.

          \item
             Since $\qEnv$ is associated with $\gtWithCrashedRoles{\rolesC}{\gtCommTransit{\rolePCrashed}{\roleQ}{i \in
            I}{\gtLab[i]}{\tyGround[i]}{\gtG[i]}{j} }$ and $\gtLab[j] = \gtCrashLab$, by \Cref{def:assoc-queue},  we have that
            $\forall i \in I: \qEnv \text{~is associated with~}
      \gtWithCrashedRoles{\rolesC}{\gtG[i]}$, which follows that $\qEnvi = \qEnv$ is associated with
       $\gtWithCrashedRoles{\rolesC}{\gtG[j]}$, as desired.
        \end{enumerate}

       \item Case \inferrule{\iruleGtMoveOrph}:

      From the premise we have
      \begin{gather}
        \stEnvAssoc{\gtWithCrashedRoles{\rolesC}{\gtG}}{\stEnv; \qEnv}{\rolesR}
        \label{eq:soundness:moveorph_assoc_pre}
        \\
        \gtG =
          \gtCommSmall{\roleP}{\roleQCrashed}{i \in I}{\gtLab[i]}{\tyGround[i]}{\gtG[i]}
        \\
        \gtWithCrashedRoles{\rolesC}{\gtG}
        \gtMove{\rolesR}
        \\
        j \in I
        \\
        \gtLab[j] \neq \gtCrashLab
        \\
        \stEnvAnnotGenericSym =
          \stEnvOutAnnotSmall{\roleP}{\roleQ}{\stChoice{\gtLab[j]}{\tyGround[j]}}
        \\
        \rolesCi = \rolesC
        \\
        \gtGi = \gtG[j]
      \end{gather}
        By association \eqref{eq:soundness:moveorph_assoc_pre} and $\roleP \in \gtRoles{\gtG}$,
      we know that $\stEnvApp{\stEnv}{\roleP} \stSub \gtProj[\rolesR]{\gtG}{\roleP} =
        \stIntSum{\roleQ}{i \in \stIdxRemoveCrash{I}{\stLab[i]}}{
        \stChoice{\stLab[i]}{\tyGround[i]} \stSeq (\gtProj[\rolesR]{\gtG[i]}{\roleP})%
      }$. Then by \cref{lem:subtyping-invert}, we obtain that 
      $\stEnvApp{\stEnv}{\roleP} =
      \stIntSum{\roleQ}{i \in I'}{
        \stChoice{\stLab[i]}{\tyGround[i]} \stSeq \stT[i]%
      }$, where $I' \subseteq \stIdxRemoveCrash{I}{\stLab[i]}$ and $\forall i \in I': \stT[i] \stSub \gtProj[\rolesR]{\gtG[i]}{\roleP}$.
      Note that here for any $i \in I': \stLab[i] = \gtLab[i]$. 
      
      Since the $\stCrashLab$ label cannot appear in the internal choices, it holds that for any $i \in I'$, $\stLab[i] \neq \stCrashLab$. 
      Therefore, with $\forall i \in I': \stLab[i] = \gtLab[i]$, we can set $j \in I'$ with $\stLab[j] = \gtLab[j] \neq \stCrashLab$ and 
      $  \stEnvAnnotGenericSym =
          \stEnvOutAnnotSmall{\roleP}{\roleQ}{\stChoice{\gtLab[j]}{\tyGround[j]}}
          = 
          \stEnvOutAnnotSmall{\roleP}{\roleQ}{\stChoice{\stLab[j]}{\tyGround[j]}}$. 
            
        Let $\stEnvi; \qEnvi = \stEnvUpd{\stEnv}{\roleP}{\stT[j]};  \stEnvUpd{\qEnv}{\roleP, \roleQ}{ \stQCons{
          \stEnvApp{\qEnv}{\roleP, \roleQ}
        }{
          \stQMsg{\stLab[j]}{\tyGround[j]}
        }}$.
      We show
      $\stEnv; \qEnv \stEnvMoveGenAnnot \stEnvi; \qEnvi$\;
      and
        \;$\stEnvAssoc{\gtWithCrashedRoles{\rolesCi}{\gtGi}}{\stEnvi;
        \qEnvi}{\rolesR}$.

    For the first part: we apply \inferrule{\iruleTCtxOut} on those we have: 
    $\stEnvApp{\stEnv}{\roleP} =
      \stIntSum{\roleQ}{i \in I'}{
        \stChoice{\stLab[i]}{\tyGround[i]} \stSeq \stT[i]%
      }$ and $j \in I'$, to obtain the thesis. 
      
   For the second part:
     \begin{enumerate}[label=(A\arabic*)]
     \item   We want to show $\forall \roleR \in \gtRoles{\gtGi}: \stEnvApp{\stEnvi}{\roleR}
           \stSub \gtProj[\rolesR]{\gtGi}{\roleR}$. We consider two subcases:
           \begin{itemize}[leftmargin=*]
           \item $\roleR = \roleP$ (meaning that $\roleP \in \gtRoles{\gtG[j]}$): since
           $\stEnvApp{\stEnvi}{\roleP} = \stT[j]$, $\forall i \in I': \stT[i] \stSub \gtProj[\rolesR]{\gtG[i]}{\roleP}$, and
           $j \in I'$, we have $\stEnvApp{\stEnvi}{\roleP} \stSub
           \gtProj[\rolesR]{\gtG[j]}{\roleP}$, as desired.
          
           \item $\roleR \neq \roleP$: since $\roleQ \in \rolesC$, we know $\roleQ \notin \gtRoles{\gtG[j]}$, and thus, $\roleR \neq \roleQ$.
           Furthermore, by association~\eqref{eq:soundness:moveorph_assoc_pre}, it holds that 
           $\stEnvApp{\stEnvi}{\roleR} = \stEnvApp{\stEnv}{\roleR} \stSub
           \gtProj[\rolesR]{\gtG}{\roleR} =  \stMerge{i \in I}{\gtProj[\rolesR]{\gtG[i]}{\roleR}}$.
           Then, by applying \Cref{lem:merge-subtyping} and transitivity of subtyping,  we can conclude with $\stEnvApp{\stEnvi}{\roleR}
        \stSub \gtProj[\rolesR]{\gtG[j]}{\roleR}$, as desired.
           \end{itemize}

     \item No change here. 
     \item 
     No change here if $\roleP \in \gtRoles{\gtG[j]}$. Otherwise, if $\roleP \notin \gtRoles{\gtG[j]}$: with
            $\roleP \notin \gtRolesCrashed{\gtG[j]}$, by \Cref{lem:not-in-roles-end}, we have $\gtProj[\rolesR]{\gtG[j]}{\roleP} = \stEnd$.
            Furthermore, by the fact that $\forall i \in I': \stT[i] \stSub \gtProj[\rolesR]{\gtG[i]}{\roleP} $,
            it holds that $\stT[j] = \stEnd$, and thus,
          $\stEnvApp{\stEnvi}{\roleP} = \stEnd$, as desired.
     \item  We are left to show 
      $\qEnvi =  \stEnvUpd{\qEnv}{\roleP, \roleQ}{ \stQCons{
          \stEnvApp{\qEnv}{\roleP, \roleQ}
        }{
          \stQMsg{\stLab[j]}{\tyGround[j]}
        }}$ is associated with $\gtWithCrashedRoles{\rolesC} 
     {\gtG[j]}$. 
     
     Since $\qEnv$ is associated with $\gtWithCrashedRoles{\rolesC} 
     {\gtCommSmall{\roleP}{\roleQCrashed}{i \in I}{\gtLab[i]}{\tyGround[i]}{\gtG[i]}}$, 
     by \cref{def:assoc-queue}, we have $\stEnvApp{\qEnv}{\roleP, \roleQ} = \stQUnavail$ and 
     $\forall i \in I: \qEnv \text{~is associated with~}
      \gtWithCrashedRoles{\rolesC}{\gtG[i]}$. It follows from $\stEnvApp{\qEnv}{\roleP, \roleQ} = \stQUnavail$
       that 
      $\qEnvi = \stEnvUpd{\qEnv}{\roleP, \roleQ}{ \stQCons{
          \stEnvApp{\qEnv}{\roleP, \roleQ}
        }{
          \stQMsg{\stLab[j]}{\tyGround[j]}
        }}
        = 
        \stEnvUpd{\qEnv}{\roleP, \roleQ}{ \stQCons{
        \stQUnavail
        }{\stQMsg{\stLab[j]}{\tyGround[j]}
        } }
        = 
         \stEnvUpd{\qEnv}{\roleP, \roleQ}{\stQUnavail}
         =
        \qEnv$. Hence, with $\forall i \in I: \qEnv \text{~is associated with~}
      \gtWithCrashedRoles{\rolesC}{\gtG[i]}$, we conclude that $\qEnvi = \qEnv$ is associated with 
      $\gtWithCrashedRoles{\rolesC} 
     {\gtG[j]}$, as desired. 
  \end{enumerate}  

 \item Case \inferrule{\iruleGtMoveCtx}:

      From the premise we have
      \begin{gather}
        \stEnvAssoc{\gtWithCrashedRoles{\rolesC}{\gtG}}{\stEnv; \qEnv}{\rolesR}
        \label{eq:soundness:context_1_assoc_pre}
        \\
        \gtG =
          \gtCommSmall{\roleP}{\roleQMaybeCrashed}{i \in I}{\gtLab[i]}{\tyGround[i]}{\gtG[i]}
        \\
        \gtWithCrashedRoles{\rolesC}{\gtG}
        \gtMove{\rolesR}
        \\
        \forall i \in I :
        \gtWithCrashedRoles{\rolesC}{\gtG[i]}
        \gtMove[\stEnvAnnotGenericSym]{\rolesR}
        \gtWithCrashedRoles{\rolesCi}{\gtGi[i]}
        \\
        \ltsSubject{\stEnvAnnotGenericSym} \notin \setenum{\roleP, \roleQ}
        \\
        \gtGi =
          \gtCommSmall{\roleP}{\roleQMaybeCrashed}{i \in I}{\gtLab[i]}{\tyGround[i]}{\gtGi[i]}
      \end{gather}
      We consider two subcases depending on whether $\roleQ$ has crashed.
      \begin{itemize}[leftmargin=*]
      \item $\roleQMaybeCrashed = \roleQ$:

      For $\roleP$, we have
     $\gtProj[\rolesR]{\gtG}{\roleP} =
       \stIntSum{\roleQ}{i \in \stIdxRemoveCrash{I}{\stLab[i]}}{ %
        \stChoice{\stLab[i]}{\tyGround[i]} \stSeq (\gtProj[\rolesR]{\gtG[i]}{\roleP})%
      }$ and $ \gtProj[\rolesR]{\gtGi}{\roleP} =
       \stIntSum{\roleQ}{i \in \stIdxRemoveCrash{I}{\stLab[i]}}{ %
        \stChoice{\stLab[i]}{\tyGround[i]} \stSeq (\gtProj[\rolesR]{\gtGi[i]}{\roleP})%
      }$. For $\roleQ$, we have $\gtProj[\rolesR]{\gtG}{\roleQ} =
       \stExtSum{\roleP}{i \in I}{%
        \stChoice{\stLab[i]}{\tyGround[i]} \stSeq (\gtProj[\rolesR]{\gtG[i]}{\roleQ})%
      }$ and $\gtProj[\rolesR]{\gtGi}{\roleQ} =
       \stExtSum{\roleP}{i \in I}{%
        \stChoice{\stLab[i]}{\tyGround[i]} \stSeq (\gtProj[\rolesR]{\gtGi[i]}{\roleQ})%
      }$. Take an arbitrary $j \in I$. %
      Let $\stEnvApp{\stEnv[j]}{\roleP} =
      \gtProj[\rolesR]{\gtG[j]}{\roleP}$, $\stEnvApp{\stEnv[j]}{\roleQ} =
      \gtProj[\rolesR]{\gtG[j]}{\roleQ}$, $\stEnvApp{\stEnv[j]}{\roleR} =
     \stEnvApp{\stEnv}{\roleR}$ for $\roleR \in \dom{\stEnv} \setminus \setenum{\roleP, \roleQ}$, $\qEnv[j] = \qEnv$.
     We show $\stEnvAssoc{\gtWithCrashedRoles{\rolesC}{\gtG[j]}}{\stEnv[j]; \qEnv[j]}{\rolesR}$.
      \begin{enumerate}[label=(A\arabic*)]
      \item We want to show $\forall \roleS \in \gtRoles{\gtG[j]}: \stEnvApp{\stEnv[j]}{\roleS} \stSub \gtProj[\rolesR]{\gtG[j]}{\roleS}$. We consider 
      three subcases:
      \begin{itemize}[leftmargin=*]
      \item $\roleS = \roleP$ (meaning that $\roleP \in \gtRoles{\gtG[j]}$): trivial by $\stEnvApp{\stEnv[j]}{\roleP} =
      \gtProj[\rolesR]{\gtG[j]}{\roleP}$ and the reflexivity of subtyping. 
      \item $\roleS = \roleQ$ (meaning that $\roleQ \in \gtRoles{\gtG[j]}$): trivial by $\stEnvApp{\stEnv[j]}{\roleQ} =
      \gtProj[\rolesR]{\gtG[j]}{\roleQ}$ and the reflexivity of subtyping. 
      \item $\roleS \neq \roleP$ and $\roleS \neq \roleQ$: by association \eqref{eq:soundness:context_1_assoc_pre} and 
      $\stEnvApp{\stEnv[j]}{\roleS} =
     \stEnvApp{\stEnv}{\roleS}$, we have $\stEnvApp{\stEnv[j]}{\roleS} \stSub 
       \stMerge{i \in I}{\gtProj[\rolesR]{\gtG[i]}{\roleS}}$. 
        Then, by \Cref{lem:merge-subtyping} and transitivity of subtyping,  we conclude  $\stEnvApp{\stEnv[j]}{\roleS}
        \stSub \gtProj[\rolesR]{\gtG[j]}{\roleS}$, as desired. 
      \end{itemize}
       \item No change here.
      \item No change here if $\roleP \in \gtRoles{\gtG[j]}$ and $\roleQ \in \gtRoles{\gtG[j]}$. Otherwise, consider the case that $\roleP \notin \gtRoles{\gtG[j]}$: 
      with $\roleP \notin \gtRolesCrashed{\gtG[j]}$, by \Cref{lem:not-in-roles-end}, we have $\gtProj[\rolesR]{\gtG[j]}{\roleP} = \stEnd$. 
      Therefore, we know from $\stEnvApp{\stEnv[j]}{\roleP} =
      \gtProj[\rolesR]{\gtG[j]}{\roleP}$ that $ \stEnvApp{\stEnv[j]}{\roleP} = \stEnd$, as required. The argument for 
      $\roleQ \notin  \gtRoles{\gtG[j]}$ follows similarly. 
      \item Trivial by association~\eqref{eq:soundness:context_1_assoc_pre}, \cref{def:assoc-queue}, and $\qEnv[j] = \qEnv$. 
      \end{enumerate}
      By inductive hypothesis, there exists $\stEnvi[j]; \qEnvi[j]$ such that
      $\stEnv[j]; \qEnv[j]  \stEnvMoveGenAnnot
      \stEnvi[j]; \qEnvi[j]$ and
      $\stEnvAssoc{\gtWithCrashedRoles{\rolesCi}{\gtGi[j]}}{\stEnvi[j]; \qEnvi[j]}{\rolesR}$. Since
      $\ltsSubject{\stEnvAnnotGenericSym} \notin \setenum{\roleP, \roleQ}$, we apply \cref{lem:stenv-red:trivial-2}, which gives
      $\stEnvApp{\stEnv[j]}{\roleP} = \stEnvApp{\stEnvi[j]}{\roleP}$ and
      $\stEnvApp{\stEnv[j]}{\roleQ} = \stEnvApp{\stEnvi[j]}{\roleQ}$.

      We now construct a configuration $\stEnvi; \qEnvi$ and show 
      $\stEnv; \qEnv \stEnvMoveGenAnnot \stEnvi; \qEnvi$ and 
      $\stEnvAssoc{\gtWithCrashedRoles{\rolesCi}{\gtGi}}{\stEnvi; \qEnvi}{\rolesR}$.

      Let  $\stEnvApp{\stEnvi}{\roleP}
      = \stEnvApp{\stEnv}{\roleP}$, 
       $\stEnvApp{\stEnvi}{\roleQ}
      =
      \stEnvApp{\stEnv}{\roleQ}$, 
       $\stEnvApp{\stEnvi}{\roleR} =
      \stMerge{i \in I}{\stEnvApp{\stEnvi[i]}{\roleR}}$ for
    $\roleR \in \dom{\stEnv} \setminus
      \setenum{\roleP, \roleQ}$, $\qEnvi = \qEnvi[j]$ with an arbitrary $j \in I$.
      
      For the first part that $\stEnv; \qEnv \stEnvMoveGenAnnot \stEnvi; \qEnvi$: 
      
      We know that for any $i \in I$, $\stEnvi[i]$ is obtained from $\stEnv[i]$ 
      by updating $\ltsSubject{\stEnvAnnotGenericSym}$ to a fixed type $\stT$. It follows that 
      for any $i, k \in I$ and for any $\roleR \notin \setenum{\roleP, \roleQ}$, $\stEnvApp{\stEnvi[i]}{\roleR} 
      = 
      \stEnvApp{\stEnvi[k]}{\roleR}$, and hence, $\stMerge{i \in I}{\stEnvApp{\stEnvi[i]}{\roleR}} = 
      \stEnvApp{\stEnvi[j]}{\roleR}$ with an arbitrary $j \in J$. Therefore, we have that $\stEnvi$ is obtained from 
      $\stEnv$ by updating $\ltsSubject{\stEnvAnnotGenericSym}$ to $\stEnvApp{\stEnvi[j]}{\ltsSubject{\stEnvAnnotGenericSym}}$,  
      i.e., $\stEnvi = \stEnvUpd{\stEnv}{\ltsSubject{\stEnvAnnotGenericSym}}{\stEnvApp{\stEnvi[j]}{\ltsSubject{\stEnvAnnotGenericSym}}}$.     
      
      We apply \cref{lem:stenv-red:non-trivial} on $\stEnv[j]; \qEnv[j]
  \stEnvMoveGenAnnot \stEnvi[j]; \qEnvi[j]$,
  $\stEnvAssoc{\gtWithCrashedRoles{\rolesC}{\gtG}}{\stEnv; \qEnv}{\rolesR}$, 
  $\stEnvAssoc{\gtWithCrashedRoles{\rolesC}{\gtG[j]}}{\stEnv[j]; \qEnv[j]}{\rolesR}$, 
  $\stEnvApp{\stEnv}{\ltsSubject{\stEnvAnnotGenericSym}} = \stEnvApp{\stEnv[j]}{\ltsSubject{\stEnvAnnotGenericSym}}$, 
  and 
  $\stEnvi =  \stEnvUpd{\stEnv}{\ltsSubject{\stEnvAnnotGenericSym}}{\stEnvApp{\stEnvi[j]}{\ltsSubject{\stEnvAnnotGenericSym}}}$ to get the thesis. 
      
      For the second part that  $\stEnvAssoc{\gtWithCrashedRoles{\rolesCi}{\gtGi}}{\stEnvi; \qEnvi}{\rolesR}$: 
      \begin{enumerate}[label=(A\arabic*)]
      \item We want to show $\forall \roleR \in \gtRoles{\gtGi}: \stEnvApp{\stEnvi}{\roleR}
           \stSub \gtProj[\rolesR]{\gtGi}{\roleR}$. We consider three subcases:
           \begin{itemize}[leftmargin=*]
           \item $\roleR = \roleP$: by association \eqref{eq:soundness:context_1_assoc_pre} and 
           \cref{lem:subtyping-invert}, we obtain $\stEnvApp{\stEnv}{\roleP} =  
           \stIntSum{\roleQ}{i \in I'}{ %
        \stChoice{\stLab[i]}{\tyGround[i]} \stSeq \stT[i]%
      }$ where $I' \subseteq \stIdxRemoveCrash{I}{\stLab[i]}$ and $\forall i \in I': \stT[i] \stSub 
      \gtProj[\rolesR]{\gtG[i]}{\roleP}$. Since $\stEnvApp{\stEnv[i]}{\roleP} = 
       \gtProj[\rolesR]{\gtG[i]}{\roleP} = 
      \stEnvApp{\stEnvi[i]}{\roleP}$ for each $i \in I$, with association 
      $\stEnvAssoc{\gtWithCrashedRoles{\rolesCi}{\gtGi[i]}}{\stEnvi[i]; \qEnvi[i]}{\rolesR}$, we get 
      $\gtProj[\rolesR]{\gtG[i]}{\roleP} = \stEnvApp{\stEnvi[i]}{\roleP} \stSub 
      \gtProj[\rolesR]{\gtGi[i]}{\roleP}$. Then by transitivity of subtyping, 
      it holds that $\forall i \in I': \stT[i] \stSub \gtProj[\rolesR]{\gtGi[i]}{\roleP}$, which follows that 
      $ \stIntSum{\roleQ}{i \in I'}{ %
        \stChoice{\stLab[i]}{\tyGround[i]} \stSeq \stT[i]%
      } \stSub 
      \stIntSum{\roleQ}{i \in \stIdxRemoveCrash{I}{\stLab[i]}}{ %
        \stChoice{\stLab[i]}{\tyGround[i]} \stSeq (\gtProj[\rolesR]{\gtGi[i]}{\roleP})%
      }$, as desired. 
     
      \item $\roleR = \roleQ$: by association \eqref{eq:soundness:context_1_assoc_pre} and 
           \cref{lem:subtyping-invert}, we obtain $\stEnvApp{\stEnv}{\roleQ} =  
          \stExtSum{\roleP}{i \in I'}{%
        \stChoice{\stLab[i]}{\tyGround[i]} \stSeq \stT[i]%
      }$ where $I \subseteq I'$ and $\forall i \in I: \stT[i] \stSub 
      \gtProj[\rolesR]{\gtG[i]}{\roleQ}$. Since $\stEnvApp{\stEnv[i]}{\roleQ} = 
       \gtProj[\rolesR]{\gtG[i]}{\roleQ} = 
      \stEnvApp{\stEnvi[i]}{\roleQ}$ for each $i \in I$, with association 
      $\stEnvAssoc{\gtWithCrashedRoles{\rolesCi}{\gtGi[i]}}{\stEnvi[i]; \qEnvi[i]}{\rolesR}$, we get 
      $\gtProj[\rolesR]{\gtG[i]}{\roleQ} = \stEnvApp{\stEnvi[i]}{\roleQ} \stSub 
      \gtProj[\rolesR]{\gtGi[i]}{\roleQ}$. Then by transitivity of subtyping, 
      it holds that $\forall i \in I: \stT[i] \stSub \gtProj[\rolesR]{\gtGi[i]}{\roleQ}$, which follows that 
      $ \stExtSum{\roleP}{i \in I'}{ %
        \stChoice{\stLab[i]}{\tyGround[i]} \stSeq \stT[i]%
      } \stSub 
      \stExtSum{\roleP}{i \in I}{ %
        \stChoice{\stLab[i]}{\tyGround[i]} \stSeq (\gtProj[\rolesR]{\gtGi[i]}{\roleQ})%
      }$, as desired. 
     
      \item $\roleR \neq \roleP$ and $\roleR \neq \roleQ$: by association 
      $\stEnvAssoc{\gtWithCrashedRoles{\rolesCi}{\gtGi[i]}}{\stEnvi[i]; \qEnvi[i]}{\rolesR}$ for each 
      $i \in I$, it holds that $\stEnvApp{\stEnvi[i]}{\roleR} \stSub \gtProj[\rolesR]{\gtGi[i]}{\roleR}$ for each 
      $i \in I$. Then by applying \cref{lem:subtype:merge-subty}, we conclude with $\stEnvApp{\stEnvi}{\roleR} = \stMerge{i \in I}{\stEnvApp{\stEnvi[i]}{\roleR}}
      \stSub  \stMerge{i \in I}{\gtProj[\rolesR]{\gtGi[i]}{\roleR}} = \gtProj[\rolesR]{\gtGi}{\roleR}$, as desired. 
     \end{itemize}
      
       \item By association $\stEnvAssoc{\gtWithCrashedRoles{\rolesCi}{\gtGi[i]}}{\stEnvi[i]; \qEnvi[i]}{\rolesR}$ for each 
      $i \in I$, it holds that $\forall \roleR \in \rolesCi: \stEnvApp{\stEnvi[i]}{\roleR}  = \stStop$, which follows that for any 
      $\roleR \in \rolesCi$, $\stEnvApp{\stEnvi}{\roleR} =  \stMerge{i \in I}{\stEnvApp{\stEnvi[i]}{\roleR}} = \stStop$, as desired.
      \item For any endpoint $\roleR$ in $\gtGi$, $\roleR$ is an endpoint in each $\gtGi[i]$ with $i \in I$. Then by association 
      $\stEnvAssoc{\gtWithCrashedRoles{\rolesCi}{\gtGi[i]}}{\stEnvi[i]; \qEnvi[i]}{\rolesR}$, we have $\stEnvApp{\stEnvi[i]}{\roleR} = 
      \stEnd$, which follows $\stEnvApp{\stEnvi}{\roleR} =  \stMerge{i \in I}{\stEnvApp{\stEnvi[i]}{\roleR}} = \stEnd$, as desired. 
      \item By \cref{lem:stenv-queue-red:det}, it holds that for any $i, j \in I$, $\qEnvi[i] = \qEnvi[j]$. Take an arbitrary 
      $\qEnvi[j]$ with $j \in I$. Note here $\qEnvi = \qEnvi[j]$. With the fact 
      that $\qEnvi[i]$ is associated with $\gtGi[i]$ for any $i \in I$, we have that $\forall i \in I: \qEnvi[j] \text{~is associated with~} \gtGi[i]$, and hence, 
       $\forall i \in I: \qEnvi \text{~is associated with~} \gtGi[i]$. 
       We are left to show that $\stEnvApp{\qEnvi}{\roleP, \roleQ} = \stQEmpty$, which is obtained by applying \cref{lem:stenv-red:trivial-3}
      on $\stEnvApp{\qEnv}{\roleP, \roleQ} = \stQEmpty$ and  $\stEnv[j]; \qEnv
  \stEnvMoveGenAnnot \stEnvi[j]; \qEnvi[j]$\,
  with\, 
  $\ltsSubject{\stEnvAnnotGenericSym} \notin \setenum{\roleP, \roleQ}$. 
      \end{enumerate}
   
 \item $\roleQMaybeCrashed = \roleQCrashed$:
     
      Note that $\roleQ \in \rolesC$.
      For $\roleP$, we have
     $\gtProj[\rolesR]{\gtG}{\roleP} =
       \stIntSum{\roleQ}{i \in \stIdxRemoveCrash{I}{\stLab[i]}}{ %
        \stChoice{\stLab[i]}{\tyGround[i]} \stSeq (\gtProj[\rolesR]{\gtG[i]}{\roleP})%
      }$ and $ \gtProj[\rolesR]{\gtGi}{\roleP} =
       \stIntSum{\roleQ}{i \in \stIdxRemoveCrash{I}{\stLab[i]}}{ %
        \stChoice{\stLab[i]}{\tyGround[i]} \stSeq (\gtProj[\rolesR]{\gtGi[i]}{\roleP})%
      }$. 
      Take an arbitrary $j \in I$. %
      Let $\stEnvApp{\stEnv[j]}{\roleP} =
      \gtProj[\rolesR]{\gtG[j]}{\roleP}$, $\stEnvApp{\stEnv[j]}{\roleQ} =
      \stStop$, $\stEnvApp{\stEnv[j]}{\roleR} =
     \stEnvApp{\stEnv}{\roleR}$ for $\roleR \in \dom{\stEnv} \setminus \setenum{\roleP, \roleQ}$, $\qEnv[j] = \qEnv$.
     We show $\stEnvAssoc{\gtWithCrashedRoles{\rolesC}{\gtG[j]}}{\stEnv[j]; \qEnv[j]}{\rolesR}$.
      \begin{enumerate}[label=(A\arabic*)]
      \item We want to show $\forall \roleS \in \gtRoles{\gtG[j]}: \stEnvApp{\stEnv[j]}{\roleS} \stSub \gtProj[\rolesR]{\gtG[j]}{\roleS}$. We consider 
      two subcases:
      \begin{itemize}[leftmargin=*]
      \item $\roleS = \roleP$ (meaning that $\roleP \in \gtRoles{\gtG[j]}$): trivial by $\stEnvApp{\stEnv[j]}{\roleP} =
      \gtProj[\rolesR]{\gtG[j]}{\roleP}$ and the reflexivity of subtyping. 
      \item $\roleS \neq \roleP$: since $\roleQ \in \rolesC$, we have $\roleS \neq \roleQ$. By association \eqref{eq:soundness:context_1_assoc_pre} and 
      $\stEnvApp{\stEnv[j]}{\roleS} =
     \stEnvApp{\stEnv}{\roleS}$, we have $\stEnvApp{\stEnv[j]}{\roleS} \stSub 
       \stMerge{i \in I}{\gtProj[\rolesR]{\gtG[i]}{\roleS}}$. 
        Then, by \Cref{lem:merge-subtyping} and transitivity of subtyping,  we conclude  $\stEnvApp{\stEnv[j]}{\roleS}
        \stSub \gtProj[\rolesR]{\gtG[j]}{\roleS}$, as desired. 
      \end{itemize}
       \item No change here.
      \item No change here if $\roleP \in \gtRoles{\gtG[j]}$. Otherwise, consider the case that $\roleP \notin \gtRoles{\gtG[j]}$: 
      with $\roleP \notin \gtRolesCrashed{\gtG[j]}$, by \Cref{lem:not-in-roles-end}, we have $\gtProj[\rolesR]{\gtG[j]}{\roleP} = \stEnd$. 
      Therefore, we know from $\stEnvApp{\stEnv[j]}{\roleP} =
      \gtProj[\rolesR]{\gtG[j]}{\roleP}$ that $ \stEnvApp{\stEnv[j]}{\roleP} = \stEnd$, as required. 
     \item Trivial by association~\eqref{eq:soundness:context_1_assoc_pre}, \cref{def:assoc-queue}, and $\qEnv[j] = \qEnv$. 
      \end{enumerate}
      By inductive hypothesis, there exists $\stEnvi[j]; \qEnvi[j]$ such that
      $\stEnv[j]; \qEnv[j]  \stEnvMoveGenAnnot
      \stEnvi[j]; \qEnvi[j]$ and
      $\stEnvAssoc{\gtWithCrashedRoles{\rolesCi}{\gtGi[j]}}{\stEnvi[j]; \qEnvi[j]}{\rolesR}$. Since
      $\ltsSubject{\stEnvAnnotGenericSym} \notin \setenum{\roleP, \roleQ}$, we apply \cref{lem:stenv-red:trivial-2}, which gives
      $\stEnvApp{\stEnv[j]}{\roleP} = \stEnvApp{\stEnvi[j]}{\roleP}$ and
      $\stEnvApp{\stEnv[j]}{\roleQ} = \stEnvApp{\stEnvi[j]}{\roleQ} = \stStop$.
      We know from $\stEnvApp{\stEnvi[j]}{\roleQ} = \stStop$ and   
      $\stEnvAssoc{\gtWithCrashedRoles{\rolesCi}{\gtGi[j]}}{\stEnvi[j]; \qEnvi[j]}{\rolesR}$ that $\roleQ \in \rolesCi$. 

      We now construct a configuration $\stEnvi; \qEnvi$ and show 
      $\stEnv; \qEnv \stEnvMoveGenAnnot \stEnvi; \qEnvi$ and 
      $\stEnvAssoc{\gtWithCrashedRoles{\rolesCi}{\gtGi}}{\stEnvi; \qEnvi}{\rolesR}$.

      Let  $\stEnvApp{\stEnvi}{\roleP}
      = \stEnvApp{\stEnv}{\roleP}$, 
       $\stEnvApp{\stEnvi}{\roleQ}
      =
      \stEnvApp{\stEnv}{\roleQ}
      = \stStop$, 
       $\stEnvApp{\stEnvi}{\roleR} =
      \stMerge{i \in I}{\stEnvApp{\stEnvi[i]}{\roleR}}$ for
    $\roleR \in \dom{\stEnv} \setminus
      \setenum{\roleP, \roleQ}$, $\qEnvi = \qEnvi[j]$ with an arbitrary $j \in I$.
      
      For the first part that $\stEnv; \qEnv \stEnvMoveGenAnnot \stEnvi; \qEnvi$: 
      
        We know that for any $i \in I$, $\stEnvi[i]$ is obtained from $\stEnv[i]$ 
      by updating $\ltsSubject{\stEnvAnnotGenericSym}$ to a fixed type $\stT$. It follows that 
      for any $i, k \in I$ and for any $\roleR \notin \setenum{\roleP, \roleQ}$, $\stEnvApp{\stEnvi[i]}{\roleR} 
      = 
      \stEnvApp{\stEnvi[k]}{\roleR}$, and hence, $\stMerge{i \in I}{\stEnvApp{\stEnvi[i]}{\roleR}} = 
      \stEnvApp{\stEnvi[j]}{\roleR}$ with an arbitrary $j \in J$. Therefore, we have that $\stEnvi$ is obtained from 
      $\stEnv$ by updating $\ltsSubject{\stEnvAnnotGenericSym}$ to $\stEnvApp{\stEnvi[j]}{\ltsSubject{\stEnvAnnotGenericSym}}$,  
      i.e., $\stEnvi = \stEnvUpd{\stEnv}{\ltsSubject{\stEnvAnnotGenericSym}}{\stEnvApp{\stEnvi[j]}{\ltsSubject{\stEnvAnnotGenericSym}}}$.     
      
      We apply  \cref{lem:stenv-red:non-trivial} on $\stEnv[j]; \qEnv[j]
  \stEnvMoveGenAnnot \stEnvi[j]; \qEnvi[j]$,
  $\stEnvAssoc{\gtWithCrashedRoles{\rolesC}{\gtG}}{\stEnv; \qEnv}{\rolesR}$, 
  $\stEnvAssoc{\gtWithCrashedRoles{\rolesC}{\gtG[j]}}{\stEnv[j]; \qEnv[j]}{\rolesR}$, 
  $\stEnvApp{\stEnv}{\ltsSubject{\stEnvAnnotGenericSym}} = \stEnvApp{\stEnv[j]}{\ltsSubject{\stEnvAnnotGenericSym}}$, 
  and 
  $\stEnvi =  \stEnvUpd{\stEnv}{\ltsSubject{\stEnvAnnotGenericSym}}{\stEnvApp{\stEnvi[j]}{\ltsSubject{\stEnvAnnotGenericSym}}}$ to get the thesis. 
      
      For the second part that  $\stEnvAssoc{\gtWithCrashedRoles{\rolesCi}{\gtGi}}{\stEnvi; \qEnvi}{\rolesR}$: 
      \begin{enumerate}[label=(A\arabic*)]
      \item We want to show $\forall \roleR \in \gtRoles{\gtGi}: \stEnvApp{\stEnvi}{\roleR}
           \stSub \gtProj[\rolesR]{\gtGi}{\roleR}$. We consider two subcases:
           \begin{itemize}[leftmargin=*]
           \item $\roleR = \roleP$: by association \eqref{eq:soundness:context_1_assoc_pre} and 
           \cref{lem:subtyping-invert}, we obtain $\stEnvApp{\stEnv}{\roleP} =  
           \stIntSum{\roleQ}{i \in I'}{ %
        \stChoice{\stLab[i]}{\tyGround[i]} \stSeq \stT[i]%
      }$ where $I' \subseteq \stIdxRemoveCrash{I}{\stLab[i]}$ and $\forall i \in I': \stT[i] \stSub 
      \gtProj[\rolesR]{\gtG[i]}{\roleP}$. Since $\stEnvApp{\stEnv[i]}{\roleP} = 
       \gtProj[\rolesR]{\gtG[i]}{\roleP} = 
      \stEnvApp{\stEnvi[i]}{\roleP}$ for each $i \in I$, with association 
      $\stEnvAssoc{\gtWithCrashedRoles{\rolesCi}{\gtGi[i]}}{\stEnvi[i]; \qEnvi[i]}{\rolesR}$, we get 
      $\gtProj[\rolesR]{\gtG[i]}{\roleP} = \stEnvApp{\stEnvi[i]}{\roleP} \stSub 
      \gtProj[\rolesR]{\gtGi[i]}{\roleP}$. Then by transitivity of subtyping, 
      it holds that $\forall i \in I': \stT[i] \stSub \gtProj[\rolesR]{\gtGi[i]}{\roleP}$, which follows that 
      $ \stIntSum{\roleQ}{i \in I'}{ %
        \stChoice{\stLab[i]}{\tyGround[i]} \stSeq \stT[i]%
      } \stSub 
      \stIntSum{\roleQ}{i \in \stIdxRemoveCrash{I}{\stLab[i]}}{ %
        \stChoice{\stLab[i]}{\tyGround[i]} \stSeq (\gtProj[\rolesR]{\gtGi[i]}{\roleP})%
      }$, as desired.

      \item $\roleR \neq \roleP$: since $\roleQ \in \rolesCi$, we have $\roleR \neq \roleQ$. By association 
      $\stEnvAssoc{\gtWithCrashedRoles{\rolesCi}{\gtGi[i]}}{\stEnvi[i]; \qEnvi[i]}{\rolesR}$ for each 
      $i \in I$, it holds that $\stEnvApp{\stEnvi[i]}{\roleR} \stSub \gtProj[\rolesR]{\gtGi[i]}{\roleR}$ for each 
      $i \in I$. Then by applying \cref{lem:subtype:merge-subty}, we conclude with $\stEnvApp{\stEnvi}{\roleR} = \stMerge{i \in I}{\stEnvApp{\stEnvi[i]}{\roleR}}
      \stSub  \stMerge{i \in I}{\gtProj[\rolesR]{\gtGi[i]}{\roleR}} = \gtProj[\rolesR]{\gtGi}{\roleR}$, as desired. 
     \end{itemize}
      
       \item We want to show $\forall \roleR \in \rolesCi: \stEnvApp{\stEnvi}{\roleR} = \stStop$. 
          We consider two subcases:
           \begin{itemize}[leftmargin=*]
       \item $\roleR = \roleQ$: trivial by $\stEnvApp{\stEnvi}{\roleQ} = \stStop$. 
       \item $\roleR \neq \roleQ$: 
       by association $\stEnvAssoc{\gtWithCrashedRoles{\rolesCi}{\gtGi[i]}}{\stEnvi[i]; \qEnvi[i]}{\rolesR}$ for each 
      $i \in I$, it holds that  $\forall \roleR \in \rolesCi: \stEnvApp{\stEnvi[i]}{\roleR}  = \stStop$, which follows that for any 
      $\roleR \in \rolesCi$ with $\roleR \neq \roleQ$, $\stEnvApp{\stEnvi}{\roleR} =  \stMerge{i \in I}{\stEnvApp{\stEnvi[i]}{\roleR}} = \stStop$, as desired.
      \end{itemize}
     
      \item For any endpoint $\roleR$ in $\gtGi$, $\roleR$ is an endpoint in each $\gtGi[i]$ with $i \in I$. Then by association 
      $\stEnvAssoc{\gtWithCrashedRoles{\rolesCi}{\gtGi[i]}}{\stEnvi[i]; \qEnvi[i]}{\rolesR}$, we have $\stEnvApp{\stEnvi[i]}{\roleR} = 
      \stEnd$, which follows $\stEnvApp{\stEnvi}{\roleR} =  \stMerge{i \in I}{\stEnvApp{\stEnvi[i]}{\roleR}} = \stEnd$, as desired. 
     
      \item By \cref{lem:stenv-queue-red:det}, it holds that for any $i, j \in I$, $\qEnvi[i] = \qEnvi[j]$. Take an arbitrary 
      $\qEnvi[j]$ with $j \in I$. Note here $\qEnvi = \qEnvi[j]$. With the fact 
      that $\qEnvi[i]$ is associated with $\gtGi[i]$ for any $i \in I$, we have that $\forall i \in I: \qEnvi[j] \text{~is associated with~} \gtGi[i]$, and hence, 
       $\forall i \in I: \qEnvi \text{~is associated with~} \gtGi[i]$. 
       We are left to show that $\stEnvApp{\qEnvi}{\roleP, \roleQ} = \stQUnavail$, which is obtained by applying \cref{lem:stenv-red:trivial-3}
      on $\stEnvApp{\qEnv}{\roleP, \roleQ} = \stQUnavail$ and  $\stEnv[j]; \qEnv
  \stEnvMoveGenAnnot \stEnvi[j]; \qEnvi[j]$\,
  with\, 
  $\ltsSubject{\stEnvAnnotGenericSym} \notin \setenum{\roleP, \roleQ}$. 
      \end{enumerate}

      \end{itemize}
  \item Case \inferrule{\iruleGtMoveCtxi}: similar to the case  \inferrule{\iruleGtMoveCtx}. 
     \qedhere
    
       \end{itemize}
\end{proof}

\thmProjCompleteness*
\begin{proof}
  By induction on reductions of configuration $\stEnv; \qEnv \stEnvMoveGenAnnot
  \stEnvi; \qEnvi$.
  \begin{itemize}[leftmargin=*]
    \item Case $\inferrule{\iruleTCtxOut}$:

    From the premise, we have:
    \begin{gather}
      \stEnvAssoc{\gtWithCrashedRoles{\rolesC}{\gtG}}{\stEnv; \qEnv}{\rolesR}
      \\
      \stEnvApp{\stEnv}{%
        \roleP%
      } =
      \stIntSum{\roleQ}{i \in I}{\stChoice{\stLab[i]}{\tyGround[i]} \stSeq \stT[i]}%
      \label{eq:stenvp-send}
      \\
      \stEnvAnnotGenericSym = \stEnvOutAnnotSmall{\roleP}{\roleQ}{\stChoice{\stLab[k]}{\tyGround[k]}}
      \\
      k \in I
      \\
      \stEnvi =
      \stEnvUpd{\stEnv}{\roleP}{\stT[k]}
      \label{eq:stenvp-send-type-cxt-cons}
      \\
      \qEnvi =
      \stEnvUpd{\qEnv}{\roleP, \roleQ}{%
        \stQCons{%
          \stEnvApp{\qEnv}{\roleP, \roleQ}
        }{%
          \stQMsg{\stLab[k]}{\tyGround[k]}
        }
      }
      \label{eq:stenvp-send-queue-cons}%
    \end{gather}

    Apply \cref{lem:inv-proj} \cref{item:proj-inv:send} on
    \eqref{eq:stenvp-send}, we have two cases.
    \begin{itemize}[leftmargin=*]
      \item Case (1):
    \begin{gather}
      \unfoldOne{\gtG} =
        \gtComm{\roleP}{\roleQMaybeCrashed}{i \in I'}{\gtLab[i]}{\tyGroundi[i]}{\gtG[i]}
      \\
      I \subseteq I'
      \\
      \forall i \in I:
      \stLab[i] = \gtLab[i],
      \stT[i] \stSub (\gtProj[\rolesR]{\gtG[i]}{\roleP}),
      \tyGround[i] = \tyGroundi[i]
      \label{eq:stenvp-send-sub-cons}
    \end{gather}

    We have two further subcases here: namely $\roleQMaybeCrashed =
    \roleQ$ and $\roleQMaybeCrashed = \roleQCrashed$.

      \begin{itemize}[leftmargin=*]
        \item
        In the case of $\roleQMaybeCrashed = \roleQ$,
        we have $k \in I \subseteq I'$ and $\stCrashLab$ does not appear in
        internal choices, we apply $\inferrule{\iruleGtMoveOut}$ (via
        \cref{lem:gt-lts-unfold}) to get:
        \begin{gather}
          \gtWithCrashedRoles{\rolesC}{\gtG}
          \gtMove[\stEnvAnnotGenericSym]{\rolesR}
          \gtWithCrashedRoles{\rolesC}{
            \gtCommTransit{\roleP}{\roleQ}{i \in I'}{\gtLab[i]}{\tyGroundi[i]}{\gtG[i]}{k}
          }
        \end{gather}

        We are now left to show $\stEnvAssoc{
          \gtWithCrashedRoles{\rolesC}{
            \gtCommTransit{\roleP}{\roleQ}{i \in I'}{\gtLab[i]}{\tyGroundi[i]}{\gtG[i]}{k}
          }
        }{\stEnvi; \qEnvi}{\rolesR}$.

        Note that $\gtGi = \gtCommTransit{\roleP}{\roleQ}{i \in I'}{\gtLab[i]}{\tyGroundi[i]}{\gtG[i]}{k}$ here.
        By \cref{def:active_crashed_roles}, we have  $\gtRoles{\gtG} = \gtRoles{\gtGi}$ and
        $\gtRolesCrashed{\gtG} =  \gtRolesCrashed{\gtGi}$. Furthermore, with $\dom{\stEnv} = \dom{\stEnvi}$ and
        $\roleP \in \gtRoles{\gtG}$, we can set $\stEnvi =  \stEnvUpd{\stEnv[\gtG]}{\roleP}{\stT[k]}\stEnvComp \stEnv[\stStopSym] \stEnvComp
  \stEnv[\stEnd]$.

        \begin{enumerate}[label=(A\arabic*)]
          \item First, we want to show that $\dom{\stEnvUpd{\stEnv[\gtG]}{\roleP}{\stT[k]}} =  \setcomp{{\roleP}}{\roleP \in \gtRoles{\gtGi})}$, which follows directly
          from the fact that $\dom{\stEnv[\gtG]} = \setcomp{{\roleP}}{\roleP \in \gtRoles{\gtG})}$, $\gtRoles{\gtG} = \gtRoles{\gtGi}$, and
          $\dom{\stEnvUpd{\stEnv[\gtG]}{\roleP}{\stT[k]}} = \dom{\stEnv[\gtG]}$.

          Then, we are left to show $ \forall \roleR \in \gtRoles{\gtGi}:
        \stEnvApp{\stEnvUpd{\stEnv[\gtG]}{\roleP}{\stT[k]}}{{\roleR}}
        \stSub
        \gtProj[\rolesR]{\gtGi}{\roleR}
      $. We consider two cases:
      \begin{itemize}[leftmargin=*]
      \item $\roleR = \roleP$: we have $\gtProj[\rolesR]{\gtGi}{\roleP} = \gtProj[\rolesR]{\gtG[k]}{\roleP}$. By  \eqref{eq:stenvp-send-sub-cons},
      we obtain that $ \stEnvApp{\stEnvUpd{\stEnv[\gtG]}{\roleP}{\stT[k]}}{{\roleP}} = \stT[k] \stSub \gtProj[\rolesR]{\gtG[k]}{\roleP} = \gtProj[\rolesR]{\gtGi}{\roleP}$, as desired.
      \item $\roleR \neq \roleP$: since $\stEnvAssoc{\gtWithCrashedRoles{\rolesC}{\gtG}}{\stEnv; \qEnv}{\rolesR}$,
      we have that $ \stEnvApp{\stEnv[\gtG]}{{\roleR}} \stSub \gtProj[\rolesR]{\gtG}{\roleR} =   \stMerge{i \in I'}{\gtProj[\rolesR]{\gtG[i]}{\roleR}}$.
      Furthermore, by $\stEnvApp{\stEnvUpd{\stEnv[\gtG]}{\roleP}{\stT[k]}}{{\roleR}} = \stEnvApp{\stEnv[\gtG]}{{\roleR}}$ and
      $\gtProj[\rolesR]{\gtGi}{\roleR} =   \stMerge{i \in I'}{\gtProj[\rolesR]{\gtG[i]}{\roleR}}$, we have that
      $\stEnvApp{\stEnvUpd{\stEnv[\gtG]}{\roleP}{\stT[k]}}{{\roleR}} \stSub \gtProj[\rolesR]{\gtGi}{\roleR}$, as desired.
      \end{itemize}

           \item Trivial by $\stEnvAssoc{\gtWithCrashedRoles{\rolesC}{\gtG}}{\stEnv; \qEnv}{\rolesR}$.
          \item Trivial by $\stEnvAssoc{\gtWithCrashedRoles{\rolesC}{\gtG}}{\stEnv; \qEnv}{\rolesR}$.
          \item By \eqref{eq:stenvp-send-queue-cons}, we have  $ \stEnvApp{\qEnvi}{\roleP, \roleQ} =
        \stQCons{%
          \stEnvApp{\qEnv}{\roleP, \roleQ}
        }{%
          \stQMsg{\stLab[k]}{\tyGround[k]}
        }
      $. Then we only need to show that $\forall i \in I':
      \stEnvUpd{\qEnvi}{\roleP, \roleQ}{ \stEnvApp{\qEnv}{\roleP, \roleQ}} \text{~(note that~} \stEnvUpd{\qEnvi}{\roleP, \roleQ}{ \stEnvApp{\qEnv}{\roleP, \roleQ}} = \qEnv) \text{~is associated with~} \gtWithCrashedRoles{\rolesC}{\gtG[i]}$, which follows directly from
       the fact that $\qEnv$ is associated with $\gtWithCrashedRoles{\rolesC}{\gtG}$ and \cref{def:assoc-queue}.

        \end{enumerate}

        \item
        In the case of $\roleQMaybeCrashed = \roleQCrashed$,
        we have $k \in I \subseteq I'$, $\roleQ \in \rolesC$, and
        $\stCrashLab$ does not appear in
        internal choices.
        We apply $\inferrule{\iruleGtMoveOrph}$ (via
        \cref{lem:gt-lts-unfold}) to get:
        \begin{gather}
          \gtWithCrashedRoles{\rolesC}{\gtG}
          \gtMove[\stEnvAnnotGenericSym]{\rolesR}
          \gtWithCrashedRoles{\rolesC}{
            \gtG[k]
          }
        \end{gather}
        We are now left to show $\stEnvAssoc{
          \gtWithCrashedRoles{\rolesC}{
            \gtG[k]
          }
        }{\stEnvi; \qEnvi}{\rolesR}$.

        Note that $\gtGi = \gtG[k]$ here.
 \begin{enumerate}[label=(A\arabic*)]
          \item We need to show that $\forall \roleR \in \gtRoles{\gtG[k]}:
          \stEnvApp{\stEnvi}{\roleR} \stSub
        \gtProj[\rolesR]{\gtG[k]}{\roleR}$.
        We consider two subcases:
        \begin{itemize}[leftmargin=*]
        \item $\roleR = \roleP$, which means that $\roleP \in \gtRoles{\gtG[k]}$: by \eqref{eq:stenvp-send-type-cxt-cons} and \eqref{eq:stenvp-send-sub-cons}, we have
          $\stEnvApp{\stEnvi}{\roleP} =  \stT[k] \stSub \gtProj[\rolesR]{\gtG[k]}{\roleP}$, as desired.
        \item $\roleR \neq \roleP$: with $\roleQ \notin \gtRoles{\gtG[k]}$, we have $\roleR \neq \roleQ$, and hence,
        $\gtProj[\rolesR]{\gtG}{\roleR} = \stMerge{i \in I'}{\gtProj[\rolesR]{\gtG[i]}{\roleR}}$.
        Furthermore, by \eqref{eq:stenvp-send-type-cxt-cons} and $\stEnvAssoc{\gtWithCrashedRoles{\rolesC}{\gtG}}{\stEnv; \qEnv}{\rolesR}$,
        it holds that $\stEnvApp{\stEnvi}{\roleR} = \stEnvApp{\stEnv}{\roleR} \stSub \gtProj[\rolesR]{\gtG}{\roleR} = \stMerge{i \in I'}{\gtProj[\rolesR]{\gtG[i]}{\roleR}}$.
        Then, by \Cref{lem:merge-subtyping} and transitivity of subtyping,  we can conclude that $\stEnvApp{\stEnvi}{\roleR}
        \stSub \gtProj[\rolesR]{\gtG[k]}{\roleR}$, as desired.
        \end{itemize}
         \item Trivial by $\stEnvAssoc{\gtWithCrashedRoles{\rolesC}{\gtG}}{\stEnv; \qEnv}{\rolesR}$.
          \item Trivial by $\stEnvAssoc{\gtWithCrashedRoles{\rolesC}{\gtG}}{\stEnv; \qEnv}{\rolesR}$ and the
                   fact that if $\roleP \notin \gtRoles{\gtG[k]}$, then $\stEnvApp{\stEnvi}{\roleP} = \stEnd$:
                   with $\roleP \notin \gtRolesCrashed{\gtG[k]}$, by \Cref{lem:not-in-roles-end}, we have $\gtProj[\rolesR]{\gtG[k]}{\roleP} = \stEnd$.  Furthermore, by \eqref{eq:stenvp-send-sub-cons}, it holds that $\stT[k] = \stEnd$, and thus,
          $\stEnvApp{\stEnvi}{\roleP} = \stEnd$, as desired.

          \item Since $\roleQ \in \rolesC$, by \Cref{def:assoc-queue}, we have
          $\stEnvApp{\qEnv}{\cdot, \roleQ} = \stQUnavail$.
          Hence,  $\qEnvi =
      \stEnvUpd{\qEnv}{\roleP, \roleQ}{%
        \stQCons{%
          \stEnvApp{\qEnv}{\roleP, \roleQ}
        }{%
          \stQMsg{\stLab[k]}{\tyGround[k]}
        }
      } =
      \stEnvUpd{\qEnv}{\roleP, \roleQ}{%
        \stQCons{%
         \stQUnavail
        }{%
          \stQMsg{\stLab[k]}{\tyGround[k]}
        }
      } =  \stEnvUpd{\qEnv}{\roleP, \roleQ}{\stQUnavail} = \qEnv$.
      Then we only need to show that $\qEnv$ is associated with
       $\gtWithCrashedRoles{\rolesC}{
            \gtG[k]}$, which follows directly from the fact that
            $\qEnv$ is associated with $\gtWithCrashedRoles{\rolesC}{\gtG}$ and \cref{def:assoc-queue}.

        \end{enumerate}
      \end{itemize}
      \item Case (2):
      \begin{gather}
       \unfoldOne{\gtG} =
            \gtComm{\roleS}{\roleTMaybeCrashed}{j \in J}{\gtLab[j]}{\tyGroundi[j]}{\gtG[j]}
          \text{ \;or\; }
          \unfoldOne{\gtG} =
            \gtCommTransit{\roleSMaybeCrashed}{\roleT}{j \in
            J}{\gtLab[j]}{\tyGroundi[j]}{\gtG[j]}{k}
            \\
       \forall j \in J:
        \stEnvApp{\stEnv}{%
        \roleP%
        } \stSub \gtProj[\rolesR]{\gtG[j]}{\roleP}
        \\
        \roleP \neq \roleS \text{ \;and\; } \roleP \neq \roleT
      \end{gather}

  We consider two subcases:  $\unfoldOne{\gtG} =
            \gtComm{\roleS}{\roleTMaybeCrashed}{j \in J}{\gtLab[j]}{\tyGroundi[j]}{\gtG[j]}$ and
          $\unfoldOne{\gtG} =
            \gtCommTransit{\roleSMaybeCrashed}{\roleT}{j \in
            J}{\gtLab[j]}{\tyGroundi[j]}{\gtG[j]}{k}$.

  \begin{itemize}[leftmargin=*]
  \item In the case of $\unfoldOne{\gtG} =
            \gtComm{\roleS}{\roleTMaybeCrashed}{j \in J}{\gtLab[j]}{\tyGroundi[j]}{\gtG[j]}$: %

            First, we take an arbitrary index $j \in J$ and construct a configuration $\stEnv[j]; \qEnv[j]$ such that
  $\stEnv[j]; \qEnv[j]
      \,\stEnvMoveOutAnnot{\roleP}{\roleQ}{\stChoice{\stLab[k]}{\tyGround[k]}}\,
      \stEnvi[j]; \qEnvi[j]$ and
       $\stEnvAssoc{\gtWithCrashedRoles{\rolesC}{\gtG[j]}}{\stEnv[j]; \qEnv[j]}{\rolesR}$.

      We know from $\roleS \in \gtRoles{\gtG}$ and
           $\stEnvAssoc{\gtWithCrashedRoles{\rolesC}{\gtG}}{\stEnv; \qEnv}{\rolesR}$, that
           $
        \stEnvApp{\stEnv}{\roleS}
        \stSub
         \gtProj[\rolesR]{\unfoldOne{\gtG}}{\roleS}$ and
         $  \gtProj[\rolesR]{\unfoldOne{\gtG}}{\roleS}
         =
        \stIntSum{\roleT}{j \in \stIdxRemoveCrash{J}{\stLab[j]}}{%
          \stChoice{\stLab[j]}{\tyGroundi[j]}
                 \stSeq (\gtProj[\rolesR]{\gtG[j]}{\roleS})}$.
      By inverting $\inferrule{\iruleStSubOut}$ (applying
      \cref{lem:unfold-subtyping} where necessary), we have
      $ \unfoldOne{\stEnvApp{\stEnv}{\roleS}}
        =
        \stIntSum{\roleT}{j \in J_{\roleS}}{%
          \stChoice{\stLab[j]}{\tyGroundii[j]} \stSeq \stTii[j]
        }
      $, where $J_{\roleS} \subseteq \stIdxRemoveCrash{J}{\stLab[j]}$,
      and $\forall j \in J_{\roleS}:
       \stTii[j] \stSub (\gtProj[\rolesR]{\gtG[j]}{\roleS})$.

      To construct $\stEnv[j]$, let
      $ \stEnvApp{\stEnv[j]}{\roleS}
        =
        \stTii[j]
      $ if $j \in J_{\roleS}$ and
      $ \stEnvApp{\stEnv[j]}{\roleS}
        =
        \gtProj[\rolesR]{\gtG[j]}{\roleS}
      $ otherwise.
      In either case, we have
      $ \stEnvApp{\stEnv[j]}{\roleS}
      \stSub
      \gtProj[\rolesR]{\gtG[j]}{\roleS}$, as required.

      We have two further subcases here: namely $\roleTMaybeCrashed = \roleT$ and $\roleTMaybeCrashed = \roleTCrashed$.
      \begin{itemize}[leftmargin=*]
       \item If $\roleTMaybeCrashed = \roleT$, we know
      from $\roleT \in \gtRoles{\gtG}$ and
      $\stEnvAssoc{\gtWithCrashedRoles{\rolesC}{\gtG}}{\stEnv; \qEnv}{\rolesR}$, that
      $ \stEnvApp{\stEnv}{\roleT} \stSub
      \gtProj[\rolesR]{\unfoldOne{\gtG}}{\roleT}
        =
        \stExtSum{\roleS}{j \in J}{\stChoice{\stLab[j]}{\tyGroundi[j]}
                 \stSeq (\gtProj[\rolesR]{\gtG[j]}{\roleT})}
      $.
      By inverting $\inferrule{\iruleStSubIn}$ (applying
      \cref{lem:unfold-subtyping} where necessary), we have
      $ \unfoldOne{\stEnvApp{\stEnv}{\roleT}}
        =
        \stExtSum{\roleS}{j \in J_{\roleT}}{%
          \stChoice{\stLab[j]}{\tyGroundii[j]} \stSeq \stUii[j]}$,
      where $J \subseteq J_{\roleT}$,
      and $\forall j \in J:
      \stUii[j] \stSub (\gtProj[\rolesR]{\gtG[j]}{\roleT})$.

      To construct $\stEnv[j]$, let
      $ \stEnvApp{\stEnv[j]}{\roleT}
        =
        \stUii[j]
      $, and we have
      $ \stEnvApp{\stEnv[j]}{\roleT}
        \stSub
        \gtProj[\rolesR]{\gtG[j]}{\roleT}
        $, as required.
        \item If $\roleTMaybeCrashed = \roleTCrashed$, let
        $ \stEnvApp{\stEnv[j]}{\roleT}
        =
        \stEnvApp{\stEnv}{\roleT} = \stStop$,
      as required.
       \end{itemize}
       For roles $\roleR \in (\gtRoles{\gtG[j]} \cup \rolesC)$,
       where $\roleR \notin
  \setenum{\roleS, \roleT}$, their typing context
  entry do not change, \ie $\stEnvApp{\stEnv[j]}{\roleR}
  = \stEnvApp{\stEnv}{\roleR}$.
  For crashed roles $\roleR \in \rolesC$, we have
  $ \stStop =
    \stEnvApp{\stEnv}{\roleR} =
    \stEnvApp{\stEnv[j]}{\roleR} =
    \stStop
  $, as required.
  For non-crashed roles $\roleR \in \gtRoles{\gtG[j]}$, we have
  $\stEnvApp{\stEnv[j]}{\roleR}
    =
     \stEnvApp{\stEnv}{\roleR}
     \stSub
     \gtProj[\rolesR]{\gtG}{\roleR}
     =
     \stMerge{j \in J}(\gtProj[\rolesR]{\gtG[j]}{\roleR})
     \stSub
     \gtProj[\rolesR]{\gtG[j]}{\roleR}
  $ (applying \cref{lem:merge-subtyping}).

We know from $\stEnvAssoc{\gtWithCrashedRoles{\rolesC}{\gtG}}{\stEnv; \qEnv}{\rolesR}$ and
$\unfoldOne{\gtG} =
            \gtComm{\roleS}{\roleTMaybeCrashed}{j \in J}{\gtLab[j]}{\tyGroundi[j]}{\gtG[j]}$,  that
$\qEnv$ is associated with $\gtG[j]$. Hence, to construct $\qEnv[j]$, just let $\qEnv[j] = \qEnv$.
Notice that $\setenum{\roleP, \roleQ} \in \gtRoles{\gtG[j]}$, so they are still able to
 perform the communication action
$\stEnv[j]; \qEnv[j]
      \,\stEnvMoveOutAnnot{\roleP}{\roleQ}{\stChoice{\stLab[k]}{\tyGround[k]}}\,
      \stEnvi[j]; \qEnvi[j]$.

  We apply inductive hypothesis on $\stEnv[j]; \qEnv[j]$, and obtain
  $\gtWithCrashedRoles{\rolesC}{\gtG[j]}
          \gtMove[\stEnvOutAnnotSmall{\roleP}{\roleQ}{\stChoice{\stLab[k]}{\tyGround[k]}}]{\rolesR}
         \gtWithCrashedRoles{\rolesC}{\gtGi[j]} $    and
    $\stEnvAssoc{\gtWithCrashedRoles{\rolesC}{\gtGi[j]}}{\stEnvi[j]; \qEnvi[j]}{\rolesR}$.

 We apply $\inferrule{\iruleGtMoveCtx}$ (via
        \cref{lem:gt-lts-unfold}) to get:
        \begin{gather}
          \gtWithCrashedRoles{\rolesC}{\gtG}
          \gtMove[\stEnvOutAnnotSmall{\roleP}{\roleQ}{\stChoice{\stLab[k]}{\tyGround[k]}}]{\rolesR}
          \gtWithCrashedRoles{\rolesC}{\gtComm{\roleS}{\roleTMaybeCrashed}{j \in J}{\gtLab[j]}{\tyGroundi[j]}{\gtGi[j]}}
        \end{gather}
We are now left to show $\stEnvAssoc{\gtWithCrashedRoles{\rolesC}{\gtComm{\roleS}{\roleTMaybeCrashed}{j \in J}{\gtLab[j]}{\tyGroundi[j]}{\gtGi[j]}}}{\stEnvi; \qEnvi}{\rolesR}$.

Note that $\gtGi = \gtComm{\roleS}{\roleTMaybeCrashed}{j \in J}{\gtLab[j]}{\tyGroundi[j]}{\gtGi[j]}$ here. %

        \begin{enumerate}[label=(A\arabic*)]
        \item  For role $\roleS$, %
        we know that
        $ \unfoldOne{\stEnvApp{\stEnv}{\roleS}}
        =
        \stIntSum{\roleT}{j \in J_{\roleS}}{%
          \stChoice{\stLab[j]}{\tyGroundii[j]} \stSeq \stTii[j]
        }
      $, where $J_{\roleS} \subseteq \stIdxRemoveCrash{J}{\stLab[j]}$,
       $\forall j \in J_{\roleS}:
       \stTii[j] \stSub (\gtProj[\rolesR]{\gtG[j]}{\roleS})$, and
       $\tyGroundii[j] = \tyGroundi[j]$.

  Since $\roleS \notin
  \ltsSubject{\stEnvOutAnnotSmall{\roleP}{\roleQ}{\stChoice{\stLab[k]}{\tyGround[k]}}}$,
  we apply
  \cref{lem:stenv-red:trivial-2} on $\stEnv$ and $\stEnv[j]$ for all
  $j \in J_{\roleS}$.
  For all $j \in J_{\roleS}$,
  We have $
    \stTii[j]
    =
    \stEnvApp{\stEnv[j]}{\roleS}
    =
    \stEnvApp{\stEnvi[j]}{\roleS}
  $ (from \cref{lem:stenv-red:trivial-2}) and $
    \stEnvApp{\stEnvi[j]}{\roleS}
    \stSub
    \gtProj[\rolesR]{\gtGi[j]}{\roleS}
    $ (from inductive hypothesis).
  Therefore, we have $\stTii[j] \stSub \gtProj[\rolesR]{\gtGi[j]}{\roleS}$.
  We now apply \cref{lem:stenv-red:trivial-2} on $\stEnv$,
  which gives
  $ \unfoldOne{\stEnvApp{\stEnv}{\mpChanRole{\mpS}{\roleS}}}
    =
    \unfoldOne{\stEnvApp{\stEnvi}{\mpChanRole{\mpS}{\roleS}}}
    =
    \stIntSum{\roleT}{j \in J_{\roleS}}{%
      \stChoice{\stLab[j]}{\tyGroundii[j]} \stSeq \stTii[j]
    }
  $.
  We can now apply $\inferrule{\iruleStSubOut}$ to conclude $
 \stEnvApp{\stEnvi}{\roleS}
 \stSub
    \gtProj[\rolesR]{\gtGi}{\roleS}
  $, as required.

   For role $\roleT$ (where $\roleTMaybeCrashed = \roleT$), we know that
  $ \unfoldOne{\stEnvApp{\stEnv}{\roleT}}
    =
    \stExtSum{\roleS}{j \in J_{\roleT}}{%
      \stChoice{\stLab[j]}{\tyGroundii[j]} \stSeq \stUii[j]
    }
  $, where $J \subseteq J_{\roleT}$,
   $\forall j \in J:
     \stUii[j] \stSub (\gtProj[\rolesR]{\gtG[j]}{\roleT})$, and $
    \tyGroundi[j] = \tyGroundii[j]
  $.
  Since $\roleT \notin
  \ltsSubject{\stEnvOutAnnotSmall{\roleP}{\roleQ}{\stChoice{\stLab[k]}{\tyGround[k]}}}$,
  we apply
  \cref{lem:stenv-red:trivial-2} on $\stEnv$ and $\stEnv[j]$ for all
  $j \in J$.
  For all $j \in J$
  We have $
    \stUii[j]
    =
    \stEnvApp{\stEnv[j]}{\roleT}
    =
    \stEnvApp{\stEnvi[j]}{\roleT}
  $ (from \cref{lem:stenv-red:trivial-2}) and $
  \stEnvApp{\stEnvi[j]}{\roleT}
    \stSub
    \gtProj[\rolesR]{\gtGi[j]}{\roleT}
  $ (from inductive hypothesis).
  Therefore, we have $\stUii[j] \stSub \gtProj[\rolesR]{\gtGi[j]}{\roleT}$.
  We now apply \cref{lem:stenv-red:trivial-2} on $\stEnv$,
  which gives
  $ \unfoldOne{\stEnvApp{\stEnv}{\roleT}}
    =
    \unfoldOne{\stEnvApp{\stEnvi}{\roleT}}
    =
    \stExtSum{\roleS}{j \in J_{\roleT}}{%
      \stChoice{\stLab[j]}{\tyGroundii[j]} \stSeq \stUii[j]
    }
  $.
  We can now apply $\inferrule{\iruleStSubIn}$ to conclude $
    \stEnvApp{\stEnvi}{\roleT}
    \stSub
    \gtProj[\rolesR]{\gtGi}{\roleT}
$, as required.

 For other role $\roleR \in \gtRoles{\gtGi}$ (where
  $\roleR \notin \setenum{\roleS, \roleT}$),
  we need to show
  $\stEnvApp{\stEnvi}{\roleR}
  \stSub
   \gtProj[\rolesR]{\gtGi}{\roleR}
  $.
  We know what $
    \gtProj[\rolesR]{\gtGi}{\roleR} =
    \stMerge{j \in J}{\gtProj[\roles]{\gtGi[j]}{\roleR}}
  $.
  If $\roleR \notin \setenum{\roleP, \roleQ}$, we apply
  \cref{lem:stenv-red:trivial-2} on $\stEnv[j]$, obtaining
  $
    \stEnvApp{\stEnv[j]}{\roleR}
    =
    \stEnvApp{\stEnvi[j]}{\roleR}
  $. The inductive hypothesis gives
  $  \stEnvApp{\stEnvi[j]}{\roleR}
  \stSub
  \gtProj[\rolesR]{\gtGi[j]}{\roleR}$,
  we apply \cref{lem:merge-lower-bound} to obtain $
    \stEnvApp{\stEnvi[j]}{\roleR}
    \stSub
    \gtProj[\rolesR]{\gtGi}{\roleR}
    =
    \stMerge{j \in J}{\gtProj[\roles]{\gtGi[j]}{\roleR}}
  $.
  Note that $
    \stEnvApp{\stEnvi[j]}{\roleR} =
    \stEnvApp{\stEnv[j]}{\roleR}
  $ by \cref{lem:stenv-red:trivial-2},
  and $
    \stEnvApp{\stEnv[j]}{\roleR} =
    \stEnvApp{\stEnv}{\roleR}
  $ by construction.
  Therefore, we have $
  \stEnvApp{\stEnvi}{\roleR}
=
    \stEnvApp{\stEnv}{\roleR}
    \stSub
    \gtProj[\rolesR]{\gtGi}{\roleR}
      $, as required.

 We are left to consider the cases of $\roleP$ and $\roleQ$.
  We know what $
    \gtProj[\rolesR]{\gtGi}{\roleP} =
    \stMerge{j \in J}{\gtProj[\roles]{\gtGi[j]}{\roleP}}
  $ and $\setenum{\roleP, \roleQ} \subseteq \gtRoles{\gtGi}$.
  The inductive hypothesis gives
  $ \stEnvApp{\stEnvi[j]}{\roleP}
  \stSub
  \gtProj[\rolesR]{\gtGi[j]}{\roleP}
      $,
  we apply \cref{lem:merge-lower-bound} to obtain $
    \stEnvApp{\stEnvi[j]}{\roleP}
    \stSub
    \gtProj[\rolesR]{\gtGi}{\roleP}
    =
    \stMerge{j \in J}{\gtProj[\roles]{\gtGi[j]}{\roleP}}
  $.
  We now apply \cref{lem:stenv-red:det} on $\stEnv$ and all $\stEnv[j]$, which
  gives $\stEnvi[j] = \stEnvi$ for all $j$.
  Therefore, we have $
  \stEnvApp{\stEnvi}{\roleP}
  \stSub
    \gtProj[\rolesR]{\gtGi}{\roleP}
  $.
  Note that  $\stEnvApp{\stEnvi}{\roleP} =  \stEnvApp{\stEnvi[j]}{\roleP} = \stT[k] \;(\text{as in \eqref{eq:stenvp-send-type-cxt-cons}})$.
  The argument $\roleQ$ follows similarly.
\item For crashed roles $\roleR \in \rolesC$, we have
  $ \stEnvApp{\stEnvi}{\roleR}
    =
    \stEnvApp{\stEnv}{\roleR}
    =
    \stStop
  $ (applying \cref{lem:stenv-red:trivial-2}). Note that if $\roleTMaybeCrashed = \roleTCrashed$, $\roleT \in \rolesC$.
        \item Trivial by $\stEnvAssoc{\gtWithCrashedRoles{\rolesC}{\gtG}}{\stEnv; \qEnv}{\rolesR}$. %
        \item We know from $\qEnv[j] = \qEnv$ and $\stEnv[j]; \qEnv[j]
      \,\stEnvMoveOutAnnot{\roleP}{\roleQ}{\stChoice{\stLab[k]}{\tyGround[k]}}\,
      \stEnvi[j]; \qEnvi[j]$, that $\qEnvi[j] =
      \stEnvUpd{\qEnv[j]}{\roleP, \roleQ}{%
        \stQCons{%
          \stEnvApp{\qEnv[j]}{\roleP, \roleQ}
        }{%
          \stQMsg{\stLab[k]}{\tyGround[k]}
        }
      } =
      \stEnvUpd{\qEnv}{\roleP, \roleQ}{%
        \stQCons{%
          \stEnvApp{\qEnv}{\roleP, \roleQ}
        }{%
          \stQMsg{\stLab[k]}{\tyGround[k]}
        }
      } =
      \qEnvi$. Furthermore, by inductive hypothesis,
      $\qEnvi[j]$ is associated with $\gtWithCrashedRoles{\rolesC}{\gtGi[j]}$, which follows
      that $\forall j \in J: \qEnvi \text{ is associated with } \gtWithCrashedRoles{\rolesC}{\gtGi[j]}$. We are left to show that
      if $\roleTMaybeCrashed \neq \roleTCrashed$, then $\stEnvApp{\qEnvi}{\roleS, \roleT} = \stQEmpty$, which follows
      directly from $\stEnvAssoc{\gtWithCrashedRoles{\rolesC}{\gtG}}{\stEnv; \qEnv}{\rolesR}$ and
      $\stEnvApp{\qEnvi}{\roleS, \roleT} = \stEnvApp{\qEnv}{\roleS, \roleT}$.
        \end{enumerate}
\end{itemize}
\item In the case of   $\unfoldOne{\gtG} =
            \gtCommTransit{\roleSMaybeCrashed}{\roleT}{j \in
            J}{\gtLab[j]}{\tyGroundi[j]}{\gtG[j]}{k}$:

            Similar to that of the previous subcase that $\unfoldOne{\gtG} =
            \gtComm{\roleS}{\roleTMaybeCrashed}{j \in J}{\gtLab[j]}{\tyGroundi[j]}{\gtG[j]}$, applying \inferrule{\iruleGtMoveCtxi} instead.
\end{itemize}
    \item Case $\inferrule{\iruleTCtxIn}$:

      From the premise, we have:
    \begin{gather}
      \stEnvAssoc{\gtWithCrashedRoles{\rolesC}{\gtG}}{\stEnv; \qEnv}{\rolesR}
      \label{eq:stenvp-receive-assoc}
      \\
       \stEnvApp{\stEnv}{%
        \roleP%
      } =
      \stExtSum{\roleQ}{i \in I}{\stChoice{\stLab[i]}{\tyGround[i]} \stSeq \stT[i]}
      \label{eq:stenvp-receive-comp}
      \\
      \stEnvAnnotGenericSym = \stEnvInAnnotSmall{\roleP}{\roleQ}{\stChoice{\stLab[k]}{\tyGround[k]}}
      \\
      k \in I
      \\
      \stLabi = \stLab[k]
      \\
       \tyGroundi = \tyGround[k]
       \\
        \stEnvApp{\qEnv}{\roleQ, \roleP}
      =
      \stQCons{\stQMsg{\stLabi}{\tyGroundi}}{\stQi}
      \neq \stQUnavail
      \label{eq:stenvp-receive-queue-con}
      \\
      \stEnvi =
      \stEnvUpd{\stEnv}{\roleP}{\stT[k]}
      \label{eq:stenvp-receive-type-cxt-cons}
      \\
      \qEnvi =
     \stEnvUpd{\qEnv}{\roleQ, \roleP}{\stQi}
    \end{gather}

     Apply \cref{lem:inv-proj} \cref{item:proj-inv:recv} on
    \eqref{eq:stenvp-receive-comp}, we have two cases.
    \begin{itemize}[leftmargin=*]
      \item Case (1):
        \begin{gather}
          \unfoldOne{\gtG} =
            \gtCommTransit{\roleQMaybeCrashed}{\roleP}{i \in
            I'}{\gtLab[i]}{\tyGroundi[i]}{\gtG[i]}{j} \text{ or }
            \unfoldOne{\gtG} =
            \gtComm{\roleQ}{\rolePMaybeCrashed}{i \in I'}{\gtLab[i]}{\tyGroundi[i]}{\gtG[i]}
      \\
      I' \subseteq I
      \\
      \forall i \in I':
      \stLab[i] = \gtLab[i],
      \stT[i] \stSub (\gtProj[\rolesR]{\gtG[i]}{\roleP}),
      \tyGroundi[i] = \tyGround[i]
      \label{eq:stenvp-receive-global-label-equvi}
      \\
      \roleQ \notin \rolesR \text{ implies } \exists l \in I': \gtLab[l] =
          \gtCrashLab
    \end{gather}

First, we show that in this case, $\unfoldOne{\gtG}$ cannot be of the form $\gtComm{\roleQ}{\rolePMaybeCrashed}{i \in I'}{\gtLab[i]}{\tyGroundi[i]}{\gtG[i]}$.
There are two subcases to be considered: $\rolePMaybeCrashed =
    \roleP$ and $\rolePMaybeCrashed = \rolePCrashed$.
     \begin{itemize}[leftmargin=*]
\item In the case of $\rolePMaybeCrashed =
    \rolePCrashed$, we have $\roleP \in \rolesC$. Hence,
    by applying \cref{def:assoc-queue} on \eqref{eq:stenvp-receive-assoc}, we have that
     $\stEnvApp{\qEnv}{\cdot, \roleP} = \stQUnavail$, a desired contradiction to  \eqref{eq:stenvp-receive-queue-con}.

\item In the case of $\rolePMaybeCrashed =
    \roleP$, by association, it holds that
     $\stEnvApp{\qEnv}{\roleQ, \roleP} = \stQEmpty$, a desired contradiction to \eqref{eq:stenvp-receive-queue-con}.
  \end{itemize}
  Therefore, we only need to consider the case that $\unfoldOne{\gtG} =
            \gtCommTransit{\roleQMaybeCrashed}{\roleP}{i \in
            I'}{\gtLab[i]}{\tyGroundi[i]}{\gtG[i]}{j}$.

 Then we want to show that $\gtLab[j] \neq \gtCrashLab$, which is proved by contradiction.
  Assume that $\gtLab[j] = \gtCrashLab$, by association, we have that
  $\stEnvApp{\qEnv}{\roleQ, \roleP} =  \stQEmpty
        $, a desired contradiction to
        \eqref{eq:stenvp-receive-queue-con}.
 Moreover, we want to show that $j = k$.
 By association and $\gtLab[j] \neq \gtCrashLab$, we
 have   $\stEnvApp{\qEnv}{\roleQ, \roleP}
      =
      \stQCons{\stQMsg{\gtLab[j]}{\tyGround[j]}}{\stQ}
      =
      \stQCons{\stQMsg{\stLab[k]}{\tyGroundi}}{\stQi}
      $.
 Then by \eqref{eq:stenvp-receive-global-label-equvi}, it holds that $\gtLab[j]
 = \stLab[j] = \stLab[k]$.
  Furthermore, by  $j, k \in I$ and  the requirement that
   labels in local types must be pair-wise distinct, we have $j = k$, as required.
   Note that $k \in I'$ here.

  We can now apply \inferrule{\iruleGtMoveIn} (via \cref{lem:gt-lts-unfold}) to get:
 \begin{gather}
          \gtWithCrashedRoles{\rolesC}{\gtG}
          \gtMove[\stEnvAnnotGenericSym]{\rolesR}
          \gtWithCrashedRoles{\rolesC}{
            \gtG[k]
          }
        \end{gather}
        We are left to show $\stEnvAssoc{
          \gtWithCrashedRoles{\rolesC}{
            \gtG[k]
          }
        }{\stEnvi; \qEnvi}{\rolesR}$.

    \begin{enumerate}[label=(A\arabic*)]
            \item We need to show that $\forall \roleR \in \gtRoles{\gtG[k]}:
          \stEnvApp{\stEnvi}{\roleR} \stSub
        \gtProj[\rolesR]{\gtG[k]}{\roleR}$.
        We consider three subcases:
        \begin{itemize}[leftmargin=*]
       \item $\roleR = \roleQ$, which means that $\roleQMaybeCrashed = \roleQ$ and
               $\roleQ \in \gtRoles{\gtG[k]}$: by \eqref{eq:stenvp-receive-type-cxt-cons},
               $\stEnvApp{\stEnvi}{\roleQ} = \stEnvApp{\stEnv}{\roleQ}$.
               Then by association, we have $\stEnvApp{\stEnv}{\roleQ} \stSub
               \gtProj[\rolesR]{\gtG}{\roleQ} = \gtProj[\rolesR]{\gtG[k]}{\roleQ}$, as desired.

       \item $\roleR = \roleP$, which means that $\roleP \in \gtRoles{\gtG[k]}$: by \eqref{eq:stenvp-receive-type-cxt-cons}
        and \eqref{eq:stenvp-receive-global-label-equvi}, we have
          $\stEnvApp{\stEnvi}{\roleP} =  \stT[k] \stSub \gtProj[\rolesR]{\gtG[k]}{\roleP}$, as desired.

        \item $\roleR \neq \roleQ$ and $\roleR \neq \roleP$:
        $\gtProj[\rolesR]{\gtG}{\roleR} = \stMerge{i \in I'}{\gtProj[\rolesR]{\gtG[i]}{\roleR}}$.
        Furthermore, by \eqref{eq:stenvp-receive-type-cxt-cons} and $\stEnvAssoc{\gtWithCrashedRoles{\rolesC}{\gtG}}{\stEnv; \qEnv}{\rolesR}$,
        it holds that $\stEnvApp{\stEnvi}{\roleR} = \stEnvApp{\stEnv}{\roleR} \stSub \gtProj[\rolesR]{\gtG}{\roleR} = \stMerge{i \in I'}{\gtProj[\rolesR]{\gtG[i]}{\roleR}}$.
        Then, by \Cref{lem:merge-subtyping} and transitivity of subtyping,  we can conclude that $\stEnvApp{\stEnvi}{\roleR}
        \stSub \gtProj[\rolesR]{\gtG[k]}{\roleR}$, as desired.
        \end{itemize}
         \item Trivial by $\stEnvAssoc{\gtWithCrashedRoles{\rolesC}{\gtG}}{\stEnv; \qEnv}{\rolesR}$. Note that in the case of
         $\roleQMaybeCrashed = \roleQCrashed$, we have $\roleQ \in \rolesC$, and hence, $\stEnvApp{\stEnvi}{\roleQ}
         =
         \stEnvApp{\stEnv}{\roleQ} = \stStop$.
          \item Trivial by $\stEnvAssoc{\gtWithCrashedRoles{\rolesC}{\gtG}}{\stEnv; \qEnv}{\rolesR}$ and the
                   fact that if $\roleP \notin \gtRoles{\gtG[k]}$, then $\stEnvApp{\stEnvi}{\roleP} = \stEnd$:
                   with $\roleP \notin \gtRolesCrashed{\gtG[k]}$, by \Cref{lem:not-in-roles-end}, we have $\gtProj[\rolesR]{\gtG[k]}{\roleP} = \stEnd$.
                   Furthermore, by \eqref{eq:stenvp-receive-global-label-equvi}, it holds that $\stT[k] = \stEnd$, and thus,
          $\stEnvApp{\stEnvi}{\roleP} = \stEnd$, as desired.  The argument for the case that $\roleQMaybeCrashed
          = \roleQ$ and $\roleQ \notin \gtRoles{\gtG[k]}$ follows similarly.

          \item Since $\qEnv$ is associated with $\gtWithCrashedRoles{\rolesC}{\gtCommTransit{\roleQMaybeCrashed}{\roleP}{i \in
            I'}{\gtLab[i]}{\tyGroundi[i]}{\gtG[i]}{j} }$ and $\gtLab[j] \neq \gtCrashLab$, by \Cref{def:assoc-queue},  we have that
             $\stEnvApp{\qEnv}{\roleQ, \roleP} =
        \stQCons{\stQMsg{\gtLab[j]}{\tyGround[j]}}{\stQ}$ and
            $\forall i \in I':  \stEnvUpd{\qEnv}{\roleQ, \roleP}{\stQ} \text{~is associated with~}
      \gtWithCrashedRoles{\rolesC}{\gtG[i]}$, which follows that $\stQ = \stQi$ (in \eqref{eq:stenvp-receive-queue-con}) and
      $\qEnvi = \stEnvUpd{\qEnv}{\roleQ, \roleP}{\stQi}
      = \stEnvUpd{\qEnv}{\roleQ, \roleP}{\stQ}$ is associated with
       $\gtWithCrashedRoles{\rolesC}{\gtG[k]}$, as required.
        \end{enumerate}

    \item Case (2): similar to the case (2) in Case $\inferrule{\iruleTCtxOut}$.
    \end{itemize}

    \item Case $\inferrule{\iruleTCtxCrash}$:

    From the premise, we have:
     \begin{gather}
      \stEnvAssoc{\gtWithCrashedRoles{\rolesC}{\gtG}}{\stEnv; \qEnv}{\rolesR}
      \\
       \stEnvApp{\stEnv}{\roleP} \neq \stEnd
       \label{eq:stenvp-crash-not-end}
       \\
        \stEnvApp{\stEnv}{\roleP} \neq \stStop
         \label{eq:stenvp-crash-not-stop}
        \\
        \stEnvAnnotGenericSym = \ltsCrash{\mpS}{\roleP}
      \\
      \stEnvi =
      \stEnvUpd{\stEnv}{\roleP}{\stStop}
      \label{eq:stenvp-crash-type-cons}
      \\
      \qEnvi =
     \stEnvUpd{\qEnv}{\cdot, \roleP}{\stQUnavail}
      \label{eq:stenvp-crash-queue-cons}%
    \end{gather}

By \eqref{eq:stenvp-crash-not-stop}, we know that
$\roleP \notin \rolesC$ and  $\roleP \in \gtRoles{\gtG}$.
The premise requires that $ \stEnvAnnotGenericSym \neq \ltsCrash{\mpS}{\roleP}$ for all
$\roleP \in \rolesR$, therefore, $\roleP \notin \rolesR$.
We apply \inferrule{\iruleGtMoveCrash} (via \cref{lem:gt-lts-unfold}) to get:
   \begin{gather}
          \gtWithCrashedRoles{\rolesC}{\gtG}
          \gtMove[\stEnvAnnotGenericSym]{\rolesR}
         \gtWithCrashedRoles{\rolesC \cup \setenum{\roleP}}{\gtCrashRole{\gtG}{\roleP}}
        \end{gather}

We are now left to show $\stEnvAssoc{
          \gtWithCrashedRoles{\rolesC \cup \setenum{\roleP}}{
            \gtCrashRole{\gtG}{\roleP}
          }
        }{\stEnvi; \qEnvi}{\rolesR}$.

Note that $\gtGi = \gtCrashRole{\gtG}{\roleP}$ and $\rolesCi = \rolesC \cup \setenum{\roleP}$ here.
By \eqref{eq:stenvp-crash-type-cons}, we can set  $\stEnvi =  \stEnvi[\gtGi] \stEnvComp \stEnvi[\stStopSym] \stEnvComp
  \stEnvi[\stEnd]$, where $\dom{\stEnvi[\gtGi]} = \setcomp{\roleQ}{\roleQ \in \gtRoles{\gtCrashRole{\gtG}{\roleP}}} =
  \setcomp{\roleQ}{\roleQ \in \gtRoles{\gtG}} \setminus \setenum{\roleP} = \dom{\stEnv[\gtG]} \setminus \setenum{\roleP}$,
  $\dom{\stEnvi[\stStopSym]} = \rolesC \cup \setenum{\roleP}$ =
  $\dom{\stEnv[\stStopSym]} \cup \setenum{\roleP}$,
  and $\dom{\stEnvi[\stEnd]} = \dom{\stEnv[\stEnd]}$. Meanwhile, we have $\stEnvApp{\stEnvi}{\roleQ} = \stEnvApp{\stEnv}{\roleQ}$ if
  $\roleQ \neq \roleP$.

        \begin{enumerate}[label=(A\arabic*)]
          \item We want to show that for any $\roleQ \in \gtRoles{\gtCrashRole{\gtG}{\roleP}}$, we have $\stEnvApp{\stEnvi}{\roleQ} =
          \stEnvApp{\stEnv}{\roleQ} \stSub \gtProj[\rolesR]{\gtG}{\roleQ} \stSub  \gtProj[\rolesR]{(\gtCrashRole{\gtG}{\roleP})}{\roleQ}$ by
          \cref{lem:proj-non-crashing-role-preserve} and $\stEnvAssoc{\gtWithCrashedRoles{\rolesC}{\gtG}}{\stEnv; \qEnv}{\rolesR}$.
          \item Trivial by $\stEnvAssoc{\gtWithCrashedRoles{\rolesC}{\gtG}}{\stEnv; \qEnv}{\rolesR}$ and \eqref{eq:stenvp-crash-type-cons}, i.e.,
          $\stEnvApp{\stEnvi}{\roleP} = \stStop$.
          \item Trivial by $\stEnvAssoc{\gtWithCrashedRoles{\rolesC}{\gtG}}{\stEnv; \qEnv}{\rolesR}$.
          \item Since $\qEnv$ is associated with $\gtWithCrashedRoles{\rolesC}{\gtG}$,
          by \Cref{def:assoc-queue}, we have that for any $\roleQ \in \rolesC$,
        $\stEnvApp{\qEnv}{\cdot, \roleQ} = \stQUnavail$. Then, with \eqref{eq:stenvp-crash-queue-cons},
        we have that for any $\roleQ \in \rolesC \cup \setenum{\roleP}$,
        $\stEnvApp{\qEnvi}{\cdot, \roleQ} =
           \stEnvApp{\stEnvUpd{\qEnv}{\cdot, \roleP}{\stQUnavail}}{\cdot, \roleQ} =
           \stQUnavail$, which follows directly that $\qEnvi$ is associated with
           $\gtWithCrashedRoles{\rolesC \cup \setenum{\roleP}}{\gtCrashRole{\gtG}{\roleP}}$.
         \end{enumerate}

    \item Case $\inferrule{\iruleTCtxCrashDetect}$:

    From the premise, we have:
    \begin{gather}
     \stEnvAssoc{\gtWithCrashedRoles{\rolesC}{\gtG}}{\stEnv; \qEnv}{\rolesR}
     \\
     \stEnvApp{\stEnv}{\roleQ} =
      \stExtSum{\roleP}{i \in I}{\stChoice{\stLab[i]}{\tyGround[i]} \stSeq \stT[i]}
      \label{eq:stenvp-crash-detect-comp}
      \\
       \stEnvApp{\stEnv}{\roleP} = \stStop
       \\
        \stEnvAnnotGenericSym = \ltsCrDe{\mpS}{\roleQ}{\roleP}
        \\
       k \in I
       \\
        \stLab[k] = \stCrashLab
      \\
      \stEnvApp{\qEnv}{\roleP, \roleQ} = \stQEmpty
      \label{eq:stenvp-crash-detect-queue-empty}
      \\
      \stEnvi = \stEnvUpd{\stEnv}{\roleQ}{\stT[k]}
      \label{eq:stenvp-crash-detect-type-cxt-cons}
      \\
      \qEnvi = \qEnv
    \end{gather}
  Since $\stEnvApp{\stEnv}{\roleP} = \stStop$, we have $\roleP \in \rolesC$ and
  $\roleP \notin \rolesR$.
  Apply \cref{lem:inv-proj} \cref{item:proj-inv:recv} on
    \eqref{eq:stenvp-crash-detect-comp}, we have two cases.
    \begin{itemize}[leftmargin=*]
      \item Case (1):
        \begin{gather}
          \unfoldOne{\gtG} =
            \gtCommTransit{\rolePCrashed}{\roleQ}{i \in
            I'}{\gtLab[i]}{\tyGroundi[i]}{\gtG[i]}{j} %
      \\
      I' \subseteq I
      \\
      \forall i \in I':
      \stLab[i] = \gtLab[i],
      \stT[i] \stSub (\gtProj[\rolesR]{\gtG[i]}{\roleQ}),
      \tyGroundi[i] = \tyGround[i]
      \label{eq:stenvp-crash-detect-global-label-equvi}
      \\
      \roleP \notin \rolesR \text{ implies } \exists l \in I': \gtLab[l] =
          \gtCrashLab
    \end{gather}
  Then we want to show that $\gtLab[j] = \gtCrashLab$, which is proved by contradiction.
  Assume that $\gtLab[j] \neq \gtCrashLab$, by association, we have that
  $\stEnvApp{\qEnv}{\roleP, \roleQ} =
        \stQCons{\stQMsg{\gtLab[j]}{\tyGround[j]}}{\stQ}$, a desired contradiction to
        \eqref{eq:stenvp-crash-detect-queue-empty}.
   Moreover, we want to show that $j = k$. By \eqref{eq:stenvp-crash-detect-global-label-equvi},
   we have $\gtLab[j] = \stLab[j]$, which means that $\stLab[j] = \stCrashLab$.
   Since $\stLab[j] = \stLab[k] = \stCrashLab$ and $j, k \in I$,  by the requirement that
   labels in local types must be pair-wise distinct, we have $j = k$, as required.
   Note that $k \in I'$ here.

  We can now apply \inferrule{\iruleGtMoveCrDe} (via \cref{lem:gt-lts-unfold}) to get:
 \begin{gather}
          \gtWithCrashedRoles{\rolesC}{\gtG}
          \gtMove[\stEnvAnnotGenericSym]{\rolesR}
          \gtWithCrashedRoles{\rolesC}{
            \gtG[k]
          }
        \end{gather}
        We are left to show $\stEnvAssoc{
          \gtWithCrashedRoles{\rolesC}{
            \gtG[k]
          }
        }{\stEnvi; \qEnvi}{\rolesR}$.
 \begin{enumerate}[label=(A\arabic*)]
            \item We need to show that $\forall \roleR \in \gtRoles{\gtG[k]}:
          \stEnvApp{\stEnvi}{\roleR} \stSub
        \gtProj[\rolesR]{\gtG[k]}{\roleR}$.
        We consider two subcases:
        \begin{itemize}[leftmargin=*]
        \item $\roleR = \roleQ$, which means that $\roleQ \in \gtRoles{\gtG[k]}$: by \eqref{eq:stenvp-crash-detect-type-cxt-cons}
        and \eqref{eq:stenvp-crash-detect-global-label-equvi}, we have
          $\stEnvApp{\stEnvi}{\roleQ} =  \stT[k] \stSub \gtProj[\rolesR]{\gtG[k]}{\roleQ}$, as desired.
        \item $\roleR \neq \roleQ$: with $\roleP \notin \gtRoles{\gtG[k]}$, we have $\roleR \neq \roleP$, and hence,
        $\gtProj[\rolesR]{\gtG}{\roleR} = \stMerge{i \in I'}{\gtProj[\rolesR]{\gtG[i]}{\roleR}}$.
        Furthermore, by \eqref{eq:stenvp-crash-detect-type-cxt-cons} and $\stEnvAssoc{\gtWithCrashedRoles{\rolesC}{\gtG}}{\stEnv; \qEnv}{\rolesR}$,
        it holds that $\stEnvApp{\stEnvi}{\roleR} = \stEnvApp{\stEnv}{\roleR} \stSub \gtProj[\rolesR]{\gtG}{\roleR} = \stMerge{i \in I'}{\gtProj[\rolesR]{\gtG[i]}{\roleR}}$.
        Then, by \Cref{lem:merge-subtyping} and transitivity of subtyping,  we can conclude that $\stEnvApp{\stEnvi}{\roleR}
        \stSub \gtProj[\rolesR]{\gtG[k]}{\roleR}$, as desired.
        \end{itemize}
         \item Trivial by $\stEnvAssoc{\gtWithCrashedRoles{\rolesC}{\gtG}}{\stEnv; \qEnv}{\rolesR}$.
          \item Trivial by $\stEnvAssoc{\gtWithCrashedRoles{\rolesC}{\gtG}}{\stEnv; \qEnv}{\rolesR}$ and the
                   fact that if $\roleQ \notin \gtRoles{\gtG[k]}$, then $\stEnvApp{\stEnvi}{\roleQ} = \stEnd$:
                   with $\roleQ \notin \gtRolesCrashed{\gtG[k]}$, by \Cref{lem:not-in-roles-end}, we have $\gtProj[\rolesR]{\gtG[k]}{\roleQ} = \stEnd$.
                   Furthermore, by \eqref{eq:stenvp-crash-detect-global-label-equvi}, it holds that $\stT[k] = \stEnd$, and thus,
          $\stEnvApp{\stEnvi}{\roleQ} = \stEnd$, as desired.
          \item Since $\qEnv$ is associated with $\gtWithCrashedRoles{\rolesC}{\gtCommTransit{\rolePCrashed}{\roleQ}{i \in
            I'}{\gtLab[i]}{\tyGroundi[i]}{\gtG[i]}{j} }$ and $\gtLab[j] = \gtCrashLab$, by \Cref{def:assoc-queue},  we have that
            $\forall i \in I': \qEnv \text{~is associated with~}
      \gtWithCrashedRoles{\rolesC}{\gtG[i]}$, which follows that $\qEnvi = \qEnv$ is associated with
       $\gtWithCrashedRoles{\rolesC}{\gtG[k]}$, as required.
        \end{enumerate}

       \item Case (2): similar to the case (2) in Case $\inferrule{\iruleTCtxOut}$.
    \end{itemize}
     \item Case $\inferrule{\iruleTCtxRec}$: by induction and \cref{prop:gt-lts-unfold}.
    \qedhere
  \end{itemize}
\end{proof}

\subsection{Properties by Projection}
\label{sec:proof:propbyproj}
\subparagraph*{Safety by Projection}
\label{sec:proof:safety}

\begin{restatable}{lemma}{lemProjSafe}
\label{lem:safety-by-proj}
  If $\stEnvAssoc{\gtWithCrashedRoles{\rolesC}{\gtG}}{\stEnv; \qEnv}{\rolesR}$,
  then $\stEnv; \qEnv$ is  $\rolesR$-safe.%
  \label{lem:ext-proj-safe}
\end{restatable}
\begin{proof}
Let $\predP = \setcomp{\stEnvi; \qEnvi}{\exists \rolesCi, \gtGi: \gtWithCrashedRoles{\rolesC}{\gtG}
\gtMoveStar[\rolesR]{}
\gtWithCrashedRoles{\rolesCi}{\gtGi}
\text{ and }
\stEnvAssoc{\gtWithCrashedRoles{\rolesCi}{\gtGi}}{\stEnvi; \qEnvi}{\rolesR}
}$.
Take any $\stEnvi; \qEnvi \in \predP$, we show that $\stEnvi; \qEnvi$ satisfies all
clauses in \Cref{def:mpst-env-safe}, which means that $\predP$ is an $\rolesR$-safety 
property. Then we can conclude that, since $\predPApp{\stEnv; \qEnv}$ holds, 
$\stEnv; \qEnv$ is $\rolesR$-safe. 

By definition of $\predP$, there exists $\gtWithCrashedRoles{\rolesCi}{\gtGi}$ with
$\gtWithCrashedRoles{\rolesC}{\gtG}
\gtMoveStar[\rolesR]{}
\gtWithCrashedRoles{\rolesCi}{\gtGi}$ and 
$\stEnvAssoc{\gtWithCrashedRoles{\rolesCi}{\gtGi}}{\stEnvi; \qEnvi}{\rolesR}$. 
\begin{itemize}%
\item[\inferrule{\iruleSafeComm}]
From the premise, we have 
\begin{gather}
    \stEnvAssoc{\gtWithCrashedRoles{\rolesCi}{\gtGi}}{\stEnvi; \qEnvi}{\rolesR}
    \label{eq:safety-assoc-1}
    \\
    \stEnvApp{\stEnvi}{%
        \roleQ%
      } =
      \stExtSum{\roleP}{i \in I}{\stChoice{\stLab[i]}{\tyGround[i]} \stSeq \stT[i]}
      \label{eq:stenvp-safety-rec-comp}
    \\
    \stEnvApp{\qEnvi}{\roleP, \roleQ} \neq \stQUnavail
    \label{eq:safety-queue-available}
    \\
    \stEnvApp{\qEnvi}{\roleP, \roleQ} \neq \stQEmpty
    \label{eq:safety-queue-not-empty}
\end{gather}
Apply  \cref{item:proj-inv:recv}  of \cref{lem:inv-proj} on \eqref{eq:stenvp-safety-rec-comp}, 
we have two cases.
\begin{itemize}[leftmargin=*]
\item Case (1):
\begin{gather}
\unfoldOne{\gtGi} = \gtCommTransit{\rolePMaybeCrashed}{\roleQ}{i \in
            I'}{\gtLab[i]}{\tyGroundi[i]}{\gtG[i]}{j} \text{ or } 
           \unfoldOne{\gtGi} =  \gtComm{\roleP}{\roleQMaybeCrashed}{i \in I'}{\gtLab[i]}{\tyGroundi[i]}{\gtG[i]}
           \\
            I' \subseteq I
 \label{eq:safety-subset}
      \\
      \forall i \in I':
      \stLab[i] = \gtLab[i],
      \stT[i] \stSub (\gtProj[\rolesR]{\gtG[i]}{\roleQ}),
      \tyGroundi[i] \stSub \tyGround[i]
      \label{eq:stenvp-safety-receive-label-equvi}
\end{gather}
First, we show that $\unfoldOne{\gtGi}$ cannot be the form of 
$\gtComm{\roleP}{\roleQMaybeCrashed}{i \in I'}{\gtLab[i]}{\tyGroundi[i]}{\gtG[i]}$. We prove this by contradiction on two subcases:
$\roleQMaybeCrashed = \roleQ$ and $\roleQMaybeCrashed = \roleQCrashed$:
\begin{itemize}[leftmargin=*]
\item $\roleQMaybeCrashed = \roleQ$: by association \eqref{eq:safety-assoc-1}, we have 
$\stEnvApp{\qEnvi}{\roleP, \roleQ} = \stQEmpty$, a desired contradiction to \eqref{eq:safety-queue-not-empty}. 
\item $\roleQMaybeCrashed = \roleQCrashed$: we have $\roleQ \in \rolesC$. Hence, by association \eqref{eq:safety-assoc-1}, we have 
$\stEnvApp{\qEnvi}{\cdot, \roleQ} = \stQUnavail$, a desired contradiction to \eqref{eq:safety-queue-available}. 
\end{itemize}

It follows that $\unfoldOne{\gtGi} =  \gtCommTransit{\rolePMaybeCrashed}{\roleQ}{i \in
            I'}{\gtLab[i]}{\tyGroundi[i]}{\gtG[i]}{j}$. We are left to show that $j \in I$, $\tyGroundi[j] \stSub \tyGround[j]$, and 
$ \stEnvApp{\qEnvi}{\roleP, \roleQ}  =  \stQCons{\stQMsg{\stLab[j]}{\tyGroundi[j]}}{\stQ}$,  and then by applying 
\inferrule{\iruleTCtxIn}, we can conclude with $\stEnvMoveAnnotP{\stEnvi; \qEnvi}{\stEnvInAnnot{\roleQ}{\roleP}{\stChoice{\stLab[j]}{\tyGround[j]}}}$. 
We consider two subcases: $\gtLab[j] = \gtCrashLab$ and $\gtLab[j] \neq \gtCrashLab$:
\begin{itemize}[leftmargin=*]
\item $\gtLab[j] = \gtCrashLab$: by association, 
$\stEnvApp{\qEnvi}{\roleP, \roleQ} = \stQEmpty$, a desired contradiction to \eqref{eq:safety-queue-not-empty}. 
\item $\gtLab[j] \neq \gtCrashLab$: by association, $\stEnvApp{\qEnvi}{\roleP, \roleQ} =
        \stQCons{\stQMsg{\gtLab[j]}{\tyGroundi[j]}}{\stQ}$. Moreover, with \eqref{eq:safety-subset} and \eqref{eq:stenvp-safety-receive-label-equvi}, 
        we obtain that $j \in I$, $\tyGroundi[j] \stSub \tyGround[j]$, and $\stLab[j] = \gtLab[j]$, which follows that 
$ \stEnvApp{\qEnvi}{\roleP, \roleQ}  =  \stQCons{\stQMsg{\stLab[j]}{\tyGroundi[j]}}{\stQ}$, as desired. 
\end{itemize}

\item Case (2):  $\unfoldOne{\gtGi} =
            \gtComm{\roleS}{\roleTMaybeCrashed}{j \in J}{\gtLab[j]}{\tyGroundi[j]}{\gtG[j]}$,
          or
          $\unfoldOne{\gtGi} =
            \gtCommTransit{\roleSMaybeCrashed}{\roleT}{j \in
            J}{\gtLab[j]}{\tyGroundi[j]}{\gtG[j]}{k}$,
          where for all $j \in J$:
          $  \stEnvApp{\stEnvi}{%
        \roleQ%
      }  \stSub (\gtProj[\rolesR]{\gtG[j]}{\roleQ})$,
          with $\roleQ \neq \roleS$ and $\roleQ \neq \roleT$.
          
          We apply \cref{lem:eventual-global-form} to get that there exists a global type 
          $\gtWithCrashedRoles{\rolesCii}{\gtGii}$ and a queue environment $\qEnvii$ such that 
 $  \stEnvApp{\stEnvi}{%
        \roleQ%
      } \stSub 
 \gtProj[\rolesR]{\gtGii}{\roleQ}$, $\unfoldOne{\gtGii}$ is of the form 
 $ \gtCommTransit{\rolePMaybeCrashed}{\roleQ}{i \in
            I'}{\gtLab[i]}{\tyGroundi[i]}{\gtG[i]}{j}$ or
 $ \gtComm{\roleP}{\roleQMaybeCrashed}{i \in I'}{\gtLab[i]}{\tyGroundi[i]}{\gtG[i]}$, and $\qEnvii$ is 
 associated with $\gtWithCrashedRoles{\rolesCii}{\gtGii}$ with $\stEnvApp{\qEnvii}{\roleP, \roleQ} = \stEnvApp{\qEnvi}{\roleP, \roleQ}$.  
 The following proof argument is similar to that for Case (1). 
\end{itemize}

\item[\inferrule{\iruleSafeCrash}] From the premise, we have 
$ \stEnvApp{\stEnvi}{\roleQ} =
      \stExtSum{\roleP}{i \in I}{\stChoice{\stLab[i]}{\tyGround[i]} \stSeq \stT[i]}, 
       \stEnvApp{\stEnvi}{\roleP} = \stStop$, and $\stEnvApp{\qEnvi}{\roleP, \roleQ} = \stQEmpty$. 
       We just need to show that there exists $k \in I$ with $\stLab[k] = \stCrashLab$, and then by applying 
       \inferrule{\iruleTCtxCrashDetect}, we can conclude with 
       $\stEnvMoveAnnotP{\stEnvi; \qEnvi}{\ltsCrDe{\mpS}{\roleQ}{\roleP}}$. 
       By the association that $\stEnvAssoc{\gtWithCrashedRoles{\rolesCi}{\gtGi}}{\stEnvi; \qEnvi}{\rolesR}$, 
       we have $ \stEnvApp{\stEnvi}{\roleQ} =
      \stExtSum{\roleP}{i \in I}{\stChoice{\stLab[i]}{\tyGround[i]} \stSeq \stT[i]} 
      \stSub 
      \gtProj[\rolesR]{\gtGi}{\roleQ}$. Moreover, with $\roleP \notin \rolesR$, which follows directly from 
      $\stEnvApp{\stEnvi}{\roleP} = \stStop$, we can apply \Cref{lem:crash-lab-exists} to 
      get $\exists k \in I : \stLab[k] = \stCrashLab$, as desired. 

\item[\inferrule{\iruleSafeRec}] Let $\stEnvii; \qEnvii$ be constructed from $\stEnvi; \qEnvi$ with
$\stEnvii = \stEnvUpd{\stEnvi}{%
        \roleP%
      }{%
        \stS\subst{\stRecVar}{\stRec{\stRecVar}{\stS}}%
      }$ and
$\qEnvii = \qEnvi$. 
We want to show that $\stEnvAssoc{\gtWithCrashedRoles{\rolesCi}{\gtGi}}{\stEnvii; \qEnvii}{\rolesR}$. 
From the premise, we have $\stEnvApp{\stEnvi}{{\roleP}}  =  \stRec{\stRecVar}{\stS}$.
Then by association that $\stEnvAssoc{\gtWithCrashedRoles{\rolesCi}{\gtGi}}{\stEnvi; \qEnvi}{\rolesR}$, 
it holds that $\stEnvApp{\stEnvi}{{\roleP}}   = \stRec{\stRecVar}{\stS} 
        \stSub
        \gtProj[\rolesR]{\gtGi}{\roleP}$. Hence, by inverting \inferrule{\iruleStSubRecL}, we have
        $\stEnvApp{\stEnvii}{{\roleP}} = \stS{}\subst{\stRecVar}{\stRec{\stRecVar}{\stS}}    
         \stSub
        \gtProj[\rolesR]{\gtGi}{\roleP}$. Therefore, by \cref{def:assoc}, we conclude with 
        $\stEnvAssoc{\gtWithCrashedRoles{\rolesCi}{\gtGi}}{\stEnvii; \qEnvii }{\rolesR}
    $, as desired.
\item[ \inferrule{\iruleSafeMove}] 
From the premise, we have $\stEnvi; \qEnvi \stEnvMoveMaybeCrash[\rolesR] \stEnvii; \qEnvii$. 
Hence, by \Cref{thm:gtype:proj-comp}, there exists $\gtWithCrashedRoles{\rolesCii}{\gtGii}$ with
 $\gtWithCrashedRoles{\rolesCi}{\gtGi} \gtMove[\stEnvAnnotGenericSym]{\rolesR}
    \gtWithCrashedRoles{\rolesCii}{\gtGii}$, where $\stEnvAnnotGenericSym \neq \ltsCrash{\mpS}{\roleP}$ for all $\roleP
  \in \rolesR$, and  $\stEnvAssoc{\gtWithCrashedRoles{\rolesCii}{\gtGii}}{\stEnvii; \qEnvii}{\rolesR}$.
By definition of $\predP$, the configuration after transition $\stEnvii; \qEnvii$ is in $\predP$, as desired. 
\qedhere
\end{itemize}
 \end{proof}

\subparagraph*{Deadlock Freedom by Projection}
\label{sec:proof:deadlockfree}
\begin{restatable}{lemma}{lemProjDeadlockFree}
\label{lem:proj:df}
  If $\stEnvAssoc{\gtWithCrashedRoles{\rolesC}{\gtG}}{\stEnv; \qEnv}{\rolesR}$,
  then $\stEnv; \qEnv$ is $\rolesR$-deadlock-free.%
  \label{lem:ext-proj-deadlock-free}
\end{restatable}
\begin{proof}
Since $\stEnvAssoc{\gtWithCrashedRoles{\rolesC}{\gtG}}{\stEnv; \qEnv}{\rolesR}$,
by \Cref{lem:ext-proj-safe}, we have that $\stEnv; \qEnv$ is $\rolesR$-safe.
By operational correspondence of global type $\gtWithCrashedRoles{\rolesC}{\gtG}$
and configuration $\stEnv; \qEnv$ (\Cref{thm:gtype:proj-sound,thm:gtype:proj-comp}), there exists
a global type $\gtWithCrashedRoles{\rolesCi}{\gtGi}$ such that
$\gtWithCrashedRoles{\rolesC}{\gtG} \!\gtMoveStar[\rolesR]\!
\gtWithCrashedRoles{\rolesCi}{\gtGi} \!\not\gtMove[]{\rolesR}$,
with associated configurations
 $\stEnv; \qEnv \!\stEnvMoveMaybeCrashStar[\rolesR]\! \stEnvi; \qEnvi
\!\not\stEnvMoveMaybeCrash[\rolesR]$. Since no further reductions are possible for the global type,
the global type $\gtWithCrashedRoles{\rolesCi}{\gtGi}$ must be of the form $\gtWithCrashedRoles{\rolesCi}{\stEnd}$.
By applying \Cref{def:assoc} on $\stEnvAssoc{\gtWithCrashedRoles{\rolesCi}{\stEnd}}{\stEnvi; \qEnvi}{\rolesR}$,
we have
$\stEnvi =  \stEnvi[\stEnd] \stEnvComp \stEnvi[\stStopSym]$, where
$\dom{\stEnvi[\stEnd]} = \setcomp{\roleP}{\stEnvApp{\stEnvi}{\roleP} = \stEnd}$ and
$\dom{\stEnvi[\stStopSym]} = \rolesCi = \setcomp{\roleP}{\stEnvApp{\stEnvi}{\roleP} = \stStop}$, as required.
By association again, we have that,
for any $\roleP, \roleQ$, if $\roleQ \in \rolesCi (= \dom{\stEnvi[\stStopSym]}),
 \stEnvApp{\qEnvi}{\cdot, \roleQ} = \stQUnavail$, and otherwise, $\stEnvApp{\qEnvi}{\roleP, \roleQ} = \stQEmpty$, as required.
\end{proof}

\subparagraph*{Liveness by Projection}
\label{sec:proof:liveness}
\begin{restatable}{lemma}{lemProjLive}
 \label{lem:ext-proj-live}
  If $\stEnvAssoc{\gtWithCrashedRoles{\rolesC}{\gtG}}{\stEnv; \qEnv}{\rolesR}$,
  then $\stEnv; \qEnv$ is $\rolesR$-live.%
\end{restatable}
\begin{proof}
Since $\stEnvAssoc{\gtWithCrashedRoles{\rolesC}{\gtG}}{\stEnv; \qEnv}{\rolesR}$,
by \Cref{lem:ext-proj-safe}, we have that $\stEnv; \qEnv$ is $\rolesR$-safe.

We are left to
show that if  $\stEnv; \qEnv \stEnvMoveMaybeCrashStar[\rolesR]
  \stEnvi; \qEnvi$, then any non-crashing path starting with $\stEnvi; \qEnvi$ which is fair is also live.
  By operational correspondence of global type $\gtWithCrashedRoles{\rolesC}{\gtG}$
and configuration $\stEnv; \qEnv$ (\Cref{thm:gtype:proj-sound,thm:gtype:proj-comp}), there exists
a global type $\gtWithCrashedRoles{\rolesCi}{\gtGi}$ such that
$\gtWithCrashedRoles{\rolesC}{\gtG} \!\gtMoveStar[\rolesR]\!
\gtWithCrashedRoles{\rolesCi}{\gtGi}$ and
$\stEnvAssoc{\gtWithCrashedRoles{\rolesCi}{\gtGi}}{\stEnvi; \qEnvi}{\rolesR}$.
We construct a non-crashing fair path $(\stEnvi[n]; \qEnvi[n])_{n \in N}$,
where $N = \setenum{0, 1, 2, \ldots}$, $\stEnvi[0] = \stEnvi$,
$\qEnvi[0] = \qEnvi$, and $\forall n \in N$, $\stEnvi[n]; \qEnvi[n] \!\stEnvMove\!
\stEnvi[n+1]; \qEnvi[n+1]$. Then we just need to show that $(\stEnvi[n]; \qEnvi[n])_{n \in N}$
is live.
  \begin{enumerate}[label=(L\arabic*)]
  \item Suppose that $\stEnvApp{\qEnvi[n]}{\roleP, \roleQ}
      =
      \stQCons{\stQMsg{\stLab}{\tyGround}}{\stQ}
      \neq
      \stQUnavail$ and $\stLab \neq \stCrashLab$. By operational correspondence
      of $\gtWithCrashedRoles{\rolesCi}{\gtGi}$ and configuration $\stEnvi[0]; \qEnvi[0]$,  and
      $\forall n \in N$,  $\stEnvi[n]; \qEnvi[n] \!\stEnvMove\!
\stEnvi[n+1]; \qEnvi[n+1]$, there
      exists $\gtWithCrashedRoles{\rolesCi[n]}{\gtGi[n]}$ such that
      $\stEnvAssoc{\gtWithCrashedRoles{\rolesCi[n]}{\gtGi[n]}}{\stEnvi[n]; \qEnvi[n]}{\rolesR}$.
      By \Cref{lem:eventual-global-form} and \Cref{def:assoc}, we only need to consider
      the case that
       $\gtGi[n] =
      \gtCommTransit{\rolePMaybeCrashed}{\roleQ}{i \in I}{\gtLab[i]}{\tyGround[i]}{\gtGii[i]}{j}$ with
      $\stLab = \gtLab[j] \neq \gtCrashLab$ and $\tyGround[j] = \tyGround$.
      By applying \inferrule{\iruleGtMoveIn},
      $\gtWithCrashedRoles{\rolesCi[n]}{\gtGi[n]}
      \gtMove[
      \stEnvInAnnotSmall{\roleQ}{\roleP}{\stChoice{\gtLab[j]}{\tyGround[j]}}
    ]{
      \rolesR
    }
     \gtWithCrashedRoles{\rolesCi[n]}{\gtGii[j]}$. Hence, by the soundness of association, it holds
     that $\stEnvi[n]; \qEnvi[n] \stEnvMoveInAnnot{\roleQ}{\roleP}{\stChoice{\stLab}{\tyGround}}$.
     Finally, combining the fact that $(\stEnvi[n]; \qEnvi[n])_{n \in N}$  is fair with $\stEnvi[n]; \qEnvi[n] \stEnvMoveInAnnot{\roleQ}{\roleP}{\stChoice{\stLab}{\tyGround}}$, we can conclude that there exists $k$ such that $n \leq k \in N$ and
     $\stEnvi[k]; \qEnvi[k] \stEnvMoveInAnnot{\roleQ}{\roleP}{\stChoice{\stLab}{\tyGround}} \stEnvi[k+1]; \qEnvi[k+1]$, as desired.

 \item Suppose that  $\stEnvApp{\stEnvi[n]}{%
        \roleP%
      } =
      \stExtSum{\roleQ}{i \in I}{\stChoice{\stLab[i]}{\tyGround[i]} \stSeq \stT[i]}$. By operational correspondence
      of $\gtWithCrashedRoles{\rolesCi}{\gtGi}$ and configuration $\stEnvi[0]; \qEnvi[0]$,  and
      $\forall n \in N$,  $\stEnvi[n]; \qEnvi[n] \!\stEnvMove\!
\stEnvi[n+1]; \qEnvi[n+1]$, there
      exists $\gtWithCrashedRoles{\rolesCi[n]}{\gtGi[n]}$ such that
      $\stEnvAssoc{\gtWithCrashedRoles{\rolesCi[n]}{\gtGi[n]}}{\stEnvi[n]; \qEnvi[n]}{\rolesR}$.
      By \Cref{def:assoc}, \Cref{lem:inv-proj}, and \Cref{lem:eventual-global-form},
      we only have to consider that  $\gtGi[n]$ is of the form
     $ \gtComm{\roleQ}{\rolePMaybeCrashed}{i \in I'}{\gtLab[i]}{\tyGroundi[i]}{\gtGii[i]}$
          or
          $
            \gtCommTransit{\roleQMaybeCrashed}{\roleP}{i \in
            I'}{\gtLab[i]}{\tyGroundi[i]}{\gtGii[i]}{j}$,
          where
          $I' \subseteq I$, and
          for all $i \in I'$:
          $\stLab[i] = \gtLab[i]$,
          $\tyGroundi[i] = \tyGround[i]$,
          and $\roleQ \notin \rolesR$ implies $\exists k \in I': \gtLab[k] =
          \gtCrashLab$.
          \begin{itemize}[leftmargin=*]
                \item %
      $\gtGi[n] =
      \gtComm{\roleQ}{\rolePMaybeCrashed}{i \in I'}{\gtLab[i]}{\tyGroundi[i]}{\gtGii[i]}$: we first show that
      $\rolePMaybeCrashed = \roleP$. Since $\stEnvApp{\stEnvi[n]}{%
        \roleP%
      } \neq \stStop$, by \Cref{def:assoc}, we have that $\roleP \notin \rolesCi[n]$, and hence, $\rolePMaybeCrashed
      \neq \rolePCrashed$. Given that $\gtGi[n] =
      \gtComm{\roleQ}{\roleP}{i \in I'}{\gtLab[i]}{\tyGroundi[i]}{\gtGii[i]}$, by association, \Cref{def:assoc}, and inversion of subtyping,
       $\stEnvApp{\stEnvi[n]}{%
        \roleQ%
      }$ is of the form
     $\stIntSum{\roleP}{i \in I''}{\stChoice{\stLab[i]}{\tyGroundi[i]} \stSeq \stTi[i]}$ where
      $I'' \subseteq I'$. Then applying \inferrule{\iruleTCtxOut},
      $\stEnvi[n]; \qEnvi[n]
      \,\stEnvMoveOutAnnot{\roleQ}{\roleP}{\stChoice{\stLab[j]}{\tyGroundi[j]}}$ for some $j \in I''$.
      Then together with the fairness of $(\stEnvi[n]; \qEnvi[n])_{n \in N}$,
      we have that there exists $k, \stLabi, \tyGroundii$ such that $n \leq k \in N$ and
     $\stEnvi[k]; \qEnvi[k] \stEnvMoveOutAnnot{\roleQ}{\roleP}{\stChoice{\stLabi}{\tyGroundii}} \stEnvi[k+1]; \qEnvi[k+1]$,
     which follows that  $\stEnvApp{\qEnvi[k+1]}{\roleQ, \roleP} =
     \stEnvUpd{\qEnvi[k]}{\roleQ, \roleP}{
        \stQCons{
          \stEnvApp{\qEnvi[k]}{\roleQ, \roleP}
        }{
          \stQMsg{\stLabi}{\tyGroundii}
        }}$. Finally, by the previous case (L1), we can conclude that there exists $k', \stLabii, \tyGroundiii$ such that $k + 1 \leq k' \in N$ and
         $\stEnvi[k']; \qEnvi[k'] \stEnvMoveInAnnot{\roleP}{\roleQ}{\stChoice{\stLabii}{\tyGroundiii}} \stEnvi[k'+1]; \qEnvi[k'+1]$, as desired.

    \item $\gtGi[n] =
            \gtCommTransit{\roleQMaybeCrashed}{\roleP}{i \in
            I'}{\gtLab[i]}{\tyGroundi[i]}{\gtGii[i]}{j}$: we consider two subcases:
            \begin{itemize}[leftmargin=*]
            \item $\gtLab[j] \neq \gtCrashLab$: by applying \inferrule{\iruleGtMoveIn},
            $\gtWithCrashedRoles{\rolesCi[n]}{
     \gtGi[n]
    }
    \gtMove[
      \stEnvInAnnotSmall{\roleP}{\roleQ}{\stChoice{\gtLab[j]}{\tyGroundi[j]}}
    ]{
      \rolesR
    }
    \gtWithCrashedRoles{\rolesCi[n]}{\gtGii[j]}$.
    Hence, by the soundness association, we have that  $\stEnvi[n]; \qEnvi[n]
      \,\stEnvMoveInAnnot{\roleP}{\roleQ}{\stChoice{\stLab[j]}{\tyGroundi[j]}}$.
       Finally, combining the fact that $(\stEnvi[n]; \qEnvi[n])_{n \in N}$  is fair with
      $\stEnvi[n]; \qEnvi[n] \stEnvMoveInAnnot{\roleP}{\roleQ}{\stChoice{\stLab[j]}{\tyGroundi[j]}}$,
      we can conclude that there exists $k$ such that $n \leq k \in N$ and
     $\stEnvi[k]; \qEnvi[k] \stEnvMoveInAnnot{\roleP}{\roleQ}{\stChoice{\stLab[j]}{\tyGroundi[j]}} \stEnvi[k+1]; \qEnvi[k+1]$, as desired.
   \item $\gtLab[j] = \gtCrashLab$: consider the following two subcases:
   \begin{itemize}[leftmargin=*]
   \item $\roleQMaybeCrashed = \roleQCrashed$:
    $\stEnvApp{\stEnvi[n]}{%
        \roleQ%
      } = \stStop$ by $\roleQ \in \rolesCi[n]$. By  \Cref{def:assoc}, $\stEnvApp{\qEnvi[n]}{\roleQ, \roleP} =
      \stQEmpty$.  Now applying \inferrule{\iruleTCtxCrashDetect} on $\stEnvApp{\stEnvi[n]}{\roleP} =
      \stExtSum{\roleQ}{i \in I}{\stChoice{\stLab[i]}{\tyGround[i]} \stSeq \stT[i]},
          \stEnvApp{\stEnvi[n]}{\roleQ} = \stStop,
           \stEnvApp{\qEnvi[n]}{\roleQ, \roleP} = \stQEmpty$ and
       $\gtLab[j] = \stLab[j] = \stCrashLab$ with $j \in I' \subseteq I$, we have that
       $\stEnvi[n]; \qEnvi[n] \stEnvMoveAnnot{\ltsCrDe{\mpS}{\roleP}{\roleQ}}$.  Then together with the fairness of
       $(\stEnvi[n]; \qEnvi[n])_{n \in N}$ , we can conclude that there exists $k$ such that $n \leq k \in N$ and
       $\stEnvi[k]; \qEnvi[k] \stEnvMoveAnnot{\ltsCrDe{\mpS}{\roleP}{\roleQ}} \stEnvi[k+1]; \qEnvi[k+1]$.

   \item $\roleQMaybeCrashed \neq \roleQCrashed$: since $\gtLab[j] = \gtCrashLab$,  %
  together with $\gtWithCrashedRoles{\rolesC}{\gtG} \!\gtMoveStar[\rolesR]\!
\gtWithCrashedRoles{\rolesCi}{\gtGi}$,  we know that $\roleQMaybeCrashed \neq \roleQ$ holds, a desired contradiction.
            \qedhere 
   \end{itemize}
            \end{itemize}
          \end{itemize}         
\end{enumerate}
\end{proof}

\colProjAll*
\begin{proof}
Note $\stEnvAssoc{\gtG}{\stEnv}{\rolesR}$ is an abbreviation of
  $\stEnvAssoc{\gtWithCrashedRoles{\emptyset}{\gtG}}{\stEnv; \qEnv[\emptyset]}{\rolesR}$,
  apply \cref{lem:safety-by-proj,lem:proj:df,lem:ext-proj-live}.
\end{proof}

\section{Proofs for \cref{sec:typing_system}}
\label{sec:proof:typesystem}

\begin{lemma}[Typing Inversion]
\label{lem:typeinversion}
 Let $\Theta \vdash \mpP : \stT$. Then:  
 \begin{enumerate}[leftmargin=*]
\item
\label{proc_end}
$\mpP =  \mpNil$ implies $\stT = \stEnd$; 
\item 
\label{proc_stop}
$\mpP = \mpCrash$ implies $\stT = \stStop$;
\item
\label{proc_out}
$\mpP = \procout\roleQ{\mpLab}{\mpE}{\mpP[1]}$ implies 
\begin{enumerate}[label={(a\arabic*)}, leftmargin=*, ref={(a\arabic*)}]
\item $\stOut{\roleQ}{\stLab}{\tyGround}\stSeq{\stT[1]} \stSub \stT$, and
\item $\Theta \vdash \mpP[1] : \stT[1]$, and 
\item   $\Theta \vdash \mpE:\tyGround$;
\end{enumerate}
\item
\label{proc_ext}
$\mpP = \sum_{i\in I}\procin{\roleQ}{\mpLab_i(\mpx_i)}{\mpP_i}$ implies 
\begin{enumerate}[label={(a\arabic*)}, leftmargin=*, ref={(a\arabic*)}]
\item $\stExtSum{\roleQ}{i\in I}{\stChoice{\stLab[i]}{\tyGround[i]}\stSeq{\stT_i}} \stSub \stT$, and 
\item $\forall i\in I\;\;\; \Theta, x_i:\tyGround_i \vdash \mpP_i:\stT_i$; 
\end{enumerate}
\item
\label{proc_cond}
$\mpP = \mpIf\mpE{\mpP_1}{\mpP_2}$ implies 
\begin{enumerate}[label={(a\arabic*)}, leftmargin=*, ref={(a\arabic*)}]
\item $\Theta \vdash \mpE:\tyBool$, and 
\item $\Theta  \vdash \mpP_1:\stT$, and 
\item $\Theta  \vdash \mpP_2:\stT$; 
\end{enumerate}
\item 
\label{proc_rec}
$\mpP =  \mu X.\mpP[1]$ implies 
$\Theta, X:\stT[1]  \vdash \mpP[1]:\stT[1]$ and $\stT[1] \stSub \stT$ for some $\stT[1]$; 
\item 
\label{proc_sub}
$\mpP = \stT$ implies $\Theta \vdash \mpP:\stTi$ for some $\stTi\stSub \stT$; 
\end{enumerate}
\noindent
Let $ \vdash \mpH :\qEnvPartial$. Then:
\begin{enumerate}[resume]
\item
\label{proc_empty_Q}
$\mpH = \mpQEmpty$ implies $\qEnvPartial =  \stQEmpty$; 
\item
\label{proc_unavail_Q}
$\mpH = \mpQUnavail$ implies $\qEnvPartial = \stQUnavail$; 
\item 
\label{proc_message_Q}
$\mpH = (\roleQ , \mpLab(\mpV))$ implies  $\vdash \mpV:\tyGround$ and  %
 $\stEnvApp{\qEnvPartial}{\roleQ} = \stQMsg\mpLab\tyGround$;  
 \item
 \label{proc_con_Q}
 $\mpH =  \mpH_1 \cdot \mpH_2$ implies $\vdash \mpH_1:\qEnvPartial[1]$, and 
 $\vdash \mpH_2:\qEnvPartial[2]$, and $\qEnvPartial = \qEnvPartial[1] \cdot \qEnvPartial[2]$;
\end{enumerate}
\noindent
Let $ \gtWithCrashedRoles{\rolesC}{\gtG} 
  \vdash \prod_{i\in I} (\mpPart{\roleP[i]}{\mpP[i]} \mpPar
  \mpPart{\roleP[i]}{\mpH[i]})$. Then:
  \begin{enumerate}[label={(a\arabic*)}, leftmargin=*, ref={(a\arabic*)}]
  \item $\exists \,\stEnv; \qEnv$ such that  $\stEnvAssoc{\gtWithCrashedRoles{\rolesC}{\gtG}}{\stEnv; \qEnv}{\rolesR}$, and 
  \item $\forall i \in I: \,\,\vdash \mpP_i:\stEnvApp{\stEnv}{\roleP[i]}$, and 
  \item $\forall i \in I:   \,\,\vdash \mpH[i]: \stEnvApp{\qEnv}{-, \roleP[i]}$. 
  \end{enumerate}

\end{lemma}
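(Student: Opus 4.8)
The plan is to establish all three groups of claims---process typing (items~\ref{proc_end}--\ref{proc_sub}), queue typing (items~\ref{proc_empty_Q}--\ref{proc_con_Q}), and session typing---by induction on the relevant typing derivation, following the standard recipe for inversion in the presence of a subsumption rule. The only non-syntax-directed rule in \cref{fig:processes:typesystem} is \inferrule{t-sub}, so the crux throughout is to absorb arbitrary chains of \inferrule{t-sub} using reflexivity (\cref{lem:reflexive-subtyping}) and transitivity (\cref{transitive-subtyping}) of the subtyping relation $\stSub$.

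For the process items I would fix a derivation of $\Theta \vdash \mpP : \stT$ and case-split on its last rule. If this is the syntax-directed rule matching the top constructor of $\mpP$---for instance \inferrule{t-out} when $\mpP = \procout\roleQ{\mpLab}{\mpE}{\mpP[1]}$---then the premises are read off directly, and any asserted subtyping bound is the reflexive instance (e.g.\ $\stOut{\roleQ}{\stLab}{\tyGround}\stSeq{\stT[1]} \stSub \stT$) supplied by \cref{lem:reflexive-subtyping}. If instead the last rule is \inferrule{t-sub}, then $\Theta \vdash \mpP : \stTi$ for some $\stTi \stSub \stT$ with the \emph{same} subject $\mpP$; the induction hypothesis applies to this strictly smaller subderivation and yields the item's premises together with a bound against $\stTi$. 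How this is closed against $\stT$ depends on the shape of the conclusion: for items asserting a subtyping bound (\ref{proc_out},~\ref{proc_ext},~\ref{proc_rec},~\ref{proc_sub}) I compose $\stTi \stSub \stT$ using \cref{transitive-subtyping}, whereas for items asserting a derivable judgement (\ref{proc_cond}, and the premise judgement of~\ref{proc_rec}) I simply re-apply \inferrule{t-sub}. No other rule can conclude a judgement whose subject has the given shape, since the remaining process rules are syntax-directed and disjoint on the outermost process constructor.

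The queue and session items are strictly simpler, as neither queues nor sessions admit a subsumption rule. The queue rules \inferrule{t-$\mpQEmpty$}, \inferrule{t-$\mpQUnavail$}, \inferrule{t-msg}, \inferrule{t-$\cdot$} and the session rule \inferrule{t-sess} are fully syntax-directed, so a single inversion of the last rule immediately produces the stated decomposition---for \inferrule{t-msg} also exposing the value-typing premise $\vdash \mpV : \tyGround$, and for \inferrule{t-sess} directly exposing the associated configuration with $\stEnvAssoc{\gtWithCrashedRoles{\rolesC}{\gtG}}{\stEnv; \qEnv}{\rolesR}$ together with the per-participant process- and queue-typing judgements.

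I expect the main obstacle to be bookkeeping rather than anything conceptual: making the interaction between \inferrule{t-sub} and each syntax-directed rule fully precise. The essential observations are that the subject is unchanged across \inferrule{t-sub} (so the shape hypothesis needed to invoke the induction hypothesis is preserved), and that transitivity of $\stSub$ collapses nested subsumptions into exactly one bound per item. A minor point to check carefully is the recursion case, where the witness $\stT[1]$ produced by \inferrule{t-rec} coincides with $\stT$ and must be threaded through any surrounding \inferrule{t-sub}---via transitivity for the bound $\stT[1] \stSub \stT$ and via re-application of \inferrule{t-sub} for the premise judgement.
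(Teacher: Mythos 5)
Your proposal is correct and takes essentially the same route as the paper: the paper's entire proof of this lemma is the one-liner ``By induction on type derivations in~\cref{fig:processes:typesystem}'', which is precisely the induction you carry out. Your write-up merely makes explicit the bookkeeping the paper leaves implicit — absorbing chains of \inferrule{t-sub} via reflexivity and transitivity of $\stSub$, and observing that the queue and session judgements are fully syntax-directed so inversion there is immediate.
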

\begin{proof}
By induction on type derivations in~\Cref{fig:processes:typesystem}. 
\end{proof}

\begin{lemma}[Typing Congruence]
\label{lem:typingcongruence}
\begin{enumerate}[leftmargin=*]
\item 
\label{lem:proc_cong}
If $\Theta \vdash \mpP:\stT$ and $\mpP \equiv \mpQ$, then $\Theta \vdash \mpQ:\stT$. 
\item 
\label{lem:Q_cong}
If $\vdash \mpH[1]: \qEnvPartial[1]$ and $\mpH[1] \equiv \mpH[2]$, then there exists $\qEnvPartial[2]$ such that $\qEnvPartial[1]
 \equiv \qEnvPartial[2]$ and $\vdash \mpH[2]:\qEnvPartial[2]$. 
 \item 
 \label{lem:cong}
 If $ \gtWithCrashedRoles{\rolesC}{\gtG} 
  \vdash \mpM$ and $\mpM \equiv \mpMi$, then 
 $ \gtWithCrashedRoles{\rolesC}{\gtG} 
  \vdash \mpMi$.  
\end{enumerate}
\end{lemma}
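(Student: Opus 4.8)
The plan is to prove all three parts by induction on the derivation of structural congruence (\cref{tab:congruence}), viewing $\equiv$ as the least equivalence closed under the displayed axioms together with the contextual rules. Because $\equiv$ is symmetric, each statement is effectively a biconditional on typability, so in every inductive case I would check that the axiom (read in both directions) and the contextual closures preserve the relevant judgement. The recurring tool is Typing Inversion (\cref{lem:typeinversion}), which recovers the premises of the last applied typing rule. I would establish the parts in the order stated, since the session case (Part~\ref{lem:cong}) invokes both the process case (Part~\ref{lem:proc_cong}) and the queue case (Part~\ref{lem:Q_cong}) in the contextual participant rule $\mpPart\roleP\mpP \mpPar \mpPart\roleP\mpH[1] \mpPar \mpM \equiv \mpPart\roleP\mpQ \mpPar \mpPart\roleP\mpH[2] \mpPar \mpM$.

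For Part~\ref{lem:proc_cong}, the only non-contextual axiom is recursion unfolding $\mu X.\mpP \equiv \mpP\subst{X}{\mu X.\mpP}$, where the real content lies. I would first prove an auxiliary process-substitution lemma by induction on typing: if $\Theta, X:\stT \vdash \mpP:\stS$ and $\Theta \vdash \mpQ:\stT$ then $\Theta \vdash \mpP\subst{X}{\mpQ}:\stS$ (using \inferrule{t-var} at the variable leaf). The unfold direction then follows: inverting \inferrule{t-rec} (\cref{lem:typeinversion}) on $\Theta \vdash \mu X.\mpP:\stT$ yields $\stT[1]$ with $\Theta, X:\stT[1] \vdash \mpP:\stT[1]$ and $\stT[1] \stSub \stT$; since $\Theta \vdash \mu X.\mpP:\stT[1]$ by \inferrule{t-rec}, the substitution lemma gives $\Theta \vdash \mpP\subst{X}{\mu X.\mpP}:\stT[1]$, and \inferrule{t-sub} recovers $\stT$. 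The fold direction is the delicate one: from a typing of the unfolding I must reconstruct a typing of $\mu X.\mpP$, reassembling a recursive type by replacing the (guarded) occurrences of $\mu X.\mpP$ with $X$; this relies on the equi-recursive reading of recursive local types encoded by \inferrule{\iruleStSubRecL} and \inferrule{\iruleStSubRecR}. I expect this fold direction, and the careful interaction of substitution with subtyping, to be the main obstacle.

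Part~\ref{lem:Q_cong} is the most routine. Inducting on $\equiv$, the identity laws $\mpQEmpty \cdot \mpH \equiv \mpH$ and associativity of $\cdot$ lift directly to the pointwise type-level concatenation of partial queues, so the witnessing $\qEnvPartial[2]$ is $\equiv$ to $\qEnvPartial[1]$ by the same laws. For the swap axiom the two messages have distinct origins $\roleQ[1] \neq \roleQ[2]$; since the partial queue of a singleton message $(\roleQ, \mpLab(\mpV))$ (by the typing of a message, \cref{lem:typeinversion}) places $\stQMsg{\mpLab}{\tyGround}$ only at position $\roleQ$ and $\stQEmpty$ elsewhere, the two contributions fall in disjoint positions of the pointwise concatenation, so their order is immaterial and I obtain literally the same $\qEnvPartial$. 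In particular, congruent partial queues agree on every per-origin message sequence — a fact I would record explicitly, as it is reused in Part~\ref{lem:cong}.

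For Part~\ref{lem:cong}, parallel commutativity and associativity are immediate: \inferrule{t-sess} types a session as an indexed product $\prod_{i\in I}$ over the unordered set $I$, so reordering leaves the premises unchanged. The neutral-element axiom $\mpPart\roleP\mpNil \mpPar \mpPart\roleP\mpQEmpty \mpPar \mpM \equiv \mpM$ requires massaging the association $\stEnvAssoc{\gtWithCrashedRoles{\rolesC}{\gtG}}{\stEnv; \qEnv}{\rolesR}$ of \cref{def:assoc}: the removed role is typed $\stEnd$ with empty (hence available) incoming queue, so it belongs to the $\stEnd$-component of $\stEnv$ and contributes only empty queues, and may therefore be added to or dropped from the configuration without disturbing clauses~\ref{item:assoc:stenv} and~\ref{item:assoc:qenv}. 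Finally, the contextual participant rule uses the first two parts: $\mpP \equiv \mpQ$ gives $\vdash \mpQ:\stEnvApp{\stEnv}{\roleP}$ by Part~\ref{lem:proc_cong}, and $\mpH[1] \equiv \mpH[2]$ gives $\vdash \mpH[2]:\qEnvPartial[2]$ with $\qEnvPartial[1] \equiv \qEnvPartial[2]$ by Part~\ref{lem:Q_cong}; since congruent partial queues agree on every per-origin sequence, the association (which inspects each $\stEnvApp{\qEnv}{\roleP,\roleQ}$ pointwise) is preserved, so the same witnessing $\stEnv; \qEnv$ still certifies $\gtWithCrashedRoles{\rolesC}{\gtG} \vdash \mpMi$ via \inferrule{t-sess}.
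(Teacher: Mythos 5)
Your proposal is correct and follows essentially the same route as the paper's own proof, which is just a case analysis on the congruence rules of \cref{tab:congruence} combined with typing inversion (\cref{lem:typeinversion}). You in fact supply strictly more detail than the paper's one-sentence argument---notably the process-variable substitution lemma and the delicate fold direction of recursion unfolding, both of which the paper glosses over entirely.
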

\begin{proof}
By analysing the cases where $\mpP \equiv \mpQ$, $\mpH[1] \equiv \mpH[2]$, 
$\qEnvPartial[1] \equiv \qEnvPartial[2]$, and $\mpM \equiv \mpMi$, 
and by typing inversion~\Cref{lem:typeinversion}. 
\end{proof}

\begin{lemma}[Substitution]
\label{lem:substitution}
If $\Theta, x:\tyGround   \vdash \mpP :\stT$ and $\Theta \vdash \val:\tyGround$, 
then 
$\Theta \vdash \mpP\{\val/x\}:\stT$. 
\end{lemma}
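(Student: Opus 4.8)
The plan is to proceed by structural induction on the typing derivation $\Theta, x:\tyGround \vdash \mpP : \stT$, using the typing inversion lemma (\Cref{lem:typeinversion}) to split each case according to the shape of $\mpP$. Since $x$ is a \emph{value} variable and $\val$ a value, the substitution $\mpP\{\val/x\}$ commutes with the process-level binders (the recursion binder $\mpX$ and the input payload binders $\mpx[i]$), which is what lets the induction hypothesis fire on subterms. The one ingredient I would take as given is the corresponding substitution property at the level of expressions, namely that $\Theta, x:\tyGround \vdash \mpE : \tyGroundi$ and $\Theta \vdash \val : \tyGround$ imply $\Theta \vdash \mpE\{\val/x\} : \tyGroundi$; this is a standard property of the (abstract) basic-expression language and is what I would invoke in the \inferrule{t-out} and \inferrule{t-cond} cases.

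For the base cases \inferrule{t-$\mpNil$} and \inferrule{t-$\mpCrash$}, substitution leaves the process unchanged and the assigned types $\stEnd$, $\stStop$ do not mention $x$, so the result follows after removing $x:\tyGround$ from the context; the \inferrule{t-var} case is identical, since $\mpX\{\val/x\} = \mpX$ and the binding $\mpX:\stT$ is untouched. In the \inferrule{t-out} case, inversion gives $\Theta, x:\tyGround \vdash \mpE : \tyGroundi$ and $\Theta, x:\tyGround \vdash \mpP[1] : \stT[1]$ with $\stOut{\roleQ}{\stLab}{\tyGroundi}\stSeq\stT[1] \stSub \stT$; I would apply the expression substitution property to $\mpE$, the induction hypothesis to $\mpP[1]$, and reassemble with \inferrule{t-out} followed by \inferrule{t-sub}. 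The \inferrule{t-cond} case is analogous, substituting in the guard and in both branches. The \inferrule{t-sub} case simply applies the induction hypothesis and re-applies subsumption with the same subtyping.

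The binder cases are where care is required. In \inferrule{t-ext}, inversion yields $\Theta, x:\tyGround, \mpx[i]:\tyGround[i] \vdash \mpP[i] : \stT[i]$ for each $i \in I$; assuming by $\alpha$-conversion that $\mpx[i] \neq x$ and $\mpx[i] \notin \fv{\val}$, the substitution pushes under the input prefix, and after reordering the (independent) bindings to $\Theta, \mpx[i]:\tyGround[i], x:\tyGround$ the induction hypothesis applies with ambient context $\Theta, \mpx[i]:\tyGround[i]$, after which I re-apply \inferrule{t-ext}. Similarly, in \inferrule{t-rec} with $\mpP = \mpRec{\mpX}{\mpP[1]}$, inversion gives $\Theta, x:\tyGround, \mpX:\stT[1] \vdash \mpP[1] : \stT[1]$ and $\stT[1] \stSub \stT$; since $\mpX$ is a process variable the substitution commutes with the recursion binder, $(\mpRec{\mpX}{\mpP[1]})\{\val/x\} = \mpRec{\mpX}{(\mpP[1]\{\val/x\})}$, and I would reorder to $\Theta, \mpX:\stT[1], x:\tyGround$ before invoking the induction hypothesis. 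The main obstacle is purely bookkeeping: the context $\Theta$ interleaves value-variable and process-variable bindings, so the argument relies on implicit exchange and weakening properties for $\Theta$. These hold because all bound names are distinct and typing is insensitive to binding order, so I would either record them as routine structural lemmas or fold them silently into the relevant cases.
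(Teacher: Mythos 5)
Your proposal is correct and takes essentially the same approach as the paper: the paper's proof is simply ``By induction on the structure of $\mpP$,'' and your derivation-induction with inversion driven by the shape of $\mpP$ is that same standard argument, spelled out with the binder, exchange, and subsumption bookkeeping made explicit. Your decision to assume the expression-level substitution property as given is also the right call, since the paper treats basic expressions and their typing abstractly.
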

\begin{proof}
By induction on the structure of $\mpP$. 
\end{proof}

\begin{lemma}[Optional Lemma]
\label{lem:val-eval}
If $\emptyset \vdash \mpE:\tyGround$, then there exists $\val$ such that $\eval{\mpE}\val$. 
\end{lemma}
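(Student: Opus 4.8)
The plan is to prove the statement by induction on the derivation of $\emptyset \vdash \mpE : \tyGround$, equivalently on the structure of the closed expression $\mpE$. In essence this is the expression-level analogue of a progress (and normalisation) result: every well-typed, \emph{closed} basic expression evaluates to a value under $\downarrow$.

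First I would handle the base cases, in which $\mpE$ is already a value --- a constant such as $\mpTrue$, $\mpFalse$, or a numeric literal. Here $\eval{\mpE}{\mpE}$ holds immediately by reflexivity of evaluation on values, so we may take $\mpV = \mpE$. Next, the inductive cases treat compound expressions built from operators, e.g.\ $\mpE[1] + \mpE[2]$ or $\mpE[1] < \mpE[2]$. Since $\emptyset \vdash \mpE : \tyGround$, inverting the typing rule for the operator gives that each operand is itself closed (the context remains empty) and well-typed at the appropriate basic type. The induction hypothesis then yields values $\mpV[1], \mpV[2]$ with $\eval{\mpE[1]}{\mpV[1]}$ and $\eval{\mpE[2]}{\mpV[2]}$, and applying the interpretation of the operator to $\mpV[1]$ and $\mpV[2]$ produces the required value $\mpV$. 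The crucial use of the empty typing context is that the variable case is vacuous: a variable $\mpx$ could only be typed via $\mpx : \tyGround \in \emptyset$, which is impossible, so no open subterms ever arise during the induction.

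The main obstacle is not the induction itself --- which is entirely routine --- but rather that the basic expression language is deliberately left abstract in our calculus: we only posit some set of expressions $\mpE$, a set of basic types $\tyGround$, and an evaluation relation $\eval{\mpE}{\mpV}$. Consequently, the lemma implicitly relies on a standing assumption about the instantiated expression language, namely that its semantics is \emph{total} on closed, well-typed expressions: every operator denotes a total function on values of the correct type, and evaluation always terminates. Given this assumption on the chosen base language, the result follows directly by the induction sketched above. This is precisely why the lemma is marked \emph{optional}: its content is discharged by the underlying expression language rather than by the session-type machinery, and it is needed only where the operational semantics (e.g.\ the side condition $\eval{\mpE}{\mpV}$ of rule \inferrule{r-send}) must be shown to make progress.
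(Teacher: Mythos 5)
Your proposal takes exactly the same route as the paper: the paper's entire proof is the one-liner ``By induction on the derivation of $\emptyset \vdash \mpE : \tyGround$,'' which is precisely the induction you carry out. Your additional remarks---that the variable case is vacuous under the empty context and that the result ultimately rests on totality assumptions about the abstract base expression language---are faithful elaborations of what that one-line proof implicitly relies on, not a departure from it.
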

\begin{proof}
By induction on the derivation of $\emptyset \vdash \mpE : \tyGround$. 
\end{proof}

\begin{lemma}
\label{lem:crash-free}
If $\gtWithCrashedRoles{\rolesC}{\gtG} \vdash \mpM$ and $\mpM \neq 
   (\mpPart{\roleP}{\mpCrash} \mpPar \mpPart{\roleP}{\mpQUnavail}) \mpPar 
  \mpMi$ (for all $\roleP, \mpMi$), then for any  
  $\stEnv; \qEnv$ such that  $\stEnvAssoc{\gtWithCrashedRoles{\rolesC}{\gtG}}{\stEnv; \qEnv}{\rolesR}$, it holds that 
  $\forall \roleP \in \gtRoles{\gtG}: \stEnv(\roleP) \neq \stStop$ and $\qEnv(\roleP) \neq \stQUnavail$. 
\end{lemma}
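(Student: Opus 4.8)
The plan is to prove the contrapositive-flavoured statement by connecting the three hypotheses through the association relation and the typing rule \inferrule{t-sess}. The core observation is that a crashed role in $\gtG$ (annotated with $\roleCrashedSym$, or equivalently recorded in $\rolesC$) corresponds exactly to a $\stStop$-typed local type in any associated configuration, which in turn can only type a crashed process $\mpCrash$ with an unavailable queue $\mpQUnavail$. So if no such $\mpCrash \mpPar \mpQUnavail$ summand appears in $\mpM$, then $\rolesC$ must be empty, and association forces every active role to carry a non-$\stStop$ local type and every queue to be available.

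First I would invert the typing derivation $\gtWithCrashedRoles{\rolesC}{\gtG} \vdash \mpM$ using Typing Inversion (\Cref{lem:typeinversion}, final clause): this yields a configuration $\stEnv; \qEnv$ with $\stEnvAssoc{\gtWithCrashedRoles{\rolesC}{\gtG}}{\stEnv; \qEnv}{\rolesR}$, together with $\vdash \mpP[i] : \stEnvApp{\stEnv}{\roleP[i]}$ and $\vdash \mpH[i] : \stEnvApp{\qEnv}{-, \roleP[i]}$ for each $i \in I$. Since any two configurations associated with the same global type agree (up to subtyping) on their behaviour, it suffices to establish the claim for this particular witnessing $\stEnv; \qEnv$; for an arbitrary associated configuration I would then argue that its domain and crash structure are determined by $\rolesC$ and $\gtRoles{\gtG}$ via \Cref{def:assoc} clauses \ref{item:assoc:crash-stop} and \ref{item:assoc:queue}, so the conclusion transfers.

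The key step is to show $\rolesC = \emptyset$ under the hypothesis on $\mpM$. I would argue by contradiction: suppose $\roleP \in \rolesC$. By association clause \ref{item:assoc:crash-stop}, $\stEnvApp{\stEnv}{\roleP} = \stStop$, and since $\roleP \in \rolesC$ we have (from clause \ref{item:assoc:queue}) that $\stEnvApp{\qEnv}{\cdot, \roleP} = \stQUnavail$. Now by the per-role typing judgements, the process $\mpP$ typed at $\roleP$ satisfies $\vdash \mpP : \stStop$. Inspecting the inversion of process typing (\Cref{lem:typeinversion}, reading the rules for $\stStop$): the only process typed by $\stStop$ is $\mpCrash$ itself, via \inferrule{t-$\mpCrash$} --- no \inferrule{t-sub} can intervene because $\stStop \stSub \stT$ holds only for $\stT = \stStop$ by \inferrule{\iruleStSubStop}, and no other typing rule assigns $\stStop$. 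Dually, an incoming queue typed $\stQUnavail$ must be $\mpQUnavail$ by the queue typing rules \inferrule{t-$\mpQUnavail$}, \inferrule{t-$\cdot$}, \inferrule{t-msg} (none of the latter produces an unavailable component). Hence the summand for $\roleP$ in $\mpM$ is $\mpPart{\roleP}{\mpCrash} \mpPar \mpPart{\roleP}{\mpQUnavail}$, contradicting the hypothesis. Therefore $\rolesC = \emptyset$.

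Finally, with $\rolesC = \emptyset$, I would conclude directly. For every $\roleP \in \gtRoles{\gtG}$: association clause \ref{item:assoc:alive-sub} gives $\stEnvApp{\stEnv}{\roleP} \stSub \gtProj[\rolesR]{\gtG}{\roleP}$, and since a projection is never $\stStop$ (the runtime type $\stStop$ arises only from role removal, not projection) and $\stStop$ has no proper supertype, we get $\stEnvApp{\stEnv}{\roleP} \neq \stStop$. For the queues, association clause \ref{item:assoc:queue} makes a receiving queue unavailable \emph{if and only if} the role has crashed; with $\rolesC = \emptyset$ no role is crashed, so $\stEnvApp{\qEnv}{\cdot, \roleP} \neq \stQUnavail$ for all $\roleP \in \gtRoles{\gtG}$, as required. \textbf{The main obstacle} I anticipate is the inversion argument pinning down that $\stStop$ and $\stQUnavail$ are inhabited \emph{only} by $\mpCrash$ and $\mpQUnavail$ respectively: this needs care with \inferrule{t-sub} and the subtyping rule \inferrule{\iruleStSubStop} to rule out any other process or queue shape sneaking in through subsumption, and a check that projection can never produce $\stStop$ so that the subtyping in clause \ref{item:assoc:alive-sub} cannot force an active role to $\stStop$.
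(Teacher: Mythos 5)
Your key step does not go through. To derive the contradiction establishing $\rolesC = \emptyset$, you claim that the only process typable with $\stStop$ is $\mpCrash$, citing Typing Inversion; but \Cref{lem:typeinversion} runs in the opposite direction (from the shape of a process to constraints on its type), and the converse you need is false. By rule \inferrule{t-cond}, the process $\mpIf{\mpE}{\mpCrash}{\mpCrash}$ is typed $\stStop$ whenever both branches are (each via \inferrule{t-$\mpCrash$}), with no subsumption involved, so your analysis of \inferrule{t-sub} and \inferrule{\iruleStSubStop} does not exclude it. Concretely, take $\gtG = \gtEnd$, $\rolesC = \setenum{\roleP}$, and $\mpM = \mpPart{\roleP}{\mpIf{\mpTrue}{\mpCrash}{\mpCrash}} \mpPar \mpPart{\roleP}{\mpQUnavail}$: this session is well typed against $\gtWithCrashedRoles{\rolesC}{\gtEnd}$ and contains no summand of the syntactic form $\mpPart{\roleQ}{\mpCrash} \mpPar \mpPart{\roleQ}{\mpQUnavail}$, yet $\rolesC \neq \emptyset$. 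So your intermediate claim that the hypotheses force $\rolesC = \emptyset$ is not provable, and the contradiction argument collapses.

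Fortunately the detour is unnecessary: your final paragraph already contains essentially the correct argument, and it only needs to be decoupled from $\rolesC = \emptyset$. The conclusion concerns only roles $\roleP \in \gtRoles{\gtG}$, and \Cref{def:assoc} requires the typing context to split into \emph{disjoint} sub-contexts with $\dom{\stEnv[\gtG]} = \gtRoles{\gtG}$ and $\dom{\stEnv[\stStopSym]} = \rolesC$, so $\roleP \in \gtRoles{\gtG}$ already implies $\roleP \notin \rolesC$. Then clause \ref{item:assoc:alive-sub} gives $\stEnvApp{\stEnv}{\roleP} \stSub \gtProj[\rolesR]{\gtG}{\roleP}$, and your observation that projections never produce (or unfold to) $\stStop$, while every supertype of $\stStop$ unfolds to $\stStop$, yields $\stEnvApp{\stEnv}{\roleP} \neq \stStop$; clause \ref{item:assoc:queue} states that the receiving queues of a role are unavailable \emph{if and only if} it lies in $\rolesC$, so $\roleP \notin \rolesC$ gives availability. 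Note that this argument holds for an arbitrary associated configuration directly, which also dissolves your somewhat hand-wavy ``transfer'' step from the witnessing configuration obtained by inverting \inferrule{t-sess}. The paper's own proof is stated only as induction on the derivation of $\gtWithCrashedRoles{\rolesC}{\gtG} \vdash \mpM$ (which, there being a single session typing rule, amounts to exactly this inversion-plus-association reading), so your starting point is sound; the flaw is confined to the middle step, and the repair is to argue per live role via disjointness rather than globally via $\rolesC = \emptyset$.
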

\begin{proof}
By induction on the derivation of $\gtWithCrashedRoles{\rolesC}{\gtG} \vdash \mpM$. 
\end{proof}

\lemSubjectReduction*
\begin{proof}
Let us recap the assumptions:
 \begin{align}
\gtWithCrashedRoles{\rolesC}{\gtG} \vdash \mpM
\label{eq:subred_ass_global_typing}
\\
\mpM \mpMove[\rolesR] \mpMi
\label{eq:subred_ass_session_red}
\end{align}
The proof proceeds by induction on the derivation of $\mpM \mpMove[\rolesR] \mpMi$.
Most cases hold by typing inversion (\Cref{lem:typeinversion}), and by applying the induction
hypothesis.
\begin{itemize}[leftmargin=*]
\item Case \inferrule{r-send}: we have
\begin{gather}
\mpM = \mpPart\roleP{\procout{\roleQ}{\mpLab}{\mpE}{\mpP}}
\mpPar
\mpPart\roleP{\mpH[\roleP]}
\mpPar
\mpPart\roleQ{\mpQ}
\mpPar
\mpPart\roleQ{\mpH[\roleQ]}
\mpPar
\mpM[1]
\label{eq:r_send_M}
\\
\mpMi = \mpPart\roleP{\mpP}
\mpPar
\mpPart\roleP{\mpH[\roleP]}
\mpPar
\mpPart\roleQ{\mpQ}
\mpPar
\mpPart{\roleQ}{\mpH[\roleQ]}\cdot(\roleP,\mpLab(\val))
\mpPar
\mpM[1]
\label{eq:r_send_Mi}
\\
\eval{\mpE} \val
\label{eq:r_send_expression}
\\
\mpH[\roleQ] \neq \mpQUnavail
\label{eq:r_send_Q}
\\
\mpM[1] =  \prod_{i\in I} (\mpPart{\roleP[i]}{\mpP[i]} \mpPar
  \mpPart{\roleP[i]}{\mpH[i]})
 \label{eq:r_send_M1}
\end{gather}
By~\eqref{eq:subred_ass_global_typing} and~\Cref{lem:typeinversion}, we have that there
exists $\stEnv; \qEnv$ such that
\begin{gather}
\stEnvAssoc{\gtWithCrashedRoles{\rolesC}{\gtG}}{\stEnv; \qEnv}{\rolesR}
\label{eq:r_send_type_association}
\\
\vdash \procout\roleQ{\mpLab}{\mpE}{\mpP}:\stEnvApp{\stEnv}{\roleP}
\label{eq:r_send_type_out}
\\
 \vdash \mpH[\roleP]:  \stEnvApp{\qEnv}{-, \roleP}%
 \label{eq:r_send_type_out_Q}
 \\
 \vdash \mpQ: \stEnvApp{\stEnv}{\roleQ}
 \label{eq:r_send_type_out_another}
 \\
  \vdash \mpH[\roleQ]: \stEnvApp{\qEnv}{-, \roleQ}%
 \label{eq:r_send_type_out_Q_another}
 \\
 \forall i \in I: \,\,\vdash \mpP_i:\stEnvApp{\stEnv}{\roleP[i]}
 \label{eq:r_send_M1_type}
 \\
 \forall i \in I:  \,\,\vdash \mpH[i]: \stEnvApp{\qEnv}{-, \roleP[i]}%
  \label{eq:r_send_M1_type_Q}
\end{gather}
By~\eqref{eq:r_send_type_out} and 3 of~\cref{lem:typeinversion}, we have that
\begin{gather}
\stEnvApp{\stEnv}{\roleP} = \stOut{\roleQ}{\mpLab}{\tyGround}\stSeq{\stT}
\label{eq:r_send_type_configuration}
\\
\stT[1] \stSub \stT
\label{eq:r_send_subtyping}
\\
 \vdash \mpP : \stT[1]
 \label{eq:r_send_type_configuration_2}
 \\
 \vdash \mpE:\tyGround
  \label{eq:r_send_type_configuration_3}
\end{gather}
We now let
\begin{gather}
\stEnvi = \stEnvUpd{\stEnv}{\roleP}{\stT}
\label{eq:r_send_new_context}
\\
\qEnvi = \stEnvUpd{\qEnv}{\roleP, \roleQ}{
        \stQCons{
          \stEnvApp{\qEnv}{\roleP, \roleQ}
        }{
          \stQMsg{\stLab}{\tyGround}
        }
        }
\label{eq:r_send_new_Q}
\end{gather}
Then by \inferrule{\iruleTCtxOut} in~\Cref{fig:gtype:tc-red-rules}, we have
\begin{align}
\stEnv; \qEnv \!\stEnvMoveMaybeCrash[\rolesR]\! \stEnvi; \qEnvi
\label{eq:r_send_reduction}
\end{align}
Hence, using~\cref{thm:gtype:proj-comp}, we have that there exists
$\gtWithCrashedRoles{\rolesCi}{\gtGi}$ such that
\begin{gather}
\stEnvAssoc{\gtWithCrashedRoles{\rolesCi}{\gtGi}}{\stEnvi; \qEnvi}{\rolesR}
\label{eq:r_send_type_association_new}
\\
\gtWithCrashedRoles{\rolesC}{\gtG} \gtMove{\rolesR}
    \gtWithCrashedRoles{\rolesCi}{\gtGi}
\label{eq:r_send_type_global_reduction_new}
\end{gather}
Combine \eqref{eq:r_send_type_association_new}, \eqref{eq:r_send_type_global_reduction_new} with
 \begin{gather}
 \vdash \mpP : \stEnvApp{\stEnvi}{\roleP} \quad \quad (\text{by}~\eqref{eq:r_send_new_context}, \eqref{eq:r_send_subtyping}, \eqref{eq:r_send_type_configuration_2}\text{ and }\inferrule{{t-sub}})
 \\
  \vdash \mpH[\roleP] : \stEnvApp{\qEnvi}{-, \roleP} \quad  (\text{by}~\eqref{eq:r_send_new_Q}\text{ and }\eqref{eq:r_send_type_out_Q})
  \\
  \vdash \mpQ : \stEnvApp{\stEnvi}{\roleQ} \quad \quad (\text{by}~\eqref{eq:r_send_type_out_another}\text{ and }\eqref{eq:r_send_new_context})
  \\
  \vdash \mpH[\roleQ] : \stEnvApp{\qEnvi}{-, \roleQ} \quad  (\text{by}~\eqref{eq:r_send_new_Q}, \eqref{eq:r_send_type_out_Q_another}\text{ and } 10, 11\text{ of \cref{lem:typeinversion}})
  \\
   \forall i \in I: \,\,\vdash \mpP_i:\stEnvApp{\stEnvi}{\roleP[i]} \quad (\text{by}~\eqref{eq:r_send_M1_type}\text{ and }\eqref{eq:r_send_new_context})
 \\
 \forall i \in I:  \,\,\vdash \mpH[i]: \stEnvApp{\qEnvi}{-, \roleP[i]} \quad (\text{by}~\eqref{eq:r_send_M1_type_Q}\text{ and }\eqref{eq:r_send_new_Q})
 \end{gather}
 We conclude that $\gtWithCrashedRoles{\rolesCi}{\gtGi} \vdash \mpMi$.

\item Case \inferrule{r-rcv}: similar to case \inferrule{r-send} above, except that we proceed by \inferrule{r-rcv}, inversion of \inferrule{{t-ext}} (4 of \Cref{lem:typeinversion}), and~\Cref{lem:substitution}.

\item Case \inferrule{r-send-\mpCrash}: we have
\begin{gather}
\mpM = \mpPart\roleP{\procout{\roleQ}{\mpLab}{\mpE}{\mpP}}
\mpPar
\mpPart\roleP{\mpH[\roleP]}
\mpPar
\mpPart\roleQ{\mpCrash}
\mpPar
\mpPart\roleQ{\mpQUnavail}
\mpPar
\mpM[1]
\label{eq:r_send_crash_M}
\\
\mpMi = \mpPart\roleP{\mpP}
\mpPar
\mpPart\roleP{\mpH[\roleP]}
\mpPar
\mpPart\roleQ{\mpCrash}
\mpPar
\mpPart{\roleQ}{\mpQUnavail}
\mpPar
\mpM[1]
\label{eq:r_send_crash_Mi}
\\
\\
\mpM[1] =  \prod_{i\in I} (\mpPart{\roleP[i]}{\mpP[i]} \mpPar
  \mpPart{\roleP[i]}{\mpH[i]})
 \label{eq:r_send_crash_M1}
\end{gather}
By~\eqref{eq:subred_ass_global_typing} and~\Cref{lem:typeinversion}, we have that there
exists $\stEnv; \qEnv$ such that
\begin{gather}
\stEnvAssoc{\gtWithCrashedRoles{\rolesC}{\gtG}}{\stEnv; \qEnv}{\rolesR}
\label{eq:r_send_crash_type_association}
\\
\vdash \procout\roleQ{\mpLab}{\mpE}{\mpP}:\stEnvApp{\stEnv}{\roleP}
\label{eq:r_send_crash_type_out}
\\
 \vdash \mpH[\roleP]: \stEnvApp{\qEnv}{-, \roleP}
 \label{eq:r_send_crash_type_out_Q}
 \\
 \vdash \mpCrash: \stEnvApp{\stEnv}{\roleQ}
 \label{eq:r_send_crash_type_out_another}
 \\
  \vdash \mpQUnavail: \stEnvApp{\qEnv}{-, \roleQ}
 \label{eq:r_send_crash_type_out_Q_another}
 \\
 \forall i \in I: \,\,\vdash \mpP_i:\stEnvApp{\stEnv}{\roleP[i]}
 \label{eq:r_send_crash_M1_type}
 \\
 \forall i \in I:  \,\,\vdash \mpH[i]: \stEnvApp{\qEnv}{-, \roleP[i]}
  \label{eq:r_send_crash_M1_type_Q}
\end{gather}
It follows directly that
\begin{gather}
\stEnvApp{\stEnv}{\roleQ} = \stStop
\label{eq:r_send_crash_stop}
\\
 \stEnvApp{\qEnv}{-, \roleQ} = \stQUnavail
 \label{eq:r_send_crash_unavail}
\end{gather}
By~\eqref{eq:r_send_crash_type_out} and 3 of~\cref{lem:typeinversion}, we have that
\begin{gather}
\stEnvApp{\stEnv}{\roleP} = \stOut{\roleQ}{\mpLab}{\tyGround}\stSeq{\stT}
\label{eq:r_send_crash_type_configuration}
\\
\stT[1] \stSub \stT
\label{eq:r_send_crash_subtyping}
\\
 \vdash \mpP : \stT[1]
 \label{eq:r_send_crash_type_configuration_2}
 \\
 \vdash \mpE:\tyGround
  \label{eq:r_send_crash_type_configuration_3}
\end{gather}
We now let
\begin{gather}
\stEnvi = \stEnvUpd{\stEnv}{\roleP}{\stT}
\label{eq:r_send_crash_new_context}
\\
\qEnvi = \stEnvUpd{\qEnv}{\roleP, \roleQ}{
        \stQCons{
          \stEnvApp{\qEnv}{\roleP, \roleQ}
        }{
          \stQMsg{\stLab}{\tyGround}
        }
        }
\label{eq:r_send_crash_new_Q}
\end{gather}
Then by \inferrule{\iruleTCtxOut} in~\Cref{fig:gtype:tc-red-rules}, we have
\begin{align}
\stEnv; \qEnv \!\stEnvMoveMaybeCrash[\rolesR]\! \stEnvi; \qEnvi
\label{eq:r_send_crash_reduction}
\end{align}
Hence, using~\cref{thm:gtype:proj-comp}, we have that there exists
$\gtWithCrashedRoles{\rolesCi}{\gtGi}$ such that
\begin{gather}
\stEnvAssoc{\gtWithCrashedRoles{\rolesCi}{\gtGi}}{\stEnvi; \qEnvi}{\rolesR}
\label{eq:r_send_crash_type_association_new}
\\
\gtWithCrashedRoles{\rolesC}{\gtG} \gtMove{\rolesR}
    \gtWithCrashedRoles{\rolesCi}{\gtGi}
\label{eq:r_send_crash_type_global_reduction_new}
\end{gather}
 Combine \eqref{eq:r_send_crash_type_association_new}, \eqref{eq:r_send_crash_type_global_reduction_new} with
 \begin{gather}
 \vdash \mpP : \stEnvApp{\stEnvi}{\roleP} \quad \quad (\text{by}~\eqref{eq:r_send_crash_new_context}, \eqref{eq:r_send_crash_subtyping}, \eqref{eq:r_send_crash_type_configuration_2}\text{ and }\inferrule{{t-sub}})
 \\
  \vdash \mpH[\roleP] : \stEnvApp{\qEnvi}{-, \roleP} \quad  (\text{by}~\eqref{eq:r_send_crash_new_Q}\text{ and }\eqref{eq:r_send_crash_type_out_Q})
  \\
  \vdash \mpCrash : \stEnvApp{\stEnvi}{\roleQ} \quad \quad (\text{by}~\eqref{eq:r_send_crash_type_out_another},\eqref{eq:r_send_crash_stop}\text{ and }\eqref{eq:r_send_crash_new_context})
  \\
  \vdash \mpQUnavail: \stEnvApp{\qEnvi}{-, \roleQ} \quad  (\text{by}~\eqref{eq:r_send_crash_new_Q}, \eqref{eq:r_send_crash_type_out_Q_another}\text{ and }\eqref{eq:r_send_crash_unavail})
  \\
   \forall i \in I: \,\,\vdash \mpP_i:\stEnvApp{\stEnvi}{\roleP[i]} \quad (\text{by}~\eqref{eq:r_send_crash_M1_type}\text{ and }\eqref{eq:r_send_crash_new_context})
 \\
 \forall i \in I:  \,\,\vdash \mpH[i]: \stEnvApp{\qEnvi}{-, \roleP[i]} \quad (\text{by}~\eqref{eq:r_send_crash_M1_type_Q}\text{ and }\eqref{eq:r_send_crash_new_Q})
 \end{gather}
 We conclude that $\gtWithCrashedRoles{\rolesCi}{\gtGi} \vdash \mpMi$.

\item Case \inferrule{r-rcv-$\odot$}:  we have
\begin{gather}
\mpM = \mpPart\roleP{\sum_{i\in I} \procin\roleQ{\mpLab_i(\mpx_i)}\mpP_i}
\mpPar
\mpPart\roleP{\mpH[\roleP]}
\mpPar
\mpPart\roleQ\mpCrash
\mpPar
\mpPart\roleQ\mpQUnavail
\mpPar
\mpM[1]
\label{eq:r_rcv_det_M}
\\
\mpMi = \mpPart\roleP \mpP_k
\mpPar
\mpPart\roleP{\mpH_{\roleP}}
\mpPar
\mpPart\roleQ\mpCrash
\mpPar
\mpPart\roleQ\mpQUnavail
\mpPar
\mpM[1]
\label{eq:r_rcv_det_Mi}
\\
\mpM[1] =  \prod_{i\in I} (\mpPart{\roleP[i]}{\mpP[i]} \mpPar
  \mpPart{\roleP[i]}{\mpH[i]})
 \label{eq:r_rcv_det_M1}
 \\
 k \in I
 \label{eq:r_rcv_det_plus_con_1}
 \\
 \mpLab[k] = \mpCrashLab
 \label{eq:r_rcv_det_plus_con_2}
 \\
 \nexists \mpLab, \mpV: (\roleQ,
\mpLab(\mpV)) \in \mpH[\roleP]
 \label{eq:r_rcv_det_plus_con_3}
\end{gather}
By~\eqref{eq:subred_ass_global_typing} and~\Cref{lem:typeinversion}, we have that there
exists $\stEnv; \qEnv$ such that
\begin{gather}
\stEnvAssoc{\gtWithCrashedRoles{\rolesC}{\gtG}}{\stEnv; \qEnv}{\rolesR}
\label{eq:r_rcv_det_type_association}
\\
\vdash \sum_{i\in I}\procin{\roleQ}{\mpLab_i(\mpx_i)}{\mpP_i}:\stEnvApp{\stEnv}{\roleP}
\label{eq:r_rcv_det_type_out}
\\
 \vdash \mpH[\roleP]: \stEnvApp{\qEnv}{-, \roleP}
 \label{eq:r_rcv_det_type_out_Q}
 \\
 \vdash \mpCrash: \stEnvApp{\stEnv}{\roleQ}
 \label{eq:r_rcv_det_type_out_another}
 \\
  \vdash \mpQUnavail: \stEnvApp{\qEnv}{-, \roleQ}
 \label{eq:r_rcv_det_type_out_Q_another}
 \\
 \forall i \in I: \,\,\vdash \mpP_i:\stEnvApp{\stEnv}{\roleP[i]}
 \label{eq:r_rcv_det_M1_type}
 \\
 \forall i \in I:  \,\,\vdash \mpH[i]: \stEnvApp{\qEnv}{-, \roleP[i]}
  \label{eq:r_rcv_det_M1_type_Q}
\end{gather}
It follows directly that
\begin{gather}
\stEnvApp{\stEnv}{\roleQ} = \stStop
\label{eq:r_rcv_det_stop}
\\
 \stEnvApp{\qEnv}{-, \roleQ} = \stQUnavail
 \label{eq:r_rcv_det_unavail}
\end{gather}
By~\eqref{eq:r_rcv_det_type_out}, 4 of~\cref{lem:typeinversion}, and \inferrule{\iruleStSubIn}, we have that
\begin{gather}
\stEnvApp{\stEnv}{\roleP} = \stExtSum{\roleQ}{i \in J}{\stChoice{\stLab[i]}{\tyGround[i]} \stSeq \stTi[i]}
\label{eq:r_rcv_det_type_configuration}
\\
J \subseteq I
\label{eq:r_rcv_det_subset}
\\
\forall i \in J:  \stT[i] \stSub \stTi[i]
\label{eq:r_rcv_det_subtyping}
\\
 \setcomp{\stLab[l]}{l \in I} \neq \setenum{\stCrashLab}
 \label{eq:r_rcv_det_not_crash}
\\
 \nexists j \in I \setminus J: \stLab[j] = \stCrashLab
 \label{eq:r_rcv_det_not_crash_2}
 \\
 \forall i \in I: x_i : \tyGround[i] \vdash \mpP[i] : \stT[i]
 \label{eq:r_rcv_det_type_configuration_2}
\end{gather}
From \eqref{eq:r_rcv_det_not_crash_2} and \eqref{eq:r_rcv_det_plus_con_2}, we get
\begin{align}
   k \in J
   \label{eq:r_rcv_det_k_in_J_1}
\end{align}
By \eqref{eq:r_rcv_det_type_out_Q} and \eqref{eq:r_rcv_det_plus_con_3}, we also know
\begin{align}
 \stEnvApp{\qEnv}{\roleQ, \roleP} = \stQEmpty
 \label{eq:r_rcv_det_empty_Q}
\end{align}
We now let
\begin{gather}
\stEnvi = \stEnvUpd{\stEnv}{\roleP}{\stTi[k]}
\label{eq:r_rcv_det_new_context}
\\
\qEnvi = \qEnv
\label{eq:r_rcv_det_new_Q}
\end{gather}
Then by \inferrule{\iruleTCtxCrashDetect} in~\Cref{fig:gtype:tc-red-rules}, we have
\begin{align}
\stEnv; \qEnv \!\stEnvMoveMaybeCrash[\rolesR]\! \stEnvi; \qEnvi
\label{eq:r_rcv_det_reduction}
\end{align}
Hence, using~\cref{thm:gtype:proj-comp}, we have that there exists
$\gtWithCrashedRoles{\rolesCi}{\gtGi}$ such that
\begin{gather}
\stEnvAssoc{\gtWithCrashedRoles{\rolesCi}{\gtGi}}{\stEnvi; \qEnvi}{\rolesR}
\label{eq:r_rcv_det_type_association_new}
\\
\gtWithCrashedRoles{\rolesC}{\gtG} \gtMove{\rolesR}
    \gtWithCrashedRoles{\rolesCi}{\gtGi}
\label{eq:r_rcv_det_type_global_reduction_new}
\end{gather}
 Combine \eqref{eq:r_rcv_det_type_association_new}, \eqref{eq:r_rcv_det_type_global_reduction_new} with
 \begin{gather}
 \vdash \mpP[k] : \stEnvApp{\stEnvi}{\roleP} \quad \quad (\text{by}~\eqref{eq:r_rcv_det_new_context}, \eqref{eq:r_rcv_det_subtyping}, \eqref{eq:r_rcv_det_type_configuration_2}\text{ and }\inferrule{{t-sub}})
 \\
  \vdash \mpH[\roleP] : \stEnvApp{\qEnvi}{-, \roleP} \quad  (\text{by}~\eqref{eq:r_rcv_det_new_Q}\text{ and }\eqref{eq:r_rcv_det_type_out_Q})
  \\
  \vdash \mpCrash : \stEnvApp{\stEnvi}{\roleQ} \quad \quad (\text{by}~\eqref{eq:r_rcv_det_type_out_another},\eqref{eq:r_rcv_det_stop}\text{ and }\eqref{eq:r_rcv_det_new_context})
  \\
  \vdash \mpQUnavail: \stEnvApp{\qEnvi}{-, \roleQ} \quad  (\text{by}~\eqref{eq:r_rcv_det_new_Q}, \eqref{eq:r_rcv_det_type_out_Q_another}\text{ and }\eqref{eq:r_rcv_det_unavail})
  \\
   \forall i \in I: \,\,\vdash \mpP_i:\stEnvApp{\stEnvi}{\roleP[i]} \quad (\text{by}~\eqref{eq:r_rcv_det_M1_type}\text{ and }\eqref{eq:r_rcv_det_new_context})
 \\
 \forall i \in I:  \,\,\vdash \mpH[i]: \stEnvApp{\qEnvi}{-, \roleP[i]} \quad (\text{by}~\eqref{eq:r_rcv_det_M1_type_Q}\text{ and }\eqref{eq:r_rcv_det_new_Q})
 \end{gather}
 We conclude that $\gtWithCrashedRoles{\rolesCi}{\gtGi} \vdash \mpMi$.

\item Case \inferrule{r-$\lightning$}: we have
\begin{gather}
\mpM = \mpPart\roleP{\mpP}
\mpPar
\mpPart\roleP{\mpH[\roleP]}
\mpPar
\mpM[1]
\label{eq:r_lighting_M}
\\
\mpMi = \mpPart\roleP{\mpCrash}
\mpPar
\mpPart\roleP{\mpQUnavail}
\mpPar
\mpM
\label{eq:r_lighting_Mi}
\\
\mpP \neq \mpNil
\label{eq:r_lighting_M_Nil}
\\
\roleP \notin \rolesR
\label{eq:r_lighting_M_reliable}
\\
\mpM[1] =  \prod_{i\in I} (\mpPart{\roleP[i]}{\mpP[i]} \mpPar
  \mpPart{\roleP[i]}{\mpH[i]})
 \label{eq:r_lighting_M1}
\end{gather}
By~\eqref{eq:subred_ass_global_typing} and~\Cref{lem:typeinversion}, we have that there
exists $\stEnv; \qEnv$ such that
\begin{gather}
\stEnvAssoc{\gtWithCrashedRoles{\rolesC}{\gtG}}{\stEnv; \qEnv}{\rolesR}
\label{eq:r_lighting_type_association}
\\
\vdash \mpP:\stEnvApp{\stEnv}{\roleP}
\label{eq:r_lighting_type_out}
\\
 \vdash \mpH[\roleP]: \stEnvApp{\qEnv}{-, \roleP}
 \label{eq:r_lighting_type_out_Q}
 \\
 \forall i \in I: \,\,\vdash \mpP_i:\stEnvApp{\stEnv}{\roleP[i]}
 \label{eq:r_lighting_M1_type}
 \\
 \forall i \in I:  \,\,\vdash \mpH[i]: \stEnvApp{\qEnv}{-, \roleP[i]}
  \label{eq:r_lighting_M1_type_Q}
\end{gather}
By \eqref{eq:r_lighting_M_Nil}, \eqref{eq:r_lighting_M_reliable}, and \eqref{eq:r_lighting_type_out},
we have that
\begin{gather}
\stEnvApp{\stEnv}{\roleP} \neq \stStop
\label{eq:r_lighting_not_stop}
\\
 \stEnvApp{\stEnv}{\roleP} \neq \stEnd
 \label{eq:r_lighting_not_end}
\end{gather}
We now let
\begin{gather}
\stEnvi = \stEnvUpd{\stEnv}{\roleP}{\stStop}
\label{eq:r_lighting_new_context}
\\
\qEnvi = \stEnvUpd{\qEnv}{\cdot, \roleP}{
       \stQUnavail
        }
\label{eq:r_lighting_new_Q}
\end{gather}
Then by \inferrule{\iruleTCtxCrash} in~\Cref{fig:gtype:tc-red-rules}, we have
\begin{align}
\stEnv; \qEnv \!\stEnvMoveMaybeCrash[\rolesR]\! \stEnvi; \qEnvi
\label{eq:r_lighting_reduction}
\end{align}
Hence, using~\cref{thm:gtype:proj-comp}, we have that there exists
$\gtWithCrashedRoles{\rolesCi}{\gtGi}$ such that
\begin{gather}
\stEnvAssoc{\gtWithCrashedRoles{\rolesCi}{\gtGi}}{\stEnvi; \qEnvi}{\rolesR}
\label{eq:r_lighting_type_association_new}
\\
\gtWithCrashedRoles{\rolesC}{\gtG} \gtMove{\rolesR}
    \gtWithCrashedRoles{\rolesCi}{\gtGi}
\label{eq:r_lighting_type_global_reduction_new}
\end{gather}
 Combine \eqref{eq:r_lighting_type_association_new}, \eqref{eq:r_lighting_type_global_reduction_new} with
 \begin{gather}
 \vdash \mpCrash : \stEnvApp{\stEnvi}{\roleP} \quad \quad (\text{by}~\eqref{eq:r_lighting_new_context})
  \\
  \vdash \mpQUnavail : \stEnvApp{\qEnvi}{-, \roleP} \quad  (\text{by}~\eqref{eq:r_lighting_new_Q})
    \\
    \forall i \in I: \,\,\vdash \mpP_i:\stEnvApp{\stEnvi}{\roleP[i]} \quad (\text{by}~\eqref{eq:r_lighting_M1_type}\text{ and }\eqref{eq:r_lighting_new_context})
 \\
 \forall i \in I:  \,\,\vdash \mpH[i]: \stEnvApp{\qEnvi}{-, \roleP[i]} \quad (\text{by}~\eqref{eq:r_lighting_M1_type_Q}\text{ and }\eqref{eq:r_lighting_new_Q})
 \end{gather}
  We conclude that $\gtWithCrashedRoles{\rolesCi}{\gtGi} \vdash \mpMi$.

\item Case \inferrule{r-cond-T}: we have
\begin{gather}
\mpM = \mpPart\roleP{\mpIf{\mpE}{\mpP}{\mpQ}}
\mpPar
\mpPart\roleP\mpH
\mpPar
\mpM[1]
\label{eq:r_cond_T_M}
\\
\mpMi = \mpPart\roleP\mpP
\mpPar
\mpPart\roleP\mpH
\mpPar
\mpM[1]
\label{eq:r_cond_T_Mi}
\\
\eval{\mpE}{\mpTrue}
\label{eq:r_cond_T_expression}
\\
\mpM[1] =  \prod_{i\in I} (\mpPart{\roleP[i]}{\mpP[i]} \mpPar
  \mpPart{\roleP[i]}{\mpH[i]})
 \label{eq:r_cond_T_M1}
\end{gather}
By~\eqref{eq:subred_ass_global_typing} and~\Cref{lem:typeinversion}, we have that there
exists $\stEnv; \qEnv$ such that
\begin{gather}
\stEnvAssoc{\gtWithCrashedRoles{\rolesC}{\gtG}}{\stEnv; \qEnv}{\rolesR}
\label{eq:r_cond_T_type_association}
\\
\vdash \mpIf{\mpE}{\mpP}{\mpQ}:\stEnvApp{\stEnv}{\roleP}
\label{eq:r_cond_T_type_out}
\\
 \vdash \mpH: \stEnvApp{\qEnv}{-, \roleP}
 \label{eq:r_cond_T_type_out_Q}
 \\
  \forall i \in I: \,\,\vdash \mpP_i:\stEnvApp{\stEnv}{\roleP[i]}
 \label{eq:r_cond_T_M1_type}
 \\
 \forall i \in I:  \,\,\vdash \mpH[i]: \stEnvApp{\qEnv}{-, \roleP[i]}
  \label{eq:r_cond_T_M1_type_Q}
\end{gather}
By~\eqref{eq:r_cond_T_expression},~\eqref{eq:r_cond_T_type_out} and 5 of~\cref{lem:typeinversion}, we have that
\begin{gather}
\vdash \mpE : \tyBool
 \label{eq:r_cond_T_type_configuration_1}
 \\
 \vdash \mpP : \stEnvApp{\stEnv}{\roleP}
 \label{eq:r_cond_T_type_configuration_2}
 \\
 \vdash \mpQ: \stEnvApp{\stEnv}{\roleP}
  \label{eq:r_cond_T_type_configuration_3}
\end{gather}
Combine~\eqref{eq:r_cond_T_type_configuration_2} with \eqref{eq:r_cond_T_type_out_Q},
\eqref{eq:r_cond_T_M1_type},  and \eqref{eq:r_cond_T_M1_type_Q}, we can conclude
that $\gtWithCrashedRoles{\rolesC}{\gtG} \vdash \mpMi$, as desired.

\item Case \inferrule{r-cond-F}: similar to the case \inferrule{r-cond-T}.

\item Case \inferrule{r-struct}: assume that $\mpM \mpMove\mpMi$ is derived from
\begin{gather}
\mpM \equiv \mpM[1]
\label{eq:r_struct_eq_1}
\\
\mpM[1] \mpMove \mpMi[1]
\label{eq:r_struct_eq_2}
\\
\mpMi[1] \equiv \mpMi
\label{eq:r_struct_eq_3}
\end{gather}
From \eqref{eq:r_struct_eq_1}, \eqref{eq:subred_ass_global_typing}, by 3 of~\cref{lem:typingcongruence},
we have that $\gtWithCrashedRoles{\rolesC}{\gtG} \vdash \mpM[1]$. By induction hypothesis, either
$\gtWithCrashedRoles{\rolesC}{\gtG} \vdash \mpMi[1]$ or  there exists $\gtWithCrashedRoles{\rolesCi}{\gtGi}$ such that
  $\gtWithCrashedRoles{\rolesC}{\gtG} \gtMove{\rolesR}
  \gtWithCrashedRoles{\rolesCi}{\gtGi}$ and
  $\gtWithCrashedRoles{\rolesCi}{\gtGi} \vdash \mpMi[1]$. Now by \eqref{eq:r_struct_eq_3} and 3 of \Cref{lem:typingcongruence}, we have that either $\gtWithCrashedRoles{\rolesC}{\gtG} \vdash \mpMi$ or
  $\gtWithCrashedRoles{\rolesCi}{\gtGi} \vdash \mpMi$, as desired.
  \qedhere
\end{itemize}
\end{proof}

\lemSessionFidelity*
\begin{proof}
Let us recap the assumptions:
 \begin{align}
\gtWithCrashedRoles{\rolesC}{\gtG} \vdash \mpM
\label{eq:sf_ass_global_typing}
\\
\gtWithCrashedRoles{\rolesC}{\gtG} \gtMove{\rolesR}
\label{eq:sf_ass_global_red}
\end{align}
The proof proceeds by induction on the derivation of $\gtWithCrashedRoles{\rolesC}{\gtG}$.
\begin{itemize}[leftmargin=*]
\item Case \inferrule{\iruleGtMoveCrash}: by inversion of \inferrule{\iruleGtMoveCrash}, we have
\begin{gather}
 \roleP \notin \rolesR
 \label{eq:sf_global_crash_cond_1}
 \\
 \roleP \in \gtRoles{\gtG}
  \label{eq:sf_global_crash_cond_2}
  \\
   \gtWithCrashedRoles{\rolesC}{\gtG}
    \gtMove[\ltsCrashSmall{\mpS}{\roleP}]{\rolesR}
    \gtWithCrashedRoles{\rolesC \cup \setenum{\roleP}}{\gtCrashRole{\gtG}{\roleP}}
     \label{eq:sf_global_crash_cond_3}
  \end{gather}
 We can assume that $\mpM$ is of the form
 \begin{gather}
 \mpPart\roleP{\mpP}
\mpPar
\mpPart\roleP{\mpH[\roleP]}
\mpPar
\mpM[1]
\label{eq:sf_global_crash_M_form}
\\
\mpM[1] =  \prod_{i\in I} (\mpPart{\roleP[i]}{\mpP[i]} \mpPar
  \mpPart{\roleP[i]}{\mpH[i]})
\label{eq:sf_global_crash_M1_form}
\\
\mpP \neq \mpNil
\label{eq:sf_global_crash_M_form_not_nil}
\\
\roleP \notin \rolesR
\label{eq:sf_global_crash_M_form_reliable}
\end{gather}
Then by~\eqref{eq:sf_ass_global_typing} and~\Cref{lem:typeinversion}, there
exists $\stEnv; \qEnv$ such that
\begin{gather}
\stEnvAssoc{\gtWithCrashedRoles{\rolesC}{\gtG}}{\stEnv; \qEnv}{\rolesR}
\label{eq:sf_global_crash_type_association}
\\
\vdash \mpP:\stEnvApp{\stEnv}{\roleP}
\label{eq:sf_global__crash_type_out}
\\
 \vdash \mpH[\roleP]:  \stEnvApp{\qEnv}{-, \roleP}%
 \label{eq:sf_global_crash_type_out_Q}
 \\
 \forall i \in I: \,\,\vdash \mpP_i:\stEnvApp{\stEnv}{\roleP[i]}
 \label{eq:sf_global_crash_M1_type}
 \\
 \forall i \in I:  \,\,\vdash \mpH[i]: \stEnvApp{\qEnv}{-, \roleP[i]}%
  \label{eq:sf_global_crash_M1_type_Q}
\end{gather}
From~\eqref{eq:sf_global_crash_cond_3}, by~\cref{thm:gtype:proj-sound}, there is $\stEnvi; \qEnvi$ such that
\begin{gather}
\stEnvAssoc{\gtWithCrashedRoles{\rolesC \cup \setenum{\roleP}}{\gtCrashRole{\gtG}{\roleP}}}{\stEnvi; \qEnvi}{\rolesR}
\label{eq:sf_global_crash_type_association_new}
\\
\stEnv; \qEnv
      \stEnvMoveAnnot{\ltsCrash{\mpS}{\roleP}}
      \stEnvi; \qEnvi
\label{eq:sf_global_crash_ctx_red}
\end{gather}
Using \eqref{eq:sf_global_crash_M_form_not_nil}, \eqref{eq:sf_global_crash_M_form_reliable}, \eqref{eq:sf_global__crash_type_out} and \inferrule{\iruleTCtxCrash} in \Cref{fig:gtype:tc-red-rules}, we get
\begin{gather}
\stEnvi = \stEnvUpd{\stEnv}{\roleP}{\stStop}
\label{eq:sf_global_crash_ctx_new}
\\
\qEnvi = \stEnvUpd{\qEnv}{\cdot, \roleP}{\stQUnavail}
\label{eq:sf_global_crash_ctx_Q_new}
\end{gather}
It follows that
\begin{gather}
 \vdash \mpCrash : \stEnvApp{\stEnvi}{\roleP} \quad \quad (\text{by}~\eqref{eq:sf_global_crash_ctx_new})
  \\
  \vdash \mpQUnavail : \stEnvApp{\qEnvi}{-, \roleP} \quad  (\text{by}~\eqref{eq:sf_global_crash_ctx_Q_new})
    \\
    \forall i \in I: \,\,\vdash \mpP_i:\stEnvApp{\stEnvi}{\roleP[i]} \quad (\text{by}~\eqref{eq:sf_global_crash_M1_type}\text{ and }\eqref{eq:sf_global_crash_ctx_new})
 \\
 \forall i \in I:  \,\,\vdash \mpH[i]: \stEnvApp{\qEnvi}{-, \roleP[i]} \quad (\text{by}~\eqref{eq:sf_global_crash_M1_type_Q}\text{ and }\eqref{eq:sf_global_crash_ctx_Q_new})
\end{gather}
Therefore, by \inferrule{r-$\lightning$} and \inferrule{{t-sess}}, we can conclude that there exists $\mpMi = \mpPart\roleP{\mpCrash}
\mpPar
\mpPart\roleP{\mpQUnavail}
\mpPar
\mpM[1]$ and $\gtWithCrashedRoles{\rolesCi}{\gtGi}
   =
    \gtWithCrashedRoles{\rolesC \cup \setenum{\roleP}}{\gtCrashRole{\gtG}{\roleP}}$ such that
   $ \gtWithCrashedRoles{\rolesC}{\gtG}
    \gtMove[\ltsCrashSmall{\mpS}{\roleP}]{\rolesR}
   \gtWithCrashedRoles{\rolesCi}{\gtGi}$,
    $\mpM \;\redCrash{\roleP}{\rolesR}\;\mpMi$, and
   $ \gtWithCrashedRoles{\rolesCi}{\gtGi} \vdash \mpMi$.

\item Case \inferrule{\iruleGtMoveOut}: by inversion of \inferrule{\iruleGtMoveOut}, we have
\begin{gather}
j \in I
 \label{eq:sf_global_send_cond_1}
 \\
 \gtLab[j] \neq \gtCrashLab
  \label{eq:sf_global_send_cond_2}
  \\
  \gtWithCrashedRoles{\rolesC}{
      \gtCommSmall{\roleP}{\roleQ}{i \in I}{\gtLab[i]}{\tyGround[i]}{\gtGi[i]}
    }
    \gtMove[
      \stEnvOutAnnotSmall{\roleP}{\roleQ}{\stChoice{\gtLab[j]}{\tyGround[j]}}
    ]{
      \rolesR
    }
    \gtWithCrashedRoles{\rolesC}{
      \gtCommTransit{\roleP}{\roleQ}{i \in I}{\gtLab[i]}{\tyGround[i]}{\gtGi[i]}{j}
    }
     \label{eq:sf_global_send_cond_3}
  \end{gather}
 We can assume that $\mpM$ is of the form
 \begin{gather}
 \mpPart\roleP{\procout{\roleQ}{\mpLab}{\mpE}{\mpP}}
\mpPar
\mpPart\roleP{\mpH[\roleP]}
\mpPar
\mpPart\roleQ{\mpQ}
\mpPar
\mpPart\roleQ{\mpH[\roleQ]}
\mpPar
\mpM[1]
\label{eq:sf_global_send_M_form}
\\
\mpM[1] =  \prod_{i\in I} (\mpPart{\roleP[i]}{\mpP[i]} \mpPar
  \mpPart{\roleP[i]}{\mpH[i]})
\label{eq:sf_global_send_M1_form}
\\
\eval{\mpE} \val
\label{eq:sf_global_send_M_form_value}
\\
 \mpH[\roleQ] \neq \mpQUnavail
\label{eq:sf_global_csend_M_form_Q_availabe}
\end{gather}
Then by~\eqref{eq:sf_ass_global_typing} and~\Cref{lem:typeinversion}, there
exists $\stEnv; \qEnv$ such that
\begin{gather}
\stEnvAssoc{\gtWithCrashedRoles{\rolesC}{\gtG}}{\stEnv; \qEnv}{\rolesR}
\label{eq:sf_global_send_type_association}
\\
\vdash \procout\roleQ{\mpLab}{\mpE}{\mpP}:\stEnvApp{\stEnv}{\roleP}
\label{eq:sf_global_send_type_out}
\\
 \vdash \mpH[\roleP]:  \stEnvApp{\qEnv}{-, \roleP}%
 \label{eq:sf_global_send_type_out_Q}
 \\
 \vdash \mpQ: \stEnvApp{\stEnv}{\roleQ}
 \label{eq:sf_global_send_type_out_another}
 \\
  \vdash \mpH[\roleQ]: \stEnvApp{\qEnv}{-, \roleQ}%
 \label{eq:sf_global_send_type_out_Q_another}
 \\
 \forall i \in I: \,\,\vdash \mpP_i:\stEnvApp{\stEnv}{\roleP[i]}
 \label{eq:sf_global_send_M1_type}
 \\
 \forall i \in I:  \,\,\vdash \mpH[i]: \stEnvApp{\qEnv}{-, \roleP[i]}%
  \label{eq:sf_global_send_M1_type_Q}
\end{gather}
By~\eqref{eq:sf_global_send_type_out} and 3 of~\cref{lem:typeinversion}, we have that
\begin{gather}
\stEnvApp{\stEnv}{\roleP} = \stOut{\roleQ}{\mpLab}{\tyGround}\stSeq{\stT}
\label{eq:sf_global_send_type_configuration}
\\
\stT[1] \stSub \stT
\label{eq:sf_global_send_subtyping}
\\
 \vdash \mpP : \stT[1]
 \label{eq:sf_global_send_type_configuration_2}
 \\
 \vdash \mpE:\tyGround
  \label{eq:sf_global_send_type_configuration_3}
\end{gather}
From~\eqref{eq:sf_global_send_cond_3}, by~\cref{thm:gtype:proj-sound}, there are $\stEnvi; \qEnvi$ and
$\stEnvAnnotGenericSym =
\stEnvOutAnnot{\roleP}{\roleQ}{\stChoice{\stLab[k]}{\tyGround[k]}}$
 such that
\begin{gather}
  k \in I
  \label{eq:sf_global_send_M1_type_index}
  \\
\stEnvAssoc{ \gtWithCrashedRoles{\rolesC}{
      \gtCommTransit{\roleP}{\roleQ}{i \in I}{\gtLab[i]}{\tyGround[i]}{\gtGi[i]}{j}
}}{\stEnvi; \qEnvi}{\rolesR}
\label{eq:sf_global_send_type_association_new}
\\
\stEnv; \qEnv
      \stEnvMoveOutAnnot{\roleP}{\roleQ}{\stChoice{\stLab[k]}{\tyGround[k]}}
      \stEnvi; \qEnvi
\label{eq:sf_global_send_ctx_red}
\end{gather}
Using \eqref{eq:sf_global_send_type_configuration}, \eqref{eq:sf_global_send_M1_type_index}, and \inferrule{\iruleTCtxOut} in \Cref{fig:gtype:tc-red-rules}, we get
\begin{gather}
\stEnvi = \stEnvUpd{\stEnv}{\roleP}{\stT}
\label{eq:sf_global_send_new_context}
\\
\qEnvi = \stEnvUpd{\qEnv}{\roleP, \roleQ}{
        \stQCons{
          \stEnvApp{\qEnv}{\roleP, \roleQ}
        }{
          \stQMsg{\stLab}{\tyGround}
        }
        }
\label{eq:sf_global_send_new_Q}
\end{gather}
It follows that
 \begin{gather}
 \vdash \mpP : \stEnvApp{\stEnvi}{\roleP} \quad \quad (\text{by}~\eqref{eq:sf_global_send_new_context}, \eqref{eq:sf_global_send_subtyping}, \eqref{eq:sf_global_send_type_configuration_2}\text{ and }\inferrule{{t-sub}})
 \\
  \vdash \mpH[\roleP] : \stEnvApp{\qEnvi}{-, \roleP} \quad  (\text{by}~\eqref{eq:sf_global_send_new_Q}\text{ and }\eqref{eq:sf_global_send_type_out_Q})
  \\
  \vdash \mpQ : \stEnvApp{\stEnvi}{\roleQ} \quad \quad (\text{by}~\eqref{eq:sf_global_send_type_out_another}\text{ and }\eqref{eq:sf_global_send_new_context})
  \\
  \vdash \mpH[\roleQ] : \stEnvApp{\qEnvi}{-, \roleQ} \quad  (\text{by}~\eqref{eq:sf_global_send_new_Q}, \eqref{eq:sf_global_send_type_out_Q_another}\text{ and } 10, 11\text{ of \cref{lem:typeinversion}})
  \\
   \forall i \in I: \,\,\vdash \mpP_i:\stEnvApp{\stEnvi}{\roleP[i]} \quad (\text{by}~\eqref{eq:sf_global_send_M1_type}\text{ and }\eqref{eq:sf_global_send_new_context})
 \\
 \forall i \in I:  \,\,\vdash \mpH[i]: \stEnvApp{\qEnvi}{-, \roleP[i]} \quad (\text{by}~\eqref{eq:sf_global_send_M1_type_Q}\text{ and }\eqref{eq:sf_global_send_new_Q})
 \end{gather}
Therefore, by \inferrule{r-send} and \inferrule{{t-sess}}, we can conclude that there exists $\mpMi = \mpPart\roleP{\mpP}
\mpPar
\mpPart\roleP{\mpH[\roleP]}
\mpPar
\mpPart\roleQ{\mpQ}
\mpPar
\mpPart{\roleQ}{\mpH[\roleQ]}\cdot(\roleP,\mpLab(\val))
\mpPar
\mpM[1]$ and $\gtWithCrashedRoles{\rolesCi}{\gtGi}
   =
    \gtWithCrashedRoles{\rolesC}{
      \gtCommTransit{\roleP}{\roleQ}{i \in I}{\gtLab[i]}{\tyGround[i]}{\gtGi[i]}{j}
    }$ such that
   $ \gtWithCrashedRoles{\rolesC}{\gtG}
    \gtMove[
      \stEnvOutAnnotSmall{\roleP}{\roleQ}{\stChoice{\gtLab[j]}{\tyGround[j]}}
    ]{
      \rolesR
    }
   \gtWithCrashedRoles{\rolesCi}{\gtGi}$,
   $\mpM \;\redSend{\roleP}{\roleQ}{\mpLab}\;\mpMi$, and
   $ \gtWithCrashedRoles{\rolesCi}{\gtGi} \vdash \mpMi$.

\item Case \inferrule{\iruleGtMoveIn}: similar to case \inferrule{\iruleGtMoveOut} above, except that we proceed by \inferrule{\iruleGtMoveIn}, \inferrule{r-rcv}, inversion of \inferrule{{t-ext}} (4 of \Cref{lem:typeinversion}), and~\Cref{lem:substitution}.

\item Case \inferrule{\iruleGtMoveOrph}: by inversion of \inferrule{\iruleGtMoveOut}, we have
\begin{gather}
j \in I
 \label{eq:sf_global_orph_cond_1}
 \\
\gtLab[j] \neq \gtCrashLab
  \label{eq:sf_global_oprh_cond_2}
  \\
 \gtWithCrashedRoles{\rolesC}{\gtCommSmall{\roleP}{\roleQCrashed}{i \in
    I}{\gtLab[i]}{\tyGround[i]}{\gtGi[i]}}
    \gtMove[\stEnvOutAnnotSmall{\roleP}{\roleQ}{\stChoice{\gtLab[j]}{\tyGround[j]}}]{
      \rolesR
    }
    \gtWithCrashedRoles{\rolesC}{\gtGi[j]}
\label{eq:sf_global_orph_cond_3}
  \end{gather}
 We can assume that $\mpM$ is of the form
 \begin{gather}
 \mpPart\roleP{\procout{\roleQ}{\mpLab}{\mpE}{\mpP}}
\mpPar
\mpPart\roleP{\mpH[\roleP]}
\mpPar
\mpPart\roleQ{\mpCrash}
\mpPar
\mpPart\roleQ{\mpQUnavail}
\mpPar
\mpM[1]
\label{eq:sf_global_orph_M_form}
\\
\mpM[1] =  \prod_{i\in I} (\mpPart{\roleP[i]}{\mpP[i]} \mpPar
  \mpPart{\roleP[i]}{\mpH[i]})
\label{eq:sf_global_orph_M1_form}
\end{gather}
Then by~\eqref{eq:sf_ass_global_typing} and~\Cref{lem:typeinversion}, there
exists $\stEnv; \qEnv$ such that
\begin{gather}
\stEnvAssoc{\gtWithCrashedRoles{\rolesC}{\gtG}}{\stEnv; \qEnv}{\rolesR}
\label{eq:sf_global_orph_type_association}
\\
\vdash \procout\roleQ{\mpLab}{\mpE}{\mpP}:\stEnvApp{\stEnv}{\roleP}
\label{eq:sf_global_orph_type_out}
\\
 \vdash \mpH[\roleP]: \stEnvApp{\qEnv}{-, \roleP}
 \label{eq:sf_global_orph_type_out_Q}
 \\
 \vdash \mpCrash: \stEnvApp{\stEnv}{\roleQ}
 \label{eq:sf_global_orph_type_out_another}
 \\
  \vdash \mpQUnavail: \stEnvApp{\qEnv}{-, \roleQ}
 \label{eq:sf_global_orph_type_out_Q_another}
 \\
 \forall i \in I: \,\,\vdash \mpP_i:\stEnvApp{\stEnv}{\roleP[i]}
 \label{eq:sf_global_orph_M1_type}
 \\
 \forall i \in I:  \,\,\vdash \mpH[i]: \stEnvApp{\qEnv}{-, \roleP[i]}
  \label{eq:sf_global_orph_M1_type_Q}
\end{gather}
By~\eqref{eq:sf_global_orph_type_out} and 3 of~\cref{lem:typeinversion}, we have that
\begin{gather}
\stEnvApp{\stEnv}{\roleP} = \stOut{\roleQ}{\mpLab}{\tyGround}\stSeq{\stT}
\label{eq:sf_global_orph_type_configuration}
\\
\stT[1] \stSub \stT
\label{eq:sf_global_orph_subtyping}
\\
 \vdash \mpP : \stT[1]
 \label{eq:sf_global_orph_type_configuration_2}
\end{gather}
From~\eqref{eq:sf_global_orph_cond_3}, by~\cref{thm:gtype:proj-sound}, there are $\stEnvi; \qEnvi$ and
$\stEnvAnnotGenericSym =
\stEnvOutAnnot{\roleP}{\roleQ}{\stChoice{\stLab[k]}{\tyGround[k]}}$
 such that
\begin{gather}
  k \in I
  \label{eq:sf_global_orph_M1_type_index}
  \\
\stEnvAssoc{ \gtWithCrashedRoles{\rolesC}{
     \gtGi[j]
}}{\stEnvi; \qEnvi}{\rolesR}
\label{eq:sf_global_orph_type_association_new}
\\
\stEnv; \qEnv
      \stEnvMoveOutAnnot{\roleP}{\roleQ}{\stChoice{\stLab[k]}{\tyGround[k]}}
      \stEnvi; \qEnvi
\label{eq:sf_global_orph_ctx_red}
\end{gather}
Using \eqref{eq:sf_global_orph_type_configuration}, \eqref{eq:sf_global_orph_M1_type_index}, and \inferrule{\iruleTCtxOut} in \Cref{fig:gtype:tc-red-rules}, we get
\begin{gather}
\stEnvi = \stEnvUpd{\stEnv}{\roleP}{\stT}
\label{eq:sf_global_orph_new_context}
\\
\qEnvi = \stEnvUpd{\qEnv}{\roleP, \roleQ}{
        \stQCons{
          \stEnvApp{\qEnv}{\roleP, \roleQ}
        }{
          \stQMsg{\stLab}{\tyGround}
        }
        }
\label{eq:sf_global_orph_new_Q}
\end{gather}
It follows that
 \begin{gather}
 \vdash \mpP : \stEnvApp{\stEnvi}{\roleP} \quad \quad (\text{by}~\eqref{eq:sf_global_orph_new_context}, \eqref{eq:sf_global_orph_subtyping}, \eqref{eq:sf_global_orph_type_configuration_2}\text{ and }\inferrule{{t-sub}})
 \\
  \vdash \mpH[\roleP] : \stEnvApp{\qEnvi}{-, \roleP} \quad  (\text{by}~\eqref{eq:sf_global_orph_new_Q}\text{ and }\eqref{eq:sf_global_orph_type_out_Q})
  \\
  \vdash \mpCrash : \stEnvApp{\stEnvi}{\roleQ} \quad \quad (\text{by}~\eqref{eq:sf_global_orph_type_out_another}\text{ and }\eqref{eq:sf_global_orph_new_context})
  \\
  \vdash \mpQUnavail : \stEnvApp{\qEnvi}{-, \roleQ} \quad  (\text{by}~\eqref{eq:sf_global_orph_new_Q}\text{ and }\eqref{eq:sf_global_orph_type_out_Q_another})
  \\
   \forall i \in I: \,\,\vdash \mpP_i:\stEnvApp{\stEnvi}{\roleP[i]} \quad (\text{by}~\eqref{eq:sf_global_orph_M1_type}\text{ and }\eqref{eq:sf_global_orph_new_context})
 \\
 \forall i \in I:  \,\,\vdash \mpH[i]: \stEnvApp{\qEnvi}{-, \roleP[i]} \quad (\text{by}~\eqref{eq:sf_global_orph_M1_type_Q}\text{ and }\eqref{eq:sf_global_orph_new_Q})
 \end{gather}
Therefore, by \inferrule{r-send-\mpCrash} and \inferrule{{t-sess}}, we can conclude that there exists $\mpMi = \mpPart\roleP{\mpP}
\mpPar
\mpPart\roleP{\mpH[\roleP]}
\mpPar
\mpPart\roleQ{\mpCrash}
\mpPar
\mpPart{\roleQ}{\mpQUnavail}
\mpPar
\mpM[1]$ and $\gtWithCrashedRoles{\rolesCi}{\gtGi}
   =
    \gtWithCrashedRoles{\rolesC}{
    \gtGi[j]
    }$ such that
   $ \gtWithCrashedRoles{\rolesC}{\gtG}
    \gtMove[
      \stEnvOutAnnotSmall{\roleP}{\roleQ}{\stChoice{\gtLab[j]}{\tyGround[j]}}
    ]{
      \rolesR
    }
   \gtWithCrashedRoles{\rolesCi}{\gtGi}$,
   $\mpM \;\redSend{\roleP}{\roleQ}{\mpLab}\;\mpMi$, and
   $ \gtWithCrashedRoles{\rolesCi}{\gtGi} \vdash \mpMi$.

\item Case \inferrule{\iruleGtMoveCrDe}: by inversion of \inferrule{\iruleGtMoveCrDe}, we have
\begin{gather}
j \in I
 \label{eq:sf_global_det_cond_1}
 \\
\gtLab[j] = \gtCrashLab
  \label{eq:sf_global_det_cond_2}
  \\
 \gtWithCrashedRoles{\rolesC}{
      \gtCommTransit{\roleQCrashed}{\roleP}{i \in I}{\gtLab[i]}{\tyGround[i]}{\gtGi[i]}{j}
    }
    \gtMove[\ltsCrDe{\mpS}{\roleP}{\roleQ}]{\rolesR}
    \gtWithCrashedRoles{\rolesC}{\gtGi[j]}
\label{eq:sf_global_det_cond_3}
  \end{gather}
 We can assume that $\mpM$ is of the form
 \begin{gather}
 \mpPart\roleP{\sum_{i\in I} \procin\roleQ{\mpLab_i(\mpx_i)}\mpP_i}
\mpPar
\mpPart\roleP{\mpH[\roleP]}
\mpPar
\mpPart\roleQ\mpCrash
\mpPar
\mpPart\roleQ\mpQUnavail
\mpPar
\mpM[1]
\label{eq:sf_global_det_M_form}
\\
\mpM[1] =  \prod_{i\in I} (\mpPart{\roleP[i]}{\mpP[i]} \mpPar
  \mpPart{\roleP[i]}{\mpH[i]})
\label{eq:sf_global_det_M1_form}
\\
k \in I
\label{eq:sf_global_det_M_form_index}
\\
 \mpLab[k] = \mpCrashLab
 \label{eq:sf_global_det_M_form_crash}
 \\
 \nexists \mpLab, \mpV: (\roleQ,
\mpLab(\mpV)) \in \mpH[\roleP]
\label{eq:sf_global_det_M_form_empty_Q}
\end{gather}
Then by~\eqref{eq:sf_ass_global_typing} and~\Cref{lem:typeinversion}, there
exists $\stEnv; \qEnv$ such that
\begin{gather}
\stEnvAssoc{\gtWithCrashedRoles{\rolesC}{\gtG}}{\stEnv; \qEnv}{\rolesR}
\label{eq:sf_global_det_type_association}
\\
\vdash \sum_{i\in I}\procin{\roleQ}{\mpLab_i(\mpx_i)}{\mpP_i}:\stEnvApp{\stEnv}{\roleP}
\label{eq:sf_global_det_type_out}
\\
 \vdash \mpH[\roleP]: \stEnvApp{\qEnv}{-, \roleP}
 \label{eq:sf_global_det_type_out_Q}
 \\
 \vdash \mpCrash: \stEnvApp{\stEnv}{\roleQ}
 \label{eq:sf_global_det_type_out_another}
 \\
  \vdash \mpQUnavail: \stEnvApp{\qEnv}{-, \roleQ}
 \label{eq:sf_global_det_type_out_Q_another}
 \\
 \forall i \in I: \,\,\vdash \mpP_i:\stEnvApp{\stEnv}{\roleP[i]}
 \label{eq:sf_global_det_M1_type}
 \\
 \forall i \in I:  \,\,\vdash \mpH[i]: \stEnvApp{\qEnv}{-, \roleP[i]}
  \label{eq:sf_global_det_M1_type_Q}
\end{gather}
It follows directly that
\begin{gather}
\stEnvApp{\stEnv}{\roleQ} = \stStop
\label{eq:sf_global_det_stop}
\\
 \stEnvApp{\qEnv}{-, \roleQ} = \stQUnavail
 \label{eq:sf_global_det_unavail}
\end{gather}

By~\eqref{eq:sf_global_det_type_out}, 4 of~\cref{lem:typeinversion}, and \inferrule{\iruleStSubIn}, we have that
\begin{gather}
\stEnvApp{\stEnv}{\roleP} = \stExtSum{\roleQ}{i \in J}{\stChoice{\stLab[i]}{\tyGround[i]} \stSeq \stTi[i]}
\label{eq:sf_global_det_type_configuration}
\\
J \subseteq I
\label{eq:sf_global_det_subset}
\\
\forall i \in J:  \stT[i] \stSub \stTi[i]
\label{eq:sf_global_det_subtyping}
\\
 \setcomp{\stLab[l]}{l \in I} \neq \setenum{\stCrashLab}
 \label{eq:sf_global_det_not_crash}
\\
 \nexists j \in I \setminus J: \stLab[j] = \stCrashLab
 \label{eq:sf_global_det_not_crash_2}
 \\
 \forall i \in I: x_i : \tyGround[i] \vdash \mpP[i] : \stT[i]
 \label{eq:sf_global_det_type_configuration_2}
\end{gather}
From \eqref{eq:sf_global_det_not_crash_2} and \eqref{eq:sf_global_det_M_form_crash}, we get
\begin{align}
   k \in J
   \label{eq:sf_global_det_k_in_J_1}
\end{align}
By \eqref{eq:sf_global_det_type_out_Q} and \eqref{eq:sf_global_det_M_form_empty_Q}, we also know
\begin{align}
 \stEnvApp{\qEnv}{\roleQ, \roleP} = \stQEmpty
 \label{eq:sf_global_det_empty_Q}
\end{align}

From~\eqref{eq:sf_global_det_cond_3}, by~\cref{thm:gtype:proj-sound}, there is $\stEnvi; \qEnvi$ such that
\begin{gather}
\stEnvAssoc{ \gtWithCrashedRoles{\rolesC}{
     \gtGi[j]
}}{\stEnvi; \qEnvi}{\rolesR}
\label{eq:sf_global_det_type_association_new}
\\
\stEnv; \qEnv
      \stEnvMoveAnnot{\ltsCrDe{\mpS}{\roleP}{\roleQ}}
      \stEnvi; \qEnvi
\label{eq:sf_global_det_ctx_red}
\end{gather}

Using \eqref{eq:sf_global_det_type_configuration}, \eqref{eq:sf_global_det_stop}, \eqref{eq:sf_global_det_k_in_J_1},
\eqref{eq:sf_global_det_M_form_crash}, \eqref{eq:sf_global_det_empty_Q}, and \inferrule{\iruleTCtxCrashDetect} in \Cref{fig:gtype:tc-red-rules}, we get
\begin{gather}
\stEnvi =  \stEnvUpd{\stEnv}{\roleP}{\stTi[k]}
\label{eq:sf_global_det_new_context}
\\
\qEnvi = \qEnv
\label{eq:sf_global_det_new_Q}
\end{gather}
It follows that
 \begin{gather}
  \vdash \mpP[k] : \stEnvApp{\stEnvi}{\roleP} \quad \quad (\text{by}~\eqref{eq:sf_global_det_new_context}, \eqref{eq:sf_global_det_subtyping}, \eqref{eq:sf_global_det_type_configuration_2}\text{ and }\inferrule{{t-sub}})
 \\
  \vdash \mpH[\roleP] : \stEnvApp{\qEnvi}{-, \roleP} \quad  (\text{by}~\eqref{eq:sf_global_det_new_Q}\text{ and }\eqref{eq:sf_global_det_type_out_Q})
  \\
  \vdash \mpCrash : \stEnvApp{\stEnvi}{\roleQ} \quad \quad (\text{by}~\eqref{eq:sf_global_det_type_out_another},\eqref{eq:sf_global_det_stop}\text{ and }\eqref{eq:sf_global_det_new_context})
  \\
  \vdash \mpQUnavail: \stEnvApp{\qEnvi}{-, \roleQ} \quad  (\text{by}~\eqref{eq:sf_global_det_new_Q}, \eqref{eq:sf_global_det_type_out_Q_another}\text{ and }\eqref{eq:sf_global_det_unavail})
  \\
   \forall i \in I: \,\,\vdash \mpP_i:\stEnvApp{\stEnvi}{\roleP[i]} \quad (\text{by}~\eqref{eq:sf_global_det_M1_type}\text{ and }\eqref{eq:sf_global_det_new_context})
 \\
 \forall i \in I:  \,\,\vdash \mpH[i]: \stEnvApp{\qEnvi}{-, \roleP[i]} \quad (\text{by}~\eqref{eq:sf_global_det_M1_type_Q}\text{ and }\eqref{eq:sf_global_det_new_Q})
 \end{gather}
Therefore, by \inferrule{r-rcv-$\odot$} and \inferrule{{t-sess}}, we can conclude that there exists $\mpMi = \mpPart\roleP \mpP_k
\mpPar
\mpPart\roleP{\mpH_{\roleP}}
\mpPar
\mpPart\roleQ\mpCrash
\mpPar
\mpPart\roleQ\mpQUnavail
\mpPar
\mpM[1]$ and $\gtWithCrashedRoles{\rolesCi}{\gtGi}
   =
    \gtWithCrashedRoles{\rolesC}{
    \gtGi[j]
    }$ such that
   $ \gtWithCrashedRoles{\rolesC}{\gtG}
   \gtMove[\ltsCrDe{\mpS}{\roleP}{\roleQ}]{\rolesR}
      \gtWithCrashedRoles{\rolesCi}{\gtGi}$,
   $\mpM \;\redRecv{\roleP}{\roleQ}{\mpLab_k}\;\mpMi$, and
   $ \gtWithCrashedRoles{\rolesCi}{\gtGi} \vdash \mpMi$.
\end{itemize}
The rest of the cases %
 follow directly by inductive hypothesis. \qedhere
\end{proof}

\lemSessionDF*
\begin{proof}
Corollary of \cref{def:mpst-env-deadlock-free}, \cref{def:session_df}, \cref{lem:ext-proj}, \cref{lem:sr}, and \cref{lem:sf}. 
\end{proof}

\lemSessionLive*
\begin{proof}
Corollary of \cref{def:mpst-env-live}, \cref{def:session_live}, \cref{lem:ext-proj}, \cref{lem:sr}, and \cref{lem:sf}. 
\end{proof}

\section{Additional Details on Channel Generation}\label{sec:appendix:channel-gen}
We extend global and local types with annotations on communication statements,
and extend projection with annotation support.
Annotations $\alpha$ have the form of either natural numbers or sets of natural
numbers. We call annotations of the form $\alpha \in \setNat$
\emph{atomic-annotations}.
  \[
    \begin{array}{r@{\quad}c@{\quad}l@{\quad}l}
      \gtG & \bnfdef &
        \gtCommAnn{\roleP}{\roleQMaybeCrashed}{\highlight{\alpha}}{i \in
        I}{\gtLab[i]}{\tyGround[i]}{\gtG[i]} \bnfsep \cdots
        \\
      \stS, \stT
        & \bnfdef & \stExtSumAnn{\roleP}{\highlight{\alpha}}{i \in I}{\stChoice{\stLab[i]}{\tyGround[i]} \stSeq \stT[i]}
          \bnfsep
        \stIntSumAnn{\roleP}{\highlight{\alpha}}{i \in I}{\stChoice{\stLab[i]}{\tyGround[i]} \stSeq \stT[i]}
          \bnfsep \cdots
          \\
        \alpha,\beta & \in & \setNat \cup \mathcal{P}(\setNat)
        \\
    \end{array}
  \]
The projection operator is similarly extended with annotation support. Merging
annotations produces a set-annotation containing all merged atomic-annotations.
\[
  \begin{array}{c}
    \stExtSumAnn{\roleP}{\alpha}{i \in I}{\stChoice{\stLab[i]}{\tyGround[i]} \stSeq \stSi[i]}%
    \!\stBinMerge\!%
    \stExtSumAnn{\roleP}{\beta}{\!j \in J}{\stChoice{\stLab[j]}{\tyGround[j]} \stSeq \stTi[j]}%
   \\
    \;=\;%
    \stExtSumAnn{\roleP}{\alpha \stBinMerge \beta}{k \in I \cap J}{\stChoice{\stLab[k]}{\tyGround[k]} \stSeq%
      (\stSi[k] \!\stBinMerge\! \stTi[k])%
    }%
    \stExtC%
    \stExtSumAnn{\roleP}{\alpha \stBinMerge \beta}{i \in I \setminus J}{\stChoice{\stLab[i]}{\tyGround[i]} \stSeq \stSi[i]}%
    \stExtC%
    \stExtSumAnn{\roleP}{\alpha \stBinMerge \beta}{\!j \in J \setminus I}{\stChoice{\stLab[j]}{\tyGround[j]} \stSeq \stTi[j]}%
    \\[3mm]%
    \stIntSumAnn{\roleP}{\alpha}{i \in I}{\stChoice{\stLab[i]}{\tyGround[i]} \stSeq \stSi[i]}%
    \,\stBinMerge\,%
    \stIntSumAnn{\roleP}{\beta}{i \in I}{\stChoice{\stLab[i]}{\tyGround[i]} \stSeq \stTi[i]}%
    \;=\;%
    \stIntSumAnn{\roleP}{\alpha \stBinMerge \beta}{i \in I}{\stChoice{\stLab[i]}{\tyGround[i]} \stSeq (\stSi[i]
    \stBinMerge \stTi[i])}%
    \\[3mm]
    \stFmt{
      n \stBinMerge m = \{n,m\}
      \qquad
      N \stBinMerge m = N \cup \{m\}
      \qquad
      n \stBinMerge M = \{n\} \cup M
      \qquad
      N \stBinMerge M = N \cup M
    }
  \end{array}
\]

In order to generate channels, we first annotate each interaction statement in a
given global type with a unique identifier.
To illustrate the process we consider the annotated global type $\gtG$.
\begin{equation} \label{eq:impl:changen:gt}
  \gtG = \gtCommRawAnn{\roleP}{\roleQ}{0}{
    \begin{array}{l}
    \gtCommChoice{\gtLabFmt{v}}{}{
      \gtCommSingleAnn{\roleP}{\roleR}{1}{\gtLabFmt{x}}{}{\gtEnd}
    }
    \\
    \gtCommChoice{\gtLabFmt{w}}{}{
      \gtCommSingleAnn{\roleP}{\roleR}{2}{\gtLabFmt{y}}{}{\gtEnd}
    }
    \end{array}
  }
\end{equation}
The global type is then projected onto all its participating roles. For $\gtG$, projecting onto roles $\roleP$ and $\roleQ$ is trivial:
\[
  \begin{array}{rclcrcl}
  \gtProj{\gtG}{\roleP} &=&
    \stIntSumAnn{\roleQ}{0}{}{
      \begin{array}{l}
        \stChoice{\stLabFmt{v}}{} \stSeq
          \stOutAnn{\roleR}{1}{\stLabFmt{x}}{} \stSeq
          \stEnd
        \\
        \stChoice{\stLabFmt{w}}{} \stSeq
          \stOutAnn{\roleR}{2}{\stLabFmt{y}}{} \stSeq
          \stEnd
      \end{array}
    }
  &\qquad&
  \gtProj{\gtG}{\roleQ} &=&
    \stExtSumAnn{\roleP}{0}{}{
      \begin{array}{l}
        \stChoice{\stLabFmt{v}}{} \stSeq
          \stEnd
        \\
        \stChoice{\stLabFmt{w}}{} \stSeq
          \stEnd
      \end{array}
    }
\end{array}
\]
Here, the annotations are propagated \emph{verbatim} from the global type.
Projections that require merging result in set-annotations. For example,
projecting \gtG onto $\roleR$:
\[
  \begin{array}{rcl}
  \gtProj{\gtG}{\roleR} &=&
  \stExtSumAnn{\roleP}{\{1,2\}}{}{
    \begin{array}{l}
      \stChoice{\stLabFmt{x}}{} \stSeq \stEnd
      \\
      \stChoice{\stLabFmt{y}}{} \stSeq \stEnd
    \end{array}
  }
  \end{array}
\]
Such set-annotations are used to derive fresh channels to ensure that both
sender and recipient are communicating over the same channel.

For some $\gtG$, let $\stEnv$ be the typing context produced by projecting
$\gtG$ on all roles.
Let $\annSet$ be the set of all annotations in $\stEnv$, $\annSetVar \subseteq
\annSet$ be the set of all atomic-annotations (\ie $\annSetVar = \annSet
\cap \setNat$), and $\annSetSet \subseteq \annSet$ be the set of all
set-annotations (\ie $\annSetSet = \annSet \cap P(\setNat)$).
For each typing context $\stEnv$, we derive an injective function
$\annSubst : \annSet \to \setNat$
that maps set-annotations to fresh identifiers; specifically,
$\forall x \in \annSet, \exists y \in \setNat, \annSubst(x) = y \implies
  x = y \lor y \in \setNat \setminus \annSetVar$.
The derived $\annSubst$ is then applied to each annotation in $\stEnv$. This
removes all set-annotations from each local type within $\stEnv$.
Applied to our exemplar \gtG (\Cref{eq:impl:changen:gt}), we define the following $\annSubst$ and the concomitant local types.
\[
  \begin{array}{c}
    \begin{array}{rcl}
    \annSubst(0) &=& 0
    \\
    \annSubst(1) &=& 1
    \\
    \annSubst(2) &=& 2
    \\
    \annSubst(\{1,2\}) &=& 3
    \end{array}

    \qquad\qquad\qquad

    \begin{array}{rcl}
    \gtProj{\gtG}{\roleP} &=&
      \stIntSumAnn{\roleQ}{0}{}{
        \begin{array}{l}
          \stChoice{\stLabFmt{v}}{} \stSeq
            \stOutAnn{\roleR}{1}{\stLabFmt{x}}{} \stSeq
            \stEnd
          \\
          \stChoice{\stLabFmt{w}}{} \stSeq
            \stOutAnn{\roleR}{2}{\stLabFmt{y}}{} \stSeq
            \stEnd
        \end{array}
      }
    \\
    \gtProj{\gtG}{\roleQ} &=&
      \stExtSumAnn{\roleP}{0}{}{
        \begin{array}{l}
          \stChoice{\stLabFmt{v}}{} \stSeq
            \stEnd
          \\
          \stChoice{\stLabFmt{w}}{} \stSeq
            \stEnd
        \end{array}
      }
    \\
    \gtProj{\gtG}{\roleR} &=&
      \stExtSumAnn{\roleP}{3}{}{
        \begin{array}{l}
          \stChoice{\stLabFmt{x}}{} \stSeq \stEnd
          \\
          \stChoice{\stLabFmt{y}}{} \stSeq \stEnd
        \end{array}
      }
    \end{array}
  \end{array}
\]
We generate the type upper bounds that are needed for each local type
declaration from the corresponding annotated local type. In our above example,
the local type declarations are parameterised thus:
\begin{lstlisting}[language=Scala, gobble=2]
  type P[C0 <: OutChan[V | W], C1 <: OutChan[X], C2 <: OutChan[Y]] = ...
  type Q[C0 <: InChan[V | W]] = ...
  type R[C3 <: InChan[X | Y]] = ...
\end{lstlisting}

Each channel is declared and passed to role-implementing functions in the body
of the main function. Channel declarations are derived from the set of
annotations excluding those identifiers that have been merged during projection,
\ie $\annSet' = \{\annSubst(x') | x' \in \annSet \cap \bigcup\annSetSet\}$.
Since role-implementing functions are parameterised by channels, each argument
corresponds to an annotation in the local type projected by the respective role.
We derive a surjective function, $\annArg : \annSet \to \setNat$, such that
$\forall x \in \annSet, \exists y \in \setNat, \annArg(x) = y$
iff
$x \in \annSetVar \setminus \bigcup\annSetSet \implies x = y$,
$x \in \bigcup\annSetSet \implies
  \exists X \in \annSetSet, x \in X \land \annSubst(X) = x $, and
$x \in \annSetSet \implies \annSubst(x) = y$
hold. Intuitively, we follow $\annSubst$; excepting in cases where an
atomic-annotation, $x \in \annSet$, can be found in a set-annotation, $\exists X
\in \annSet, x \in X$. In such cases, $\annArg(x) = \annSubst(X)$.
Each argument is derived through application of $\annArg$ on its corresponding
annotation.
In our example, the resulting main object therefore includes the following
function.
\begin{lstlisting}[language=Scala, gobble=2]
  def main(args : Array[String]) = {
    var c0 = Channel[V | W]()
    var c3 = Channel[X | Y]()

    eval(par(p(c0, c3, c3), q(c0), r(c3)))
  }
\end{lstlisting}
Here, \texttt{Channel} constructs a binary channel that inherits from both
\texttt{InChan} and \texttt{OutChan} traits. We leverage Scala's subtyping
system to ensure that, whilst declared channel types may permit more labels than
the corresponding parameter in the local type declaration, the type system
permits only those labels specified by the latter. An example can be seen here
in passing \texttt{c3} to \texttt{p} twice: although \texttt{c3} specifies a
channel whose type is the union of \texttt{X} and \texttt{Y}, the type system
will only permit sending \texttt{X} as \texttt{C1} or \texttt{Y} as \texttt{C2}.

\section{Additional Details on Evaluation}
\label{sec:appendix:eval}

We evaluate \theTool on seven examples derived from the session type
literature.
Of these, five are standard protocols that we extend with crash-handling
behaviour. The remaining examples comprise a distributed logging protocol
and the circuit breaker pattern~\cite{ECOOP22AffineMPST}.
Our examples demonstrate that \theTool enables expression of both standard
protocols from session type literature and protocols representing real-world
distributed patterns, in addition to extensions thereof with crash-handling
behaviours. Our examples illustrate two such behavioural patterns:
failover and graceful failure.
Our Circuit Breaker in particular illustrates that our
approach supports different crash-handling behaviour extensions for the same
reliable protocol.
Our runtime measurements indicate that generating skeleton
code from a given \Scribble protocol is inexpensive, with no example taking
longer than 3ms, and that generation times scale with both input size and the
the degree of branching exhibited by the protocol.

We give a description of our examples in \Cref{sec:eval:examples}, with full
metrics given in \Cref{tab:eval:table}. Each example has at least one (partially) unreliable variant.
Variants are primarily distinguished by their reliability assumptions and, for
clarity, will be referred to by their identifier given in the same figure
(\eg $(a)$, $(l)$, and $(s)$).
For each variant, we give an indication of its size and shape via the number of
communications and $\gtCrashLab$ labels in the protocol, as well as the length
of the longest continuation in the global type. We additionally report the
number of channels and functions generated by \theTool. The former reflects the
true number of communications following projection, and the latter reflects the
degree of branching exhibited by the given protocol.
\Cref{fig:eval:types} gives an exemplar global type for each example.
In \Cref{sec:eval:times}, we discuss generation times for each example. Full
timing results are given in \Cref{fig:eval:results}.

\begin{figure}
  \footnotesize
  \begin{tabular}{cL}
    \toprule
    Example & \text{Global Type} \\
    \midrule
    \exampleName{PingPong} & \gtG[b] =
      \gtRec{\gtFmt{\mathbf{t_{0}}}}{
        \gtCommRaw{\roleFmt{P}}{\roleFmt{Q}}{
        \begin{array}{l}
        \gtCommChoice{\gtMsgFmt{ping}}{}{
        \gtCommRaw{\roleFmt{Q}}{\roleFmt{P}}{
        \begin{array}{l}
        \gtCommChoice{\gtMsgFmt{pong}}{}{
        \gtFmt{\mathbf{t_{0}}}
        }\gtFmt{,}\;
        \gtCommChoice{\gtMsgFmt{crash}}{}{
        }\\
        \end{array}
        }
        }\gtFmt{,}\;
        \gtCommChoice{\gtMsgFmt{crash}}{}{
        }\\
        \end{array}
        }
        }
    \\\midrule
    \multirow{2}{*}{\exampleName{Adder}} & \gtG[d] =
      \gtRec{\gtFmt{\mathbf{t_{0}}}}{
        \gtCommRaw{\roleFmt{C}}{\roleFmt{S}}{
        \begin{array}{l}
        \gtCommChoice{\gtMsgFmt{hello}}{}{
        \gtCommRaw{\roleFmt{C}}{\roleFmt{S}}{
        \begin{array}{l}
        \gtCommChoice{\gtMsgFmt{add}}{\tyInt}{
        \gtGj[2]
        }\\
        \gtCommChoice{\gtMsgFmt{quit}}{}{
        \gtCommRaw{\roleFmt{S}}{\roleFmt{C}}{
        \begin{array}{l}
        \gtCommChoice{\gtMsgFmt{quit}}{}{
        }\gtFmt{,}\;
        \gtCommChoice{\gtMsgFmt{crash}}{}{
        }\\
        \end{array}
        }
        }\\
        \gtCommChoice{\gtMsgFmt{crash}}{}{
        }\\
        \end{array}
        }
        }\\
        \gtCommChoice{\gtMsgFmt{crash}}{}{
        }\\
        \end{array}
        }
        }
    \\
    & \gtGj[d] =
    \gtCommRaw{\roleFmt{C}}{\roleFmt{S}}{
      \begin{array}{l}
      \gtCommChoice{\gtMsgFmt{add}}{\tyInt}{
        \gtCommRaw{\roleFmt{S}}{\roleFmt{C}}{
          \begin{array}{l}
          \gtCommChoice{\gtMsgFmt{ok}}{\tyInt}{
          \gtFmt{\mathbf{t_{0}}}
          }\gtFmt{,}\;
          \gtCommChoice{\gtMsgFmt{crash}}{}{
          }\\
          \end{array}
          }
      }\gtFmt{,}\;
      \gtCommChoice{\gtMsgFmt{crash}}{}{
      }\\
      \end{array}
      }
    \\\midrule
    \multirow{5}{*}{\exampleName{TwoBuyer}} & \gtG[f] =
    \gtCommRaw{\roleP}{\roleR}{
      \begin{array}{l}
        \gtCommChoice{\gtMsgFmt{title}}{\tyString}{
          \gtCommSingle{\roleR}{\roleP}{\gtMsgFmt{quote}}{\tyReal}{
            \gtCommSingle{\roleR}{\roleQ}{\gtMsgFmt{quote}}{\tyReal}{
              \gtGj[f]
            }
          }
          }\\
        \gtCommChoice{\gtMsgFmt{crash}}{}{
          \gtCommSingle{\roleFmt{R}}{\roleQ}{\gtMsgFmt{quit}}{}{
          }
          }
      \end{array}
      }
    \\
    & \gtGj[f] =
    \gtCommRaw{\roleP}{\roleR}{
      \begin{array}{l}
        \gtCommChoice{\gtMsgFmt{split}}{\tyReal}{
          \gtCommRaw{\roleQ}{\roleR}{
          \begin{array}{l}
            \gtCommChoice{\gtMsgFmt{split}}{\tyReal}{
                \gtGjj[f]
              }\\
            \gtCommChoice{\gtMsgFmt{crash}}{}{
              \gtCommSingle{\roleR}{\roleP}{\gtMsgFmt{quit}}{}{
              }
              }
          \end{array}
          }
          }\\
        \gtCommChoice{\gtMsgFmt{crash}}{}{
          \gtCommRaw{\roleQ}{\roleR}{
          \begin{array}{l}
            \gtCommChoice{\gtMsgFmt{split}}{\tyReal}{
              \gtCommSingle{\roleR}{\roleQ}{\gtMsgFmt{quit}}{}{
              }
              }\\
            \gtCommChoice{\gtMsgFmt{crash}}{}{
              }
          \end{array}
          }
          }
      \end{array}
      }
    \\
    & \gtGjj[f] =
    \gtCommRaw{\roleR}{\roleP}{
      \begin{array}{l}
        \gtCommChoice{\gtMsgFmt{quit}}{}{
          \gtCommSingle{\roleR}{\roleQ}{\gtMsgFmt{quit}}{}{
          }
          }\\
        \gtCommChoice{\gtMsgFmt{ok}}{}{
          \gtCommSingle{\roleR}{\roleQ}{\gtMsgFmt{ok}}{}{
            \gtCommRaw{\roleQ}{\roleR}{
            \begin{array}{l}
              \gtCommChoice{\gtMsgFmt{addr}}{\tyString}{
                  \gtGiii[f]
                }\\
              \gtCommChoice{\gtMsgFmt{crash}}{}{
                \gtCommSingle{\roleR}{\roleP}{\gtMsgFmt{recaddr}}{}{
                  \gtGiv[f]
                }
                }
            \end{array}
            }
          }
          }
      \end{array}
      }
    \\
    & \gtGiii[f] =
    \gtCommSingle{\roleR}{\roleP}{\gtMsgFmt{date}}{\tyString}{
      \gtCommSingle{\roleR}{\roleQ}{\gtMsgFmt{date}}{\tyString}{
      }
    }
    \\
    & \gtGiv[f] =
    \gtCommRaw{\roleP}{\roleR}{
      \begin{array}{l}
        \gtCommChoice{\gtMsgFmt{addr}}{\tyString}{
          \gtCommSingle{\roleR}{\roleP}{\gtMsgFmt{date}}{\tyString}{
          }
        }\gtFmt{,}\;
        \gtCommChoice{\gtMsgFmt{crash}}{}{
          }
      \end{array}
      }
    \\
    \midrule
    \multirow{6}{*}{\exampleName{OAuth}} & \gtG[h] =
    \gtCommRaw{\roleC}{\roleS}{
      \begin{array}{l}
        \gtCommChoice{\gtMsgFmt{start}}{\tyInt}{
          \gtCommSingle{\roleS}{\roleC}{\gtMsgFmt{redir}}{\tyInt}{
            \gtGj[h]
          }
          }\\
        \gtCommChoice{\gtMsgFmt{crash}}{}{
          \gtCommSingle{\roleC}{\roleA}{\gtMsgFmt{crash}}{}{
            \gtCommSingle{\roleS}{\roleA}{\gtMsgFmt{quit}}{}{
            }
          }
          }
      \end{array}
      }
    \\
    & \gtGj[h] =
    \gtCommRaw{\roleC}{\roleA}{
      \begin{array}{l}
        \gtCommChoice{\gtMsgFmt{login}}{\tyInt}{
          \gtCommSingle{\roleA}{\roleC}{\gtMsgFmt{auth}}{\tyInt}{
            \gtGjj[h]
          }
          }\\
        \gtCommChoice{\gtMsgFmt{crash}}{}{
          \gtCommSingle{\roleC}{\roleS}{\gtMsgFmt{crash}}{}{
            \gtCommSingle{\roleS}{\roleA}{\gtMsgFmt{quit}}{}{
            }
          }
          }
      \end{array}
      }
    \\
    & \gtGjj[h] = 
    \gtCommRaw{\roleC}{\roleA}{
      \begin{array}{l}
        \gtCommChoice{\gtMsgFmt{passwd}}{\tyInt}{
          \gtCommRaw{\roleA}{\roleC}{
          \begin{array}{l}
            \gtCommChoice{\gtMsgFmt{ko}}{\tyInt}{
                \gtGiii[h]
            }\gtFmt{,}\;
            \gtCommChoice{\gtMsgFmt{ok}}{\tyInt}{
                \gtGiv[h]
              }
          \end{array}
          }
          }\\
        \gtCommChoice{\gtMsgFmt{crash}}{}{
          \gtCommSingle{\roleC}{\roleS}{\gtMsgFmt{crash}}{}{
            \gtCommSingle{\roleS}{\roleA}{\gtMsgFmt{quit}}{}{
            }
          }
          }
      \end{array}
      }
    \\
    & \gtGiii[h] =
    \gtCommRaw{\roleC}{\roleS}{
      \begin{array}{l}
        \gtCommChoice{\gtMsgFmt{ko}}{\tyInt}{
          \gtCommSingle{\roleS}{\roleC}{\gtMsgFmt{recvd}}{\tyInt}{
            \gtCommSingle{\roleS}{\roleA}{\gtMsgFmt{quit}}{}{
            }
          }
          }\\
        \gtCommChoice{\gtMsgFmt{crash}}{}{
          \gtCommSingle{\roleS}{\roleA}{\gtMsgFmt{quit}}{}{
          }
          }
      \end{array}
      }
    \\
    & \gtGiv[h] =
    \gtCommRaw{\roleC}{\roleS}{
      \begin{array}{l}
        \gtCommChoice{\gtMsgFmt{ok}}{\tyInt}{
          \gtCommSingle{\roleS}{\roleA}{\gtMsgFmt{get}}{\stFmtC{Token}}{
            \gtGv[h]
          }
          }\\
        \gtCommChoice{\gtMsgFmt{crash}}{}{
          \gtCommSingle{\roleS}{\roleA}{\gtMsgFmt{quit}}{}{
          }
          }
      \end{array}
      }
    \\
    & \gtGv[h] =
    \gtCommSingle{\roleA}{\roleS}{\gtMsgFmt{put}}{\stFmtC{Token}}{
      \gtCommSingle{\roleS}{\roleC}{\gtMsgFmt{put}}{\stFmtC{Token}}{
      }
    }
    \\\midrule
    \exampleName{TravelAgency} & \gtG[l] =
    \gtRec{\gtFmt{\mathbf{t_{0}}}}{
  \gtCommRaw{\roleFmt{c}}{\roleFmt{a}}{
  \begin{array}{l}
    \gtCommChoice{\gtMsgFmt{query}}{\tyInt}{
      \gtCommSingle{\roleFmt{a}}{\roleFmt{c}}{\gtMsgFmt{quote}}{\tyInt}{
        \gtCommSingle{\roleFmt{a}}{\roleFmt{s}}{\gtMsgFmt{retry}}{}{
          \gtFmt{\mathbf{t_{0}}}
        }
      }
      }\\
    \gtCommChoice{\gtMsgFmt{ok}}{}{
      \gtCommSingle{\roleFmt{a}}{\roleFmt{s}}{\gtMsgFmt{ok}}{}{
        \gtCommRaw{\roleFmt{c}}{\roleFmt{s}}{
        \begin{array}{l}
          \gtCommChoice{\gtMsgFmt{payment}}{\tyInt}{
            \gtCommSingle{\roleFmt{s}}{\roleFmt{c}}{\gtMsgFmt{ack}}{}{
            }
          }\gtFmt{,}\;
          \gtCommChoice{\gtMsgFmt{crash}}{}{
            }
        \end{array}
        }
      }
      }\\
    \gtCommChoice{\gtMsgFmt{ko}}{}{
      \gtCommSingle{\roleFmt{a}}{\roleFmt{s}}{\gtMsgFmt{ko}}{}{
        \gtCommRaw{\roleFmt{c}}{\roleFmt{a}}{
        \begin{array}{l}
          \gtCommChoice{\gtMsgFmt{quit}}{}{
            }\gtFmt{,}\;
          \gtCommChoice{\gtMsgFmt{crash}}{}{
            }
        \end{array}
        }
      }
      }\\
    \gtCommChoice{\gtMsgFmt{crash}}{}{
      \gtCommSingle{\roleFmt{a}}{\roleFmt{s}}{\gtMsgFmt{fatal}}{}{
      }
      }
  \end{array}
  }
}
\\\midrule
\multirow{3}{*}{\exampleName{DistLogger}} & \gtG[o] =
\gtRec{\gtFmt{\mathbf{t_{0}}}}{
  \gtCommRaw{\roleFmt{l}}{\roleFmt{i}}{
  \begin{array}{l}
    \gtCommChoice{\gtMsgFmt{pulse}}{}{
      \gtGj[o]
      }\gtFmt{,}\;
    \gtCommChoice{\gtMsgFmt{crash}}{}{
      \gtGjj[o]
      }
  \end{array}
  }
}
    \\
  & \gtGj[o] =
  \gtCommRaw{\roleFmt{c}}{\roleFmt{i}}{
    \begin{array}{l}
      \gtCommChoice{\gtMsgFmt{start}}{}{
        \gtCommSingle{\roleFmt{i}}{\roleFmt{l}}{\gtMsgFmt{start}}{}{
          \gtCommSingle{\roleFmt{i}}{\roleFmt{c}}{\gtMsgFmt{ok}}{}{
            \gtFmt{\mathbf{t_{0}}}
          }
        }
        }\\
      \gtCommChoice{\gtMsgFmt{stop}}{}{
        \gtCommSingle{\roleFmt{i}}{\roleFmt{l}}{\gtMsgFmt{stop}}{}{
          \gtCommSingle{\roleFmt{i}}{\roleFmt{c}}{\gtMsgFmt{ok}}{}{
          }
        }
        }\\
      \gtCommChoice{\gtMsgFmt{put}}{\tyString}{
        \gtCommSingle{\roleFmt{i}}{\roleFmt{l}}{\gtMsgFmt{put}}{\tyString}{
          \gtFmt{\mathbf{t_{0}}}
        }
        }\\
      \gtCommChoice{\gtMsgFmt{get}}{}{
        \gtCommSingle{\roleFmt{i}}{\roleFmt{l}}{\gtMsgFmt{get}}{}{
          \gtCommRaw{\roleFmt{l}}{\roleFmt{i}}{
          \begin{array}{l}
            \gtCommChoice{\gtMsgFmt{put}}{\tyString}{
              \gtCommSingle{\roleFmt{i}}{\roleFmt{c}}{\gtMsgFmt{put}}{\tyString}{
                \gtFmt{\mathbf{t_{0}}}
              }
              }\\
            \gtCommChoice{\gtMsgFmt{crash}}{}{
              \gtCommSingle{\roleFmt{i}}{\roleFmt{c}}{\gtMsgFmt{fatal}}{}{
              }
              }
          \end{array}
          }
          }
        }
          \end{array}
        }
      \\
      & \gtGjj[o] =
      \gtCommRaw{\roleFmt{c}}{\roleFmt{i}}{
      \begin{array}{l}
        \gtCommChoice{\gtMsgFmt{start}}{}{
          \gtCommSingle{\roleFmt{i}}{\roleFmt{c}}{\gtMsgFmt{ko}}{}{
          }
          }\gtFmt{,}\;
        \gtCommChoice{\gtMsgFmt{stop}}{}{
          \gtCommSingle{\roleFmt{i}}{\roleFmt{c}}{\gtMsgFmt{ok}}{}{
          }
          }\gtFmt{,}\;
        \gtCommChoice{\gtMsgFmt{put}}{\tyString}{
          \gtFmt{\mathbf{t_{0}}}
          }\gtFmt{,}\;
        \gtCommChoice{\gtMsgFmt{get}}{}{
          \gtCommSingle{\roleFmt{i}}{\roleFmt{c}}{\gtMsgFmt{ko}}{}{
          }
          }
      \end{array}
      }
\\\midrule
\multirow{5}{*}{\exampleName{CircBreaker}} & \gtG[s] = 
\gtRec{\gtFmt{\mathbf{t_{0}}}}{
  \gtCommSingle{\roleFmt{s}}{\roleFmt{r}}{\gtMsgFmt{Ping}}{}{
    \gtCommRaw{\roleFmt{r}}{\roleFmt{s}}{
    \begin{array}{l}
      \gtCommChoice{\gtMsgFmt{ok}}{}{
          \gtGj[s]
        }\gtFmt{,}\;
      \gtCommChoice{\gtMsgFmt{ko}}{}{
          \gtGjj[s]
        }\gtFmt{,}\;
      \gtCommChoice{\gtMsgFmt{crash}}{}{
          \gtGiii[s]
        }
    \end{array}
    }
  }
}
    \\
    & \gtGj[s] =
    \gtRec{\gtFmt{\mathbf{t_{1}}}}{
      \gtCommRaw{\roleFmt{a}}{\roleFmt{s}}{
      \begin{array}{l}
        \gtCommChoice{\gtMsgFmt{enquiry}}{\stFmtC{query}}{
          \gtCommSingle{\roleFmt{s}}{\roleFmt{a}}{\gtMsgFmt{closed}}{}{
            \gtCommSingle{\roleFmt{s}}{\roleFmt{r}}{\gtMsgFmt{enquiry}}{\stFmtC{query}}{
              \gtGiv[s]
            }
          }
          }\\
        \gtCommChoice{\gtMsgFmt{quit}}{}{
          \gtCommSingle{\roleFmt{s}}{\roleFmt{r}}{\gtMsgFmt{quit}}{}{
          }
          }
      \end{array}
      }
    }
    \\
    & \gtGjj[s] =
    \gtRec{\gtFmt{\mathbf{t_{1}}}}{
      \gtCommRaw{\roleFmt{a}}{\roleFmt{s}}{
      \begin{array}{l}
        \gtCommChoice{\gtMsgFmt{enquiry}}{\stFmtC{query}}{
          \gtCommRaw{\roleFmt{s}}{\roleFmt{a}}{
          \begin{array}{l}
            \gtCommChoice{\gtMsgFmt{open}}{}{
              \gtCommSingle{\roleFmt{s}}{\roleFmt{r}}{\gtMsgFmt{open}}{}{
                \gtCommSingle{\roleFmt{s}}{\roleFmt{a}}{\gtMsgFmt{fail}}{}{
                  \gtFmt{\mathbf{t_{0}}}
                }
              }
              }\\
            \gtCommChoice{\gtMsgFmt{halfOpen}}{}{
              \gtCommSingle{\roleFmt{s}}{\roleFmt{r}}{\gtMsgFmt{halfOpen}}{}{
                \gtGv[s]
              }
              }
          \end{array}
          }
          }\\
        \gtCommChoice{\gtMsgFmt{quit}}{}{
          \gtCommSingle{\roleFmt{s}}{\roleFmt{r}}{\gtMsgFmt{quit}}{}{
          }
          }
      \end{array}
      }
    }
    \\
    & \gtGiii[s] =
    \gtRec{\gtFmt{\mathbf{t_{1}}}}{
      \gtCommRaw{\roleFmt{a}}{\roleFmt{s}}{
      \begin{array}{l}
        \gtCommChoice{\gtMsgFmt{enquiry}}{\stFmtC{query}}{
          \gtCommSingle{\roleFmt{s}}{\roleFmt{a}}{\gtMsgFmt{permOpen}}{}{
            \gtCommSingle{\roleFmt{s}}{\roleFmt{a}}{\gtMsgFmt{fail}}{}{
              \gtFmt{\mathbf{t_{1}}}
            }
          }
          }\gtFmt{,}\;
        \gtCommChoice{\gtMsgFmt{quit}}{}{
          }
      \end{array}
      }
    }
    \\
    & \gtGiv[s] =
    \gtCommRaw{\roleFmt{r}}{\roleFmt{s}}{
      \begin{array}{l}
        \gtCommChoice{\gtMsgFmt{put}}{\stFmtC{result}}{
          \gtCommSingle{\roleFmt{s}}{\roleFmt{a}}{\gtMsgFmt{put}}{\stFmtC{result}}{
            \gtFmt{\mathbf{t_{0}}}
          }
          }\\
        \gtCommChoice{\gtMsgFmt{crash}}{}{
          \gtCommSingle{\roleFmt{s}}{\roleFmt{a}}{\gtMsgFmt{fail}}{}{
            \gtFmt{\mathbf{t_{1}}}
          }
          }
      \end{array}
      }
    \\
    & \gtGv[s] =
    \gtCommSingle{\roleFmt{s}}{\roleFmt{r}}{\gtMsgFmt{enquiry}}{\stFmtC{query}}{
      \gtCommRaw{\roleFmt{r}}{\roleFmt{s}}{
      \begin{array}{l}
        \gtCommChoice{\gtMsgFmt{ko}}{}{
          \gtCommSingle{\roleFmt{s}}{\roleFmt{a}}{\gtMsgFmt{fail}}{}{
            \gtFmt{\mathbf{t_{0}}}
          }
          }\\
        \gtCommChoice{\gtMsgFmt{put}}{\stFmtC{result}}{
          \gtCommSingle{\roleFmt{s}}{\roleFmt{a}}{\gtMsgFmt{put}}{\stFmtC{result}}{
            \gtFmt{\mathbf{t_{0}}}
          }
          }\\
        \gtCommChoice{\gtMsgFmt{crash}}{}{
          \gtCommSingle{\roleFmt{s}}{\roleFmt{a}}{\gtMsgFmt{fail}}{}{
            \gtFmt{\mathbf{t_{0}}}
          }
          }
      \end{array}
      }
    }
      
    \\\bottomrule
  \end{tabular}
  \caption{A selection of global types for our example protocols.}
  \label{fig:eval:types}
\end{figure}

\subsection{Example Protocols}
\label{sec:eval:examples}

\subparagraph*{Ping Pong and Adder}
Both \exampleName{PingPong} and \exampleName{Adder} exhibit the \emph{graceful
failure} pattern wherein a protocol is brought safely to an end when a crashed
peer is detected. The unreliable variants of both \exampleName{PingPong} and
\exampleName{Adder}, \ie $(b)$ and $(d)$, assume that no role is reliable. As
binary protocols, safely ending the protocol is trivial. In both, as seen in
\Cref{fig:eval:types}, crash-handling branches feature neither communication nor
variables.
Consequently, neither \exampleName{PingPong} nor \exampleName{Adder} see
increases to any metric, excepting the number of crash labels.

\subparagraph*{Travel Agency}
\exampleName{TravelAgency} represents a basic multiparty protocol with three participants.
It exhibits the graceful failure pattern, but unlike \exampleName{PingPong} and
\exampleName{Adder}, crash-handling branches must include at least one
communication with the remaining peer.
For example, $\gtG[l]$ in \Cref{fig:eval:types} states that when the agency
detects that the customer has crashed, the seller is informed via a new
\gtLabFmt{fatal} label.
We present two unreliable variants for \exampleName{TravelAgency}: $(l)$ where $\roleSet =
\roleFmt{\{\roleFmt{s},\roleFmt{a}\}}$, and $(m)$ where $\roleSet =
\roleFmt{\{a\}}$. Both variants introduce a single new communication statement
to the reliable protocol, but do not affect the number of generated channels,
functions, or the length of the longest continuation. Although $(m)$ has fewer
reliable roles, this does not translate into more communication since the
communication statement introduced in $(l)$ need only be extended with an empty
crash-handling branch.

\subparagraph*{OAuth}
Our \exampleName{OAuth} example is based on the first version of the standard
and exhibits graceful failure.
We give three
variants: $(h)$ where $\roleSet = \roleFmt{\{\roleFmt{s},\roleFmt{a}\}}$, $(i)$
where $\roleSet = \roleFmt{\{\roleFmt{s}\}}$, and $(j)$ where $\roleSet =
\rolesEmpty$.
\exampleName{OAuth} is the largest of our examples with respect to
communications, crash-handling branches, and generated channels for all three of
its variants. It features the second longest continuation.
Variant $(h)$ sees a $1.75\times$ increase in communications over the
reliable protocol. Unlike Travel Agency, above, both variants $(i)$ and $(j)$
further increase the number of communication statements.
This is a consequence of the length of the continuations in the original
protocol: comprising a sequence of primarily non-branching communications.
Accordingly, when crash-handling branches are introduced, each must be
sufficiently furnished to pass projection. In principle, this could be
simplified by lifting such behaviour into a sub-protocol~\cite{YHNN2013}.
In contrast to this growth, and due to merging, all variants only see three
additional channels and an incremental increase in the number role-implementing
functions. Indeed, despite being the largest example in terms of communications,
it is notably \emph{smaller} than the Circuit Breaker example in terms of
branching; this has consequences on generation times (see
\Cref{sec:eval:times}).

\subparagraph*{Two Buyer}
The \exampleName{TwoBuyer} example exhibits a form of the \emph{failover}
pattern in which one process acts as a substitute for a crashed
peer~\cite{CONCUR22MPSTCrash}. In the unreliable variant $(f)$, this is
implemented such that one buyer assumes the responsibilities of the other.
Specifically, following an agreed split, the first buyer (\ie $\roleP$ in
$\gtG[f]$ from \Cref{fig:eval:types}) will conclude the sale should the second
buyer (\ie $\roleQ$) crash. To simplify our presentation, we assume the seller
($\roleR$) is reliable.
This failover behaviour results in $2.5\times$ the number of communications in
the reliable protocol. Five new channels and three new role-implementing
functions are also generated as a consequence. The unreliable
\exampleName{TwoBuyer} features the longest continuation of all our examples.

\subparagraph*{Distributed Logger}
Our \exampleName{DistLogger} example represents the full version of the
protocols presented in \Cref{sec:overview} and \Cref{sec:impl}. Here, the
protocol includes a logger, an interface, and a client. The client is able to
issue commands to the logger via the interface to start/stop logging, to record
a given string, and to retrieve the recorded logs.
It has two variants: $(o)$ and $(p)$, where $\roleSet = \roleFmt{\{i,c\}}$ and
$\roleSet = \roleFmt{\{i\}}$, respectively. Variant $(o)$ sees a $1.5\times$
increase in interactions over the reliable protocol. Variant $(p)$ sees only one
additional interaction compared to $(o)$. Since these increases are a result of
reflecting the choice made by the client in the crash-handling branch, the
number of generated channels, and maximum continuation lengths are unchanged
with respect to the reliable version. However, for that same reason, we also
observe a $1.6\times$ increase in the number of role-implementing functions.

\subparagraph*{Circuit Breaker}
The circuit breaker pattern~\cite{ECOOP22AffineMPST} comprises one or more
clients accessing some fallible resource that is protected by a monitor process.
For example, requests made of a remote database may not be answered in full due
to limits on processing capacity.
The monitor tracks failures returned by the resource, where failures may include
process crashes. Should those failures reach some threshold, then no further
requests will be forwarded, instead returning an immediate failure message. In
this state, the circuit breaker is said to be \emph{open}. After a given period,
the monitor may test whether the resource will succeed with requests again.
This ensures that the resource is not tied up with requests that will inevitably
fail.
In this state, the circuit breaker is said to be \emph{half-open}. Should
the request succeed, the circuit breaker enters a \emph{closed} state where all
requests will be once more forwarded to the resource.

To simplify our presentation, our \exampleName{CircBreaker} example features a
single client and only the resource is assumed unreliable. The resource sends a
heartbeat message to the monitor at the start of each iteration.
We give two variants with differing crash-handling behaviour. Variant $(r)$ acts
such that it does not discriminate failures transmitted by a \emph{live}
resource or those precipitated by a \emph{crashed} resource. The circuit breaker
may remain in an \emph{open} state indefinitely, or periodically switch to a
\emph{half-open} state.
In cases where the resource crashes, a \emph{half-open} state will never
transition into a \emph{closed} state.
This is reflected in Variant $(s)$, where following a crash, the monitor returns
a \emph{permanently open} state.
The outer-most crash-handling branch in $\gtG[s]$ in \Cref{fig:eval:types}
defines this behaviour.
Both variants see an approximate $1.3\times$ increase in the number of
communications; Variant $(s)$ introduces one \emph{fewer} communication
statement compared to Variant $(r)$ due to the \emph{permanently open} label,
but increases the length of the longest continuation by one due to the nested
recursion.
Although not the largest example in terms of communications, it has the greatest
number of meaningful branches, resulting in the most role-implementing functions
being generated of all our examples.

\begin{figure}[t]
  \centering
  \begin{tabular}{Ccccccccc}
    \toprule
    \multirow{2}{*}{Var.}
      & \multicolumn{2}{c}{Parsing}
      & \multicolumn{2}{c}{EffpiIR}
      & \multicolumn{2}{c}{Code Gen.}
      & \multicolumn{2}{c}{Total} \\
    & M & SD & M & SD & M & SD & M & SD \\\midrule
    (a) & 0.0176 & 0.00044 & 0.00484 & 0.00003 & 0.06295 & 0.00046 & 0.17994 & 0.00153 \\
    (b) & 0.03119 & 0.00025 & 0.00647 & 0.00004 & 0.0771 & 0.00099 & 0.20803 & 0.00529 \\\midrule
    (c) & 0.03298 & 0.00037 & 0.02285 & 0.00026 & 0.18959 & 0.00142 & 0.39031 & 0.00361 \\
    (d) & 0.06867 & 0.00062 & 0.0255 & 0.0004 & 0.23475 & 0.0018 & 0.49265 & 0.00495 \\\midrule
    (e) & 0.03521 & 0.00025 & 0.02887 & 0.00072 & 0.22534 & 0.00262 & 0.4569 & 0.01589 \\
    (f) & 0.12475 & 0.00106 & 0.08245 & 0.0028 & 0.69993 & 0.00576 & 1.23656 & 0.02446 \\\midrule
    (g) & 0.05049 & 0.00076 & 0.05187 & 0.00067 & 0.4106 & 0.00435 & 0.73395 & 0.01065 \\
    (h) & 0.13225 & 0.0015 & 0.09848 & 0.00159 & 0.55213 & 0.00695 & 1.07847 & 0.02223 \\
    (i) & 0.20489 & 0.00232 & 0.12229 & 0.00238 & 0.72371 & 0.0067 & 1.43904 & 0.02053 \\
    (j) & 0.4056 & 0.00451 & 0.14395 & 0.00494 & 0.95247 & 0.04822 & 2.05467 & 0.05883 \\\midrule
    (k) & 0.04457 & 0.00085 & 0.04084 & 0.00028 & 0.30611 & 0.00263 & 0.60532 & 0.03938 \\
    (l) & 0.06989 & 0.00101 & 0.04689 & 0.00043 & 0.34338 & 0.00364 & 0.67495 & 0.0043 \\
    (m) & 0.0758 & 0.0011 & 0.04721 & 0.00042 & 0.34992 & 0.00332 & 0.68713 & 0.00751 \\\midrule
    (n) & 0.05809 & 0.00056 & 0.04882 & 0.00045 & 0.3777 & 0.00446 & 0.69267 & 0.00576 \\
    (o) & 0.11514 & 0.00201 & 0.07897 & 0.00083 & 0.54008 & 0.00504 & 1.06704 & 0.04566 \\
    (p) & 0.12914 & 0.00168 & 0.08394 & 0.00078 & 0.56313 & 0.00531 & 1.13373 & 0.01655 \\\midrule
    (q) & 0.12268 & 0.00203 & 0.10927 & 0.00125 & 1.08064 & 0.00894 & 1.80101 & 0.03167 \\
    (r) & 0.18658 & 0.00176 & 0.14615 & 0.00187 & 1.45041 & 0.06211 & 2.27098 & 0.03289 \\
    (s) & 0.20739 & 0.00265 & 0.14813 & 0.00135 & 1.53911 & 0.03368 & 2.42011 & 0.03774    
    \\\bottomrule
  \end{tabular}
  \caption{\theTool average runtimes (in milliseconds with standard deviations) for all three generation phases, and total generation times for all variants. Times are a mean average of 100 runs.}
  \label{fig:eval:results}
\end{figure}

\subsection{Generation Times}
\label{sec:eval:times}

In order to demonstrate that our code generation approach is practical, we give
generation times using our prototype for all protocol variants and examples.
In addition to total generation times, we report measurements for for three
constituent phases of our implementation: \emph{i)} parsing, \emph{ii)} EffpiIR
generation, and \emph{iii)} code generation.
EffpiIR generation projects and transforms a parsed global type into an
intermediate representation, itself used to generate concrete Scala code.
Times are measured using version 1.5.13.0 of the Criterion benchmarking library,
are averages over 100 runs, and are plotted in \Cref{fig:eval:stackedRuntimes}.
Full times are given in \Cref{fig:eval:results}.
Results were gathered on a machine with a 4-core Intel i5 CPU at 2.60GHz
and 16GB of RAM, running Ubuntu 22.04.1, and Stack 2.7.5 with GHC 9.0.2.
All code compiles and executes upon generation using Scala 3.0.0-M1.

For all variants, the parsing and code generation phases are the most expensive
phases.
For parsing, this is a consequence of its eponymous IO operation. Moreover,
parsing first produces an intermediate representation that is then used to
generate the global type.
Similarly, the relative expense of code generation is a consequence of
traversing the given EffpiIR representation of a protocol at least twice -- both
local type declarations and role-implementing functions are derived from the
same EffpiIR construct. This could be optimised by generating both in the same
traversal.

We note that reported code generation times \emph{do not} include writing to a
file. However, this is included in the total generation times given in
\Cref{fig:eval:results}, and serves to explain why these are greater than the
sum of the three phases.
EffpiIR generation, which includes projection, is a comparatively
lightweight process.

All variants take less than 3ms to generate their respective \Scala code,
suggesting that \theTool could be used during development without detrimentally
impacting development speed.
Although generation times broadly scale with the number of communication
statements, our results demonstrate that other factors also have influence.
For example, Variant $(e)$ takes less time overall than Variant $(d)$; in
particular, Variant $(d)$ takes more time in the parsing phase. Other examples
include the \exampleName{OAuth} variants, $(h)$ and $(i)$, taking significantly
less time than all three \exampleName{CircBreaker} variants. The difference here
lies in the the code generation phase. Notably, even within the
\exampleName{CircBreaker} example, Variant $(r)$ takes less time than the
smaller Variant $(s)$.

In \Cref{fig:eval:scatter}, we see that there is a positive correlation between
the number of role-implementing functions generated by \theTool and the total
generation time. This suggests that protocols with more meaningful branching
(\ie choice statements with multiple non-crash-handling branches) are more
expensive to generate than those of an equivalent size but with less meaningful
branching. Inspecting our implementation supports this conclusion, since
meaningful branches are translated into separate type and function declarations
with pattern matching. Whilst the degree of meaningful branching does affect
generation times, it does not wholly obviate the effect of protocol size on
generation times -- \eg Variants $(f)$ and $(i)$ both result in seven
role-implementing functions, but since $(i)$ is $1.4\times$ larger than $(f)$,
\theTool takes $1.16\times$ longer processing $(i)$.

Inspecting each phase individually, we find that parsing scales with the size of
the input file (\eg number of characters),
EffpiIR generation scales with both protocol size and the degree of merging
required during projection,
and code generation scales with both protocol size and the degree of branching,
as above.
We see evidence for EffpiIR generation scaling in the positive correlation
between the number of generated channels and elapsed time. Inspecting our
implementation supports this since protocols that require more merging require
more work to generate the required channels.

\begin{figure}[t]
  \begin{center}
  \begin{tikzpicture}
    \begin{axis}[
      xlabel={No. Generated Functions},
      ylabel={Total Generation Time (ms)},
      colorbar,
      colorbar style={
        ylabel = No. Communications
      },
      minor x tick num=1,
    ]
    
      \addplot+[scatter, scatter src=explicit, only marks]
        table[x=numFuns, y=Total, meta=numInteractions]
          {results/scatter.txt};
    \end{axis}
  \end{tikzpicture}
  \end{center}
  \caption{Average total generation times against the number of role-implementing functions generated by \theTool. Colours indicate the number of communication statements in each variant.}
  \label{fig:eval:scatter}
\end{figure}

}{
}

\end{document}